\newtheorem{theorem}{Теорема}
\newtheorem{defin}{Определение}
\newtheorem{remark}{Замечание}
\newtheorem{proposition}{Предложение}
\newtheorem{lemma}{Лемма}
\numberwithin{equation}{section}
\newcounter{myta}
\numberwithin{myta}{section}
\newcommand{\myt}{\refstepcounter{myta}\themyta}
\newcommand{\fns}[1]{{\footnotesize #1}}
\newcommand{\fts}[1] {{\small #1}}
\newcommand{\hf}[1]{\hfil #1 \hfil}
\newcommand{\mbf}[1] {\mathbf{#1}}
\newcommand{\pir}[1]{\displaystyle{\frac{#1}{2}\left[-r+\frac{1}{r-\lambda}\D\right]}}
\newcommand{\ru}[1]{\rule{0pt}{#1pt}}
\newcommand{\tfs}[1]{\{#1\}}
\newcommand{\ts}[1]{\textsl{#1}}
\newcommand{\mwide}[1]{\Lambda(#1)}
\def\myrul{\vphantom{\begin{tabular}{c}aa\\bb\end{tabular}}}
\def\myrulm{\vphantom{\begin{tabular}{c}aa\\bb\\cc\end{tabular}}}
\def\rz{\xi}
\def\aaa{a}
\def\bbb{b}
\def\ccc{c}
\def\gmm{\gamma}
\def\mct{\mathcal{R}}
\def\mP{P}
\def\mSell{\Sigma_{HK}(\ell,\ld)}
\def\mSash{\Sigma_{LK}(h,\ld)}
\def\A{\mathfrak{a}}
\def\ba{{\boldsymbol\alpha}}
\def\bbI{\bR^3{(\ell,h,k)}}
\def\bR{\mathbb{R}}
\def\be{\mathbf{e}}
\def\bio{\operatorname{symm}}
\def\bl{{\boldsymbol\lambda}}
\def\bm{\mathbf{M}}
\def\bo{{\boldsymbol\omega}}
\def\br{\mathbf{c}}
\def\bs{\boldsymbol}
\def\bT{\mathbf{T}}
\def\cons{\mathop{\rm const}\nolimits}
\def\D{d}
\def\diag{\mathop{\rm diag}\nolimits}
\def\ds{\displaystyle}
\def\Fun{\mathcal{G}}
\def\gaa{\Delta_1}
\def\gac{\Delta_3}
\def\gan{\Delta_0}
\def\grad{\mathop{\rm grad}\nolimits}
\def\gs{\geqslant}
\def\htan{h_{\rm tan}}
\def\ie{\mathbf{I}\mathbf{e}}
\def\iso{Q_{\ell,h}^3}
\def\isom{Q_{-\ell,h}^3}
\def\jpriv{\mathcal{J}}
\def\ld{\lambda}
\def\Linc{\mathfrak{b}}
\def\ls{\leqslant}
\def\mA{\mathfrak{A}}
\def\mct{\mathcal{C}}
\def\mD{\mathcal{D}}
\def\mi{\mathcal{M}}
\def\mK{\mathcal{K}}
\def\mk{\mathcal{L}}
\def\mm{\mathcal{M}_1}
\def\mn{\mathcal{M}_2}
\def\mno{\mathcal{M}_{2,3}}
\def\mo{\mathcal{M}_3}
\def\mP{{P}}
\def\mPel{{\mP^4_\ell}}
\def\mPr{\widetilde{P}}
\def\mstrut{\vphantom{\bigl(}}
\def\mtA{\mathbb{A}}
\def\mtB{\mathbb{B}}
\def\mtC{\mathbb{C}}
\def\mtD{\mathbb{D}}
\def\mtE{\mathbb{E}}
\def\mtF{\mathbb{F}}
\def\mtG{\mathbb{G}}
\def\mtH{\mathbb{H}}
\def\po{p}
\def\qo{q}
\def\ri{\mathrm{i}\,}
\def\rk{\mathop{\rm rank}\nolimits}
\def\ro{\rho}
\def\sgn{\mathop{\rm sgn}\nolimits}
\def\sgrad{\mathop{\rm sgrad}\nolimits}
\def\smale{\mathcal{S}_{LH}}
\def\smales{\mathcal{S}'_{LH}}
\def\vk{\varkappa}
\def\vpi{\pi}
\def\vq{\varepsilon}
\def\wsa{\Pi_1}
\def\wsb{\Pi_2}
\def\wsc{\Pi_3}
\def\wsi{\Pi}
\def\x{\xi}
\begin{document}

\begin{center}

\Large{{\bf Топологический атлас гиростата Ковалевской -- Яхья:\\ аналитические результаты и топологический анализ}}

\vspace{5mm}

\normalsize

{\bf М.П.\,Харламов, П.Е.\,Рябов, И.И.\,Харламова, \\
А.Ю.\,Савушкин, Е.Г.\,Шведов}

\vspace{4mm}

\small

Волгоградская академия государственной службы, Россия, Волгоград

Финансовый университет при Правительстве РФ, Россия, Москва

Волгоградский государственный технический университет, Россия, Волгоград

E-mail: mharlamov@vags.ru, orelryabov@mail.ru
\end{center}

\begin{flushright}
{\it 23 ноября 2014 года}
\end{flushright}

\vspace{3mm}

{\footnotesize Представлен обзор результатов, полученных за последние полвека в задаче о движении тяжелого гиростата при условиях С.В.\,Ковалевской. Работа выполнена при финансовой поддержке грантов РФФИ № 10-01-00043, 10-01-97001, 13-01-97025, 14-01-00119.}

\normalsize

\vspace{3mm} Ключевые слова: гиростат, условия Ковалевской, интеграл Яхья, точные решения, фазовая топология

\vspace{6mm}

Mathematical Subject Classification 2000:
70E17, 70G40

\Large

\tableofcontents

\normalsize

\clearpage

\section{Введение}\label{sec1}
Сообщение о том, что случай С.В.\,Ковалевской в динамике твердого тела обобщается на гиростат, Х.М.\,Яхья сделал на семинарах В.Г.\,Демина и В.В.\,Козлова в МГУ в 1985 году. Тогда же была представлена заметка в ``J. de Mecanique Theor. Appl.'', которая по каким-то причинам так и не была опубликована. В связи с этим официальной датой открытия этого случая интегрируемости следует считать 1986 год, когда вышла статья \cite{Yeh1}. Статья на русском языке, представленная в апреле 1986 года и содержащая в том числе и этот результат, вышла значительно позже \cite{YehRus}. На самом деле в \cite{Yeh1} интеграл Ковалевской был обобщен сразу в двух направлениях -- на гиростат и на двойное поле, моделирующее действие суперпозиции поля силы тяжести и постоянного магнитного поля. Ранее аналог интеграла Ковалевской для двойного поля был найден О.И.\,Богоявленским \cite{BogRus2}, однако, обобщение Яхья оказалось принципиальным\,-- введение гиростатического момента нарушило классическую структуру интеграла (сумма квадратов), а также его однородность -- новое слагаемое, пропорциональное гиростатическому моменту имеет третью степень по угловым скоростям подобно интегралу Горячева--Чаплыгина. В 1987 году появились сразу две публикации с обобщением интеграла Ковалевской на гиростат в однородном поле \cite{Komar,Gavr}, но в этом отношении их уже нельзя считать оригинальными.

Поскольку введение двойного поля в общем случае ликвидирует симметрию задачи, уничтожая интеграл площадей, Х.М.\,Яхья отмечает два случая полной интегрируемости -- гиростат типа Ковалевской в поле силы тяжести и гиростат в двойном поле особой структуры, допускающей сингулярную симметрию.
В представленной работе обсуждается первая задача. До настоящего времени она не сведена к квадратурам. Однако удивительным образом оказалось, что все движения на критических многообразиях отображения момента, играющие сегодня ключевую роль в топологическом анализе задачи, были выявлены и проинтегрированы задолго до публикации \cite{Yeh1} и возникновения современного подхода к исследованию интегрируемых гамильтоновых систем. Мы предпринимаем попытку систематического изложения (с современной точки зрения) имеющихся аналитических и качественных результатов.

Настоящая работа была задумана М.П.\,Харламовым как обзорная статья 2011 года к 25-летнему юбилею открытия этого интегрируемого случая. Однако обнаружилось, что в многочисленных работах на эту тему имеются разночтения и явные ошибки, пробелы в доказательствах. Поэтому была продолжена в первую очередь аналитическая работа над этим решением, с целью дать исчерпывающее описание фазовой топологии со всеми необходимыми обоснованиями, что привело еще к ряду публикаций коллектива  исполнителей грантов РФФИ, указанных в списке авторов настоящей работы. Все эти результаты приведены в предлагаемом ниже тексте, за который полностью отвечает первый автор.

Мы не включаем сюда подробное изложение результатов, касающихся двух задач, в определенном смысле вырожденных по отношению к общей случаю. Это, во-первых, классическая задача С.В.\,Ковалевской (нулевой гиростатический момент) и, во-вторых, гиростат Ковалевской\,--\,Яхья при условии равенства нулю постоянной циклического интеграла (постоянной площадей). Грубая топология первой задачи полностью описана в работах \cite{KhDan83,KhPMM83} и, со всеми деталями, в книге \cite{KhBook88}. В работе \cite{Oshem} последовательность бифуркаций на изоэнергетических уровнях объединена в графы Фоменко и найден полный список этих графов (дана классификация графов в плоскости энергии--момента). Метки на графах (тонкий топологический инвариант или инвариант Фоменко\,--\,Цишанга) вычислены в работе \cite{BRF}, в которой также содержится последовательное изложение метода круговых молекул в тонком топологическом анализе. Топологический анализ семейства систем, возникающих при нулевой постоянной площадей (свободный параметр -- величина гиростатического момента), выполнен в цикле работ \cite{Ryab2,Ryab1,Ryab7,Ryab6,Ryab5,RyabDis}. Метки на графах (тонкий топологический инвариант или инвариант Фоменко\,--\,Цишанга) вычислены в работах \cite{Mor,Mo2008}, где также доказан ряд утверждений общего характера, важных в приложениях метода круговых молекул. К этим задачам мы обращаемся лишь по мере необходимости, если возникает возможность перенести какие-либо утверждения или факты на общий случай.

\clearpage

\section{Аналитические результаты}\label{sec2}
\subsection{Уравнения и интегралы}
Уравнения Эйлера\,--\,Пуассона движения гиростата в поле силы тяжести в общем случае имеют вид
\begin{equation}\label{eq2_1}
\dot {\bm}=\bm \times \bo + \br \times \ba,\quad
\dot {\ba}= \ba \times \bo,
\end{equation}
где $\bo$ -- угловая скорость, $\ba$ -- единичный вектор направления силы тяжести (называемого вертикалью), $\br$ -- вектор, направленный из неподвижной точки $O$ в центр масс, равный по длине произведению веса системы ``тело--ротор'' на расстояние от $O$ до центра масс. Все объекты отнесены к подвижным осям. Вектор кинетического момента $\bm$ связан с угловой скоростью зависимостью
\begin{equation}\label{eq2_2}
\bm = \mbf{I} \bo + \bl,
\end{equation}
где $\mbf{I}$, $\bl$ -- тензор инерции в точке $O$ и вектор гиростатического момента, постоянные в подвижной системе отсчета. Для краткости объекты, не изменяющиеся по отношению к подвижной системе отсчета, будем называть ``постоянными в теле''. Система \eqref{eq2_1} получена из механической системы с конфигурационным пространством $SO(3)$ (гамильтоновой системы с тремя степенями свободы) факторизацией по действию группы вращений вокруг постоянного в инерциальном пространстве направления вертикали, и ее фазовым пространством является пятимерное многообразие $\mP^5=\bR^3(\bo){\times}S^2(\ba)$.

Здесь и далее через $\bR^n(u_1,\ldots,u_n)$ обозначаем пространство $\bR^n$ с выделенной системой координатных функций $u_1,\ldots,u_n$. Обозначение $S^2(\ba)$ относится к единичной сфере в $\bR^3(\ba)$.

Вытекающее из определений соотношение
\begin{equation}\label{eq2_3}
\Gamma=1,
\end{equation}
где
\begin{equation*}
\Gamma=\alpha_1^2+ \alpha_2^2+\alpha_3^2,
\end{equation*}
в механике называется геометрическим интегралом. Другими общими интегралами являются гамильтониан (полная энергия)
\begin{equation}\label{eq2_4}
H=\frac{1}{2}\mbf{I}\bo \cdot \bo - \br \cdot \ba
\end{equation}
и циклический интеграл (интеграл площадей)
\begin{equation}\label{eq2_5}
L= \frac{1}{2} \bm  \cdot \ba.
\end{equation}
Коэффициент $1/2$ вводим, следуя С.В.\,Ковалевской. Скобка Пуассона рассматриваемой механической системы на пространстве $\{(\bm,\ba)\}$ может быть представлена в виде своих значений на парах координатных функций (см., например, \cite{BogRus2})
\begin{equation}\label{eq2_6}
\{M_i,M_j\}=\varepsilon_{ijk} M_k, \quad \{M_i,\alpha_j\}=\varepsilon_{ijk} \alpha_k, \quad \{\alpha_i,\alpha_j\}=0.
\end{equation}
На любом фиксированном уровне интеграла площадей
\begin{equation}\label{eq2_7}
\mPel = \{L=\ell\} \subset \mP^5
\end{equation}
относительно этой скобки индуцируется гамильтонова система с двумя степенями свободы (понижение порядка по Раусу). Для полной интегрируемости необходимо указать еще один общий интеграл, независимый с $H,L$ почти всюду на $\mP^5$. Под независимостью функций в точке понимается линейная независимость их дифференциалов. Независимость двух функций почти всюду иногда называют функциональной независимостью \cite{BolFom}. Случаи, когда можно указать подмногообразия в $\mP^5$ положительной коразмерности, отличные от $\mPel$, инвариантные для системы \eqref{eq2_1}, на которых эта система интегрируется в квадратурах, называются частными случаями интегрируемости.

Следующие предположения будем называть условиями Ковалевской\,--\,Яхья{\footnote{В соответствии с правилами русского языка арабская фамилия Яхья склоняется. Однако нам представляется более уважительным сохранить ее неизменной.}}. Пусть в качестве подвижной системы отсчета выбран ортонормированный триэдр $O\mbf{e}_1 \mbf{e}_2 \mbf{e}_3$ главных осей тензора инерции. Предположим, что соответствующие главные моменты инерции находятся в отношении $2{:}2{:}1$, центр масс лежит в экваториальной плоскости $\br \cdot \mbf{e}_3=0$, а гиростатический момент направлен по оси динамической симметрии
\begin{equation}\notag
\bl = \ld \mbf{e}_3.
\end{equation}
Тогда оставшимся произволом в выборе подвижных осей и введением подходящих единиц измерения можно добиться того, чтобы было
\begin{equation}\label{eq2_8}
\br = \{1,0,0\}, \qquad \mbf{I}=\diag\{2,2,1\}, \qquad \ld > 0,
\end{equation}
и уравнения \eqref{eq2_1} примут вид
\begin{equation}\label{eq2_9}
\begin{array}{lll}
2\dot\omega _1   = \omega _2 (\omega _3- \ld)  , &
2\dot\omega _2 =  - \omega _1 (\omega _3-\ld)  - \alpha _3 , &
\dot\omega _3   = \alpha _2, \\
\dot\alpha _1   = \alpha _2 \omega _3  - \alpha _3 \omega _2 , &
\dot\alpha _2   = \alpha _3 \omega _1  - \alpha _1 \omega _3 , &
\dot\alpha_3   = \alpha_1 \omega_2  - \alpha_2 \omega_1.
\end{array}
\end{equation}
Эта система обладает дополнительным интегралом Ковалевской\,--\,Яхья, обозначаемым далее через $K$. Таким образом, имеется полный инволютивный набор интегралов
\begin{equation}\label{eq2_10}
\begin{array}{l}
H = \omega _1^2  + \omega _2^2 + \ds{\frac{1}{2}}\omega _3^2 -
\alpha _1, \qquad
L = \omega _1 \alpha _1  + \omega _2 \alpha _2  +\frac{1}{2} (\omega _3+\ld) \alpha _3, \\
K=(\omega_1^2-\omega^2_2+\alpha_1)^2+(2\omega_1\omega_2+\alpha_2)^2 + 2\ld[(\omega_3-\ld)(\omega_1^2+\omega^2_2)+2\omega_1 \alpha_3].
\end{array}
\end{equation}

Здесь следует отметить один важный аспект терминологии, иногда приводящий к ложному ощущению наличия и исследования более общих задач, чем описанная последними формулами. Поскольку центр масс лежит в экваториальной плоскости эллипсоида инерции, любой ортонормированный базис в ней является базисом главных осей инерции. Учитывая еще и произвол в выборе единиц измерения, запишем условия \eqref{eq2_8} в виде
\begin{equation*}
\br = \{c_1,c_2,0\}, \qquad \mbf{I}=\diag\{2C,2C,C\}, \qquad \ld \in \bR.
\end{equation*}
В современной литературе по отношению к этим условиям появился термин ``семейство систем типа Ковалевской\,--\,Яхья''. Подчеркнем, что любая такая система ``типа'' Ковалевской\,--\,Яхья полностью эквивалентна рассматриваемой здесь системе с условиями \eqref{eq2_8}, то есть получена из нее \textit{глобальной невырожденной линейной} заменой переменных, поэтому применение указанного термина нам представляется некорректным.

\subsection{Точные решения}
До работы \cite{Yeh1} были известны три общих случая интегрируемости при $\bl \ne 0$ -- обобщение случая Эйлера, данное Н.Е.\,Жуковским ($\br=0$), очевидное обобщение случая Лагранжа ($\bl \times \br=0$) и обобщение случая Горячева\,--\,Чаплыгина, данное Л.Н.\,Сретенским. Ряд частных случаев интегрируемости тяжелого гиростата ($\br \ne 0$) указан в работах П.В.\,Харламова и Е.И.\,Харламовой. Два из них оказались принципиально важны для исследования системы \eqref{eq2_9}. Остановимся на них подробнее.

\subsubsection{Решение П.В.\,Харламова}
Первое точное решение\footnote{Точным решением обычно называют общий или частный случай интегрируемости при наличии его явного сведения к квадратурам.} опубликовано в \cite{PVLect}. Предположим, что в системе \eqref{eq2_1}, \eqref{eq2_2}
\begin{equation}\label{eq2_11}
    \bo = \bo_0 + r(t) \be  \qquad (r \ne {\rm const}),
\end{equation}
где $\bo_0$ и $\be $ постоянны в теле. Без ограничения общности полагаем $|\be |=1$ и $\bo_0 \cdot \be =0$. Представление \eqref{eq2_11} является наиболее общим для решений, допускающих два линейных по угловым скоростям частных интеграла. Хорошо известно, что все решения системы \eqref{eq2_1} определены для всех $t$ и ограничены (например, в силу компактности любого уровня интеграла энергии). Отсюда следует, что если функция $r(t)$ удовлетворяет уравнению вида
\begin{equation}\label{eq2_12}
    b \, \dot r = \sum_{i=0}^k a_i \, r^i
\end{equation}
с постоянными коэффициентами $b,a_i$, то все эти коэффициенты равны нулю.

\begin{theorem}[П.В.\,Харламов \cite{PVLect}]\label{th1}
Для любого движения вида \eqref{eq2_11} выполнены следующие условия:

$(1)$ ось $O\be $ является главной осью инерции в закрепленной точке;

$(2)$ центр масс лежит в плоскости, проходящей через закрепленную точку перпендикулярно $O\be $, то есть
\begin{equation}\notag
    \br  \cdot \be =0;
\end{equation}

$(3)$ полный кинетический момент системы компланарен векторам $\br, \be $.
\end{theorem}

\begin{proof}
Домножим первое векторное уравнение \eqref{eq2_1} (уравнение Эйлера) скалярно на постоянный отличный от нуля вектор $\br$. Получим уравнение вида \eqref{eq2_12}, в котором $k=2$ и при этом $b=\br \cdot \ie$, $a_2=\br \cdot (\ie \times \be)$. Таким образом,
\begin{equation}\label{eq2_13}
    \br \cdot \ie =0, \qquad \br \cdot (\ie \times \be)=0.
\end{equation}

Запишем геометрическое тождество
\begin{equation}\label{eq2_14}
    \ba = \frac{1}{\br^2} [(\br \cdot \ba) \br +(\br \times \ba) \times \br].
\end{equation}
Обозначая через $h$ постоянную интеграла \eqref{eq2_4}, из \eqref{eq2_1}, \eqref{eq2_4} найдем выражения
\begin{equation}\notag
\br \times \ba=\dot {\mbf{I} \bo}- (\mbf{I} \bo + \bl) \times \bo,\quad
\br \cdot \ba= h-\frac{1}{2} {\mbf{I} \bo \cdot \bo}.
\end{equation}
Подставим их в правую часть \eqref{eq2_14}, после чего соотношение $L=\ell$ с интегралом площадей \eqref{eq2_5} даст уравнение вида \eqref{eq2_12}, в котором $k=3$,
$b=(\mbf{I}\bo + \bl) \cdot (\ie \times \br)$, $a_3=2(\br \cdot \be) (\ie \cdot \ie)-(\br \cdot \ie)(\ie \times \be)$. Поэтому, с учетом \eqref{eq2_13}, имеем
\begin{equation}\label{eq2_15}
    \br \cdot \be =0, \qquad \bm \cdot (\ie \times \br)=0.
\end{equation}
При ненулевом $\br$ первое из этих равенств вместе с \eqref{eq2_13} дает $\ie \times \be=0$, поэтому ось $O \be$ -- главная ось тензора $\mbf{I}$. Тогда равенства \eqref{eq2_15} выражают оставшиеся утверждения теоремы.
\end{proof}

Отметим, что здесь не вводились требования на постоянный в теле вектор $\bo_0$. В частности, при $\bo_0=0$ получаем, что теорема верна для всех маятниковых движений гиростата в поле силы тяжести.

В соответствии с доказанным, введем подвижную систему отсчета так, что $\mbf{e}_3=\be$ и обозначим компоненты постоянных векторов и тензора инерции
\begin{equation}\notag
    \bo_0 = \po  \, \be_1 + \qo  \, \be_2, \qquad \bl = \ld_1 \be_1 + \ld_2 \be_2 + \ld_3 \be_3, \qquad \mbf{I} = \diag \{A,B,C\}.
\end{equation}
Пусть $\bm_0$ -- проекция вектора $\bm$ на плоскость $O \be_1 \be_2$. По теореме \ref{th1}
\begin{equation}\label{eq2_16}
    \bm_0 = m \br, \qquad m = \cons.
\end{equation}
Проекция уравнения Эйлера на плоскость $O \be_1 \be_2$ примет вид
\begin{equation}\notag
    r\, ( m \br - C \bo_0) + \alpha_3 \br - \ld_3 \bo_0 =0.
\end{equation}
Так как по предположению $\br \ne 0, r \ne \cons$, отсюда следует, что
\begin{eqnarray}
& & \bo_0 = \vq \br \label{eq2_17}, \\
& & (m- \vq C) r + \alpha_3 - \ld_3 \vq  =0. \label{eq2_18}
\end{eqnarray}
Из \eqref{eq2_16}, \eqref{eq2_17} получим $\bm_0 \times \bo_0=0$, поэтому уравнение Эйлера в проекции на ось $O\be_3$ имеет вид
\begin{equation}\label{eq2_19}
    C \dot r + c_2 \alpha_1 - c_1 \alpha_2 =0.
\end{equation}
В то же время, дифференцируя \eqref{eq2_18}, с учетом \eqref{eq2_17} получим
\begin{equation}\label{eq2_20}
    (m- \vq C) \dot r+ \vq (c_2 \alpha_1 - c_1 \alpha_2)=0.
\end{equation}
Условие совместности \eqref{eq2_19}, \eqref{eq2_20} дает
\begin{equation}\label{eq2_21}
    m = 2 C \vq.
\end{equation}
В результате, выбирая произвольно константы
\begin{equation}\notag
    A,B,C,c_1, c_2, \ld_3, \vq, h,
\end{equation}
получим решение в виде
\begin{equation}\label{eq2_22}
    \begin{array}{l}
      \omega_1 = \vq c_1, \qquad  \omega_2 = \vq c_2, \qquad \omega_3 = r,\\
      \ds{\alpha_1=\frac{1}{c_1^2+c_2^2}\left[ (\frac{1}{2} C r^2+h_*)c_1- C\sqrt{R(r)} c_2\right], }\\[3mm]
      \ds{\alpha_2=\frac{1}{c_1^2+c_2^2}\left[ (\frac{1}{2} C r^2+h_*)c_2+ C\sqrt{R(r)} c_1\right], }\\[3mm]
      \alpha_3 = \vq (\ld_3 - C r),
    \end{array}
\end{equation}
где обозначено
\begin{eqnarray}
  & & h_* = \frac{\vq^2}{2}(A c_1^2+B c_2^2) - h, \nonumber \\
  & & R(r)= \frac{1}{C} \left\{ (c_1^2+c_2^2)\left[ 1-\vq ^2 (\ld_3- C r)^2\right] - (\frac{1}{2} C r^2+h_*)^2\right\}.\nonumber
\end{eqnarray}
При этом две компоненты гиростатического момента, согласно \eqref{eq2_16}, \eqref{eq2_21}, заданы равенствами
\begin{equation}\label{eq2_23}
    \ld_1 = (2C-A) \vq c_1, \qquad \ld_2 = (2C-B) \vq c_2,
\end{equation}
а изменение переменной $r$, выбранной в качестве основной, задается уравнением
\begin{equation}\label{eq2_24}
    \dot r = \sqrt{R(r)}.
\end{equation}
В связи с произволом выбора единиц измерения, среди пяти физических параметров системы $A,B,C,c_1,c_2,\ld_3$ существенны лишь три. Параметры $\vq, h$, связанные с выбором начальной точки, произвольны. Поэтому формулы \eqref{eq2_22} и \eqref{eq2_24}, при фиксированных физических параметрах, описывают трехмерное инвариантное подмногообразие фазового пространства, расслоенное на периодические траектории (с возможными бифуркациями). Это решение обобщает классический случай Бобылева\,--\,Стеклова.

Рассмотрим гиростат при условиях Ковалевской ($A=B=2C$, центр масс в экваториальной плоскости) и запишем решения \eqref{eq2_22}, где в качестве оси $O\be$, главной по теореме~\ref{th1}, выбрана ось динамической симметрии. Из \eqref{eq2_23} вытекает, что $\ld_1=\ld_2=0$. Любая ось в экваториальной плоскости является главной, поэтому без ограничения общности считаем $c_2=0$, тогда $\omega_2 \equiv 0$. Выберем единицы измерения так, что $C=1, c_1 =1$, а за независимый параметр вместо $\vq$ выберем первую компоненту угловой скорости $\po $. Опуская индекс у единственной ненулевой компоненты $\ld_3$, из \eqref{eq2_22} получим параметрические уравнения трехмерного многообразия, обозначаемого далее через $\mm$:
\begin{equation}\label{eq2_25}
\mm: \left\{    \begin{array}{l}
      \omega_1 = \po , \qquad  \omega_2 = 0, \qquad \omega_3 = r,\\
      \ds{\alpha_1=\frac{1}{2} r^2+\po ^2 - h, }   \qquad      \ds{\alpha_2=\sqrt{R(r)}, } \qquad \alpha_3 = - \po  (r - \ld).
    \end{array} \right.
\end{equation}
Здесь
\begin{equation}\label{eq2_26}
    \ds{R=-\frac{1}{4}r^4-(2\po ^2-h)r^2+2 \ld \po ^2 r+1-(\po ^2-h)^2-\po ^2\ld^2.}
\end{equation}
Динамика на $\mm$ задана уравнением \eqref{eq2_24}. Точное решение \eqref{eq2_24}, \eqref{eq2_25}, \eqref{eq2_26} обобщает на гиростат семейство особо замечательных движений 4-го класса Аппельрота классического волчка Ковалевской. Отметим, что обобщение 4-го класса Аппельрота на волчок и гиростат в двойном поле получены в \cite{KhRCD05,KhND07}.

\subsubsection{Решение П.В.\,Харламова и Е.И.\,Харламовой}
Второе точное решение для гиростата в поле силы тяжести, имеющее непосредственное отношение к рассматриваемой задаче, построено в работах \cite{EIPVHDan,PVMtt71}. Исходными предположениями являются условие принадлежности центра масс и гиростатического момента одной из главных плоскостей тензора инерции и наличие частных алгебраических интегралов, гарантирующих возможность выражения всех фазовых переменных через одну вспомогательную. При условиях \eqref{eq2_8} наиболее общая форма соответствующих решений представлена в \cite{PVMtt71}, где они объединены в одно семейство.
Запишем его в удобных для дальнейшего обозначениях, как предложено в \cite{mtt40}.
Фиксируем постоянную $\ell$ интеграла площадей и пусть $s$ -- некоторая отличная от нуля константа. Положим
\begin{equation}\label{eq2_27}
\begin{array}{c}
   \vk ^2 = \ell^2+\ld^2 s^2, \quad \ro^2=1-\ds{\frac{2 \vk^2}{s}},  \quad
   \Fun^2=\ds{\frac{1}{2}}\left[\bigl(X+\ds{\frac{\ld}{\vk}}\bigr)^2+\bigl(\ro Y +\ds{\frac{\ell}{s \vk}}\bigr)^2-1\right],\\
   (X,Y)=\left\{ \begin{array}{ll}
   (\cos \sigma, \sin \sigma), & \ro^2 \gs 0 \\
   (\cosh \sigma, \ri \sinh \sigma), & \ro^2 < 0
   \end{array}\right.
\end{array}
\end{equation}
Здесь $\ri$ -- мнимая единица, $\sigma$ -- вспомогательная переменная. Инвариантные многообразия $\mn$ и $\mo$ определены соответственно при $s<0$ и $s>0$ следующей системой параметрических уравнений
\begin{equation}\label{eq2_28}
\mno: \left\{
\begin{array}{lll}
  \ds{\omega_1=-\frac{\ell} {s}- \vk \ro Y,} &
  \ds{\omega_2=-\ro \sqrt{s}\, \Fun,} &
  \ds{\omega_3 = \ld+2 \vk X,}\\[2mm]
  \ds{\alpha_1=\frac{\ld s X+\ell\ro Y}{\vk} -2\vk^2Y^2 ,} &
  \ds{\alpha_2=-2 \vk Y \sqrt{s} \, \Fun, } &
  \ds{\alpha_3 = \frac{\ell X-\ld s \ro Y }{\vk}.}
\end{array}\right.
\end{equation}
Динамика, индуцированная системой \eqref{eq2_9}, описывается уравнением
\begin{equation}\label{eq2_29}
    \dot \sigma^2  = \sgn(\ro^2)\, s \, \Fun^2.
\end{equation}
Поскольку $\ell$ и $s$ -- свободные параметры, то при заданных физических параметрах задачи вновь имеем трехмерное инвариантное подмногообразие в фазовом пространстве, расслоенное на периодические траектории (как и в первом решении на этом многообразии могут происходить и бифуркации периодических траекторий).

Непосредственно проверяется, что при $\ld=0$ на семействе траекторий \eqref{eq2_28}, \eqref{eq2_29} имеется следующая связь между интегральными константами
\begin{equation}\notag
\begin{array}{c}
(h-2 \ell^2)^2-k = 0,
\end{array}
\end{equation}
что соответствует особо замечательным движениям 2-го ($s<0$) и 3-го ($s>0$) классов Аппельрота, а константа $s$ оказывается кратным корнем многочлена в дифференциальных уравнениях Ковалевской. Имеются и соответствующие обобщения на двойное поле \cite{KhRCD05,KhND07}, но в случае $\ld \ne 0$ они не сведены к квадратурам.

\subsubsection{Решение И.Н.\,Гашененко}
Попытки построить явное разделение переменных в приведенной системе на $\mPel$ для гиростата Ковалевской\,--\,Яхья хотя бы при $\ell=0$ успехом не увенчались. В работах \cite{Kuzn,BorMam2,Tsi3} предложены замены переменных на $\mP^4_0$, приводящие к уравнениям типа Абеля--Якоби с комплексными переменными. Их овеществление произвести пока не удалось, в связи с чем для исследования системы (в частности, для ответа на вопрос об условиях существования вещественных решений, который К.\,Якоби отмечал как основную цель перехода к уравнениям с радикалами) они не пригодны.

Сведение к квадратурам решений системы \eqref{eq2_9} на четырехмерном подмногообразии в $\mP^5$, отличном от уровня интеграла площадей, предложил И.Н.\,Гашененко \cite{Gash1,Gash3}. В этих работах также выполнена полная классификация найденного семейства решений по типу движений (периодическое, асимптотическое к периодическому, двоякопериодическое). Приведем основные результаты работ \cite{Gash1,Gash3}. Рассмотрим соотношения на постоянные $h,\ell,k$ первых интегралов \eqref{eq2_10}, порожденные семейством решений \eqref{eq2_25}:
\begin{equation}\label{eq2_30}
    \begin{array}{l}
            \ell=\ds{-\frac{1}{2}\po (2h-\ld^2)+\po ^3}, \qquad  k=1-\po ^2(2h-\ld^2)+3 \po ^4.
    \end{array}
\end{equation}
Полным прообразом множества таких значений ($\po $ и $h$ произвольны) в $\mP^5$ служит стратифицированное многообразие $G^4=N^4 \cup \mm$, где $\dim N^4=4$ и $\partial N^4=\mm$. Перейдем на $N^4$ от $({\bs \omega},{\bs \alpha})$ к новым координатам $(x,y,z,\alpha,\beta,\gamma)$, полагая \cite{Gash1}
\begin{equation}\label{eq2_31}
    \begin{array}{l}
      \omega_1=\po -x M^{-1},\qquad  \omega_2=-y M^{-1},\\[2mm]
      \omega_3=2z +\ld +4x \gamma M^{-1}, \qquad  M=x^2+y^2 \ne 0,\\[2mm]
      \alpha_1=-2\alpha+4(x^2-y^2)\gamma^2 M^{-2}+2\po  x(1-2\po  x)M^{-1}+4\ld x \gamma M^{-1}, \\[2mm]
      \alpha_2=-2\beta+8 x y \gamma^2 M^{-2}+2\po  y(1-2\po  x)M^{-1}+4\ld y \gamma M^{-1}, \\[2mm]
      \alpha_3=2(1-2 \po  x)\gamma M^{-1}-2 \po  z.
    \end{array}
\end{equation}
В силу \eqref{eq2_30} из уравнений движения и первых интегралов выделяется замкнутая подсистема
\begin{equation}\label{eq2_32}
    \begin{array}{c}
      \ds \dot z= \sqrt{f(z)}, \quad \dot \alpha = (z-\frac{1}{2}\ld)^2+ \frac{1}{2}(\po ^2-h), \quad \beta = \sqrt{f(z)},\\[2mm]
      \ds f(z)=\ds{\frac{1}{4}-\po ^2 z^2-[(z-\frac{1}{2}\ld)^2+ \frac{1}{2}(\po ^2-h)]^2},
    \end{array}
\end{equation}
решение которой $z(t),\alpha(t),\beta(t)$ находится в эллиптических функциях времени.
Вводя обозначения независимых параметров
\begin{equation}\label{eq2_33}
    \begin{array}{l}
      \ds{L_1=\frac{1}{16}(2h-2\po ^2-\ld^2)}, \\[2mm]
      \ds{L_2=\frac{1}{4}(2h-6\po ^2-\ld^2)}, \\[2mm]
      \ds{L_3=4(\po ^2+\ld^2)L_1-\frac{1}{4}},
    \end{array}
\end{equation}
функцию $\Psi$ и вспомогательную переменную $\eta$
\begin{equation}\label{eq2_34}
    \begin{array}{l}
    \Phi(z)=(z-\ld)^2-L_2, \\
    \ds \eta = y \, \Phi^{-\frac{1}{2}}(z),
    \end{array}
\end{equation}
получим эллиптическую квадратуру для $\eta$ и конечные выражения для оставшихся переменных
\begin{equation}\label{eq2_35}
    \begin{array}{c}
      \ds{\int \frac{d\eta}{\sqrt{\mstrut L_1-L_2 L_3 \eta^2}}}+ \ds{\int \frac{dz }{\Phi(z)\sqrt{\mstrut f(z)}}}=\cons,\\[3mm]
      y=\eta \Phi^{\frac{1}{2}}(z), \quad x=-\left(\ds{\frac{1}{2}} \po  L_2^{-1}+\eta \beta \Phi^{-\frac{1}{2}}(z)+\dot \eta \Phi^{\frac{1}{2}}(z)(z-\ld)L_2^{-1} \right), \\[3mm]
      \gamma = \ds{\frac{1}{2}} \po  z L_2^{-1}+\ld \eta \beta \Phi^{-\frac{1}{2}}(z)+\dot \eta \Phi^{\frac{1}{2}}(z) (\alpha+\po ^2)L_2^{-1}.
    \end{array}
\end{equation}

Классификация движений в найденном решении проводится по параметрам \eqref{eq2_33}. Выделяются следующие случаи \cite{Gash1}:
\begin{equation}\label{eq2_36}
\begin{array}{ll}
\mathrm{(I)}& L_1>0,L_2>0,L_3>0;\\
\mathrm{(II)}& L_1>0,L_2<0,L_3<0;\\
\mathrm{(III)}& L_1>0,L_2<0,L_3>0;\\
\mathrm{(IV)}& L_1>0,L_2>0,L_3<0;\\
\mathrm{(V)}& L_1<0;\\
\mathrm{(VI)}& L_1=0;\\
\mathrm{(VII)}& L_2=0;\\
\mathrm{(VIII)}& L_3=0.
\end{array}
\end{equation}

Как отмечает автор \cite{Gash1}, в случаях I, II возможны двоякопериодические движения, в случаях III, IV, VII, VIII движения, отличные от \eqref{eq2_25}, асимптотически стремятся к траекториям на $\mm$, а в случаях V, VI движения, отличные от \eqref{eq2_25}, невозможны. Ниже мы будем возвращаться к обсуждению этих классов и укажем их связь с типами особых точек отображения момента.

\clearpage

\section{Критическое множество отображения момента}\label{sec3}
\subsection{Представление Лакса}
Несмотря на то, что в работе \cite{Yeh1} интеграл Ковалевской был обобщен и на двойное поле, такая задача не рассматривалась как интегрируемая, поскольку двойное поле препятствует существованию интеграла площадей. Однако позже в работе \cite{ReySemRus} было указано представление Лакса со спектральным параметром для гиростата типа Ковалевской в двойном поле, откуда следовало существование еще одного первого интеграла, который при исчезновении второго поля превращается в квадрат интеграла площадей. Приведем соответствующее представление Лакса и вытекающие из него результаты при условиях Ковалевской\,--\,Яхья.

Следуя С.В.\,Ковалевской введем комплексные переменные ($\ri^2=-1$):
\begin{equation}\label{eq3_1}
\begin{array}{c}
x_1 = \alpha_1 + \ri \alpha_2, \quad x_2 = \alpha_1  - \ri \alpha_2, \\
w_1 = \omega_1 + \ri\omega_2 , \quad w_2 = \omega_1  - \ri \omega_2, \\
z = \alpha_3, \quad \quad w_3 = \omega_3
\end{array}
\end{equation}
(последняя строка переобозначений введена для удобства и единообразия). Обозначая штрихом дифференцирование $d/d(it)$, запишем систему \eqref{eq2_9} так:
\begin{equation}\label{eq3_2}
\begin{array}{rcl}
2w'_1  =  - (w_1 w_3  + z ),& 2w'_2  = w_2 w_3  + z, &
2w'_3 = x_1  - x_2, \\
{x'_1  =  - x_1 w_3  + z w_1,} & {x'_2  = x_2 w_3  - z w_2,} & {2z'  = x_1 w_2  - x_2 w_1.}
\end{array}
\end{equation}

Представление Лакса для этой системы, как частный случай результата работы \cite{ReySemRus}, имеет вид
\begin{equation} \label{eq3_3}
\displaystyle{B' = B A - A B},
\end{equation}
где
\begin{equation} \label{eq3_4}
\begin{array}{l}
\displaystyle{B=\begin{Vmatrix} \displaystyle{2 \ld} &
\displaystyle{\frac{x_2}{\kappa}} & \displaystyle{-2 w_1} &
\displaystyle{\frac{z}{\kappa}}
\\[2mm]
\displaystyle{-\frac{x_1}{\kappa}} & \displaystyle{-2 \ld} &
\displaystyle{-\frac{z}{\kappa}} &
\displaystyle{2 w_1 } \\[2mm]
\displaystyle{-2 w_1} & \displaystyle{\frac{z}{\kappa}} &
\displaystyle{-2 w_3} &
\displaystyle{-\frac{x_1}{\kappa} - 4 \kappa} \\[2mm]
\displaystyle{-\frac{z}{\kappa}} & \displaystyle{2 w_2} &
\displaystyle{\frac{x_2}{\kappa}+ 4\kappa} & \displaystyle{2 w_3}
\end{Vmatrix}},\quad
\displaystyle{A=\begin{Vmatrix} \displaystyle{-\frac{w_3}{2}} &
\displaystyle{0} & \displaystyle{\frac{w_2}{2}} &
\displaystyle{0}\\[2mm]
\displaystyle{0} & \displaystyle{\frac{w_3}{2}} & \displaystyle{0} &
\displaystyle{-\frac{w_1}{2}}\\[2mm]
\displaystyle{\frac{w_1}{2}} & \displaystyle{0} &
\displaystyle{\frac{w_2}{2}} &
\displaystyle{\kappa}\\[2mm]
\displaystyle{0} & \displaystyle{-\frac{w_2}{2}} &
\displaystyle{-\kappa} & \displaystyle{-\frac{w_3}{2}}
\end{Vmatrix},}
\end{array}
\end{equation}
через $\kappa$ обозначен спектральный параметр, производная в \eqref{eq3_3} вычисляется в силу \eqref{eq3_2}. Уравнение для собственных значений $\mu$ матрицы $B$ определяет ассоциированную с данным представлением алгебраическую кривую \cite{Bob}. Положим $s=2\kappa^2$. Уравнение алгебраической кривой примет вид
\begin{equation} \label{eq3_5}
\begin{array}{l}
\displaystyle{ \mu ^4 -  4\left[ \frac{1}{s}  - (2  h +
\ld ^2) + 2  s \right] \mu ^2 + 4   \left[\frac{1}{s^2} + \frac{2}{s}(4  \ell^2 -
2 h-\ld ^2)+(4 k + 8\ld ^2 h)
 - 8\ld ^2 s \right] = 0.}
\end{array}
\end{equation}

Введем порожденное интегралами \eqref{eq2_10} интегральное отображение (называемое в современной литературе отображением момента)
\begin{equation}\label{eq3_6}
    J=L{\times}H{\times}K: \mP^5 \to \bR^3.
\end{equation}
Множество $\Sigma$ его критических значений называется бифуркационной диаграммой и является классифицирующим множеством при исследовании фазовой топологии системы. Как показывает опыт такого исследования, при наличии представления Лакса бифуркационная диаграмма отображения $J$ содержится в множестве тех значений $(\ell,h,k)$, при которых кривая \eqref{eq3_5} перестает быть неособенной, то есть либо является приводимой -- левая часть уравнения \eqref{eq3_5} распадается в произведение рациональных выражений, либо имеет особую точку в стандартном смысле. Таким путем можно предугадать результат следующего утверждения. Однако для его строгого доказательства необходимы непосредственные вычисления на множестве критических точек $J$. Это множество будет указано ниже.

\begin{theorem}\label{th2}
Бифуркационная диаграмма интегрального отображения $L{\times}H{\times}K$ содержится в объединении следующих $($пересекающихся$)$ поверхностей
в пространстве $\bbI$:
\begin{eqnarray}
& \wsa : & \left\{ \begin{array}{l} \displaystyle{k =
1+(h-\frac{\ld^2}{2})^2-4(h-\frac{\ld^2}{2})s+3s^2,} \\[3mm]
\displaystyle{\ell^2=(h-\frac{\ld^2}{2}-s)s^2,
\qquad s \in {\bR};}
\end{array} \right. \label{eq3_7} \\
& \wsi_{2,3}: & \left\{ \begin{array}{l} \displaystyle{k = -
2\ld^2(h-\frac{\ld^2}{2}-2s)-\ld^4+ \frac{1}
{4s^2},} \\[3mm]
\displaystyle{\ell^2=\frac{1}{2}(h+\frac{\ld^2}{2})-\ld^2 s^2-\frac{1}{4s},
\qquad s \in {\bR}\backslash\{0\}.}
\end{array} \right. \label{eq3_8}
\end{eqnarray}
Здесь $s<0$ для $\wsb$ и $s>0$ для $\wsc$.
\end{theorem}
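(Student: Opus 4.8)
The plan is to reduce the claim to a degeneration property of the spectral curve of the Lax pair \eqref{eq3_3}--\eqref{eq3_4} and then to eliminate the auxiliary parameter $s=2\kappa^{2}$. Write the left-hand side of \eqref{eq3_5} as $F(\mu,s)=\mu^{4}-4P(s)\mu^{2}+4Q(s)$, with $P(s)=1/s-(2h+\ld^{2})+2s$ and $Q(s)=1/s^{2}+(2/s)(4\ell^{2}-2h-\ld^{2})+(4k+8\ld^{2}h)-8\ld^{2}s$; note that $P,Q$ depend on the phase point only through $(\ell,h,k)=(L,H,K)$. Since \eqref{eq3_3} is an isospectral evolution, the eigenvalues of $B$ — equivalently the functions $\mu^{2}=2P\pm2\sqrt{P^{2}-Q}$ — are first integrals of \eqref{eq3_2} for every $s$, so the momentum map $J$ factors through $x\mapsto\mathcal{C}(J(x))$, where $\mathcal{C}(\ell,h,k)=\{(\mu,s):F(\mu,s)=0\}$. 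By the standard mechanism for Lax pairs with a rational spectral parameter (see \cite{ReySemRus}): if $\mathcal{C}(\ell,h,k)$ is nonsingular, the corresponding fibre of $J$ is a smooth torus and $(\ell,h,k)$ is a regular value; hence the image of the critical set of $J$ is contained in the set of those $(\ell,h,k)$ for which $\mathcal{C}$ is a singular curve, that is, $F=\partial_{\mu}F=\partial_{s}F=0$ at some point $(\mu_{0},s_{0})$. Making this reduction fully rigorous is the step I expect to be the main obstacle; a self-contained alternative is to check by direct elimination that every $3\times3$ minor of the Jacobian matrix of $(L,H,K)$ belongs, modulo $\alpha_{1}^{2}+\alpha_{2}^{2}+\alpha_{3}^{2}=1$, to the ideal cut out by \eqref{eq3_7}--\eqref{eq3_8}, but the Lax picture makes the origin of the two surfaces transparent.

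Granting the reduction, work in the chart $s\neq0$. From $\partial_{\mu}F=4\mu(\mu^{2}-2P)$ a singular point of $\mathcal{C}$ has either $\mu_{0}=0$ or $\mu_{0}^{2}=2P(s_{0})$; singular points of $\mathcal{C}$ over $s=0$ or $s=\infty$ sit at $\mu=\infty$ and yield at most the limiting curves of $\wsa,\wsi_{2,3}$. \emph{Case $\mu_{0}=0$:} then $F(0,s_{0})=4Q(s_{0})=0$ and $\partial_{s}F(0,s_{0})=4Q'(s_{0})=0$, so $s_{0}$ is a double root of the cubic $g(s):=s^{2}Q(s)=-8\ld^{2}s^{3}+(4k+8\ld^{2}h)s^{2}+2(4\ell^{2}-2h-\ld^{2})s+1$. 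Writing $g(s)=-8\ld^{2}(s-s_{0})^{2}(s-s_{1})$ and comparing coefficients, the free term forces $s_{1}=1/(8\ld^{2}s_{0}^{2})$ (in particular $s_{0}\neq0$, consistent with $g(0)=1$), and the coefficients of $s^{2}$ and of $s$ then express $k$ and $\ell^{2}$ through $h$ and $s_{0}$; a short simplification turns these relations into precisely \eqref{eq3_8} with $s=s_{0}$, so that $J(x)\in\wsb$ when $s_{0}<0$ and $J(x)\in\wsc$ when $s_{0}>0$. \emph{Case $\mu_{0}^{2}=2P(s_{0})\neq0$} (the overlap $\mu_{0}=0$ being already covered): substituting into $F=0$ yields $Q(s_{0})=P(s_{0})^{2}$, and $\partial_{s}F=-4P'(s)\mu^{2}+4Q'(s)$ evaluated at this point yields $Q'(s_{0})=2P(s_{0})P'(s_{0})$. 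Since neither $P$ nor $P'$ contains $k$ or $\ell$, the second relation is affine in $\ell^{2}$ and fixes $\ell^{2}$ as a function of $(h,s_{0})$, after which $Q(s_{0})=P(s_{0})^{2}$ fixes $k$; simplifying, one gets exactly \eqref{eq3_7} with $s=s_{0}$, hence $J(x)\in\wsa$.

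Combining the two cases, the image of the critical set of $J$ lies in $\wsa\cup\wsb\cup\wsc=\wsa\cup\wsi_{2,3}$, which is the assertion. As a consistency check, the known critical family $\mm$ of \eqref{eq2_25}, with image \eqref{eq2_30}, lies on $\wsa$: rewriting \eqref{eq2_30} as $\ell=p\bigl(p^{2}-(h-\ld^{2}/2)\bigr)$, $k=1-2(h-\ld^{2}/2)p^{2}+3p^{4}$, one verifies that \eqref{eq3_7} holds with $s=h-\ld^{2}/2-p^{2}$; an analogous substitution shows $\mno$ maps into $\wsi_{2,3}$.
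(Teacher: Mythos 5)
Your proposal is correct and follows essentially the same route as the paper: there, too, Theorem~\ref{th2} rests on the Lax representation \eqref{eq3_3}--\eqref{eq3_4} together with the standard fact that critical values of $J$ lie in the discriminant (degeneration) set of the spectral curve \eqref{eq3_5}, after which the surfaces \eqref{eq3_7}--\eqref{eq3_8} are obtained by eliminating the spectral parameter $s$. Your explicit two-case analysis of the singular points (the case $\mu_0=0$ yielding $\wsi_{2,3}$ and the case $\mu_0^2=2P(s_0)$ yielding $\wsa$) reproduces exactly the paper's parametrization, and your consistency check against \eqref{eq2_30} agrees with the paper's relation \eqref{eq3_13}.
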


\begin{remark}\label{rem1}
В представленном виде уравнения бифуркационных диаграмм получены из \eqref{eq3_5}. В качестве независимых параметров на поверхностях выбраны $h,s$. Поскольку $h$ и $\ell^2$ при этом связаны линейно, то можно выразить $k,h$ через $s,\ell$. На $\wsb, \wsc$ проблем не возникает. На $\wsa$ возникает искусственная особенность, связанная с возможностью $s=0$. Геометрически это соответствует линии самопересечения поверхности $\wsa$, которая в параметрической записи \eqref{eq3_7} особенностью не является. Однако при рассмотрении бифуркационных диаграмм приведенных систем на $\mPel$ важно иметь и выражение через $s,\ell$, которое запишем в виде
\begin{equation}\label{eq3_9}
\begin{array}{l}
\wsa =
\left\{ \displaystyle{h= \frac{\ell^2}{s^2}+\frac{\ld^2}{2} + s,}\;
\displaystyle{k=\frac{\ell^4}{s^4} - \frac{2 \ell^2}{s}+1}, \; \ell s \neq 0
\right\} \cup \\[3mm]
\phantom{\wsa = } \quad \cup \left\{k=1, \; \ell=0 \right\} \cup \left\{\displaystyle{k=1+(h-\frac{\ld^2}{2})^2},\;  \ell=0 \right\}.
\end{array}
\end{equation}
\end{remark}

\subsection{Описание критического множества}
Множество критических точек отображения момента впервые описано в работе \cite{Ryab2}, где приведены и параметрические уравнения особых поверхностей $\wsi_j$.  Более детальное изложение представлено в \cite{Ryab1}. Параллельно такие же результаты получены в \cite{Gash4}. Показано, что поверхности $\wsi_j$ естественным образом возникают как дискриминантные множества некоторых многочленов. Уравнения поверхности $\wsa$ также следуют из результатов работы \cite{Gash1}, но в ее контексте соответствующие условия на постоянные интегралов не связывались с понятием критических точек.

Множество $\mct$ критических точек отображения момента стратифицировано рангом этого отображения. В силу того, что интеграл $L$ всюду регулярен и расслаивает $P^5$ на гладкие симплектические листы \eqref{eq2_7}, естественно принять следующую терминологию.

\begin{defin}\label{def1}
Рангом точки $x \in \mPel \subset \mP^5$ будем называть ранг в этой точке ото\-браже\-ния-огра\-ни\-че\-ния
\begin{equation}\label{eq3_10}
    \jpriv_\ell=H{\times}K|_{\mPel}:\mPel \to \bR^2.
\end{equation}
\end{defin}
Таким образом, ранг точки на единицу меньше, чем ранг в этой точке отображения \eqref{eq3_6}. В соответствии с этим имеем
\begin{equation}\label{eq3_11}
    \mct = \mct^0 \cup \mct^1,  \qquad \mct^i=\{x \in \mPel | \rk \jpriv_\ell(x)=i \}.
\end{equation}
Определенная выше бифуркационная диаграмма есть $J$-образ множества $\mct$.

Наличие особых поверхностей \eqref{eq3_7}, \eqref{eq3_8} порождает другое разбиение критического множества.
\begin{theorem}[\cite{Ryab2,Ryab1,Gash4}]\label{th3}
Множество критических точек отображения момента имеет вид
\begin{equation}\label{eq3_12}
    \mct = \mm \cup \mn \cup \mo,
\end{equation}
где $\mi_j$ есть замыкание множества $J^{-1}(\wsi_j) \cap \mct^1$ $(j=1,2,3)$.
\end{theorem}

Равенство \eqref{eq3_12} следует из аналитического описания множества $\mct$, полученного в \cite{Ryab2} с помощью исследования миноров матрицы Якоби
\begin{equation*}
    \frac{\partial(L,H,K,\Gamma)}{\partial({\bs \omega},{\bs \alpha})}.
\end{equation*}
Элегантное доказательство с помощью преобразования интегральных уравнений к симметричной системе шести комплексных координат на $\bR^6{({\bs \omega},{\bs \alpha})}$ имеется в \cite{Gash4}. Тот факт, что образ каждого из многообразий $\mi_j$ содержится в соответствующей поверхности $\wsi_j$, проверяется непосредственным вычислением. При этом на $\mm$ нужно в формулах \eqref{eq2_30} положить
\begin{equation}\label{eq3_13}
    \ds s= h -\frac{\ld^2}{2}-\po ^2.
\end{equation}

\begin{defin}\label{def2}
Многообразия $\mm$, $\mn$, $\mo$ с индуцированной на них динамикой  будем называть, соответственно, первой, второй и третьей критическими подсистемами случая Ко\-ва\-лев\-ской\,--\,Яхья.
\end{defin}

\begin{remark}\label{rem2}
Мы называем множества $\mi_j$ многообразиями, имея в виду их инвариантность {\rm (``}инвариантные многообразия{\rm '').} В действительности может оказаться, что они являются гладкими многообразиями лишь почти всюду.
\end{remark}

Представление $\mct$ в виде множества решений систем инвариантных соотношений получим как частный случай соответствующего результата работы \cite{KhND07}.

\begin{theorem} \label{th4} Множество критических точек интегрального отображения
$J$ состоит из следующих подмножеств в $\mP^5:$

$1)$ множества, определяемого системой
\begin{equation}\label{eq3_14}
R_1  = 0,\quad R_2  = 0,
\end{equation}
где
\begin{equation}\label{eq3_15}
\begin{array}{l}
\displaystyle{R_1= \omega_2, \qquad R_2= (\omega_3-\ld)\omega_1+\alpha_3;}
 \end{array}
\end{equation}

$2)$ множества, определяемого системой
\begin{equation}\label{eq3_16}
F_1  = 0,\quad F_2  = 0,
\end{equation}
где
\begin{equation}\label{eq3_17}
\begin{array}{l}
F_1=\left[\alpha_2 \omega_2+\omega_1 \left(\alpha_1+\omega_1^2+\omega_2^2 \right)\right] \left[-2\alpha_3\left(\alpha_1\omega_1+\alpha_2\omega_2\right)+\left(\alpha_1^2+\alpha_2^2\right)\omega_3\right]+ \\[3mm]
\phantom{F_1=}+\left\{-2\alpha_3^2\omega_1\left(\omega_1^2+\omega_2^2\right)-\alpha_2\alpha_3\omega_2\left(3\alpha_3+2\omega_1\omega_3\right)+
\alpha_1\left[-3\alpha_3^2\omega_1+ \right. \right.\\[3mm]
\phantom{F_1=}+\left.2\alpha_2\omega_2\left(\omega_1^2+ \omega_2^2\right)+\alpha_3\left(-\omega_1^2+\omega_2^2\right)\omega_3\right]+
\alpha_2^2 \left[\alpha_3\omega_3-\omega_1\left(\omega_1^2+\omega_2^2-\omega_3^2\right)\right]+
 \\[3mm]
\phantom{F_1=}+\left. \alpha_1^2\left[\omega_1^3+ \alpha_3\omega_3+\omega_1\left(\omega_2^2+\omega_3^2 \right)\right]
\right\}\ld+ \\[3mm]
\phantom{F_1=}+\left\{-\alpha_3^3-2\alpha_3^2\omega_1\omega_3+\left[\left(\alpha_1^2-\alpha_2^2\right)\omega_1+2\alpha_1\alpha_2\omega_2 \right]\omega_3+
\alpha_1\alpha_3\left(\omega_1^2+\omega_2^2+\omega_3^2\right)\right\}\ld^2+ \\[3mm]
\phantom{F_1=}+\alpha_1\alpha_3\omega_3\ld^3, \\[3mm]
F_2=
\left[\alpha _2 \omega _1+\omega _2 \left(-\alpha _1+\omega _1^2+\omega _2^2\right)\right] \left[-2 \alpha _3 \left(\alpha _1 \omega _1+\alpha _2 \omega _2\right)+\left(\alpha _1^2+\alpha _2^2\right) \omega _3\right]+\\[3mm]
\phantom{F_1=}+
\left\{\alpha _2 \left[\omega _1 \left(-\alpha _3^2+2 \alpha _1 \left(\omega _1^2+\omega _2^2\right)\right)+\alpha _3 \left(\omega _1^2-\omega _2^2\right) \omega _3\right]+\alpha _2^2 \omega _2 \left(\omega _1^2+\omega _2^2+\omega _3^2\right)- \right.\\[3mm]
\phantom{F_1=}-\left. \omega _2 \left[2 \alpha _3^2 \left(\omega _1^2+\omega _2^2\right)-\alpha _1 \alpha _3 \left(\alpha _3-2 \omega _1 \omega _3\right)+\alpha _1^2 \left(\omega _1^2+\omega _2^2-\omega _3^2\right)\right]\right\}\ld+\\[3mm]
\phantom{F_1=}+\left\{\alpha _2^2 \omega _2 \omega _3-\left(\alpha _1^2+2 \alpha _3^2\right) \omega _2 \omega _3+\alpha _2 \left[2 \alpha _1 \omega _1 \omega _3+\alpha _3 \left(\omega _1^2+\omega _2^2+\omega _3^2\right)\right]\right\}\ld^2+\\[3mm]
\phantom{F_1=}+\alpha _2 \alpha _3 \omega _3\ld^3.
\end{array}
\end{equation}
\end{theorem}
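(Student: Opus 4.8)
Since $\mP^5=\{\Gamma=1\}$ sits in $\bR^6(\bo,\ba)$ with $d\Gamma\neq0$, a point $x\in\mP^5$ is critical for $J=L{\times}H{\times}K$ exactly when the four covectors $dL,dH,dK,d\Gamma$ are linearly dependent at $x$, i.e.\ when the $4{\times}6$ Jacobian $\mathcal{A}(x)=\partial(L,H,K,\Gamma)/\partial(\bo,\ba)$ has rank $\ls3$, equivalently all fifteen of its $4{\times}4$ minors vanish at $x$. The $\bo$-gradient of $L$ is $(\alpha_1,\alpha_2,\frac12\alpha_3)$, which is nonzero on $\Gamma=1$, so this rank condition is the same as $x$ being a critical point of $\jpriv_\ell$ in the sense of Definition~\ref{def1} (with $\ell=L(x)$); hence the critical set of $J$ equals the set $\mct$ of \eqref{eq3_11}, which by Theorem~\ref{th3} is $\mm\cup\mn\cup\mo$. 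It therefore suffices to show that $\mm$ is exactly the variety \eqref{eq3_14}, and that the variety \eqref{eq3_16} satisfies $\mn\cup\mo\subseteq\{F_1=F_2=0\}\subseteq\mct$.

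\textbf{The first system.} From the solutions \eqref{eq2_25}, with $\po=\omega_1$ and $r=\omega_3$, one reads off directly that $\omega_2\equiv0$ and $\alpha_3=-\omega_1(\omega_3-\ld)$, i.e.\ $R_1=R_2=0$; this gives $\mm\subseteq\{R_1=R_2=0\}$. For the converse, $\{R_1=R_2=0\}\cap\mP^5$ is three-dimensional, and for any of its points we set $\po=\omega_1$, $r=\omega_3$, $h=\po^2+\frac12 r^2-\alpha_1$ and check, using $\alpha_1^2+\alpha_2^2+\alpha_3^2=1$, that the polynomial $R(r)$ of \eqref{eq2_26} coincides with $\alpha_2^2$; hence the point lies on a trajectory \eqref{eq2_25} and so in $\mm$. (One also verifies that $\{R_1=R_2=0\}$ is invariant under \eqref{eq2_9}, i.e.\ $\dot R_1,\dot R_2\in(R_1,R_2)$ — the content of its being a critical subsystem — or, alternatively, exhibits on this set an explicit nontrivial relation $\mu_0\,d\Gamma+\mu_1\,dL+\mu_2\,dH+\mu_3\,dK=0$.) Thus $\{R_1=R_2=0\}=\mm$.

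\textbf{The second system.} The inclusion $\mn\cup\mo\subseteq\{F_1=F_2=0\}$ follows by substituting the parametrization \eqref{eq2_28} (together with \eqref{eq2_27} and \eqref{eq2_29}) into $F_1$ and $F_2$ and reducing modulo the defining relations of $X,Y,\Fun,\ro,\vk$ and the energy level: both expressions collapse to zero. For the opposite inclusion $\{F_1=F_2=0\}\cap\mP^5\subseteq\mct$ one must show that every such point is critical. Concretely, one eliminates the multipliers from $\mu_0\,d\Gamma+\mu_1\,dL+\mu_2\,dH+\mu_3\,dK=0$ (solve three of its six scalar components for $(\mu_0,\mu_1,\mu_2,\mu_3)$, substitute into the remaining three), obtaining polynomial equations for the critical set; one then saturates this ideal by $(R_1,R_2)$ to discard the component $\mm$ already accounted for, and checks that what survives, modulo $\Gamma-1$, has the same zero set as $(F_1,F_2)$ — equivalently, $F_1$ and $F_2$ are the combinations of the $4{\times}4$ minors of $\mathcal{A}$ that remain after cancelling the factor responsible for $\mm$. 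That no spurious pieces are thereby introduced is confirmed by verifying that $J$ maps $\{F_1=F_2=0\}$ into $\wsb\cup\wsc$ of Theorem~\ref{th2}.

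\textbf{The main obstacle.} The difficulty is computational rather than conceptual: $F_1$ and $F_2$ are inhomogeneous, of degree six in $(\bo,\ba)$ and cubic in $\ld$, so both the substitution check on \eqref{eq2_28} and, above all, the elimination and saturation that identify the residual variety with $\{F_1=F_2=0\}$ are heavy symbolic computations. The only point requiring genuine care — as opposed to brute force — is ruling out lower-dimensional or embedded components of $\{F_1=F_2=0\}$ lying outside $\mn\cup\mo$, which is precisely why the final cross-check against the bifurcation diagram of Theorem~\ref{th2} is included.
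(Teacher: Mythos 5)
Your argument is correct and, like the paper, ultimately rests on Theorem~\ref{th3} together with the observation (made in the paper immediately after Definition~\ref{def1}) that, since $L$ has no critical points on $\mP^5$, criticality for $J$ is equivalent to criticality for $\jpriv_\ell$; but you carry out the identification of the two subsystems by a different mechanism. The paper's own justification checks invariance of \eqref{eq3_14}, \eqref{eq3_16} by differentiation along \eqref{eq2_9}, verifies that the known solutions \eqref{eq2_25} and \eqref{eq2_28} satisfy them, and explains the origin of the equations by identifying them with the critical-point equations of the pencil \eqref{eq3_18} with $s,\tau$ given by \eqref{eq3_19}, \eqref{eq3_20} (citing \cite{KhND07}); this keeps the computation at the level of verifying given identities and makes the invariance claimed in the word ``subsystem'' explicit. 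You instead prove $\{R_1=R_2=0\}\cap\mP^5=\mm$ directly --- your reconstruction of $h$ and the check $\alpha_2^2=R(r)$ via $\Gamma=1$ is correct and in fact more explicit than anything written in the paper --- and you treat the second system by eliminating the multipliers (equivalently, the $4\times4$ minors of $\partial(L,H,K,\Gamma)/\partial(\bo,\ba)$) and saturating away the $\mm$-component; if that symbolic computation is carried through, invariance of $\{F_1=F_2=0\}$ also follows, since the set is then the closure of the family of solutions \eqref{eq2_28}. What your route costs is that the whole burden is shifted onto one large elimination, which you describe but do not execute --- admittedly the same level of ``direct verification'' the paper itself relies on. One caveat: the closing cross-check that $J$ maps $\{F_1=F_2=0\}$ into $\wsb\cup\wsc$ cannot by itself certify criticality (a point may map to a bifurcation value without being critical), so it is only a sanity check; excluding spurious components rests entirely on the saturation step.
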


В инвариантности систем \eqref{eq3_14}, \eqref{eq3_16} можно убедиться дифференцированием их в силу системы \eqref{eq2_9}. Непосредственно проверяется, что точки \eqref{eq2_25} удовлетворяют системе \eqref{eq3_14}, а точки \eqref{eq2_28} --- системе \eqref{eq3_16}, что в силу размерностей соответствующих подмножеств вновь доказывает равенство \eqref{eq3_12}.

Из результатов работы \cite{KhND07} следует, как частный случай, что системы \eqref{eq3_14}, \eqref{eq3_16} получаются как уравнения множества критических точек на $\mP^5$ функции с неопределенными множителями Лагранжа
\begin{equation}\label{eq3_18}
  2 L^2 + (\tau-1) H+s K.
\end{equation}
Дифференциал этой функции инвариантен, поэтому $s$ и $\tau$, как функции фазовых переменных, становятся частными интегралами критических подсистем $\mi_j$:
\begin{eqnarray}
\mm:  & & s=-\frac{\ell}{\omega_1}, \qquad \tau = 1+2 \ell \omega_1,           \label{eq3_19}\\
\mno: & & s=\frac{\omega_3-2 \ell \alpha_3}{2(\omega_1^2+\omega_2^2+\ld \omega_3)\ld}, \qquad \tau = 2 \ld^2 s. \label{eq3_20}
\end{eqnarray}
Естественно, именно эти значения $s$ определяют соответствующие точки особых поверхностей $\wsi_j$ в записи \eqref{eq3_7}, \eqref{eq3_8}. Заметим, что, как видно из \eqref{eq3_19}, на множестве $\mm$ функцию \eqref{eq3_18} можно ``сократить'' на $L$ без потери критических точек. На $\mno$ это уже не так, поскольку имеется множество точек глобальной зависимости на $\mP^5$ функций $K,H$, на котором $L$ обращается в нуль, следовательно, неопределенный множитель при $dL$ в нулевой нетривиальной комбинации дифференциалов пропорционален $L$.

Отметим, что из результатов \cite{KhND07} следует также, что скобки Пуассона пар функций \eqref{eq3_15}, \eqref{eq3_17} имеют вид
\begin{eqnarray}
\{R_1,R_2\} & = & \frac{3}{2} s-(h-\frac{\ld^2}{2})\label{eq3_21},\\
\{F_1,F_2\} & = & \frac{\ld F_0}{ s \sqrt{2}}(1-8\ld^2s^3)\sqrt{2s^2-2 (h+\frac{\ld^2}{2})s+1},\label{eq3_22}
\end{eqnarray}
где
\begin{equation}\label{eq3_23}
F^2_0=\left(\omega_1^2+\omega_2^2+\ld \omega_3 \right)^3 \left[(\alpha_1\omega_1+\alpha_2\omega_2+\ld \alpha_3)^2+ (\alpha_2\omega_1-\alpha_1\omega_2)^2 \right].
\end{equation}
Обращение в нуль скобок \eqref{eq3_21} и \eqref{eq3_22} соответствует случаям вырождения симплектической структуры, индуцированной на двумерных многообразиях $\mi_j \cap \mPel$, являющихся фазовыми пространствами гамильтоновых систем с одной степенью свободы. Ниже будет ясна связь этого явления с типами критических точек. Забегая вперед, без строгих определений, можно сказать, что, поскольку невырожденные критические точки организованы в симплектические подмногообразия (см. соответствующее утверждение в \cite{BolFom}), точки вырождения индуцированной симплектической структуры на критическом множестве должны быть вырождены и как критические точки отображения момента.

\clearpage

\section{Относительные равновесия -- критические точки ранга 0}\label{sec4}
\subsection{Зависимость интегралов энергии и площадей}
Неподвижные точки уравнений Эйлера\,--\,Пуассона --- это проекции на $\mP^5$ движений тела, при которых траектория в $SO(3)$ совпадает с орбитой группы симметрий, то есть вращений вокруг вертикали с постоянной угловой скоростью. В теории понижения порядка по Раусу такие неподвижные точки называют {\it относительными равновесиями}. В теории С.Смейла \cite{Smale} эти точки являются критическими точками отображения {\it энергии\,--\,момента}, что в применении к динамике твердого тела означает зависимость интегралов энергии и площадей. Бифуркационные диаграммы отображения
\begin{equation}\label{eq4_1}
    L{\times}H: \mP^5 \to \bR^2
\end{equation}
называют диаграммами Смейла.

В свою очередь, на симплектическом листе \eqref{eq2_7} критические точки отображения \eqref{eq4_1} являются критическими точками ``приведенного'' гамильтониана
\begin{equation}\label{eq4_2}
    H_{\ell}= H|_{\mPel}: \mPel \to \bR.
\end{equation}
При необходимости подчеркнуть зависимость от $\ld$ пишем $\mPel(\ld)$ и $H_{\ell,\ld}$. Как следует из результатов \cite{RyabUdgu}, почти все такие точки невырождены в смысле Морса (ниже мы рассмотрим классификацию этих точек более подробно). Но невырожденная критическая точка гамильтониана будет критической для любого другого первого интеграла. Поэтому рассматриваемое множество совпадает с $\mct^0$. Итак, задача исследования критических точек отображения \eqref{eq4_1} совпадает с задачей исследования множества критических точек ранга $0$ отображения \eqref{eq3_6} в смысле определения~\ref{def1}.

Здесь и в дальнейшем для нахождения критических точек функций на $\mP^5$ без введения дополнительного неопределенного множителя, отвечающего ограничению \eqref{eq2_3}, удобно воспользоваться следующим утверждением.

\begin{lemma}\label{lem1}
Пусть $f$ -- гладкая функция на $\bR^6{(\bo,\ba)}$. Множество критических точек ограничения $f$ на подмногообразие $\mP^5$ определяется системой уравнений
\begin{equation}\label{eq4_3}
    \frac{\partial f}{\partial \bo}=0, \qquad \ba \times \frac{\partial f}{\partial \ba}=0.
\end{equation}
В комплексных переменных \eqref{eq3_1} эта система имеет вид
\begin{eqnarray}
& \ds \frac{\partial f}{\partial w_1}=\frac{\partial f}{\partial w_2}=\frac{\partial f}{\partial w_3}=0, \nonumber\\
&  \ds{2z \frac{\partial f}{\partial x_1}-x_1 \frac{\partial f}{\partial z}=0}, \quad \ds{2z \frac{\partial f}{\partial x_2}-x_2 \frac{\partial f}{\partial z}=0},\label{eq4_4} \\
& \ds{x_1 \frac{\partial f}{\partial x_2}-x_2 \frac{\partial f}{\partial x_1}=0}.\nonumber
\end{eqnarray}
\end{lemma}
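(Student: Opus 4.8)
The plan is a routine Lagrange--multiplier computation followed by a change of coordinates. First I would observe that $\mP^5=\{\Gamma=1\}$ is a regular level set: since $\Gamma=\alpha_1^2+\alpha_2^2+\alpha_3^2$ depends on $\ba$ only and $d\Gamma=2\,\ba\cdot d\ba$ does not vanish where $|\ba|=1$, the set $\mP^5$ is the smooth manifold $\bR^3(\bo)\times S^2(\ba)$. Hence a point $x\in\mP^5$ is critical for $f|_{\mP^5}$ precisely when $f$ is critical in the (unconstrained) $\bo$-directions and the restriction of $f$ to the sphere $S^2(\ba)$ is critical at $x$; equivalently, $df_x$ is proportional to $d\Gamma_x$ as a covector on $\bR^6(\bo,\ba)$. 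Using $\partial\Gamma/\partial\bo=0$ and $\partial\Gamma/\partial\ba=2\ba$, this reads $\partial f/\partial\bo=0$ together with $\partial f/\partial\ba\parallel\ba$, the latter being exactly $\ba\times\partial f/\partial\ba=0$. This gives \eqref{eq4_3}.

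To pass to \eqref{eq4_4} I would simply transport \eqref{eq4_3} through the linear substitution \eqref{eq3_1} by the chain rule. Since $(\omega_1,\omega_2,\omega_3)\mapsto(w_1,w_2,w_3)$ is a linear isomorphism, the condition $\partial f/\partial\bo=0$ is equivalent to $\partial f/\partial w_1=\partial f/\partial w_2=\partial f/\partial w_3=0$; concretely $\partial f/\partial\omega_1=\partial f/\partial w_1+\partial f/\partial w_2$, $\partial f/\partial\omega_2=\ri(\partial f/\partial w_1-\partial f/\partial w_2)$, $\partial f/\partial\omega_3=\partial f/\partial w_3$. In the same way, from $\partial f/\partial\alpha_1=\partial f/\partial x_1+\partial f/\partial x_2$, $\partial f/\partial\alpha_2=\ri(\partial f/\partial x_1-\partial f/\partial x_2)$, $\partial f/\partial\alpha_3=\partial f/\partial z$ and the inverse relations $\alpha_1=(x_1+x_2)/2$, $\alpha_2=(x_1-x_2)/(2\ri)$, $\alpha_3=z$, one rewrites the three components of $\ba\times\partial f/\partial\ba$ as an equivalent system of three linear equations in $\partial f/\partial x_1$, $\partial f/\partial x_2$, $\partial f/\partial z$ with coefficients polynomial in $x_1,x_2,z$; collecting these with the $w$-equations completes the translation of \eqref{eq4_3} into \eqref{eq4_4}.

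I do not expect a genuine obstacle here; the only points deserving a little care are (i) the regularity of $\mP^5$, which is what makes the critical-point condition equal to the collinearity of $df$ and $d\Gamma$ with no spurious solutions, and (ii) checking that the $3\times3$ transition matrices relating the real system \eqref{eq4_3} to the complex system \eqref{eq4_4} are nonsingular everywhere on $\mP^5$ — which holds because $\ba\ne0$ there — so that the two systems cut out exactly the same set. Both verifications are elementary; in the applications of the lemma $f$ is taken to be one of the polynomials of \eqref{eq2_10}, after which the system \eqref{eq4_4} is analysed case by case.
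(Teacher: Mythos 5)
Your first paragraph is exactly the argument the paper treats as self-evident, and it is correct: $\mP^5=\bR^3(\bo)\times S^2(\ba)$ is a regular level of $\Gamma$, so criticality of $f|_{\mP^5}$ amounts to $\partial f/\partial\bo=0$ together with $\partial f/\partial\ba\parallel\ba$, which is \eqref{eq4_3}.

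The gap is the final step of your second paragraph, which you do not carry out and which, as stated, fails. Substituting your own chain-rule formulas into $\partial f/\partial\ba=\mu\ba$ gives $\partial f/\partial x_1=\tfrac{\mu}{2}x_2$, $\partial f/\partial x_2=\tfrac{\mu}{2}x_1$, $\partial f/\partial z=\mu z$, i.e.\ proportionality of $\bigl(\partial f/\partial x_1,\partial f/\partial x_2,\partial f/\partial z\bigr)$ to $(x_2,x_1,2z)$, whose equation form is
\begin{equation*}
2z\frac{\partial f}{\partial x_1}-x_2\frac{\partial f}{\partial z}=0,\qquad 2z\frac{\partial f}{\partial x_2}-x_1\frac{\partial f}{\partial z}=0,\qquad x_1\frac{\partial f}{\partial x_1}-x_2\frac{\partial f}{\partial x_2}=0.
\end{equation*}
This is not \eqref{eq4_4} as printed: that system expresses proportionality to $(x_1,x_2,2z)$, which in real variables means $\partial f/\partial\ba\parallel(\alpha_1,-\alpha_2,\alpha_3)$, a genuinely different condition. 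For example, $f=\Gamma=x_1x_2+z^2$ is critical at every point of $\mP^5$, yet $2z\,\partial\Gamma/\partial x_1-x_1\,\partial\Gamma/\partial z=2z(x_2-x_1)\not\equiv 0$; and for $f=\alpha_1\alpha_2$ the printed equations yield only $\alpha_1=\alpha_2=0$ and miss the critical points with $\alpha_3=0$, $\alpha_1=\pm\alpha_2$. So the only nonobvious content of the lemma is precisely this identification, and it cannot be dismissed with ``collecting these \dots completes the translation'': you must either display the combinations above and note that \eqref{eq4_4} becomes correct only after interchanging the roles of $\partial/\partial x_1$ and $\partial/\partial x_2$ (equivalently, after reading these symbols in the conjugate-derivative convention, the printed formula otherwise containing a misprint), or else your equivalence claim is false as written. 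Your care points (i) and (ii) are fine but do not touch this issue.
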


Доказательство очевидно. Заметим, что три последних уравнения как в системе \eqref{eq4_3}, так и в системе \eqref{eq4_4}, линейно зависимы. Выбор независимой пары в случае общего положения диктуется соображениями удобства.

Запишем уравнения \eqref{eq4_3} для функции $f_H=H-2\sigma L$.
Получим
\begin{eqnarray}
  & & \omega_1 =\sigma \alpha_1, \quad \omega_2 =\sigma \alpha_2, \quad \omega_3 =\sigma \alpha_3, \nonumber\\
  & &  \alpha_3-[\alpha_1(\omega_3+\ld)-2 \alpha_3 \omega_1]\sigma =0,\quad \alpha_2-2(\alpha_1 \omega_2-\alpha_2 \omega_1)\sigma=0, \label{eq4_5}\\
  & & [2 \alpha_3 \omega_2- \alpha_2(\omega_3+\ld)]\sigma =0. \nonumber
\end{eqnarray}
Эта система, имеющая ранг 5, вместе с уравнением \eqref{eq2_3}, позволяет выразить фазовые переменные и неопределенный множитель $\sigma$ через какую-либо одну переменную, выбранную в качестве независимой. В работе \cite{RyabDis} представлена параметризация множества $\mct^0$ переменной~$\alpha_1$.
\begin{proposition}[П.Е.\,Рябов \cite{Ryab2,RyabDis}]\label{propos1}
Множество критических точек ранга $0$ описывается следующей системой уравнений
\begin{equation}\label{eq4_6}
    \begin{array}{lll}
      \omega_1 = \ds{\frac{x}{2d_1}(\ld-d_2)}, & \omega_2 = 0, & \omega_3=\ds{\frac{1}{2}(\ld-d_2)},\\
      \alpha_1 = x, & \alpha_2 = 0, & \alpha_3 = d_1,
    \end{array}
\end{equation}
где
\begin{equation}\notag
    d_1 = \pm \sqrt{1-x^2}, \qquad d_2 = \pm \sqrt{\ld^2-\ds{\frac{4}{x}(1-x^2)}},
\end{equation}
знаки $d_1,d_2$ произвольны, а параметр $x$ удовлетворяет условиям
\begin{equation}\notag
x \in \left\{ \begin{array}{ll}
[-1,0) \cup [c_\ld,1], & d_2 >0\\
(-1,0) \cup [c_\ld,1), & d_2 <0
\end{array} \right. , \quad \ds c_\ld=\frac{1}{8}\left(\sqrt{\ld^4+64}-\ld^2\right).
\end{equation}
При этом значения первых интегралов таковы
\begin{equation}\label{eq4_7}
    \ell = \ds{\frac{1}{d_1}[(3-x^2)\ld-(1+x^2)d_2]}, \quad \ds h=\frac{\ld(1-x^2)-d_2(1+3x^2)}{2x(\ld+d_2)},
\end{equation}
а неопределенный множитель
\begin{equation}\notag
    \sigma = \ds{\frac{1}{2 d_1}(\ld - d_2)}.
\end{equation}
\end{proposition}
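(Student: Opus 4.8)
\medskip
\noindent\emph{Plan of the proof.} Since $\partial L/\partial\bo=\tfrac12\mbf{I}\ba\ne0$ (here $\mbf{I}\ba=\{2\alpha_1,2\alpha_2,\alpha_3\}$ and $|\ba|=1$), Lemma~\ref{lem1} shows that $dL$ never vanishes on $\mP^5$, so every $\mPel$ is a regular hypersurface of $\mP^5$ whose conormal is spanned by $dL$. Consequently $x\in\mct^0$, i.e. $(d\jpriv_\ell)_x=0$, exactly when $dH_x$ and $dK_x$ both lie in $\mathrm{span}(dL_x)$. The plan is to first extract the consequences of the weaker condition $dH_x\in\mathrm{span}(dL_x)$, which already cuts out a curve, and only then to handle the condition on $dK$.

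Because $dL\ne0$, the relation $dH_x\in\mathrm{span}(dL_x)$ means that $x$ is a critical point on $\mP^5$ of $f_H=H-2\sigma L$ for a suitable scalar $\sigma$; by Lemma~\ref{lem1} this is precisely system \eqref{eq4_5}. I would then solve \eqref{eq4_5}, \eqref{eq2_3} as follows. Its first line reads $\bo=\sigma\ba$; substituting this, the equation $\alpha_2-2(\alpha_1\omega_2-\alpha_2\omega_1)\sigma=0$ becomes $\alpha_2=0$ (hence $\omega_2=0$), the equation $[2\alpha_3\omega_2-\alpha_2(\omega_3+\ld)]\sigma=0$ holds identically, and $\alpha_3-[\alpha_1(\omega_3+\ld)-2\alpha_3\omega_1]\sigma=0$ collapses to $\alpha_3(1+\sigma^2\alpha_1)=\sigma\ld\alpha_1$. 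Setting $\alpha_1=x$ and using $\alpha_1^2+\alpha_3^2=1$ gives $\alpha_3=d_1=\pm\sqrt{1-x^2}$, and the last relation becomes the quadratic $d_1x\,\sigma^2-\ld x\,\sigma+d_1=0$, whose roots are $\sigma=(\ld-d_2)/(2d_1)$ with $d_2=\pm\sqrt{\ld^2-\tfrac4x(1-x^2)}$; then $\omega_3=\sigma\alpha_3=\tfrac12(\ld-d_2)$ and $\omega_1=\sigma\alpha_1=\tfrac{x}{2d_1}(\ld-d_2)$. This is \eqref{eq4_6} together with the stated value of $\sigma$. The admissible range of $x$ follows from reality: $d_1\in\bR$ needs $x\in[-1,1]$; $d_2\in\bR$ needs $\ld^2-\tfrac4x(1-x^2)\ge0$, which is automatic for $x<0$ and equal to $4x^2+\ld^2x-4\ge0$, i.e. $x\ge c_\ld$, for $x>0$; the value $x=0$ is excluded as it sits in a denominator; and at $x=\pm1$, where $d_1=0$, the formulas possess a finite limit (the equilibrium $\bo=0$, $\ba=\{\pm1,0,0\}$ with $\ell=0$) only along the branch $d_2=\ld$, which forces $d_2>0$ — this is why $x=\pm1$ is included in the case $d_2>0$ and excluded in the case $d_2<0$.

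The one genuinely computational point is to confirm that at the points \eqref{eq4_6} one also has $dK_x\in\mathrm{span}(dL_x)$, so that these are the rank-$0$ points and not merely points where the rank is $\le1$. I would check this by substituting \eqref{eq4_6} into the conditions \eqref{eq4_4} written for $K-\mu L$ — equivalently, by evaluating $\partial K/\partial\bo$ and $\ba\times\partial K/\partial\ba$ for $K$ from \eqref{eq2_10} at \eqref{eq4_6} — and verifying that, after using $\omega_2=\alpha_2=0$, $d_1^2=1-x^2$ and the quadratic relation for $\sigma$, they reduce to $\mu\,\partial L/\partial\bo$ and $\mu\,\ba\times\partial L/\partial\ba$ for an appropriate $\mu$; the substitution $\omega_2=\alpha_2=0$ turns what is a priori a quartic identity into a short one in $x$ and $\ld$. (Alternatively one may invoke Theorem~\ref{th3}: $\mct^0\subset\mm\cup\mn\cup\mo$, and on each of these families the rank of $d\jpriv_\ell$ drops to $0$ only on the turning locus of the governing quadrature \eqref{eq2_24} or \eqref{eq2_29}, all of which lies in $\{\omega_2=\alpha_2=0\}$ and reassembles into \eqref{eq4_6}.) Finally, substituting \eqref{eq4_6} into the expressions \eqref{eq2_10} for $L$ and $H$ and simplifying with $d_1^2=1-x^2$ and $d_1x\,\sigma^2-\ld x\,\sigma+d_1=0$ produces the closed forms for $\ell$ and $h$ (and the expression for $\sigma$). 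I expect the rank-$0$ verification, together with the careful bookkeeping of the admissible intervals and their endpoints, to be the main obstacle.
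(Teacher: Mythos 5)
Your proposal follows essentially the same route as the paper: there the rank-$0$ set is likewise obtained by writing the conditions of Lemma~\ref{lem1} for $f_H=H-2\sigma L$, i.e. system \eqref{eq4_5}, and solving it together with $\Gamma=1$ with $x=\alpha_1$ as the parameter (the detailed algebra is delegated to the cited works of Ryabov), while the complementary check that $d(K|_{\mPel})$ also vanishes on this set is handled exactly in the way you outline, via a multiplier construction $f_K=K-4\gamma L$ (see Remark~\ref{rem3}). Your derivation of \eqref{eq4_6}, of $\sigma$, and of the admissible range of $x$ (including the endpoints $x=\pm1$ and the exclusion of $x=0$) is correct and consistent with the paper's sketched argument.
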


Заметим, что система \eqref{eq4_5} не имеет решений в $\mP^5$ с $\alpha_1 = 0$, поэтому особенность $x=0$ неустранима.

Уравнения \eqref{eq4_7} -- это параметрические уравнения диаграммы Смейла. При исследовании бифуркационных диаграмм отображений \eqref{eq3_10} первое уравнение \eqref{eq4_7} позволяет решить задачу нахождения всех критических точек ранга $0$, попадающих на заданный уровень $P^4_\ell$.

Другую параметризацию множества $\mct^0$ предложил И.Н.\,Гашененко \cite{Gash4}, отметив, что такие точки принадлежат пересечению $\mm$ с объединением множеств $\mn,\mo$. Следовательно, удобно выбрать в качестве независимого параметра величину $\omega_3=r$ в равенствах \eqref{eq2_25}. При этом многочлен \eqref{eq2_26} должен иметь кратный корень, что и позволяет легко выразить все неизвестные через $r$. Естественно, те же выражения дает и система \eqref{eq4_5}. С учетом результатов \cite{Gash4,GashDis} приходим к следующему утверждению.
\begin{proposition}[И.Н.\,Гашененко \cite{Gash4}]\label{propos2}
Множество критических точек ранга $0$ описывается следующей системой уравнений
\begin{equation}\label{eq4_8}
    \begin{array}{lll}
      \omega_1 = \pm \sqrt{\pir{r}}, & \omega_2 = 0, & \omega_3=r ,\\
      \alpha_1 = -\pir{r-\ld}, & \alpha_2 = 0, & \alpha_3 = \mp (r-\ld)\sqrt{\pir{r}},
    \end{array}
\end{equation}
где
\begin{equation}\label{eq4_9}
    \D = \pm \sqrt{r^2(r-\ld)^2+4},
\end{equation}
параметр $r$ пробегает множество
\begin{equation}\label{eq4_10}
    r \in (-\infty,0] \cup [0,\ld) \cup (\ld,+\infty),
\end{equation}
знак $\D$ совпадает со знаком $r (r-\ld)$ при $r \ne 0$ и произволен при $r=0$.
При этом значения первых интегралов таковы
\begin{equation}\label{eq4_11}
\begin{array}{l}
    \ell = \mp \ds{\frac{1}{2}[\ld(r-\ld)+\D]}\sqrt{\pir{r}},\quad
    h= -\ds{\frac{1}{2}r(r-\ld)+\frac{2r-\ld}{2(r-\ld)}\D},
\end{array}
\end{equation}
а неопределенный множитель
\begin{equation}\notag
    \sigma = \mp \ds{\frac{1}{\sqrt{2}}\sqrt{\frac{r}{r-\ld}\bigl[r(r-\ld)+\D\bigr] } }.
\end{equation}
Знаки $\omega_1, \alpha_3, \ell, \sigma$ согласованы $($все верхние или все нижние$)$.
\end{proposition}

Сравнивая с уравнениями \eqref{eq2_25} видим, что точки ранга 0 получаются в подсистеме $\mm$, если положить
\begin{equation}\label{eq4_12}
  p^2=\pir{r}.
\end{equation}

\begin{remark}\label{rem3}
Для того чтобы убедиться непосредственно, что во всех найденных точках равен нулю и дифференциал $d(K|_{P^4_\ell})$, достаточно записать уравнения \eqref{eq4_3} для функции
\begin{equation}\notag
    f_K=K-4 \gamma L, \qquad \gamma=\mp \ds{\frac{\ld}{r-\ld}}\sqrt{\pir{r}}.
\end{equation}
Для диаграмм Смейла значение интеграла $K$ несущественно, однако ниже нам понадобится также и это значение в точках \eqref{eq4_8}. Согласно {\rm \cite{Gash4}} его можно представить в виде
\begin{equation}\label{eq4_13}
    k  = \ds{\frac{\ld}{4(r-\ld)^2} \bigl[r(r-\ld)-\D\bigr]\bigl[r(r-\ld)(4 r-3\ld)-\ld \D\bigr]}.
\end{equation}
Вычислим также в точках \eqref{eq4_8} значения частных интегралов $s$ в системах $\mi_j$, то есть значения параметра $s$ на поверхностях $\wsi_j$ в точках ранга $0$. Используя \eqref{eq3_19}, \eqref{eq3_20}, находим
\begin{eqnarray}
      \mm: & & s = \ds{\frac{1}{2}[\ld(r-\ld)+\D]}, \label{eq4_14}\\
      \mno: &  & s= \ds{\frac{r-\ld}{4\ld}[r(r-\ld)-\D].} \label{eq4_15}
\end{eqnarray}
\end{remark}

\begin{figure}[!ht]
\centering
\includegraphics[width=80mm,keepaspectratio]{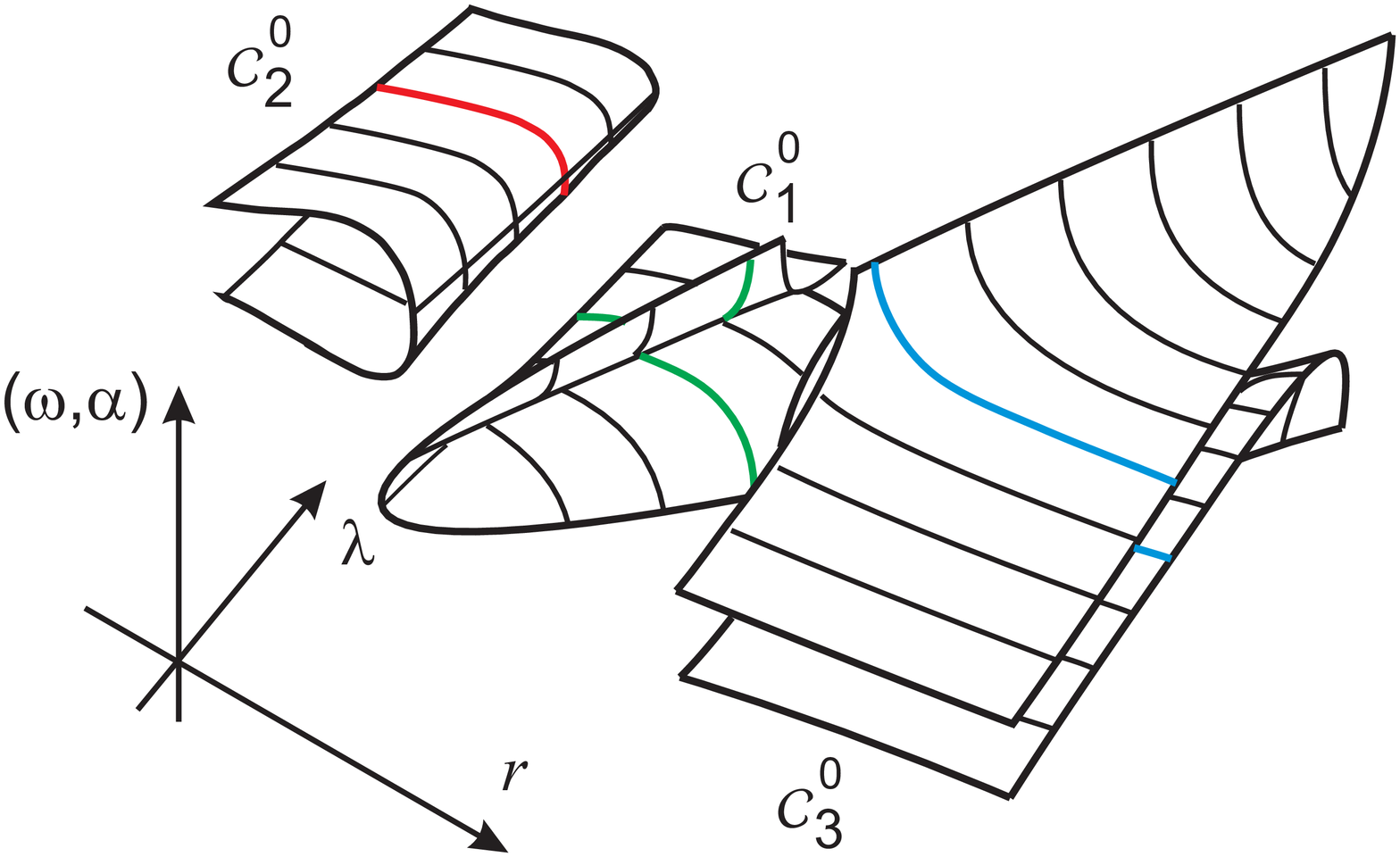}\\
\caption{Поверхности относительных равновесий.}\label{fig_kh_fig01}
\end{figure}

Из \eqref{eq4_8}, \eqref{eq4_10} сразу же следует, что множество $\mct^0$ имеет ровно четыре связных компоненты, гомеоморфных $\bR$. В соответствии с областью изменения $r$ введем обозначения для подмножеств в $\mct^0$, определяемых формулами \eqref{eq4_8}:
\begin{equation}\label{eq4_16}
    \begin{array}{llll}
      \mct^0_1: & r \in [0,\ld), & \D<0, & \ds{\lim_{r\to +0}} \D = -2, \\
      \mct^0_2: & r \in (-\infty,0] , & \D > 0, & \ds{\lim_{r\to -0}} \D = 2, \\
      \mct^0_3: & r \in (\ld, +\infty), & \D > 0.&
    \end{array}
\end{equation}
Первые два множества связны, последнее состоит из двух компонент, отличающихся знаком $\omega_1$. В $\mct^0_1$ и $\mct^0_2$ каждому значению $r\ne 0$ отвечает ровно две точки, в $\mct^0_1$ значению $r=0$ отвечает точка $\bo=0$, $\ba=\{1,0,0\}$ нижнего положения равновесия тела (для дальнейшего обозначим ее в соответствии со стремлением $r$ к нулю справа через $c_+$), а в $\mct^0_2$ нулевое значение $r$ приводит к точке $\bo=0$, $\ba=\{-1,0,0\}$ верхнего положения равновесия (обозначим ее через $c_-$). На множествах \eqref{eq4_16} определена очевидная симметрия
\begin{equation}\label{eq4_17}
    \bio: (\omega_1,\alpha_3) \mapsto (-\omega_1,-\alpha_3),
\end{equation}
которая меняет знак постоянной площадей $\ell$, связные множества $\mct^0_1$, $\mct^0_2$ переводит в себя, а в множестве $\mct^0_3$ меняет местами связные компоненты. Устройство семейства множеств $\mct^0(\ld)$ проиллюстрировано на рис.~\ref{fig_kh_fig01}.

\subsection{Классификация критических точек ранга $0$}

В этом разделе мы приводим результаты, касающиеся аналитической классификации типов критических точек ранга 0 в соответствии с работой \cite{KhMTT42}.

Напомним некоторые понятия, связанные с понятием типа критической точки в интегрируемой системе \cite{BolFom}.

Пусть $M$ --- симплектическое многообразие. Любой гладкой функции $F:M\to \bR$ сопоставляется гамильтоново векторное поле на $M$, обозначаемое $\sgrad F$.
Скобки Пуассона, порожденные симплектической структурой, обозначаем через $\{\cdot,\cdot\}$, так что дифференциальные уравнения системы $\sgrad F$ имеют вид
\begin{equation}\notag
\dot x =\{F,x\}
\end{equation}
(именно такой порядок аргументов в скобке нужен для того, чтобы из определений \eqref{eq2_6}, \eqref{eq2_4} следовали уравнения \eqref{eq2_1}).

Пусть $\dim M =2n$ и гамильтонова система $\sgrad H$ имеет $n$ функционально независимых первых интегралов
\begin{equation}\notag
F_1,\ldots,F_n
\end{equation}
в инволюции ($\{F_i,F_j\}\equiv 0$). Точка $\x \in M$ называется критической ранга $k<n$, если ранг системы векторов $\sgrad F_i$ в точке $\x$ равен $k$. Введем понятие типа критической точки, следуя \cite{BolFom}.

Рассмотрим критическую точку $\x$ ранга $n-m$ ($m>0$). Пусть далее в этом разделе индекс $i$ пробегает множество $1,\ldots, m$. Линейной заменой c постоянными коэффициентами системы функций $F_1,\ldots,F_n$ можно добиться того, чтобы точка $\x$ была критической для каждой из функций $F_i$ и регулярной для всех остальных. Тогда $\sgrad F_i(\x)=0$ и линеаризация этого поля в точке $\x$ есть симплектический оператор $\A_{i}:T_\x M \to T_\x M$. Линейная оболочка $\mA(\x)$ таких операторов есть подалгебра в алгебре всех симплектических операторов на $T_\x M$.
\begin{defin}[\cite{BolFom}]\label{def3}
Точка $\x$ называется невырожденной критической точкой ранга $n-m$, если $\mA(\x)$ есть подалгебра Картана, что равносильно следующим требованиям:

$(1)$ симплектические операторы $\A_{i}$ линейно независимы $(\dim \mA (\x) = m)$;

$(2)$ существует оператор $\A \in \mA(\x)$, у которого все собственные числа различны.
\end{defin}

Напомним, что собственные числа симплектического оператора разбиваются на группы: пары чисто мнимых $\pm \ri a$, пары вещественных $\pm b$ и четверки комплексных $\pm b \pm \ri a$ ($ ab \ne 0$). Для оператора $\A_{i}$ в проекции на корневые подпространства таких групп поле $\sgrad F_i$ имеет соответственно центр, седло или фокус. Оператор $\A \in \mA(\x)$ с  различными собственными числами называется регулярным элементом. Выбрав в невырожденной точке регулярный элемент, обозначим через $m_1,m_2,m_3$ соответственно количество центров, седел и фокусов $(m=m_1+m_2+2m_3)$. От выбора регулярного элемента эти целые неотрицательные числа не зависят.
\begin{defin}[\cite{BolFom}]\label{def4}
Четверка $(n-m,m_1,m_2,m_3)$ называется типом невырожденной критической точки $\x$.
\end{defin}

Знание типа невырожденной критической точки в значительной мере (но не полностью) определяет слоение Лиувилля (слоение совместных уровней интегралов $F_1,\ldots,F_n$) в окрестности этой точки. Для построения грубого топологического описания слоения Лиувилля (например, в виде круговой молекулы без меток в системе с двумя степенями свободы) достаточно знать еще количество связных компонент регулярных и критических интегральных поверхностей. Для критических уровней малой сложности (то есть когда на одну связную компоненту попадает мало критических орбит), например, для критических точек ранга 0 с одной точкой на слое, этого оказывается достаточно и для нахождения тонкого топологического инварианта (меченой круговой молекулы) в силу наличия описания всех имеющихся возможностей \cite{BolFom}. Ниже мы представим такое описание в рассматриваемой задаче.

\begin{defin}\label{def5}
Критическую точку $\x$ назовем \emph{сильно} вырожденной, если для нее нарушается условие $(1)$ определения~{\rm \ref{def3}}, и \emph{слабо} вырожденной в противном случае.
\end{defin}

Заметим, что эта терминология не является общепринятой и используется здесь для удобства. Следующее определение является дискуссионным и здесь не используется.

\begin{defin}\label{def6}
Тип вырожденной критической точки $\x$ -- это четверка $(k,m_1,m_2,m_3)$, где $k=n-m$ по-прежнему есть ранг, а $m_1, m_2, m_3$ --- \emph{максимальные} количества центров, седел и фокусов, отвечающих \emph{различным} группам собственных чисел операторов из $\mA(\x)$.
\end{defin}

Для уравнений Эйлера\,--\,Пуассона 2-форма, индуцированная на $\mP^5$ симплектической структурой многообразия $TSO(3)$, вырождена. Введенные понятия необходимо рассматривать с точки зрения систем на $\mPel$. Явный переход к этим системам делает вычисления необозримыми. Однако в этом нет необходимости. Корректно определена скобка Пуассона \eqref{eq2_6}, хотя и вырожденная, поэтому сопоставленное функции $F$ поле $\sgrad F$ определено уравнениями
\begin{equation}\label{eq4_18}
\dot {\bm}=\bm \times \ds{\frac{\partial F}{\partial \bm}} + \ba \times \ds{\frac{\partial F}{\partial \ba}},\quad
\dot {\ba}= \ba \times \ds{\frac{\partial F}{\partial \bm}}.
\end{equation}
\begin{remark}\label{rem4}
Если $F$ есть функция Казимира, то есть тождественно зависима с $L$ и $\ba^2$, то правые части \eqref{eq4_18} --- тождественный ноль. В связи с этим при вычислении линеаризации полей $\sgrad F$ и собственных чисел соответствующих операторов в $\bR^6$ нет необходимости учитывать неопределенные множители Лагранжа для этих функций --- необходимо лишь учесть определение~$\ref{def1}$ и отбросить два нулевых собственных числа, которые здесь обязательно существуют.\end{remark}

Физическая модель гиростата -- это система с четырьмя степенями свободы (тело плюс ротор), а уравнения Эйлера\,--\,Пуассона получены понижением порядка в такой системе, для которой $\ld$ есть константа циклического интеграла. Поэтому все рассуждения о постоянстве каких-либо свойств в пространстве интегральных или иных параметров естественно рассматривать в расширенном пространстве этих параметров, включающих ось $\bR=\bR(\ld)$. В связи с этим будем использовать следующее обозначение. Пусть $A$ --- какое-либо множество, а $B(\ld)$ --- семейство его подмножеств, зависящих от параметра $\ld$. Обозначим
\begin{equation}\label{eq4_19}
    \mwide{A} = A \times \bR, \qquad \mwide{B} = \bigcup_{\ld} B(\ld)\times \{\ld\} \subset \mwide{A}.
\end{equation}

В соответствии с этим рассмотрим расширенное множество $\mwide{\mct^0} \subset \mwide{\mP^5}=\mP^5 \times \bR$. Напомним, что согласно \eqref{eq2_8} мы считаем $\ld>0$, а случай $\ld=0$ рассматриваем лишь как предельную возможность там, где это явно оговорено. По предложению~\ref{propos2} множество $\mwide{\mct^0}$ непрерывно дважды накрывает область
\begin{equation}\notag
\mD^0=\{(r,\ld)\in \bR^2: \ld >0, r \ne \ld \}.
\end{equation}

\begin{remark}\label{rem5}
Для дальнейшего условимся образы множеств $\mwide{\mct^0_i}$ в различных пространствах параметров $($постоянных общих и частных интегралов, физического параметра $\ld)$ обозначать через $\delta_i$ $(i=1,2,3)$. В частности, как подмножества в $\mD^0$ они имеют вид
\begin{equation}\label{eq4_20}
    \begin{array}{ll}
      \delta_1: \{(r,\ld): 0 \ls r < \ld, \; \ld >0 \}, \\
      \delta_2: \{(r,\ld): r \ls 0, \; \ld >0 \}, \\
      \delta_3: \{(r,\ld): r > \ld, \; \ld >0 \}.
    \end{array}
\end{equation}
Подмножества этих множеств, полученные в результате дальнейшей детализации, будут снабжаться двойными индексами.
\end{remark}

\begin{defin}\label{def7}
Будем говорить, что точки $\x_1,\x_2 \in \mwide{\mct^0}$ принадлежат к одному классу, если существует непрерывный путь в $\mwide{\mct^0}$, соединяющий эти точки или точку $\x_1$ с точкой $\bio(\x_2)$, вдоль которого не меняется тип критических точек.
\end{defin}

Пусть $(r,\ld) \in \mD^0$. Обозначим через $\x_\pm(r,\ld)$ точку \eqref{eq4_8} при $r\ne 0$ и, в соответствии с предложением~\ref{propos2}, при выбранном знаке величины \eqref{eq4_9} $\sgn \D=\sgn[r(r-\ld)]$. Согласно принятым ранее обозначениям имеем
\begin{equation}\notag
\lim _{r \to +0} \x_\pm(r,\ld) = c_+ \in \mct^0_1, \qquad \lim _{r \to -0} \x_\pm(r,\ld) = c_- \in \mct^0_2.
\end{equation}

\begin{defin}\label{def8}
Точку $(r,\ld)\in \mD^0$ назовем разделяющей, если в любой ее окрестности найдутся образы точек из $\mwide{\mct^0}$ разных классов.
\end{defin}

Разделяющим является луч запрещенных точек $r=\ld, \ld>0$. Обозначим его через $\overline{\vpi}$.
Поскольку $\x_1\in \mct^0_1$ и $\x_2\in \mct^0_2$ не могут принадлежать одному классу, точки вида $(0,\ld)$ всегда являются разделяющими. Обозначим полуось $r=0,\ld>0$ через $\vpi_0$.
При $r\ne 0$ типы критических точек $\x_\pm(r,\ld)$ всегда одинаковы, поэтому точка $(r,\ld)$ является разделяющей тогда и только тогда, когда обе критические точки $\x_\pm(r,\ld)$ вырождены.


Пусть $\A_H,\A_K$ --- симплектические операторы линеаризации полей $\sgrad H,\sgrad K$ в точке \eqref{eq4_8}.
\begin{proposition}\label{propos3}
Множество сильно вырожденных критических точек $\x_\pm(r,\ld)$ соответствует кривой в области $\delta_2$
\begin{equation}\label{eq4_21}
    \vpi_{21}: \quad r=\ld-\frac{1}{\ld^{1/3}}, \quad 0< \ld \ls 1.
\end{equation}
\end{proposition}

\begin{proof}
В точках сильного вырождения составим комбинацию
\begin{equation}\label{eq4_22}
    \Linc = \nu_1 \A_H + \nu_2 \A_K = 0.
\end{equation}
Располагая переменные и, соответственно, элементы матриц в порядке $\omega_1,\omega_2,\omega_3,\alpha_1, \alpha_2,\alpha_3$, имеем
\begin{equation}\label{eq4_23}
    \Linc_{12} = \ds{\frac{1}{2}}(r-\ld)\left[ \nu_1-2 \nu_2 \ld (Q^2+\ld+r)\right],
\end{equation}
где обозначено
\begin{equation*}
    Q=\sqrt{\pir{1}}.
\end{equation*}
Очевидно, $Q \ne 0$. Из \eqref{eq4_22}, \eqref{eq4_23} получим
\begin{equation}\notag
    \nu_1 = 2 \nu_2 \ld (Q^2+\ld+r), \quad \nu_2 \ne 0.
\end{equation}
Подставив в \eqref{eq4_22}, получим
\begin{equation*}
    \Linc = 2 \nu_2 (\ld-r)Q (Q^2+\ld)\left(
\begin{array}{cccccc}
0 & 0 & 0 & 0 & 0 & 0 \\
0 & 0 & 0 & 0 & 0 & 0 \\
0 & -2\sqrt{r} & 0 & 0 & \ds{\frac{\ld+r}{Q(\ld-r)}} & 0 \\
0 & -2r\sqrt{r} & 0 & 0 & \ds{\frac{r(\ld+r)}{Q(\ld-r)}} & 0 \\
2r\sqrt{r} & 0 & -\ld Q & -\ds{\frac{r(\ld+r)}{Q(\ld-r)}} & 0 & \ds{\frac{2\ld \sqrt{r}}{\ld-r}} \\
0 & 2 r Q & 0 & 0 & -\ds{\frac{\sqrt{r}(\ld+r)}{\ld-r}} & 0
\end{array}
    \right) =0.
\end{equation*}
Матрица здесь ненулевая, поэтому $Q^2+\ld=0$, что равносильно уравнению
\begin{equation}\label{eq4_24}
    1+\ld (r-\ld)^3 =0
\end{equation}
с условием $Q^2<0$. Но $\sgn Q^2=\sgn [(r-\ld)\D]$ совпадает с $\sgn r$ (см. предложение~\ref{propos2}). Поэтому из решений \eqref{eq4_24} нужно взять только лежащие в $\delta_2$, что и дает кривую \eqref{eq4_21}.
\end{proof}

\begin{proposition}\label{propos4}
Пусть точка $(r,\ld) \in \mD^0$ не лежит на кривой \eqref{eq4_21} и не удовлетворяет ни одному из уравнений
\begin{eqnarray}
  & & r+\ld=0, \label{eq4_25}\\
  & & (2r-\ld)(r-\ld)+\D =0, \label{eq4_26}\\
  & & (2r-\ld)(r-\ld)-\D =0, \label{eq4_27}\\
  & & (2r-\ld)(r-\ld)r + \ld \D =0. \label{eq4_28}
\end{eqnarray}
Тогда критические точки $\x_\pm(r,\ld)$ невырождены.
\end{proposition}
\begin{proof}
Характеристический многочлен оператора $\A_H$ в точке $\x_\pm(r,\ld)$, сокращенный на $\mu^2$ в соответствии с замечанием~\ref{rem4}, имеет вид
\begin{equation}\label{eq4_29}
    \chi_H(\mu)= \mu^4 - 2 a \mu^2+ b,
\end{equation}
где
\begin{equation*}
    \begin{array}{l}
       a=\ds{\frac{1}{8(r-\ld)}}[-(3r-\ld)(2r-\ld)(r-\ld)+(r-3\ld)\D], \\
       b=\ds{\frac{1}{8(r-\ld)}}[(r-\ld)^3(4r-\ld)r-4\ld-(2r-\ld)(r-\ld)^2\D].
     \end{array}
\end{equation*}
Дискриминант многочлена \eqref{eq4_29}
\begin{equation*}
    a^2-b =\ds{\frac{(r+\ld)^2}{64(r-\ld)^2}}[(2r-\ld)(r-\ld)+\D]^2
\end{equation*}
обращается в нуль только при условиях \eqref{eq4_25} и \eqref{eq4_26}. Корни $\chi_H$ по $\mu^2$ находятся явно
\begin{eqnarray}
  \mu_1^2 & = & -\ds{\frac{1}{4}}[(2r-\ld)(r-\ld)-\D], \label{eq4_30}\\
  \mu_2^2 & = & -\ds{\frac{1}{2(r-\ld)}}[(2r-\ld)(r-\ld)r+\ld\D]. \label{eq4_31}
\end{eqnarray}
Эти величины, фактически и определяющие типы точек ранга 0, впервые вычислены в работе \cite{RyabUdgu}. Таким образом, за пределами множества, определенного уравнениями \eqref{eq4_25}--\eqref{eq4_28}, все корни $\chi_H$ различны. При этом вне кривой \eqref{eq4_21} алгебра, порожденная операторами $\A_H,\A_K$, двумерна. Предложение доказано.
\end{proof}

\begin{proposition}\label{propos5}
При условии \eqref{eq4_26} критические точки невырождены.
\end{proposition}
\begin{proof}
Пусть выполнено условие \eqref{eq4_26}. Если допустить, что при этом $\D>0$, то $r \in (-\infty, 0] \cup (\ld, +\infty)$. Но на этом промежутке $(2r-\ld)(r-\ld)>0$, что противоречит \eqref{eq4_26}. Следовательно, должно быть $\D<0$, $r \in [0,\ld)$ и
\begin{equation*}
    \frac{1}{2} |\D|=(r-\frac{\ld}{2})(r-\ld).
\end{equation*}
Следовательно, $r\in [0,\ld/2]$. Возводя в квадрат, получим уравнение
\begin{equation}\label{eq4_32}
    (r-\ld)^3(3r-\ld)-4=0.
\end{equation}
Его решение на нужном интервале представим в параметрической форме, для чего введем переменную $x$, полагая \cite{mtt40}
\begin{equation}\label{eq4_33}
    x=\ld-r.
\end{equation}
Из \eqref{eq4_32}, \eqref{eq4_33} найдем $\D=-(x^4+4)/(2 x^2)$ и
\begin{equation}\label{eq4_34}
    r=\ds{\frac{x^4-4}{2x^3}}, \qquad \ld=\ds{\frac{3x^4-4}{2x^3}}.
\end{equation}
При этом условие $r\in [0,\ld)$ выполнено, если
\begin{equation}\label{eq4_35}
    x \gs \sqrt{2}.
\end{equation}
Характеристический многочлен оператора $\A_H$ в подстановке \eqref{eq4_34} принимает вид
$$
\chi_H(\mu)=(\mu^2+\ds{\frac{\sqrt{x^4+4}}{2x}})^2,
$$
поэтому задачу о невырожденности он не решает. Вычислим, однако, характеристический многочлен оператора $\A_K$:
$$
\chi_K(\mu)=\left[\mu^2 + \ds{\frac{(x^4-4)^2 (4 + x^4)}{x^{14}}}\right] \left[\mu^2 + \ds{\frac{(x^4+4) (3 x^8 - 7 x^4 + 4)^2}{x^{14}}}\right].
$$
При условии \eqref{eq4_35} все его корни различны, поэтому он представляет собой искомый регулярный элемент алгебры.
\end{proof}

\begin{proposition}\label{propos6}
На кривой
\begin{equation}\label{eq4_36}
    \vpi_{22}: \quad r=-\ld, \quad \ld >0,
\end{equation}
то есть при условии \eqref{eq4_25}, все критические точки вырождены.
\end{proposition}
\begin{proof}
Вычислим характеристический многочлен комбинации $\Linc=\nu_1 \A_H + \nu_2 \A_K$ в точках \eqref{eq4_8} при условии \eqref{eq4_25}. Получим
\begin{equation*}
    \chi(\mu)=\left[\mu^2+\ds{\frac{(Z^2-2)(\nu_1 Z+\nu_2)}{2Z^3}}\right]^2, \qquad Z=\ld^2+\sqrt{\ld^4+1}.
\end{equation*}
Поэтому в линейной оболочке операторов $\A_H ,\A_K$ регулярного элемента нет.\end{proof}
Заметим, что при $\nu_1 Z+\nu_2=0$ все собственные числа оператора $\Linc$ равны нулю, однако, он остается ненулевым, за исключением значения $\ld=1/2^{3/4}$, которое соответствует точке пересечения $\vpi_{22}$ с кривой \eqref{eq4_21} сильного вырождения.

\begin{proposition}\label{propos7}
Условие \eqref{eq4_27} реализуется в области $\mD^0$ на следующих кривых
\begin{eqnarray}
& &    \vpi_{23}: \quad r=\ds{\frac{x^4-4}{2x^3}}, \qquad \ld=\ds{\frac{3x^4-4}{2x^3}}, \qquad x \in (\sqrt[4]{4/3},\sqrt{2}],\label{eq4_37}\\
& &    \vpi_{31}: \quad r=\ds{\frac{x^4-4}{2x^3}}, \qquad \ld=\ds{\frac{3x^4-4}{2x^3}}, \qquad x \in (-\sqrt[4]{4/3},0) \label{eq4_38}.
\end{eqnarray}
Все соответствующие критические точки вырождены.
\end{proposition}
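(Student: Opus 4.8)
The plan is to follow the scheme of Proposition~\ref{propos5}, the sign of $\D$ being reversed throughout; here $\D$ in \eqref{eq4_27} is the value \eqref{eq4_9} with the sign prescribed in each component by \eqref{eq4_16}. First I would show that \eqref{eq4_27} has no roots in $\delta_1$: there $0\ls r<\ld$ and $\D<0$, so since $r-\ld<0$ equation \eqref{eq4_27} forces $2r-\ld>0$, whence $3r-\ld>0$ and $(r-\ld)^3(3r-\ld)<0$; but squaring \eqref{eq4_27} with \eqref{eq4_9} gives precisely \eqref{eq4_32}, $(r-\ld)^3(3r-\ld)=4$, a contradiction.

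On $\delta_2$ and $\delta_3$ I substitute \eqref{eq4_33}, $x=\ld-r$, so that $r-\ld=-x$ and $2r-\ld=r-x$. Then \eqref{eq4_27} becomes $\D=x(x-r)$; squaring and using \eqref{eq4_9} yields $x^4-2rx^3=4$, hence $r$ and $\ld=r+x$ are given by the same rational expressions \eqref{eq4_34} as in Proposition~\ref{propos5}, now over different $x$-ranges. Along this curve $\D^2=x^2(x-r)^2$, so $\D=\pm x(x-r)$, and on $\delta_2$ (where $x>0$, $r\ls0$) and on $\delta_3$ (where $x<0$, $r>\ld>0$) both $\D$, fixed by \eqref{eq4_16}, and $x(x-r)$ are positive; thus $\D=x(x-r)$ and these points genuinely solve \eqref{eq4_27} rather than its companion \eqref{eq4_26}, treated in Proposition~\ref{propos5}. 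Translating the remaining inequalities: on $\delta_2$, $x=\ld-r\gs\ld>0$, so $r\ls0\Leftrightarrow x^4\ls4$ and $\ld>0\Leftrightarrow3x^4>4$, giving $x\in(\sqrt[4]{4/3},\sqrt{2}\,]$, the curve $\vpi_{23}$ of \eqref{eq4_37}; on $\delta_3$, $r>\ld\Leftrightarrow x<0$, and then $\ld>0\Leftrightarrow3x^4<4$, giving $x\in(-\sqrt[4]{4/3},0)$, the curve $\vpi_{31}$ of \eqref{eq4_38} (with $r>0$ automatic). Since $\vpi_{23}\subset\delta_2$ and $\vpi_{31}\subset\delta_3$, the two curves are disjoint and evidently nonempty.

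For the degeneracy, note that \eqref{eq4_27} is exactly the condition $\mu_1^2=0$ in \eqref{eq4_30}; hence along $\vpi_{23}$ and $\vpi_{31}$ the operator $\A_H$ at $\x_\pm(r,\ld)$ is singular, $\chi_H(\mu)=\mu^2(\mu^2-\mu_2^2)$ with $\mu_2^2$ as in \eqref{eq4_31}, so $\A_H$ has no simple spectrum. To finish I would compute, in the parametrization \eqref{eq4_34} and in the sense of Remark~\ref{rem4} (so that the form \eqref{eq4_29} applies), the characteristic polynomial $\chi(\mu)$ of the general pencil element $\Linc=\nu_1\A_H+\nu_2\A_K$ and check that it has a multiple root for every $(\nu_1,\nu_2)$ --- the natural outcome being that $\mu^2$ divides $\chi(\mu)$, i.e. $\A_K$ shares the zero eigendirection of $\A_H$, so that $0$ is a double root of every element of the pencil. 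Then no operator in $\mA(\x_\pm(r,\ld))$ has simple spectrum, condition $(2)$ of Definition~\ref{def3} is violated, and $\x_\pm(r,\ld)$ is degenerate.

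The curve part is the same bookkeeping as in Proposition~\ref{propos5}; the main obstacle is the last step. Singularity of $\A_H$ alone does not rule out a regular combination $\nu_1\A_H+\nu_2\A_K$, so one must actually compute $\chi(\mu)$ for the full two-parameter pencil along \eqref{eq4_37}--\eqref{eq4_38}, or equivalently control the common kernel of $\A_H$ and $\A_K$; this is a moderately heavy symbolic calculation. A minor additional care concerns the correct assignment of the two halves of the rational curve $x\mapsto(r,\ld)$ to the equations \eqref{eq4_26} and \eqref{eq4_27}, governed by the sign of $\D$ from \eqref{eq4_16}.
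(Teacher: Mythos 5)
Your proposal follows essentially the same route as the paper's proof: squaring \eqref{eq4_27} gives \eqref{eq4_32}, the substitution $x=\ld-r$ yields the parametrization \eqref{eq4_34}, the sign analysis of $\D$ excludes the case $\D<0$ (your $\delta_1$ argument is an equivalent rearrangement of the paper's: the paper derives $r<\ld/3$ from \eqref{eq4_32} and contradicts $(2r-\ld)(r-\ld)=\D<0$, you derive $2r-\ld>0$ from $\D<0$ and contradict \eqref{eq4_32}), and the ranges $x\in(\sqrt[4]{4/3},\sqrt{2}\,]$, $x\in(-\sqrt[4]{4/3},0)$ on $\delta_2,\delta_3$ match \eqref{eq4_37}, \eqref{eq4_38}. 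The one step you leave as a to-do is exactly the computation the paper carries out: along \eqref{eq4_34} the characteristic polynomial of the full pencil $\nu_1\A_H+\nu_2\A_K$ is \eqref{eq4_39}, $\chi(\mu)=\mu^4-\ds{\frac{(x^8+2x^4-8)[-2\nu_1x^2+\nu_2(x^4-4)]^2}{8x^{10}}}\mu^2$, so $\mu=0$ is a multiple root for every $(\nu_1,\nu_2)$ and no member of the family has simple spectrum --- precisely the outcome you anticipated. So the approach and the predicted conclusion are correct; to count as a complete proof your argument only needs that single symbolic evaluation performed (your careful check that the branch solves \eqref{eq4_27} rather than \eqref{eq4_26} via $\D=x(x-r)>0$ is a nice addition the paper leaves implicit).
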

\begin{proof}
Следствием уравнения \eqref{eq4_27} является уравнение \eqref{eq4_32}, решения которого представлены в виде \eqref{eq4_34}. Если допустить, что $\D<0$, то тогда $r\in [0,\ld)$, следовательно, $r\in [0,\ld/3)$ и $(2r-\ld)(r-\ld)>0$, то есть таких решений \eqref{eq4_27} не имеет. Если же $\D>0$, то в областях $\delta_2,\delta_3$ имеем соответствующие пределы изменения $x$ в \eqref{eq4_37}, \eqref{eq4_38}. В таких точках характеристический многочлен комбинации $\nu_1 \A_H + \nu_2 \A_K$ имеет вид
\begin{equation}\label{eq4_39}
    \chi(\mu)= \mu^4 -\ds{\frac{(x^8+2x^4-8) [-2 \nu_1 x^2 + \nu_2 (x^4-4 )]^2}{8 x^{10}}}\mu^2,
\end{equation}
и всегда имеет два нулевых корня. Поэтому регулярного элемента в алгебре операторов нет.
\end{proof}

\begin{proposition}\label{propos8}
Условие \eqref{eq4_28} реализуется в области $\mD^0$ на кривой
\begin{equation}\label{eq4_40}
\vpi_{24}: \quad r=\ds{\frac{1}{2}}\left(\ld-\sqrt{\ld^2+4\ld^{2/3}}\right), \qquad \ld> 0,
\end{equation}
Все соответствующие критические точки вырождены.
\end{proposition}
\begin{proof}
Следствием уравнения \eqref{eq4_28} является уравнение
\begin{equation}\label{eq4_41}
    r(r-\ld)=\ld^{2/3}.
\end{equation}
Очевидно, оно имеет ровно два решения, по одному в областях $\delta_3$ ($r>\ld$) и $\delta_2$ ($r<0$). При этом следует выбирать $\D>0$. Но в $\delta_3$ имеем $(2r-\ld)(r-\ld)r>0$, поэтому \eqref{eq4_28} не выполняется. Следовательно, решением является только нижний корень уравнения \eqref{eq4_41}, что и приводит к точкам кривой \eqref{eq4_40}.
В этих точках характеристический многочлен комбинации $\nu_1 \A_H + \nu_2 \A_K$ имеет вид
\begin{equation}\label{eq4_42}
\chi(\mu)= \mu^4 + \ds{\frac{(Z^2-8) (4 + Z^2) \{
    \nu_2 [Z^2 (Z^2-8)^2-64]-4 \nu_1 Z^3\}^2}{512 Z^7}} \mu^2, \quad  Z=\ld^{2/3}+\sqrt{4+\ld^{4/3}}.
\end{equation}
и всегда имеет два нулевых корня. Поэтому регулярного элемента в алгебре операторов нет.
\end{proof}

Область $\mD^0$ с разделяющими кривыми указана на рис.~\ref{fig_RazdCrit0}. Здесь же введены обозначения для классов невырожденных точек ранга $0$. Как видим, в множестве $\mwide{\mct^0_1}$ всего один класс, и он обозначен, как и подобласть в $\mD^0$, через $\delta_1$. Он включает точки, соответствующие значениям с $r=0$, так что этот класс состоит из одной связной компоненты. В множестве $\mwide{\mct^0_3}$ два класса, они обозначены через $\delta_{31},\delta_{32}$. Они не включают точек с $r=0$, поэтому каждый такой класс состоит из двух связных компонент. В множестве $\mwide{\mct^0_2}$ восемь классов $\delta_{21},\ldots, \delta_{28}$. Три из них ($\delta_{21}, \delta_{26},\delta_{27}$) содержат точки с $r=0$ и имеют поэтому одну связную компоненту, остальные состоят из двух связных компонент. На этом рисунке также введена кривая $\ell_0$, порожденная ситуацией, специфической только для гиростата, -- наличие равномерных вращений вокруг вертикали ($r \ne 0$), на которых постоянная площадей все же равна нулю. Из \eqref{eq4_11} найдем, что тогда $\ld(r-\ld)+d=0$. В соответствии с определением знака $d$, это возможно лишь на $\delta_2$, поэтому кривая $\ell_0$ определяется условиями
\begin{equation}\label{eq4_43}
    \ell_0: \quad (r-\ld)^3(r+\ld)-4=0, \qquad r \ls 0.
\end{equation}
Здесь уместно еще раз напомнить, что по договоренности $\ld \gs 0$. При пересечении кривой $\ell_0$ тип критических точек не меняется, но, как будет показано ниже, она вызывает перестройку диаграммы Смейла, а также меняет топологию совместного уровня первых интегралов в целом. Как обозначено на рисунке, подобласти, на которые кривая $\ell_0$ разбивает классы $\delta_{27}, \delta_{28}$, будем снабжать штрихами.

\begin{figure}[ht]
\centering
\includegraphics[width=0.7\textwidth,keepaspectratio]{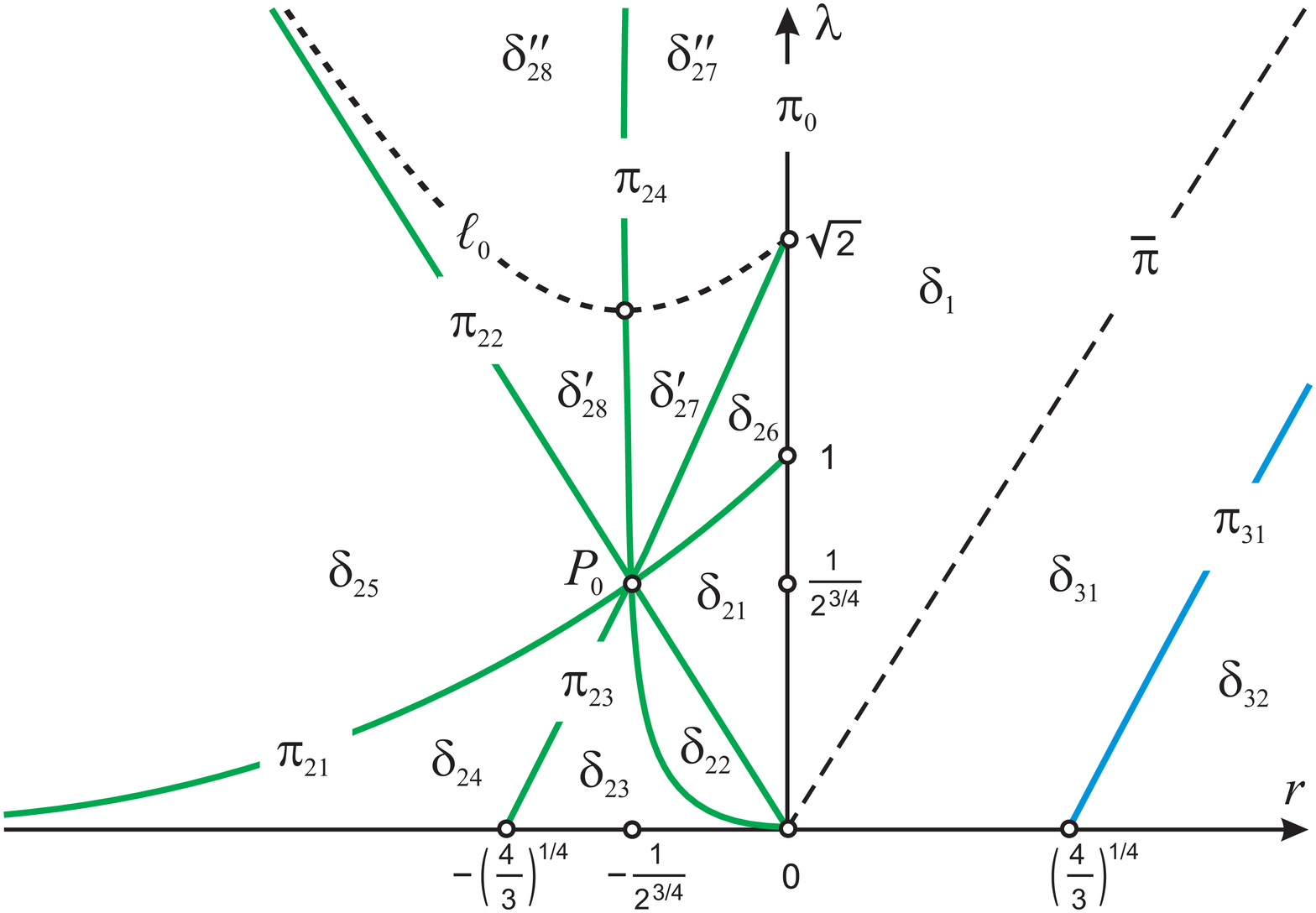}
\caption{Разделяющее множество и классы критических точек ранга $0$}\label{fig_RazdCrit0}
\end{figure}

Поскольку все случаи обращения в ноль величин \eqref{eq4_30}, \eqref{eq4_31} аналитически установлены, то их знаки в порожденных подобластях $(r,\ld)$-плоскости приводят к следующей классификации точек ранга $0$.

\begin{theorem}\label{th5}
В расширенном фазовом пространстве $\mwide{\mP^5}$ критические точки ранга $0$ случая Ко\-ва\-лев\-ской\,--\,Яхья  формируют четыре многообразия, диффеоморфных $\bR^2$.
Вырожденным критическим точкам соответствуют пять разделяющих кривых на плоскости параметров~-- кривые $\vpi_{21} - \vpi_{24}, \vpi_{31}$. Невырожденные критические точки в соответствии с введенным отношением эквивалентности $($наличие непрерывного пути с постоянным типом критической точки из первой точки во вторую или в симметричную второй относительно заданного ото\-бра\-же\-ния-ин\-вер\-сии$)$ разбиты на $11$ классов. Четыре класса, содержащие в себе одно из положений равновесия тела, связны. Остальные, содержащие только образы равномерных вращений $($относительные равновесия$)$, состоят из двух связных компонент, симметричных друг другу относительно ото\-бра\-же\-ния-ин\-вер\-сии. В соответствии с обозначениями классов на рис.~$\ref{fig_RazdCrit0}$ критические точки имеют следующий тип $($указывается в порядке следования пар корней $\mu_1^2,\mu_2^2):$

$\delta_{21}, \delta_{22}, \delta_{26}$ --- ``седло-седло''$;$

$\delta_{23}, \delta_{31}$ --- ``седло-центр''$;$

$\delta_{27}$ --- ``центр-седло''$;$

$\delta_{24}, \delta_{25}, \delta_{28}, \delta_{1}, \delta_{32}$ --- ``центр-центр''.
\end{theorem}

Проведем сравнение классификации относительных равновесий по типам и классификации их в решении И.Н.\,Гашененко \eqref{eq2_31}, \eqref{eq2_32}, \eqref{eq2_35} по параметрам \eqref{eq2_33}.
На рис.~\ref{fig_RazdCritL} нанесены знаки троек чисел \eqref{eq2_33}, а на разделяющих кривых указаны и нулевые значения параметров. Оказалось, что кривая $\vpi_{24}$ на знаки троек не влияет, на кривой $\vpi_{21}$ обращается в ноль $L_3$, но в примыкающих к ней областях знаки троек одинаковы. На кривой $\vpi_{22}$ обращаются в ноль два параметра $L_2$ и $L_3$. В итоге имеем следующее соответствие классам \eqref{eq2_36}:
\begin{equation}\label{eq4_44}
\begin{array}{ll}
\mathrm{(I)}& \delta_{24},\delta_{25},\delta_{32};\\
\mathrm{(II)}& \delta'_{27},\delta'_{28};\\
\mathrm{(III)}& \delta_{22},\delta_{23},\delta_{31};\\
\mathrm{(IV)}& \delta_{21},\delta_{26};\\
\mathrm{(V)}& \delta_{1},\delta''_{27},\delta''_{28};\\
\mathrm{(VI)}& \ell_0;\\
\mathrm{(VII)}& \vpi_{22},\vpi_{23},\vpi_{31};\\
\mathrm{(VIII)}& \vpi_{21},\vpi_{22}.
\end{array}
\end{equation}

\begin{figure}[ht]
\centering
\includegraphics[width=0.7\textwidth,keepaspectratio]{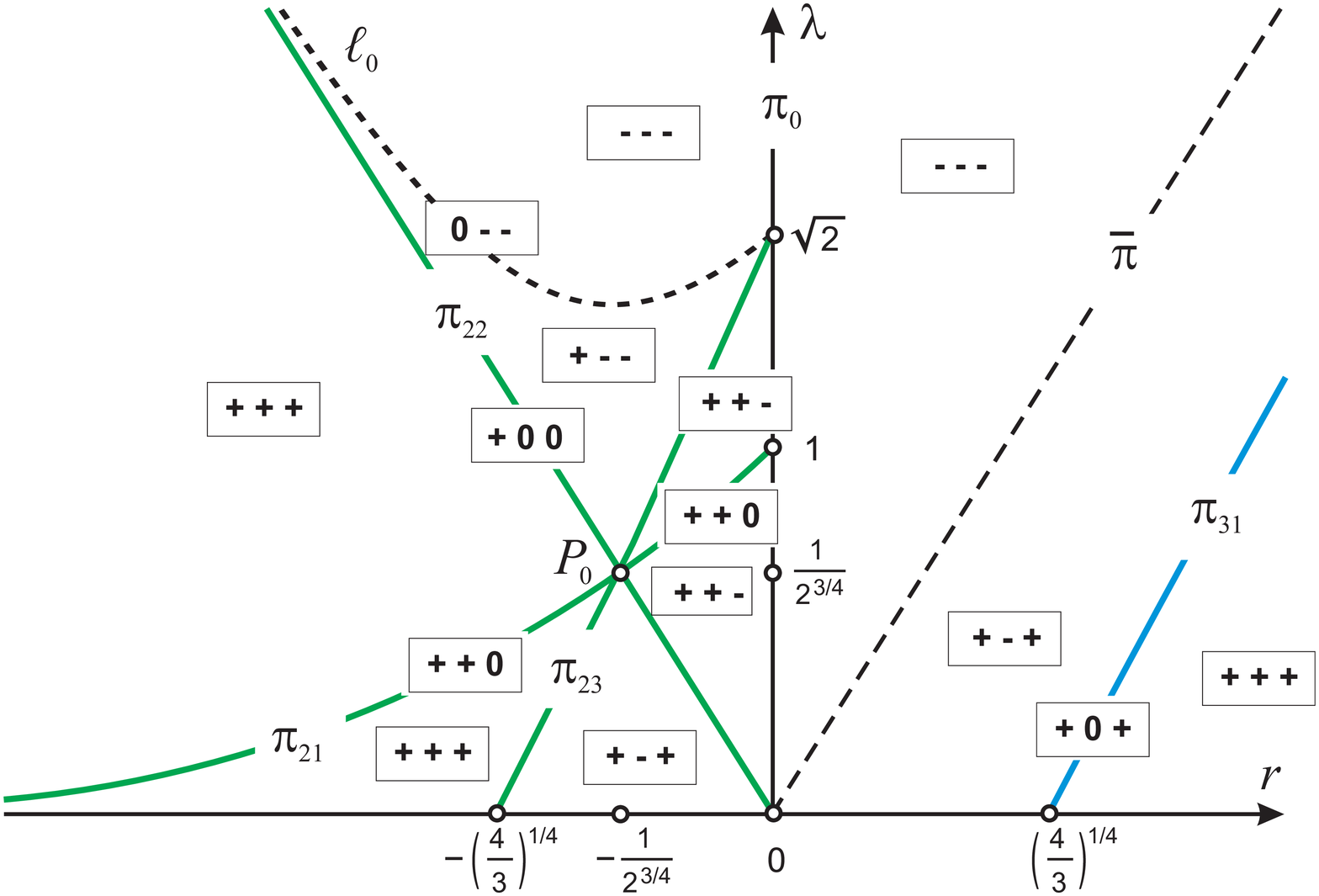}
\caption{Знаки троек $(L_1,L_2,L_3)$.}\label{fig_RazdCritL}
\end{figure}


Отметим следующее свойство, которое неявно использовалось в исследованиях П.Е.\,Рябова и И.Н.\,Гашененко при классификации бифуркационных диаграмм и грубых инвариантов Фоменко (более детально и с уточнениями эти вопросы будут обсуждаться ниже). Предложение~\ref{propos9} утверждает, что все особенности ранга~$0$ имеют сложность~$1$ (понятие сложности введено в \cite{BolFom}). Предложение~\ref{propos10} утверждает, что, более того, даже на один уровень первых интегралов при ненулевой постоянной площадей две таких точки попасть не могут. Возможность совпадения в двух разных точках ранга $0$ значений всех первых интегралов важна при изучении бифуркационных диаграмм различных отображений момента, возникающих в этой задаче. Удивительно то, что явного и четкого доказательства этих двух предложений так нигде и не было предъявлено.

\begin{proposition}\label{propos9}
При всех значениях первых интегралов связная компонента интегрального многообразия не может содержать более одной критической точки ранга $0$.
\end{proposition}

\begin{proposition}\label{propos10}
На один совместный уровень первых интегралов {\rm(}в разные компоненты{\rm)} попадают две точки ранга $0$, отвечающие кривой $\ell_0$ на плоскости $(r,\ld)$. На всех остальных совместных уровнях первых интегралов, содержащих точку ранга $0$, такая точка единственна.
\end{proposition}

\begin{proof}\label{prop10page}
Все точки ранга~0 принадлежат подсистеме $\mm$. Фиксируем $\ld \ne 0$. Согласно \eqref{eq2_25}, точка ранга 0 однозначно определяется значениями $h,p,r$, где $r$ -- кратный корень многочлена \eqref{eq2_26}. Если предположить, что кратных корней два, то есть, что $4R(r)=-(r-r_1)^2(r-r_2)^2$, где $r_1 \ne r_2$, то сразу же приходим к несовместной системе
$$
r_2=-r_1, \quad p=0, \quad r_2^2=2h, \quad r_2^4-4 h^2 + 4=0.
$$
Итак, при заданных $\ld,h,p$ критическая точка ранга 0 единственна (если существует). Теперь фиксируем $\ld, h$ и допустим, что одна и та же пара $(\ell,k)$ определяется разными $p_1 \ne p_2$. Из \eqref{eq2_30} получим две возможности. Первая возможность $p_1=-p_2$ и $h-\ld^2/2=p_1^2$ дает $\ell=0$, и это фигурирующая в утверждении кривая $\ell_0$, в прообразе которой две точки с одним $r$, но с противоположными $p$. Если же $p_1+p_2 \ne 0$, то получаем систему
$$
p_1^2+p_1 p_2+p_2^2 = h-\frac{\ld^2}{2}, \qquad p_1^2+p_2^2 = \frac{1}{6}(h-\frac{\ld^2}{2}),
$$
которая, очевидно, несовместна. Итак, при наличии кратного корня у $R(r)$, этот корень значениями $h,p$ определен однозначно, а, в свою очередь, эти же значения однозначно определены значениями $\ell,k$. Предложение \ref{propos10} \textit{полностью доказано}.\end{proof}

Заметим, что при $\ell \ne 0$ из этого следует и утверждение предложения \ref{propos9}, а при $\ell=0$ доказательство предложения \ref{propos9} из формул приведенного выше решения И.Н.\,Гашененко получается совсем просто.

Таким образом, в случае Ковалевской\,--\,Яхья \textit{все} точки ранга 0 имеют сложность один.

В работе \cite{LogRus} без доказательства сформулировано более слабое утверждение, в котором изначально отбрасываются случаи, разделяющие в плоскости $(\ell,\ld)$ различные виды бифуркационных диаграмм отображения \eqref{eq3_10}. Это разделяющее множество $\Theta_L$ найдено в работах П.Е.\,Рябова \cite{Ryab1,Ryab7,RyabDis} и указано в работе \cite{RyabRCD}. В \cite{LogRus} неразделяющие значения пары $(\ell,\ld)$ называются \textit{небифуркационными}. Как видно из приведенного доказательства предложения \ref{propos10}, никакого отношения к вопросу о сложности точки ранга 0 бифуркационность пары $(\ell,\ld)$ не имеет.

Уравнения кривых в составе упомянутого разделяющего множества $\Theta_L$ П.Е.\,Ря\-бо\-ва можно найти, например, в \cite{RyabRCD}. Это множество классифицирует бифуркационные диаграммы отображений $\jpriv_\ell(\ld)$. Выясним, с чем оно действительно связано. Вычислим множество $\widehat\Theta$ --- образ в октанте $\{(\ell,\ld):\ell \gs 0, \ld \gs 0\}$ кривых $\vpi_{ij}$, служащих разделяющими при классификации точек ранга 0, вместе с их предельными точками при $\ld=0$. Сохраняя для кривых-образов те же обозначения, что и у кривых-прообразов, получим:
\begin{equation}\label{eq4_48}
\begin{array}{lll}
  \vpi_{21}: & \ds \ell = \frac{1}{2\ld^{1/3}}\sqrt{1-\ld^{4/3}}, & 0\ls \ld\ls 1;\\[4mm]
  \vpi_{22}: & \ds \ell = {\frac{1}{\sqrt{2}}(\sqrt{1+\ld^4}-\ld^2)^{3/2}}, & \ld \gs 0;\\[4mm]
  \vpi_{23}: & \left\{
  \begin{array}{l}
  \ell=\ds \frac{(4-x^4)^{3/2}}{4x^3}, \\[2mm]
  \ld=\ds{\frac{3x^4-4}{2x^3}}
  \end{array}
  \right., & x \in [\sqrt[4]{4/3},\sqrt{2}]; \\[4mm]
  \vpi_{24}: &
  \ell=\ds \frac{|\sqrt{4 + \ld^{4/3}}-2\ld^{2/3}|}{\sqrt{2}(\sqrt{4 + \ld^{4/3}}-\ld^{2/3})^{1/2}}, & \ld \gs 0;\\[4mm]
  \vpi_{31}: & \left\{
  \begin{array}{l}
  \ell=\ds \frac{(4-x^4)^{3/2}}{4x^3}, \\[2mm]
  \ld=\ds{\frac{3x^4-4}{2x^3}}
  \end{array}
  \right., & x \in [-\sqrt[4]{4/3},0).
\end{array}
\end{equation}
Напомним, что кривая $\vpi_0=\{\ell=0, \ld \gs 0\}$ не является разделяющей внутри класса $\delta_2$ в смысле введенного ранее отношения эквивалентности. Из нее в разделяющее множество входят лишь точки, в которых заканчиваются кривые $\vpi_{2j}$: $\ld=0,1,\sqrt{2}$. Итак, мы видим, что $\Theta_L \setminus \widehat \Theta$ состоит из одной кривой $\ell=(4\ld)^{-1}$, смысл которой здесь будет выявлен ниже. В работе \cite{RyabHarlUdgu2} показано, что эта кривая соответствует экстремальному значению интеграла $L$ на семействе вырожденных критических точек ранга 1. Таким образом, подавляющее число перестроек бифуркационных диаграмм отображений $\jpriv_\ell(\ld)$ происходит в случаях, когда в приведенной системе имеется \textit{вырожденная} критическая точка ранга 0, а множество $\Theta_L$ бифуркационных пар $(\ell,\ld)$ --- это, за исключением одной кривой, образ вырожденных точек ранга~0, и множество $\Theta_L$ никак не связано с возможностью попадания нескольких критических точек ранга 0 на один интегральный уровень.

Заметим, что теорема 9 работы \cite{LogRus} (равно как и тождественная ей теорема 8 работы \cite{Slav2Rus}) в определенном смысле является центральным утверждением в том, что касается вопроса невырожденности точек ранга 0, отвечающих \textit{небифуркационным} значениям $(\ell,\ld)$. На этом строится их дальнейшая классификация, меченые круговые молекулы и, как следствие, ищется часть меток на графах Фоменко. Однако и эта часть утверждения, очевидная при анализе уравнений \eqref{eq4_48}, в работах \cite{LogRus,Slav2Rus} не доказана. Указано, что для всех небифуркационных значений $(\ell,\ld)$ проверка условий невырожденности может быть выполнена с помощью компьютера. Однако характеристические многочлены соответствующих симплектических операторов через $\ell$ и $\ld$ не выражены. То же самое относится и к проверке условия двумерности подалгебры, порожденной операторами $\A_H,\A_K$. Создается впечатление, что она всегда двумерна, что не так. Выше нами предъявлено множество значений параметров, где подалгебра одномерна.

В связи с этим, отметим также, что в \cite{Slav2Rus} имеется ссылка на работу \cite{AndrRus}, из которой, в частности, извлекается информация о количестве семейств торов в камерах, что позволяет строить круговые молекулы точек ранга 0 (в работе \cite{LogRus} количество семейств указано неверно). Однако работа \cite{AndrRus} в этом отношении также не может быть принята в качестве доказательств, так как использует некоторые численные эксперименты, основанные на том, что авторы имеют алгоритм, позволяющий для любого набора констант первых интегралов выбрать по одной начальной точке на каждом из соответствующих регулярных торов (в частности, тем самым и установить их количество). Такого алгоритма авторы работы \cite{AndrRus} не предъявили. Известно из сообщений на конференциях (например, \cite{AndrCon}), что использована гипотеза П.Е.\,Рябова, состоящая в том, что каждый регулярный тор в $\mP^5$ имеет выход на подмножество $\{\omega_2=0,\alpha_2=0\}$, но эта гипотеза на сегодня так никем и не доказана.

К моменту подготовки работы \cite{Slav2Rus} уже были опубликованы работы \cite{RyabUdgu,KhMTT42}. В первой даны явные точные аналитические выражения для собственных чисел симплектических операторов, приводящие к установлению типа критических точек. Во второй работе \cite{KhMTT42} содержится точная аналитическая классификация точек ранга 0, в том числе по вырожденности и невырожденности, по типам в случае невырожденности и по полному совместному уровню первых интегралов, в том числе первое (и единственное на сегодня) строгое доказательство невырожденности точек ранга 0 вне разделяющего множества П.Е.\,Рябова.


\subsection{Диаграммы Смейла и изоэнергетические поверхности}
В задаче Ковалевской ($\ld=0$) бифуркационную диаграмму интегралов энергии и площадей построил А.\,Якоб \cite{Jacob}. Он же с помощью конструкции Смейла (приведенное расслоение единичных сфер над областью возможности движения) определил топологический тип изоэнергетических многообразий -- трехмерных уровней ``приведенного гамильтониана'' \eqref{eq4_2}
\begin{equation}\label{eq4_49}
    \iso=\{\zeta \in \mPel: H_\ell(\zeta)=h\}.
\end{equation}
Диаграмма отображения
\begin{equation}\notag
    L{\times}H: \mP^5 \to \bR^2
\end{equation}
(обозначим ее $\smale$) в этом случае состоит из двух парабол
\begin{equation}\notag
\delta_1^0:\; h=-1+\ell^2, \qquad \delta_2^0:\; h=1+\ell^2
\end{equation}
и пары симметричных относительно оси $Oh$ кривых, которые удобно записать в параметрической форме
\begin{equation}\label{eq4_50}
\delta_3^0:\;  \ell = \frac{x^2+4}{4\sqrt{2x}}, \qquad   h=\frac{3x}{4}+\frac{1}{x},  \qquad x \in (0,2].
\end{equation}
Эти значения достигаются на относительных равновесиях, фазовые координаты которых
\begin{equation}\notag
\begin{array}{lll}
\ds  \omega_1 =-\sqrt{\frac{x}{2}}, & \ds \omega_2=0, & \ds \omega_3 =\frac{\sqrt{4-x^2}}{2\sqrt{x}} \\[3mm]
\ds  \alpha_1 = -\frac{x}{2}, & \alpha_2=0, & \ds \alpha_3=\frac{\sqrt{4-x^2}}{2}
\end{array}
\end{equation}
(радикалы $\sqrt{\mathstrut x}, \sqrt{\mathstrut 4-x^2}$ -- алгебраические). Кривые \eqref{eq4_50} касаются верхней параболы в точках $(\pm 1,2)$ и трансверсально ее пересекают в точках $(\pm \sqrt{2(\sqrt{2}-1)},2\sqrt{2}-1)$. Точки возврата имеют координаты $(\pm 2/3^{3/4},\sqrt{3})$, достигаются при $x^2=4/3$. Отметим, что при этом $\omega_3^4=4/3$.

Изоэнергетические поверхности пусты в области $h<-1+\ell^2$. Для остальных областей, на которые $\smale$ делит плоскость $O\ell h$, они диффеоморфны (см. рис.~\ref{fig_smalekowa}) следующим многообразиям
\begin{equation}\label{eq4_51}
S^3, \; K^3=(S^2{\times}S^1)\#(S^2{\times}S^1), \; S^2{\times}S^1, \; {\bR P^3}.
\end{equation}
Гладкий тип $\iso$ в любой точке $(\ell,h)$ можно определить, зная индекс Морса функции $H_\ell$ в ее критических точках, лежащих в прообразах бифуркационных кривых, и приходя в точку $(\ell,h)$ вдоль вертикальной прямой из достаточно низко лежащей точки с заведомо недопустимым значением $h$ (то есть из такой точки, где $\iso=\varnothing$).

\begin{figure}[ht]
\centering
\includegraphics[width=0.8\textwidth,keepaspectratio]{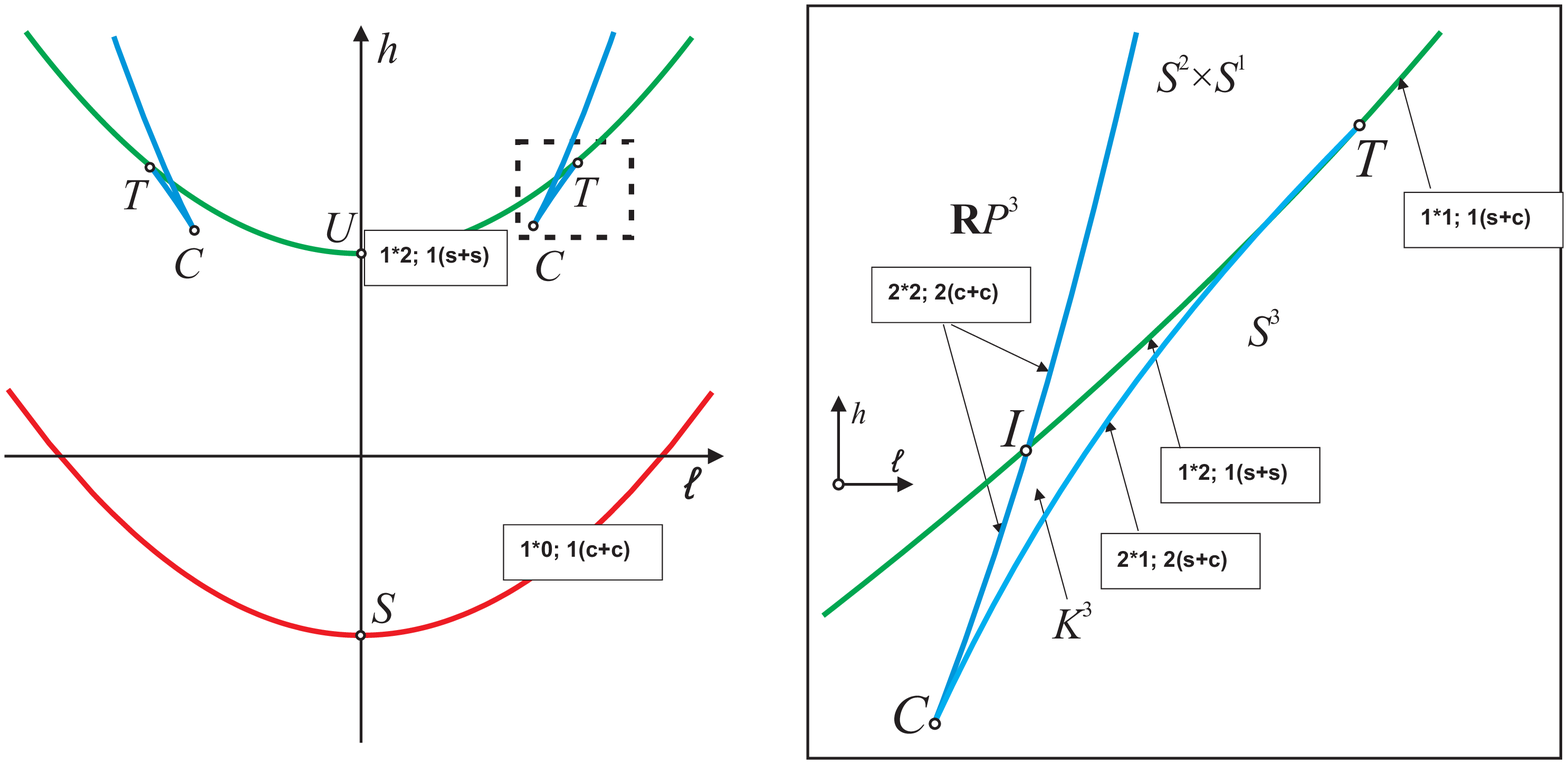}
\caption{Диаграмма Смейла классической задачи.}\label{fig_smalekowa}
\end{figure}

Рассмотрим, что получается в классической задаче (конечно, эти результаты известны \cite{KhPMM83,KhDan83,Richt,BolFom}, но явные вычисления индексов никогда не предъявлялись). Характеристические многочлены оператора $\A_H$ получим предельным переходом из \eqref{eq4_29}
\begin{equation}\notag
    \begin{array}{ll}
      \delta_1^0: &  \chi_H(\mu)=\ds{\left(\mu^2+\frac{1}{2}\right)\left[\mu^2+(1+\ell^2)\right]}, \\[3mm]
      \delta_2^0: &  \chi_H(\mu)=\ds{\left(\mu^2-\frac{1}{2}\right)\left[\mu^2-(1-\ell^2)\right]}, \\
      \delta_3^0: &  \chi_H(\mu)=\ds{(\mu^2+r^2)\left[\mu^2-\frac{1}{4}(\sqrt{4+r^4}-2r^2)\right]}.
    \end{array}
\end{equation}
Поэтому на нижней параболе все критические точки ранга $0$ в прообразе имеют тип ``центр-центр'', на верхней параболе -- тип ``седло-седло'' на ограниченном отрезке между двумя симметричными друг другу относительно оси $Oh$ точками $T$ касания с третьей кривой (в частности, вращения с центром масс в наивысшем положении при $|\ell|<1$ неустойчивы по всем переменным) и тип ``седло-центр'' на неограниченных участках за пределами точек касания (вращения с центром масс в наивысшем положении при $|\ell|>1$ по части переменных устойчивы). В прообразе каждой точки парабол такая критическая точка одна. На кривых \eqref{eq4_50} имеем в прообразе по две точки типа ``седло-центр'' при $r^4<4/3$ (на ограниченных участках между точками возврата $C$ и касания $T$) и по две точки типа ``центр-центр'' при $r^4>4/3$ (на неограниченных участках от точек возврата в бесконечность). Проход по гладкой ветви через точку трансверсального пересечения $I$ на тип не влияет.

В силу механического характера гамильтониана $H_\ell$ его индекс Морса равен индексу Морса ``эффективного потенциала'' -- функции на сфере Пуассона, субуровни которой есть области возможности движения (ОВД). Эффективный потенциал для случая Ковалевской\,--\,Яхья, вычисленный по схеме Смейла, имеет вид
\begin{equation}\notag
U_{\ell,\ld}= -\alpha_1 + \frac{(2 \ell - \ld\alpha_3 )^2}{2[2(\alpha_1^2 + \alpha_2^2) + \alpha_3^2]}.
\end{equation}

Чтобы легко вычислить индекс Морса ограничения функции трех переменных $f(\alpha_1,\alpha_2, \alpha_3)$ на сферу Пуассона \eqref{eq2_3} не вводя локальных координат, применим следующее утверждение.
\begin{lemma}\label{lem2}
Рассмотрим дифференциальный оператор
\begin{equation}\label{eq4_52}
\Xi = {\bs \alpha}\times \frac{\partial}{\partial {\bs \alpha}},
\end{equation}
порождающий вторую группу уравнений \eqref{eq4_3}. Пусть ${\bs \alpha}_0 \in S^2=\{{\bs \alpha}: |{\bs \alpha}|=1\}$ -- невырожденная в смысле Морса критическая точка ограничения функции $f({\bs \alpha})$ на $S^2$. Индекс Морса функции $f$ в точке ${\bs \alpha}_0$ равен количеству отрицательных корней многочлена
\begin{equation}\notag
\xi_f(\mu) = \frac{1}{\mu}\det \bigl[(\Xi^2 f)({\bs \alpha}_0) - \mu E\bigr].
\end{equation}
\end{lemma}

Применяя к функции $U_{\ell,0}$, получим
\begin{equation}\notag
    \begin{array}{ll}
      \delta_1^0: &  \xi_H(\mu)=(\mu-1)[\mu-(\ell^2+1)],\\[2mm]
      \delta_2^0: &  \xi_H(\mu)=(\mu+1)[\mu-(\ell^2-1)], \\[2mm]
      \delta_3^0: &  \xi_H(\mu)=\ds{\left[ \mu+ \frac{1}{2}(\sqrt{4+r^4}+r^2)\right]}\ds{\left[\mu-\frac{r}{\sqrt{4+r^4}}(\sqrt{4+r^4}-2r^2)\right]}.
    \end{array}
\end{equation}
Расстановку индексов Морса и типов вдоль бифуркационных кривых получим как показано на рис.~\ref{fig_smalekowa}. Здесь обозначение $n*m$ означает, что в прообразе лежит $n$ точек индекса $m$, обозначения $c+c,s+c,s+s$ указывают тип точки (``центр-центр'', ``седло-центр'', ``седло-седло'').

Известно, что при пересечении значением $h$ критического значения происходят следующие перестройки ОВД (проекции уровня энергии на конфигурационное пространство): индекс 0 -- добавление диска $D^2$, индекс 1 -- приклейка ручки (из одного диска делает диск с дыркой, то есть кольцо, а из двух дисков может сделать один), индекс 2 -- заклейка дырки диском.

Проведем на плоскости $\bR^2(\ell,h)$ вертикальную прямую $\ell=\cons$ между точками $I$ и $T$. Вдоль нее гамильтониан и эффективный потенциал имеют критические значения
$$
h_1=-1+\ell^2<h_2<h_3=1+\ell^2<h_4
$$
с количеством критических точек в прообразе соответственно $1,2,1,2$ с индексами $0,1,2,2$. В соответствии с этим ОВД на сфере таковы: диск, диск с двумя дырками (сфера с тремя дырками), кольцо (сфера с двумя дырками), сфера. Приведенные расслоения единичных окружностей над ними дают соответственно многообразия \eqref{eq4_51}.

В общем случае вычисления при $\ld>0$ дают следующие показатели Морса.
\begin{theorem}\label{th6}
Корни характеристического многочлена $\xi_H(\mu)$ в критических точках \eqref{eq4_8} в обозначениях предложения~{\rm \ref{propos2}} таковы
\begin{equation}\notag
    \begin{array}{l}
    \mu_1 = -\ds{\frac{1}{2}\left[ r(r-\ld)+d\right]}, \\
    \mu_2=-\ds{\frac{1}{2(r-\ld)d}\left[ (2r-\ld)(r-\ld)- d\right]\left[ (2r-\ld)(r-\ld)r + \ld d\right]}.
    \end{array}
\end{equation}
В частности, знак $\mu_1$ всегда противоположен знаку $d$, поэтому $\mu_1$ положительно в области $\delta_1$ и отрицательно в областях $\delta_2, \delta_3$. Знак $\mu_2$ определяется расположением точки $(r,\ld)$ относительно разделяющих кривых $\overline{\vpi},\vpi_0,\vpi_{23},\vpi_{24},\vpi_{31}$. В итоге, индекс Морса эффективного потенциала $U_{\ell,\ld}$ и гамильтониана $H_{\ell,\ld}$ равен

$0$ в области $\delta_1$,

$1$ в областях $\delta_{23},\delta_{27},\delta_{31}$,

$2$ в областях $\delta_{21},\delta_{22},\delta_{24},\delta_{25},\delta_{26},\delta_{28},\delta_{32}$.
\end{theorem}

\begin{figure}[ht]
\centering
\includegraphics[width=0.6\textwidth,keepaspectratio]{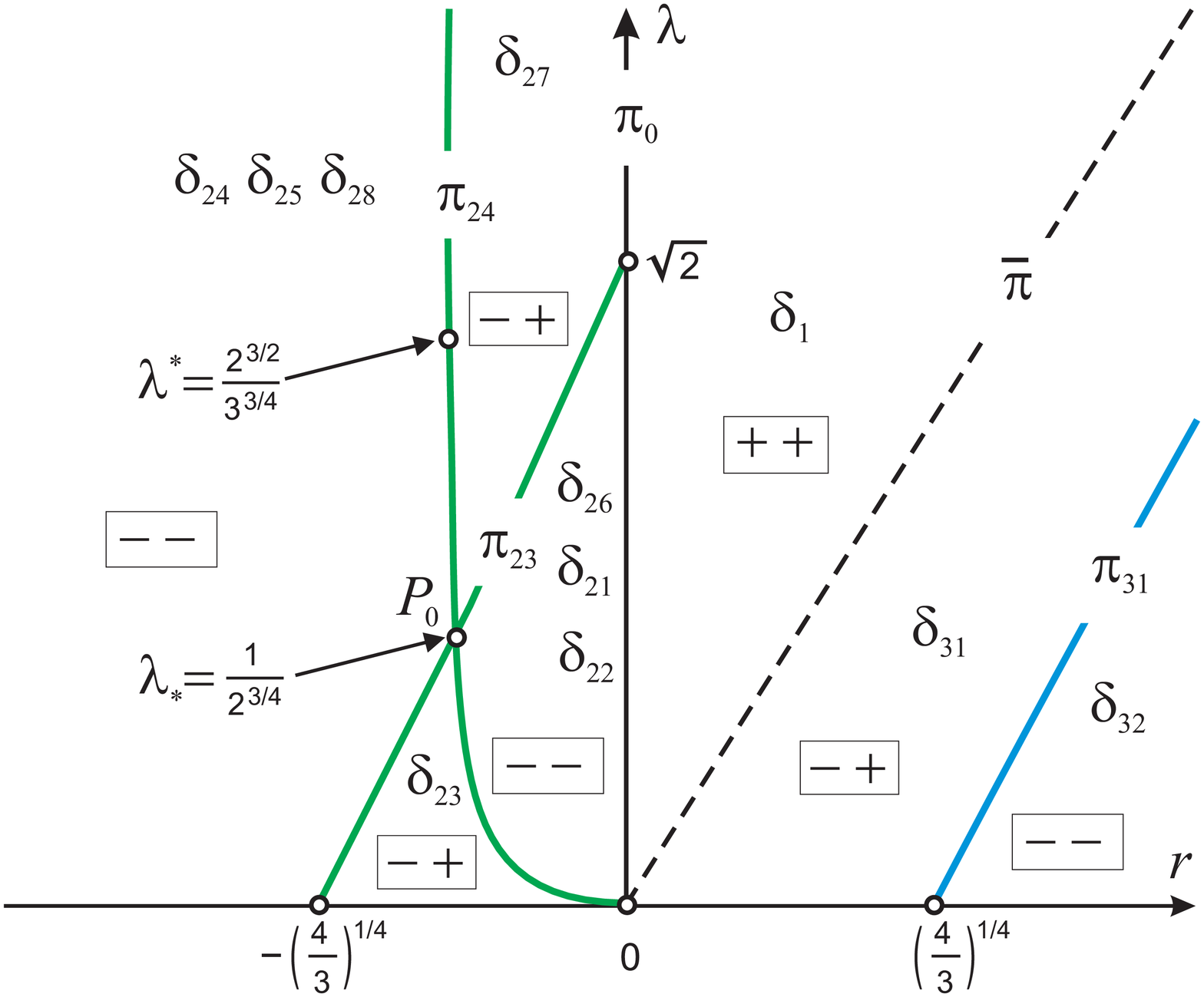}
\caption{Показатели Морса эффективного потенциала.}\label{fig_RazdCrit1}
\end{figure}

\FloatBarrier

\begin{figure}[ht]
\centering
\includegraphics[width=0.8\textwidth,keepaspectratio]{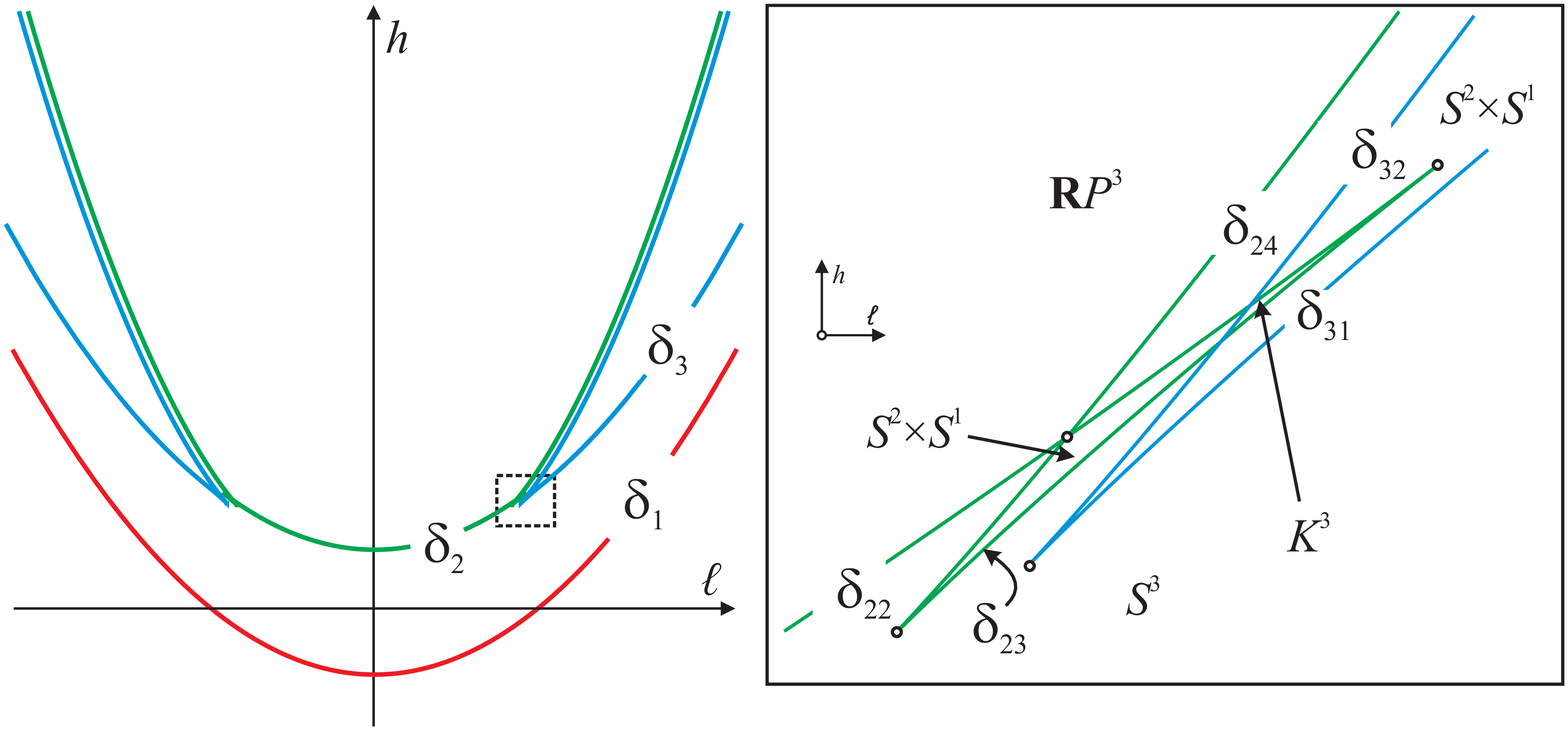}
\caption{Диаграмма $\smale$ при малых $\ld$.}\label{fig_sm1}
\end{figure}

На рис.~\ref{fig_RazdCrit1} приведены пары показателей Морса, а также показана некоторая информация (о которой будет сказано ниже), необходимая для анализа видоизменений диаграммы Смейла при ненулевых $\ld$. При возмущении этого значения от нулевого кривая $\delta_1^0$ трансформируется в $\delta_1$, перестает быть параболой, но никаких особых точек на ней не возникает. Каждая связная часть кривой $\delta_3^0$, за точками которой стоит по две критических точки, расщепляется и один из отрезков, полученных из ограниченного участка, объединяется с ветвью кривой $\delta_2^0$ выше точки $T$. Возникает $\delta_3$, у которой каждая из симметричных относительно $Oh$ компонент состоит из двух бесконечных ветвей, сходящихся в точке возврата, возмущенной из $C$. Из второй части $\delta_3^0$ и кривой $\delta_2^0$ склеивается кривая $\delta_2$, треугольник $CTI$ порождает на ней ласточкин хвост. С учетом уже известных индексов Морса получаем картину, показанную на рис.~\ref{fig_sm1}. Этот тип диаграммы сохраняется при всех $\ld \in (0,\ld_1)$, где $\ld_1$ такое значение параметра, при котором самая правая точка возврата на кривой $\delta_2$ (отвечающая разделяющей кривой $\vpi_{24}$) попадает на кривую $\delta_3$. Ранее существование такого значения было предсказано на основе численного моделирования \cite{Ryab5,Ryab7,RyabDis}, но вычислить его аналитически не удавалось. Современные САВ позволяют это сделать. Запишем систему уравнений, исходя из \eqref{eq4_11}, \eqref{eq4_40} (необходимо выбрать знак $d>0$):
\begin{equation}\label{eq4_53}
\ell(r,\ld)=\ell(r_0,\ld), \quad h(r,\ld)=h(r_0,\ld), \quad r_0= \ds{\frac{1}{2}}\left(\ld-\ld^{1/3}\sqrt{4+\ld^{4/3}}\right).
\end{equation}
Заметим, что она описывает не только все возможные случаи попадания точки $\vpi_{24}$ на кривую $\delta_3$, но и гипотетическую возможность ее попадания на другую ветвь кривой $\delta_2$.
Исключим из этой системы $r_0,r$, полагая $r_0 \ne r$. Последовательность действий такова. Избавляемся в первых двух уравнениях от радикала $d$, подставляем $r_0$ и делаем замену
$U=\ld^{1/3}$. Два полученных уравнения сокращаются на $(2r-U^3+U\sqrt{4+U^4})^2$ (что соответствует $r=r_0$). Приходим к системе
\begin{equation}\label{eq4_54}
\begin{array}{l}
  p_{13}(U,r)+(U^3-r)p_8(U,r)\sqrt{4+U^4}=0, \\
  p_{25}(U,r)+(U^3-r)p_{20}(U,r)\sqrt{4+U^4}=0,
\end{array}
\end{equation}
где $p_i(U,r)$ --- многочлен степени $i$ по $U$. Вычисляем результант левых частей по $r$ в подстановке $X=(U^2+\sqrt{4+U^4})^2$, что соответствует замене
\begin{equation}\label{eq4_55}
X=(\ld^{2/3}+\sqrt{4+\ld^{4/3}})^2.
\end{equation}
Получим уравнение
$$
(X-8)^2 (X^2-16)^4 (X+4)^2 P_1(X) P_2(X)=0,
$$
где
\begin{eqnarray}
& & P_1= X^4-24 X^3+720 X^2-2048 X-3072, \nonumber\\
& & P_2= 4 X^{15} - 293 X^{14} + 7864 X^{13} - 70320 X^{12} \nonumber \\
& & \phantom{P_2=}- 831232 X^{11} + 26316032 X^{10} - 263235584 X^9  + 1223192576 X^8 \nonumber \\
& & \phantom{P_2=}- 2241200128 X^7 + 323747840 X^6 + 1465909248 X^5 - 16521363456 X^4 \nonumber \\ & & \phantom{P_2=}- 25736249344 X^3 - 74155294720 X^2 - 75161927680 X -17179869184.\nonumber
\end{eqnarray}

При вытекающем из \eqref{eq4_55} очевидном условии $X\gs 4$ уравнение $P_2(X)=0$ имеет единственный вещественный корень $X\approx 11.2707$, при котором $\ld \approx 1.1268$ и единственный общий вещественный корень уравнений \eqref{eq4_54} $r\approx 0.002089 \in (0,\ld)$, то есть не дает точек из $\delta_{2,3}$. Если же предположить, что получена точка на $\delta_1$, то необходимо выбрать в уравнениях \eqref{eq4_53} знак $d<0$, но тогда найденные значения $(r,\ld)$ этим уравнениям не удовлетворяют. Итак, все корни $P_2(X)$ --- посторонние. Условие $X=8$ приводит к разделяющему значению $\ld_*=1/2^{3/4}$, при котором происходит слияние обеих точек возврата и точки самопересечения кривой $\delta_2$. Уравнение $P_1(X)=0$ имеет единственный допустимый корень $X\approx 4.3418$, при котором $\ld \approx 0,02349$ и единственный общий вещественный корень уравнений \eqref{eq4_54} $r\approx 1.47328 \in (\ld,+\infty)$. Поэтому такое значение $\ld$ отвечает искомому случаю $\vpi_{24} \in \delta_3$. Попутно доказано, что точка возврата $\vpi_{24}$ не может попасть на другую ветвь $\delta_3$. Найденное разделяющее значение $\ld$ обозначим через
$\ld_1$ и выпишем точно
\begin{equation}\label{eq4_56}
\begin{array}{l}
\ld_1=\ds{\left(-\frac{7}{6}-\frac{1}{12} p_2 +\frac{1}{2} \sqrt{-\frac{205}{9}+\frac{3823}{4} p_1^{-1/3}-\frac{1}{4} p_1^{1/3}+\frac{16393}{9 p_2}}\right)^{3/4}}\approx 0.0235,\\
p_1=1045767+183872 \sqrt{34}, \qquad p_2 =\sqrt{-410-34407 p_1^{-1/3}+9 p_1^{1/3}}.
\end{array}
\end{equation}
При $\ld>\ld_1$ кривые $\delta_2, \delta_3$ более общих точек не имеют (фрагмент показан на рис.~\ref{fig_sm2}). В частности, исчезает область (суперпозиция ``угла'' $\delta_3$ и ``хвоста'' кривой $\delta_2$), в которой $\iso=K^3$.

\begin{figure}[ht]
\centering
\includegraphics[width=0.5\textwidth,keepaspectratio]{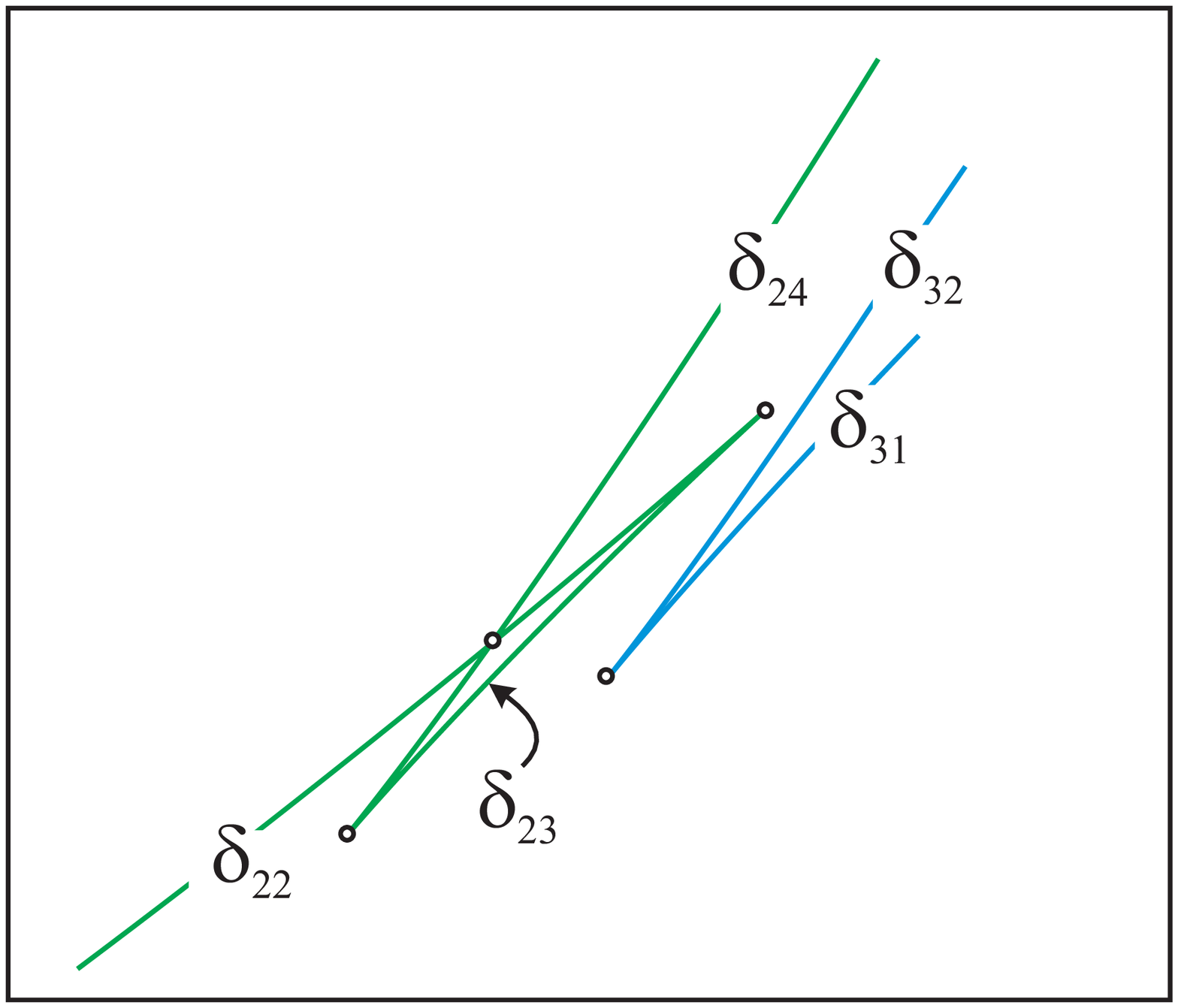}
\caption{Фрагмент диаграммы $\smale$ при $\ld>\ld_1$}\label{fig_sm2}
\end{figure}

Дальнейшие перестройки диаграммы Смейла связаны только с эволюцией кривой $\delta_2$. При этом вырожденные критические точки ранга $0$, порождающие разделяющие кривые $\vpi_{21},\vpi_{22}$, собственно на диаграмму не влияют, так как лежат на ее гладких участках. При переходе через $\ld_*=1/2^{3/4}$ меняются местами ``кончики хвоста'' (см. рис.~\ref{fig_sm3}), что легко заметить, если провести на рис.~\ref{fig_RazdCrit1} горизонтальную прямую ($\ld=\cons$) и двигать ее вверх.

\begin{figure}[ht]
\centering
\includegraphics[width=\textwidth,keepaspectratio]{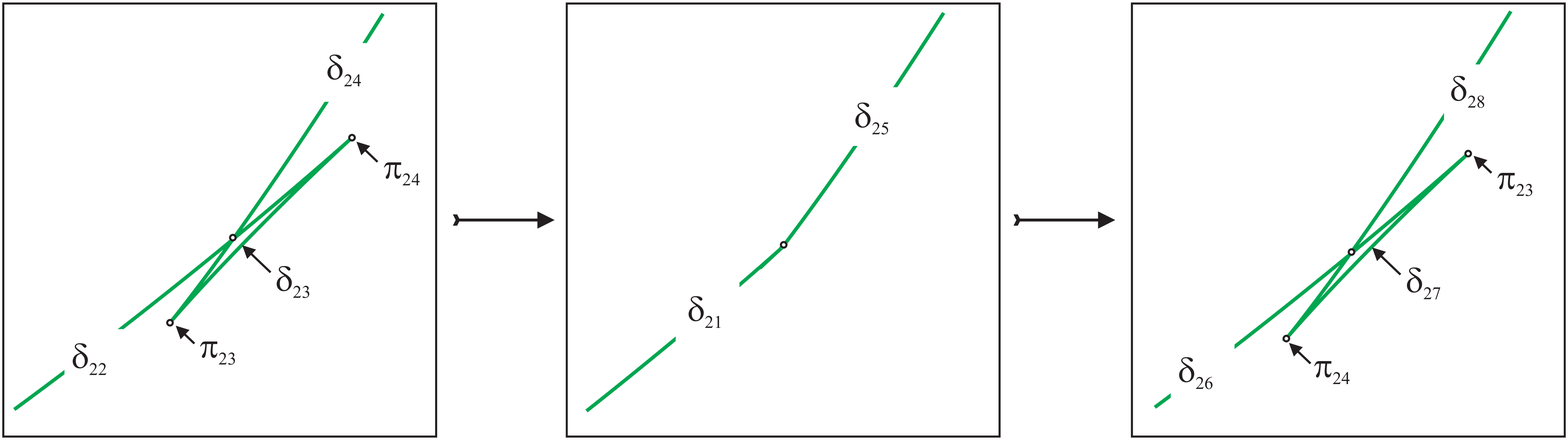}
\caption{Переход через $\ld_*$}\label{fig_sm3}
\end{figure}

Изучим возможность попадания точки возврата $\vpi_{23}$ на другую ветвь кривой $\delta_2$. Для этого случая из \eqref{eq4_37} запишем систему
\begin{equation}\notag
\begin{array}{c}
  \ell(r,\ld)=\ell(r_0,\ld), \quad h(r,\ld)=h(r_0,\ld), \\
  r_0=\ds{\frac{x^4-4}{2x^3}}, \quad \ld=\ds{\frac{3x^4-4}{2x^3}},\quad
  x^4-2 r x^3-4 \ne 0, \quad  x \in \left(\sqrt[4]{4/3},\sqrt{2}\right].
\end{array}
\end{equation}

\noindent Исключая из нее $r_0,r,\ld$, придем к уравнению для $X=x^4$
$$
(X-2)^2 Q_1(X) Q_2(X)=0,
$$
где
\begin{eqnarray}
& & \;Q_1= 3X^4+32 X^3-180 X^2+96 X-64, \nonumber\\
& & \begin{array}{l} Q_2= 6903 X^{11} -153216 X^{10} +1489200 X^9  - 9324352 X^8  \\ \phantom{P_2=}
+44169408 X^7 -160186880 X^6 + 425104384 X^5 - 806682624 X^4 \\ \phantom{P_2=}
+1108361216 X^3 - 1120534528 X^2 +784072704 X - 285212672. \end{array}\nonumber
\end{eqnarray}
Уравнение $Q_2(X)=0$ в промежутке $X\in (4/3,4]$ корней не имеет, а уравнение $Q_1(X)=0$ имеет в этом промежутке ровно один корень:
$$
X=\frac{1}{3} (\sqrt{308 + \frac{435}{q_1} + 3 q_1 + \frac{3400}{q_2}} -
     q_2-8), q_1=(2951 - 408 \sqrt{34})^{1/3}, q_2=(154 - \frac{435}{q_1} - 3 q_1)^{1/2}
$$
Соответственно, в терминах $\ld$ имеем уравнение
\begin{equation}\notag
    64 \ld^{16}+2784 \ld^{12}-274803 \ld^8+15476896 \ld^4-45349632 =0,
\end{equation}
с единственным вещественным решением $\ld_2 \approx 1.32631$. Его точное значение в радикалах
\begin{equation}\label{eq4_57}
\begin{array}{l}
  \ld_2 = \left(\ds{-\frac{87}{8}  - \frac{1}{16}p_2+\frac{1}{2} p_3}
  \right)^{1/4}, \\
  p_1=10467417865895 + 1809781698048 \sqrt{34}, \\
  p_2=\sqrt{213478 - 1093516839 p_1^{-1/3} + 9 p_1^{1/3}},\\
  p_3=\ds{\sqrt{\frac{106739}{16} + \frac{1093516839}{64} p_1^{-1/3} - \frac{9 }{64}p_1^{1/3} + \frac{88449457}{16 p_2}}}.
\end{array}
\end{equation}

Остальные изменения связаны с осью $Oh$ \cite{RyabDis}. Константа $\ell$ обращается в нуль не только в положениях равновесия тела ($r=0$), но и при вытекающем из \eqref{eq4_11} условии \eqref{eq4_43}. Соответствующая кривая, ранее обозначенная через $\ell_0$, показана на рис.~\ref{fig_RazdCrit0}. Минимум $\ld$ на этой кривой равен $\ld^*=2\sqrt{2}/3^{3/4}$. Поэтому кроме всегда существующих точек $h=\pm 1, \ell=0$ диаграмма имеет еще две точки пересечения с осью $Oh$ при $\ld^*<\ld<\sqrt{2}$ и одну точку при $\ld>\sqrt{2}$. Весь этот набор точек на оси $Oh$ перестраивается в момент совпадения двух из них. Легко посчитать, что это происходит при $\ld=2 \sqrt{\sqrt{2}-1}\approx 1.2871885058111654 \in (\ld^*,\sqrt{2})$. Таким образом, имеем типы диаграмм, существенные фрагменты которых показаны на рис.~\ref{fig_fig7}: $a)$~$\ld_*<\ld<\ld^*$; $b)$~$\ld^*<\ld<2 \sqrt{\sqrt{2}-1}$; $c)$~$2 \sqrt{\sqrt{2}-1}<\ld<\ld_2$; $d)$~$\ld_2<\ld<\sqrt{2}$; $e)$~$\ld>\sqrt{2}$. С учетом вычисленных выше индексов Морса имеем указанную на этом же рисунке расстановку изоэнергетических многообразий.

\begin{figure}[!ht]
\centering
\includegraphics[width=\textwidth,keepaspectratio]{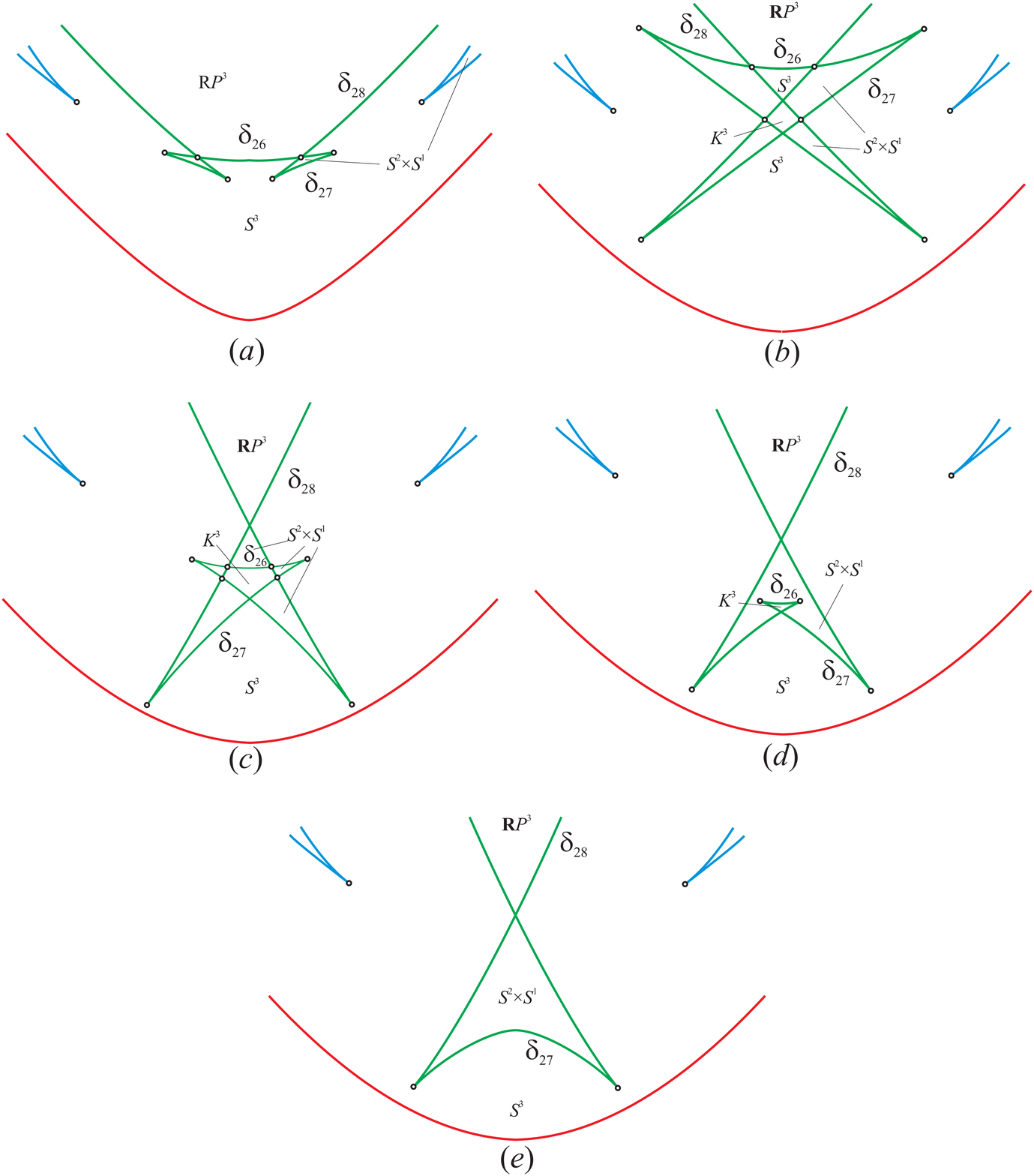}
\caption{Перестройки $\smale$ на оси $Oh$.}\label{fig_fig7}
\end{figure}

\FloatBarrier

Распространим на все фазовое пространство симметрию \eqref{eq4_17}
\begin{equation}\label{eq4_58}
    \bio: (\omega_1,\omega_2,\alpha_3) \mapsto (-\omega_1,-\omega_2,-\alpha_3).
\end{equation}
Она устанавливает изоморфизм фазовых потоков на $\iso$ и $\isom$.
Напомним, что в соответствии с договоренностью \eqref{eq4_19}, в пространстве $\mwide{\bR^2_{(\ell,h)}}=\bR^3_{(\ell,h,\ld)}$ возникает расширенная диаграмма Смейла
$$
\mwide{\smale}=\bigcup_\ld \smale(\ld){\times}\{\ld\}.
$$
Она делит $\bR^3_{(\ell,h,\ld)}$ на открытые связные компоненты, которые принято называть {\it камерами}. В силу симметрии $\bio$ объявим одной камерой также и объединение двух компонент, симметричных относительно плоскости $\ell=0$. Получим следующее утверждение.
\begin{theorem}\label{th7}
В случае Ковалевской\,--\,Яхья имеется семь структурно устойчивых по параметру $\ld$ диаграмм Смей\-ла $\smale(\ld)$. Разделяющими значениями параметра $\ld$ служат
\begin{equation}\notag
0,\quad \ld_1, \quad \ld_*=1/2^{3/4}, \quad \ld^*=2 \sqrt{2}/3^{3/4}, \quad 2\sqrt{\sqrt{2}-1}, \quad \ld_2, \quad \sqrt{2},
\end{equation}
где $\ld_1,\ld_2$ определены равенствами {\rm \eqref{eq4_56},\eqref{eq4_57}}. Расширенная диаграмма $\mwide{\smale}$ делит пространство $\bR^3{(\ell,h,\ld)}$ на восемь камер
$\mtA,\ldots, \mtH$ с непустыми многообразиями $\iso(\ld)$. Соответствующая информация представлена на рис.~{\rm \ref{fig_region_smale_ah})} и в табл.~{\rm \ref{table1}}.
\begin{center}
\small
\begin{tabular}{|c| l| c| c|}
\multicolumn{4}{r}{\fts{Таблица \myt\label{table1}}}\\
\hline
\begin{tabular}{c}\fts{Код}\\[-3pt]\fts{камеры}\end{tabular} &\begin{tabular}{c}\fts{Время}\\[-3pt]\fts{жизни по $\ld$}\end{tabular}
&\begin{tabular}{c}\fts{Компонент}\\[-3pt]\fts{в камере}\end{tabular}&\begin{tabular}{c}$\iso$\end{tabular} \\
\hline
$\mtA$ & $\ld\in[0,+\infty)$ & $1$ & $S^3$\\
\hline
$\mtB$ & $\ld\in[0,+\infty)$ & $2$ & $S^2{\times}S^1$\\
\hline
$\mtC$ & $\ld\in(0,\ld_*)$ & $2$ & $S^2{\times}S^1$\\
\hline
$\mtD$ & $\ld\in[0,\ld_1)$ & $2$ & $K^3$\\
\hline
$\mtE$ & $\ld\in[0,+\infty)$ & $1$ & ${\bR P^3}$\\
\hline
$\mtF$ & $\ld\in(\ld_*,\ld_2)$ & $2$ & $S^2{\times}S^1$\\
\hline
$\mtG$ & $\ld\in(\ld^*,\sqrt{2})$ & $1$ & $K^3$\\
\hline
$\mtH$ & $\ld\in(\ld^*,+\infty)$ & $2$ & $S^2{\times}S^1$\\
\hline
\end{tabular}
\end{center}

\normalsize

\end{theorem}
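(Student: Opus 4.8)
\emph{Proof plan.} The plan is to reduce the classification of $\iso(\ld)$ to the standard topological picture for a natural mechanical system with an $SO(3)$-symmetry, and then to follow that picture as the gyrostatic parameter $\ld$ varies. Fix $\ld>0$ (the case $\ld=0$ is known, and $\ld<0$ reduces to $\ld>0$). First I would note that whenever $(\ell,h)\notin\smale(\ld)$ the number $h$ is a regular value of the restricted Hamiltonian $H_{\ell,\ld}$, so $\iso(\ld)$ is a smooth closed $3$-manifold (or empty, for $h<\min H_{\ell,\ld}$); by the Smale construction \cite{Smale} (see also \cite{BolFom}) it is recovered from the sublevel subsurface $\Omega_{\ell,h}=\{U_{\ell,\ld}\ls h\}\subset S^2$ --- a planar surface, i.e.\ a disjoint union of discs with holes --- by the standard $S^1$-fibered collapse along its boundary circles, with the appropriate correction on the section $\ell=0$ (where $U_{\ell,\ld}$ has critical points at the poles). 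Consequently the diffeomorphism type of $\iso(\ld)$ is constant on each connected component of $\bR^2(\ell,h)\setminus\smale(\ld)$ and is determined by the number of components of $\Omega_{\ell,h}$ and the number of holes in each, i.e.\ by the Morse data of $U_{\ell,\ld}$ on $S^2$. So the task splits into (i) describing $\smale(\ld)$ and its chambers for every $\ld>0$, (ii) reading off $\Omega_{\ell,h}$, chamber by chamber, from the Morse indices of $U_{\ell,\ld}$, and (iii) anchoring the answer by continuity to the known picture at $\ld=0$.

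For (i) I would use Propositions~\ref{propos1} and \ref{propos2}: the rank-$0$ critical set $\mct^0$ is the union of the three $r$-parametrized branches $\mct^0_1,\mct^0_2,\mct^0_3$, and --- as established in this section --- $\smale(\ld)$ is the union of their images $\delta_1(\ld),\delta_2(\ld),\delta_3(\ld)$ under $L\times H$, given by \eqref{eq4_11}, among whose distinguished loci lie the image of the endpoint $\vpi_0$ (the two half-lines $\ell=0,\,h=\pm1$) and the curve $\ell_0$ of \eqref{eq4_43}. By Theorems~\ref{th5} and \ref{th6} these curves carry the types listed there, and the only corank-$1$ singular points of the diagram are the cusps $\vpi_{21},\dots,\vpi_{24},\vpi_{31}$, located in Propositions~\ref{propos3}--\ref{propos8}. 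I would then follow, as $\ld$ runs over $(0,\infty)$, the mutual position of $\delta_1,\delta_2,\delta_3$, of their cusps, of $\ell_0$ and of $\{\ell=0,\,h=\pm1\}$: the combinatorial type of the picture changes only when one of these incidences changes, and by the transversality computations of Propositions~\ref{propos3}--\ref{propos8} together with the eliminations \eqref{eq4_53}--\eqref{eq4_57} (and the analogous resultants for $\vpi_{23}$, for the passage of $\{\ell=0,\,h=\pm1\}$ across the rest of the diagram, and for $\ell_0$) each such resultant has exactly one admissible real root whose associated $r$ lies in the correct interval of \eqref{eq4_10}. This produces exactly the values $0,\;\ld_1,\;\ld_*=1/2^{3/4},\;\ld^*=2\sqrt2/3^{3/4},\;2\sqrt{\sqrt2-1},\;\ld_2,\;\sqrt2$ --- with $\ld_1,\ld_2$ obtained by solving the quartic factors $P_1$ and $Q_1$, yielding \eqref{eq4_56},\eqref{eq4_57} --- and the seven qualitative pictures of $\smale(\ld)$ of Figures~\ref{fig_sm1},\ref{fig_sm3},\ref{fig_sm2},\ref{fig_fig7} (cf.\ also \cite{Ryab5,Ryab7,RyabDis}).

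For (ii) and (iii) I would start at $\ld=0$, where the chambers of Figure~\ref{fig_smalekowa} carry the four manifolds \eqref{eq4_51} ($S^3$, $K^3=(S^2{\times}S^1)\#(S^2{\times}S^1)$, $S^2{\times}S^1$, $\bR P^3$), recalled there from the Morse indices $0,1,2$ of $U_{\ell,0}$ (whose Hessian spectrum is given by Lemma~\ref{lem2}). For $\ld>0$, crossing a curve of $\smale(\ld)$ modifies the planar surface $\Omega_{\ell,h}$ by attaching a handle of index equal to the Morse index of $U_{\ell,\ld}$ at the corresponding critical point; by Theorem~\ref{th6} that index is $0$ on $\delta_1$, $1$ on $\delta_{23},\delta_{27},\delta_{31}$, and $2$ on the remaining $\delta_{2j},\delta_{32}$, so the crossing respectively creates a new disc component, attaches a band (either fusing two components or producing a new hole), or caps a hole --- in agreement with the ``center--center / center--saddle / saddle--saddle'' types of Theorem~\ref{th5}. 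Propagating the diffeomorphism type from the $\ld=0$ reference chambers through every chamber of $\mwide{\smale}\subset\bR^3(\ell,h,\ld)$ --- which has a fixed combinatorial type over each of the seven $\ld$-intervals, while $\iso(\ld)$ is locally constant off $\mwide{\smale}$ --- and reading off for each planar $\Omega_{\ell,h}$ the closed $3$-manifold obtained by the $S^1$-collapse ($S^3$ from a disc, $\bR P^3$ from a disc over $\ell=0$, $S^2{\times}S^1$ from an annulus, $K^3$ from a disc with two holes), I would obtain exactly the eight zones $\mtA,\dots,\mtH$ with the stated $\ld$-ranges and diffeomorphism types of $\iso(\ld)$, i.e.\ Table~\ref{table1}. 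Finally the symmetry $\bio$ of \eqref{eq4_58} induces $(\ell,h)\mapsto(-\ell,h)$, hence $\mwide{\smale}$ is symmetric in $\ell$ and it suffices to carry all of this out for $\ell\gs 0$.

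The main obstacle is step (i): showing that the list of seven bifurcation values of $\ld$ is complete demands a genuinely global monotonicity/transversality analysis of the three moving curves and their cusps, and in particular isolating the relevant roots of the high-degree eliminants (the degree-$4$ and degree-$15$ factors $P_1,P_2$ of \eqref{eq4_54}--\eqref{eq4_55}, the degree-$4$ and degree-$11$ factors $Q_1,Q_2$ associated with $\vpi_{23}$, and their analogues) while checking that the corresponding $r$ fall in the admissible windows of \eqref{eq4_10}; the radical formulas \eqref{eq4_56},\eqref{eq4_57} then merely solve the quartic factors. The most delicate bookkeeping is the chamber-adjacency structure across each curve and across each $\ld$-metamorphosis --- i.e.\ the correctness of Figures~\ref{fig_sm1}--\ref{fig_fig7} --- for which a convenient internal check is that in every chamber the alternating sum of the Morse indices of $U_{\ell,\ld}$ on $S^2$ equals $\chi(S^2)=2$, together with continuity of the whole picture at the ends of the $\ld$-range, in particular agreement with the $\ld=0$ diagram of Figure~\ref{fig_smalekowa}.
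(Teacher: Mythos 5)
Your plan reproduces the paper's own argument: the paper likewise obtains Theorem~\ref{th7} from the evolution of the curves $\delta_1,\delta_2,\delta_3$ (Propositions~\ref{propos1}--\ref{propos8}, the eliminations \eqref{eq4_53}--\eqref{eq4_57} isolating $\ld_1,\ld_2$ and the remaining separating values), combined with the Morse data of the effective potential $U_{\ell,\ld}$ on $S^2$ (Lemma~\ref{lem2}, Theorem~\ref{th6}) and the Smale-type reconstruction of $\iso$ from the regions of possible motion, anchored by continuity to the classical picture at $\ld=0$. So the proposal is correct in approach and essentially identical to the paper's proof, differing only in presentation (e.g.\ the Euler-characteristic consistency check).
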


\begin{figure}[ht]
\centering
\includegraphics[width=0.7\textwidth,keepaspectratio]{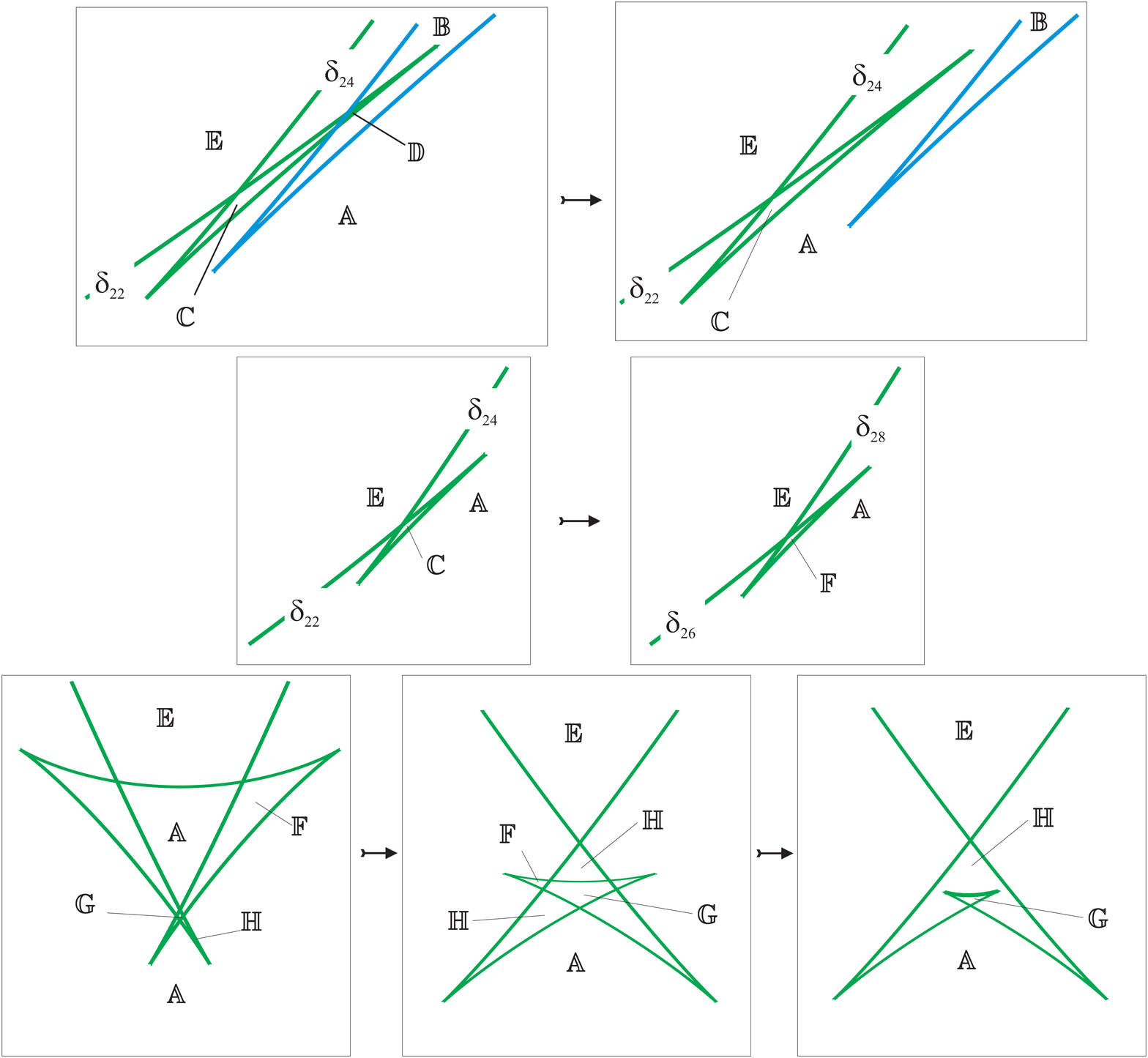}\\
\caption{Камеры диаграмм Смейла}\label{fig_region_smale_ah}
\end{figure}

\FloatBarrier

\vspace{5mm}
?????

Перейдем к изучению критических точек общего положения, имеющих в системе с двумя степенями свободы ранг 1.

\clearpage

\section{Классификация критических точек ранга $1$}\label{sec5}
\subsection{Формулы для вычисления типа}
Изучим тип критических точек ранга 1 в первой критической подсистеме, то есть точек множества $\mm\backslash \mct^0$.
\begin{proposition}[П.Е.\,Рябов \cite{RyabUdgu}]\label{propos11}
Тип критических точек ранга $1$ в первой критической подсистеме полностью определен собственными числами симплектического оператора $\A_{F_1}$, где
\begin{equation}\label{eq5_1}
    F_1=K-2\nu H,
\end{equation}
а неопределенный множитель в соответствующей точке равен $\nu = \po ^2$. Характеристический многочлен оператора $\A_F$ имеет вид
\begin{equation}\notag
\chi_{F_1}(\mu)=\mu^2 - 4 C_1,  \qquad C_1=\left[\frac{3}{2}s- (h - \frac{\ld^2}{2})\right] \left[2 s^2 - 2 (h + \frac{\ld^2}{2}) s+1 \right].
\end{equation}
Точки ранга $1$ вырождены тогда и только тогда, когда
\begin{equation}\label{eq5_2}
\left[\frac{3}{2}s - (h - \frac{\ld^2}{2})\right] \left[2 s^2 - 2 (h + \frac{\ld^2}{2}) s+1 \right]=0,
\end{equation}
и имеют тип ``центр'' при $C_1<0$ и тип ``седло'' при $C_1>0$.
\end{proposition}
\begin{proof}
В координатах $(\bo, \ba)$ порожденное функцией $F_1$ гамильтоново поле в точках \eqref{eq2_25} имеет вид
\begin{equation}\notag
\sgrad F_1 = 2(\po ^2-\nu) \left\{0,0, \sqrt{R(r)},r \sqrt{R(r)},-\frac{1}{2} R'(r),-\po   \right\}
\end{equation}
и обращается в нуль при $\nu =\po ^2$. Отметим, что при других значениях $\nu$ условие $\sgrad F_1=0$ влечет одновременное обращение в нуль $R(r),R'(r)$, что приводит к неподвижным точкам -- критическим точкам ранга 0.

Напомним, что согласно \eqref{eq3_13}
$$
    \ds s= h -\frac{\ld^2}{2}-\po ^2.
$$
Вычисление многочлена $\chi_{F_1}(\mu)$ в силу уравнений \eqref{eq2_25} многообразия $\mm$ труда не составляет. Таким образом, если $C_1 \ne 0$, то критические точки ранга 1 невырождены. Пусть $C_1=0$. Допустим, что в такой точке существует другая комбинация $G=\nu_1 K - 2\nu_2 H$, удовлетворяющая равенству $\sgrad G=0$. В силу предположения $\rk\{K,H\}|_{P_\ell^4}=1$ имеем пропорциональность неопределенных множителей в комбинациях
$$
dF=dK-2\nu dH |_{P_\ell^4} =0,\qquad dG=\nu_1 dK - 2\nu_2 dH |_{P_\ell^4}=0 \quad \Rightarrow \quad \nu_2 = \nu_1 \nu.
$$
Отсюда следует, что $\nu_1 \ne 0$, то есть характеристические числа оператора $\A_G$ пропорциональны (нулевым) характеристическим числам оператора $\A_{F_1}$ и потому также равны нулю. Следовательно, в точках \eqref{eq5_2} не существует интеграла, порождающего регулярный элемент в алгебре симплектических операторов.
\end{proof}

Отметим, что доказательство предложено потому, что здесь нельзя было воспользоваться непосредственно результатами работы \cite{KhND07}, где для построения системы $\mm$ использовалась функция \eqref{eq3_18}. Для нее корни характеристического многочлена обращаются в нуль также и на множестве $s=0$, что влечет равенство $\ell=0$. Появление такой ``посторонней'' особенности связано с тем, что в \cite{KhND07} фигурировал интеграл Реймана~--~Семенова-Тян-Шанского, вырождающийся в нашем случае в $L^2$.

\begin{proposition}[П.Е.\,Рябов \cite{RyabUdgu}]\label{propos12}
Тип критических точек ранга $1$ во второй и третьей критических подсистемах полностью определен собственными числами симплектического оператора $\A_{F_2}$, где
\begin{equation}\label{eq5_3}
    F_2=K + (2\ld^2-\frac{1}{s}) H,
\end{equation}
а $s$ -- значение, определяющее точку \eqref{eq2_28}. Характеристический многочлен оператора $\A_{F_2}$ имеет вид
\begin{equation}\notag
\begin{array}{l}
\chi_{F_2}(\mu)=\mu^2 -  C_2,  \\
\ds{C_2=-\frac{1}{s^3}\left(8\ld^2 s^3-1 \right) \left[2 s^2 - 2 (h + \frac{\ld^2}{2}) s+1 \right]=}
\\
\ds{\phantom{C_2} = \frac{2}{s^2}( 8 \ld^2 s^3-1) (2 \ld^2 s^2-s +2 \ell^2)}.
\end{array}
\end{equation}
Точки ранга $1$ в $\mn$ все невырождены и имеют тип ``центр''. Точки ранга $1$ в $\mo$ вырождены тогда и только тогда, когда
\begin{equation}\notag
\left(8\ld^2s^3-1\right) \left[2 s^2 - 2 (h + \frac{\ld^2}{2}) s+1 \right]=0,
\end{equation}
и имеют тип ``центр'' при $C_2<0$ и тип ``седло'' при $C_2>0$.
\end{proposition}

\begin{proof}
Полагая $F_2=K+\nu H$, найдем из \eqref{eq2_27}, \eqref{eq2_28}:
\begin{equation}\notag
\begin{array}{l}
  \sgrad F_2 = ( 2 \ld^2 s -1 - \nu s) \ds{ \left\{\frac{\Fun \vk  \rho  X}{\sqrt{s}}, -\frac{\rho \, n_1}{2 \vk  s}, \frac{2 \Fun \vk  Y}{\sqrt{s}}, \frac{ \Fun (n_1 + 2 \vk ^3 X Y)}{\vk  \sqrt{s}},\frac{ n_2}{s^2},  \frac{ \Fun (\ld  \rho  s X + \ell  Y)}{\sqrt{s} \vk } \right\}  },
\end{array}
\end{equation}
где
$$
n_1= -\ell  \rho  X + \ld  s Y + 2 \vk ^3 X Y, \qquad n_2=\vk  X + s X Y (3 \ell  \rho  - 4 \vk ^3 Y) + \ld  s^2 (2 - 3 Y^2).
$$
В предположении $2 \ld^2 s -1 - \nu s \ne 0$ этот вектор обращается в нуль лишь при условии
$$
\begin{array}{l}
64 \ell^6 \ld ^2 s^4 + 4 \ell^4 s^2 (-1 + 20 \ld ^2 s - 4 \ld ^4 s^2 + 48 \ld ^4 s^4) \\
\qquad - 4 \ell^2 s (1 - 8 \ld ^2 s + 4 \ld ^4 s^2 + 20 \ld ^2 s^3 - 4 \ld ^4 s^4 + 8 \ld ^6 s^5 - 48 \ld ^6 s^7) \\
\qquad - (1-4 s^2) (1 - 2 \ld ^2 s + 4 \ld ^4 s^4)^2=0.
\end{array}
$$
Его можно получить следующим образом. Заметим, что возможность $X=0, Y=\pm 1$ не приводит к нулевому вектору $\sgrad F_2$ даже в предположении $\rho=0$. Поэтому должно быть $\Fun=0$. Выразим отсюда $X$, подставим в выражения $n_1,n_2$ и найдем результант по $Y$. Получим искомое соотношение на $\ell,s$. Поскольку найденное условие не обращается в тождество на поверхностях $\wsi_{2,3}$, то оно определяет одномерное подмножество. С другой стороны, оно тождественно удовлетворено соответствующими выражениями $\ell$ и $s$ через $r,\ld$ из \eqref{eq4_11}, \eqref{eq4_15} и, следовательно, соответствует кривым $\delta_i$ $(i=1,2,3)$, то есть выполняется в критических точках ранга 0. Таким образом, в критических точках ранга 1 должно быть $2 \ld^2 s -1 - \nu s=0$ и функция \eqref{eq5_3} является искомым интегралом, определяющим тип этих точек.

Прямое вычисление характеристического многочлена приводит к требуемому выражению $\chi_{F_2}(\mu)$. Очевидно, при $s<0$ величина $C_2$ всегда отрицательна, а при $s>0$ может иметь разные знаки, что и определяет соответствующий тип особой точки. Тот факт, что при $C_2=0$ критические точки вырождены (то есть нельзя найти другого элемента в алгебре симплектических операторов, имеющего ненулевые собственные значения), доказывается так же, как в предложении~\ref{propos11} (фактически следует из того, что коэффициент при $K$ удалось выбрать равным 1).
\end{proof}

Мы видим, что для всех критических точек (как ранга 0, так и ранга 1) характеристический многочлен определяющего симплектического оператора зависит только от постоянных первых интегралов. Поэтому имеет место следующее важное свойство.

\begin{theorem}\label{th8}
Все критические точки одного ранга, лежащие на одном совместном уровне первых интегралов $L,H,K$, имеют один и тот же тип.
\end{theorem}

Отсюда, в частности, сразу же следует, что молекула, обозначенная в работах \cite{Gash5,GashDis} как $W_7$, в данной задаче невозможна. Ниже мы покажем, что, даже сделав скидку на погрешности изображения (атомы разных уровней интегралов на одной высоте на рисунке), такая молекула не реализуется по другой, уже существенной, причине.

Введем некоторую терминологию, связанную с детальным исследованием критических подсистем.

Известно \cite{BolFom}, что геометрическим особенностям бифуркационных диаграмм интегрируемых систем с $n$ степенями свободы отвечают критические точки ранга $m < n-1$ и вырожденные критические точки ранга $n-1$. Наличие полного взаимно однозначного соответствия здесь в общем случае не доказано. Дополнительные особенности возникают в семействах гамильтоновых систем, полученных процедурой факторизации из одной системы с симметрией. Примером этого является и рассматриваемая здесь задача. Поверхность $\wsa$ имеет особенность типа самопересечения, проявляющуюся в любых плоских ее сечениях, кроме сечений $\ell=\cons$ (кривая $s=0$ в представлении \eqref{eq3_7} и особая парабола в представлении \eqref{eq3_10}). В самих сечениях фиксированной постоянной площадей оказывается особым сечение $\ell=0$ опять же в силу наличия особой ``кратной'' параболы.

\begin{defin}\label{def9}
Невырожденную критическую точку ранга $1$ назовем кратной, если в окре\-ст\-но\-сти ее образа при отображении момента $J=L{\times}H{\times}K$ бифуркационная диаграмма $\Sigma$ этого отображения не является гладкой двумерной поверхностью.
\end{defin}

Здесь уместно напомнить, что рангом критической точки мы называем ее ранг в приведенной системе с двумя степенями свободы, что на единицу меньше ее ранга по отношению к отображению $J$.

Поскольку окрестность невырожденной критической точки ранга 1 в $\iso$ устроена как один из стандартных атомов и структурно не изменяется при малых изменениях $h$, то никаким локальным анализом особенности в приведенной системе с двумя степенями свободы на $\mPel$ кратную точку выявить нельзя. Можно лишь констатировать, что некоторой дуге бифуркационной диаграммы системы на $\mPel$ отвечает несколько критических окружностей. В этом случае распад кратной точки может произойти при возмущении $\ell$, то есть при переходе к другой системе с двумя степенями свободы. В общих случаях интегрируемости задач динамики твердого тела особым в этом смысле всегда является нулевой уровень циклического интеграла~$L$.

\begin{defin}\label{def10}
Назовем ключевым множеством в $\mP^5$ объединение всех критических точек ранга $0$, вырожденных и кратных критических точек ранга $1$. Обозначим ключевое множество через $\mK$. Пересечение $\mK_j =\mK \cap \mi_j$ назовем \textit{ключевым множеством} критической подсистемы $\mi_j$.
\end{defin}

\begin{defin}\label{def11}
Пусть $\Phi, \Psi$ -- первые интегралы \emph{(}возможно, частные\emph{)} критической подсистемы $\mi_j$, независимые почти всюду. Назовем $(\Phi,\Psi)$-диаграммой подсистемы $\mi_j$ образ ключевого множества $\mK_j$ при отображении $\Phi{\times}\Psi: \mi_j \to \bR^2$.
\end{defin}

\begin{defin}\label{def12}
При наличии некоторого интегрального отображения $\mi_j \to \bR^k$ его образ назовем допустимой областью и обозначим через $\mD_j$. Точки допустимого множества будем называть допустимыми.
\end{defin}

Иначе говоря, значения интегралов допустимы, если им соответствуют некоторые движения (интегральное многообразие не пусто). Из контекста всегда будет ясно, о каком отображении идет речь и в каком пространстве рассматриваются допустимые точки.

В работах \cite{mtt40,RyabHarlUdgu2} изучались $(S,H)$- и $(S,L)$-диаграммы критических подсистем. Здесь $S$~-- частный интеграл, возникающий из представления Лакса, постоянная которого является параметром на поверхностях в уравнениях \eqref{eq3_7}, \eqref{eq3_8}. Он же естественным образом возникает при введении функции с неопределенными множителями Лагранжа для описания критических подсистем как коэффициент при интеграле $K$ -- по формулам \eqref{eq3_18}, \eqref{eq3_19}, \eqref{eq3_20}. Для классификации бифуркаций, происходящих при пересечении точкой $(\ell,h,k)$ поверхностей $\wsi_j$ в $\bbI$, удобно выбрать такую плоскость констант функционально независимых интегралов критических подсистем, при котором допустимые точки этой плоскости находятся во взаимно однозначном соответствии с точками соответствующей бифуркационной поверхности (тогда, например, становится однозначной информация о количестве точек ранга 0 или окружностей ранга 1 в прообразе).

Рассмотрим также задачу классификации бифуркационных диаграмм $\Sigma_{HK}(\ell,\ld)$ отображений $H{\times}K$ приведенных систем на $\mPel$ \cite{RyabRCD,RyabHarlUdgu2} и бифуркационных диаграмм $\mSash$ отображений $L{\times}K$, ограниченных на изоэнергетические уровни $Q_h^4 = H^{-1}(h) \subset \mP^5$ \cite{mtt40}. Эти диаграммы представляют собой сечения плоскостями $\ell=\cons$ и, соответственно, $h=\cons$ бифуркационной диаграммы $\Sigma(\ld)$ общего интегрального отображения $J=L{\times}H{\times}K$. В силу наличия свободного физического параметра $\ld$ в цитированных работах решается вопрос о нахождении так называемого атласа -- разделяющего множества в соответствующей плоскости $(\ld,\ell)$ или $(\ld,h)$, при пересечении которого меняется строение таких сечений. Сформулируем общее утверждение, позволяющее в определенном смысле алгоритмизировать построение подобных атласов.

Пусть $F: \mP^5 \to \bR$ некоторый общий интеграл системы. Фиксируем $f\in\bR$ и рассмотрим отображение
\begin{equation}\notag
    J_{(F,f)} = J|_{F^{-1}(f)} : F^{-1}(f) \to \bR^2,
\end{equation}
где $\bR^2$ --- плоскость значений пары интегралов, дополняющих $F$ до полной инволютивной тройки почти всюду независимых интегралов. Пусть $\sigma_{(F,f)}(\ld)$ --- бифуркационная диаграмма отображения $J_{(F,f)}$. Ясно, что если пара интегралов $U,V$ дополняет $F$ до инволютивной тройки функционально независимых интегралов, то можно естественным образом отождествить $\sigma_{(F,f)}(\ld)$ с бифуркационной диаграммой $\Sigma_{UV}(f,\ld)$ ограничения отображения $U{\times}V$ на подмногообразие $F^{-1}(f)$. В частности, ниже будут рассмотрены варианты $F=L$, и тогда $\sigma_{(L,\ell)}(\ld)=\mSell$, и $F=H$, и в этом случае $\sigma_{(H,h)}(\ld)=\mSash$.

\begin{proposition}[М.П. Харламов \cite{Kh2011,KhVVMSH13}]\label{propos13}
Множество $\Theta_F$ в плоскости $(\ld,f)$, при переходе через которое меняется тип диаграммы $\sigma_{(F,f)}(\ld)$ \emph{(}в гладком смысле\emph{)}, состоит из пар $(\ld,f)$, где $f$ --- критическое значение ограничения функции $F$ на ключевое множество $\mK$ при заданном $\ld$.
\end{proposition}

При желании этому утверждению можно придать достаточную строгость, уточняя понятия однотипных диаграмм и критических значений функции на стратифицированном многообразии. В частности, заведомо предполагается, что интеграл $F$ выбран ``разумно'' и не имеет критических точек на регулярных уровнях отображения момента. Один из возможных подходов к системе соответствующих определений, использующий понятие оснащенных допустимых промежутков, приведен в работе \cite{RyabHarlUdgu2}. Практически же оно означает следующее. Для каждой критической подсистемы $\mi_j$ нужно рассмотреть ее $(G,F)$-диаграмму, где $G$ --- некоторый функционально независимый с $F$, возможно, частный, интеграл на $\mi_j$. Критические значения $F$ на $\mK$ соответствуют узловым точкам этой диаграммы (образам вырожденных критических точек ранга $0$), экстремальным значениям координаты $f$ на гладких участках -- образах множества вырожденных или кратных критических точек ранга $1$, а также всевозможным самопересечениям гладких участков диаграммы.

Множество $\Theta_F$ называют разделяющим множеством при классификации бифуркационных диаграмм $\sigma_{(F,f)}(\ld)$. Будем для краткости называть его $F$-атласом.


\subsection{Детализация. Первая критическая подсистема}

Для первой критической подсистемы ни одна из пар функционально независимых интегралов $(S,H)$ или $(S,L)$ не решает задачи обеспечения взаимно однозначного соответствия выбранной плоскости и поверхности $\wsa$. Действительно, на $(s,h)$-плоскости в {\it любую допустимую точку} с условием
$$
h-\frac{\ld^2}{2}-s>0
$$
отображаются {\it две} точки поверхности $\wsa$, а в $(s,\ell)$-представлении {\it весь допустимый сегмент} параболы
\begin{equation}\label{eq5_4}
k= 1+ (h -\frac{\ld^2}{2})^2, \qquad \ell=0, \qquad h \gs \frac{\ld^2}{2},
\end{equation}
отвечающей на поверхности $\wsa$ значению $s=0$, отображается в {\it одну} точку. Таким образом, соответствие либо двузначно на множестве полной меры, либо имеется ``плохая'' особенность~-- точка, в которую отображается бесконечно много точек поверхности. Заметим, что в силу наличия на поверхности $\wsa$ линии самопересечения задача взаимно однозначной параметризации и не может быть решена. Однако можно обеспечить параметризацию, в которой лишь каждой точке на линии самопересечения отвечала бы естественным образом пара точек на плоскости параметров. Для этого выберем на $\mm$ вместо $S$ частный интеграл
\begin{equation}\notag
P=\omega_1.
\end{equation}
Тогда формулы \eqref{eq2_30} дают взаимно однозначное соответствие допустимых областей на $(p,h)$-плоскости и поверхности $\wsa$, за одним лишь очевидным исключением -- во внутреннюю точку сегмента параболы \eqref{eq5_4}, то есть при $h>\ld^2/2$, отображаются две точки с противоположными знаками $p$. Далее мы рассмотрим диаграммы и ключевые множества подсистемы $\mm$ в терминах $(p,h)$, но, имея в виду сформулированную выше задачу классификацию бифуркационных диаграмм $\mSell$ и $\mSash$, мы также представим и описание $(S,H)$- и $(S,L)$-диаграмм для $\mm$.

Для вычисления топологических инвариантов таких, как графы Фоменко, в невырожденных критических точках ранга 1 нужно знать еще один показатель, а именно, индекс Морса\,--\,Ботта интеграла $K$. Он дает информацию о том, что происходит с критическими окружностями на изоэнергетической поверхности $\iso(\ld)$ при изменении значения дополнительного интеграла в сторону возрастания. В пространстве постоянных всех трех интегралов мы тем самым отслеживаем, какие явления происходят, когда прямая, параллельная $Ok$, протыкает соответствующую бифуркационную поверхность $\wsi_i$. Вид функций \eqref{eq5_1}, \eqref{eq5_3} позволяет явно провести соответствующие вычисления.

В точках $\mm$ найдем пару векторов, определяющих трансверсальную площадку к критической окружности (в критической точке ранга 1). Такая площадка может быть выбрана как ортогональная векторам $\grad \Gamma, \grad L, \grad H, \sgrad H$. Замечая, что на любой траектории \eqref{eq2_24}, \eqref{eq2_25} переменная $r$ осциллирует между корнями многочлена \eqref{eq2_25}, выберем на траектории точку $x_0$, в которой $R(r)=0$. Тогда касательная плоскость к трансверсальной площадке окажется натянутой на векторы
\begin{equation}\notag
\begin{array}{l}
  v_1 = \bigl(0,1,0,0,0,0\bigr), \qquad
  v_2=\bigl(\ld+r,0,-4\po ,2\po (\ld-r),0,2(h- \po ^2-\frac{r^2}{2})\bigr).
\end{array}
\end{equation}
Эти векторы получены из системы уравнений для $u=(u_1,\ldots,u_6)$
$$
u \cdot \grad \Gamma=0,\qquad u \cdot \grad L=0, \qquad u \cdot \grad H=0, \qquad u \cdot \sgrad H=0
$$
при условиях $(u_2,u_3)=(1,0)$  и   $(u_1,u_2)=(0,\ld+r)$.

В точке $x_0$ достигается условный экстремум функции $K$ на совместном уровне $\iso$ функций $\Gamma, L, H$ в $\mP^6$. В частности, $x_0$ есть критическая точка функции с неопределенными множителями Лагранжа
\begin{equation}\notag
K_1=K-2\nu H - \vk_1 L - \vk_2 \Gamma.
\end{equation}
Очевидно, часть этой функции, не содержащая функций Казимира $L,\Gamma$, совпадает с \eqref{eq5_1}, то есть $\nu=\po ^2$. Непосредственно вычисляется, что $\vk_1=4\po$, $\vk_2=1$.

Тип условного экстремума определяется ограничением второго дифференциала функции с множителями Лагранжа на касательное пространство к многообразию ограничений \cite{OptUpr}. Применительно к исследованию фазовой топологии гамильтоновых систем этот факт отметил и использовал А.А.\,Ошемков \cite{Oshem}.

Ограничение $d^2K_1$ на линейную оболочку векторов $v_1,v_2$ в силу выбора $R=0$ оказывается диагональной матрицей с элементами (собственными числами)
\begin{equation}\notag
\begin{array}{rcl}
  \mu_1 &=& 4[h-(\ld^2+\po ^2) + \ld r -\frac{1}{2} r^2] =2[ 2s -(\ld-r)^2], \\[2mm]
  \mu_2 &=& - 8(h-\frac{\ld^2}{2} -3\po ^2)[h-\frac{\ld^2}{2} -3\po ^2-\frac{1}{2}(\ld+r)^2]= \\[2mm]
  {}    &=& \ds{- 32(h-\frac{\ld^2}{2} -\frac{3}{2} s)[h-\frac{\ld^2}{2} -\frac{3}{2} s-(\ld+r)^2]}.
\end{array}
\end{equation}
В частности, произведение
\begin{equation}\notag
\begin{array}{rcl}
  \mu_1 \mu_2 &=& 16 (-2 h + \ld^2 + 6 \po ^2) [1 - (2 h - \ld^2 - 2 \po ^2) (\ld^2 + \po ^2)] =\\[2mm]
   {}         &=& -32 (2 h - \ld^2 - 3 s) [(2 h + \ld^2 - 2 s) s-1]  =\\[2mm]
   {}         &=& \ds{-64 \left[\frac{3}{2}s- (h - \frac{\ld^2}{2})\right] \left[2 s^2 - 2 (h + \frac{\ld^2}{2}) s+1 \right]}
\end{array}
\end{equation}
от $r$ не зависит и определяется расположением точки $(s,h)$ относительно кривых вырождения критических точек ранга 1, заданных уравнением \eqref{eq5_2}.
В силу этого, на невырожденных траекториях величины $\mu_1,\mu_2$, которые будем называть показателями Морса\,--\,Ботта, в ноль не обращаются, а значит, не меняют знака.

\begin{proposition}\label{propos14}
При возрастании интеграла $K$ на изоэнергетическом уровне $\iso$ имеем следующие бифуркации в точках критической подсистемы $\mm$ на невырожденных критических окружностях:

$1)$ для эллиптических траекторий $($тип ``центр''$)$ --- рождение тора при $\mu_1>0,\mu_2>0$ $($атом $A$ с ребром вверх$)$, исчезновение тора при $\mu_1 < 0,\mu_2<0$ $($атом $A$ с ребром вниз$);$

$2)$ для одной гиперболической траектории на критическом уровне $K$ при $\mu_1 > 0,\mu_2<0$ --- атом $B$ с ``внешним'' ребром $($``головой''$)$ вверх и парой ``внутренних'' ребер $($``ног''$)$ вниз, при $\mu_1 < 0,\mu_2 > 0$ --- атом $B$ с ``внешним'' ребром $($``головой''$)$ вниз и парой ``внутренних'' ребер $($``ног''$)$ вверх$;$

$3)$ для двух гиперболических траекторий на критическом уровне $K$ при совпадении знаков двух пар $(\mu_1,\mu_2)$ --- два атома $B$, у которых направление ``внешнего'' ребра определяется по тому же правилу $($обе ``головы'' вверх при $\mu_1 > 0,\mu_2<0$, обе ``головы'' вниз при ${\mu_1 < 0}, {\mu_2>0)};$

$4)$ для двух гиперболических траекторий на критическом уровне $K$ при разных сочетаниях знаков в парах $(\mu_1,\mu_2)$ -- два атома $A^*$.
\end{proposition}

\begin{proof}
Для эллиптических траекторий утверждение очевидно (функция $K$ имеет локальный минимум или максимум на трансверсальной площадке).

Можно показать, что для гиперболических траекторий вектор $v_1$ направлен во внешнюю часть ``восьмерки'': направление оси $O\omega_2$ отвечает за переход с критической поверхности к объемлющему ее тору. Такое понимание легко получить, анализируя, например, проекции интегральных многообразий на плоскость $O\omega_1 \omega_2$ (частично такое исследование выполнено в \cite{GashDis}). Видно, что, как и в классической задаче \cite{Appel, Ipat}, разрывов проекции никогда не происходит в направлении оси $O\omega_2$. Если теперь $\mu_1>0$, то это означает, что на трансверсальной площадке $K$ растет к ``внешней'' окружности и убывает к паре ``внутренних''.

Если на двух гиперболических окружностях наборы знаков в парах $(\mu_1,\mu_2)$ различны, то, предполагая наличие двух атомов $B$ с противоположными направлениями ``голов'', получим бифуркацию трех торов в три. Как отмечено в \cite{Gash2,Gash5}, количество торов на регулярном уровне может быть равным лишь 1, 2 или 4 (для всех камер в $\bbI\setminus \Sigma$ количество регулярных торов будет строго установлено ниже с применением только очевидных атомов типа $A$). Поэтому в данном случае мы имели бы бифуркацию четырех торов в четыре. Однако, как мы увидим ниже,  таких примыкающих друг к другу камер в данной задаче нет. Если предположить, что в рассматриваемой точке имеется атом $C_2$, то в аналитическом решении \eqref{eq2_34}~-- \eqref{eq2_35} существовала бы гетероклиническая траектория, которой также здесь нет. Значит, в такой точке имеется два атома $A^*$. \end{proof}

Критерий существования движений в системе $\mm$ очевиден (предложение 1 работы \cite{RyabHarlUdgu2}). Поскольку в $(p,h)$-плоскости нет ``двойных'' точек, мы легко дополним его информацией о количестве критических окружностей на соответствующем уровне интегралов.
\begin{proposition}\label{propos15}
При заданных $p,h$ вещественные решения \eqref{eq2_25} существуют тогда и только тогда, когда $R(r)\gs 0$ для некоторого $r\in \bR$. Если при этом все корни $R(r)$ простые \emph{(}то есть на заданном интегральном уровне нет критических точек ранга $0$\emph{)}, то количество периодических решений равно количеству промежутков неотрицательности $R(r)$.
\end{proposition}

\begin{remark}\label{rem6}
Напомним еще раз, что при переходе в пространство $\bbI$ в силу формул \eqref{eq2_30} пара точек из прообраза параболы \eqref{eq5_4} с противоположными знаками $p$ переходит в одну точку. В работе {\rm \cite{Gash5}} этот факт отмечен как особый случай.
\end{remark}

\begin{remark}\label{rem7}
Очевидно, предложение $\ref{propos15}$ допускает очевидное обобщение. Если $R(r)$ имеет кратный корень $($а тогда, как показано выше, он единственный$)$, то количество периодических решений равно количеству промежутков неотрицательности $R(r)$ между \emph{простыми} корнями. Теперь, учитывая и предложение $\ref{propos10}$, мы получаем всю информацию о полном составе критических движений на совместном уровне общих интегралов, содержащих критическую точку ранга $0$ соответствующего класса, так как количество и расположение корней многочлена $R(r)$ легко устанавливается для всех классов, исходя из зависимостей $h(r,\ld)$ и $p(r,\ld)$, данных формулами \eqref{eq4_11}, \eqref{eq4_12}. Эта информация приведена в дальнейшем при исследовании топологии полных уровней интегралов для точек ранга 0 в третьем столбце табл.~$\ref{table42}$.
\end{remark}

Введем следующие обозначения:
\begin{equation}\label{eq5_5}
\begin{array}{c}
    \varphi_{\pm}(r)=\ds{\frac{1}{2}\bigl[ r (\ld-r) \pm \frac{2r-\ld}{r-\ld}D\bigr]},\qquad
  \psi_{\pm}(r)=\ds{\frac{r}{2}\bigl[ -r \pm \frac{1}{r-\ld}D\bigr]}, \\[3mm]
  D=|d|=\sqrt{\mathstrut 4+r^2(r-\ld)^2}>0.
\end{array}
\end{equation}
Обозначим через $h_C(\ld)$ значение энергии в вырожденной критической точке $\pi_{31}$, заданной формулой \eqref{eq4_38}. Смысл индекса $C$ будет объяснен в дальнейшем, так как через $C$ будет обозначен образ этой вырожденной точки в диаграммах критических подсистем. При заданном $\ld>0$ значение $h_C$ определяется согласно \eqref{eq4_38}, \eqref{eq4_11} так:
\begin{equation}\label{eq5_6}
\left\{
\begin{array}{l}
3x^4-{2x^3}\ld-4=0 \\
x \in (-\sqrt[4]{4/3},0)
\end{array}\right. \; \Rightarrow \; h_C=\frac{3}{8} x^2+\frac{2}{x^6}.
\end{equation}

\begin{theorem}\label{th9}
$(P,H)$-диаграмма критической подсистемы $\mm$ состоит из следующих множеств
\begin{equation}\label{eq5_7}
    \begin{array}{lll}
      \delta_1: & p^2= \psi_-(r), \quad h= \varphi_-(r), & r \in [0,\ld);\\[3mm]
      \delta_2: & p^2= \psi_+(r), \quad h= \varphi_+(r), & r \in (-\infty,0];\\[3mm]
      \delta_3: & p^2= \psi_+(r), \quad h= \varphi_+(r), & r \in (\ld,+\infty);\\[3mm]
      \gaa: & \ds{h-(3p^2+\frac{\ld^2}{2})=0}, & \ds{\frac{\ld^2}{2} \ls h \ls h_C(\ld)}; \\[3mm]
      \gan: & \ds{h-(p^2+\frac{\ld^2}{2})-\frac{1}{2(\ld^2+p^2)}=0}, & p\in \bR, \\[3mm]
      \ell_0: & \ds{h-(p^2+\frac{\ld^2}{2})=0}, & p \in \bR.
    \end{array}
\end{equation}
При этом границами допустимой области $\mD_1$ служат связная кривая $\delta_1$ и две компоненты связности кривой $\delta_3$. Кривая $\delta_2$ связна и
разбивает допустимую область на две подобласти. Точкам области, лежащей ниже кривой $\delta_2$, соответствует одна критическая окружность, а точкам области, лежащей выше кривой $\delta_2$, соответствуют две критических окружности.
\end{theorem}

\begin{proof}
Поскольку все критические точки ранга 1 принадлежат $\mm$, а значения на них функций $P,H$ даны в формулах \eqref{eq4_8}, \eqref{eq4_11}, то с учетом обозначений \eqref{eq4_20} образы этих точек -- это кривые $\delta_i$ ($i=1,2,3$). Кривые $\gaa$, $\gan$ получим из предложения~\ref{propos11} с учетом \eqref{eq3_13}. Кривая $\ell_0$ соответствует особой параболе \eqref{eq5_4}, которая, как отмечалось, есть образ множества кратных точек. Остается уточнить условия существования движений в системе $\mm$. Они существуют тогда и только тогда, когда многочлен $R(r)$, заданный выражением \eqref{eq2_26}, имеет промежутки неотрицательности. Поскольку его дискриминантным множеством являются в точности кривые $\delta_i$, получаем, что такие промежутки существуют между кривыми $\delta_1, \delta_3$, а ниже $\delta_1$ (по направлению оси $Oh$) и выше $\delta_3$ движений нет. При этом ниже кривой $\delta_2$ многочлен $R$ имеет два вещественных корня, поэтому такой промежуток один (одна критическая окружность), а выше этой кривой корней четыре, поэтому промежутков два (две критические окружности). Часть кривой $\gaa$ вырождения критических точек ранга 1, попадающая в допустимую область, ограничена  снизу естественной границей $h=\ld^2/2$, а сверху -- точкой пересечения с кривой $\delta_3$, ордината которой -- это описанное выше значение $h_C(\ld)$.
\end{proof}

\begin{theorem}\label{th10}
При неотрицательных $\ld$ перестройки $(P,H)$-диаграммы критической подсистемы $\mm$ происходят при следующих значениях параметра:
$$
0, \, \ld_*=1/2^{3/4}, \, 1, \, \ld^*=2\sqrt{2}/3^{3/4},\, \sqrt{2}.
$$
\end{theorem}
\begin{proof}
В силу выбора частных интегралов, обеспечивающих, в частности, взаимно однозначное соответствие точек ключевого множества и их образов (за указанными выше очевидными исключениями), перестройки диаграмм происходят одновременно с перестройками множества вырожденных точек ранга 0 и кратных точек при изменении $\ld$. Поскольку все кривые вырожденных и кратных точек известны (см. рис.~\ref{fig_RazdCrit0}), то эти перестройки таковы: при $\ld=0$ исчезает точка $\vpi_{21}$ (соответствующая кривая уходит в бесконечность) и сливаются точки $\vpi_{22},\vpi_{24}$; при $\ld=\ld_*$ в узловой точке $P_0$ сливаются все кривые вырожденных точек класса $\delta_2$; при $\ld=1$ исчезает точка $\vpi_{21}$ (заканчивается соответствующая кривая); при $\ld=\ld^*$ (минимум $\ld$ на $\ell_0$) возникают кратные точки; при $\ld=\sqrt{2}$ исчезает точка $\vpi_{23}$ (заканчивается соответствующая кривая). Других перестроек нет.
\end{proof}

\begin{figure}[!ht]
\centering
\includegraphics[width=\textwidth,keepaspectratio]{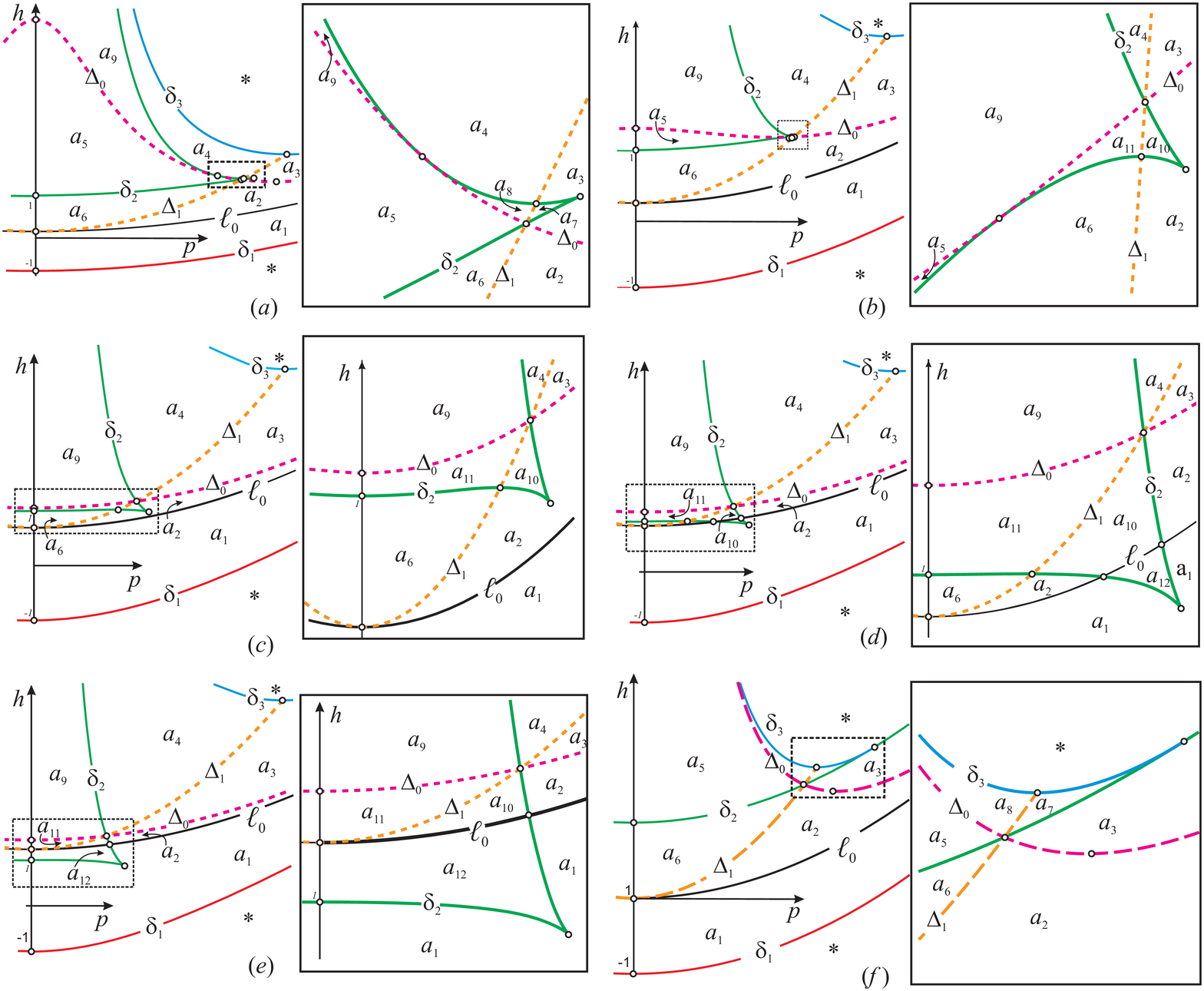}%
\caption{$(P,H)$-диаграммы системы $\mm$ и увеличенные фрагменты.}\label{fig_fig_sys1HP_red}
\end{figure}

На рис.~\ref{fig_fig_sys1HP_red},$(a)-(e)$ показаны $(P,H)$-диаграммы подсистемы $\mm$, устойчивые по па\-ра\-мет\-ру~$\ld$: $(a)$~${0<\ld<\ld_*}$; $(b)$~$\ld_*<\ld<1$; $(c)$~$1 <\ld< \ld^*$; $(d)$~$\ld^* <\ld< \sqrt{2}$; $(e)$~$\ld > \sqrt{2}$.

Символами $a_1 - a_{12}$ обозначены связные компоненты в пространстве $\{(p,h,\ld)\}$, на которые допустимая область делится расширенной $(P,H)$-диаграммой.
Напомним наличие симметрии \eqref{eq4_58}. Области, переходящие друг в друга при таком отображении, обозначены одинаково.

С точки зрения типов критических точек переход через значение $\ld=\ld^*$ никаких качественных изменений не влечет, точки областей $a_{10},a_{12}$ и их общей границы на $\ell_0$ одинаковы. При $\ld=\ld^*$ возникает пересечение кривых $\delta_2$ и $\ell_0$, что демонстрирует момент выхода объединенной области $a_{10} \cup a_{12}$ на многообразие $L=0$ и позволяет сравнить бифуркации и атомы с более подробными исследованиями случая $L=0$, выполненными П.Е.\,Рябовым и П.В.\,Морозовым. Также с локальной точки зрения одинаковы критические точки областей $a_1,a_2$ и их общей границы $\ell_0$. Однако точки на бифуркационной диаграмме $\Sigma$ в $\bbI$, порожденные этими областями, имеют нелокальное отличие, которое в терминах параметров \eqref{eq2_33} установлено в работе \cite{Gash1}, а именно, наличие или отсутствие на полном совместном уровне общих интегралов регулярных торов (двухчастотных движений). Об этом будет сказано ниже.

На плоских сечениях $h=\cons$ или $\ell=\cons$ (изоэнергетических бифуркационных диаграммах или бифуркационных диаграммах приведенных систем с двумя степенями свободы на $\mPel$) поверхностей $\wsi_i$ областям $a_j$ отвечают гладкие дуги, на которых полностью сохраняется как тип и количество особых точек, так и характер бифуркации, то есть атом, возникающий вдоль малого отрезка, трансверсально пересекающего $\wsi_i$ в соответствующей точке.

Для дополнительной информации на рис.~\ref{fig_fig_sys1HP_red},$(f)$ показана диаграмма первой критической подсистемы в классической задаче ($\ld=0$), отвечающая особо замечательным движениям четвертого класса Аппельрота. Она состоит из двух парабол
$$
h=\pm 1+p^2,
$$
кривой, полученной из $\delta_3$ и части $\delta_2$
$$
p=\pm \sqrt{\frac{1}{2}(\sqrt{4+r^4}-r^2)}, \qquad h=\frac{1}{2}(2\sqrt{4+r^4}+r^2),
$$
и предельных кривых
$$
\begin{array}{l}
  \gan: \, \ds{h=\frac{1}{2}(2p^2+\frac{1}{p^2}), \quad p\in \bR\backslash \{0\},} \qquad  \gaa: \, h=3p^2, \quad p\in \left[0, \frac{1}{\sqrt[4]{3}}\right].
\end{array}
$$


Анализируя все признаки, вычисленные для областей $a_j$, сведем информацию в табл.~\ref{table2}. Здесь введены следующие обозначения для атомов $A,B$: индекс ``+'' означает увеличение числа торов при увеличении $K$, а индекс ``$-$'', соответственно, означает уменьшение. Иначе говоря, $A_-$ -- ребро заканчивается вверху, $A_+$ -- ребро заканчивается внизу, $B_-$ -- ``внешняя'' окружность ``восьмерки'' (``голова'') вверху, $B_+$ -- ``внешняя'' окружность ``восьмерки'' (``голова'') внизу. Как видно, все области, кроме одной ($a_4$), имеют выход в классическую задачу Ковалевской ($\ld=0$) и в задачу с гиростатическим моментом при нулевой постоянной площадей ($\ell=0$).
Для этих случаев бифуркации установлены в работах \cite{KhDan83,RyabRCD} (подробное изложение для случая Ковалевской имеется также в \cite{KhPMM83,KhBook88}), метки на соответствующих молекулах вычислены в \cite{BRF,Mor}. Поэтому в графе ``Аналоги'' даны ссылки на обозначения участков \cite{KhBook88,BRF,Mor} или путей, пересекающих соответствующие участки \cite{KhDan83,RyabRCD}, используемые в этих работах. При наличии аналога все атомы описаны в \cite{KhDan83,KhPMM83,RyabRCD}. В двух первых работах был создан и язык для описания атомов, включая учет направленности для несимметричных атомов (само понятие атома тогда, естественно, еще не существовало). Сопоставляя с аналогами, рассмотренными в \cite{BRF,Mor}, можно указать и большинство меченых молекул.

\begin{center}
\small
\begin{longtable}{|c|c| c| c| c|c|}
\multicolumn{6}{r}{\fts{Таблица \myt\label{table2}}}\\
\hline
\begin{tabular}{c}\fts{Область}\\[-3pt]\fts{(время жизни)} \end{tabular}
&\begin{tabular}{c}\fts{К-во}\\[-3pt]\fts{окр-стей}\end{tabular}&\begin{tabular}{c}\fts{Показатели}\\[-3pt]\fts{Морса--Ботта}\end{tabular}
&\begin{tabular}{c}\fts{Выход на}\\[-3pt]\fts{$\ld=0$/$\ell=0$}\end{tabular}
&\begin{tabular}{c}\fts{Атом}\end{tabular} &\begin{tabular}{c}\fts{Аналоги}\end{tabular}\\
\hline\endfirsthead%
\multicolumn{6}{r}{\fts{Таблица \ref{table2} (продолжение)}}\\
\hline
\begin{tabular}{c}\fts{Область}\\[-3pt]\fts{(время жизни)} \end{tabular}
&\begin{tabular}{c}\fts{К-во}\\[-3pt]\fts{окр-стей}\end{tabular}&\begin{tabular}{c}\fts{Показатели}\\[-3pt]\fts{Морса--Ботта}\end{tabular}
&\begin{tabular}{c}\fts{Выход на}\\[-3pt]\fts{$\ld=0$/$\ell=0$}\end{tabular}
&\begin{tabular}{c}\fts{Атом}\end{tabular} &\begin{tabular}{c}\fts{Аналоги}\end{tabular}\\
\hline\endhead
\begin{tabular}{c}$a_1$\\($ 0 \ls \ld <+\infty$) \end{tabular} &{1}& {($-\;-$)} & Да/Да &$A_-$ & \begin{tabular}{l} $2,3$ \cite[Рис.\,6.3]{KhBook88}\\ $a_1, a_2$ \cite[Рис.\,2]{RyabRCD} \\$\gamma_1,\gamma_4$ \cite[Рис.\,11]{BRF}\\$\alpha_2,\alpha_3$ \cite[Рис.\,1]{Mor}\end{tabular}\\
\hline
\begin{tabular}{c}$a_2$\\($ 0 \ls \ld <+\infty$) \end{tabular} &{1}& {($-\;-$)} & Да/Да &$A_-$ & \begin{tabular}{l} $3,3^\prime$ \cite[Рис.\,6.3]{KhBook88}\\ $a_2$ \cite[Рис.\,2]{RyabRCD} \\$\gamma_1,\gamma_4$ \cite[Рис.\,11]{BRF}\\$\alpha_2,\alpha_3$ \cite[Рис.\,1]{Mor}\end{tabular}\\
\hline
\begin{tabular}{c}$a_3$\\($ 0 \ls \ld <+\infty$) \end{tabular} &{1}& {($+\;-$)} & Да/Нет &$B_-$ & \begin{tabular}{l} 9 \cite[Рис.\,6.3]{KhBook88}\\$\gamma_5$ \cite[Рис.\,11]{BRF}\end{tabular}\\
\hline
\begin{tabular}{c}$a_4$\\($ 0 < \ld < +\infty$) \end{tabular} &{1}& {($+\;+$)} & Нет/Нет &$A_+$ & {Отсутств.}\\
\hline
\begin{tabular}{c}$a_5$\\($ 0 \ls \ld < 1 $) \end{tabular} &{2}& {\begin{tabular}{c}($+\;-$),($-\;+$)\end{tabular}} & Да/Да &$2A^*$ & \begin{tabular}{l} 6 \cite[Рис.\,6.3]{KhBook88}\\ $a_4$ \cite[Рис.\,2]{RyabRCD}\\$\gamma_2$ \cite[Рис.\,11]{BRF}\\$\delta_1,\delta_2$ \cite[Рис.\,1]{Mor} \end{tabular}\\
\hline
\begin{tabular}{c}$a_6$\\($ 0 \ls \ld <\sqrt{2} $) \end{tabular} &{1}& {($-\;+$)} & Да/Да &$B_+$& \begin{tabular}{l} 5  \cite[Рис.\,6.3]{KhBook88}\\ $b_2$ \cite[Рис.\,3]{RyabRCD}\\$\gamma_3$ \cite[Рис.\,11]{BRF}\\$\beta_1$ \cite[Рис.\,1]{Mor} \end{tabular}\\
\hline
\begin{tabular}{c}$a_7$\\($ 0 \ls \ld < \ld_*$) \end{tabular} &{2}& {\begin{tabular}{c}($+\;-$),($+\;-$)\end{tabular}} & Да/Нет &$2B_-$& \begin{tabular}{l} Д\,\cite[Рис.\,2]{KhDan83}\\$\gamma_6$ \cite[Рис.\,11]{BRF}\end{tabular}\\
\hline
\begin{tabular}{c}$a_8$\\($ 0 \ls \ld < \ld_*$) \end{tabular} &{2}& {\begin{tabular}{c}($+\;+$),($+\;+$)\end{tabular}} & Да/Нет &$2A_+$& \begin{tabular}{l} E\,\cite[Рис.\,2]{KhDan83}\\$\gamma_7$ \cite[Рис.\,11]{BRF}\end{tabular}\\
\hline
\begin{tabular}{c}$a_9$\\($ 0 < \ld <+\infty$) \end{tabular} &{2}& {\begin{tabular}{c}($+\;+$),($-\;-$)\end{tabular}} & Нет/Да &$A_+, A_-$& \begin{tabular}{l} $a_5$ \cite[Рис.\,2]{RyabRCD}\\$\alpha_5,\alpha_6$ \cite[Рис.\,1]{Mor} \end{tabular}\\
\hline
\begin{tabular}{c}$a_{10}$\\($ \ld_* < \ld <+\infty$) \end{tabular} &{2}& {\begin{tabular}{c}($-\;-$),($-\;-$)\end{tabular}} & Нет/Да &$2A_-$& \begin{tabular}{l} $c_3, c_4$ \cite[Рис.\,4]{RyabRCD}\\$\alpha_3,\alpha_8$\cite[Рис.\,1]{Mor} \end{tabular}\\
\hline
\begin{tabular}{c}$a_{11}$\\($ \ld_*< \ld < +\infty $) \end{tabular} &{2}& {\begin{tabular}{c}($-\;+$),($-\;+$)\end{tabular}} & Нет/Да &$2B_+$& \begin{tabular}{l} $b_4$ \cite[Рис.\,3]{RyabRCD}\\$\beta_5,\beta_6$ \cite[Рис.\,1]{Mor} \end{tabular}\\
\hline
\begin{tabular}{c}$a_{12}$\\($ \ld^* < \ld <+\infty$) \end{tabular} &{2}& {\begin{tabular}{c}($-\;-$),($-\;-$)\end{tabular}} & Нет/Да &$2A_-$& \begin{tabular}{l} $d_2, d_3$ \cite[Рис.\,5]{RyabRCD}\\$\alpha_3,\alpha_8, \alpha_9,\alpha_{10}$\\ \cite[Рис.\,1]{Mor} \end{tabular}\\
\hline
\end{longtable}
\end{center}

Также теперь имеется возможность прояснить связь между параметрами \eqref{eq2_33}, условиями \eqref{eq2_36} и приведенной здесь классификацией точек первой критической подсистемы.
Для экономии обозначений будем под $\ell_0,\gaa,\gan$ понимать значение левых частей уравнений соответствующих кривых в записи \eqref{eq5_7} (в дальнейшем это не приведет к путанице). Тогда  в обозначениях \eqref{eq2_33}
\begin{equation}\label{eq5_8}
\begin{array}{l}
  L_1=\ell_0/8, \qquad L_2=\gaa/2, \qquad L_3=\gan/2.
\end{array}
\end{equation}
Из \eqref{eq5_5}, \eqref{eq5_7} находим, что на кривой $\delta_1$ имеется однозначная зависимость
$$
h=H_1(p), \quad p \in \bR,
$$
а на кривой $\delta_3$ однозначная зависимость определена при ненулевых $p$:
$$
h=H_3(p), \quad p \in \bR \backslash \{0\},
$$
причем для всех $p\ne 0$ имеем $H_3(p)>H_1(p)$. Кривые $\delta_i$ в целом образуют дискриминантное множество многочлена \eqref{eq2_26}. При
\begin{equation}\label{eq5_9}
h \in (-\infty,H_1(p)) \cup (H_3(p), +\infty)
\end{equation}
он вещественных корней не имеет, поэтому и невозможны движения на $\mm$. Покажем, что при соотношениях \eqref{eq2_30} в областях \eqref{eq5_9} движения невозможны и на всем многообразии $G^4$ (напомним, что так обозначен полный прообраз поверхности $\wsa$ в $\mP^5$ при отображении момента~$J$). Для этого достаточно заметить, что в точке зависимости ограничений функций $H,P=\omega_1$ на многообразие $G^4$ ранг $J$ не превышает двух, поэтому такая точка является критической ранга 0 или 1 и обязательно принадлежит $\mm$. Поэтому наибольшее и наименьшее значения $H$ при фиксированном $P$ на $G^4$ совпадают с аналогичными значениями на $\mm$:
$$
\max_{G^4 \cap \{P=p\}} H = \max_{\mm \cap \{P=p\}} H = H_3(p), \quad \min_{G^4 \cap \{P=p\}} H = \min_{\mm \cap \{P=p\}} H = H_1(p).
$$
В связи с этим введем два дополнительных параметра
$$
L_4=h-H_1(p), \qquad L_5=h-H_3(p)
$$
и обозначим области на плоскости $(p,h)$
$$
\overline{a_1} = \{(p,h): L_4<0\}, \qquad \overline{a_2} = \{(p,h): L_5>0\}.
$$

Из доказанного, в частности, вытекает следующее утверждение о существовании вырожденных критических движений ранга 1 в первой подсистеме.

\begin{proposition}\label{propos16}
В допустимую область на бифуркационной диаграмме $\Sigma$ отображения момента $J$ входят следующие сегменты особых точек на поверхности $\wsa$ -- образы вырожденных критических движений ранга $1$:

\emph{1)} кривая касания поверхностей $\wsa,\wsc$ полностью
\begin{eqnarray}\notag
&&    \gan: \quad \left\{ \begin{array}{l}
                        \ds{\ell = \pm \sqrt{\frac{s}{2}(1-2\ld^2s)}} \\
                        \ds{h=\frac{1-\ld^2s+2s^2}{2s}}
                      \end{array} \right. , \quad \ds{0<s\leqslant \frac{1}{2\ld^2}};
\end{eqnarray}

\emph{2)} ребро возврата поверхности $\wsa$ между его пересечениями с кривой $\delta_3$
\begin{eqnarray}\notag
&&    \gaa: \quad \ds{\ell = \pm \frac{2}{3\sqrt{3}}(h-\frac{\ld^2}{2})^{3/2}}, \quad
    \left\{ \begin{array}{l}
    \ds{\frac{\ld^2}{2}\leqslant h \leqslant h_C(\ld)}\\
    |\ell| \ls \ell_C(\ld)
    \end{array}
    \right.,
\end{eqnarray}
или в терминах параметра $s$
\begin{eqnarray}\label{eq5_10}
&&    \gaa: \quad \left\{\begin{array}{l}
    \ds{\ell = \pm \sqrt{\frac{s^3}{2}}} \\
    \ds{h = \frac{3}{2}s + \frac{\ld^2}{2}}
    \end{array}
    \right.,
    \quad
    s \in [0,s_C(\ld)],
\end{eqnarray}
где зависимости $h_C(\ld)$, $\ell_C(\ld)$, определяющие экстремумы соответствующих координат на $\delta_3$ при $\ld \gs 0$, и значение в точке экстремума $s$-ко\-ор\-дина\-ты $s_C(\ld)$ находятся из уравнений \eqref{eq4_11}, \eqref{eq5_6} с использованием вспомогательного параметра
\begin{equation}\notag
    \left\{
    \begin{array}{l}
    3x^4-2\ld x^3-4=0 \\
    x \in [-\sqrt[4]{4/3},0)
    \end{array} \right. \; \Rightarrow \; \left\{ \begin{array}{l}
    \ds{h_C=\frac{3}{8}x^2+\frac{2}{x^6}} \\
    \ds{\ell_C= \left|\frac{(4-x^4)^{3/2}}{4x^3}\right|} \\
    \ds{s_C=\frac{4-x^4}{2x^2}}
    \end{array} \right. .
\end{equation}
\end{proposition}

Следующая таблица~\ref{table3} дополняет результаты работ \cite{Gash5,GashDis} в отношении классификации движений на $G^4$. Наличие асимптотических движений или регулярных торов установлено в \cite{Gash5}, исходя из анализа явных квадратур \eqref{eq2_34} -- \eqref{eq2_35}. Таким образом, видно, что регулярные торы могут находиться на критическом уровне первых интегралов лишь в соседстве с атомами типа $A$, а асимптотические движения, конечно, являются неотъемлемой частью гиперболических атомов. Нелокальное отличие точек в парах областей, разделенных кривой $\ell_0$, то есть точек областей $a_{10}$ и $a_{12}$ и точек областей $a_{2}$ и $a_{1}$ состоит в том, что на уровне общих первых интегралов для области $a_2,a_{11}$ есть регулярные торы, а для областей $a_1,a_{12}$ таких торов нет. Это отличие в терминах параметров аналитического решения установлено в \cite{Gash1}.

\begin{center}
\small
\begin{longtable}{|c|c| c| c| c|c|}
\multicolumn{6}{r}{\fts{Таблица \myt\label{table3}}}\\
\hline
\begin{tabular}{c}\fts{Область}\\[-3pt] \fts{по \cite{Gash5}}\end{tabular} & \begin{tabular}{c}\fts{Условия}\end{tabular} & \begin{tabular}{c}\fts{Области в $\mm$}\end{tabular} &
\begin{tabular}{c}\fts{Периоди-}\\[-3pt]\fts{ческие}\\[-3pt]\fts{движения}\end{tabular}
& \begin{tabular}{c}\fts{Асимпто-}\\[-3pt]\fts{тические}\\[-3pt]\fts{движения}\end{tabular} & \begin{tabular}{c}\fts{Регу-}\\[-3pt]\fts{лярные}\\[-3pt]\fts{торы}\end{tabular}\\
\hline\endfirsthead%
\multicolumn{6}{r}{\fts{Таблица \ref{table3} (продолжение)}}\\
\hline
\begin{tabular}{c}\fts{Область}\\[-3pt] \fts{по \cite{Gash5}}\end{tabular} & \begin{tabular}{c}\fts{Условия}\end{tabular} & \begin{tabular}{c}\fts{Области в $\mm$}\end{tabular} &
\begin{tabular}{c}\fts{Периоди-}\\[-3pt]\fts{ческие}\\[-3pt]\fts{движения}\end{tabular}
& \begin{tabular}{c}\fts{Асимпто-}\\[-3pt]\fts{тические}\\[-3pt]\fts{движения}\end{tabular} & \begin{tabular}{c}\fts{Регу-}\\[-3pt]\fts{лярные}\\[-3pt]\fts{торы}\end{tabular}\\
\hline\endhead
I & \begin{tabular}{l}$L_1  > 0,L_2  > 0,$\\$L_3  > 0,L_5  > 0$\end{tabular} & $\overline {a_2 } $ & Нет & Нет & Нет\\
\hline
I & \begin{tabular}{l}$L_1  > 0,L_2  > 0,$\\$L_3  > 0,L_5  < 0$\end{tabular}
& \begin{tabular}{l}$a_8\,(0\ls\ld < \ld_*)$, \\$a_4,a_9\,(0 < \ld <  + \infty )$\end{tabular}
& ``центр'' & Нет & Да\\ \hline
II,VI & \begin{tabular}{l}$L_1 \gs 0,L_2  < 0,$\\$L_3  < 0$\end{tabular} &
\begin{tabular}{l}$a_2\, (0\ls\ld <  + \infty )$,\\$a_{10}\,(\ld > \ld_*) $\end{tabular}
& ``центр'' & Нет & Да\\
\hline
III & \begin{tabular}{l}$L_1  > 0,L_2  < 0,$\\$L_3  > 0,L_5  > 0$\end{tabular} & $\overline {a_2 }$ & Нет & Нет & Нет\\
\hline
III & \begin{tabular}{l}$L_1  > 0,L_2  < 0,$\\$L_3  > 0,L_5  < 0$\end{tabular} & $a_3$ & ``седло'' & Да & Нет\\
\hline
IV & \begin{tabular}{l}$L_1  > 0,L_2  > 0,$\\$L_3  < 0$\end{tabular}
& \begin{tabular}{l}$a_5\,(0\ls\ld < 1)$,\\$a_6\,(0\ls\ld < \sqrt{2})$,\\$a_{11}\,(\ld_* < \ld <  + \infty)$\end{tabular}
& ``седло'' & Да & Нет\\
\hline
V & \begin{tabular}{l}$L_1  < 0,L_2  < 0,$\\$L_3  < 0,L_4  > 0$\end{tabular}
& \begin{tabular}{l}$a_1\,(0\ls\ld <  + \infty ) $,\\$a_{12}\,(\ld > \ld^*)$ \end{tabular}
& ``центр'' & Нет & Нет\\
\hline
V & \begin{tabular}{l}$L_1  < 0,L_2  < 0,$\\$L_3  < 0,L_4  < 0$\end{tabular} & $\overline {a_1 } $ & Нет & Нет & Нет\\
\hline
\myrul VII & \begin{tabular}{l}$L_2  = 0$ \end{tabular} & $\Delta_1 $ & вырожд. & Да & Нет\\
\hline
\myrul VIII & \begin{tabular}{l}$L_3  = 0$\end{tabular} & $\Delta _0 $ & вырожд. & Да & Нет\\
\hline
\end{longtable}
\end{center}


В силу соотношений \eqref{eq3_13} и \eqref{eq2_30}
$$
\ds \ell=-\po (h-\frac{\ld^2}{2}-\po^2), \qquad \ds s = h -\frac{\ld^2}{2}-\po ^2
$$
легко классифицировать $(S,H)$-диаграммы и $(S,L)$-диаграммы первой критической подсистемы.
Разделяющие значения $\ld$ остаются теми же самыми (определяются по перестройкам сечений ключевого множества). На рис.~\ref{fig_M1_HKeyABC_red}, \ref{fig_M1_HKeyDEF_red} приведены $(S,H)$-диаграммы для случаев: $(a)$~${0<\ld<\ld_*}$; $(b)$~$\ld_*<\ld<1$; $(c)$~$1 <\ld< \ld^*$; $(d)$~$\ld^* <\ld< \sqrt{2}$; $(e)$~$\ld > \sqrt{2}$; $(f)$~$\ld=0$.

\def\wid{0.7}
\begin{figure}[!htp]
\centering
\includegraphics[width=\wid\textwidth,keepaspectratio]{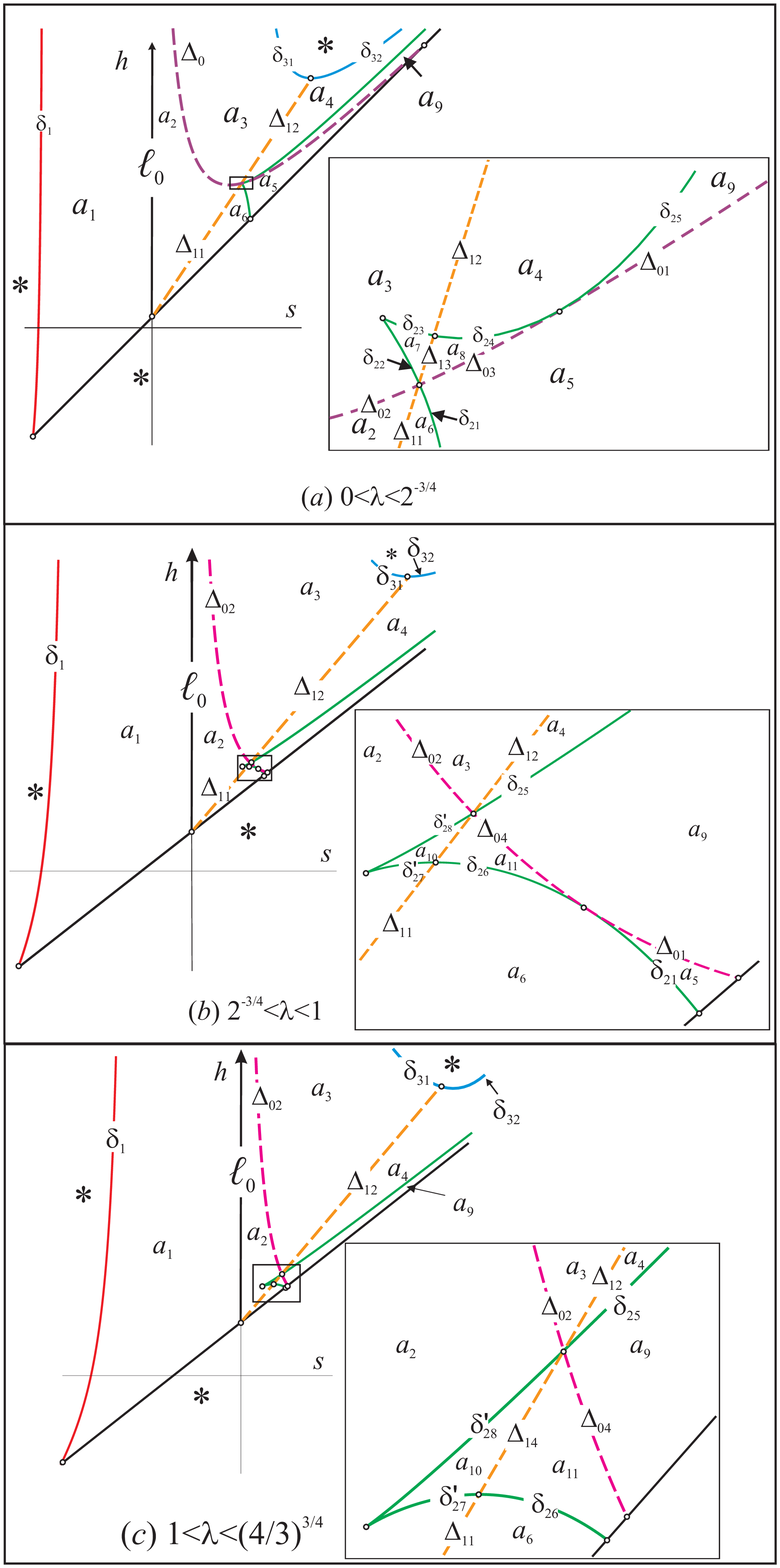}%
\caption{$(S,H)$-диаграммы системы $\mm$ и увеличенные фрагменты.}\label{fig_M1_HKeyABC_red}
\end{figure}

\begin{figure}[!htp]
\centering
\includegraphics[width=\wid\textwidth,keepaspectratio]{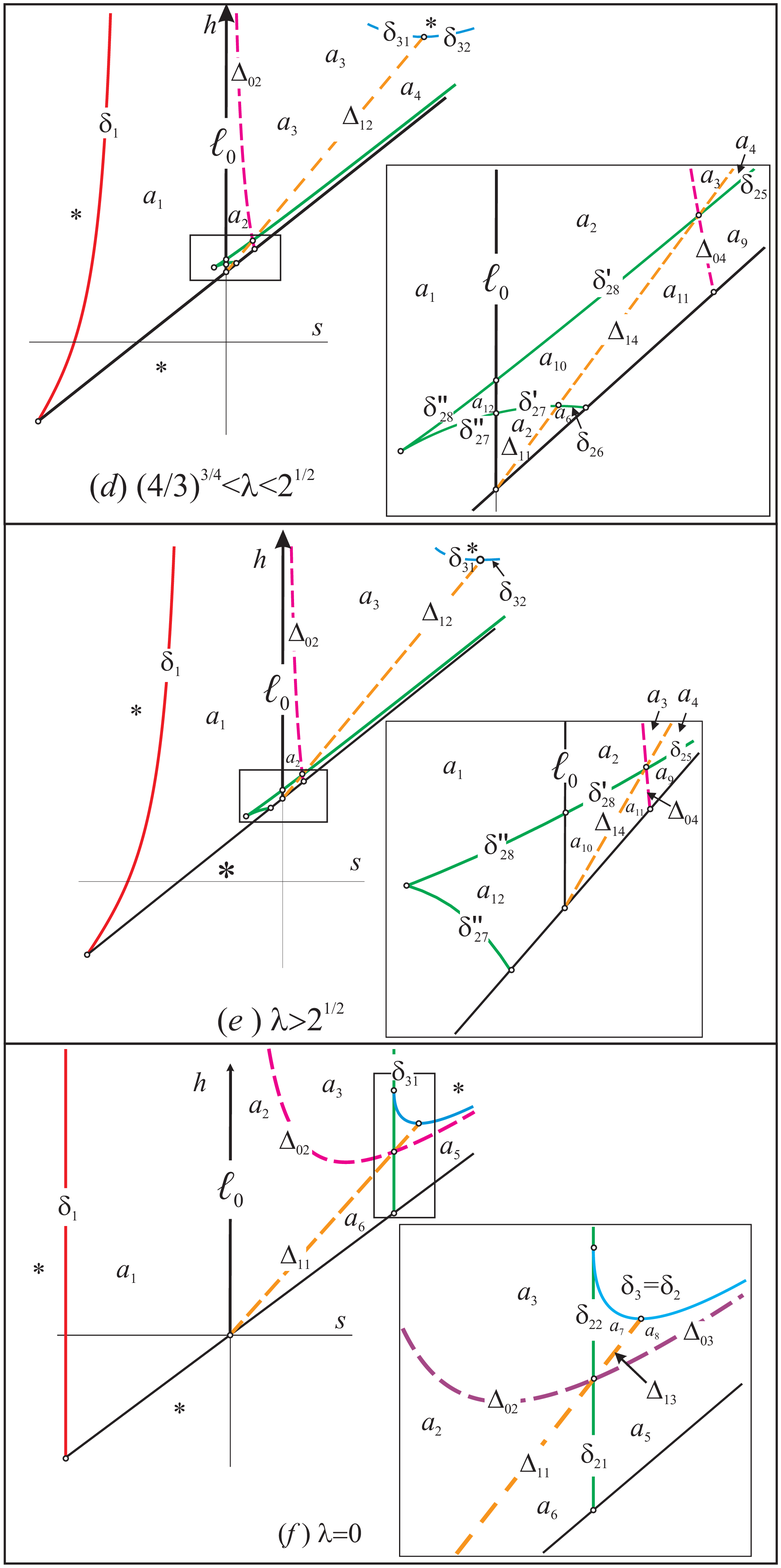}%
\caption{$(S,H)$-диаграммы системы $\mm$ и увеличенные фрагменты (продолжение).}\label{fig_M1_HKeyDEF_red}
\end{figure}

Укажем все точки на плоскостях $(p,h)$ и $(s,h)$, имеющие значение для построения $H$-атласов бифуркационных диаграмм систем на $Q_h^4$ (точки экстремальных значений $h$-ко\-ор\-ди\-на\-ты на ключевых множествах). Соответствующие обозначения приведены на рис.~\ref{fig_fig_sys1HPpoints}, \ref{fig_fig_sys1HSpoints}. Точки, связанные с кривыми $\delta_1, \delta_2, \delta_3$, обозначены соответственно буквами $A,B,C$, снабженными индексами там, где это необходимо. Буква $D$ использована для точек на $\Delta_j$ (здесь $j=0,1$, в третьей системе появляется $\gac$). Для того чтобы увидеть все особые точки, достаточно привести на рисунках случаи $(a)$~${0<\ld<\ld_*}$ и $(b)$~$\ld^* < \ld < \sqrt{2}$.

\begin{figure}[!hb]
\centering
\includegraphics[width=\textwidth,keepaspectratio]{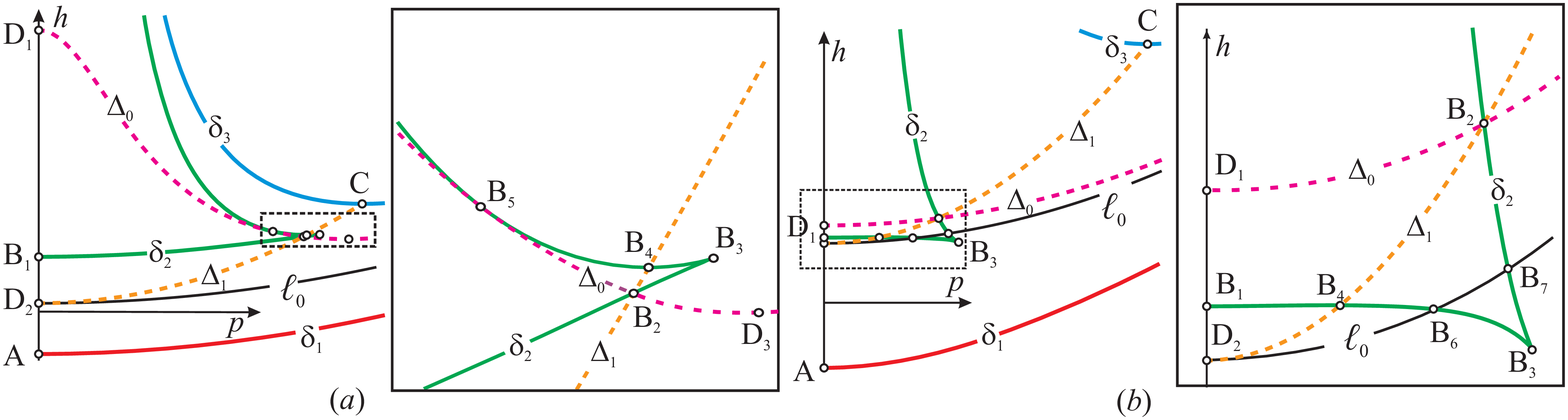}\\
\caption{Особые точки $(P,H)$-диаграммы системы $\mm$.}\label{fig_fig_sys1HPpoints}
\end{figure}
\begin{figure}[!hb]
\centering
\includegraphics[width=\textwidth,keepaspectratio]{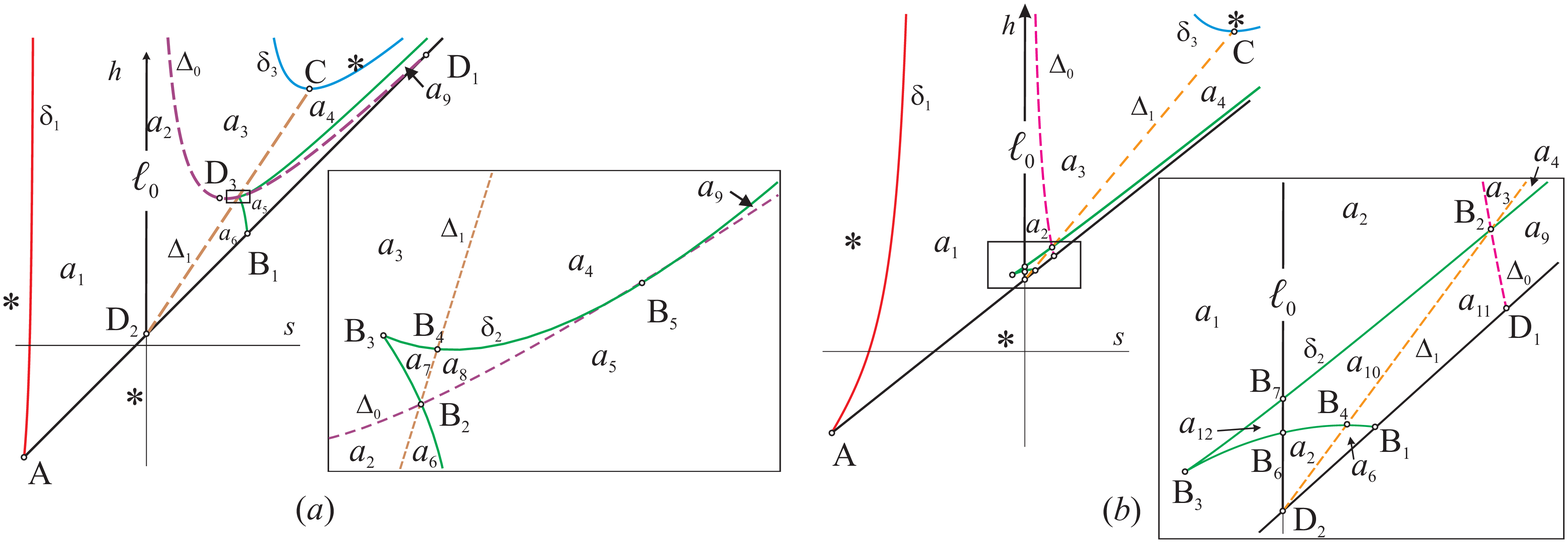}\\
\caption{Особые точки $(S,H)$-диаграммы системы $\mm$.}\label{fig_fig_sys1HSpoints}
\end{figure}

На рис.~\ref{fig_M1_LKeyABC_red}, \ref{fig_M1_LKeyDEF_red} и приведены $(S,L)$-диаграммы подсистемы $\mm$ для неразделяющих случаев: $(a)$~${0<\ld<\ld_*}$; $(b)$~$\ld_*<\ld<1$; $(c)$~$1 <\ld< \ld^*$; $(d)$~$\ld^* <\ld< \sqrt{2}$; $(e)$~$\ld > \sqrt{2}$; $(f)$~$\ld=0$. Здесь сразу отмечены особые точки, имеющие значение для построения $L$-атласа бифуркационных диаграмм систем на $\mPel$. Отметим, что по сравнению с диаграммами, включающими $H$, появилась дополнительная точка $D_4$ -- экстремум $\ell$-координаты на образе вырожденных точек~$\gan$.

\def\wid{0.7}
\begin{figure}[!htp]
\centering
\includegraphics[width=\wid\textwidth,keepaspectratio]{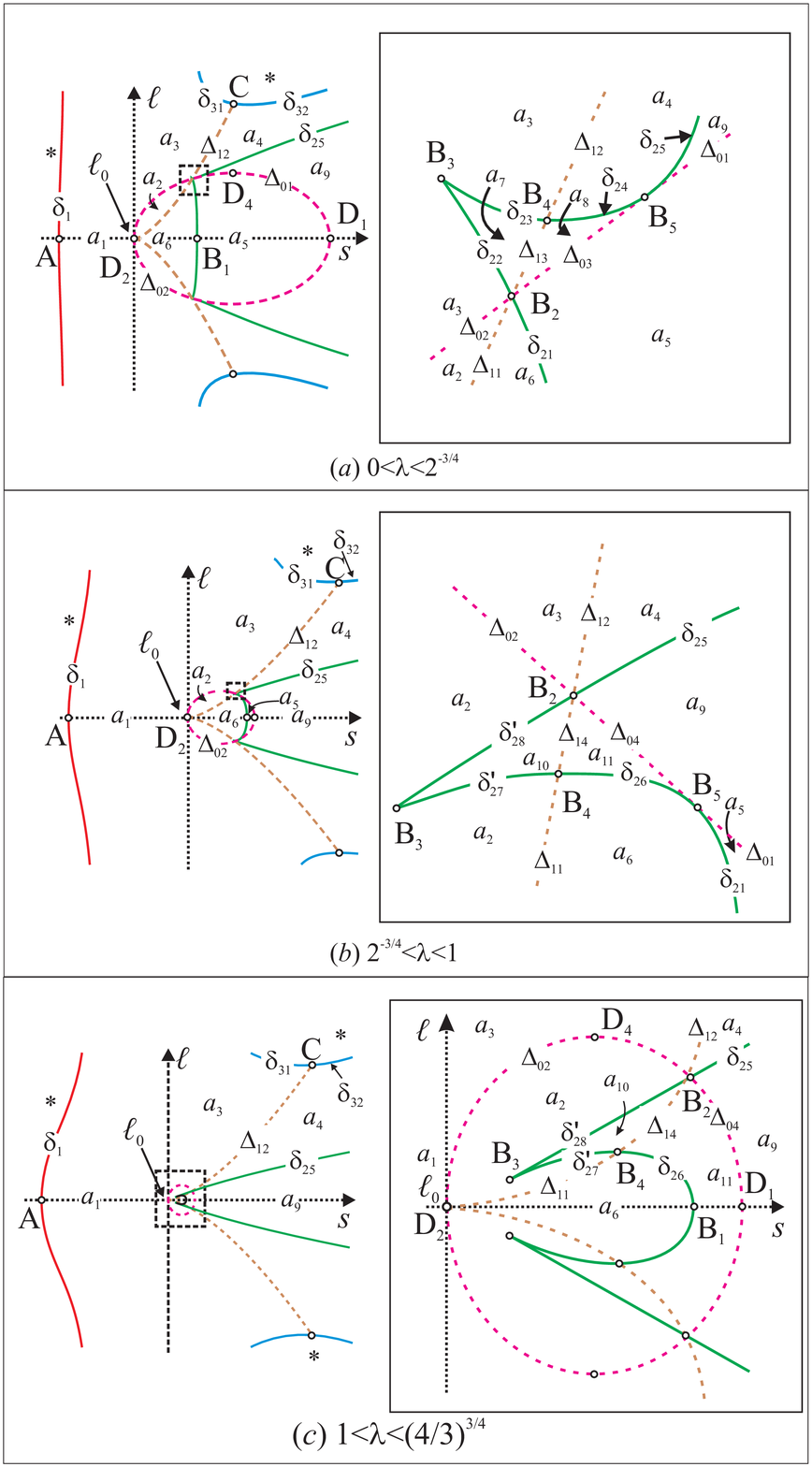}%
\caption{$(S,L)$-диаграммы системы $\mm$ с полной детализацией.}\label{fig_M1_LKeyABC_red}
\end{figure}

\begin{figure}[!htp]
\centering
\includegraphics[width=\wid\textwidth,keepaspectratio]{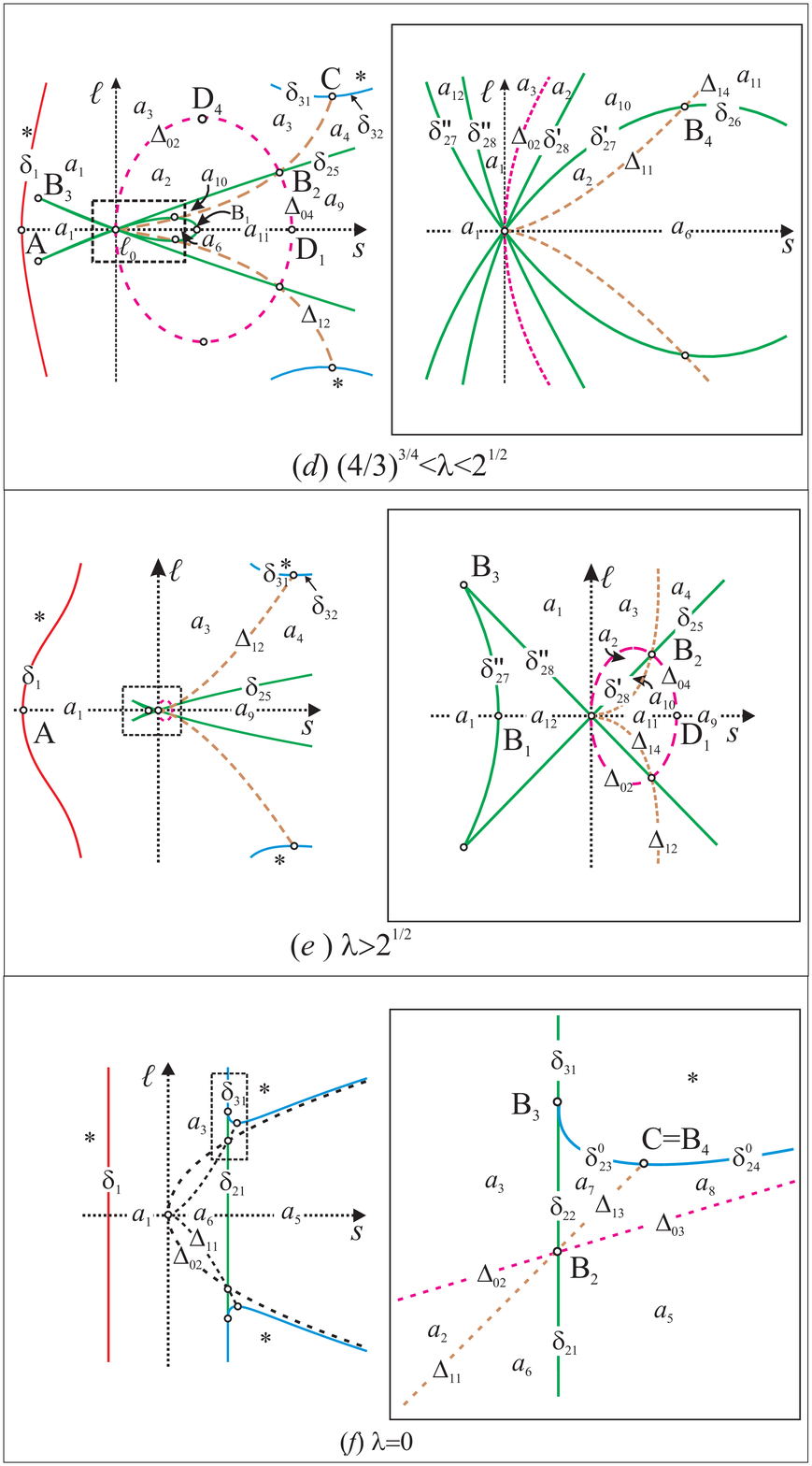}%
\caption{$(S,L)$-диаграммы системы $\mm$ с полной детализацией (продолжение).}\label{fig_M1_LKeyDEF_red}
\end{figure}


Перечислим явно все значения параметров и интегралов (общих и частных) в отмеченных особых точках. Образ вырожденных критических точек ранга $0$ получим, рассекая разделяющее множество на рис.~\ref{fig_RazdCrit0} на заданном уровне $\ld$. При этом точки выхода на ось $r=0$ учитываются только на $\delta_2$. На $\delta_1$ вырожденных точек нет, но есть экстремум $h$ --- глобальное наименьшее значение энергии, достигаемое в точке
\begin{equation}\label{eq5_11}
A: \quad r=0, \quad p=0, \quad s=-1-\ds{\frac{\ld^2}{2}}, \quad h=-1, \quad \ell=0.
\end{equation}

Обозначая
\begin{equation}\notag
\begin{array}{l}
  Q_0 = \ld(r-\ld)+D,\\
  Q_1 = (r-\ld)(2r-\ld)-D, \\
  Q_2 = r(r-\ld)(2r-\ld)+\ld D, \\
  Q_3 = r(\ld-r)+D,
\end{array}
\end{equation}
на кривой $\delta_2$ ($r<0$) найдем
\begin{equation}\notag
\begin{array}{ll}
  \ds \frac{dh}{dr} = \frac{1}{2(\ld-r)^2 D} Q_1 Q_2, & \ds \frac{ds}{dr} = \frac{1}{2 D} Q_3, \\
  \ds \frac{dp^2}{dr} = - \frac{1}{2 (r-\ld)^2 D} Q_2 Q_3, &
  \ds \frac{d\ell^2}{dr} = \frac{1}{8 (r-\ld)^2 D} Q_0 Q_1 Q_2 Q_3.
  \end{array}
\end{equation}
Отсюда, в частности,
\begin{equation}\label{eq5_12}
\begin{array}{l}
  \ds \frac{dh}{dp} = - \sqrt {\frac{2 r}{r-\ld}} \frac{Q_1}{\sqrt{Q_3}}, \qquad
  \ds \frac{d\ell}{ds} = -\frac{1}{2\sqrt{2 r} (r-\ld)^{3/2}} Q_1 \sqrt{Q_3}.
\end{array}
\end{equation}
Поэтому экстремумы $h(r)$ на $\delta_2$ -- это точка $B_1$ ($r=0$), вырожденная точка $B_4$ (кривая $\vpi_{23}$)
\begin{equation}\label{eq5_13}
\begin{array}{l}
(\ld-r)(\ld-2r)-D=0,
\end{array}
\end{equation}
точка возврата $B_3$ (кривая $\vpi_{24}$)
\begin{equation}\label{eq5_14}
\begin{array}{l}
r(\ld-r)(\ld-2r)+\ld D=0.
\end{array}
\end{equation}
Касание кривых $\delta_2$ и $\gan$ дает точку $B_5$ (кривая вырожденных точек $\vpi_{21}$, уравнение \eqref{eq4_21}), а пересечение всех трех кривых $\delta_2$, $\gan$ и $\gaa$ происходит в точке $B_4$ (кривая вырожденных точек $\vpi_{22}$, $r=-\ld$).
Пересечения $\delta_2$ с кривой кратных точек $\ell_0$ определяются, согласно \eqref{eq4_11} уравнением
\begin{equation}\label{eq5_15}
\begin{array}{l}
\ld(r-\ld)+ D =0.
\end{array}
\end{equation}
Решения уравнения \eqref{eq5_13} записаны в параметрическом виде \eqref{eq4_37} с помощью подстановки \eqref{eq4_33}, решения \eqref{eq5_14} находятся явно. Уравнение \eqref{eq5_15} сводится к уравнению
\begin{equation}\notag
(r-\ld)^3(r+\ld)+4 =0
\end{equation}
с условием $r<0$. Из него той же подстановкой \eqref{eq4_33} получим необходимое параметрическое представление координат точек $B_6,B_7$ при $\ld \gs \ld^*$. Согласно \eqref{eq5_12}, особые точки кривой $\delta_2$ на плоскости $(s,\ell)$ дополнительно порождаются условием $Q_3=0$, откуда следует $p=0,\ell=0$, то есть это снова точка $A$. Напомним попутно, что в силу тождества $\ell = -s p$, вся кривая $\ell_0$, отвечающая значению $s=0$, на плоскости $(s,\ell)$ ``схлопывается'' в начало координат, в частности, ось $O\ell$ состоит из недопустимых точек, кроме $(0,0)$. Поэтому на рис.~\ref{fig_M1_LKeyABC_red} и \ref{fig_M1_LKeyDEF_red} она изображена пунктиром. Таким образом, кривая $\delta_2$ порождает следующий набор точек и значений первых интегралов:
\begin{eqnarray}
& & B_1: \quad r=0, \quad p=0, \quad s_1=1-\ds{\frac{\ld^2}{2}}, \quad s_2=\frac{1}{2}, \quad h=1, \quad \ell=0;\label{eq5_16}\\
& & B_2=\vpi_{22}: \, \left\{ \begin{array}{ll}
    r=- \ld, &  p=\pm \ds{ \sqrt{\frac{\sqrt{1+\ld^4}-\ld^2}{2}       }}\\
    s_1=\sqrt{1+\ld^4}-\ld^2, & s_2=s_1\\
    \ds{h=\frac{3}{2}\sqrt{1+\ld^4}-\ld^2}, & \ds{\ell=\mp \frac{1}{\sqrt{2}}(\sqrt{1+\ld^4}-\ld^2)^{3/2}}\end{array}\right. ;\label{eq5_17}\\
& & B_3=\vpi_{24}: \, \left\{ \begin{array}{ll}
    r=-\ds{\frac{1}{2}\sqrt{U}(\sqrt{4+U^2}-U)}, &  p=\pm \ds{\frac{1}{2\sqrt{2}}(\sqrt{4+U^2}-U)^{3/2}}\\
    s_1=\ds{2\frac{\sqrt{4+U^2}-2U}{(\sqrt{4+U^2}-U)^2}}, & s_2=\ds{\frac{1}{2 U}} \qquad (U=\ld^{2/3})\\
    \ds{h=\frac{1}{4}\left[(4+U^2)^{3/2}-U(6+U^2)\right]}, & \ds{\ell=\mp \frac{\sqrt{4+U^2}-2U}{\sqrt{2}(\sqrt{4+U^2}-U )^{1/2}}}\end{array}\right. ; \label{eq5_18}
\end{eqnarray}
\begin{eqnarray}& & B_4=\vpi_{23}: \, \left\{ \begin{array}{ll}
    r=\ds{\frac{x^4-4}{2x^3}}, & \ld=\ds{\frac{3x^4-4}{2x^3}} \\
    p=\pm \ds{\frac{\sqrt{4-x^4}}{2x}}, & s_1=\ds{\frac{4-x^4}{2x^2}}, \quad s_2=\ds{\frac{x^6}{2(3 x^4-4)}}\\
    \ds{h=\frac{3}{8}x^2+\frac{2}{x^6}}, & \ds{\ell=\mp \frac{(4-x^4)^{3/2}}{4x^3}}\end{array}\right. , \, x \in (\sqrt[4]{4/3},\sqrt{2}] ;\label{eq5_19} \\
& & B_5=\vpi_{21}: \, \left\{ \begin{array}{ll}
    r=\ld- \ld^{-1/3}, & p= \pm \ld^{1/3} \sqrt{1-\ld^{4/3}} \\
    s_1=\ds{\frac{1}{2\ld^{2/3}}}, & s_2= s_1\\
    \ds{h=-\frac{\ld^2}{2}+\ld^{2/3}+\frac{1}{2\ld^{2/3}}}, & \ds{\ell=\mp \frac{\sqrt{1-\ld^{4/3}}}{2\ld^{1/3}}}\end{array}\right. , \quad \ld \in (0,1] ;\label{eq5_20}\\
& & B_{6,7}: \, \left\{ \begin{array}{ll}
    r=\ds{-\frac{1}{2}x+\frac{2}{x^3}}, & \ld=\ds{\frac{1}{2}x+\frac{2}{x^3}} \gs \ld^*\\
    p=\pm \ds{\frac{\sqrt{x^4-4}}{x^3}}, & s_1=0, \quad s_2=\ds{\frac{2x^2}{4+x^4}}\\
    \ds{h=\frac{1}{8}x^2+\frac{2}{x^2}-\frac{2}{x^6}}, & \ds{\ell=0}\end{array}\right. ,
    \begin{array}{l}
    B_6: \, x \in [\sqrt{2},\sqrt{2\sqrt{3}}],\\
    B_7: \, x \in [\sqrt{2\sqrt{3}},+\infty).
    \end{array} \label{eq5_21}
\end{eqnarray}
Здесь $s_1,s_2$ --- значения параметров $s$ на поверхностях $\wsa$ и $\wsi_{2,3}$ соответственно.
Формулы \eqref{eq5_18} в точке возврата $B_3$ упрощаются введением параметра $z$:
\begin{equation}\label{eq5_22}
B_3=\vpi_{24}: \, \left\{ \begin{array}{ll}
    \ld=(\ds{\frac{1}{z}} - z)^{3/2}, & 0<z<1\\
    r=-\sqrt{z(1-z^2)}, &  p=\pm z^{3/2}\\
    s_1=\ds{\frac{3z^2-1}{2z^3}}, & s_2=\ds{\frac{z}{2(1-z^2)}}\\
    \ds{h=\frac{1}{2}z(3+z^2)}, & \ds{\ell=\mp \frac{3z^2-1}{2z^{3/2}}}\end{array}\right. .
\end{equation}

На кривой $\delta_3$ точка минимума $h$, соответствующая кривой вырождения $\vpi_{31}$ определяется тем же уравнением \eqref{eq5_13}, и параметризация этой точки дана уравнениями \eqref{eq4_38}. Имеем
\begin{equation}\label{eq5_23}
C=\vpi_{31}: \, \left\{ \begin{array}{ll}
    r=\ds{\frac{x^4-4}{2x^3}}, & \ld=\ds{\frac{3x^4-4}{2x^3}} \\
    p=\mp \ds{\frac{\sqrt{4-x^4}}{2x}}, & s_1=\ds{\frac{4-x^4}{2x^2}}, \quad s_2=\ds{\frac{x^6}{2(3 x^4-4)}}\\
    \ds{h=\frac{3}{8}x^2+\frac{2}{x^6}}, & \ds{\ell=\pm \frac{(4-x^4)^{3/2}}{4x^3}}\end{array}\right. , \, x \in [-\sqrt[4]{4/3},0) .
\end{equation}

Наконец, на кривых $\gan,\gaa$ экстремальные значения $h$ достигаются в точках
\begin{eqnarray}
& & \label{eq5_24}
D_1: \, \left\{ \begin{array}{l}
     p=0, \quad s_1=\ds{\frac{1}{2\ld^2}}, \quad s_2=s_1\\
    \ds{h=\frac{1+\ld^4}{2\ld^2}}, \quad \ds{\ell=0}\end{array}\right., \quad \ld>0 ;
\\
& & \label{eq5_25}
D_2: \, \left\{ \begin{array}{ll}
     p=0, & s_1=0 \\
    \ds{h=\frac{\ld^2}{2}}, & \ds{\ell=0}\end{array}\right., \quad \ld \gs 0 ;
\\
& & \label{eq5_26}
D_3: \, \left\{ \begin{array}{ll}
    \ds{p=\pm \sqrt{\frac{1-\sqrt{2}\ld^2}{\sqrt{2}}}}, & \ds{s_1= \frac{1}{\sqrt{2}}} \\
    \ds{h=\sqrt{2}-\frac{\ld^2}{2}}, & \ds{\ell=\mp \sqrt{\frac{1-\sqrt{2}\ld^2}{2\sqrt{2}}}}\end{array}\right., \quad \ld^2 \ls \frac{1}{\sqrt{2}}.
\end{eqnarray}
Дополнительно, из \eqref{eq5_10} видим, что, как было отмечено ранее, экстремальное значение $\ell$ имеется на $\gan$ в точке (см. рис.~\ref{fig_M1_LKeyABC_red}, \ref{fig_M1_LKeyDEF_red})
\begin{eqnarray}
& & D_4: \quad \left\{ \begin{array}{ll}
      \ds h=\frac{1+6\ld^4}{4\ld^2}, & \ds \ell=\pm \frac{1}{4\ld}, \quad \ds s_1=\frac{1}{4\ld^2}\\
       s_2=s_1, & p=\mp \ld
    \end{array}\right., \quad \ld>0.
\label{eq5_27}
\end{eqnarray}

\subsection{Детализация. Вторая и третья критические подсистемы}
Для второй и третьей критических подсистем перепишем уравнения поверхностей \eqref{eq3_8} в виде
\begin{equation}\label{eq5_28}
\wsb \cup \wsc = \left\{ \displaystyle{h=2 \ell^2 + \frac{1}{2 s} - \frac{\ld^2}{2}(1-4s^2),}\;
k=-4 \ell^2 \ld^2 + \displaystyle{\frac{1}{4 s^2}} - \displaystyle{\frac{\ld^2}{s}(1-\ld^2s)(1-4s^2)}
\right\}.
\end{equation}
По прежнему $s <0 $ для $\wsb$ и $s > 0 $ для $\wsc$.  Отсюда следует, что пара $(s,\ell)$ определяет единственную точку на соответствующей поверхности и это соответствие взаимно однозначно. Поэтому удобно говорить об $(S,L)$-диаграммах подсистем $\mn$ и $\mo$.

Чтобы получить простой критерий существования решений \eqref{eq2_27} -- \eqref{eq2_29}, представим их в алгебраическом виде.
Выполним замену
\begin{equation}\notag
\begin{array}{llll}
  X=\ds{\frac{1-\zeta ^2}{1+\zeta ^2}},& Y=\ds{\frac{2 \zeta }{1+\zeta ^2}}.
\end{array}
\end{equation}
Имеем однозначные зависимости
\begin{equation}\notag
\begin{array}{ll}
  \ds{\omega_1=-\frac{\ell} {s}- \frac{2 \vk \ro  \zeta }{1+\zeta ^2},} & \ds{\omega_3 = \ld+2 \vk \frac{1-\zeta ^2}{1+\zeta ^2},}\\[2mm]
  \ds{\alpha_1=\frac{\ld s(1-\zeta ^4)+2\ell \ro\zeta (1+\zeta ^2)-8\vk^3 \zeta ^2}{\vk(1+\zeta ^2)^2},} & \ds{\alpha_3 = \frac{\ell (1-\zeta ^2)-2\ld \ro s \zeta  }{\vk(1+\zeta ^2)}}
\end{array}
\end{equation}
и выражения с радикалами
\begin{equation}\notag
\begin{array}{l}
  \ds{\omega_2=-\frac{1}{\sqrt{2}(1+\zeta ^2)} \, \sqrt{\frac{\ro^2}{\vk s}Z(\zeta )},} \quad \ds{\alpha_2=-\frac{2\sqrt{2}\vk \zeta }{(1+\zeta ^2)^2} \, \sqrt{\frac{1}{\vk s}Z(\zeta )}.}
\end{array}
\end{equation}
где
\begin{equation}\notag
\begin{array}{l}
  \ds{Z(\zeta )=(\vk -2\ld s^2) \zeta ^4 +4\ell \ro s \zeta  (1+\zeta ^2)+2\vk (1-4\vk^2s)\zeta ^2+(\vk+2\ld s^2).}
\end{array}
\end{equation}
Динамика определяется уравнением
\begin{equation}\notag
\begin{array}{l}
    \ds{\frac{d\zeta }{dt}=\frac{1}{2\sqrt{2}}  \, \sqrt{\frac{1}{\vk s}Z(\zeta )}.}
\end{array}
\end{equation}

Далее в силу \eqref{eq2_27} следует полагать
\begin{equation}\notag
\begin{array}{llll}
  \zeta = z & (z \in \bR), & \ro=\ro_+, & \ro^2 \gs 0; \\
  \zeta = \ri z & (z \in \bR), & \ro=\ri \ro_-, & \ro^2 < 0 \\
\end{array}
\end{equation}
($\ro_+$ и $\ro_-$ считаем неотрицательными). Получим решения в следующем виде.
Для $\ro^2 \gs 0$ имеем однозначные зависимости
\begin{equation}\notag
\begin{array}{ll}
  \ds{\omega_1=-\frac{\ell} {s}- \frac{2 \vk \ro_+  z}{1+z^2},} & \ds{\omega_3 = \ld+2 \vk \frac{1-z^2}{1+z^2},}\\[2mm]
  \ds{\alpha_1=\frac{\ld s(1-z^4)+2\ell \ro_+z(1+z^2)-8\vk^3 z^2}{\vk(1+z^2)^2},} & \ds{\alpha_3 = \frac{\ell (1-z^2)-2\ld \ro_+ s z }{\vk(1+z^2)}}
\end{array}
\end{equation}
и выражения с радикалами
\begin{equation}\notag
\begin{array}{l}
  \ds{\omega_2=-\frac{\ro_+}{\sqrt{2}(1+z^2)} \, \sqrt{\frac{1}{\vk s}Z_+(z)},} \quad \ds{\alpha_2=-\frac{2\sqrt{2}\vk z}{(1+z^2)^2} \, \sqrt{\frac{1}{\vk s}Z_+(z)}.}
\end{array}
\end{equation}
Аналогично для $\ro^2 < 0$ получим
\begin{equation}\notag
\begin{array}{ll}
  \ds{\omega_1=-\frac{\ell} {s}+ \frac{2 \vk \ro_-  z}{1-z^2},} & \ds{\omega_3 = \ld+2 \vk \frac{1+z^2}{1-z^2},}\\[2mm]
  \ds{\alpha_1=\frac{\ld s(1-z^4)-2\ell \ro_- z(1-z^2)+8\vk^3 z^2}{\vk(1-z^2)^2},} & \ds{\alpha_3 = \frac{\ell (1+z^2)+2\ld \ro_- s z }{\vk(1-z^2)}}
\end{array}
\end{equation}
и выражения с радикалами
\begin{equation}\notag
\begin{array}{l}
  \ds{\omega_2=-\frac{\ro_-}{\sqrt{2}(1-z^2)} \, \sqrt{\frac{1}{\vk s}Z_-(z)},} \quad \ds{\alpha_2=-\frac{2\sqrt{2}\vk z}{(1-z^2)^2} \, \sqrt{\frac{1}{\vk s}Z_-(z)}.}
\end{array}
\end{equation}
Здесь
\begin{equation}\notag
\begin{array}{l}
  \ds{Z_+(z)=\phantom{-}(\vk -2\ld s^2) z^4 +4\ell \ro_+ s z (1+z^2)+2\vk (1-4\vk^2s)z^2+(\vk+2\ld s^2),}\\[2mm]
  \ds{Z_-(z)=-(\vk -2\ld s^2) z^4 +4\ell \ro_- s z (1+z^2)+2\vk (1-4\vk^2s)z^2-(\vk+2\ld s^2).}
\end{array}
\end{equation}
Динамика определяется соответствующим уравнением (верхний знак для $\ro^2 \gs 0$, нижний -- для ${\ro^2 < 0}$):
\begin{equation}\label{eq5_29}
\begin{array}{l}
    \ds{\frac{dz}{dt}=\frac{1}{2\sqrt{2}}  \, \sqrt{\frac{1}{\vk s}Z_{\pm}(z)}.}
\end{array}
\end{equation}

Из полученных выражений вытекает следующий критерий.
\begin{proposition}\label{propos17}
При заданных значениях $s,\ell$, отвечающих уровню в $\mno$, не содержащему критических точек ранга $0$, количество критических окружностей в системах $\mno$ равно количеству траекторий в фазовом пространстве $\overline{\bR}{\times}\bR$ соответствующего уравнения \eqref{eq5_29}, где $\overline{\bR}\approx S^1$ --- прямая, дополненная точкой $z=\infty$. Таким образом, если соответствующий многочлен $Z_{\pm}$ имеет $2m$ вещественных корней \emph{(}$m=0,1,2$\emph{)}, то на критическом уровне $s,\ell$ лежит $m$ критических окружностей, за исключением случая, когда $m=0$ и старший коэффициент многочлена положительный. В этом случае имеется две критических окружности \emph{(}$z$ пробегает все $\overline{\bR}$\emph{)}, на каждой из которых сохраняет свой знак переменная $\omega_2$.
\end{proposition}

Отметим последний случай ($m=0$) как \emph{особый}:
\begin{equation}\label{eq5_30}
\begin{array}{l}
    \ro^2>0, \qquad \vk -2\ld s^2 > 0, \qquad Z_{+}(z)> 0 \quad \forall z \in \bR.
\end{array}
\end{equation}

Получить оба показателя Морса\,--\,Ботта в явном виде для систем $\mno$ не удается. Однако оказалось возможным получить несложные выражения для их вычисления и, что самое важное, явно выделить вектор на трансверсальной площадке на траектории в гиперболических точках, по которому, за исключением особого случая \eqref{eq5_30}, разрыв ``восьмерки'' не происходит, что позволяет определить направление атомов типа $B$ при возрастании интеграла $K$.

Как и в предыдущем случае трансверсальную площадку к критической окружности (в критической точке ранга 1) выбираем как ортогональное дополнение к векторам $\grad \Gamma$, $\grad L$, $\grad H$, $\sgrad H$.

Рассмотрим случай, не удовлетворяющий \eqref{eq5_30}. На любой траектории переменная $z$ осциллирует между корнями соответствующего многочлена $Z(z)$, включая, конечно, и возможность прохода через бесконечно удаленную точку. Выберем на траектории точку $x_0$, в которой $Z(z)=0$. Тогда векторы $\grad \Gamma$, $\grad L$, $\grad H$ ортогональны плоскости $O\omega_2 \alpha_2$, а вектор $\sgrad H$ лежит в этой плоскости и имеет вид
\begin{equation}\notag
\begin{array}{l}
  \sgrad H =\bigl(0,\,b_2,\,0,\,0,\,b_5,\,0\bigr).
\end{array}
\end{equation}
Условный экстремум функции $K$ на совместном уровне функций $\Gamma, L, H$ в $\mP^6$ есть критическая точка функции с неопределенными множителями Лагранжа
\begin{equation}\notag
\ds{K_2=K+(2 \ld^2 - \frac{1}{s}) H  + 2 s L^2-\frac{2 \vk ^2}{s} \Gamma}.
\end{equation}
Очевидно, часть этой функции, не содержащая функций Казимира $L,\Gamma$, совпадает с \eqref{eq5_3}.
Пусть $\mk=d^2 K_2 (x_0)$ --- матрица с элементами $\mk_{ij}$. Так как вектор $\sgrad H$ лежит в ядре $\mk$, то
\begin{equation}\notag
\mk_{22} b_2 + \mk_{25} b_5 =0, \qquad \mk_{25} b_2 + \mk_{55} b_5 =0.
\end{equation}
В частности, поскольку $b_2 b_5\ne 0$ и коэффициенты $\mk_{25}, \mk_{55}$ имеют простые выражения
\begin{equation}\notag
    \mk_{25}=-\ds{\frac{8 \vk \ro \zeta}{1+\zeta^2}}, \qquad \mk_{55}=2\ro^2,
\end{equation}
найдем упрощение для достаточно громоздкого элемента
\begin{equation}\notag
    \mk_{22}=\frac{\mk_{25}^2}{\mk_{55}}=\ds{\frac{32\vk^2\zeta^2}{(1+\zeta^2)^2}}
\end{equation}
и выражение
\begin{equation}\notag
    b_{5}=-\ds{\frac{\mk_{25}\, b_2}{\mk_{55}}}=\ds{\frac{4\vk \, \zeta b_2}{\ro(1+\zeta^2)}}.
\end{equation}
Возьмем в качестве первого вектора трансверсальной площадки вектор, лежащий в той же плоскости $O\omega_2 \alpha_2$ и ортогональный $\sgrad H$:
\begin{equation}\notag
\begin{array}{l}
    v_1=(0,\,-b_5,\,0,\,0,\,b_2,\,0).
\end{array}
\end{equation}
Тогда для компьютерных расчетов вектор $v_2$ легко находится как нуль-пространство матрицы из пяти векторов $\grad \Gamma$, $\grad L$, $\grad H, \sgrad H, v_1$. Хотя его аналитическое выражение и достаточно сложно, все необходимые расчеты легко выполняются в динамическом режиме. Кроме того, матрица ограничения квадратичной формы $\mk$ на трансверсальную площадку в базисе $(v_1,v_2)$ оказывается диагональной,так что показатели Морса\,--\,Ботта имеют вид
\begin{equation}\notag
\begin{array}{l}
    \mu_1=\mk v_1 \cdot v_1, \qquad \mu_2=\mk v_2 \cdot v_2.
\end{array}
\end{equation}
В данный момент нам важно, что, как показывает анализ проекций интегральных многообразий на плоскость $O\omega_1 \omega_2$, в окрестности систем $\mno$ так же, как и в случае с системой $\mm$, критическая поверхность гиперболической окружности никогда не рвется в направлении оси $O\omega_2$, в частности, это означает, что для атомов типа $B$ вектор $v_1$ всегда указывает во ``внешность восьмерки''. Отсюда следует, что если $\mu_1>0$, то в направлении ``головы'' атома функция $K$ возрастает (``голова'' вверх), а если $\mu_1<0$, то в направлении ``головы'' атома функция $K$ убывает (``голова'' вниз). Из найденных выражений для различных элементов получаем
\begin{equation}\notag
\begin{array}{l}
    \mu_1=\mk_{22} b_5^2-2 \mk_{25} b_2 b_5+\mk_{55} b_2^2=\ds{2\frac{\left[16\vk^2\zeta^2+\ro^2(1+\zeta^2)^2\right]^2}{\ro^2(1+\zeta^2)^4}b_2^2}.
\end{array}
\end{equation}
Итак, направление ребер у несимметричных атомов определяется знаком величины $\ro^2$. Поскольку в $\mn$ все точки ранга 1 имеют по доказанному эллиптический тип и $\ro^2>0$, то получим следующее утверждение.

\begin{proposition}\label{propos18}
При возрастании интеграла $K$ на изоэнергетическом уровне $\iso$ в точках критической подсистемы $\mn$ на любой невырожденной критической окружности происходит бифуркация $A_+$ рождения тора.
\end{proposition}

\begin{figure}[!ht]
\centering
\includegraphics[width=50mm,keepaspectratio]{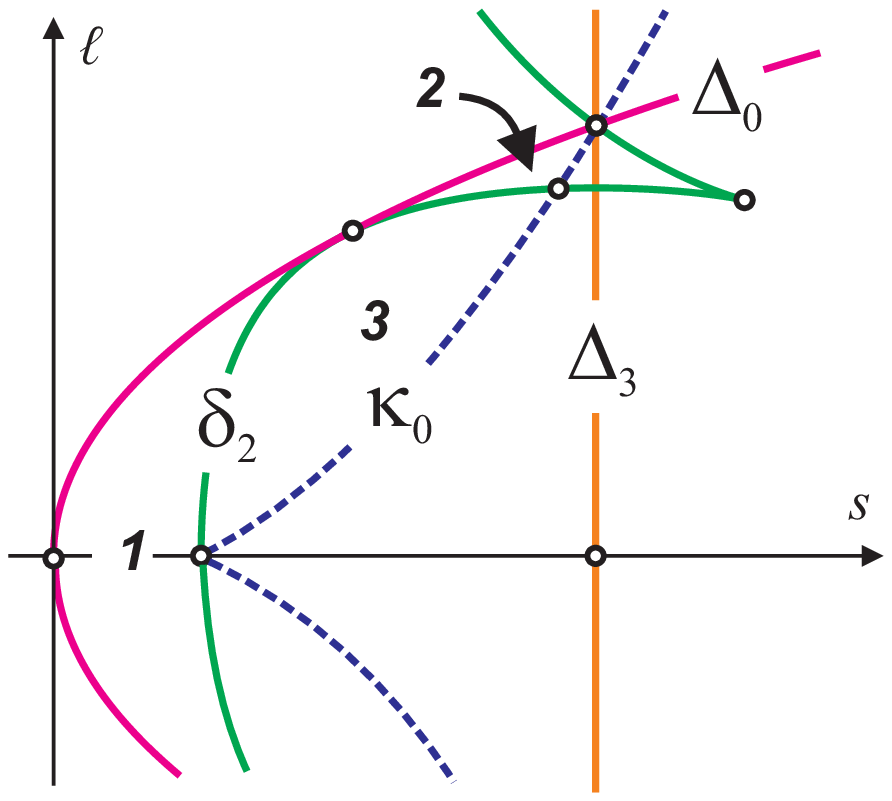}\\
\caption{Особый случай.}\label{fig_spec_case}
\end{figure}

В системе $\mo$ ситуация несколько сложнее, так как известно из классической задачи, что здесь возможно присутствие атомов $C_2$ \cite{KhPMM83}.
Для неособого случая выше доказано, что при наличии двух траекторий на одном и том же уровне интегралов распределение знаков в парах показателей Морса\,--\,Ботта одинаково.
Рассмотрим особый случай \eqref{eq5_30}. На плоскости $(s,\ell)$ условиям положительности указанных констант отвечает область
$$
\ld^2 s^2(2s^2-1)<\ell^2<\frac{s}{2}(1-2\ld^2s).
$$
На рис.~\ref{fig_spec_case} этим условиям удовлетворяют области {\it 1}--{\it 3}. Здесь кривая $\kappa_0$ отвечает граничному случаю обращения в ноль старшего коэффициента многочлена $Z_+$, а кривая $\delta_2$, обозначающая согласно договоренности во всех пространствах интегральных постоянных образ соответствующего множества критических точек ранга 0, является, конечно, дискриминантным множеством многочлена $Z_+$. Обозначения $\gan$, $\gac$ также отвечают введенным ранее. В областях {\it 2} и {\it 3} многочлен $Z_+$ имеет вещественные корни, соответственно, четыре и два. Лишь в области {\it 1} таких корней нет, поэтому здесь и реализуется случай \eqref{eq5_30}. Очевидно, эта область имеет пересечение с осью $\ell=0$ при \begin{equation}\label{eq5_31}
s \in (0,\min \{1/2,1/2\ld^2\}),
\end{equation}
а так как $Z_+>0$ на всей прямой, то удобно взять для упрощения вычислений $z=0$. Тогда векторы, задающие трансверсальную площадку к периодическому решению, легко находятся
\begin{equation*}
\begin{array}{l}
  v_1=\left(-(1+2\ld^2s),\,0,\,0,\,0,\,\frac{\sqrt{2s(1-2\ld^2s)}}{\sqrt{1+2s}}[(1+s)\ld^2s-1],\,\ld s(3-2\ld^2s)\right), \\
  v_2=\left(0,\,\frac{\ld \sqrt{s(1+2s)}}{\sqrt{2(1-2\ld^2s)}},\,1,\,0,\,0,\,0 \right).
\end{array}
\end{equation*}
Квадратичная форма с матрицей $\mathcal{K}$ на этой паре векторов записывается диагональной матрицей, а собственные числа таковы:
\begin{equation*}
\begin{array}{l}
  \mu_1=\ds{\frac{4s}{1+2s}[1+\ld^2s^2(5-2\ld^2s)]^2}, \qquad  \mu_2=\ds{\frac{1}{s}(8\ld^2s^3-1)}.
\end{array}
\end{equation*}
Ясно, что $\mu_1>0$. Обращение в нуль выражения для $\mu_2$ отвечает множеству $\gac$ вырождения точек ранга 1 (вертикальная прямая на рис.~\ref{fig_spec_case}). На промежутке \eqref{eq5_31} всегда ${\mu_2<0}$. Итак, в особом случае показатели Морса\,--\,Ботта разного знака (этот факт был очевиден, так как траектории гиперболические) и, по доказанному, распределение знаков одинаково на обеих траекториях этого уровня, а именно, при рассмотренном выборе базиса это $(+,-)$. Следует констатировать, что вычисление показателей Морса\,--\,Ботта не дает возможности различить с помощью локального анализа атомы $2B$ и $C_2$.

Получаем следующее утверждение.

\begin{proposition}\label{propos19}
При возрастании интеграла $K$ на изоэнергетическом уровне $\iso$ в точках критической подсистемы $\mo$ на невырожденных критических окружностях имеем следующие бифуркации:

$1)$ для эллиптических траекторий $($тип ``центр''$)$ --- рождение тора при $\ro^2>0$ $($атом $A_+)$, исчезновение тора при $\ro^2<0$ $($атом $A_-)$;

$2)$ для одной гиперболической траектории на критическом уровне $K$ при $\ro^2>0$ --- атом $B_-$ $($``внешнее'' ребро вверх и пара ``внутренних'' ребер вниз$)$, при $\ro^2 < 0$ --- атом $B_+$ $($``внеш\-нее'' ребро вниз и пара ``внутренних'' ребер вверх$)$;

$3)$ для двух гиперболических траекторий на критическом уровне $K$ --- два атома $B$, у которых направление ``внешнего'' ребра определяется по тому же правилу $($обе ``головы'' вверх при $\ro^2 > 0$, обе ``головы'' вниз при $\ro^2 < 0)$, или атом $C_2$.
\end{proposition}

Пункт 4 предложения~\ref{propos14} --- для двух гиперболических траекторий на критическом уровне $K$ имеются разные сочетания знаков в парах $(\mu_1,\mu_2)$ ---  в подсистеме $\mo$ невозможен именно в силу того, что знак $\mu_1$ совпадает со знаком $\ro^2$, то есть одинаков для всех критических окружностей на одном уровне интегралов.

Условия существования движений в подсистемах $\mno$ получены в \cite{mtt40,RyabHarlUdgu2} в терминах интегральных констант $s,h$ и $s,\ell$ соответственно.
Из уравнений \eqref{eq2_28}, \eqref{eq2_29} следует, что $\sgn (\ro^2) =\sgn (Y^2)$ и $\sgn (\ro^2) =\sgn (s \,\Fun^2)$. Пусть $Y_*=Y$, если $\ro$ вещественно, и $Y_*=\ri Y$, если $\ro$ чисто мнимое. Тогда в плоскости $(X,Y_*)$ кривая $\Gamma_0$, заданная уравнением $X^2+Y^2=1$, необходимость которого вытекает из \eqref{eq2_27}, представляет собой окружность или гиперболу, а кривая $\Gamma_1$, заданная уравнением ${\Fun^2(X,Y) = 0}$, при всех $\ro^2 \ne 0$ есть эллипс. Получаем следующее утверждение \cite{mtt40}.
\begin{proposition}\label{propos20}
Для существования вещественных решений \eqref{eq2_27} при заданных $s,h,\ell$, связанных уравнениями поверхностей $\wsi_{2,3}$, необходимо и достаточно выполнение следующих условий:

$1)$ при $s<0$ окружность $\Gamma_0$ и эллипс $\Gamma_1$ имеют общую точку$;$

$2)$ при $s>0, \; \ro^2 \gs 0$ окружность $\Gamma_0$ не лежит целиком строго внутри области, ограниченной эллипсом $\Gamma_1$$;$

$3)$ при $s>0, \; \ro^2 < 0$ гипербола $\Gamma_0$ и эллипс $\Gamma_1$ имеют общую точку.
\end{proposition}

Действительно, непосредственно проверяется, что при $s<0$ точка $X=1,Y=0$ лежит вне эллипса. Движение на $\Gamma_0$ происходит по такому сегменту, где $\Fun^2<0$, то есть внутри эллипса. Поэтому для наличия сегментов окружности внутри эллипса необходимо и достаточно наличие точки пересечения $\Gamma_0\cap \Gamma_1$. При $s>0,\ro^2<0$ в действительном движении $\Fun^2<0$, то есть хотя бы одна точка гиперболы лежит внутри эллипса, что снова означает $\Gamma_0\cap \Gamma_1\ne \varnothing$. Наконец, при $s>0,\ro^2>0$ на окружности должна существовать точка, в которой $\Fun^2>0$, то есть точка вне эллипса. Это не выполняется лишь в том случае, когда окружность $\Gamma_0$ лежит целиком внутри эллипса $\Gamma_1$.

Очевидно соответствие этого утверждения и предложения~\ref{propos17}. При этом количество критических окружностей на фиксированном уровне первых интегралов равно количеству сегментов окружности $\Gamma_0$, лежащих внутри эллипса $\Gamma_1$, при $s<0$, и количеству сегментов гиперболы $\Gamma_0$, лежащих внутри эллипса $\Gamma_1$, при $s>0,\ro^2<0$. При $s>0,\ro^2>0$ количество критических окружностей равно количеству сегментов окружности вне эллипса, за исключением особого случая \eqref{eq5_30}. В этом случае вся окружность лежит вне эллипса, то есть точек пресечения нет, сегмент один, но траекторий две в силу возможности выбора двух знаков перед не обращающимся в нуль радикалом $\Fun$.

Напомним обозначение $D=\sqrt{4+r^2(r-\ld)^2}>0$ и в дополнение к~\eqref{eq5_5} положим
\begin{equation}\notag
    \begin{array}{l}
    \theta_{\pm}(r)=\ds{\frac{r-\ld}{4\ld}\bigl[r(r-\ld) \mp D\bigr]},\\
    \eta_{\pm}(r)= \ds{\frac{1}{2}\bigl[  \ld (r-\ld) \pm D\bigr]} \sqrt{\psi_{\pm}(r)},
    \end{array}
\end{equation}
Как следует из \eqref{eq4_11},\eqref{eq4_15} функции $\varphi_{\pm}(r)$, $\theta_{\pm}(r)$, $\eta_{\pm}(r)$ являются выражениями интегралов $H,S,L$ в критических точках ранга 0, принадлежащих подсистемам $\mn,\mo$.

Следующая теорема соответствует предложению 4 работы \cite{RyabHarlUdgu2}.
\begin{theorem}\label{th11}
$(S,L)$-диаграмма критической системы $\mn$ состоит из следующих множеств:
\begin{equation*}
    \begin{array}{ll}
      \delta_1: \quad s=\theta_-(r), & \ell =\pm \eta_-(r), \quad r \in [0,\ld);\\
      \delta_3: \quad s=\theta_+(r), & \ell =\pm \eta_+(r), \quad r \in (\ld,+\infty).
    \end{array}
\end{equation*}
Внешней границей допустимой области $\mD_2$ служит связная кривая $\delta_1$. Кривая $\delta_3$ состоит из двух компонент и разбивает $\mD_2$ на три подобласти. Точкам подобласти, содержащей значения $\ell=0$, отвечает одна критическая окружность, точкам двух других подобластей, ограниченных кривой $\delta_3$, отвечают две критических окружности. Качественных перестроек диаграммы по параметру $\ld$ не происходит, кроме предельного случая $\ld=0$.
\end{theorem}

\begin{figure}[htp]
\centering
\includegraphics[width=0.5\textwidth,keepaspectratio]{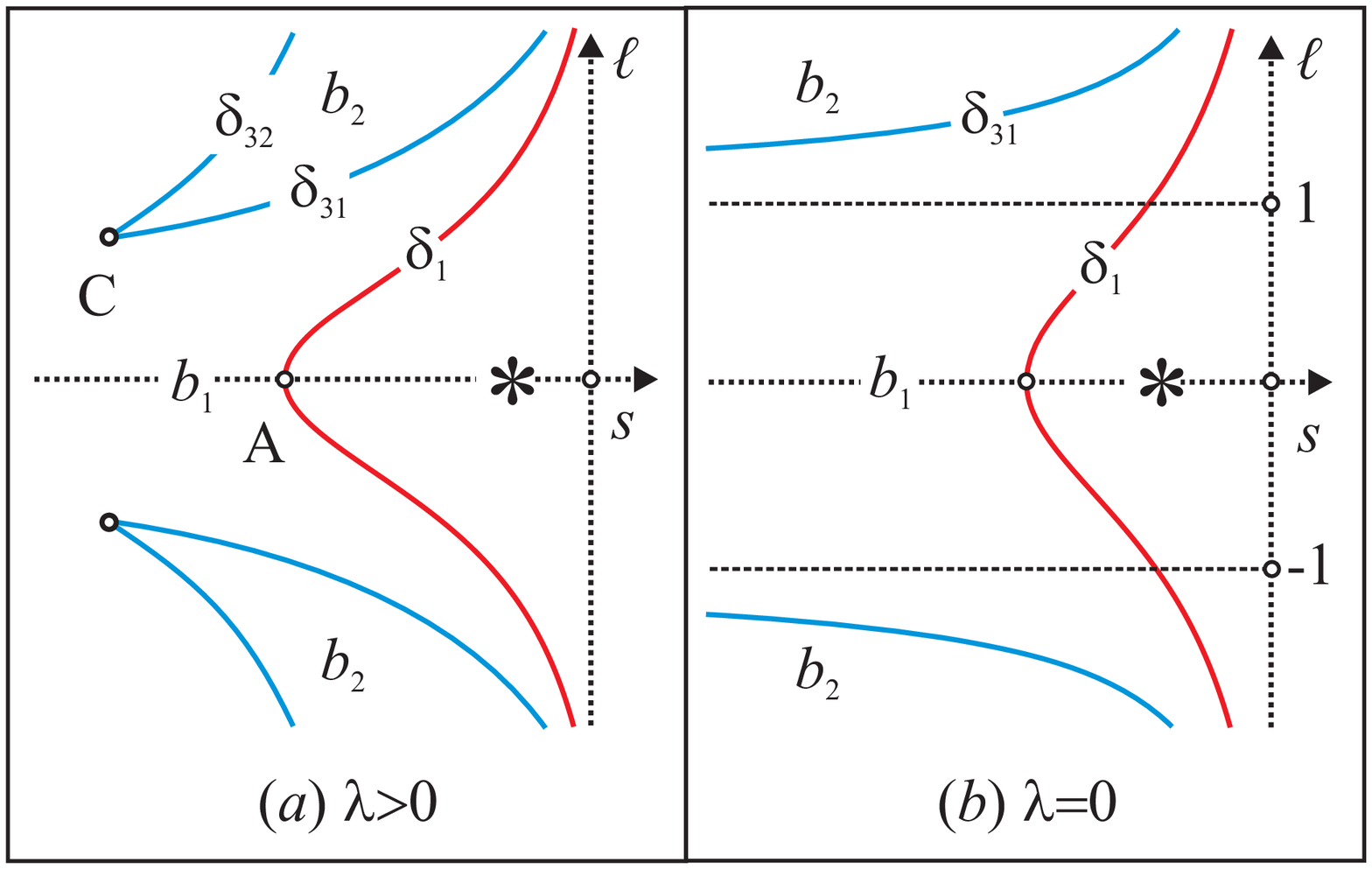}\\
\caption{$(S,L)$-диаграмма системы $\mn$ с полной детализацией.}\label{fig_M2_LKeyAB_red}
\end{figure}

На рис.~\ref{fig_M2_LKeyAB_red} вместе с $(S,L)$-диаграммой второй критической подсистемы (для общего случая $(a)$ $\ld>0$ и для предельного случая $(b)$ $\ld=0$) показаны область $b_1$ с одной критической окружностью  и две симметричных относительно $\ell=0$ области $b_2$ с двумя критическими окружностями. В области, помеченной звездочкой, $\Gamma_0\cap \Gamma_1\ne \varnothing$, поэтому движений нет. Согласно предложению~\ref{propos12} все точки ранга 1 имеют тип ``центр''. Ключевое множество здесь --- только критические точки ранга $0$. Как обычно, образы особых точек, порожденных экстремальными значениями первых интегралов на ключевом множестве, для одних и тех же точек в прообразе на $\mm$ и $\mn$ обозначены одинаково (здесь это точки $A$ и $C$). Свойства соответствующих атомов собраны в табл.~\ref{table4}.

\begin{table}[ht]
\centering
\small
\begin{tabular}{|c|c| c| c| c|c|}
\multicolumn{6}{r}{\fts{Таблица \myt\label{table4}}}\\
\hline
\begin{tabular}{c}\fts{Область}\\[-3pt]\fts{(время жизни)} \end{tabular}
&\begin{tabular}{c}\fts{К-во}\\[-3pt]\fts{окр-стей}\end{tabular}&\begin{tabular}{c}\fts{Показатели}\\[-3pt]\fts{Морса--Ботта}\end{tabular}
&\begin{tabular}{c}\fts{Выход на}\\[-3pt]\fts{$\ld=0$/$\ell=0$}\end{tabular}
&\begin{tabular}{c}\fts{Атом}\end{tabular} &\begin{tabular}{c}\fts{Аналоги}\end{tabular}\\
\hline
\begin{tabular}{c}$b_1$\\($ 0 \ls \ld <+\infty$) \end{tabular} &{1}& {($+\;+$)} & Да/Да &$A_+$ & \begin{tabular}{l} 1 \cite[Рис.\,6.3]{KhBook88}\\ $a_1$ \cite[Рис.\,2]{RyabRCD}\\$\alpha_1$ \cite[Рис.\,11]{BRF}\\$\alpha_1$ \cite[Рис.\,1]{Mor} \end{tabular}\\
\hline
\begin{tabular}{c}$b_2$\\($ 0 \ls \ld <+\infty$) \end{tabular} &{2}& {\begin{tabular}{c}($+\;+$),($+\;+$)\end{tabular}} & Да/Нет &$2A_+$ & \begin{tabular}{l} Переход \ts{III$\to$VI}\\ \cite[Рис.\,6.1{\it д}]{KhBook88}\\$\alpha_2$ \cite[Рис.\,11]{BRF}\end{tabular}\\
\hline
\end{tabular}
\end{table}

Из полученных результатов по критическим подсистемам $\mm,\mn$ вытекает следующее простое описание допустимой области $\mD$ в пространстве $\bbI$ констант общих интегралов~-- образа фазового пространства $\mP^5$ под действием отображения момента \eqref{eq3_6}. Считаем $\ld$ заданным.

\begin{theorem}\label{th12}
Допустимая область $\mD=J(\mP^5)$ есть односвязное множество, внешней границей которого служат образы областей $a_1,a_{12}$ первой критической подсистемы и областей $b_1,b_2$ второй критической подсистемы.
\end{theorem}
\begin{proof}
Фиксируем точку $(\ell,h)$, для которой $\iso \ne \varnothing$. Множество таких точек --- односвязная область на плоскости $O\ell h$, ограниченная снизу кривой $\delta_1$ (см. рис.~\ref{fig_sm1}), гомеоморфная замкнутой полуплоскости. Все $\iso$ --- компактны и связны, поэтому $K(\iso)$ --- отрезок, стягивающийся в точку над $\delta_1$. В частности, $\mD$ --- приведенное расслоение отрезков над замкнутой полуплоскостью, гомеоморфное (но не диффеоморфное) замкнутому полупространству.

Фиксируем $h$, рассмотрим компактное инвариантное подмножество $Q_h^4=H^{-1}(h) \subset \mP^5$ и  бифуркационную диаграмму $\Sigma_{LK}(h)$ отображения
\begin{equation}\notag
    L{\times}K|_{Q_h^4}: Q_h^4 \to \bR^2.
\end{equation}
Очевидно, что $\Sigma_{LK}(h)$ есть образ при отображении момента пересечений $Q_h^4$ с критическими подсистемами $\mi_i$.
Рассмотрим замкнутую область, ограниченную образом полосы между $\delta_1,\ell_0$ плоскости $(p,h)$ на поверхности $\wsa$ (то есть образом областей $a_1,a_{12}$) и образом всей подсистемы $\mn$ на $\wsb$ (то есть образом областей $b_1,b_2$). Образ подсистемы $\mo$ заведомо имеет точки внутри этой кривой (например, при $\ell=0$). В пространстве $\bbI$ образ $\mo$ связен, поэтому если бы существовали точки в $J(\mo)$ за пределами указанной области, то имелось бы трансверсальное пересечение образа $J(\mo \cap Q_h^4)$ и замыкания объединения кривых $a_1 \cup a_{12}$ (легко понять просто из геометрии $\wsa,\wsc$, что общая часть, служащая кривой касания, за пределы этой области поверхность $\wsc$ не выводит). Такое пересечение происходит по части кривой $\delta_2$ на границе между $a_1,a_{12}$, однако, по доказанному, на площадке, трансверсальной к $\mm$ в прообразе соответствующих точек диаграммы системы $\mm$, все критические точки ранга 0 и 1 имеют эллиптический тип, поэтому выход за пределы указанной ограниченной области невозможен.
\end{proof}

Различные типы оболочки сечений множества $\Sigma \subset \bbI$ плоскостями $h=\cons$ приведены на рис.~\ref{fig_LK_admreg}: $(a)$~$-1<h<\ld^2/2$; $(b)$~$\ld^2/2<h<h_C(\ld)$; $(c)$~$h>h_C(\ld)$ (без сечений поверхности $\wsc$, целиком лежащей внутри). При $h=-1$ сечение допустимой области стягивается в точку. В случае $(c)$ пунктиром показан образ области $\overline{a_2}$, для точек которой, как показано выше, не существует вещественных движений. Таким образом, допустимая область имеет форму ``носа ладьи'', причем вначале ``палуба'' гладкая, а затем в середине возникает ``ребро''.

\begin{figure}[htp]
\centering
\includegraphics[width=110mm,keepaspectratio]{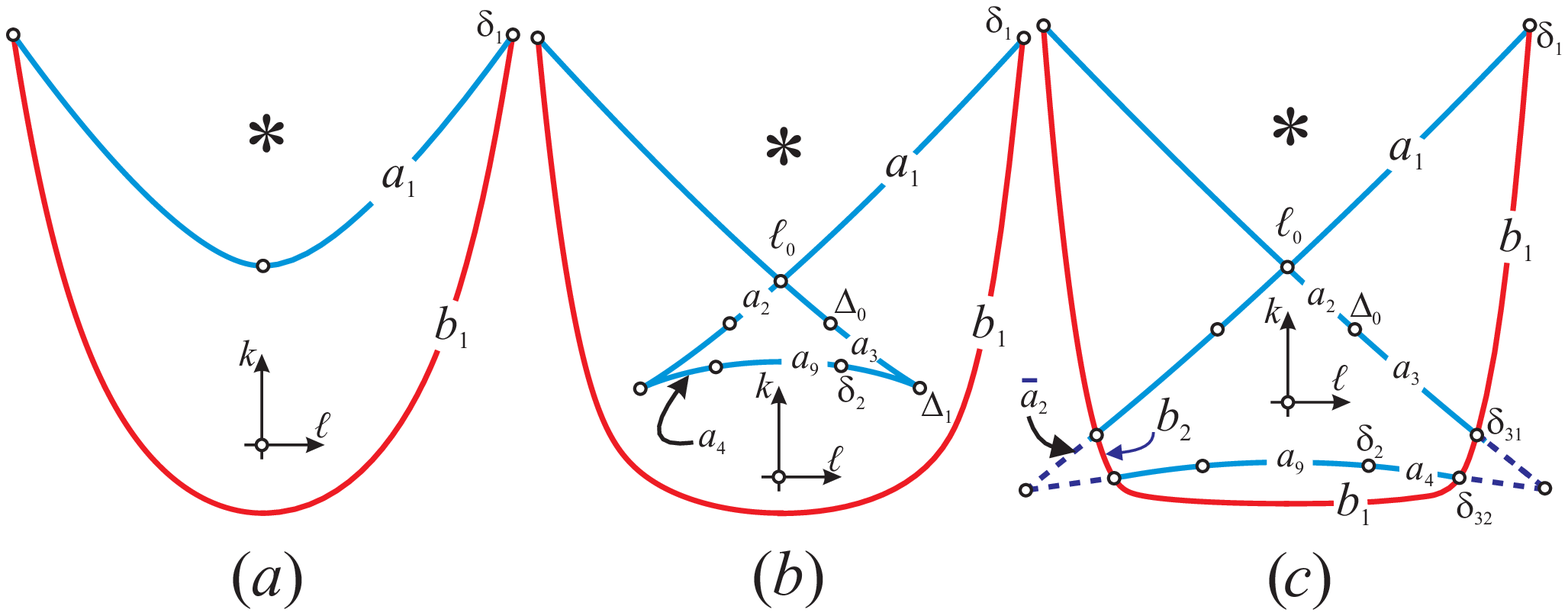}\\
\caption{Сечения допустимой области плоскостями $h=\cons$.}\label{fig_LK_admreg}
\end{figure}

\begin{figure}[htp]
\centering
\includegraphics[width=60mm,keepaspectratio]{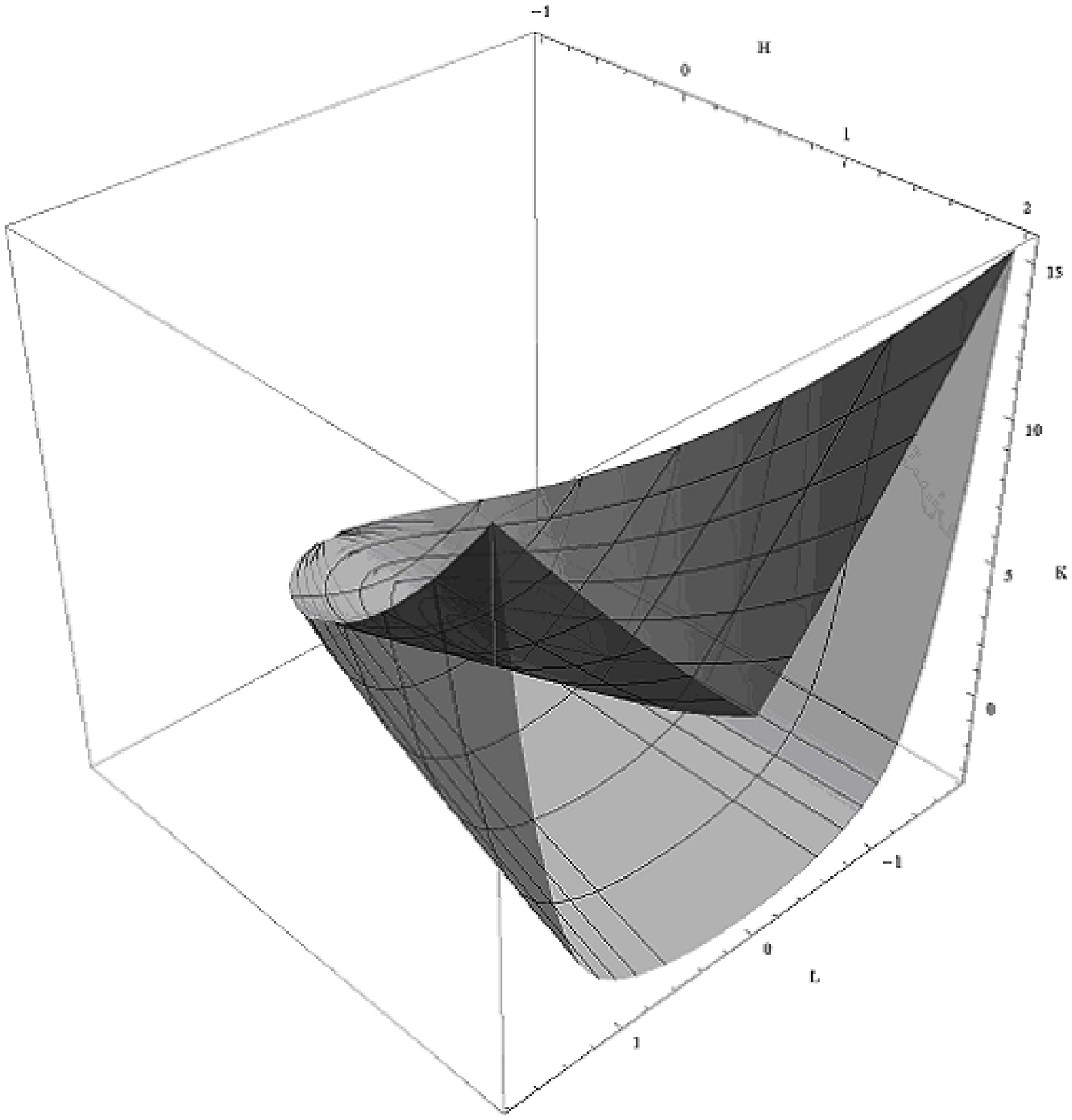}\\
\caption{Носовая часть допустимой области.}\label{fig_boat}
\end{figure}

Перейдем к классификации областей в образе подсистемы $\mo$. Полное описание допустимой области на плоскостях констант интегралов $S,H$ и $S,L$ дано в работах \cite{mtt40,RyabHarlUdgu2}. Следующая теорема соответствует предложению 5 работы \cite{RyabHarlUdgu2}.

\begin{theorem}\label{th13}
$(S,L)$-диаграмма критической системы $\mo$ состоит из следующих множеств:
\begin{equation*}
    \begin{array}{ll}
      \delta_2: & s=\theta_+(r), \quad \ell=\pm \eta_+(r),  \quad r \in (-\infty,0];\\[2mm]
      \gan: & \ds{\ell= \pm \sqrt{\frac{s}{2}(1-2\ld^2s)}}, \quad 0< s \ls \ds{\frac{1}{2\ld^2}};\\[2mm]
      \gac: &  s=\ds{\frac{1}{2\ld^{2/3}}}, \quad
      \left\{
      \begin{array}{ll}  \ell \in \bR, & \ld \ls \ld^*\\
                      |\ell| \gs  \ds{\frac{2\ld^{2/3}-\sqrt{4+\ld^{4/3}}}{\sqrt{2}(\sqrt{4+\ld^{4/3}}-\ld^{2/3} )^{1/2}}}, & \ld > \ld^*
      \end{array}
      \right. .
      \end{array}
\end{equation*}
В допустимую область $\mD_3$ не входят следующие компоненты дополнения к диаграмме, в которых не существует критических движений: при всех $\ld$ --- область, прилегающая к оси $s=0$ и ограниченная ветвями кривых $\gan, \delta_2$, при $\ld>\ld^*$ --- область, ограниченная кривой $\delta_2$ между двумя ее точками пересечения с осью $\ell=0$ при $r \neq 0$.

Перестройки типов диаграмм в области $\ld \gs 0$ происходят при следующих значениях параметра: $0, \ld_*, \, 1,\,\ld^*, \, \sqrt{2}$.
\end{theorem}

Неравенства для $r,s,\ell$, определяющие допустимую область для точек ключевого множества, мгновенно следуют из критерия, данного в предложении \ref{propos17}.

Для критических точек подсистем $\mno$ отдельно сформулируем утверждение о существовании вырожденных периодических решений, аналогичное предложению~\ref{propos16}.

\begin{proposition}\label{propos21}
В допустимую область на бифуркационной диаграмме $\Sigma$ отображения момента $J$ входит следующий сегмент ребра возврата $\wsc$ -- образ вырожденных критических движений ранга $1$:
\begin{equation}\notag
    \gac: \quad \ds{\ell = \pm \sqrt{\frac{h-h^*}{2}}}, \quad h \in \left\{
    \begin{array}{ll}
    \ [h^*,+\infty ), & \textrm{при} \quad \ld \leqslant \ld^* \\
    \ [h^{**},+\infty), & \textrm{при} \quad \ld > \ld^*
    \end{array}      \right.,
\end{equation}
где
\begin{equation}\notag
\begin{array}{ll}
  h^* & = \ds{\frac{\ld^{2/3}}{2}(3-\ld^{4/3})}, \\[3mm]
  h^{**} & = \ds{\frac{1}{4}\left[(4+\ld^{4/3})^{3/2}-\ld^{2/3}(6+\ld^{4/3})\right]}=\\[3mm]
  {} & = \ds{\frac{1}{8}\left(\sqrt{4+\ld^{4/3}}-\ld^{2/3}\right)^2\left(2\sqrt{4+\ld^{4/3}}+\ld^{2/3} \right)}.
\end{array}
\end{equation}
или, в терминах констант $s,\ell$
\begin{equation}\label{eq5_32}
    \gac: \quad s=\frac{1}{2\ld^{1/3}}, \qquad \left\{ \begin{array}{ll}
    \ell \in \bR, & \ld \ls \ld^* \\
    \ds{|\ell| \gs \frac{2 \ld^{2/3} - \sqrt{4+\ld^{4/3}}}{\sqrt{2}(\sqrt{4+\ld^{4/3}}-\ld^{2/3} )^{1/2}}}, & \ld >\ld^*
    \end{array} \right. .
\end{equation}
\end{proposition}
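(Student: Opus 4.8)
The plan is to recognize $\gac$ as the image, carried inside the surface $\wsc$, of the rank-$1$ critical points of $J$ that lie on the critical subsystem $\mo$, and then to push their description forward through $J$.

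First I would invoke Proposition~\ref{propos12}: a point of $\mo$ has rank $1$ with respect to $J$ exactly where $C_2=0$, i.e.\ where $(8\ld^2s^3-1)(2\ld^2s^2-s+2\ell^2)=0$ (equivalently, where $(8\ld^2s^3-1)[2s^2-2(h+\tfrac{\ld^2}{2})s+1]=0$, since the last bracket equals $-2s(2\ld^2s^2-s+2\ell^2)$ on $\wsc$ by \eqref{eq5_28}). The vanishing of the second factor is precisely the condition that the point belong to $\mct^0$ — it is the common boundary $\gan$ of $\wsc$ with the adjacent sheet, already recorded in Proposition~\ref{propos16} and in Theorem~\ref{th13}, and it contributes to $\Sigma$, not to the interior of $\wsc$. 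Hence the genuinely new rank-$1$ family inside $\wsc$ is cut out by the single extra equation $8\ld^2s^3=1$, that is $s=\tfrac{1}{2}\ld^{-2/3}$, which is exactly the curve $\gac$ of Theorem~\ref{th13}.

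Next I would substitute this constant value of $s$ into the $h$-equation of \eqref{eq5_28}. With $\tfrac{1}{2s}=\ld^{2/3}$ and $4s^2=\ld^{-4/3}$ one gets $\tfrac{\ld^2}{2}(1-4s^2)=\tfrac{\ld^2}{2}-\tfrac{\ld^{2/3}}{2}$, whence $h=2\ell^2+\tfrac{3}{2}\ld^{2/3}-\tfrac{\ld^2}{2}=2\ell^2+h^*$ with $h^*=\tfrac{\ld^{2/3}}{2}(3-\ld^{4/3})$; equivalently $\ell=\pm\sqrt{(h-h^*)/2}$, which is the stated equation of $\gac$ in the $(\ell,h)$ variables. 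For the admissible range of the parameter I would rely on Theorem~\ref{th13}: along $s=\tfrac12\ld^{-2/3}$ the subsystem $\mo$ actually carries rank-$1$ critical points (i.e.\ the fibre is non-empty) for every $\ell\in\bR$ when $\ld\le\ld^*$, and only for $|\ell|\ge\ell_{\min}:=\dfrac{2\ld^{2/3}-\sqrt{4+\ld^{4/3}}}{\sqrt2\,(\sqrt{4+\ld^{4/3}}-\ld^{2/3})^{1/2}}$ when $\ld>\ld^*$ (the inequality $2\ld^{2/3}>\sqrt{4+\ld^{4/3}}$ being equivalent to $3\ld^{4/3}>4$, i.e.\ $\ld>\ld^*$). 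Substituting $\ell=\ell_{\min}$ into $h=2\ell^2+h^*$ and rationalizing the denominator by $\sqrt{4+\ld^{4/3}}+\ld^{2/3}$ (using $(4+\ld^{4/3})-\ld^{4/3}=4$), then collecting terms, gives $h^{**}=\tfrac14\big[(4+\ld^{4/3})^{3/2}-\ld^{2/3}(6+\ld^{4/3})\big]$; the equivalent factored form $\tfrac18(\sqrt{4+\ld^{4/3}}-\ld^{2/3})^2(2\sqrt{4+\ld^{4/3}}+\ld^{2/3})$ follows by the same identity $(\sqrt{4+\ld^{4/3}})^2=4+\ld^{4/3}$.

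The delicate part is the range of the parameter — deciding which arc of the parabola $h=2\ell^2+h^*$ consists of images of critical points with non-empty fibre, and in particular pinning down the threshold $\ld^*$. I would not redo this from scratch: it is precisely the analysis carried out in Theorem~\ref{th13}, where one uses Proposition~\ref{propos17} to determine when the quartic $Z_+$ governing the $\mo$-trajectories keeps the required sign, the value $\ld^*$ appearing from the intersection of $\gac$ with the boundary curve $\delta_2$ of $\mo$. The remaining content of Proposition~\ref{propos21} is then the purely algebraic change of coordinates $(s,\ell)\mapsto(\ell,h)$ on $\gac$ and the verification of the closed forms for $h^*$, $h^{**}$ and for $\ell_{\min}$, which are routine.
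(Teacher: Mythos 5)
Your proposal is correct and follows essentially the same route as the paper: the $(s,\ell)$-form is taken from Theorem~\ref{th13} (you additionally re-derive $s=\tfrac12\ld^{-2/3}$ from the factor $8\ld^2s^3-1$ in Proposition~\ref{propos12}, while the paper equivalently observes that this $s$ is the unique positive minimizer of $h_0(s)=\tfrac{1}{2s}-\tfrac{\ld^2}{2}+2\ld^2s^2$ in $h=2\ell^2+h_0(s)$ from \eqref{eq5_28}), and the values $h^*$ and $h^{**}$ are obtained by the same substitutions $h=2\ell^2+h^*$ and $\ell=\ell_{\min}$. All the algebra you carry out (including the equivalence $2\ld^{2/3}>\sqrt{4+\ld^{4/3}}\Leftrightarrow\ld>\ld^*$ and the factored form of $h^{**}$) checks out.
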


Утверждение в координатах $(s,\ell)$ --- часть теоремы~\ref{th13}. Пересчет на плоскость $(\ell,h)$ выполнен по формулам \eqref{eq3_8}. Из них следует, что на $\wsc$
\begin{equation}\label{eq5_33}
  h=2\ell^2+h_0(s), \qquad h_0(s)=\frac{1}{2s}-\frac{\ld^2}{2}+2\ld^2s^2.
\end{equation}
Нетрудно видеть, что наименьшее значение $h_0(s)$ достигается в точности при $s=\frac{1}{2\ld^{1/3}}$, то есть в точке пересечения кривой минимума $h$ на $\mo$ с образом ребра возврата $\gac$, и это значение есть
$$
\min_{s>0} h_0(s) = h^*.
$$
Тогда при $\ld \ls \ld^*$ и $\ell\in \bR$ имеем $h\in [h^*,+\infty)$. Граничная точка -- новая особая точка на ключевых множествах, имеющая значение при рассмотрении $H$-атласов, так она является точкой экстремума $h$-координаты на образе $\gac$ семейства вырожденных точек ранга 1:
\begin{equation}\label{eq5_34}
D_5: \, \ell=0, \quad s_2=\frac{1}{2\ld^{1/3}}, \quad h=\frac{\ld^{2/3}}{2}(3-\ld^{4/3}), \quad \ld \in [0,\ld^*].
\end{equation}

В случае же $\ld >\ld^*$, вычисляя значение \eqref{eq5_33} на $\gac$ при наименьшем допустимом значении $|\ell|$ из \eqref{eq5_32}, приходим к величине $h^{**}$. Эти значения отвечают ранее полученной особой точке $B_3$ в составе ключевых множеств.

На следующих рисунках~\ref{fig_M3_LKeyABC_red} и \ref{fig_M3_LKeyDEF_red} представлены $(S,L)$-диаграммы критической подсистемы $\mo$: $(a)$~$0<\ld<\ld_*$; $(b)$~$\ld_*<\ld<1$; $(c)$~$1<\ld<\ld^*$; $(d)$~$\ld^*<\ld<\sqrt{2}$; $(e)$~$\ld>\sqrt{2}$; $(f)$~предельный случай $\ld=0$. Здесь уместно напомнить обозначения
$$
\ld_*=1/2^{3/4}, \qquad \ld^*=2\sqrt{2}/3^{3/4}.
$$
Пунктиром показаны кривые вырождения $\gan,\gac$ за исключением той части кривой $\gan$, которая является внешней границей допустимой области $\mD_3$ --- она показана сплошной линией. Как и ранее, звездочкой отмечены области (здесь это $\overline{c_1},\overline{c_2}$), в которых критические движения отсутствуют. На диаграммах нанесена детализация ключевых множеств в соответствии с данной выше классификацией критических точек ранга 0, а также очевидным разбиением на гладкие участки кривых $\gan$ (разбиение уже получено ранее в подсистеме $\mm$) и $\gac$. Напомним, что это разбиение порождено общими точками этих кривых с образом точек ранга 0 и точками взаимного пересечения. Соответствующая аналитика будет предъявлена в следующем разделе.

\def\wid{0.7}
\begin{figure}[!htp]
\centering
\includegraphics[width=\wid\textwidth, keepaspectratio]{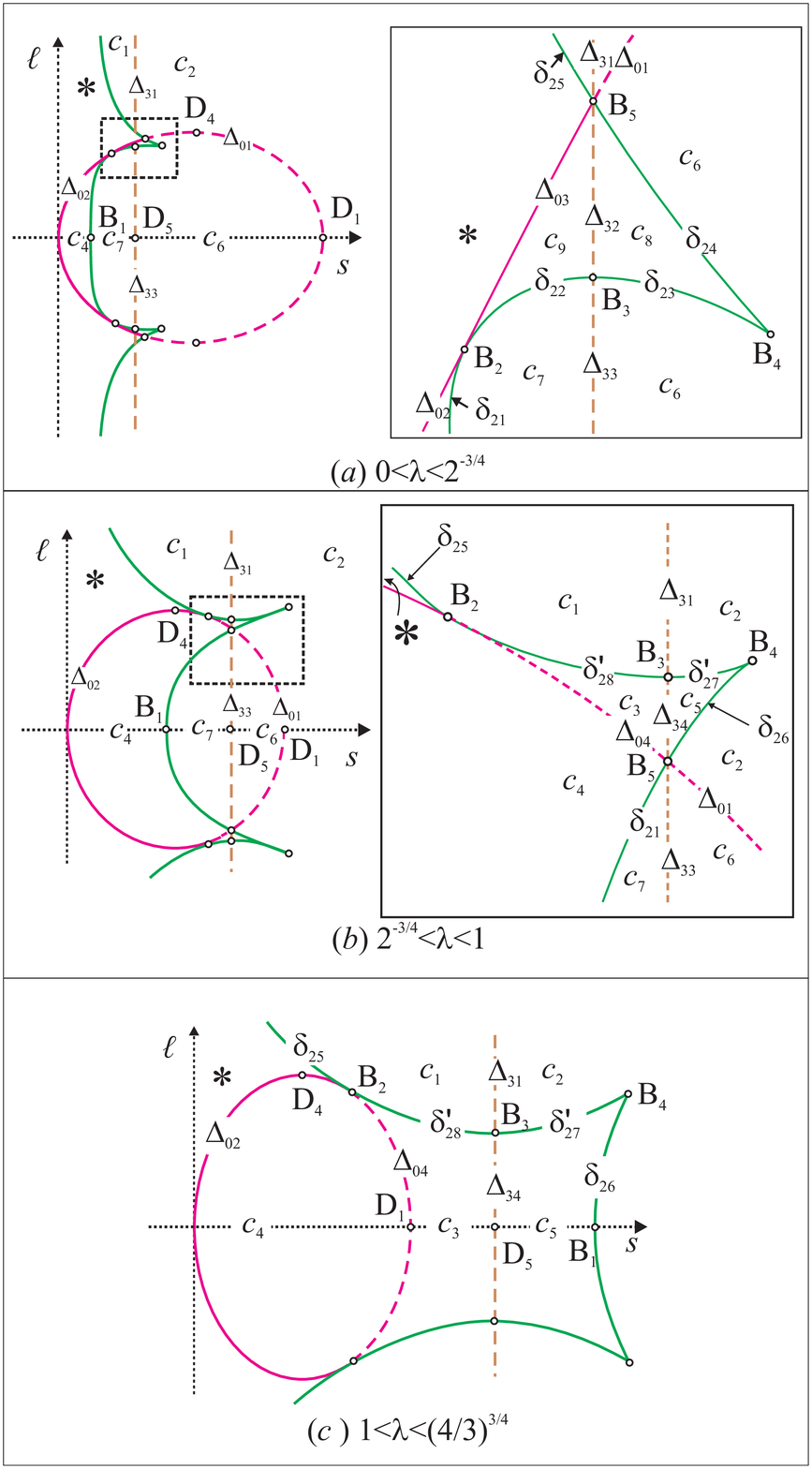}\\
\caption{$(S,L)$-диаграммы подсистемы $\mo(\ld)$ с полной детализацией.}\label{fig_M3_LKeyABC_red}
\end{figure}

\begin{figure}[!htp]
\centering
\includegraphics[width=\wid\textwidth, keepaspectratio]{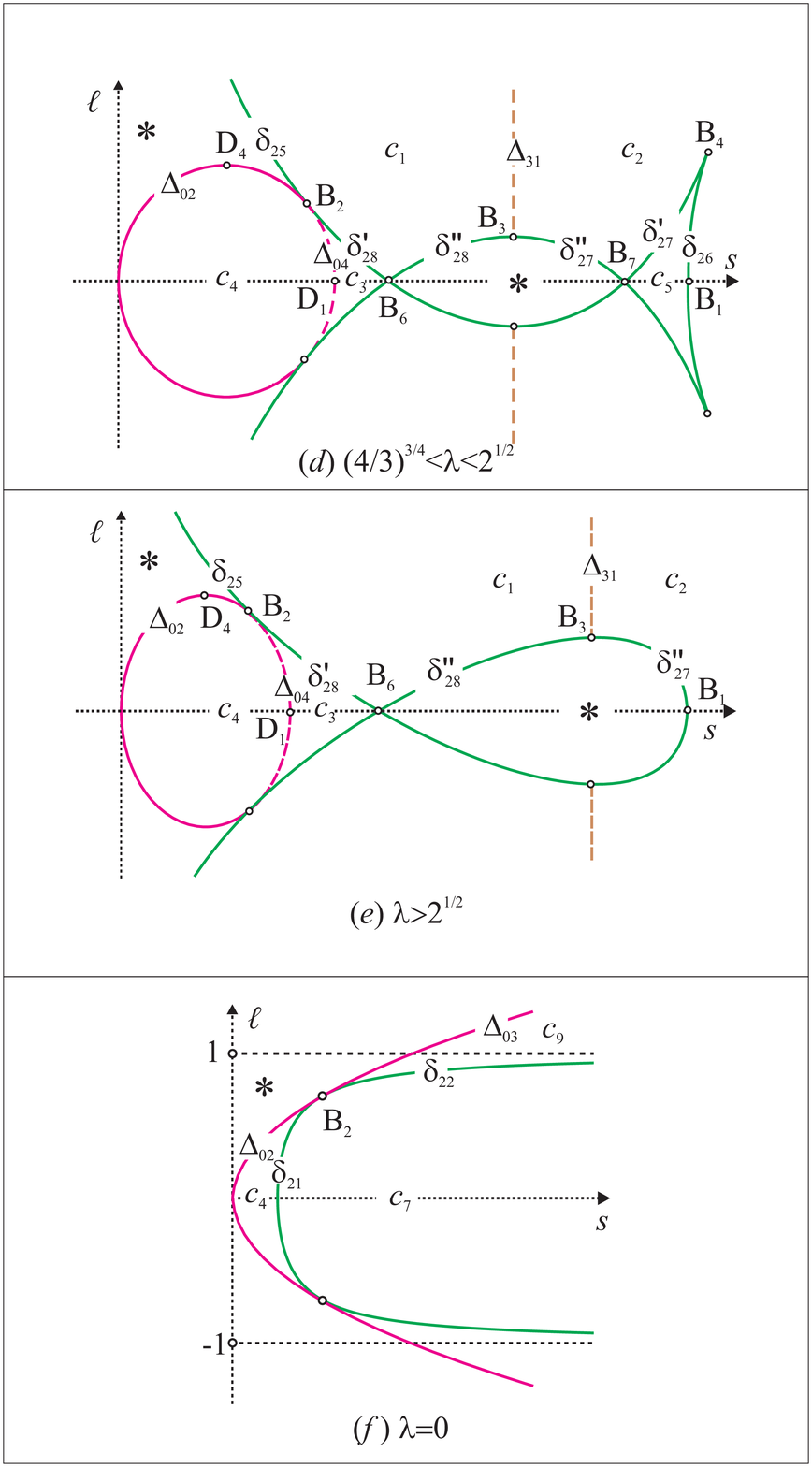}
\caption{$(S,L)$-диаграммы подсистемы $\mo(\ld)$ с полной детализацией (продолжение).}\label{fig_M3_LKeyDEF_red}
\end{figure}

Теорема \ref{th13} в утверждениях относительно кривых в составе диаграммы является следствием определения диаграммы. Перестройки диаграммы по $\ld$ вычисляются непосредственно (детали см. в \cite{RyabHarlUdgu2}) или же по общему свойству: это разделяющие значения $\ld$ для сечений $\ld=\cons$ семейства разделяющих кривых для точек ранга 0 в плоскости $(r,\ld)$ и совокупности экстремумов $\ell$ на образах вырожденных точек ранга 1. Для доказательства теоремы в части существования движений достаточно проанализировать распределение корней и знаков старших коэффициентов соответствующих многочленов $Z_{\pm}$ и воспользоваться предложением~\ref{propos17}. Наглядное доказательство дает компьютерная визуализация кривых $\Gamma_0,\Gamma_1$ в соответствии с предложением~\ref{propos20}. Тот факт, что в области $\overline{c_1}$ (см. рис.~\ref{fig_M3_LKeyABC_red},\,$(a)$), примыкающей к оси $O\ell$, движения отсутствуют, практически очевиден, поскольку $s \ne 0$. Отсутствие критических движений в области $\overline{c_2}$ (рис.~\ref{fig_M3_LKeyDEF_red},\,$(d),(e)$) ранее было доказано в \cite{RyabDis} путем исследования точек оси $\ell=0$.

На рис.~\ref{fig_M3_LKeyABC_red}, \ref{fig_M3_LKeyDEF_red} также приведены и особые точки, возникающие на $(S,L)$-диаграммах подсистемы $\mo$. Непосредственно проверяем, что все они уже встречались в подсистеме $\mm$, за исключением введенной для использования в $H$-атласе точки $D_5$, описанной выше и заданной уравнениями \eqref{eq5_34}.

Переформулируем теорему \ref{th13} в терминах интегралов $S,H$ \cite{mtt40}.
\begin{theorem}\label{th13h}
$(S,H)$-диаграмма критической системы $\mo$ состоит из кривых
\begin{equation}\notag
    \begin{array}{lll}
      \gan: & h=\htan(s), & 0< s \ls \ds{\frac{1}{2\ld^2}};\\[2mm]
      \delta_2: & h=\varphi_+(r), & s=\theta_+(r), \quad r \in (-\infty,0]; \\[2mm]
     \gac: &  s=\ds{\frac{1}{2\ld^{2/3}}}, &
      \left\{
      \begin{array}{ll}  h \gs h^*, & \ld \ls \ld^*\\
                      h \gs h^{**}, & \ld > \ld^*
      \end{array}
      \right. ;\\
      h_{\rm min}: & h=h_{0}(s) , & s \in I(\ld),
    \end{array}
\end{equation}
где
\begin{equation}\notag
  \htan(s)=\ds{\frac{1-\ld^2 s+2 s^2}{2s}},
\end{equation}
а область изменения $s$ на кривой $h_{\rm min}$ определяется так:
\begin{equation}\notag
    I(\ld)= \left\{
    \begin{array}{ll}
      (0,+\infty), & \ld \ls \ld^* \\
      (0, s_0] \cup [s^0, +\infty), & \ld^* \ls \ld \ls \sqrt{2}\\
      (0, s_0] \cup [1/2, +\infty), & \ld \gs \sqrt{2}
    \end{array}\right.,
\end{equation}
$s_0(\ld),s^0(\ld)$ -- абсциссы точек касания кривых $\delta_2$ и $h_{\rm min}$, существующих при $\ld \gs \ld^*$.  Они удовлетворяют неравенствам $s_0(\ld)<s^0(\ld)$ при $\ld > \ld^*$ и  $s^0(\ld)<1/2$ при $\ld^* \ls \ld <\sqrt{2}$.

Внешними границами допустимой области служат:

$1)$ кривая $h_{\rm min}$ в пределах $s\in I(\ld);$

$2)$ кривая $\gan$ в пределах
\begin{equation}\notag
    s \in  \left\{
    \begin{array}{ll}
      (0, 1/(2 \ld^{2/3})] , & \ld \ls \ld_* \\
      (0, \sqrt{1+\ld^4}-\ld^2], & \ld \gs \ld_*
    \end{array}
    \right. ;
\end{equation}

$3)$ кривая $\delta_2$ в пределах
\begin{equation}\notag
   s\in
   \left\{
   \begin{array}{ll}
      \, [s_0, s^0]  , & \ld^* \ls \ld \ls \sqrt{2} \ \\
      \, [s_0, 1/2]    , & \ld \gs \sqrt{2}
   \end{array}
   \right. .
\end{equation}

Перестройки типов диаграмм в области $\ld \gs 0$ происходят при значениях параметра $0, \ld_*, \, 1,\,\ld^*, \, \sqrt{2}$.
\end{theorem}

Кривая $h_{\rm min}$ отвечает значению $\ell=0$ и ее появление естественно связано со складкой при накрытии поверхностью $\wsc$ плоскости $(s,h)$, поскольку $h$ выражается через $\ell^2$. Полное исследование $(S,H)$-диаграмм подсистемы $\mo$ и их компьютерная визуализация выполнены в работах \cite{mtt40,SaKhSh2012}. Значения $s_0,s^0$ соответствуют особым точкам $B_6,B_7$, отмеченных на $(S,L)$-диаграммах на рис.~\ref{fig_M3_LKeyDEF_red},$(d),(e)$. Эти точки существуют при $\ld \gs \ld^*$ (совпадая при $\ld=\ld^*$), причем в момент $\ld=\sqrt{2}$ точка $B_7$ сливается с $B_1$ и затем переходит в недопустимую область (на $\delta_2$ соответствующее значение $r$ было бы положительным, что не так). Доказательства и пояснения имеются в работе \cite{mtt40}. Они легко следуют из уравнения связи $s$ и $\ld$ в этих точках, полученных исключением вспомогательного параметра $x$ из соответствующего параметрического представления \eqref{eq5_21} значений $s_2,\ld$ в этих точках (см. рис.~\ref{fig_s00}):
\begin{equation}\notag
\ld = \frac{\sqrt{1+2s}-\sqrt{1-2s}}{(2s)^{3/2}}, \qquad s \in (0,\frac{1}{2}].
\end{equation}

\begin{figure}[!ht]
\centering
\includegraphics[width=0.4\textwidth,keepaspectratio]{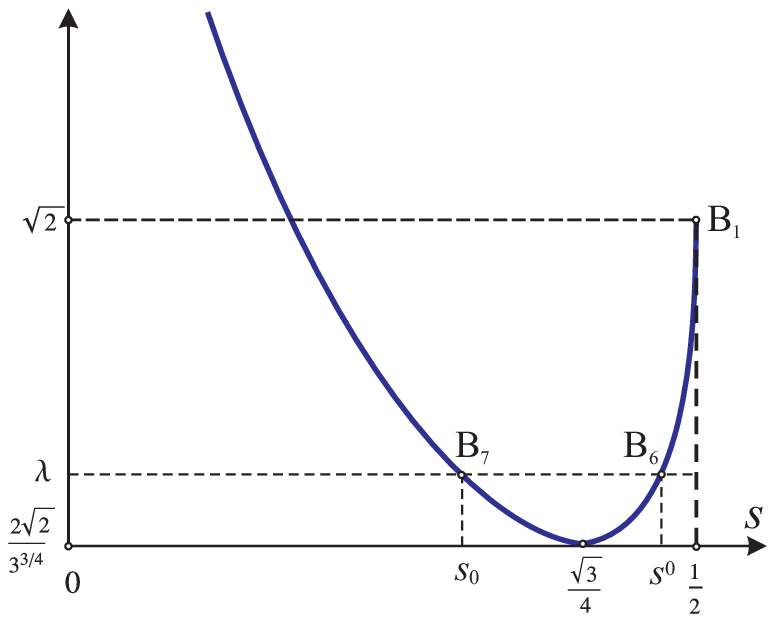}
\caption{Связь $s$ и $\ld$ в точках $B_{6,7}$.}\label{fig_s00}
\end{figure}

Применение полученных результатов к точкам областей $c_1-c_9$ в $(s,\ell)$-образе подсистемы $\mo$ приводит к описанию характеристик и атомов, собранных в табл.~\ref{table5}.
Как видим, все области, кроме $c_1,c_8$, при рассмотрении расширенных диаграмм в пространстве $(s,\ell,\ld)$ имеют выход на соответствующие области исследованных ранее задач ($\ld=0$ или $\ell=0$), поэтому для атомов здесь добавлена лишь их направленность. В частности, наличие атома $C_2$ в области $c_4$ и двух атомов $B$ в области $c_9$ обосновано в работах \cite{KhPMM83, KhDan83, KhBook88} (в других обозначениях). Наличие двух атомов $B$ в области $c_5$ следует из результатов \cite{RyabRCD}.
В новых областях $c_1,c_8$ по доказанному выше критические окружности имеют эллиптический тип, их количество вычисляется по приведенным критериям, а направленность определяется показателями Морса\,--\,Ботта.

\begin{center}
\small

\begin{longtable}{|c|c| c| c| c|c|}
\multicolumn{6}{r}{\fts{Таблица \myt\label{table5}}}\\
\hline
\begin{tabular}{c}\fts{Область}\\[-3pt]\fts{(время жизни)} \end{tabular}
&\begin{tabular}{c}\fts{К-во}\\[-3pt]\fts{окр-стей}\end{tabular}&\begin{tabular}{c}\fts{Показатели}\\[-3pt]\fts{Морса--Ботта}\end{tabular}
&\begin{tabular}{c}\fts{Выход на}\\[-3pt]\fts{$\ld=0$/$\ell=0$}\end{tabular}
&\begin{tabular}{c}\fts{Атом}\end{tabular} &\begin{tabular}{c}\fts{Аналоги}\end{tabular}\\
\hline\endfirsthead%
\multicolumn{6}{r}{\fts{Таблица \ref{table5} (продолжение)}}\\
\hline
\begin{tabular}{c}\fts{Область}\\[-3pt]\fts{(время жизни)} \end{tabular}
&\begin{tabular}{c}\fts{К-во}\\[-3pt]\fts{окр-стей}\end{tabular}&\begin{tabular}{c}\fts{Показатели}\\[-3pt]\fts{Морса--Ботта}\end{tabular}
&\begin{tabular}{c}\fts{Выход на}\\[-3pt]\fts{$\ld=0$/$\ell=0$}\end{tabular}
&\begin{tabular}{c}\fts{Атом}\end{tabular} &\begin{tabular}{c}\fts{Аналоги}\end{tabular}\\
\hline\endhead
\myrulm\begin{tabular}{c}$c_1$\\($ 0 < \ld <+\infty$) \end{tabular} &{1}& {($-\;-$)} & Нет/Нет &$A_-$ & {Отсутств.}\\
\hline
\myrulm\begin{tabular}{c}$c_2$\\($ 0 < \ld <+\infty$) \end{tabular} &{1}& {($-\;+$)} & Нет/Да &$B_+$ & \begin{tabular}{l} $a_5$ \cite[Рис.\,2]{RyabRCD}\\$\beta_3$ \cite[Рис.\,1]{Mor}\end{tabular}\\
\hline
\myrulm\begin{tabular}{c}$c_3$\\($ \ld_* < \ld < +\infty$) \end{tabular} &{2}& {\begin{tabular}{c}($-\;-$),($-\;-$)\end{tabular}} & Нет/Да &$2A_-$ & \begin{tabular}{l} $b_4$ \cite[Рис.\,3]{RyabRCD}\\$\alpha_7$ \cite[Рис.\,1]{Mor}\end{tabular}\\
\hline
\begin{tabular}{c}$c_4$\\($ 0 \ls \ld < +\infty $) \end{tabular} &{2}& {\begin{tabular}{c}($+\;-$),($+\;-$)\end{tabular}} & Да/Да &$C_2$ & \begin{tabular}{l} 8 \cite[Рис.\,6.3]{KhBook88}\\ $a_4, b_5$ \cite[Рис.\,2,3]{RyabRCD}\\$\beta_2$ \cite[Рис.\,11]{BRF}\\$\gamma$ \cite[Рис.\,1]{Mor} \end{tabular}\\
\hline
\myrulm\begin{tabular}{c}$c_5$\\($ \ld_* < \ld <\sqrt{2} $) \end{tabular} &{2}& {\begin{tabular}{c}($-\;+$),($-\;+$)\end{tabular}} & Нет/Да &$2B_+$& \begin{tabular}{l} $b_3$ \cite[Рис.\,3]{RyabRCD}\\$\beta_4$ \cite[Рис.\,1]{Mor} \end{tabular}\\
\hline
\myrulm\begin{tabular}{c}$c_6$\\($ 0 < \ld < 1$) \end{tabular} &{1}& {($+\;+$)} & Нет/Да &$A_+$& \begin{tabular}{l} $a_3, a_4$ \cite[Рис.\,2]{RyabRCD}\\$\alpha_4$ \cite[Рис.\,1]{Mor}\end{tabular}\\
\hline
\begin{tabular}{c}$c_7$\\($ 0 \ls \ld < 1$) \end{tabular} &{1}& {($+\;-$)} & Да/Да &$B_-$& \begin{tabular}{l} 7 \cite[Рис.\,6.3]{KhBook88}\\$a_3$ \cite[Рис.\,2]{RyabRCD}\\$\beta_1$ \cite[Рис.\,11]{BRF}\\$\beta_2$ \cite[Рис.\,1]{Mor}\end{tabular}\\
\hline
\myrulm\begin{tabular}{c}$c_8$\\($ 0 < \ld < \ld_*$) \end{tabular} &{2}& {\begin{tabular}{c}($+\;+$),($+\;+$)\end{tabular}} & Нет/Нет &$2A_+$& {Отсутств.} \\
\hline
\myrulm\begin{tabular}{c}$c_9$\\($ 0\ls \ld < \ld_*$) \end{tabular} &{2}& {\begin{tabular}{c}($+\;-$),($+\;-$)\end{tabular}} & Да/Нет &$2B_-$& \begin{tabular}{l} E\,\cite[Рис.\,2]{KhDan83}\\$\beta_3$ \cite[Рис.\,11]{BRF} \end{tabular}\\
\hline
\end{longtable}

\end{center}


\clearpage

\subsection{Классы вырожденных точек ранга 1}

\subsubsection{Множество $\gan$}

Первый класс вырожденных точек ранга 1 -- это точки, лежащие в прообразе линии касания бифуркационных поверхностей. Образ этого множества в пространстве интегральных констант обозначается через $\gan$. Это -- линия касания поверхностей $\wsa$ и $\wsc$. Ее уравнение в параметрах $s,h$ имеет вид
\begin{equation}\label{eq5_35}
  2s^2-2(h+\frac{\ld^2}{2})s+1=0,
\end{equation}
а в параметрах $s,\ell$ таково:
\begin{equation}\label{eq5_36}
  2\ld^2s^2-s+2\ell^2=0.
\end{equation}
В совокупности с условиями вещественности получаем следующее представление $\gan$:
\begin{equation}\notag
  \ell =\displaystyle{\pm \sqrt{\frac{s}{2}(1-2\ld^2 s)}}, \quad \displaystyle{h=s+\frac{1}{2s}-\frac{\ld^2}{2}}, \quad  \displaystyle{s\in (0,\frac{1}{2\ld^2}]}.
\end{equation}
Это множество делится на качественно различные части его пересечением с образом множества критических точек ранга 0, то есть с подмножествами $\delta_i$ ($i=1,2,3$) пространства интегральных констант. Действительно, вне таких пересечений окрестность точки множества $\gan$ подвергается диффеоморфизму (как стратифицированное многообразие) и топология соответствующего прообраза сохраняется.

Напомним уравнения множеств $\delta_i$:
\begin{equation}\label{eq5_37}
  \begin{array}{l}
    \ell = \mp \displaystyle{\frac{1}{2}[\ld(r-\ld)+d]}\sqrt{\displaystyle{\frac{r}{2}\left[-r+\frac{1}{r-\ld}d\right]}},\quad
    h= -\displaystyle{\frac{1}{2}r(r-\ld)+\frac{2r-\ld}{2(r-\ld)}d}, \\
    d^2 = 4+r^2(r-\ld)^2, \qquad  r \in (-\infty,0] \cup [0,\ld) \cup (\ld,+\infty),
\end{array}
\end{equation}
при этом соответствие номеров подмножеств, промежутков изменения константы $r$, определяющей относительное равновесие, и знаков величины $d$ таково:
\begin{equation*}
  \begin{array}{lll}
    \delta_1: & r \in [0,\ld), & d <0; \\
    \delta_2: & r \in (-\infty,0], & d > 0; \\
    \delta_3: & r \in (\ld,+\infty), & d > 0.
  \end{array}
\end{equation*}
Параметр $s$ поверхностей $\wsi_i$, служащий частным интегралом для критических подсистем $\mi_i$, в точках \eqref{eq5_37} принимает следующие значения: в подсистеме $\mm$ (все три кривые $\delta_i$)
\begin{equation}\label{eq5_38}
  s=\frac{1}{2}\left[\ld (r-\ld)+d\right],
\end{equation}
в подсистемах $\mn$ (кривые $\delta_1,\delta_3$) и $\mo$ (кривая $\delta_2$)
\begin{equation}\label{eq5_39}
  s=\frac{r-\ld}{4\ld}\left[r (r-\ld)-d\right].
\end{equation}
Воспользуемся выражением на $\mm$, тогда в точках перестройки $\gan$ уравнения \eqref{eq5_37} должны быть совместны, например, с \eqref{eq5_35}. Получим
\begin{equation*}
  r^2-2\ld^2+\frac{2}{\ld(r-\ld)+d}-\frac{r}{r-\ld}d =0.
\end{equation*}
Отсюда либо
\begin{equation}\label{eq5_40}
  r=-\ld,
\end{equation}
либо
\begin{equation*}
  2+(r-\ld)^2(r^2-2\ld r +2\ld^2)-(r-\ld)(r-2\ld)d=0.
\end{equation*}
Последнее равенство можно записать в виде
\begin{equation}\label{eq5_41}
  \frac{1}{2}\left[(r-\ld)(r-2\ld)-d\right]^2=0,
\end{equation}
что, в частности, означает, что оно отвечает за точку касания $\gan$ с одним из $\delta_i$. Выясним, каким $\delta_i$ соответствуют значения \eqref{eq5_40}, \eqref{eq5_41}. Для \eqref{eq5_40} очевидно, что $r<0$, поэтому это -- точка пересечения $\gan$ с $\delta_2$. Следствием уравнения \eqref{eq5_41} является $\ld (r-\ld)^3+1=0$, откуда
\begin{equation}\label{eq5_42}
  r=\ld-\frac{1}{\ld^{1/3}}.
\end{equation}
Но при этом, согласно \eqref{eq5_41}, должно быть $\sgn d =\sgn (r-\ld)(r-2\ld)$, что при очевидном неравенстве $r<\ld$ имеет место лишь при $r \ls 0$, то есть на кривой $\delta_2$. Итак, точка \eqref{eq5_42} касания $\gan$ и $\delta_2$ существует тогда и только тогда, когда $\ld \ls 1$.

Таким образом, принимая на $\gan$ в качестве независимых параметров $s$ и $\ld$ (если заменить $s$ на $h$ или $\ell$, то соответствие в области существования не будет взаимно однозначным), получим в качестве разделяющих кривых в квадранте $\{(\ld,s): \ld \gs 0, \, s > 0\}$ верхнюю границу допустимых значений, отвечающую точке $D_1$ в ключевых множествах,
\begin{equation}\notag
  \varphi_0: s = \frac {1}{2\ld^2}, \qquad \ld >0,
\end{equation}
кривую, полученную из \eqref{eq5_38}, \eqref{eq5_40} и отвечающую точке $B_2$ в ключевых множествах,
\begin{equation}\notag
  \varphi_1: s = \sqrt{1+\ld^4}-\ld^2, \qquad \ld \gs 0,
\end{equation}
и кривую, полученную из \eqref{eq5_38}, \eqref{eq5_42} и отвечающую точке $B_5$ в ключевых множествах,
\begin{equation}\notag
  \varphi_2: s = \frac{1}{2\ld ^{2/3}}, \qquad \ld \in (0,1].
\end{equation}
Соответствующая область показана на рис.~\ref{fig_sepdelta0}, где нанесены обозначения возникающих подобластей в соответствии с рисунками \ref{fig_M1_LKeyABC_red}, \ref{fig_M1_LKeyDEF_red}, \ref{fig_M3_LKeyABC_red}, \ref{fig_M3_LKeyDEF_red}, на которых каждая такая подобласть порождает сегмент ключевого множества. Далее в бифуркационных диаграммах $\mSell$ эти подобласти выступают как узловые точки. Отметим, что разделяющие кривые для $\gan$ отвечают структурно неустойчивым диаграммам $\mSell$: на плоскости $(\ld,\ell)$ вычисление $\ell(\ld)$ из формулы \eqref{eq5_36} переводит кривую $\varphi_0$ в ось $\ell=0$, кривую $\varphi_1$ -- в кривую $\vpi_{22}$ разделяющего множества $\Theta_L$, кривую $\varphi_3$ -- в кривую $\vpi_{21}$ в соответствии с формулами \eqref{eq4_48}.

\begin{figure}[htp]
\centering
\includegraphics[width=0.3\textwidth, keepaspectratio]{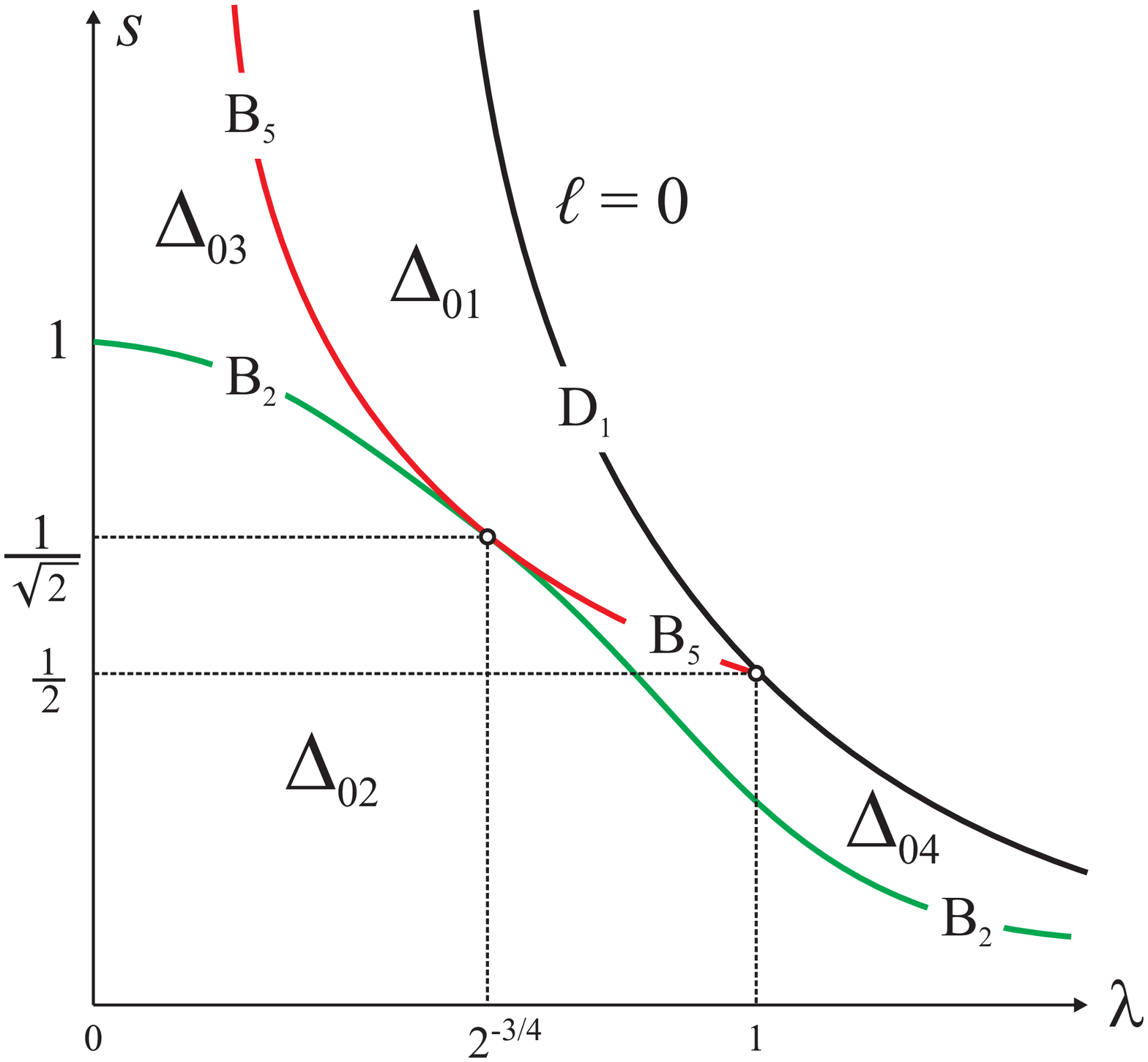}
\caption{Разделяющие кривые для точек $\gan$.}\label{fig_sepdelta0}
\end{figure}

Как видно из рис.~\ref{fig_sepdelta0}, точки $\Delta_{02}, \Delta_{03}$ имеют выход на классическую задачу Ковалевской ($\ld=0$), а точки $\Delta_{01}, \Delta_{04}$ продолжают существовать и в случае Рябова\,--\,Морозова ($\ell=0$). Поэтому на первый взгляд может показаться, что для описания соответствующих круговых молекул можно воспользоваться результатами работ \cite{BRF,Mor}. Однако это не совсем так. Топология всех уровней первых интегралов в проколотых окрестностях этих точек была установлена еще в работах \cite{KhPMM83, RyabRCD}. Но это не дает ответа на вопрос о том, как соединить семейства при обходе исследуемых точек. В зависимости от этого, многообразие в прообразе малой окружности с центром в точке может иметь разное количество компонент связности. Обозначим через $J_{0i}$ объединение тех связных компонент прообраза точки $\Delta_{0i}$, которые содержат критические точки отображения момента (то есть критические окружности, составленные из вырожденных точек ранга 1).

Заметим, что сегменты $\Delta_{0i}$ \textit{внутри} критической подсистемы $\mm$ бифуркационными не являются -- они отвечают лишь вырождению критических движений по отношению к полной системе. В частности, количество критических окружностей в прообразе $\Delta_{0i}$ такое же, как и в прообразе прилегающих с обеих сторон областей $\aaa_j$. По диаграммам подсистемы $\mm$ и данным из табл.~\ref{table2} устанавливаем, что $\Delta_{01}$ разделяет $\aaa_5$ и $\aaa_9$ (две критических окружности в $J_{01}$), $\Delta_{02}$ разделяет $\aaa_2$ и $\aaa_3$ (одна критическая окружность), $\Delta_{03}$ разделяет $\aaa_5$ и $\aaa_8$ (две критические окружности), $\Delta_{04}$ разделяет $\aaa_{9}$ и $\aaa_{11}$ (две критические окружности). Таким образом, поверхность $J_{02}$ связна. Для остальных же точек на уровне $J_{0i}$ имеется по две критические окружности, поэтому  круговые молекулы по имеющимся данным однозначно не восстанавливаются. В работах \cite{BRF,Mor} неявно используется гипотеза о том, что круговые молекулы этих точек имеют по две связных компоненты, но не представлено никаких мотивировок для этой гипотезы и никакого аппарата для соответствующих доказательств. Фактически использован некоторый принцип ``максимального правдоподобия'', согласно которому молекулы следует искать среди уже известных. Как мы сейчас докажем, ответ получился правильный.

\begin{theorem}\label{theo_twocomps0}
Поверхности $J_{01}, J_{03}, J_{04}$ состоят из двух связных компонент.
\end{theorem}

\begin{proof}
При $\ld \ne 0$ воспользуемся квадратурами, найденными в работе \cite{Gash1}. В обозначениях настоящей работы множество $\gan$ отвечает следующим значениям параметров
\begin{equation}\label{eq5_43}
  L_3=0, \qquad L_1 = \frac{s}{8} >0.
\end{equation}
Поэтому система сводится к двум уравнениям
\begin{equation}\label{eq5_44}
  {\dot \eta}^2 = \frac{L_1}{F^2(z)}, \qquad \dot z = \sqrt{f(z)},
\end{equation}
где
\begin{equation*}
  F(z)=z(z-2\ld)-\frac{1}{2}\left(s-\frac{1}{s}\right), \qquad f(z)=\left(\frac{s}{2}-z^2\right)F(z),
\end{equation*}
переменная $z$ вещественна, а вспомогательная переменная $\eta$, не равная бесконечности лишь на асимптотических движениях, введена равенством
\begin{equation*}
  \eta=\frac{y}{\sqrt{F(z)}},
\end{equation*}
где переменная $y$ -- вещественна. Из \eqref{eq5_43}, \eqref{eq5_44} следует, что ${\dot \eta}^2>0$, поэтому вещественна и переменная $\eta$, но тогда к условию существования осцилляции переменной $z$
\begin{equation}\notag
  f(z) \gs 0
\end{equation}
добавляется условие существования асимптотических движений
\begin{equation}\notag
  F(z) \gs 0.
\end{equation}
Полагая
\begin{equation*}
  \ell=0, \qquad s=\frac{1}{2\ld^2},
\end{equation*}
получим
\begin{equation}\notag
  F(z)=\left[ z-\left(\ld+\frac{1}{2\ld}\right)\right]\left[ z-\left(\ld-\frac{1}{2\ld}\right)\right], \qquad f(z)=\left(\frac{1}{4\ld^2}-z^2\right)F(z).
\end{equation}
Переменная $z$ осциллирует на отрезках
\begin{equation}\notag
\left[-\frac{1}{2\ld},\ld-\frac{1}{2\ld}\right], \quad \left[\frac{1}{2\ld},\ld+\frac{1}{2\ld}\right]
\end{equation}
при условии $\ld<1$  и на отрезках
\begin{equation}\notag
\left[-\frac{1}{2\ld},\frac{1}{2\ld}\right], \quad \left[\ld-\frac{1}{2\ld},\ld+\frac{1}{2\ld}\right]
\end{equation}
при условии $\ld>1$. В обоих случаях $F(z) \gs 0$ на первом отрезке и $F(z) \ls 0$ -- на втором. Следовательно, для первой критической окружности существуют асимптотические к ней движения, а для второй -- таких движений нет. Следовательно, вторая критическая окружность исчерпывает свою компоненту связности критической интегральной поверхности. Таким образом, для точек $\Delta_{01}$, $\Delta_{04}$ в прообразе имеется две связных компоненты с критическими окружностями.

Для точки $\Delta_{03}$ можно положить $\ld=0$. Тогда
\begin{equation}\notag
  F(z)= z^2-\frac{1}{2}\left(s-\frac{1}{s}\right), \qquad f(z)=\left(\frac{s}{2}-z^2\right)F(z), \qquad s > 1.
\end{equation}
Таким образом, $z$ осциллирует в симметричных промежутках
$$
\left[-\sqrt{\frac{s}{2}},-\sqrt{\frac{1}{2}(s-\frac{1}{s})}\right], \quad \left[\sqrt{\frac{1}{2}(s-\frac{1}{s})},\sqrt{\frac{s}{2}}\right],
$$
и для обеих критических окружностей имеются асимптотические движения. Достаточно ясно, что при стремлении к предельным критическим окружностям разные промежутки осцилляции $z$ не могут дать общий предел. Однако и это все же необходимо строго доказывать, исходя из приведенных в \cite{Gash1} квадратур, что требует определенных технических выкладок. В результате получим, что на связной компоненте интегральной поверхности переменная $z$ не может сменить промежуток осцилляции, поэтому таких компонент две (столько же, сколько и критических окружностей). Дадим и другое доказательство наличия двух связных компонент. Отметим, что, несмотря на сильную степень вырождения, выражения, полученные из уравнений первых интегралов, не дают какого-либо обозримого решения данного вопроса. Воспользуемся уравнениями, полученными С.В.\,Ковалевской, и результатами работы \cite{KhND10}. Напомним, что в переменных разделения Ковалевской $s_1,s_2$ уравнения движения имеют вид
\begin{equation}\notag
(s_2  - s_1 )\frac{ds_1 } {dt} = \ri \sqrt {2S(s_1 )} ,\quad (s_2  -
s_1 ) \frac{ds_2 } {dt} =  - \ri \sqrt {2S(s_2 )} ,
\end{equation}
где
\begin{equation*}
\begin{array}{l}
S(x) = (x - h + \sqrt k )(x - h - \sqrt k )\varphi (x),\qquad
\varphi (x) = x(x - h)^2  + (1 - k)x - 2\ell^2
\end{array}
\end{equation*}
и $s_1  \geqslant s_2$. С.В.\,Ковалевская также указала выражения всех фазовых переменных через переменные разделения в виде однозначных функций от $s_1,s_2$ и набора алгебраических радикалов
\begin{equation}\label{eq5_45}
R_{i\gamma}=\sqrt{s_i - e_\gamma} \qquad (i=1,2, \quad \gamma=1,\ldots,5),
\end{equation}
где $e_\gamma$ -- корни многочлена $S(x)$. Принято обозначать через $e_1,e_2,e_3$ корни $\varphi(x)$ в порядке возрастания, если все они вещественны, и полагать $e_4=h-\sqrt{k}, e_5=h+\sqrt{k}$.
Известно \cite{Appel,Ipat}, что в случае трех вещественных корней у $\varphi(x)$ переменные разделения изменяются в областях
\begin{equation}\notag
s_1  \in [\max \{ e_1 ,e_2 ,e_3 ,e_4 \} ,e_5 ], \qquad s_2  \in [ - \infty ,\min \{ e_1 ,e_2 ,e_3 ,e_4 ,e_5 \} ],
\end{equation}
и, как показано в \cite{KhND10} методом булевых функций, выражения для фазовых переменных в общем случае сводятся к однозначным зависимостям от следующих произведений алгебраических радикалов, приведенных к вещественной форме:
\begin{equation}\label{eq5_46}
\begin{array}{l}
\ds  \sqrt{s_1-e_1} \sqrt{\frac{e_2-s_2}{e_5-s_2}}\sqrt{\frac{e_3-s_2}{e_5-s_2}}\sqrt{\frac{e_4-s_2}{e_5-s_2}}, \\[3mm]
\ds  \sqrt{s_1-e_2} \sqrt{\frac{e_1-s_2}{e_5-s_2}}\sqrt{\frac{e_3-s_2}{e_5-s_2}}\sqrt{\frac{e_4-s_2}{e_5-s_2}}, \\[3mm]
\ds  \sqrt{s_1-e_3} \sqrt{\frac{e_2-s_2}{e_5-s_2}}\sqrt{\frac{e_2-s_2}{e_5-s_2}}\sqrt{\frac{e_4-s_2}{e_5-s_2}}
\end{array}
\end{equation}
(радикалы, содержащие отношения разностей, не меняют знак при $s_2=-\infty$). В случае, когда переменная разделения отражается от границы своей осцилляции, соответствующий радикал следует считать меняющим знак (радикал, отвечающий значению $s_2=-\infty$, уже исключен процедурой редукции к выражениям \eqref{eq5_46}). Если какое-либо из граничных значений является кратным корнем, то соответствующий радикал тоже считается меняющим знак, т.к. топологически асимптотическая траектория вместе с предельными точками есть связное множество.

Обратимся к рассматриваемой точке $\Delta_{03}$ при $\ld=0$. Имеем
$$
\ds e_1=\frac{1}{s} < e_2=e_3=e_4 = s < e_5=s+\frac{1}{s}
$$
(напомним, что $s>1$). Тогда, обозначая для наглядности трехкратный корень через $e_*$, имеем
\begin{equation}\notag
  s_1 \in [e_*, e_5], \qquad s_2 \in [-\infty, e_1].
\end{equation}
В выражениях \eqref{eq5_46} второе и третье совпадают, произведение пары одинаковых радикалов становится однозначной функцией и на знаки алгебраических выражений не влияет. Остаются следующие алгебраические выражения
\begin{equation}\notag
\begin{array}{l}
\ds  \sqrt{s_1-e_1} \sqrt{\frac{e_*-s_2}{e_5-s_2}}, \qquad \ds  \sqrt{s_1-e_*} \sqrt{\frac{e_1-s_2}{e_5-s_2}},
\end{array}
\end{equation}
в которых второе меняет знак на связной компоненте интегральной поверхности и, следовательно, на количество связных компонент не влияет, а первое вдоль замыкания любой траектории сохраняет фиксированный знак, выбранный в начальный момент. Вследствие этого, поверхность $J_{03}$ имеет две связные компоненты. Теорема доказана.
\end{proof}


\subsubsection{Множество $\gaa$}

Следующий класс вырожденных точек ранга 1 -- это точки, лежащие в прообразе ребра возврата поверхности $\wsa$. Для интегральных параметров на этом множестве, обозначаемом через $\gaa$, с учетом условий существования движений имеем
\begin{equation}\label{eq5_47}
  \ell =\displaystyle{\pm \frac{1}{\sqrt{2}}s^{3/2}}, \quad \displaystyle{h=\frac{3}{2}s+\frac{\ld^2}{2}}, \quad  \displaystyle{s\in [0,s_*]},
\end{equation}
где $s_*=s_*(\ld)$ -- наибольший (вещественный) корень уравнения
\begin{equation}\label{eq5_48}
  9s^4+2\ld^2 s^3-24 s^2-24 \ld^2 s+4(4-\ld^4)=0,
\end{equation}
существующий и положительный для всех $\ld \gs 0$. Множество $\gaa$ делится на качественно различные части его пересечением с образом множества критических точек ранга 0, то есть с подмножествами $\delta_i$ пространства интегральных констант, а также пересечениями с исследованным выше множеством $\gan$. Очевидно, что кривая $\delta_1$ в этих пересечениях не участвует. Непосредственно проверяется, что все точки пересечения  $\gaa \cap \gan$ содержатся в $\gaa \cap \delta_2$. Условие пересечения $\gaa \cap (\delta_2\cup \delta_3)$ представляется либо уравнением \eqref{eq5_48}, либо уравнением
\begin{equation}\label{eq5_49}
  s^2+2\ld^2 s-1=0.
\end{equation}
На плоскости $(\ld,s)$ получаем разделяющие кривые
\begin{eqnarray}
& & \psi_1: \ds \ld =\left| \frac{\sqrt{s^2+4}-2s}{(\sqrt{s^2+4}-s)^{1/2}}\right|, \qquad s>0; \label{eq5_50}\\
& & \psi_2: \ds \ld =\sqrt{\frac{1}{2}(\frac{1}{s}-s)}, \quad s \in (0,1] \quad \Leftrightarrow \quad s=\sqrt{1+\ld^4}-\ld^2, \quad \ld \gs 0.\label{eq5_51}
\end{eqnarray}
С учетом \eqref{eq5_47} образ кривой $\psi_1$ на плоскости $(\ell,\ld)$ -- это объединение кривых $\vpi_{23}$ и $\vpi_{31}$, которые вместе могут быть представлены графиком однозначной функции
\begin{equation}\notag
  \ds \ld =\left| \frac{\sqrt{s^2+4}-2s}{(\sqrt{s^2+4}-s)^{1/2}}\right|, \qquad s = (\sqrt{2}\ell)^{2/3}, \qquad \ell \gs 0,
\end{equation}
а образ кривой $\psi_2$ -- это кривая $\vpi_{22}$ в представлении \eqref{eq4_48}.

Верхняя граница $s_*$ значений $s$ -- это ветвь кривой \eqref{eq5_50} при $s \gs 2/\sqrt{3}$, полученная в результате пересечения $\gaa$ с $\delta_3$ (точка $C$ на ключевых множествах). Другая ветвь кривой \eqref{eq5_50} для значений $s \in (0,2/\sqrt{3}]$ и кривая \eqref{eq5_51} -- это пересечения $\gaa$ с $\delta_2$ (соответственно точки $B_4$ и $B_2$ на ключевых множествах, причем $B_4$ сливается с $C$ при $\ld=0$). Разделяющее множество и обозначения соответствующих точек на диаграммах подсистемы $\mm$ и на бифуркационных диаграммах приведены на рис.~\ref{fig_sepdelta1}. Точки $\Delta_{11}, \Delta_{13}$ имеют выход на частный случай $\ld=0$, точка $\Delta_{14}$ существует в частном случае $\ell=0$, точка $\Delta_{12}$ таких аналогов не имеет.

\begin{figure}[htp]
\centering
\includegraphics[width=0.3\textwidth, keepaspectratio]{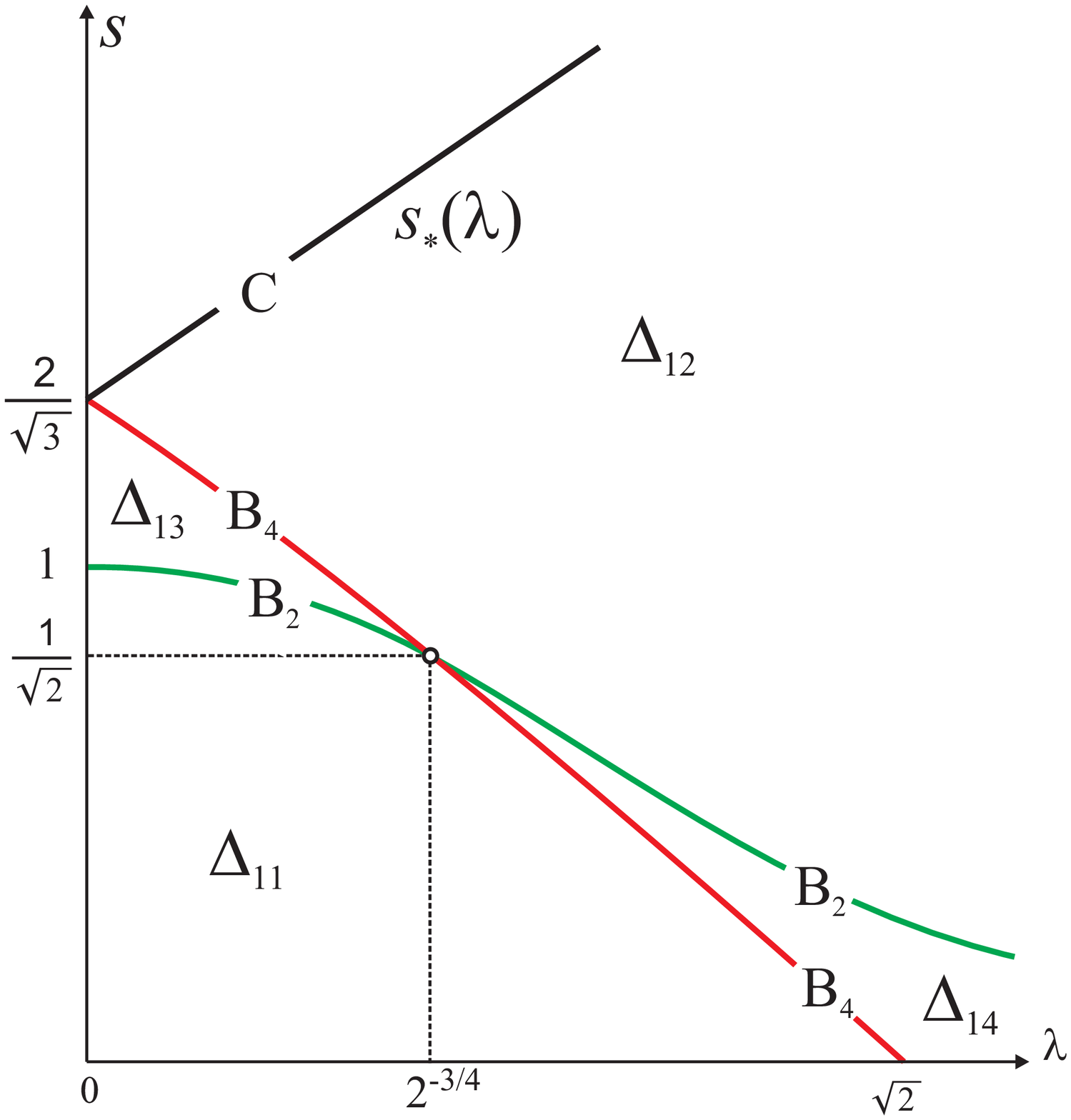}
\caption{Разделяющие кривые для точек $\gaa$.}\label{fig_sepdelta1}
\end{figure}

Как и ранее, количество критических окружностей в прообразе $\Delta_{1i}$ устанавливается по диаграммам подсистемы $\mm$ и данным для примыкающих областей из табл.~\ref{table2}.
Видим, что $\Delta_{11}$ разделяет $\aaa_2$ и $\aaa_6$ (одна критическая окружность в прообразе), $\Delta_{12}$ разделяет $\aaa_3$ и $\aaa_4$ (одна критическая окружность), $\Delta_{13}$ разделяет $\aaa_7$ и $\aaa_8$ (две критические окружности), $\Delta_{14}$ разделяет $\aaa_{10}$ и $\aaa_{11}$ (две критические окружности). В случае, когда в прообразе точки $\Delta_{1i}$ имеется лишь одна вырожденная окружность, то есть соответствующая критическая интегральная поверхность и круговая молекула связны, для точного описания топологии окрестности точек ниже устанавливаются номера семейств, участвующих в бифуркациях. Это легко следует из результатов о прилегающих камерах и о номерах семейств в этих камерах. Если же окружностей две, то для обоснования структуры круговых молекул нужно доказательно установить число компонент связности критической поверхности-прообраза $\Delta_{1i}$. По умолчанию, в \cite{BRF} предполагалось, что при $\ld=0$ круговая молекула точки $\Delta_{13}$ состоит из двух связных компонент, и также в \cite{Mor} считалось, что из двух компонент состоит круговая молекула точки $\Delta_{14}$ при $\ell=0$. Докажем общее утверждение, воспользовавшись квадратурами И.Н.\,Гашененко. Аналогично предыдущему случаю, обозначим через $J_{1i}$ объединение тех связных компонент прообраза точки $\Delta_{1i}$, которые содержат критические точки отображения момента.

\begin{theorem}\label{theDelta134}
Поверхности $J_{13},J_{14}$ состоят из двух компонент связности.
\end{theorem}
\begin{proof} Естественно, по определению классов точек считаем, что $(\ld,s)$ принадлежит соответствующей открытой подобласти на рис.~\ref{fig_sepdelta1}. В терминах $(\ld,s)$ многочлен $f(z)$ из \eqref{eq2_32} примет вид
\begin{equation*}
f(z)=-z^2(z-\ld)^2+\frac{1}{2}s z(z-2\ld)-\frac{1}{4}(s^2-1).
\end{equation*}
Конечно, его дискриминант есть произведение многочленов в левых частях \eqref{eq5_48}, \eqref{eq5_49}. Очевидно, $f(z)$ имеет четыре вещественных корня в областях  $\Delta_{13},\Delta_{14}$ (достаточно рассмотреть выходы на оси координат, когда соответствующее уравнение явно решается). Как отмечалось выше, на связной компоненте интегрального многообразия переменная $z$ не может сменить промежуток осцилляции, поэтому таких компонент в этих случаях две.
\end{proof}

В целом, согласно \eqref{eq5_8}, точки $\Delta_1$ отвечают случаю $L_2=0$ в решении Гашененко, тогда из \eqref{eq2_33}, \eqref{eq2_34} получаем
\begin{equation*}
L_1 >0, \qquad F(z)=(z-\ld)^2 \gs 0,
\end{equation*}
поэтому для обеих критических окружностей существуют асимптотические движения.

\subsubsection{Множество $\gac$}

Последний класс вырожденных точек ранга 1 -- это точки, лежащие в прообразе ребра возврата поверхности $\wsc$. Для интегральных параметров на этом множестве, обозначаемом через $\gac$, с учетом условий существования движений имеем
\begin{equation}\notag
    \begin{array}{l}
      \displaystyle{h=h^*+2\ell^2,}
      \quad \left\{
      \begin{array}{ll}  \ell \in \bR, & \ld \leqslant \ld^*\\
                      |\ell| \geqslant  \ell^*, & \ld > \ld^*
      \end{array}
      \right. , \qquad s=\displaystyle{\frac{1}{2\ld^{2/3}}}, \quad k=-4\ld^2\ell^2+k^*,
    \end{array}
\end{equation}
где
\begin{equation}\notag
\begin{array}{l}
  \ds \ld^*=\frac{2\sqrt{2}}{3^{3/4}}, \quad \ell^*=\displaystyle{\frac{2\ld^{2/3}-\sqrt{4+\ld^{4/3}}}{\sqrt{2}(\sqrt{4+\ld^{4/3}}-\ld^{2/3} )^{1/2}}}>0  \qquad (\ld > \ld^*),\\
\ds  h^*=\frac{1}{2}\ld^{2/3}\left(3-\ld^{4/3}\right), \quad \quad k^*=(\ld^{4/3}-1)^3+1.
\end{array}
\end{equation}

Множество $\gac$ делится на качественно различные части его пересечением с образом множества критических точек ранга 0, то есть с подмножествами $\delta_i$ пространства интегральных констант, а также пересечениями с исследованными выше множествами $\gan, \gaa$. Очевидно, что кривые $\delta_1, \delta_3$ в этих пересечениях не участвуют, как не имеющие общих точек с $\wsc$. Непосредственно проверяется, что все точки пересечений  $\gac \cap \gan$ и $\gac \cap \gaa$ содержатся в $\gac \cap \delta_2$. Поскольку на $\gac$ при фиксированном $\ld$ постоянно и $s$, то в качестве параметров, определяющих точку $\gac$ удобно взять $\ld$ и $\ell$. Тогда условия пресечения $\gac \cap \delta_2$ (точка $B_3$ ключевых множеств для всех $\ld$ и точка $B_5$ для $\ld\in [0,1]$) сводятся к уравнениям двух разделяющих кривых $\theta_1, \theta_2$ для определения областей классификации диаграмм $\mSell$:
\begin{equation*}
\begin{array}{l}
\theta_1:   \ell= \ds{\frac{1}{2\ld^{1/3}}\sqrt{1-\ld^{4/3}}}, \quad 0 < \ld \ls 1, \\
\theta_2:  \ell=
\ds{ \frac{|2\ld^{2/3}-\sqrt{4 + \ld^{4/3}}|}{\sqrt{2}(\sqrt{4 + \ld^{4/3}}-\ld^{2/3})^{1/2}}},\quad \ld \gs 0.
\end{array}
\end{equation*}
Ясно, что на $\theta_2$ $\ell(\ld^*)=0$, а при $\ld > \ld^*$
\begin{equation*}
  \ell(\ld)=\ds{ \frac{2\ld^{2/3}-\sqrt{4 + \ld^{4/3}}}{\sqrt{2}(\sqrt{4 + \ld^{4/3}}-\ld^{2/3})^{1/2}}}=\ell^*,
\end{equation*}
поэтому ветвь кривой $\theta_2$ для $\ld \gs \ld^*$ является границей области существования точек множества $\gac$. Разделяющие кривые и обозначения точек $\Delta_{3i}$ в соответствии с рис.~\ref{fig_M3_LKeyABC_red}, \ref{fig_M3_LKeyDEF_red} показаны на рис.~\ref{fig_sepdelta3}.
Сопоставляя с \eqref{eq4_48}, видим, что кривая $\theta_1$ -- это $\vpi_{21}$, а кривая $\theta_2$ -- это $\vpi_{24}$.

\begin{figure}[htp]
\centering
\includegraphics[width=0.3\textwidth, keepaspectratio]{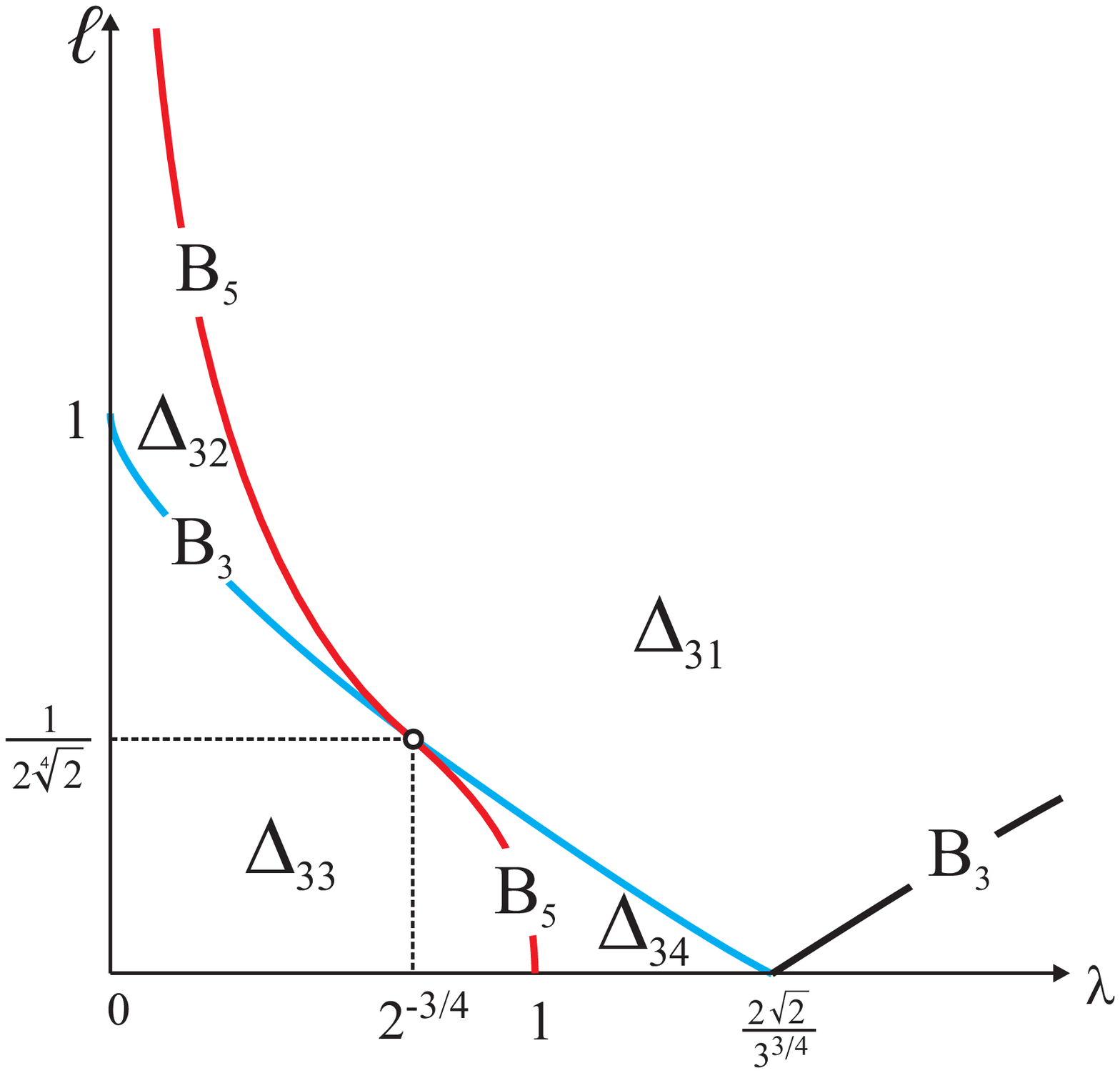}
\caption{Разделяющие кривые для точек $\gac$.}\label{fig_sepdelta3}
\end{figure}

Количество критических окружностей в прообразе $\Delta_{3i}$ устанавливается по диаграммам подсистемы $\mo$ и данным для примыкающих областей из табл.~\ref{table5}.
Видим, что $\Delta_{31}$ разделяет $\ccc_1$ и $\ccc_2$ (одна критическая окружность в прообразе), $\Delta_{32}$ разделяет $\ccc_8$ и $\ccc_9$ (две критических окружности), $\Delta_{33}$ разделяет $\ccc_6$ и $\ccc_7$ (одна критическая окружность), $\Delta_{34}$ разделяет $\ccc_{3}$ и $\ccc_{5}$ (две критические окружности).

Как и для множеств $\gan,\gaa$, если в прообразе точки $\Delta_{3i}$ имеется лишь одна вырожденная окружность, то соответствующая критическая интегральная поверхность $J_{3i}$ связна. Приведем легко доказываемое утверждение, важное для дальнейшего.

\begin{theorem}\label{theDelta32}
Критическое многообразие $J_{32}$ состоит из двух связных компонент.
\end{theorem}
\begin{proof}
Как видно из рис.~\ref{fig_sepdelta3}, точки класса $\Delta_{32}$ имеют выход на классическую задачу Ковалевской $\ld=0$. Предельный переход в образе получим из \eqref{eq5_28}, полагая  $s=1/(2\ld^{2/3}) \to \infty$. В результате $h=2\ell^2$, $k=0$, и $\Delta_{32}$ отвечают значениям $\ell>1.$ Предельное критическое многообразие известно (см. \cite{KhBook88}), а именно, $J_{32}(0)=2S^1$. Поэтому и при малых $\ld$ это многообразие несвязно, а так как критических окружностей в прообразе две, то и компонент связности две.
\end{proof}

Отметим, что точки класса $\Delta_{34}$ имеют выход на случай $\ell=0$, для которого в работах \cite{Mor,Mo2008} были изучены круговые молекулы, но при этом наличие двух компонент связности в молекулах для этих точек принималось без доказательства. Ниже мы это \textit{докажем}, откуда будет следовать, что и критическое многообразие $J_{34}$ состоит из двух связных компонент.

\clearpage

\section{Топология приведенных систем}\label{sec6}
\subsection{Разделяющее множество и бифуркационные диаграммы}
В соответствии с предложением \ref{propos13} множество $\Theta_L$ в плоскости параметров $(\ld,\ell)$, классифицирующее бифуркационные диаграммы $\mSell$ приведенных систем на $\mPel$,
состоит из пар $(\ld,\ell)$, где $\ell$ --- критическое значение ограничения функции $L$ на одно из ключевых множеств $\mK_i$ критических подсистем $\mi_i$ ($i=1,2,3$) при заданном $\ld$. Очевидно, это значения $\ell$ в точках \eqref{eq5_11}, \eqref{eq5_16} -- \eqref{eq5_21}, \eqref{eq5_23} -- \eqref{eq5_27}. Как отмечалось ранее, при $\ld \ell \ne 0$ из них только последняя не соответствует вырожденным критическим точкам ранга 0. Напомним, что в силу имеющихся очевидных симметрий по параметрам $\ld,\ell$, это множество $\Theta_L$ рассматривается в первом квадранте
\begin{equation}\notag
  \ld \gs 0, \qquad \ell \gs 0,
\end{equation}
при этом полуоси $\ld=0$ и $\ell=0$ включаются в разделяющее множество по умолчанию.
Действительно, точки $A,B_1,D_1,D_2$ дают значения $\ell=0,\ld \in \bR$, а при $\ld=0$ и любом $\ell$ имеются кратные критические точки ранга 1, отвечающие, например, слиянию подобластей $c_2,c_6,c_8$ с $b_1$ (в подсистеме $\mo$ надо рассмотреть возможность $s=\infty$). Кратные участки, очевидно, возникают и в подсистеме $\mm$. Полный анализ соответствующих диаграмм для $\ld=0$ имеется в \cite{KhBook88}.

\begin{theorem}[П.Е.\,Рябов]\label{theClassEll}
Разделяющее множество $\Theta_L$ при классификации бифуркационных диаграмм $\mSell$ состоит из  точек координатных осей $\{\ld=0\}\cup\{\ell=0\}$ и пяти кривых, заданных явными однозначными функциями:
\begin{equation*}
\begin{array}{ll}
\gmm_1: & \ds \ld=\ld_1(\ell)=\frac{|\sqrt{s^2+4}-2s|}{(\sqrt{s^2+4}-s)^{1/2}}, \quad s=(2\ell^2)^{1/3}, \quad \ell \gs 0;\\[3mm]
\gmm_2: & \ell=\ell_2(\ld)=
\ds{ \frac{|2\ld^{2/3}-\sqrt{4 + \ld^{4/3}}|}{\sqrt{2}(\sqrt{4 + \ld^{4/3}}-\ld^{2/3})^{1/2}}},\quad \ld \gs 0;\\[3mm]
\gmm_3:  & \ell=\ell_3(\ld)= \ds{\frac{1}{\sqrt{2}}(\sqrt{1+\ld^4}-\ld^2)^{3/2}}, \quad  \ld \gs 0;\\[3mm]
\gmm_4:  & \ell=\ell_4(\ld)= \ds{\frac{1}{2\ld^{1/3}}\sqrt{1-\ld^{4/3}}}, \quad 0 < \ld \ls 1;\\[3mm]
\gmm_5:  & \ell=\ell_5(\ld)= \ds{\frac{1}{4\ld}}, \quad \ld \gs 0.
\end{array}
\end{equation*}
\end{theorem}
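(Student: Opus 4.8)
The plan is to read off $\Theta_L$ from Proposition~\ref{propos13}: a pair $(\ld,\ell)$ lies in $\Theta_L$ precisely when, for the given $\ld$, the number $\ell$ is a critical value of the restriction of the area integral $L$ to the critical set $\mK=\mK_1\cup\mK_2\cup\mK_3$. Since $L$ is a ``good'' integral (affine along the fibres of the auxiliary map, and the momentum of the $SO(3)$-action), these critical values come from two sources: the values of $L$ at the classes of rank~$0$ critical points of the subsystems $\mm$, $\mn$, $\mo$, and the extremal values of $\ell$ along the one-dimensional rank~$1$ critical submanifolds $\gan$, $\gaa$, $\gac$. I would treat these in turn and then check that nothing else appears.

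For the rank~$0$ part, recall that the rank~$0$ critical points of $\mm$ form the three families parametrised by $(r,\ld)$ over the regions $\delta_1,\delta_2,\delta_3$ of \eqref{eq4_20}, with $\ell=\ell(r,\ld)$ given by \eqref{eq4_11}, while those of $\mn$ and $\mo$ carry $\ell=\pm\eta_\pm(r)$ (Theorems~\ref{th11},~\ref{th13}). First I would differentiate in $r$ with $\ld$ held fixed: as in the computation preceding \eqref{eq5_12}, the derivative $d(\ell^2)/dr$ on $\delta_2$ equals $Q_0Q_1Q_2Q_3$ divided by a nonvanishing factor, where $Q_0,\dots,Q_3$ are the polynomials introduced after \eqref{eq5_11}, and analogous factorisations hold on $\delta_1,\delta_3$ and on the $\mn,\mo$ arcs. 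Hence the critical points of $r\mapsto\ell(r,\ld)$ lie exactly at the zeros of the $Q_i$; a sign analysis of the discriminant $d$ on each arc shows that $Q_0=0$ forces $\ell=0$ (the curve $\ell_0$ of \eqref{eq4_43}), that $Q_1=0$ is the locus $\vpi_{23}\cup\vpi_{31}$, that $Q_2=0$ is the locus $\vpi_{24}$, and that $Q_3=0$ is vacuous on the relevant arcs. Adjoining the degenerate rank~$0$ loci $\vpi_{21}$ ($1+\ld(r-\ld)^3=0$) and $\vpi_{22}$ ($r=-\ld$) of Propositions~\ref{propos3} and~\ref{propos6}, together with the arc endpoints (the common end $r=0$ and the curve $\ell_0$ give $\ell=0$, the rest lie on $\{\ld=0\}$), one recovers precisely the reorganisation curves of Theorem~\ref{th5}. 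Eliminating the parameter $r$ — the computations already done around \eqref{eq4_48} — gives their images in the $(\ld,\ell)$-plane: $\vpi_{21}\mapsto\gmm_4$, $\vpi_{22}\mapsto\gmm_3$, $\vpi_{23}\cup\vpi_{31}\mapsto\gmm_1$, $\vpi_{24}\mapsto\gmm_2$, while $\vpi_0$ and $\ell_0$ fall into $\{\ell=0\}$.

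For the rank~$1$ part, along $\gan$ one has $\ell^2=\frac{s}{2}(1-2\ld^2s)$ for $s\in(0,1/(2\ld^2)]$, with a single interior extremum at $s=1/(4\ld^2)$ — the point $D_4$ of \eqref{eq5_27} — at which $\ell=1/(4\ld)$; this is the curve $\gmm_5$. Along $\gaa$ one has $\ell^2=\frac{1}{2}s^3$, monotone in $s$, so its extremal $\ell$-values occur only at $s=0$ ($\ell=0$) and at $s=s_*$, the latter lying on $\delta_2$ or $\delta_3$ at one of the already-listed points $\vpi_{22},\vpi_{23},\vpi_{31}$. Along $\gac$ ($s=1/(2\ld^{2/3})$) the range of $\ell$ is all of $\bR$ when $\ld\ls\ld^*$ and $|\ell|\gs\ell^*$ when $\ld>\ld^*$, whose boundary $|\ell|=\ell^*$ is already the curve $\gmm_2$ (the image of $\vpi_{24}$). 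Thus the rank~$1$ part contributes only $\gmm_5$, and collecting everything yields $\Theta_L=\{\ld=0\}\cup\{\ell=0\}\cup\gmm_1\cup\gmm_2\cup\gmm_3\cup\gmm_4\cup\gmm_5$. That each $\gmm_i$ is genuinely separating — that the combinatorial type of the family $\mSell$ changes on crossing it — follows from the bifurcation analyses of the critical subsystems (Theorems~\ref{th9},~\ref{th11},~\ref{th13} and the subsections on $\gan$, $\gaa$, $\gac$), with the case $\ld=0$ being \cite{KhBook88}.

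The hard part will be completeness of the rank~$0$ step: one must verify that the four factors $Q_i$ exhaust \emph{all} critical points of $r\mapsto\ell(r,\ld)$ on every arc of $\mK$ in all three subsystems, that $\vpi_{21}$ and $\vpi_{22}$ exhaust the degenerate rank~$0$ loci relevant to $L$, and that no arc endpoint or subsystem-transition value produces a curve other than those listed. Each of these reduces to the sign discussion of the discriminant $d$ on the arcs $\delta_i$ and to the elementary but lengthy eliminations of the parameters $r$, $s$, $x$ already carried out piecewise in the preceding section; once it is settled, the final assembly of $\Theta_L$ from the five curves and the two coordinate half-axes is immediate.
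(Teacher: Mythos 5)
Your proposal is correct and follows essentially the same route as the paper: Theorem~\ref{theClassEll} is stated there without a formal proof, being justified by the reduction (via Proposition~\ref{propos13}) to the critical values of $L$ on the key sets $\mK_i$, which are exactly the $\ell$-coordinates of the distinguished points $A$, $B_1$--$B_7$, $C$, $D_1$--$D_4$ already computed in \eqref{eq5_11}, \eqref{eq5_16}--\eqref{eq5_21}, \eqref{eq5_23}--\eqref{eq5_27}. Your identification $\gmm_1=\vpi_{23}\cup\vpi_{31}$, $\gmm_2=\vpi_{24}$, $\gmm_3=\vpi_{22}$, $\gmm_4=\vpi_{21}$ and of $\gmm_5$ as the $\ell$-extremum $D_4$ on $\gan$ coincides with the paper's own remark immediately following the theorem; your reconstruction merely makes explicit (via the factorisation of $d\ell^2/dr$ into the $Q_i$) why no other critical values arise.
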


Подчеркнем снова, что, кроме новой кривой $\gmm_5$, все остальные отвечают кривым \eqref{eq4_48}:
$\gmm_1 = \vpi_{23}\cup\vpi_{31}$, $\gmm_2=\vpi_{24}$, $\gmm_3=\vpi_{22}$, $\gmm_4=\vpi_{21}$.
Первый квадрант разбивается на 18 областей, в которых диаграммы $\mSell$, рассматриваемые как стратифицированное одномерное многообразие, одинаковы. Разделяющее множество и нумерация возникающих областей представлены на рис.~\ref{fig_sepplotLregnum}. Всюду ниже слово ``область'' применяется именно к связной компоненте дополнения к разделяющему множеству на плоскости параметров, а связные компоненты дополнений к диаграммам называются \textit{камерами}.

\begin{figure}[ht]
\centering
\includegraphics[width=0.8\textwidth, keepaspectratio]{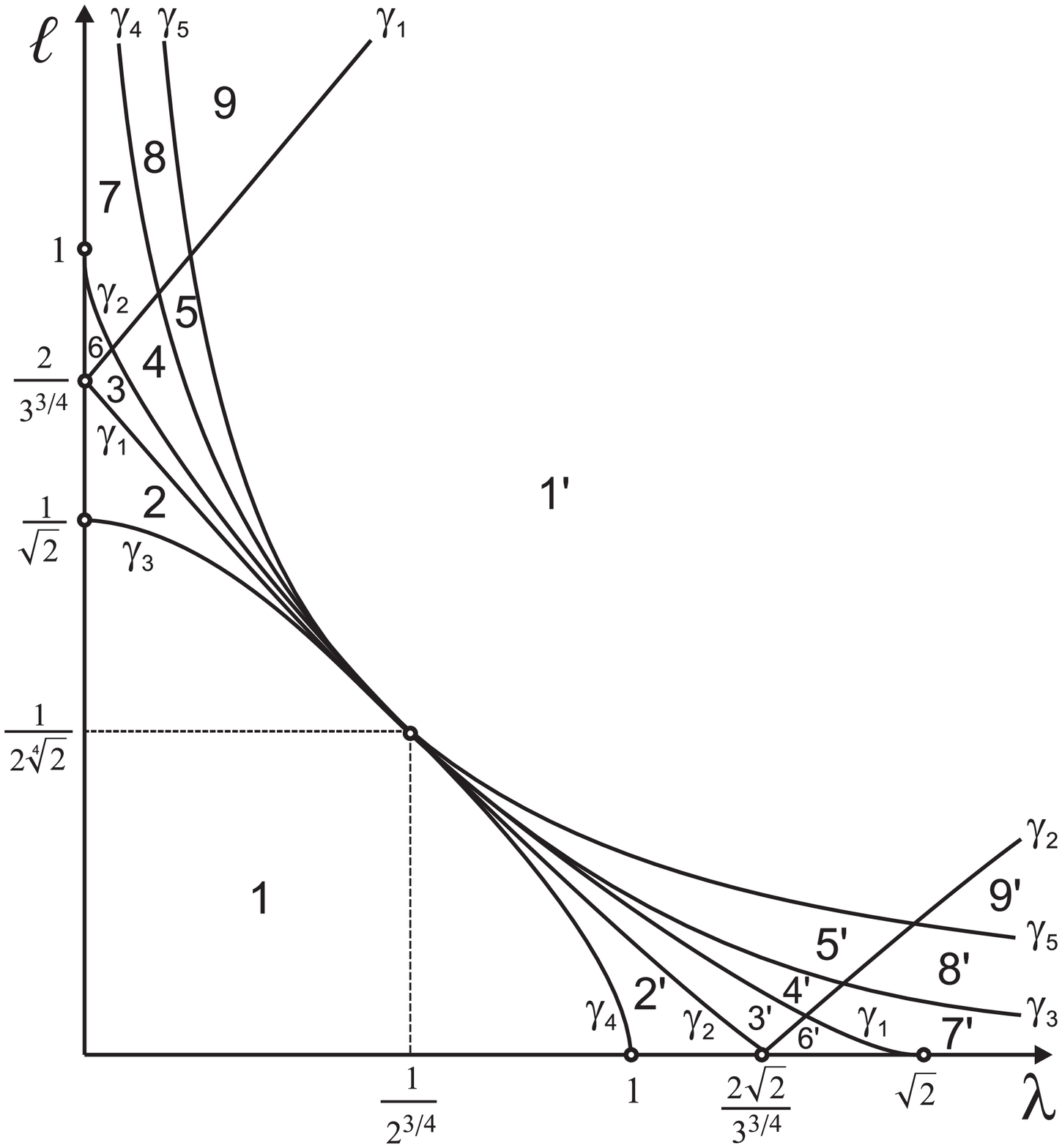}
\caption{Разделяющее множество и области параметров.}\label{fig_sepplotLregnum}
\end{figure}

На рисунках \ref{fig_reg01both}\,--\,\ref{fig_reg07sboth} диаграмма $\mSell$ для каждой из 18 областей плоскости $(\ld,\ell)$ построена в двух представлениях. На первом нанесены обозначения гладких участков и особых точек в соответствии с диаграммами критических подсистем. На втором -- на гладких участках диаграммы указаны атомы бифуркаций, происходящих в системе с двумя степенями свободы на $\mPel$. Стрелки на несимметричных атомах указывают в сторону возрастания количества связных компонент регулярных интегральных многообразий.

Объединяя диаграммы $\mSell$ в расширенном пространстве $\mwide{\bbI}=\bR^4(\ld,\ell,k,h)$ (см. \eqref{eq4_19}), проследим эволюцию камер -- открытых областей, на которые $\mSell$ делит плоскость $(k,h)$. Для этого введем расширенную диаграмму
\begin{equation}\notag
  \mwide{\Sigma}= \bigcup_{(\ld,\ell)} \{(\ld,\ell,k,h): (k,h) \in \mSell \}
\end{equation}
и назовем расширенной камерой открытую связную компоненту дополнения $\mwide{\Sigma}$ в $\bR^4$. Расширенные камеры, отличающиеся лишь знаками $\ld,\ell$, не различаем. Интегральные многообразия, отвечающие двум точкам одной расширенной камеры, естественным образом диффеоморфны и могут быть получены одно из другого гладкой изотопией, возможно, дополненной симметрией, меняющей знаки $\ld,\ell$. В дальнейшем оговорок о знаках $\ld,\ell$ делать более не будем.
Из сказанного вытекает, что две камеры при заданных $\ld,\ell$ следует считать одинаковыми, если они получены сечением одной и той же расширенной камеры. В силу этого легко проследить по рисункам, что различных камер имеется всего восемь, причем одна из них, содержащая сколь угодно большие по модулю отрицательные значения $h$, недопустима, то есть в прообразе ее точек интегральные многообразия пусты. Таким образом, интересующих нас допустимых камер всего семь и они занумерованы на рисунках римскими цифрами $\ts{I} - \ts{VII}$.

Суммируя представленную выше информацию, строим 18 устойчивых по параметрам бифуркационных диаграмм $\mSell$ систем на $\mPel$. При их построении принципиальной становится проблема визуализации малых областей. В связи с этим диаграммы $\mSell$ на рисунках сильно искажены, причем не удается соблюсти сохранение структуры уровней $h=\cons$ с тем, чтобы впоследствии легко увидеть грубый изоэнергетический инвариант Фоменко. Для этого более удобны изоэнергетические диаграммы \cite{mtt40,SaKhSh2012}, также классифицированные ниже.

\def\uuu{0.9}

\begin{figure}[!ht]
\centering
\includegraphics[width=\uuu\textwidth, keepaspectratio]{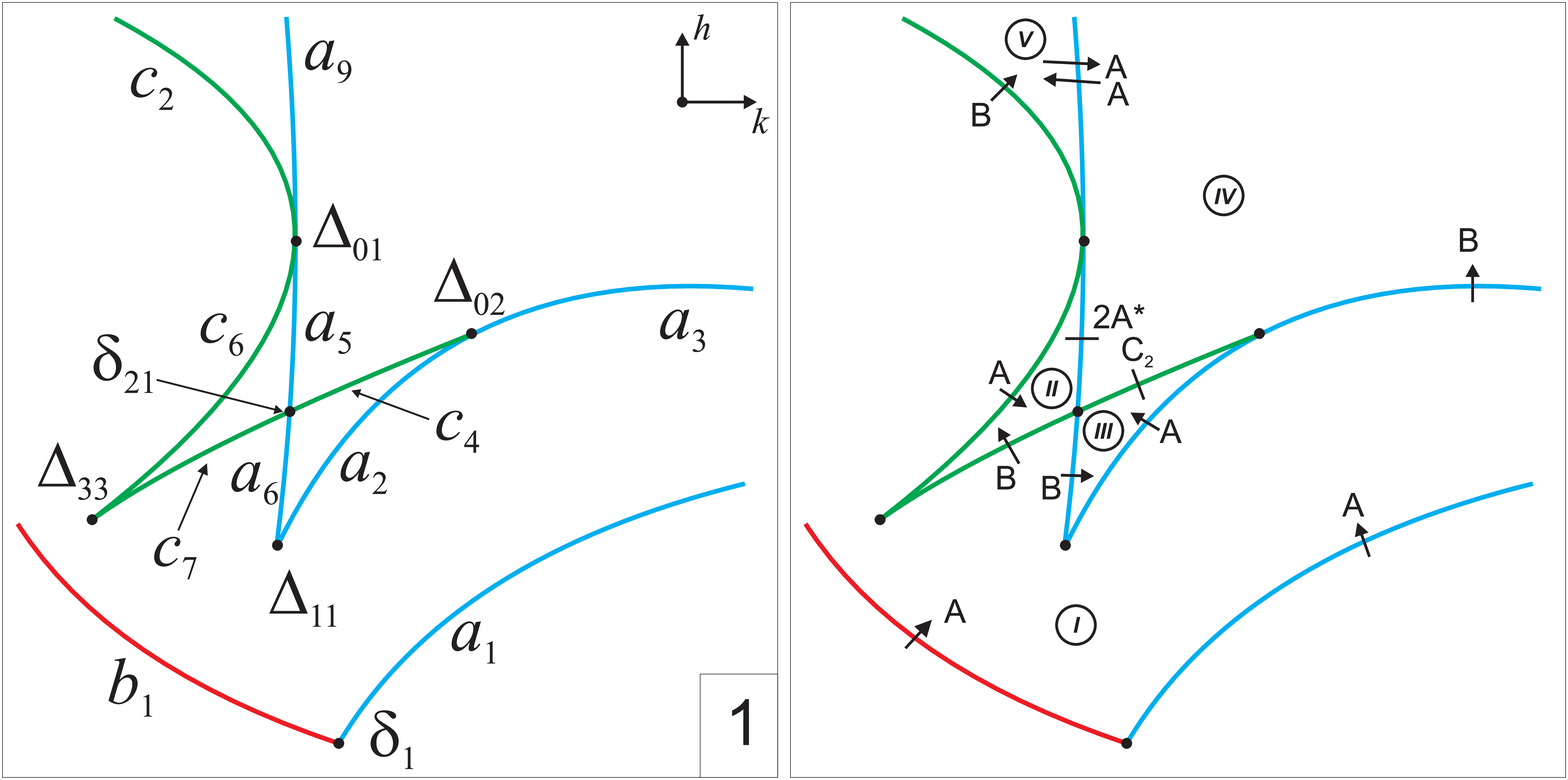}
\includegraphics[width=\uuu\textwidth, keepaspectratio]{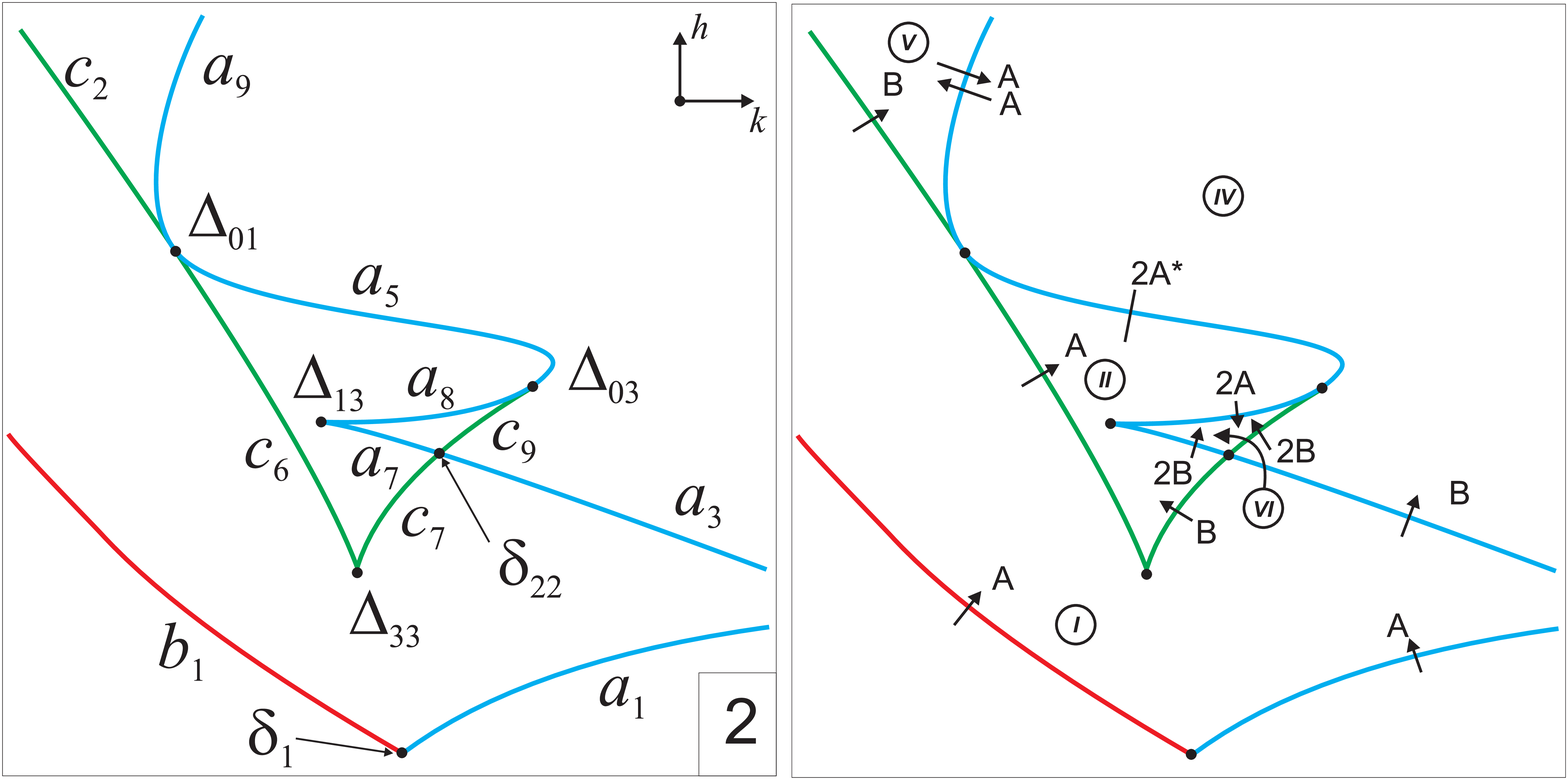}
\includegraphics[width=\uuu\textwidth, keepaspectratio]{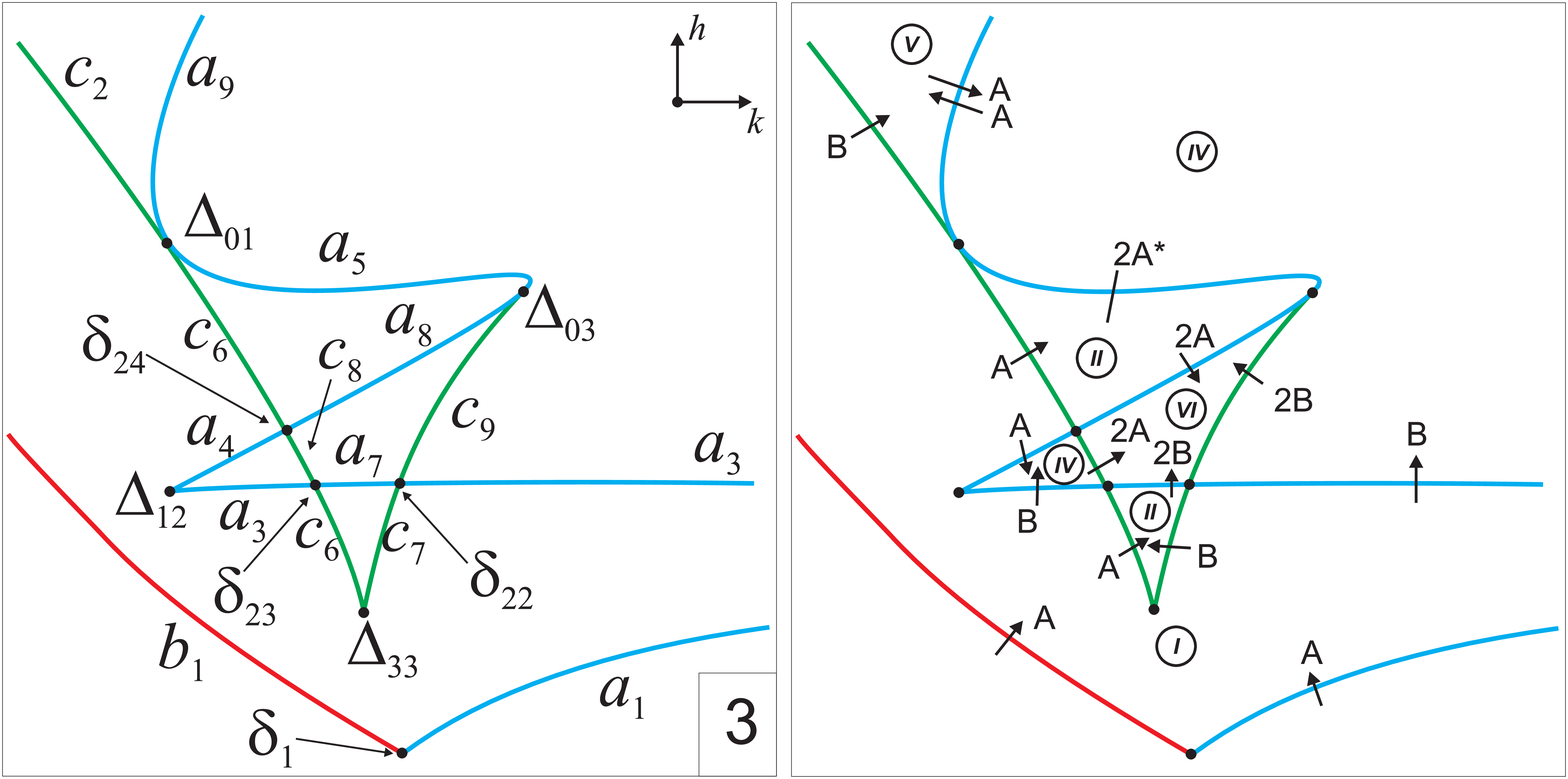}
\caption{Диаграммы $\mSell$ для областей 1 -- 3.}\label{fig_reg01both}
\end{figure}

\begin{figure}[!ht]
\centering
\includegraphics[width=\uuu\textwidth, keepaspectratio]{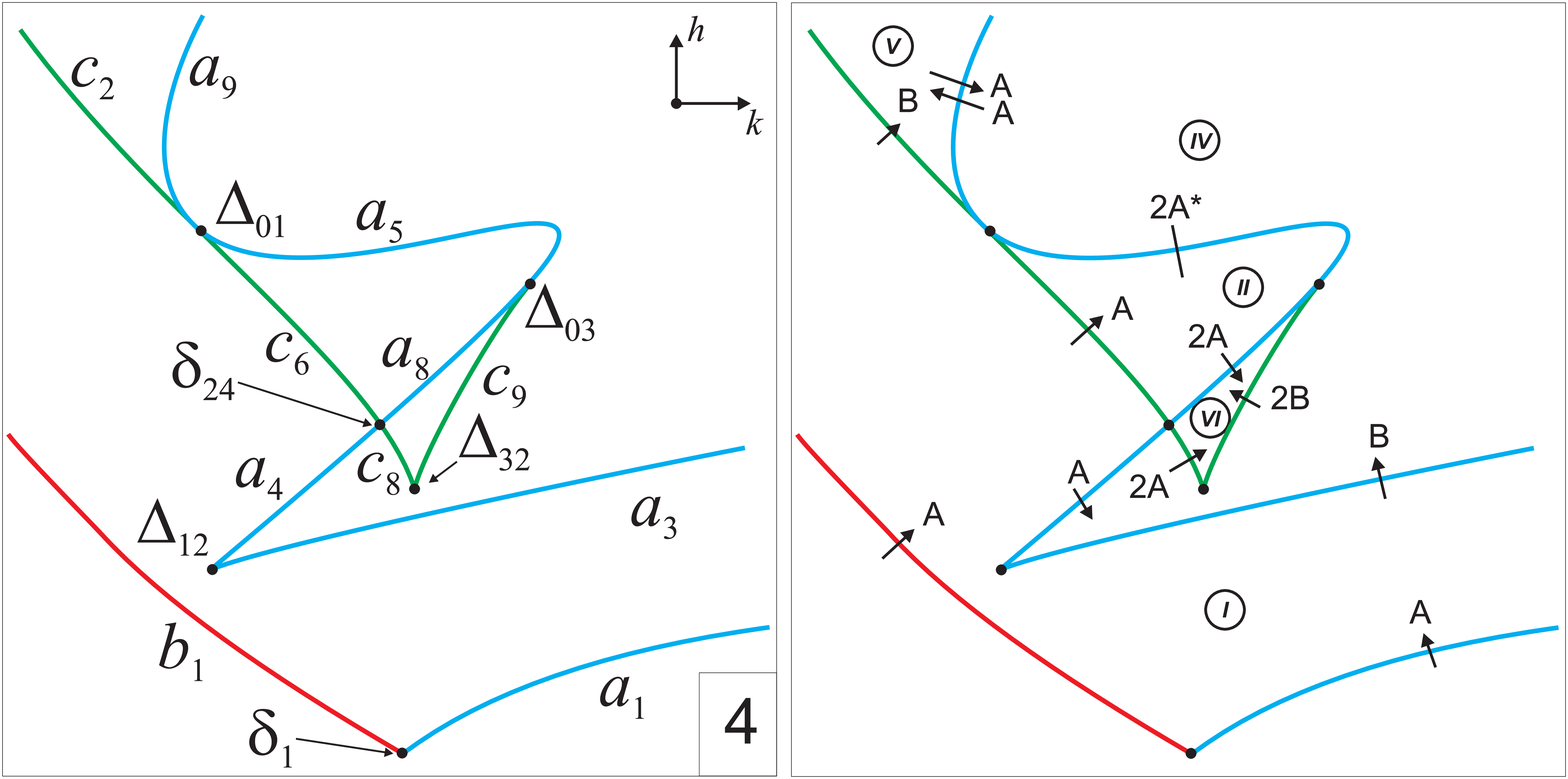}
\includegraphics[width=\uuu\textwidth, keepaspectratio]{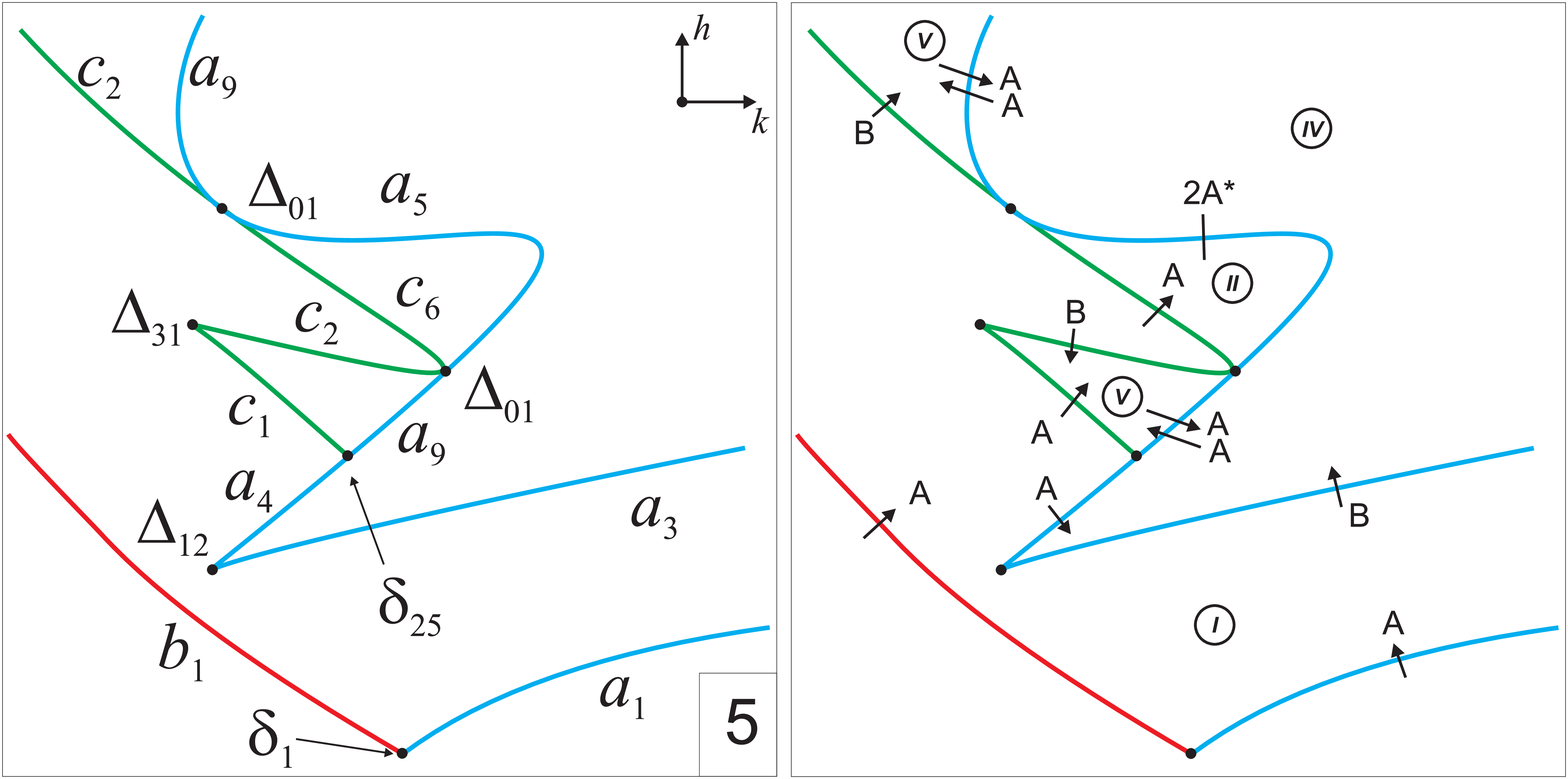}
\includegraphics[width=\uuu\textwidth, keepaspectratio]{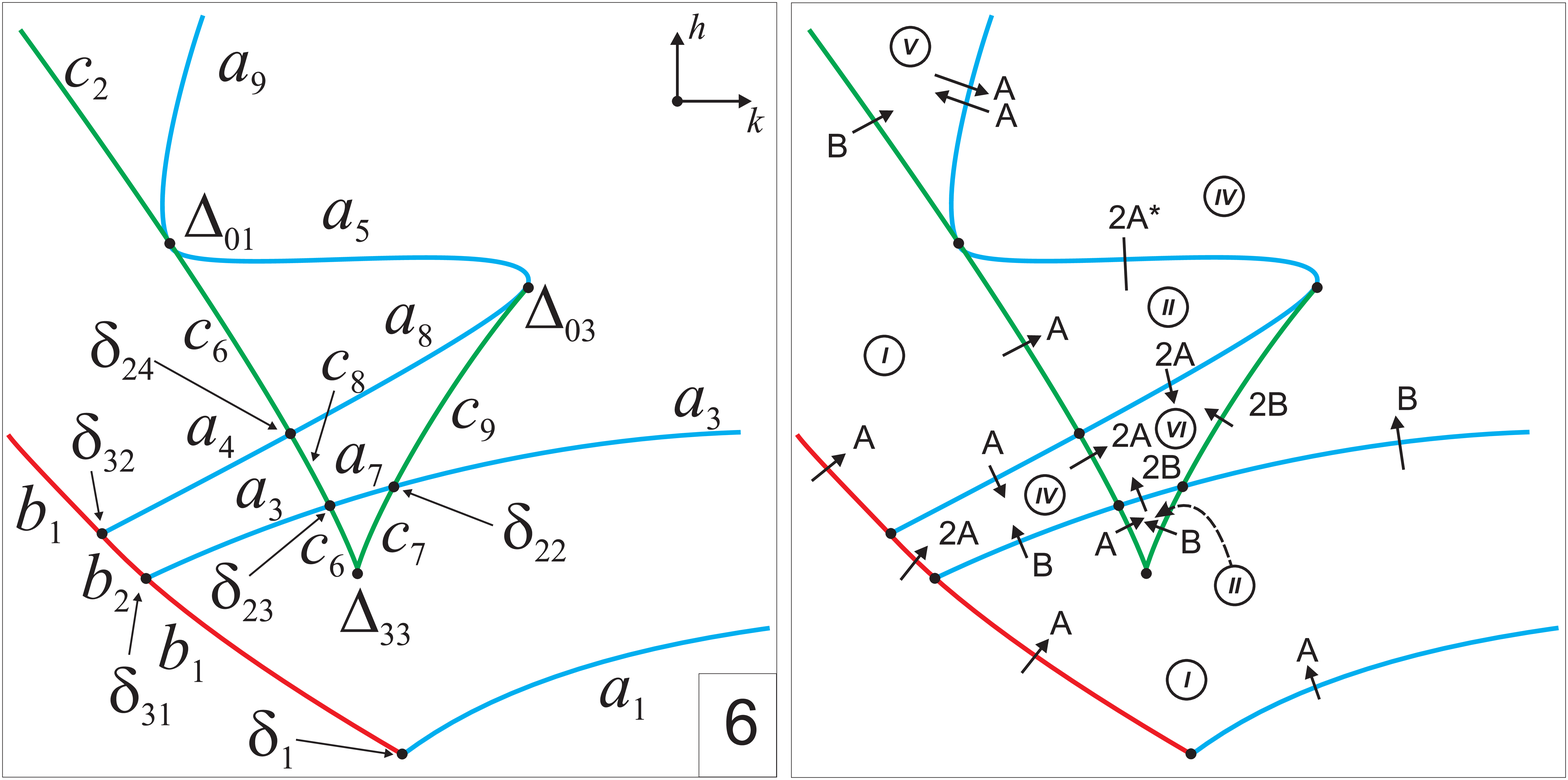}
\caption{Диаграммы $\mSell$ для областей 4 -- 6.}\label{fig_reg04both}
\end{figure}

\begin{figure}[!ht]
\centering
\includegraphics[width=\uuu\textwidth, keepaspectratio]{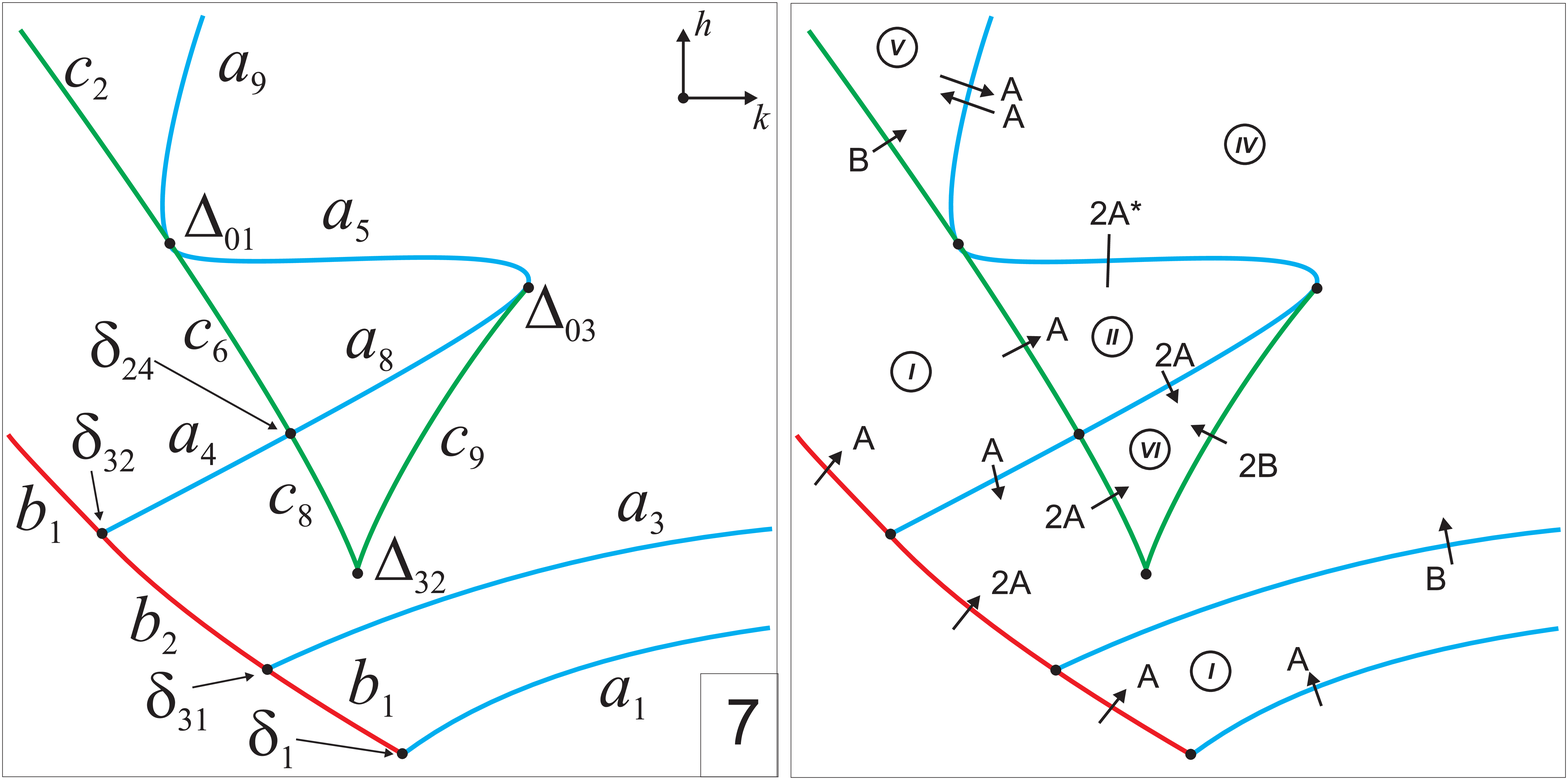}
\includegraphics[width=\uuu\textwidth, keepaspectratio]{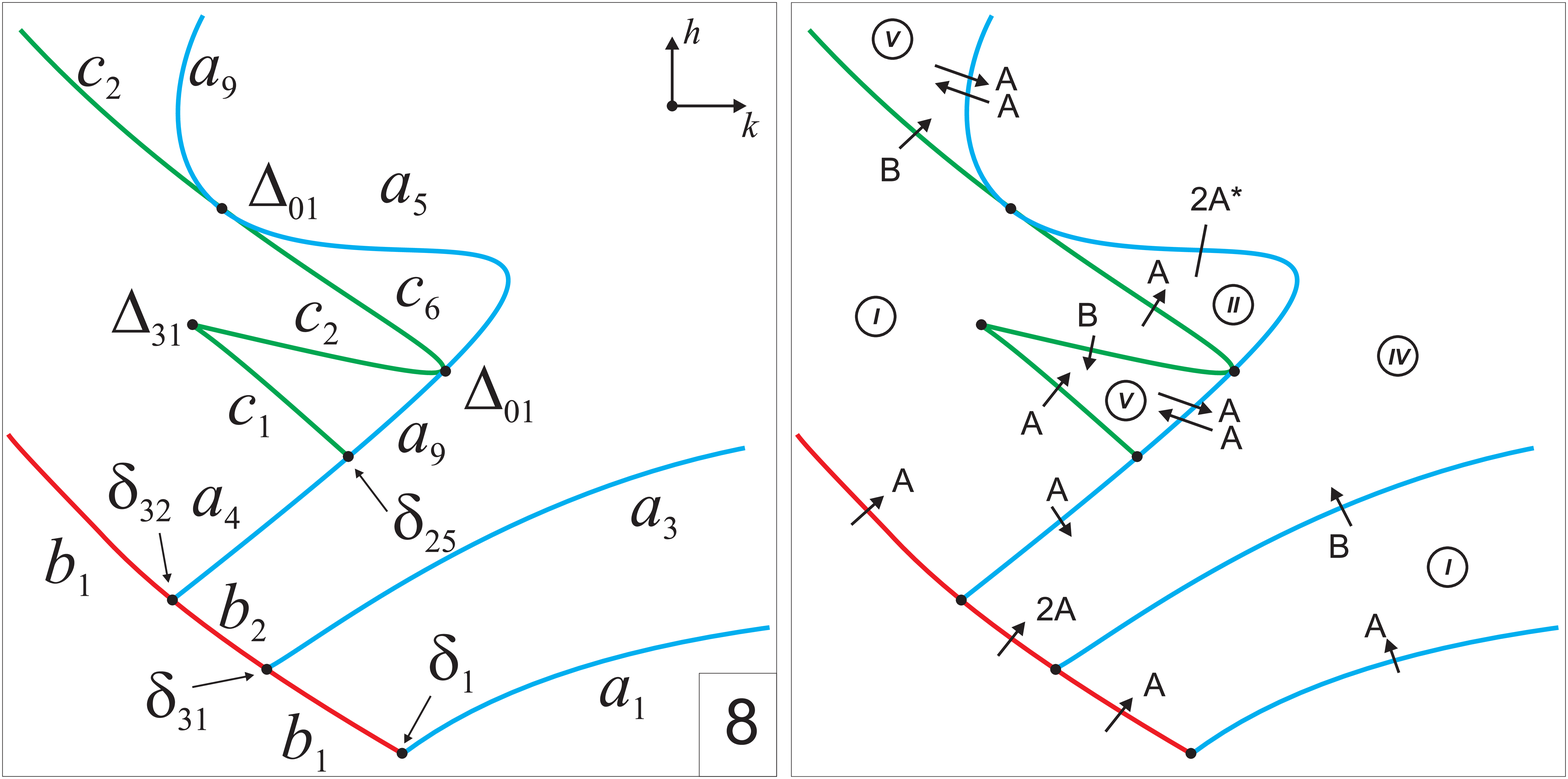}
\includegraphics[width=\uuu\textwidth, keepaspectratio]{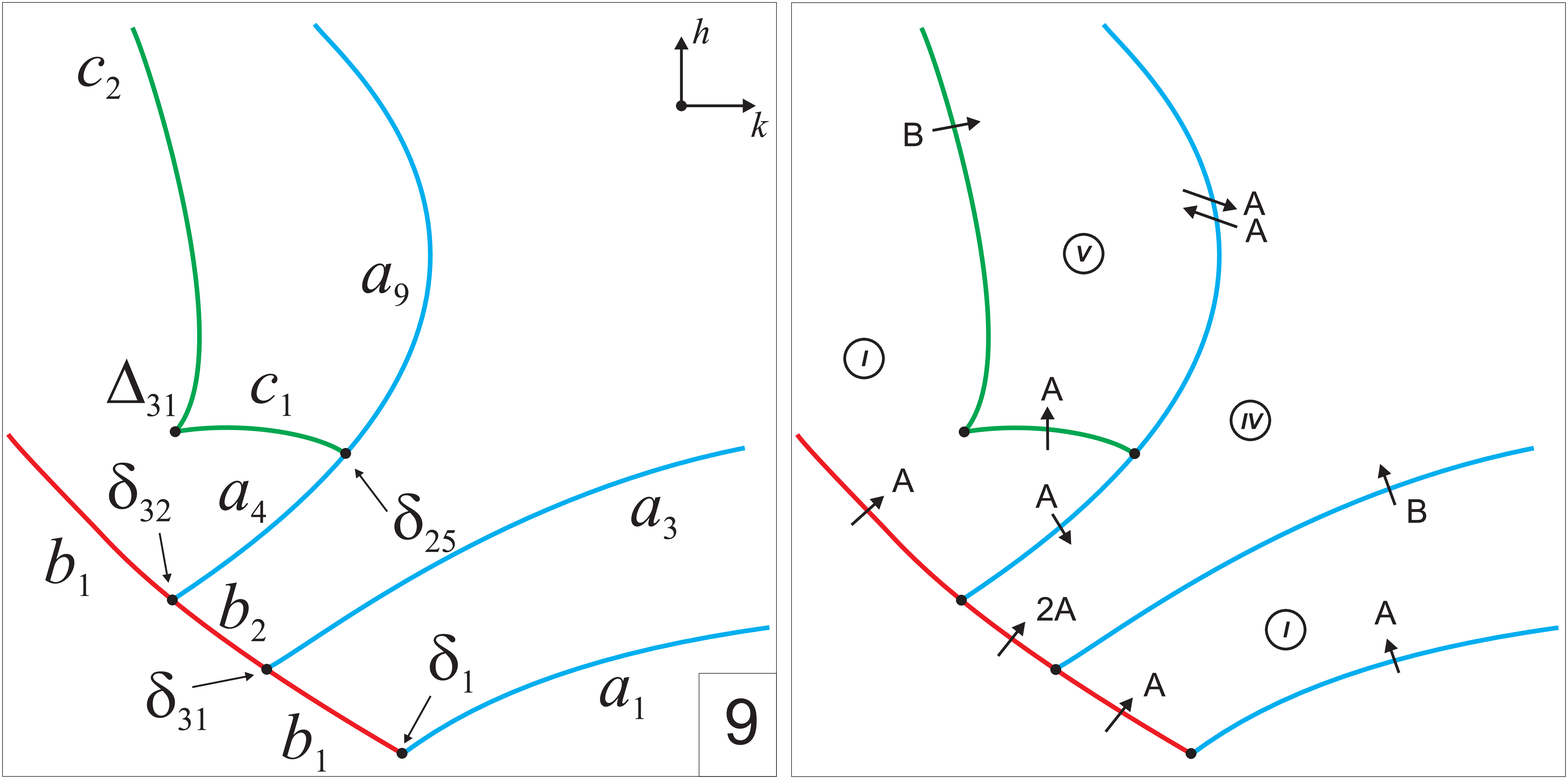}
\caption{Диаграммы $\mSell$ для областей 7 -- 9.}\label{fig_reg07both}
\end{figure}

\begin{figure}[!ht]
\centering
\includegraphics[width=\uuu\textwidth, keepaspectratio]{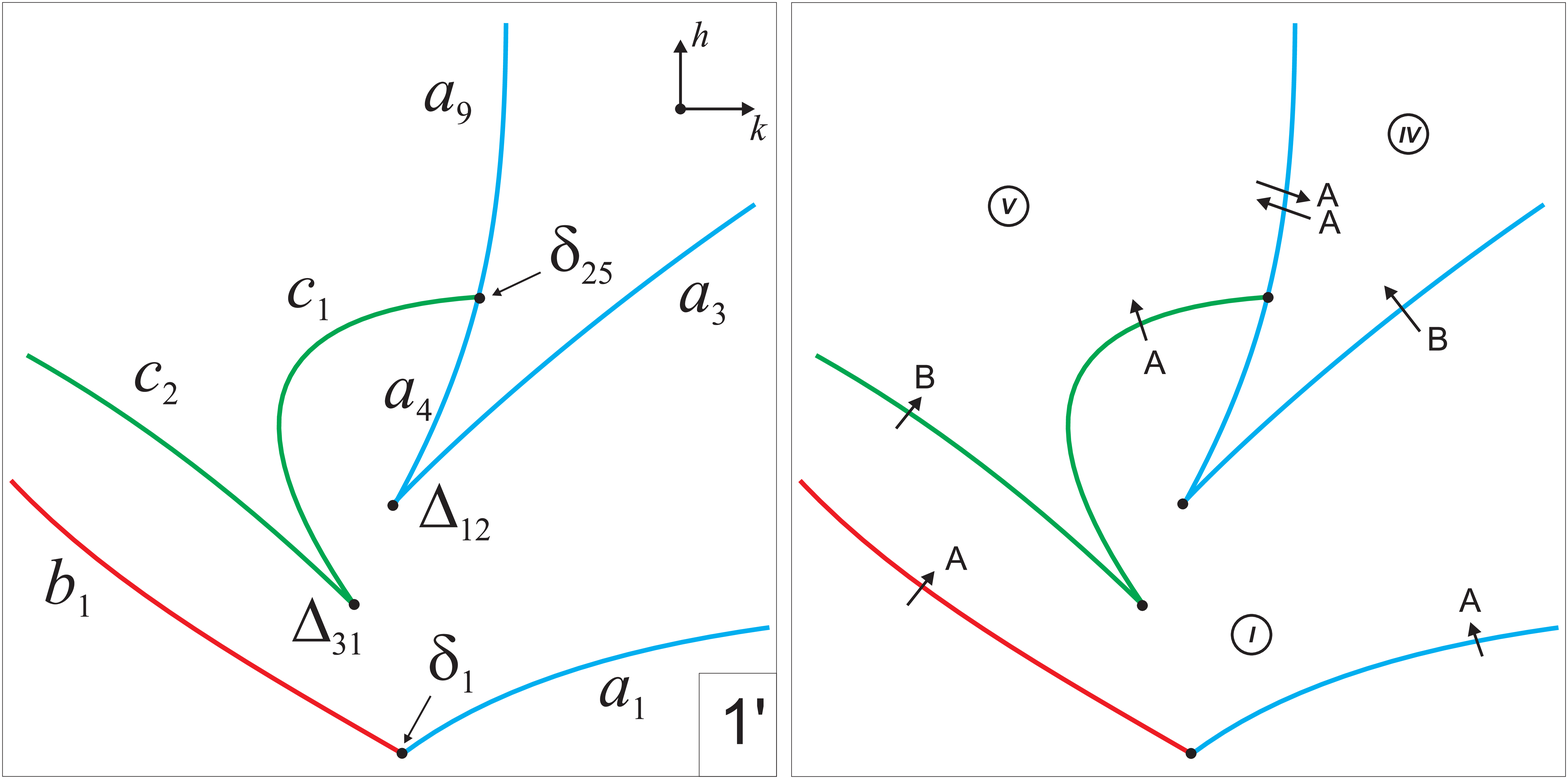}
\includegraphics[width=\uuu\textwidth, keepaspectratio]{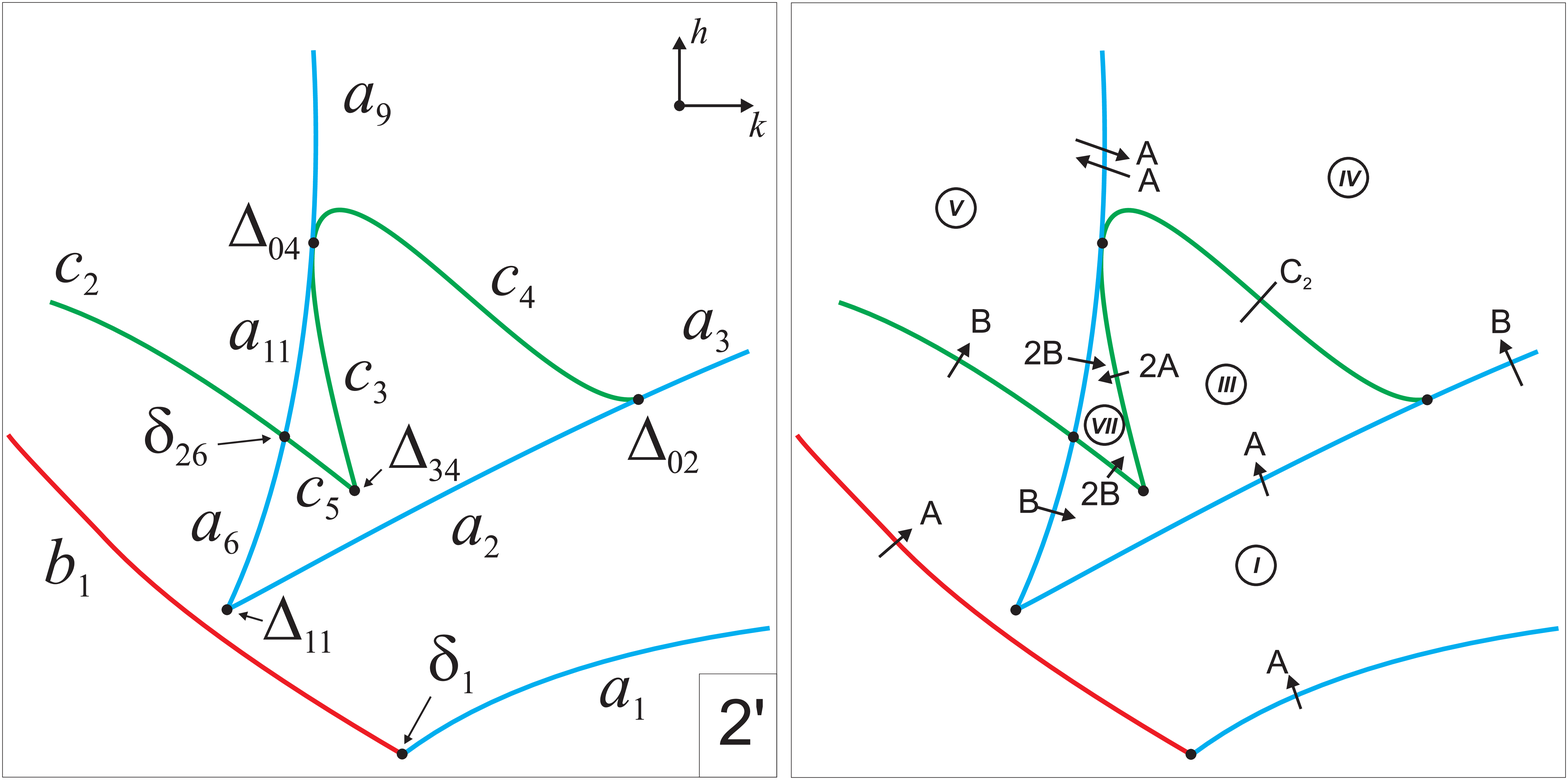}
\includegraphics[width=\uuu\textwidth, keepaspectratio]{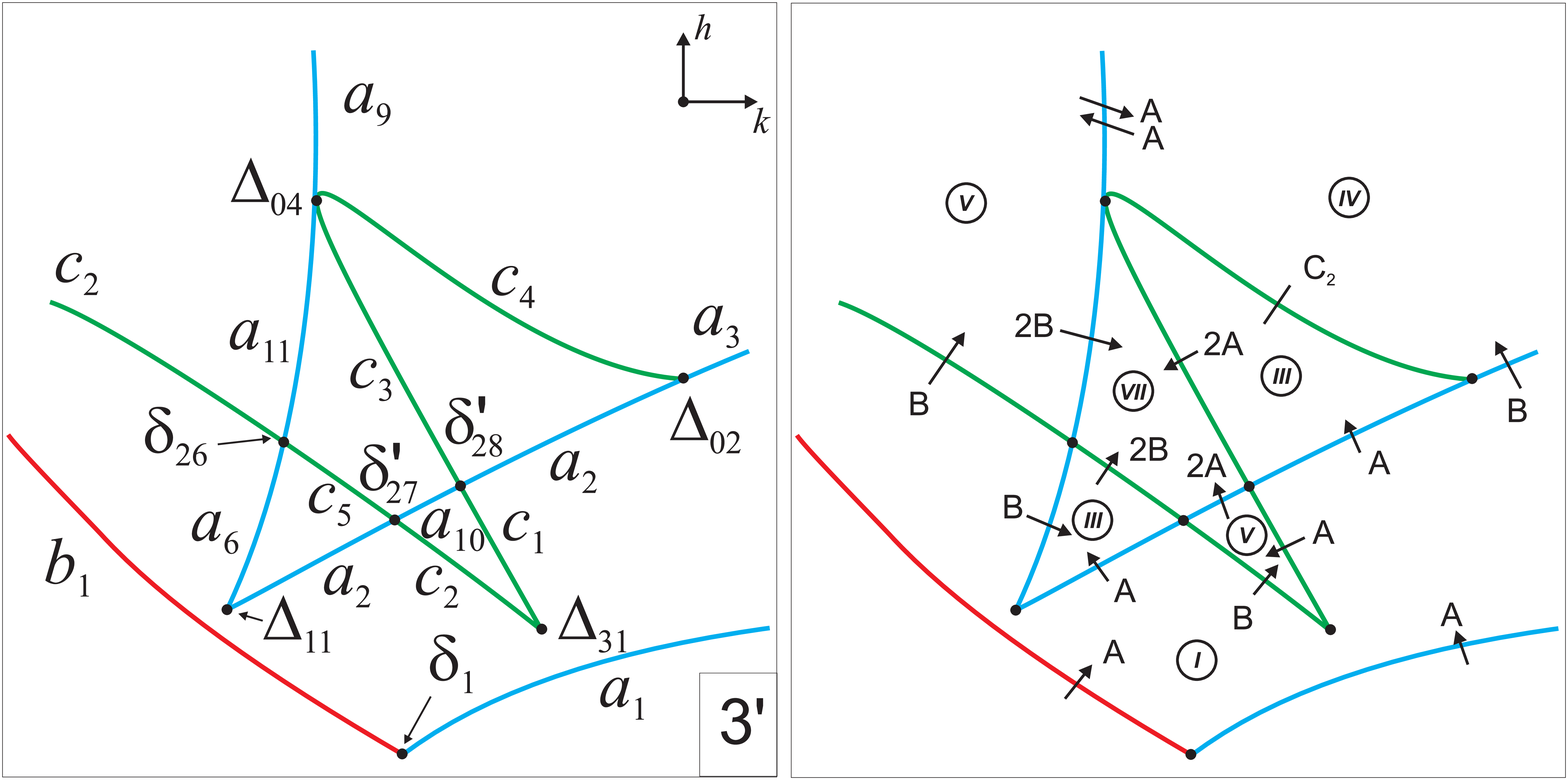}
\caption{Диаграммы $\mSell$ для областей $1'$ -- $3'$.}\label{fig_reg01sboth}
\end{figure}

\begin{figure}[!ht]
\centering
\includegraphics[width=\uuu\textwidth, keepaspectratio]{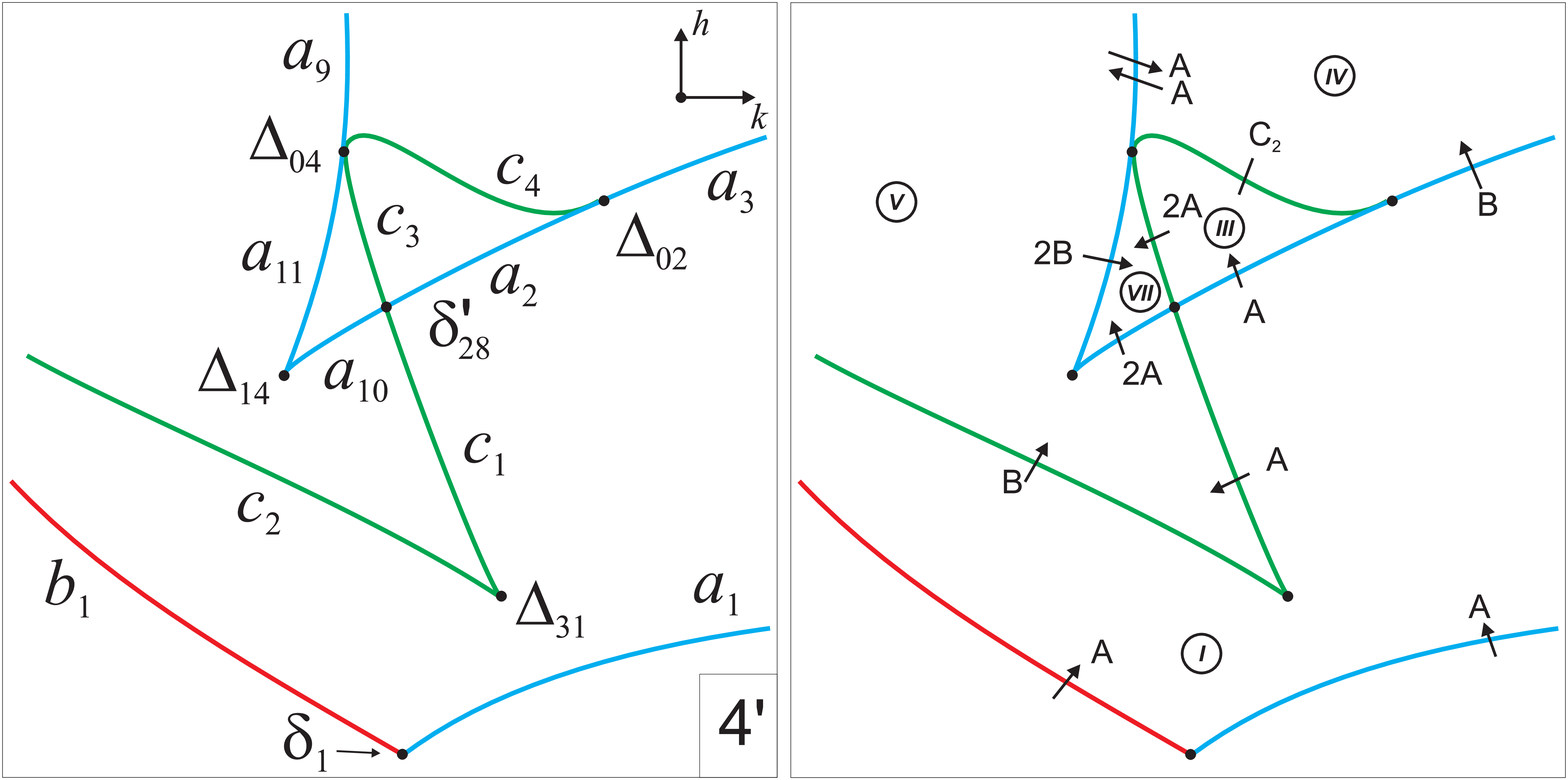}
\includegraphics[width=\uuu\textwidth, keepaspectratio]{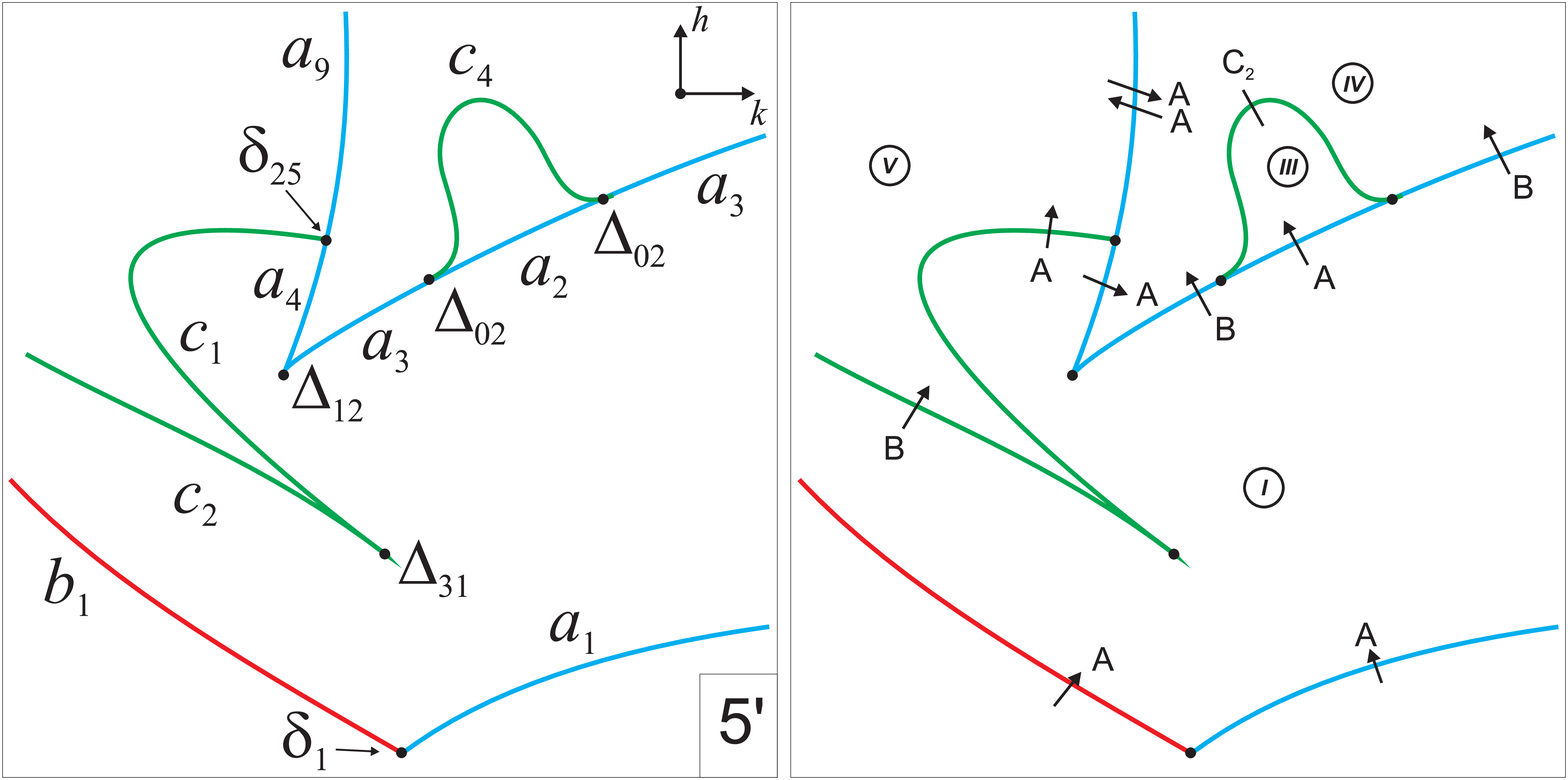}
\includegraphics[width=\uuu\textwidth, keepaspectratio]{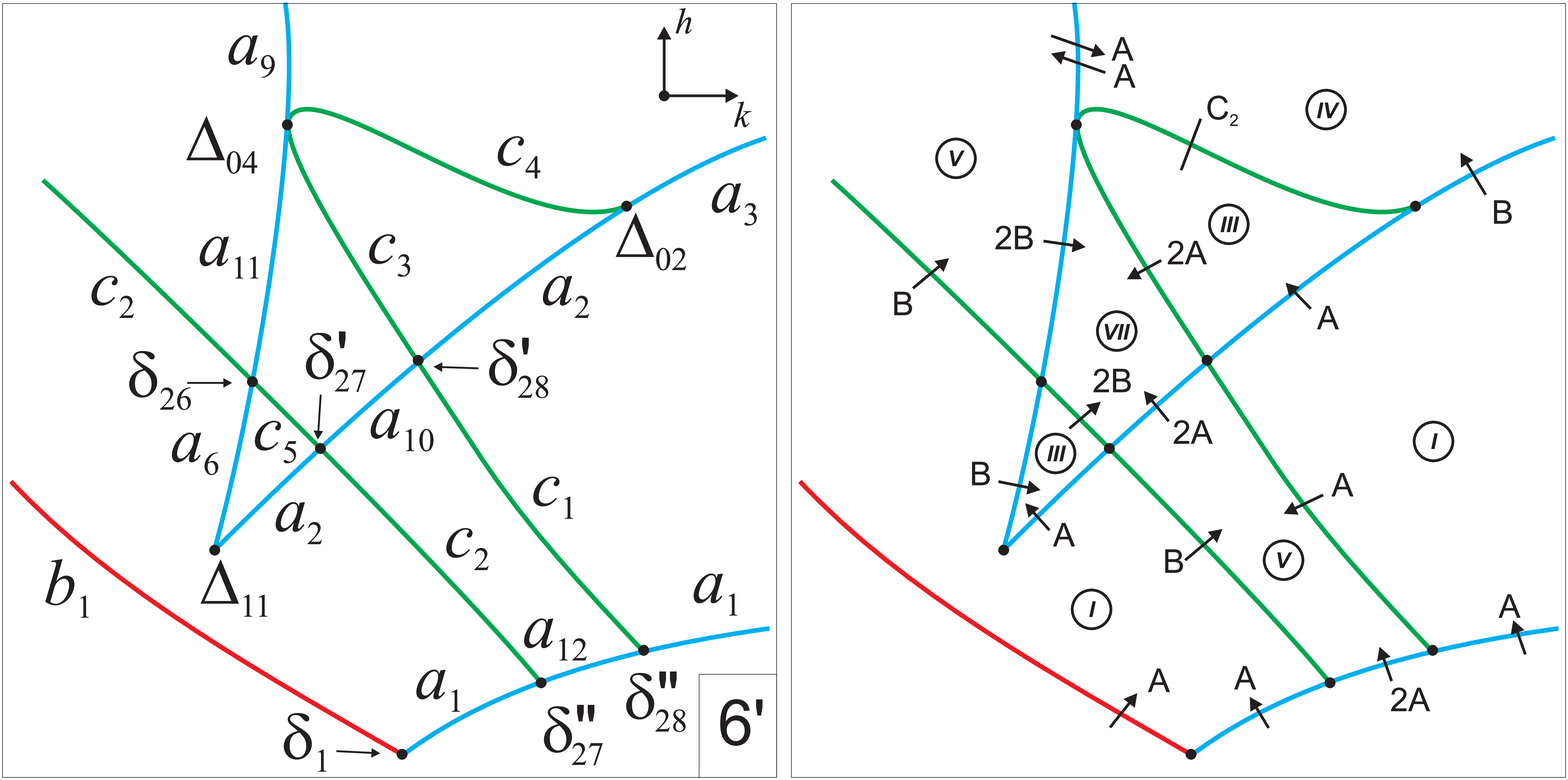}
\caption{Диаграммы $\mSell$ для областей $4'$ -- $6'$.}\label{fig_reg04sboth}
\end{figure}

\begin{figure}[!ht]
\centering
\includegraphics[width=\uuu\textwidth, keepaspectratio]{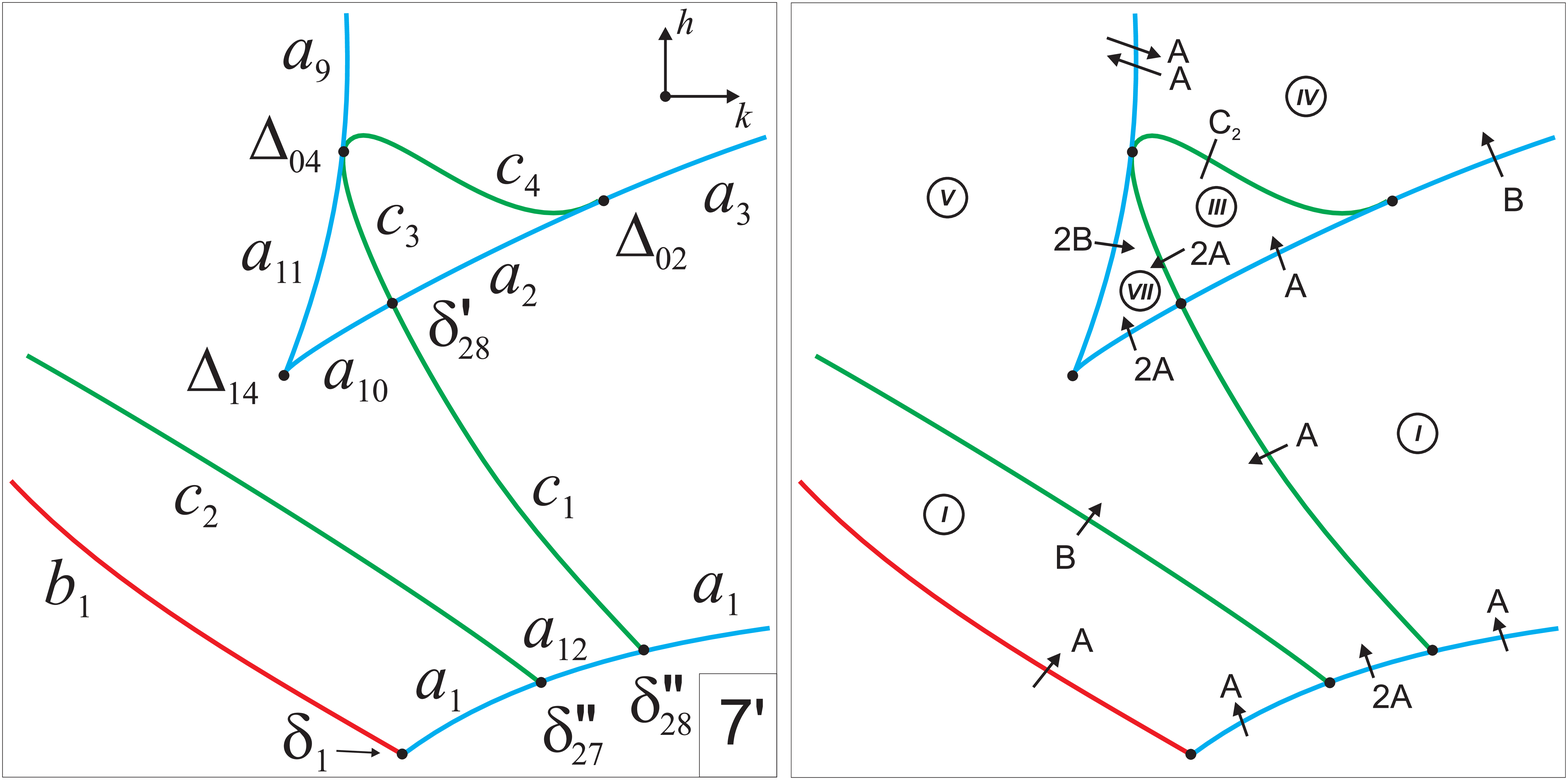}
\includegraphics[width=\uuu\textwidth, keepaspectratio]{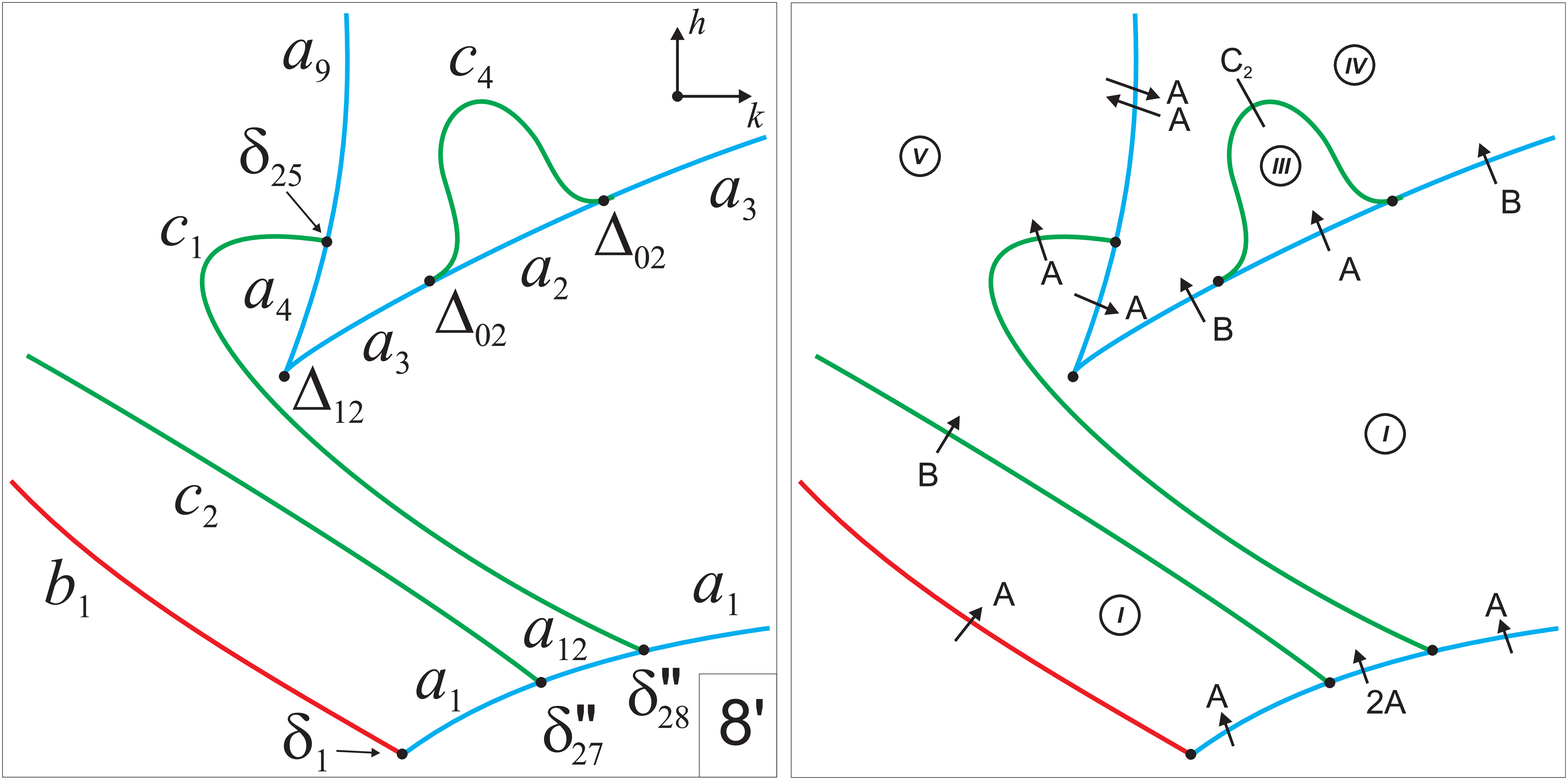}
\includegraphics[width=\uuu\textwidth, keepaspectratio]{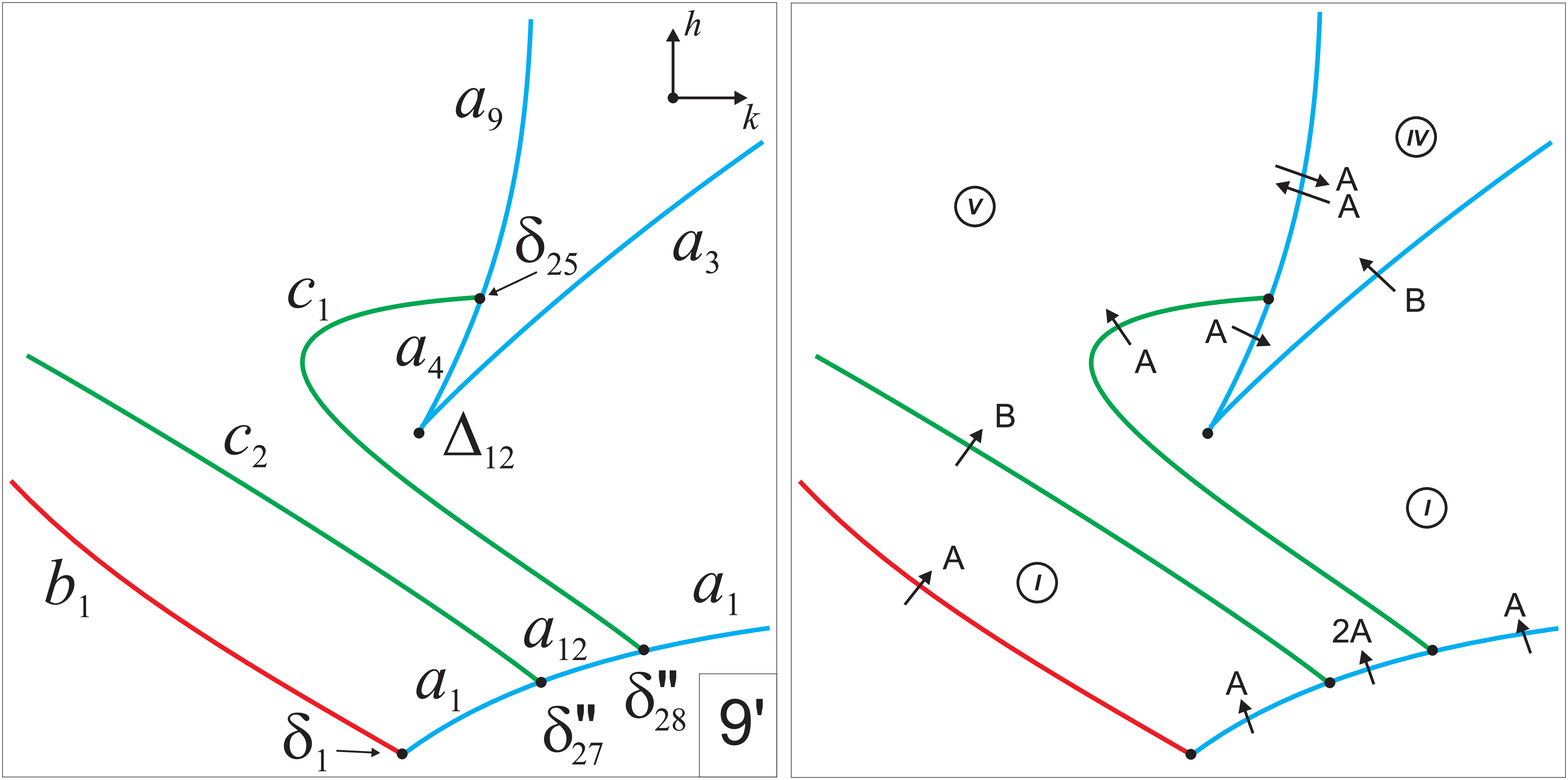}
\caption{Диаграммы $\mSell$ для областей $7'$ -- $9'$.}\label{fig_reg07sboth}
\end{figure}

\clearpage

\subsection{Топологический анализ}
Первый вопрос топологического анализа -- сколько торов Лиувилля содержит в себе интегральное многообразие в прообразе регулярной точки? Не зная точно топологический тип критического интегрального многообразия с гиперболическими особенностями, ответить на этот вопрос в одной отдельно взятой системе на $\mPel$ достаточно сложно. В системе с двумя степенями свободы характер бифуркации, происходящей при пересечении критических точек ранга~1, однозначно определен типом точки лишь в случае эллиптических особенностей~-- атомов~$A$. В данной задаче доказательно исследованы лишь два частных случая $\ld=0$ \cite{KhDan83,KhPMM83,KhBook88} и $\ell=0$ \cite {RyabDis,RyabRCD}. В общем случае утверждение \cite{Gash2,Gash5} о встречающемся в этой задаче количестве торов в составе регулярных многообразий основано на численном исследовании их проекций на плоскость первых двух компонент угловой скорости (аналог областей Жуковского). Рассматривая, однако, всю картину в четырехмерном пространстве интегральных параметров $\bR^4(\ld,\ell,k,h)$ (расширенном пространстве), получим ответ сразу же, не прибегая ни к каким дополнительным исследованиям. Действительно, отбросим все гиперболические атомы, считая их точный вид пока неизвестным. Для расширенной камеры гладкое ребро диаграммы $\mSell$ с фиксированным обозначением порождает (трехмерную) стенку. Для выбранной камеры атом $A$ на ребре или стенке назовем \textit{входом}, если его стрелка направлена внутрь камеры (то есть тор Лиувилля рождается при входе в камеру) и \textit{выходом} -- в противоположном случае. Рассмотрим три сечения в расширенном пространстве, отвечающие областям $1$, $2$ и $2'$ (рис.~\ref{fig_define_tori}). Самая нижняя (относительно направления оси $h$) точка диаграммы отвечает одной невырожденной критической точке ранга~0 типа ``центр-центр'' -- самому нижнему относительному равновесию при заданных $(\ld,\ell)$, поэтому, как отмечено выше, в незанумерованной камере ниже диаграммы движений нет. Следовательно, в камере $\ts{I}$ имеем ровно один тор (на стенках $\aaa_1$, $\bbb_1$ по одному входу). Из диаграммы для области $1$ видно, что из камеры $\ts{I}$ имеется ровно по одному входу в камеры $\ts{II}$ и $\ts{III}$ (соответственно через стенки $\ccc_6$ и $\ccc_4$). Аналогично, из диаграммы для области $1'$ видно, что из камеры $\ts{I}$ имеется ровно по одному входу и в камеры $\ts{IV}$ и $\ts{V}$ (соответственно через стенки $\aaa_4$ и $\ccc_1$). Следовательно, в камерах $\ts{II} - \ts{V}$ каждой точке отвечают два тора Лиувилля. Диаграмма для области $2$ демонстрирует ровно два входа из камеры $\ts{II}$ в камеру $\ts{VI}$ (стенка $\aaa_8$), а диаграмма для области $2'$ дает ровно два входа из камеры $\ts{III}$ в камеру $\ts{VII}$ (стенка $\ccc_3$). Следовательно, в камерах $\ts{VI}-\ts{VII}$ имеем четыре тора Лиувилля. Следующая теорема уточняет результаты работ \cite{Gash2,Gash5,GashDis}, в которых ошибочно утверждается наличие шести, а не семи различных камер. В частности, при малых отличных от нуля значениях $\ld$ констатируется наличие лишь пяти камер, хотя, как видно уже из диаграмм для областей $1, 2$ на рис.~\ref{fig_reg01both}, для сколь угодно малых $\ld$ камера $\ts{III}$ области~$1$ меняется на камеру~$\ts{VI}$ области~$2$. Это означает, что в пространстве $\bR^3(\ell,h,k)$ при сколь угодно малых фиксированных $\ld$ диаграмма полного отображения момента делит это пространство на шесть, а не на пять связных компонент с непустыми интегральными многообразиями. При увеличении $\ld$ после прохождения этим параметром особого значения $2^{-3/4}$ шестая камера исчезает, а появляется седьмая.

\begin{theorem}\label{theo_tori}
Интегральные многообразия в прообразах точек камер пространства интегральных постоянных таковы: $\bT^2$ для камеры $\ts{I}$; $2\bT^2$ для камер $\ts{II} - \ts{V}$; $4\bT^2$ для камер $\ts{VI} - \ts{VII}$.
\end{theorem}

\def\uuu{0.9}
\begin{figure}[!ht]
\centering
\includegraphics[width=\uuu\textwidth, keepaspectratio]{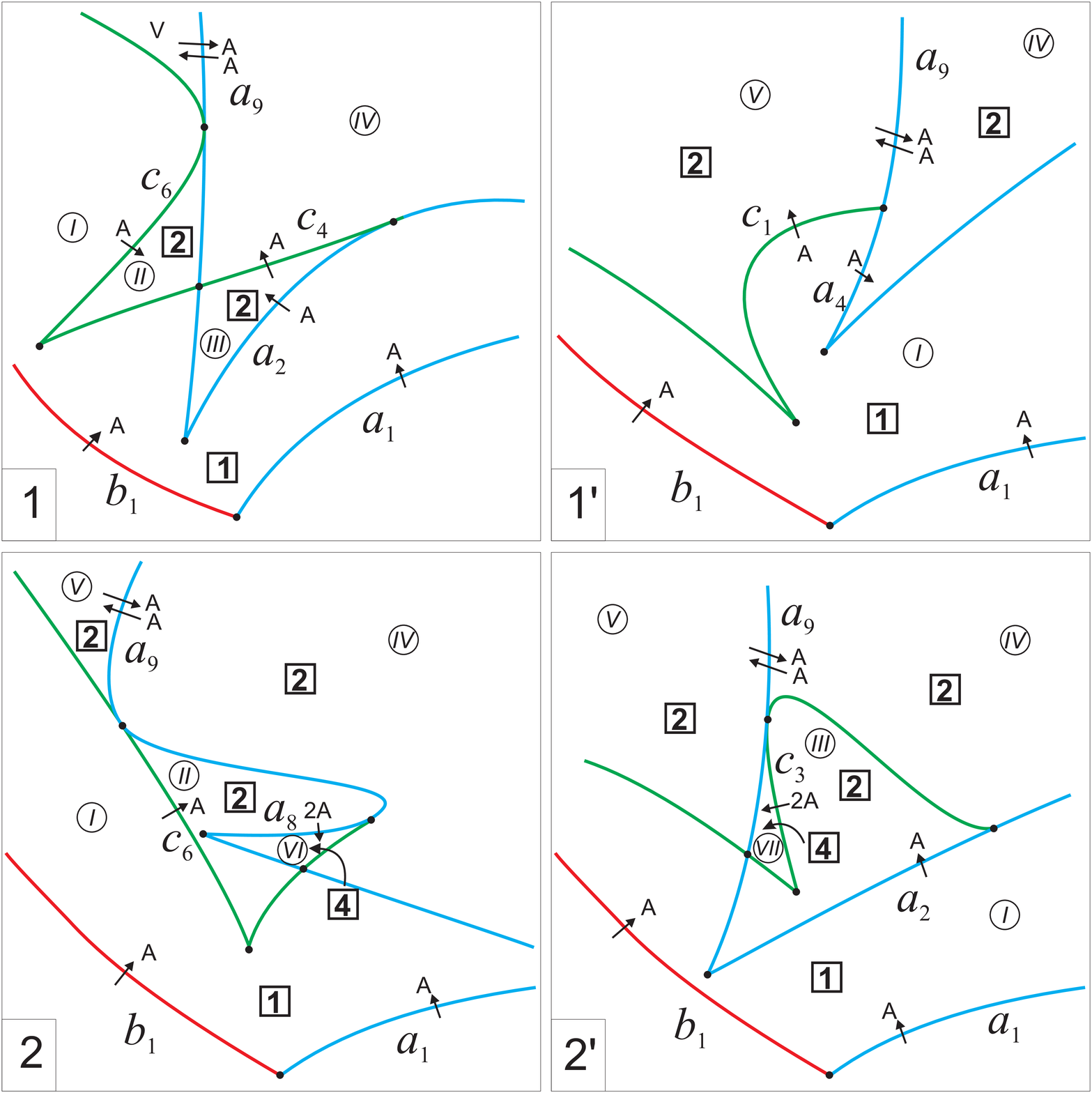}
\caption{Количество торов в камерах.}\label{fig_define_tori}
\end{figure}

Для изучения тонкой топологии системы одним из основных объектов исследования, наряду с камерами и количеством торов в них, являются так называемые семейства торов Лиувилля.
Обозначим для краткости через $\mPr$ расширенное фазовое пространство:
$$
\mPr=\mwide{\mP^5}=\bigcup_{(\ld,\ell)} \mPel{\times}\{(\ld,\ell)\} = \bigcup_{\ld} S^2(\ba){\times}\bR^3(\bo){\times}\{\ld\} = S^2(\ba){\times}\bR^3(\bo){\times}\bR(\ld).
$$
Все интегральные многообразия $J_{\ell,h,k}(\ld)$ вкладываются в $\mPr$ как $J_{\ell,h,k}(\ld){\times}\{\ld\}$. Естественно возникает следующее определение.

\begin{defin}
Исключим из $\mPr$ все связные компоненты критических уровней, содержащие критические точки отображения момента. Связная компонента оставшегося множества называется \emph{семейством} торов Лиувилля.
\end{defin}

Если в системе нет минимальных (максимальных) регулярных торов (а в рассматриваемой задаче их нет, что видно из классификации диаграмм, оснащенных атомами), то пересечение прообраза точки любой камеры с любым семейством может состоять не более чем из одного тора. В некоторых задачах говорят о нескольких торах одного семейства в прообразе точек камеры \cite{BRF} или, что то же самое, о семействах, состоящих из нескольких торов \cite{Mor}. В этом случае определение семейства принимается таким: в одно семейство относятся торы, которые испытывают одинаковые бифуркации на границах камеры \cite{Mor}. Подобные семейства встречаются в классическом случае Ковалевской $\ld=0$ \cite{BRF} и в частном случае $\ell=0$ \cite{Mor}, однако, как легко видеть, в общем случае $\ld \ell \ne 0$ таких семейств просто нет. Действительно, в камерах $\ts{II}, \ts{III}$ имеется два тора и стенка с одним входом, на которой, следовательно, подвергается бифуркации только один тор. Между камерами $\ts{IV}, \ts{V}$ имеется стенка, содержащая один вход и один выход, поэтому два тора этих камер испытывают заведомо различные бифуркации. У камер $\ts{VI},\ts{VII}$, точкам которых отвечают четыре тора, не только в расширенном пространстве, но и при любых фиксированных $\ld,\ell$, при которых эти камеры существуют, имеются одна либо две стенки с двумя входами из соседних камер, причем, если стенок две, то на них испытывают бифуркации разные пары торов. Поэтому и здесь нет сразу четырех или даже двух торов, испытывающих одинаковые бифуркации. В частности, отсюда следует, что заведомо неверна теорема 6 работы \cite{LogRus}, в которой утверждается наличие семи семейств, два из которых имеют в своем составе по два тора. Семейства в работе \cite{LogRus} понимаются в смысле определения работы \cite{Mor}. Еще раз  подчеркнем, что при любых фиксированных $\ld,\ell$, таких что $\ld \ell \ne 0$, не существует камер, на границах которых все торы этих камер испытывали бы одинаковые бифуркации.

Установим количество семейств и их распределение по камерам. Отметим одно свойство случая Ковалевской\,--\,Яхья: в прообразе точки бифуркационной диаграммы регулярные торы не могут присутствовать одновременно с гиперболическими атомами. Иначе говоря, если ребру диаграммы отвечают гиперболические атомы, то в бифуркациях участвуют все торы прилегающих камер\footnote{Это свойство не имеет места, например, в случае Горячева\,--\,Чаплыгина.}. Поэтому для того, чтобы в двух камерах присутствовало одно и то же семейство, необходимо (но, конечно, не достаточно), чтобы из одной из них существовал путь в другую, состоящий из входов или выходов. Если в некоторой камере семейство только одно, то наличие из нее выхода в любую камеру является и достаточным условием того, что данное семейство сохранится во второй камере. Рассмотрим диаграммы для областей $1$, $1'$, $3$ и $3'$ (рис.~\ref{fig_define_fams}). Семейства нумеруем арабскими цифрами и, как множества, помещаем в фигурные скобки (в отличие от номеров областей на $(\ld,\ell)$-плоскости). Обозначим через $\tfs{1}$ семейство, начинающееся в камере~$\ts{I}$. Из диаграммы для области $1$ видно, что из камеры $\ts{I}$ имеется ровно по одному входу в камеры $\ts{II},\ts{III}$. Поэтому семейство $\tfs{1}$ продолжает жить и в этих камерах. Могут ли торы, рождающиеся на входах в эти камеры, принадлежать одному семейству? На диаграммах c нанесенными атомами только типа $A$ прослеживаем, что между камерами $\ts{II},\ts{III}$ не существует пути только с атомами $A$, не заходящего в камеру $\ts{I}$ (при таком заходе новые семейства умирают). Поэтому вторые семейства в этих камерах различны. Обозначим их через $\tfs{2}$ и $\tfs{3}$. Из диаграммы для области $1'$ также вытекает, что семейство $\tfs{1}$ продолжает жить и в камерах $\ts{IV},\ts{V}$. Могут ли торы, рождающиеся на входах в эти камеры, принадлежать одному семейству? Очевидно, нет, поскольку общая стенка имеет вход и выход, поэтому при ее пересечении сохраняется лишь семейство $\tfs{1}$, а другого пути только с атомами $A$ между камерами $\ts{IV},\ts{V}$, не заходящего в камеру $\ts{I}$, не существует. Аналогично устанавливаем, что семейства, рождающиеся на общей стенке камер $\ts{IV},\ts{V}$, не могут совпадать ни с одним из уже введенных. Поэтому обозначим эти новые семейства через $\tfs{4},\tfs{5}$. Из диаграммы для области~$3$ видно, что в камеру $\ts{VI}$ с четырьмя торами имеется по два входа из камер $\ts{II}$, $\ts{IV}$. Поэтому там сохранятся семейства $\tfs{1},\tfs{2},\tfs{4}$, а родится лишь одно новое семейство, которое обозначим $\tfs{6}$. Аналогично, из диаграммы для области $3'$ видно, что в камеру $\ts{VII}$ с четырьмя торами имеется по два входа из камер $\ts{III}$, $\ts{V}$. Поэтому там сохранятся семейства $\tfs{1},\tfs{3},\tfs{5}$, а родится лишь одно новое семейство, которое обозначим~$\tfs{7}$. Номера семейств, присутствующих одновременно в камере, соединяем знаком ``плюс''.

\begin{figure}[ht]
\centering
\includegraphics[width=\uuu\textwidth, keepaspectratio]{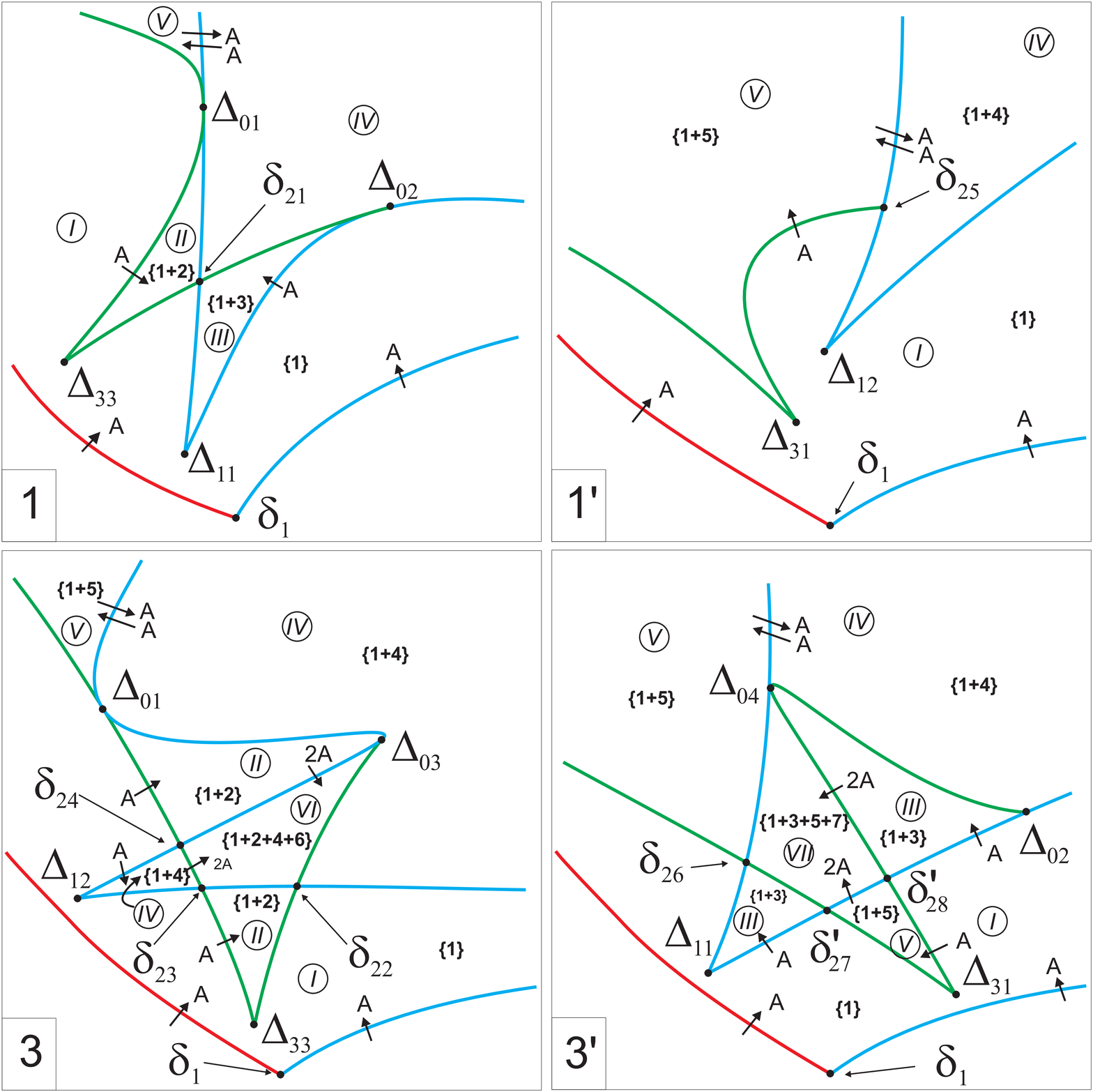}
\caption{Семейства торов в камерах.}\label{fig_define_fams}
\end{figure}

\begin{theorem}\label{theo_fams}
В случае Ковалевской\,--\,Яхья существует ровно семь семейств регулярных торов и их распределение по камерам указано в таблице~{\rm \ref{tabfam}}.
\begin{center}
\small
\begin{tabular}{|c|c|}
\multicolumn{2}{r}{{Таблица \myt\label{tabfam}}}\\
\hline
\begin{tabular}{c}{Камера}\end{tabular} &\begin{tabular}{c}{Семейства}\end{tabular}\\
\hline
\begin{tabular}{c}$\ts{I}$\end{tabular} &$\tfs{1}$\\
\hline
\begin{tabular}{c}$\ts{II}$\end{tabular} &$\tfs{1+2}$\\
\hline
\begin{tabular}{c}$\ts{III}$\end{tabular} &$\tfs{1+3}$\\
\hline
\begin{tabular}{c}$\ts{IV}$\end{tabular} &$\tfs{1+4}$\\
\hline
\begin{tabular}{c}$\ts{V}$\end{tabular} &$\tfs{1+5}$\\
\hline
\begin{tabular}{c}$\ts{VI}$\end{tabular} &$\tfs{1+2+4+6}$\\
\hline
\begin{tabular}{c}$\ts{VII}$\end{tabular} &$\tfs{1+3+5+7}$\\
\hline
\end{tabular}
\end{center}
\end{theorem}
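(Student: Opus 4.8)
The plan is to work in the enlarged phase space $\mPr = S^2(\ba)\times\bR^3(\bo)\times\bR(\ld)$ with the enlarged integral map $J_{\ell,h,k}(\ld){\times}\{\ld\}$, whose base is $\bR^4(\ld,\ell,k,h)$. First I would fix the decomposition of $\bR^4\setminus\mwide{\Sigma}$ into connected chambers and recall, from the analysis of the $18$ regions of the $(\ld,\ell)$-plane together with the shape of $\mSell$ on each of them, that these chambers fall into exactly the seven classes $\ts{I}$--$\ts{VII}$ under the relation ``same bifurcation diagram, joined by a path on which it stays the same''. By Theorem~\ref{theo_tori} the preimage over a point of $\ts{I}$ is one $\bT^2$, over $\ts{II}$--$\ts{V}$ it is $2\bT^2$, and over $\ts{VI}$--$\ts{VII}$ it is $4\bT^2$. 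The families are then built by continuation: one starts $\tfs{1}$ from the unique torus over $\ts{I}$ and follows it through the walls, at each wall deciding from the local normal form of the rank-$0$ critical points --- ``center--center'', ``saddle--center'' or ``saddle--saddle'', as classified in Theorem~\ref{th5} and tabulated for $\mm,\mn,\mo$ in Tables~\ref{table2},~\ref{table4},~\ref{table5} --- whether a given torus survives the crossing, merges with another, or whether a new family is born there.

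The mechanism is as follows. A regular torus carried continuously along a curve in $\mPr$ that meets the critical set only at rank-$0$ points of center--center type does not vanish: it shrinks onto the elliptic fixed point and reopens on the other side, hence continues as a single torus of the \emph{same} family into the adjacent chamber. This is exactly how $\tfs{1}$ propagates from $\ts{I}$ into $\ts{II},\ts{III}$ across the walls $\ccc_6,\ccc_4$ and into $\ts{IV},\ts{V}$ across $\aaa_4,\ccc_1$, and then further into $\ts{VI},\ts{VII}$. A component of a fiber that cannot be reached from the $\ts{I}$-torus in this way must first appear on a center--center wall of one of the $\mi_j$; reading $\mSell$ off the per-region pictures, the extra component of $\ts{II}$ (resp. $\ts{III},\ts{IV},\ts{V}$) is born on the corresponding wall and I label its family $\tfs{2}$ (resp. $\tfs{3},\tfs{4},\tfs{5}$), while the two further components appearing in $\ts{VI}$ and $\ts{VII}$ are born on the walls $\aaa_8$ and $\ccc_3$, giving $\tfs{6}$ and $\tfs{7}$. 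One also checks chamber by chamber that no family is destroyed on the way into $\ts{VI},\ts{VII}$, so that the counts $\tfs{1+2+4+6}$ over $\ts{VI}$ and $\tfs{1+3+5+7}$ over $\ts{VII}$ agree with Theorem~\ref{theo_tori}.

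The substantive step is pairwise distinctness of the families, i.e. that no curve in $\mPr$ through regular levels only connects a torus of $\tfs{i}$ to a torus of $\tfs{j}$ with $i\ne j$. Suppose such a curve existed. Since $\ts{I}$--$\ts{VII}$ are the only regular chambers and every chamber is reachable from $\ts{I}$ along $\tfs{1}$, following the curve one could transport both tori into $\ts{I}$, producing two distinct tori over one point of $\ts{I}$ --- impossible, the $\ts{I}$-fiber being a single torus. This must be made robust to the fact that tori of $\tfs{2}$ and $\tfs{4}$ (resp. $\tfs{3}$ and $\tfs{5}$) persist through several chambers: here one uses that the rank-$0$ points of $\mm,\mn,\mo$ bounding $\ts{II}$--$\ts{V}$ come from a single critical point whereas those bounding $\ts{VI}$--$\ts{VII}$ come from two (as established before Theorem~\ref{theo_tori}), so that along any wall the two components present in a $\ts{II}$--$\ts{V}$ chamber can only be separated, never interchanged. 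The labelling is then pinned down by the limiting cases $\ld=0$ (the atlas of~\cite{BRF}) and $\ell=0$ (Morozov~\cite{Mor}), and the correction to~\cite{LogRus} recorded above shows that there is no closed monodromy curve returning a $\tfs{1}$-torus as a different torus --- which is precisely what makes $\tfs{1}$ a single family throughout.

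I expect the main obstacle to be the no-merging analysis over $\ts{VI}$ and $\ts{VII}$, where four tori coexist and one must control \emph{globally} --- not just along a single wall --- which component of which $\mi_j$-level abuts which: the per-region diagrams $\mSell$ give the local picture, but splicing them consistently across the $(\ld,\ell)$-plane, and in particular ruling out the accidental identification of a family that survives a crossing with a family born at a neighbouring wall, is where the care lies. I would carry this out chamber by chamber using the tables of critical circles $\aaa_j,\bbb_j,\ccc_j$ and their Fomenko types, cross-checking against the $\ld=0$ and $\ell=0$ diagrams; once $\ts{VI}$ and $\ts{VII}$ are settled, the chambers $\ts{I}$--$\ts{V}$ are the easier sub-cases, and collecting the outcomes yields Table~\ref{tabfam}.
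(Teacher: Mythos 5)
Your overall scheme --- the seven chambers, the torus counts from Theorem~\ref{theo_tori}, propagation of $\tfs{1}$ through the walls and birth of new families at elliptic walls, with the final labelling pinned down on the limits $\ld=0$ and $\ell=0$ --- is the same bookkeeping the paper performs on the diagrams of regions $1,1',3,3'$. But two of the mechanisms you lean on are wrong. First, a torus that shrinks onto an elliptic singular leaf does \emph{not} ``reopen on the other side'': the atom $A$ lives on one side of its wall only, and the torus limiting onto its critical circle terminates there. What actually carries $\tfs{1}$ across a wall of type $A_+$ (e.g.\ $\ccc_6$, $\ccc_4$, $\aaa_4$, $\ccc_1$) is that the critical fiber over the wall has a \emph{regular} component disjoint from the singular leaf, and it is that component which continues; the newborn torus forms a new family precisely because any path joining it to the old one would have to pass through the singular leaf. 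Relatedly, the wall-crossing data are the atoms of the rank-$1$ circles $\aaa_j,\bbb_j,\ccc_j$ listed in Tables~\ref{table2}, \ref{table4}, \ref{table5}, not the center/saddle types of the rank-$0$ points of Theorem~\ref{th5}, which sit only at the corners where several walls meet.

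Second, and more seriously, your distinctness argument is circular. To prove, say, $\tfs{2}\ne\tfs{4}$ you propose to ``transport both tori into $\ts{I}$'' and contradict the single torus there; but neither family extends over $\ts{I}$ --- a $\tfs{2}$-torus pushed toward the $\ts{I}/\ts{II}$ wall collapses onto the elliptic circle and ceases to exist --- so no such transport is possible and no contradiction arises. The identifications that genuinely have to be excluded are among $\tfs{2},\dots,\tfs{7}$ inside the four-torus chambers $\ts{VI},\ts{VII}$ and across the hyperbolic ($B$-type) walls, where every nearby torus limits onto the singular leaf and the continuation is not determined by the atom alone. That is exactly where the paper does its real work: the two-component theorems \ref{theo_twocomps0}, \ref{theDelta134}, \ref{theDelta32} on the preimages of the curves $\Delta_{0i},\Delta_{1i},\Delta_{3i}$, and Propositions \ref{propfora5}, \ref{propfora7}, \ref{propforc9}, \ref{propfora11}, which determine which family goes to which at the circles $\aaa_5,\aaa_7,\ccc_9,\aaa_{11}$. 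Your appeal to ``one critical point versus two'' is not a substitute for these computations, so the key step of the theorem is not actually established by your argument.
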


Далее мы устанавливаем грубые и меченые круговые молекулы узловых точек бифуркационных диаграмм. Под грубой молекулой понимается граф, вершины которого отвечают бифуркациям, а ребра -- семействам регулярных торов, и при этом вершины и ребра оснащены обозначениями соответствующих бифуркаций и семейств. Меченая молекула содержит информацию о метках на ребрах графа и, при необходимости, метки так называемых семей (подробности см. в \cite{BolFom}). Основная проблема -- правильно отследить семейства, смыкающиеся в точках бифуркации. Знание грубой топологии, то есть топологии совместных уровней первых интегралов, как регулярных, так и критических, не всегда позволяет однозначно расставить обозначения семейств. Но, как будет видно ниже, после того как эта задача решена, возможно не только построить грубую молекулу, но и воспользоваться результатами общей теории, исследованиями частных случаев для доказательного построения всех меченых круговых молекул.

\clearpage

\section{Тонкая топология узловых точек}\label{sec7}
\subsection{Круговые молекулы точек ранга $0$}

Приведем описание топологии полных интегральных уровней относительных равновесий (критических точек ранга 0). Эта совокупность результатов представлена в работе \cite{KhMTT42} и опирается на работу \cite{KhRy2011}, в которой, в частности, построены и используемые здесь многократно таблицы \ref{table2}, \ref{table4} и \ref{table5}.

Пусть $x$ -- точка невырожденного относительного равновесия, $\ell=L(x)$. Рассмотрим $\jpriv_\ell=H{\times}K: \mP^4_\ell \to \bR^2$ -- интегральное отображение \eqref{eq3_10} приведенной системы. Обозначим через $\mathfrak{J}(x)$ полный прообраз точки $\jpriv_\ell(x)$ -- критическую интегральную поверхность, а через $U(x)$~-- достаточно малую насыщенную окрестность поверхности $\mathfrak{J}(x)$, не содержащую относительных равновесий с другими значениями отображения $\jpriv_\ell$. Поверхность $\mathfrak{J}(x)$ может состоять из нескольких связных компонент. Как отмечено выше, компонента точки $x$ всегда содержит ровно одно относительное равновесие. Как следует из исследования критических подсистем, одновременно могут существовать компоненты, содержащие ровно одну критическую окружность. Количество и топология этих компонент  устанавливаются уже по результатам для критической подсистемы $\mm$ согласно замечанию~\ref{rem7}. Далее, рассматриваем точку $x$ в каждой из двух содержащих ее критических подсистем и анализируем информацию по прилегающим областям в образе критической подсистемы. Эти области порождают дуги бифуркационной диаграммы отображения $\jpriv_\ell$ в окрестности точки $x$. Структура критического множества в прообразах этих дуг и перестройки в $U(x)$ при их пересечении находятся по таблицам \ref{table2}, \ref{table4} и \ref{table5}. После этого тип круговых молекул самих относительных равновесий и лежащих на том же уровне критических периодических траекторий вместе с метками однозначно устанавливается исходя из исчерпывающего описания круговых молекул невырожденных особенностей малой сложности \cite{BolFom}. Кроме компонент, содержащих критические точки, в $\mathfrak{J}(x)$ могут входить и регулярные торы, заполненные двояко-периодическими траекториями. Их количество однозначно устанавливается по виду бифуркационной диаграммы отображения $\jpriv_\ell$, дополненной указанием атомов на дугах.

\def\wid{18mm}

{\small

\renewcommand{\arraystretch}{0}

\begin{center}

\begin{longtable}{|m{12mm}|m{20mm}|m{20mm}|m{30mm}|m{27mm}|m{12mm}|}
\multicolumn{6}{r}{\fts{Таблица \myt\label{table42}}}\\[2mm]
\hline
\hspace*{-2mm}{\renewcommand{\arraystretch}{0.8}\fns{\begin{tabular}{c} Класс\\точек\end{tabular}} }
&
\hspace*{-2.5mm}{ \renewcommand{\arraystretch}{0.8}\fns{\begin{tabular}{c} \ru{10} Компо-\\
нент\\ в прообразе
\end{tabular}}  }
&
\hspace*{-3mm}{ \fns{\renewcommand{\arraystretch}{0.8}\begin{tabular}{c} \ru{10} Особые\\
траектории\\ в прообразе
\end{tabular}}  } & \hspace*{-3mm}\fns{\renewcommand{\arraystretch}{0.8}\begin{tabular}{c}Фрагмент\\ $(H,K)$-диаграммы\end{tabular}} &
{\renewcommand{\arraystretch}{0.8} \fns{\begin{tabular}{c} Молекулы\\в прообразе\end{tabular}} }
&
{
\hspace*{-2mm}\fns{\renewcommand{\arraystretch}{0.8}\begin{tabular}{c}
 Регул.\\ торы
\end{tabular}} }\\
\hline\endfirsthead%
\multicolumn{6}{r}{\fts{Таблица \ref{table42} (продолжение)}}\\[2mm]
\hline
\hspace*{-2mm}{\renewcommand{\arraystretch}{0.8}\fns{\begin{tabular}{c} Класс\\точек\end{tabular}} }
&
\hspace*{-2.5mm}{ \renewcommand{\arraystretch}{0.8}\fns{\begin{tabular}{c} \ru{10} Компо-\\
нент\\ в прообразе
\end{tabular}}  }
&
\hspace*{-3mm}{ \fns{\renewcommand{\arraystretch}{0.8}\begin{tabular}{c} \ru{10} Особые\\
траектории\\ в прообразе
\end{tabular}}  } & \hspace*{-3mm}\fns{\renewcommand{\arraystretch}{0.8}\begin{tabular}{c}Фрагмент\\ $(H,K)$-диаграммы\end{tabular}} &
{\renewcommand{\arraystretch}{0.8} \fns{\begin{tabular}{c} Молекулы\\в прообразе\end{tabular}} }
&
{
\hspace*{-2mm}\fns{\renewcommand{\arraystretch}{0.8}\begin{tabular}{c}
 Регул.\\ торы
\end{tabular}} }\\
\hline\endhead%
\hf{$\delta_1$} & \hf{1} & \hf{$ {p_{CC}} $} & \hf{\includegraphics[width=\wid,keepaspectratio]{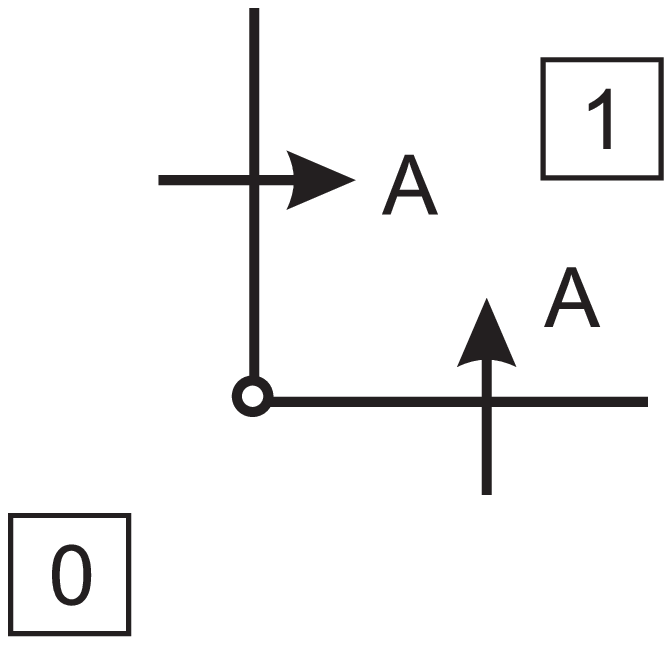}} & \hf{\includegraphics[width=\wid,keepaspectratio]{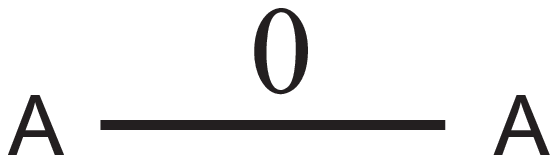}} & \hf{0}\\
\hline
 \hf{$\delta_{25}$} & \hf{3} & \hf{${p_{CC}} \cup S_E^1$}& \hf{\includegraphics[width=\wid,keepaspectratio]{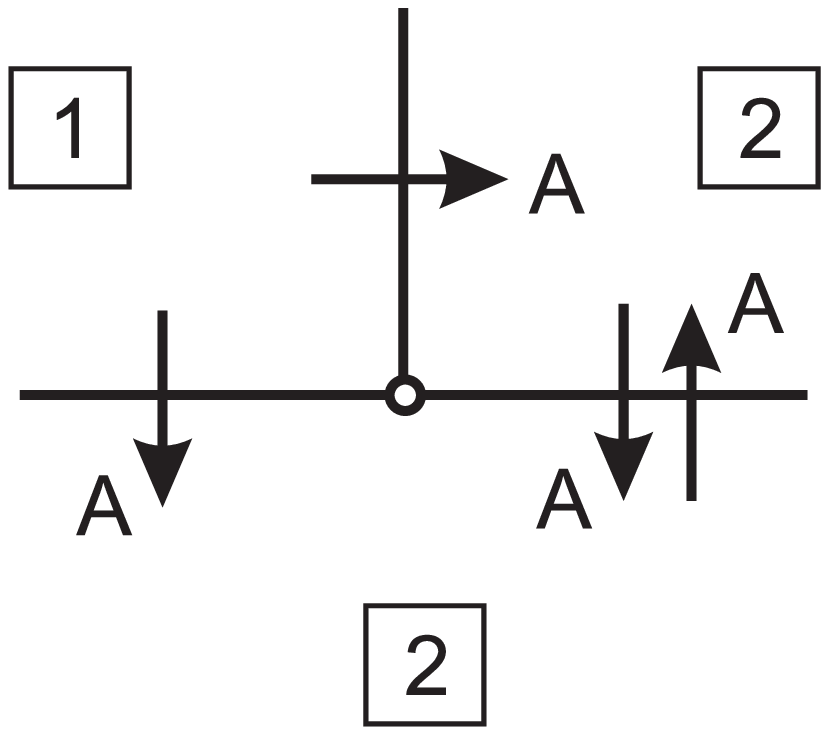}} & \hf{\includegraphics[width=\wid,keepaspectratio]{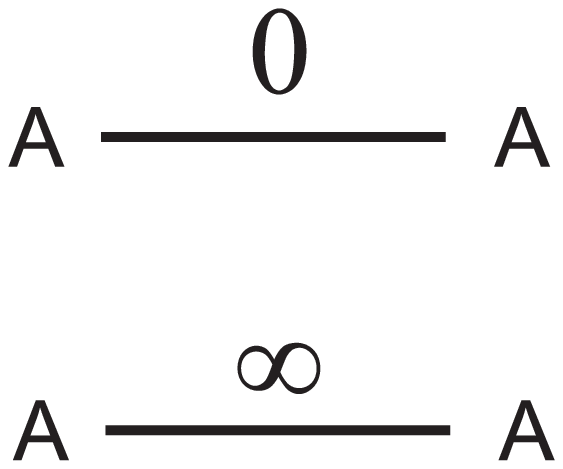}} & \hf{1} \\
\hline
\hf{$\delta^{\prime\prime}_{28}, \delta_{32}$} & \hf{2} & \hf{${p_{CC}} \cup S_E^1$}& \hf{\includegraphics[width=\wid,keepaspectratio]{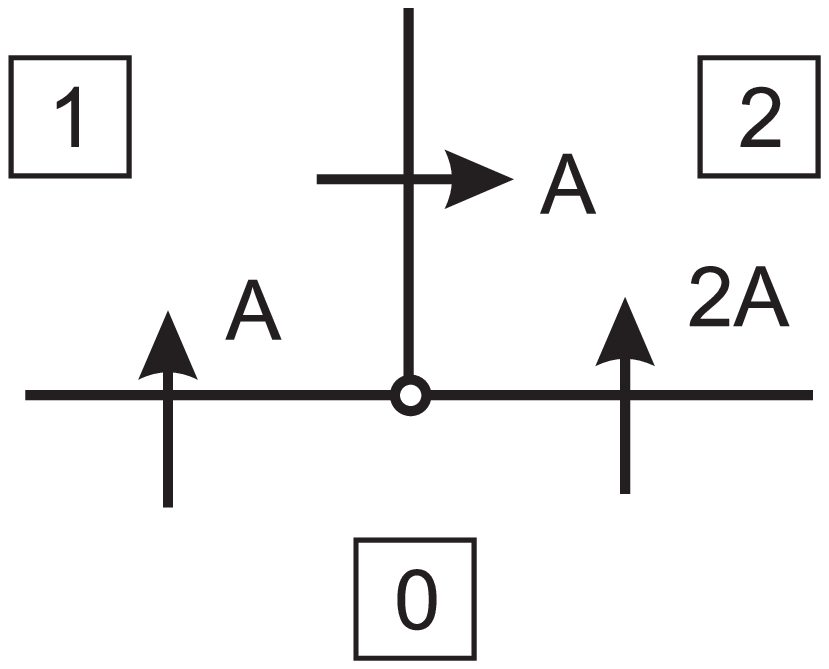}} & \hf{\includegraphics[width=\wid,keepaspectratio]{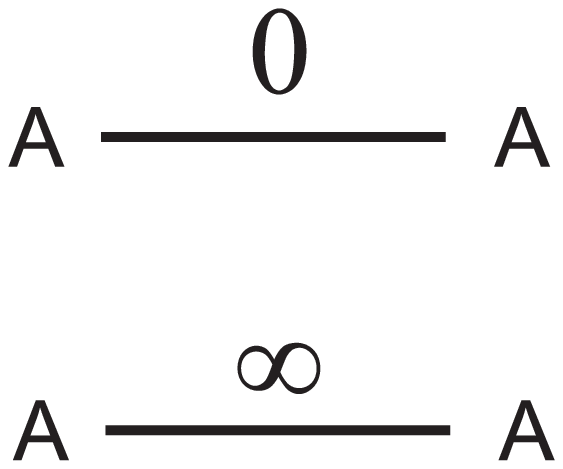}} & \hf{0} \\
\hline
\hf{$\delta_{24},\delta^{\prime}_{28}$} & \hf{4} & \hf{${p_{CC}}\cup 2S_E^1$} & \hf{\includegraphics[width=\wid,keepaspectratio]{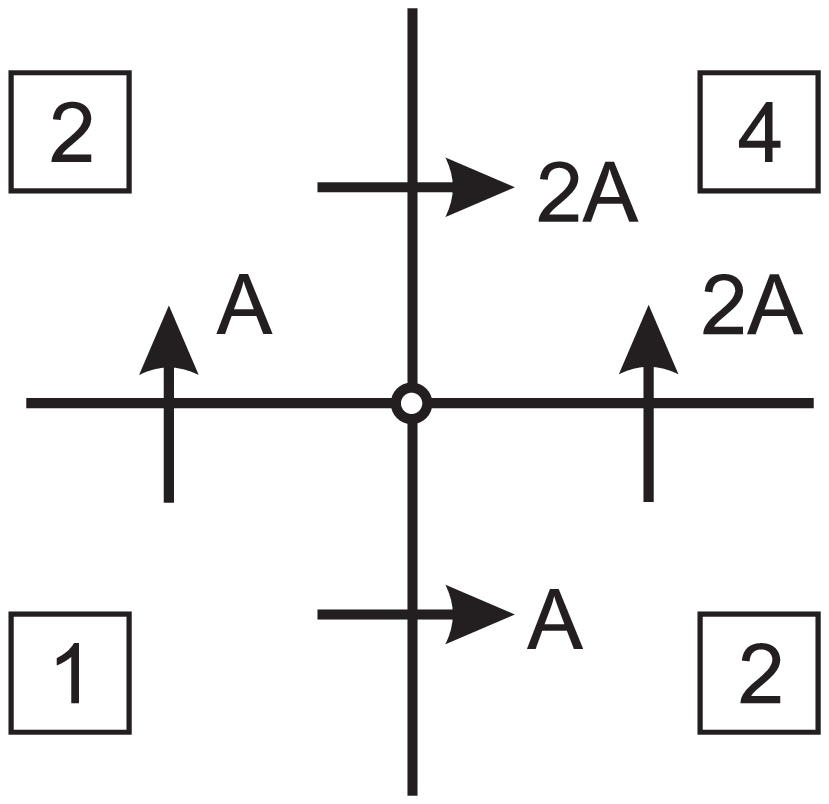}} & \hf{\includegraphics[width=\wid,keepaspectratio]{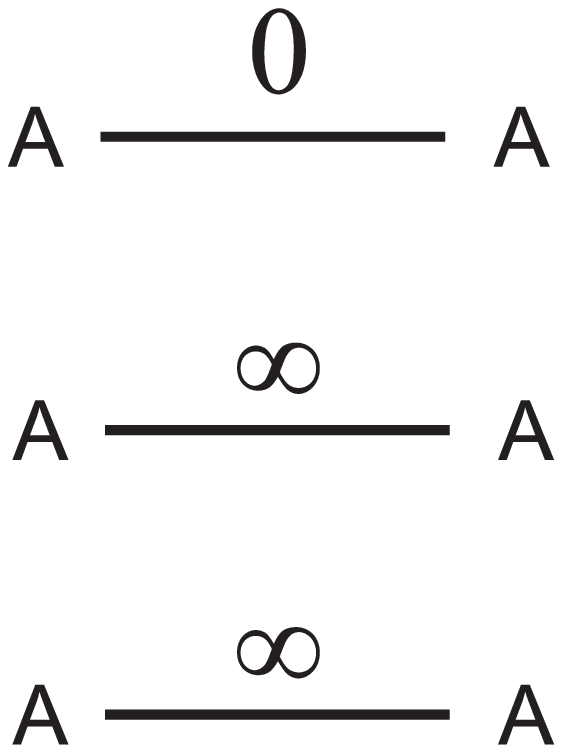}} & \hf{1}\\
\hline
\hf{$\delta_{23},\delta^{\prime}_{27}$} & \hf{2} & \hf{${p_{CS}}\cup S_H^1$} & \hf{\includegraphics[width=\wid,keepaspectratio]{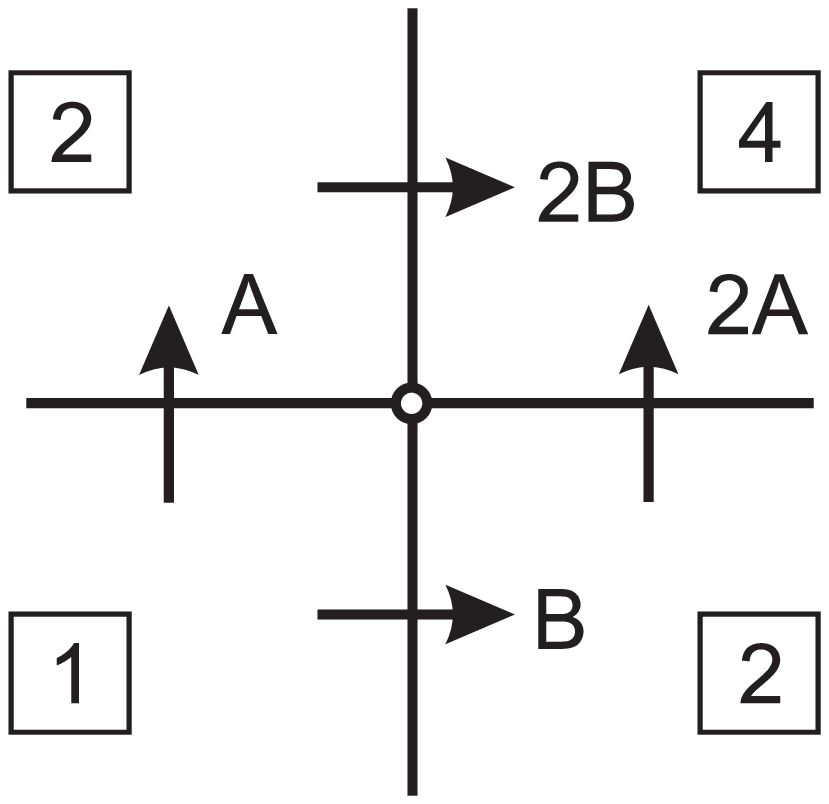}} & \hf{\includegraphics[width=\wid,keepaspectratio]{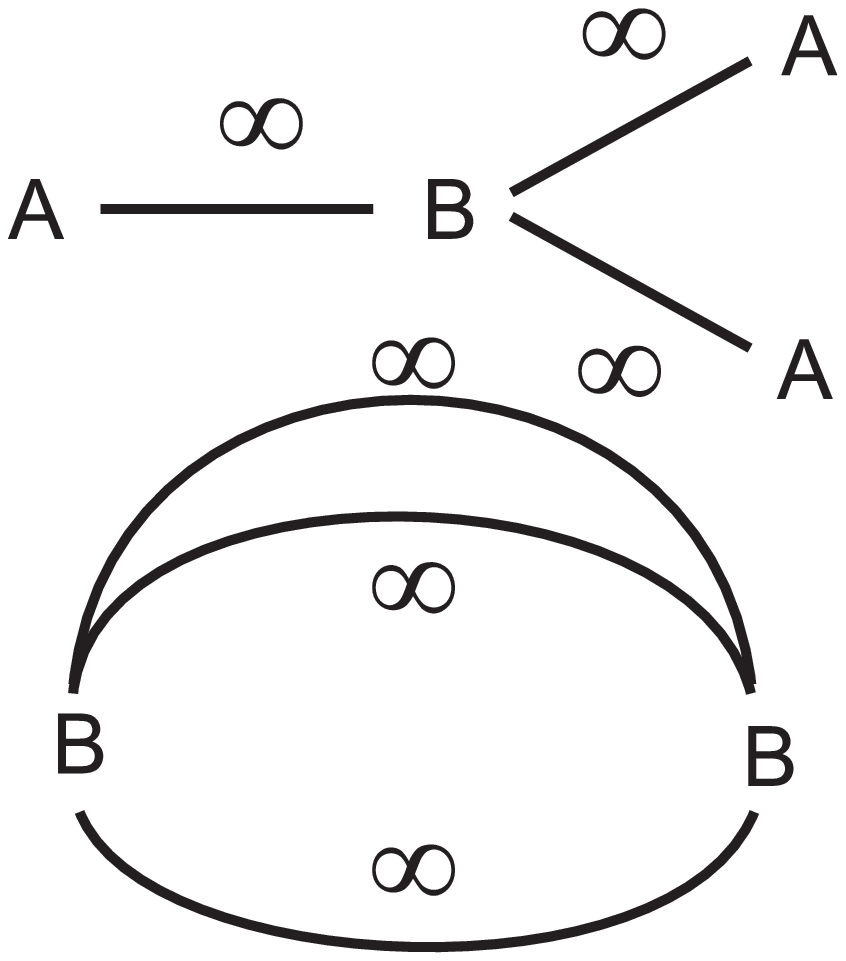}} & \hf{0}\\
\hline
\hf{$\delta_{31},\delta^{\prime\prime}_{27}$} & \hf{1} & \hf{${p_{CS}}$} & \hf{\includegraphics[width=\wid,keepaspectratio]{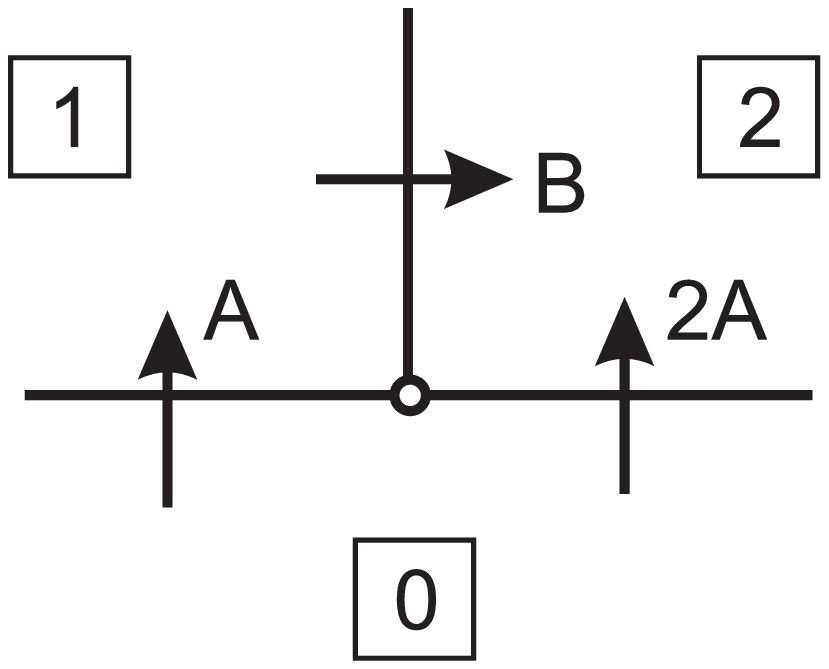}} & \hf{\includegraphics[width=\wid,keepaspectratio]{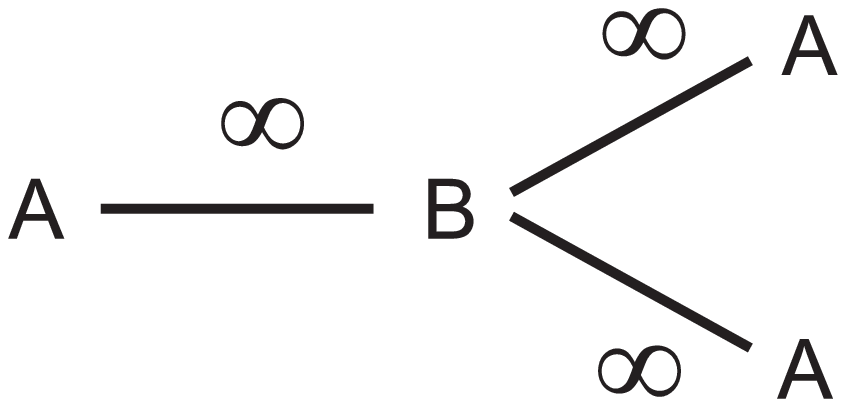}} & \hf{0}\\
\hline
\hf{$\delta_{22},\delta_{26}$} & \hf{1} & \hf{${p_{SS}}$} & \hf{\includegraphics[width=\wid,keepaspectratio]{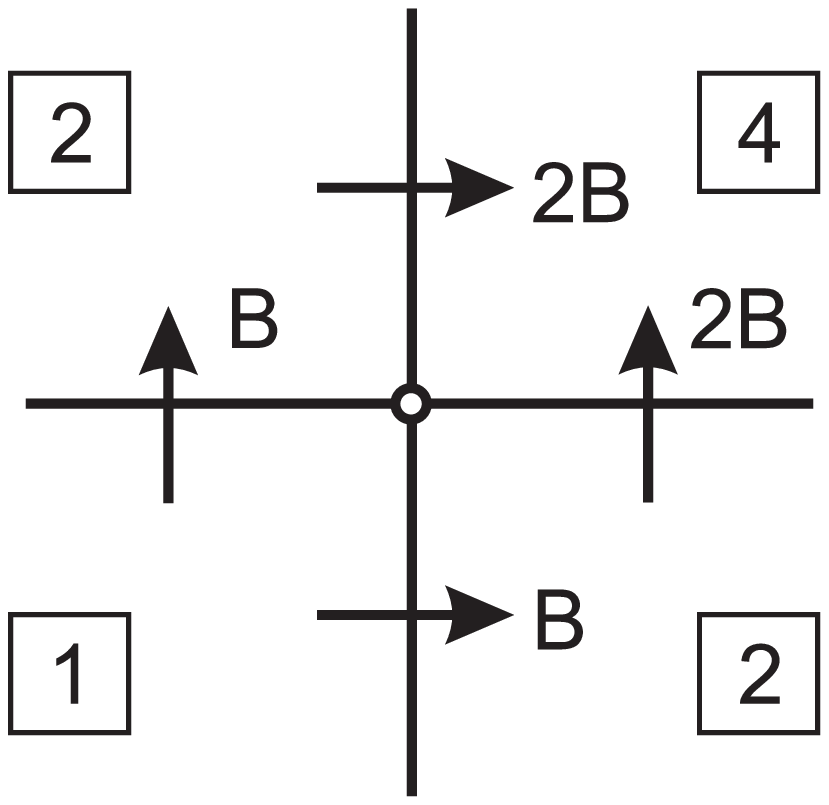}} &
\hf{\includegraphics[width=\wid,keepaspectratio]{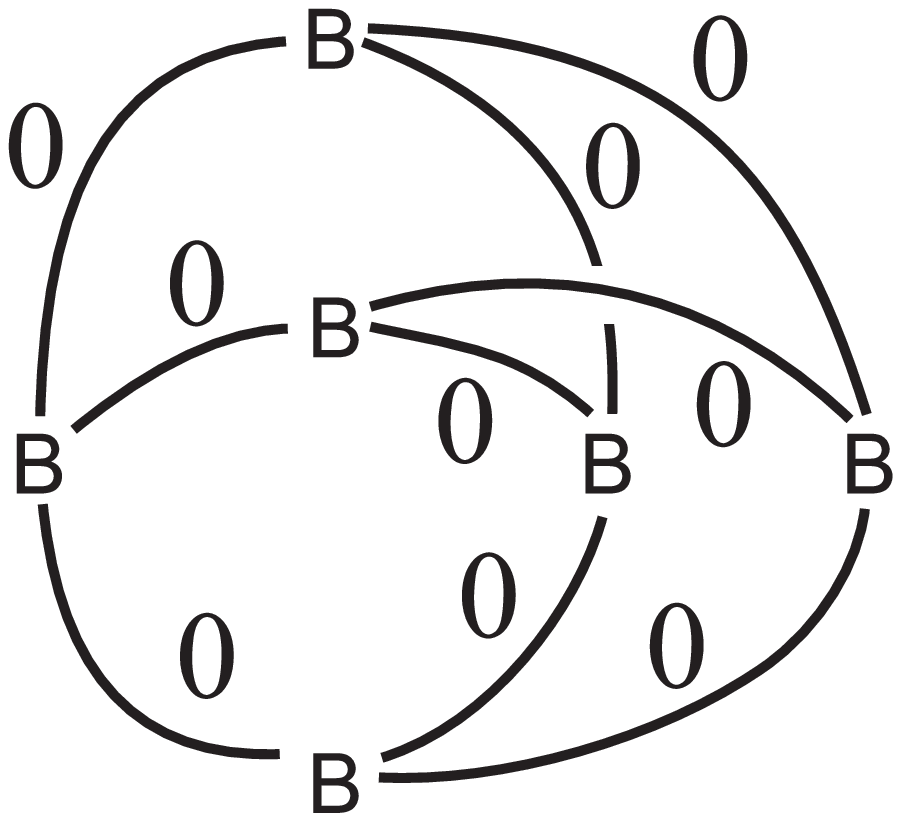}} & \hf{0}\\
\hline

\hf{$\delta_{21}$} & \hf{1} & \hf{${p_{SS}}$} & \hf{\includegraphics[width=\wid,keepaspectratio]{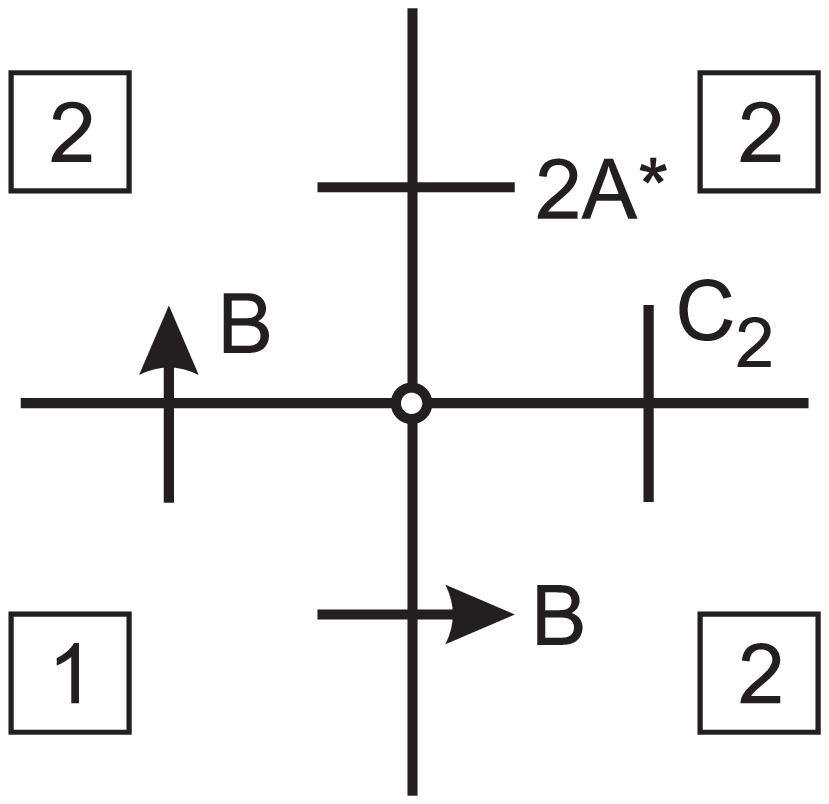}} &
\hf{\includegraphics[width=\wid,keepaspectratio]{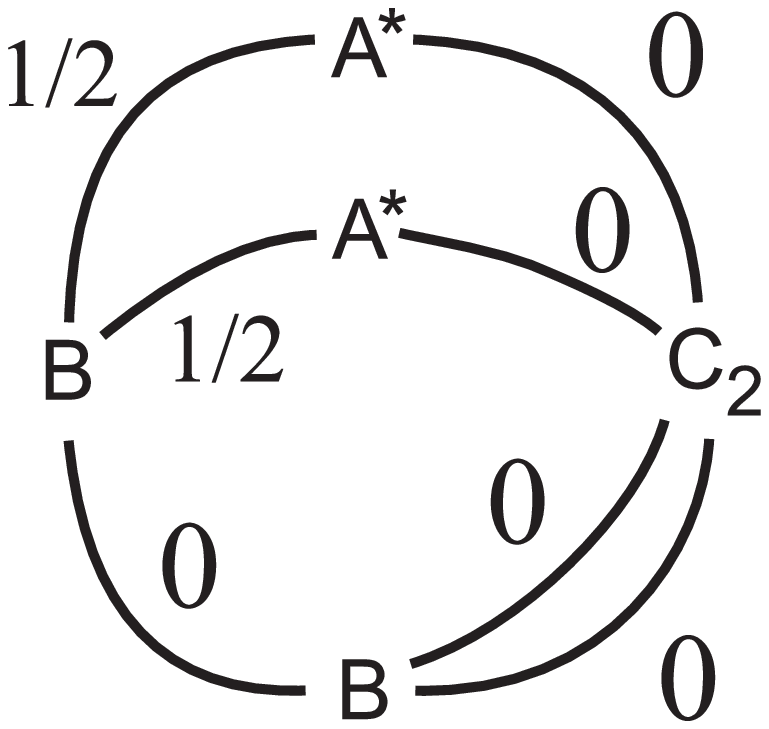}} & \hf{0}\\
\hline
\end{longtable}
\end{center}

}


Результат топологической классификации приведен в табл.~\ref{table42}. Для круговых молекул указаны только $r$-метки с единственной целью -- отличить молекулы относительных равновесий типа ``центр-центр'' (метка $r=0$) от молекул лежащих на том же уровне эллиптических периодических траекторий (метка $r=\infty$). В действительности же, здесь все метки (включая $\varepsilon, n$-метки) выставляются автоматически в соответствии с \cite{BolFom}.

На фрагментах бифуркационной диаграммы отображения $\jpriv_\ell$ в окрестности относительного равновесия указаны атомы, возникающие на критических окружностях при пересечении дуг диаграммы. Здесь встречаются лишь атомы типов $A,B,A^*,C_2$. Для несимметричных атомов стрелкой указано направление возрастания числа торов. Само это число в регулярных областях указано в рамке. При описании особых траекторий (сингулярной компоненты интегрального многообразия) ${p_{CC}}$ -- это неподвижная точка типа ``центр-центр'', ${p_{CS}}$~-- критическое многообразие неподвижной точки типа ``центр-седло'' -- восьмерка, ${p_{SS}}$~-- критическое многообразие неподвижной точки типа ``седло-седло'' -- две восьмерки с общей центральной точкой и приклеенные к ним четыре прямоугольника, заполненных асимптотическими траекториями из регулярных точек (правило склейки полностью определено соответствующей круговой молекулой). Через $S^1_E$ обозначена периодическая траектория эллиптического типа, исчерпывающая соответствующую связную компоненту, а через $S^1_H$ -- поверхность периодической траектории гиперболического типа, отвечающая атому типа $B$ (прямое произведение восьмерки на окружность). Полученная классификация по количеству классов и по виду круговых молекул полного прообраза значения отображения момента в относительных равновесиях отличается от результатов, представленных в \cite{LogRus}. Ввиду отсутствия в цитируемой работе точного определения принципа классификации мы не приводим сопоставления результатов. Достаточно сказать, что после выхода работы \cite{KhMTT42} описание классов точек ранга 0 и их круговых молекул (по-прежнему, без точного определения принципа классификации и без необходимых аналитических обоснований) было приведено в порядок в работе \cite{Slav2Rus}.

В соответствии с информацией из табл.~\ref{table42} и замечанием относительно меток на круговых молекулах, для завершения тонкой классификации всех круговых молекул невырожденных точек ранга 0 необходимо лишь расставить на ребрах найденных молекул номера семейств регулярных торов. Для молекул точек типов ``центр-центр'' и ``центр-седло'' (с учетом и молекул присутствующих на том же интегральном уровне критических окружностей, составленных из невырожденных точек ранга 1) эта задача решается непосредственно из вида прилегающего фрагмента оснащенной бифуркационной диаграммы $\mSell$.

Рассмотрим, для примера, точку из класса $\delta_{23}$, которую можно найти на диаграмме для области 6. На прилегающих сегментах атомы таковы: $B_{[\aaa_3]}$, $2B_{[\aaa_7]}$ и $A_{[\ccc_6]}$, $2A_{[\ccc_8]}$. Здесь и далее используется обозначение атома, к которому приписано в квадратных скобках обозначение ребра диаграммы, на котором происходит данная бифуркация. Так как сама точка ранга 0 имеет тип ``центр-седло'', то собственно ее круговая молекула (то есть связная молекула, отвечающая связной компоненте, содержащей рассматриваемую критическую точку ранга 0) описывает бифуркации
$$
A_{[\ccc_6]}\rightarrow B_{[\aaa_7]} \rightarrow 2A_{[\ccc_8]},
$$
причем в атоме $A_{[\ccc_6]}$ рождается семейство $\tfs{2}$, а в атомах $2A_{[\ccc_8]}$ умирают семейства $\tfs{2}$ и $\tfs{6}$. Таким образом, в атоме $B_{[\aaa_7]}$, принадлежащем связной молекуле точки $\delta_{23}$, семейство $\tfs{2}$ перестраивается в семейства $\tfs{2}$ и $\tfs{6}$. На соответствующую диаграмму накладывается гладкий сегмент $B_{[\aaa_3]} - B_{[\aaa_7]}$, который отвечает невырожденной критической седловой окружности. Его левый атом $B_{[\aaa_3]}$, как видно из фрагмента диаграммы, разделяет камеры $\ts{I}$ и $\ts{IV}$, поэтому в атоме $B_{[\aaa_3]}$ и, следовательно, в том атоме $B_{[\aaa_7]}$, который принадлежит второй компоненте прообраза круговой молекулы (компоненте, не содержащей точку ранга 0), семейство $\tfs{1}$ через критическую окружность перестраивается в семейства $\tfs{1}$ и $\tfs{4}$.

В дальнейшем (для молекулы точек $\delta_{22}$ и $\Delta_{13}$) нам особо понадобится информация о перестройке семейств в атомах $B_{[\aaa_7]}$, в связи с чем выделим это отдельным предложением.

\begin{proposition}\label{propfora7}
В двух атомах $B$ на ребре $\aaa_7$ имеем следующие бифуркации: на одном атоме тор из семейства $\tfs{1}$ перестраивается в два тора из семейств $\tfs{1}$, $\tfs{4}$, на другом -- тор из семейства $\tfs{2}$ перестраивается в два тора из семейств $\tfs{2}$, $\tfs{6}$.
\end{proposition}

Информация по круговым молекулам в прообразах невырожденных точек типов ``центр-центр'' и ``центр-седло'' собрана в табл.~\ref{table_nev1}. Здесь в первом столбце, для удобства, вместе с обозначением класса точек указан номер такой области на плоскости $(\ld,\ell)$, на диаграммах которой присутствует точка этого класса (но, конечно, такая точка может присутствовать и на диаграммах для других областей -- например, точка класса $\delta_1$ имеется на всех диаграммах).

\def\dwid{40mm}
\begin{center}
\renewcommand{\arraystretch}{0}
\begin{longtable}{|m{10mm}|m{\dwid}||m{10mm}|m{\dwid}|}
\multicolumn{4}{r}{Таблица \myt\label{table_nev1}}\\[2mm]
\hline
\hspace*{-2.5mm}\hf{\renewcommand{\arraystretch}{0.8}{\small \begin{tabular}{c} \ru{10}Класс\\ точек\end{tabular}} }
&
\hf{ \renewcommand{\arraystretch}{0.8}{\small  \begin{tabular}{c} \ru{10}Молекула
\end{tabular}}}
&
\hspace*{-2.5mm}\hf{\renewcommand{\arraystretch}{0.8}{\small \begin{tabular}{c} \ru{10}Класс\\ точек\end{tabular}} }
&
\hf{ \renewcommand{\arraystretch}{0.8}{\small \begin{tabular}{c} \ru{10}Молекула
\end{tabular}}}\\
\hline\endfirsthead%
\multicolumn{4}{r}{Таблица \ref{table_nev1} (продолжение)}\\[2mm]
\hline
\hspace*{-2.5mm}\hf{\renewcommand{\arraystretch}{0.8}{\small \begin{tabular}{c} \ru{10}Класс\\ точек\end{tabular}} }
&
\hf{ \renewcommand{\arraystretch}{0.8}{\small  \begin{tabular}{c} \ru{10}Молекула
\end{tabular}}}
&
\hspace*{-2.5mm}\hf{\renewcommand{\arraystretch}{0.8}{\small \begin{tabular}{c} \ru{10}Класс\\ точек\end{tabular}} }
&
\hf{ \renewcommand{\arraystretch}{0.8}{\small \begin{tabular}{c} \ru{10}Молекула
\end{tabular}}}\\
\hline\endhead
\hf{\begin{tabular}{c}$\delta_1$\\[5mm]$(1)$\end{tabular}} & \hf{\includegraphics[width=\dwid,keepaspectratio]{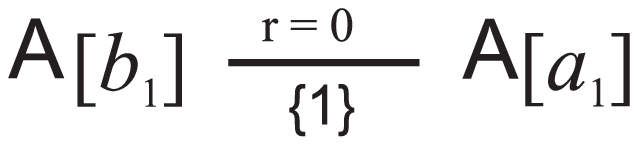}} & \hf{\begin{tabular}{c}$\delta_{25}$\\[5mm]$(1')$\end{tabular}} & \hf{\includegraphics[width=\dwid,keepaspectratio]{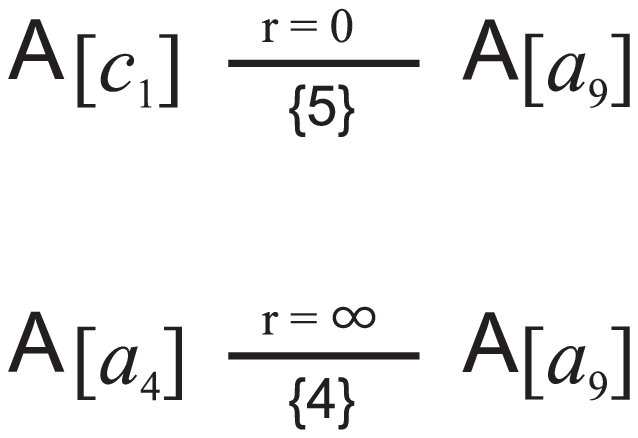}}
\\
\hline
\hf{\begin{tabular}{c}$\delta''_{28}$\\[5mm]$(6')$\end{tabular}} & \hf{\includegraphics[width=\dwid,keepaspectratio]{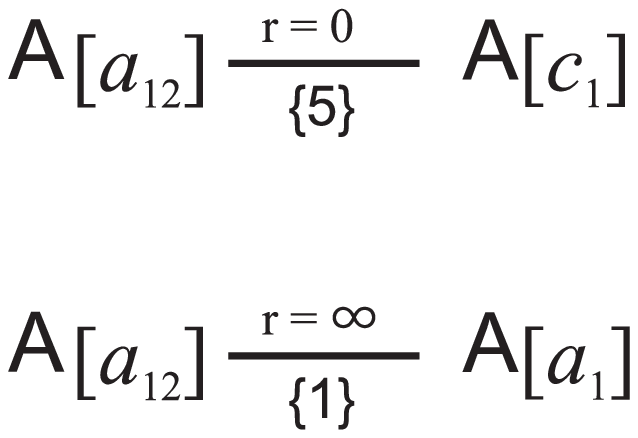}} & \hf{\begin{tabular}{c}$\delta_{32}$\\[5mm]$(6)$\end{tabular}} & \hf{\includegraphics[width=\dwid,keepaspectratio]{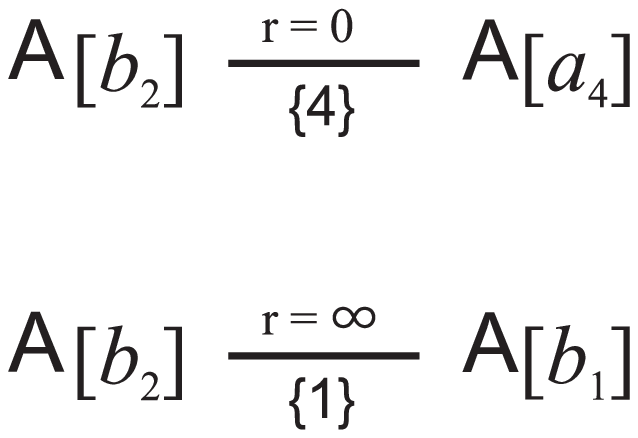}}
\\
\hline
\hf{\begin{tabular}{c}$\delta_{24}$\\[5mm]$(3)$\end{tabular}} & \hf{\includegraphics[width=\dwid,keepaspectratio]{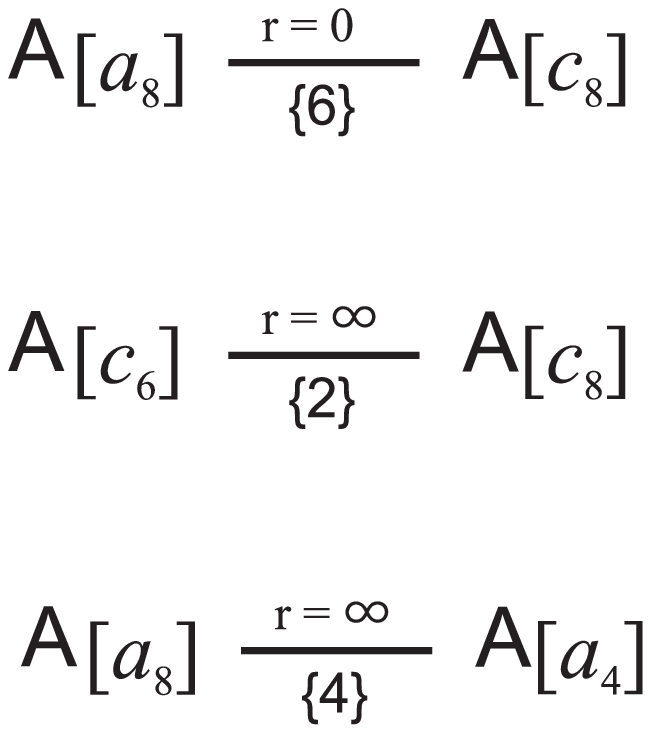}} & \hf{\begin{tabular}{c}$\delta'_{28}$\\[5mm]$(6')$\end{tabular}} & \hf{\includegraphics[width=\dwid,keepaspectratio]{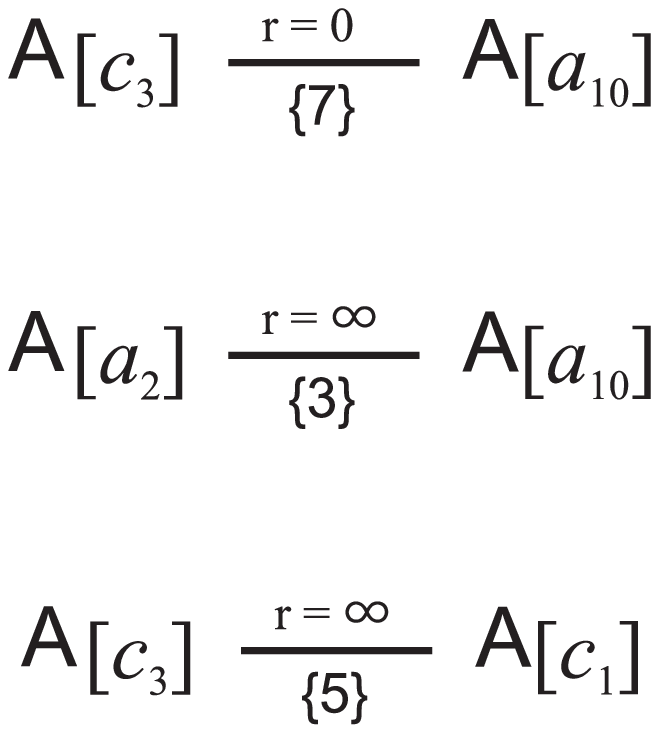}}
\\
\hline
\hf{\begin{tabular}{c}$\delta_{31}$\\[5mm]$(9)$\end{tabular}} & \hf{\includegraphics[width=\dwid,keepaspectratio]{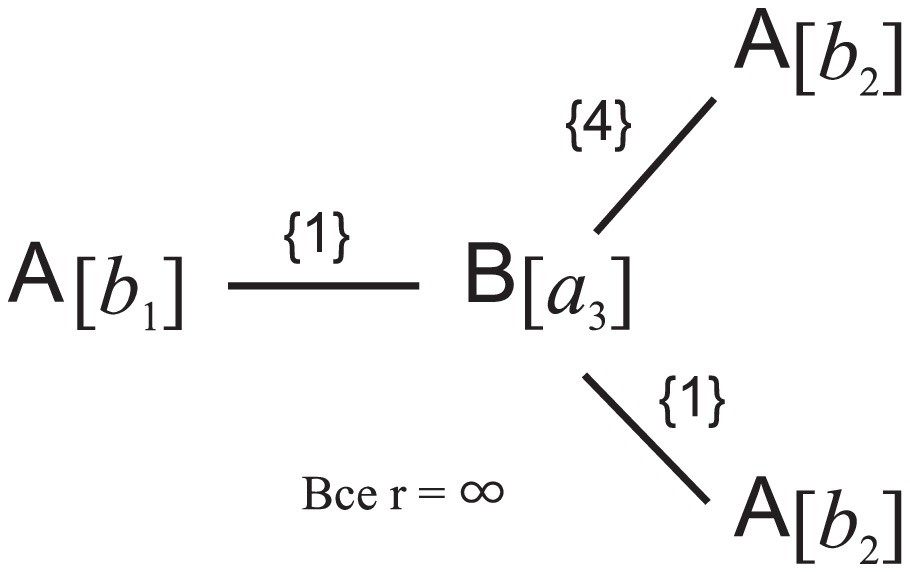}} & \hf{\begin{tabular}{c}$\delta''_{27}$\\[5mm]$(6')$\end{tabular}} & \hf{\includegraphics[width=\dwid,keepaspectratio]{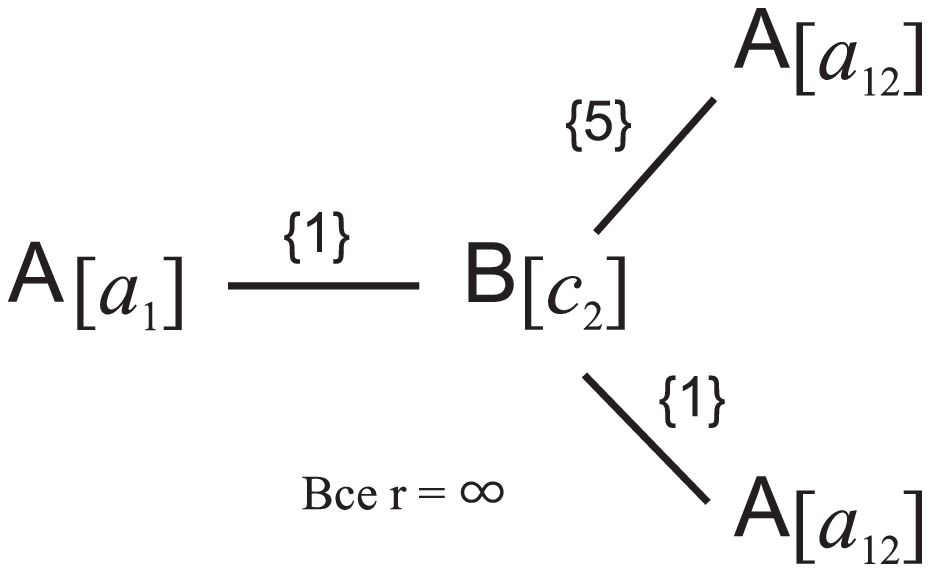}}
\\
\hline
\hf{\begin{tabular}{c}$\delta_{23}$\\[5mm]$(6)$\end{tabular}} & \hf{\includegraphics[width=\dwid,keepaspectratio]{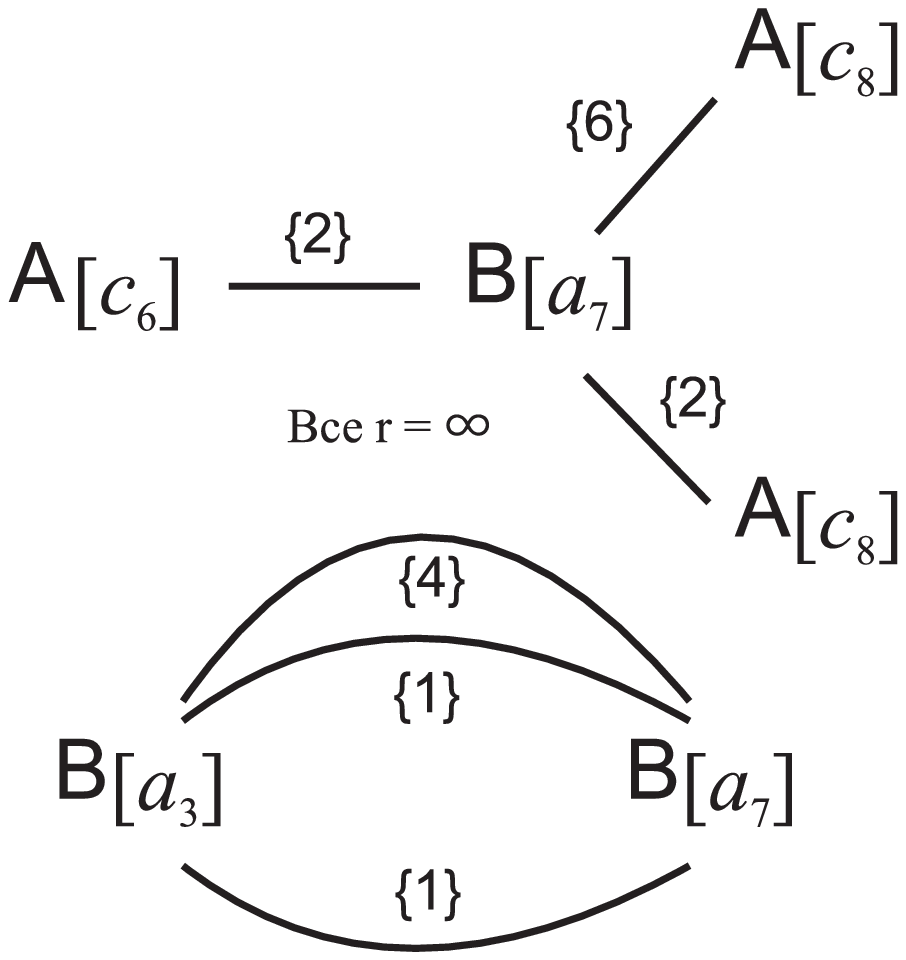}} & \hf{\begin{tabular}{c}$\delta'_{27}$\\[5mm]$(6')$\end{tabular}} & \hf{\includegraphics[width=\dwid,keepaspectratio]{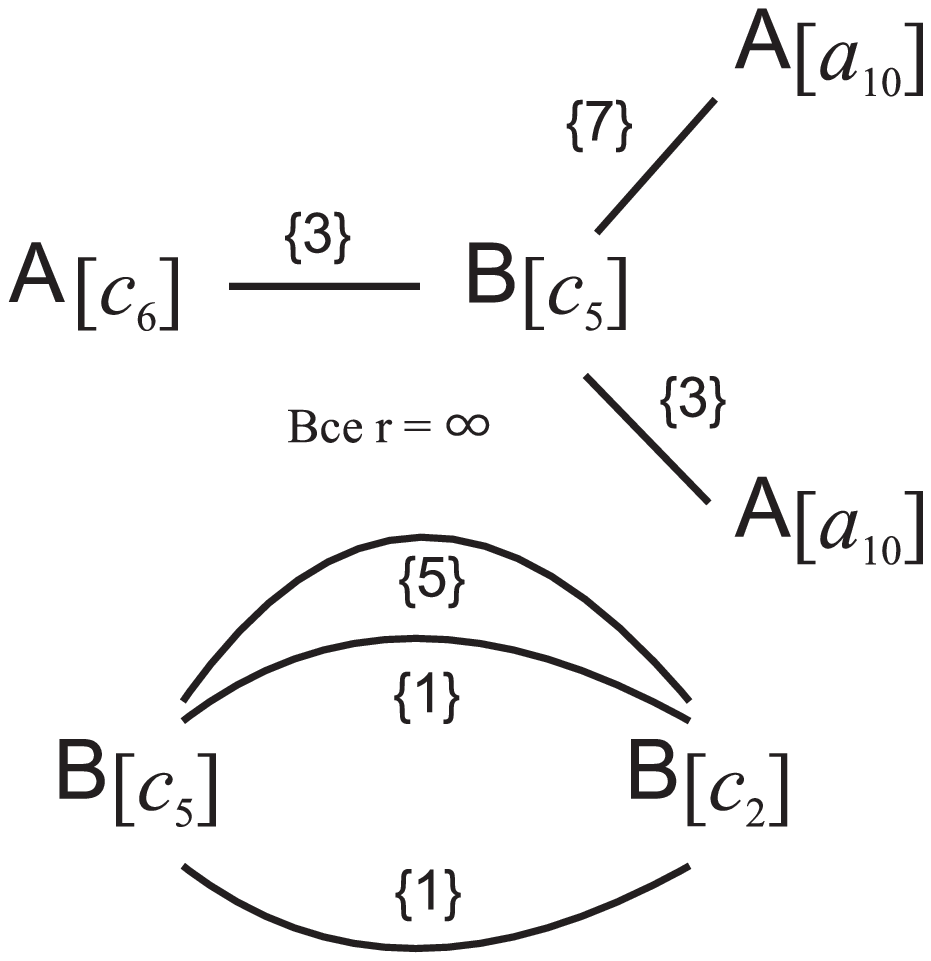}}
\\
\hline
\end{longtable}
\end{center}

Рассмотрим круговые молекулы оставшихся трех классов невырожденных критических точек ранга 0 типа ``седло-седло''.

Возьмем точку $\delta_{22}$. Она присутствует, например, на диаграммах для областей 2 и 3. Уже известно, что на примыкающих ребрах бифуркационной диаграммы перестройки семейств таковы:
$$
\begin{array}{ll}
B_{[\ccc_7]}: & \tfs{1} \mapsto \tfs{1}+\tfs{2};\\
B_{[\aaa_3]}: & \tfs{1} \mapsto \tfs{1}+\tfs{4};\\
2B_{[\aaa_7]}: & \tfs{1} \mapsto \tfs{1}+\tfs{4}, \quad \tfs{2} \mapsto \tfs{2}+\tfs{6}.
\end{array}
$$
Оставшееся ребро $\ccc_9$ разделяет камеры $\ts{IV}$ (с семействами $\tfs{1}+\tfs{4}$) и $\ts{VI}$ (с семействами $\tfs{1}+\tfs{2}+\tfs{4}+\tfs{6}$). Закон расстановки семейств на четырех ребрах, соединяющих $2B_{[\aaa_7]}$ и $2B_{[\ccc_9]}$, без обращения к вырожденным критическим точкам ранга 1 установить не удается. Однако, согласно теореме \ref{theDelta32} мы знаем, что круговая молекула точки $\Delta_{32}$ имеет две связных компоненты, в силу чего каждая из этих компонент имеет известный вид, при котором семейство в ``голове'' вновь приходит в одну из ``ног''. Но точка $\Delta_{32}$ является точкой возврата, в которую приходит гиперболическое ребро $\ccc_9$ и эллиптическое ребро $\ccc_8$ на диаграмме области~$7$. В силу этого сразу получаем, что следует, что в атомах $B_{[\ccc_9]}$ перестройки семейств таковы:
$$
\tfs{1} \mapsto \tfs{1}+\tfs{4}\vee\tfs{6}, \qquad \tfs{4} \mapsto \tfs{2}+\tfs{4}\vee\tfs{6}
$$
(запись $\tfs{4}\vee\tfs{6}$ означает неопределенную пока альтернативу). Таким образом, возникает следующая логическая задача. Имеются два мужчины $M_1, M_2$ (семейства в ``головах'' атомов $2B_{[\aaa_7]}$) и две женщины $F_1, F_4$ (семейства в ``головах'' атомов $2B_{[\ccc_9]}$). Известно, что у каждого мужчины от каждой женщины имеется ровно по одному ребенку и эти дети $D_1,D_2,D_4,D_6$ (семейства на ребрах круговой молекулы точки $\delta_{22}$ между $\aaa_7$ и $\ccc_9$). Известно также, что дети $M_1$ -- это $D_1,D_4$, а дети $M_2$ -- это $D_2,D_6$. Также известно, что матерью $D_1$ является $F_1$, а матерью $D_2$ является $F_4$. Необходимо установить пару родителей (а именно, мать) каждого ребенка. Теперь ответ очевиден: $D_1$ имеет родителей $M_1,F_1$, $D_2$ имеет родителей $M_2,F_4$, $D_4$ имеет родителей $M_1,F_2$, $D_6$ имеет родителей $M_2,F_1$. Бифуркация в атомах $B_{[\ccc_9]}$ важна для дальнейшего, поэтому выделим ее отдельным утверждением.

\begin{proposition}\label{propforc9}
В двух атомах $B_{[\ccc_9]}$ перестройки семейств таковы:
$$
\tfs{1} \mapsto \tfs{1}+\tfs{6}, \qquad \tfs{4} \mapsto \tfs{2}+\tfs{4}.
$$
\end{proposition}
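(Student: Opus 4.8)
The plan is to remove the single binary ambiguity left open above --- the two linked occurrences of $\tfs{4}\vee\tfs{6}$ --- by a matching argument across the separating curve $\gmm_3=\vpi_{22}$, supplemented by one explicit computation on the invariant manifold $\mo$. Throughout we work in the enlarged phase space $\mPr$, so that the families $\tfs{1},\dots,\tfs{7}$ carry globally well-defined labels along the whole arc of $\vpi_{22}$ joining the critical circles $\aaa_7$ and $\ccc_9$, and we take as given the combinatorial data assembled just before the statement: the four chambers $D_1,D_2,D_4,D_6$ survive the passage across $\vpi_{22}$, $D_1$ carries the pair $(M_1,F_1)$, $D_2$ carries $(M_2,F_4)$, $M_1$ is adjacent to $D_1,D_4$ and $M_2$ to $D_2,D_6$, and the bifurcation $2B_{[\aaa_7]}$ is already known ($\tfs{1}\mapsto\tfs{1}+\tfs{4}$, $\tfs{2}\mapsto\tfs{2}+\tfs{6}$).

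First I would follow a small transversal to $\vpi_{22}$ through an interior point of the arc between $\aaa_7$ and $\ccc_9$, tracking the singular layer of $J$ that carries the rank-$0$ degeneration; exactly one singular fibre changes, the layer of $2B_{[\aaa_7]}$ becoming the layer of $2B_{[\ccc_9]}$, while the four adjacent chambers and the Liouville tori over them are transported continuously. This reduces the problem to deciding which of $D_4,D_6$ carries family $\tfs{1}$ and which carries $\tfs{2}$ on the $\ccc_9$-side --- equivalently, to which leg of the ``pair of pants''$\,\times S^1$ local model of the nondegenerate hyperbolic circle $B_{[\ccc_9]}$ is attached to the surviving family of each splitting. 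The nondegeneracy and the atom type $B$ needed for this local model, as well as the fact that the two critical circles forming $2B_{[\ccc_9]}$ bifurcate independently, follow from Proposition~\ref{propos19} together with the non-vanishing of the bracket \eqref{eq3_22}.

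The one computational step is to read that gluing off the explicit separated-variable solution on $\mo$: near the value of $z$ at which $Z_\pm$ in \eqref{eq5_29} acquires a double root, the sign of the elliptic integrand and the monotonicity of the associated angle variable determine, inside the relevant $3$-level set, exactly how the incoming torus splits into the two outgoing ones. Feeding this into the continuity argument --- $M_1$ continues to $\tfs{1}$ and $M_2$ to $\tfs{2}$ through the interior of region $\ts{VI}$, while each of the two ``surviving'' output tori is the one that continues, avoiding every stratum of $\Theta_L$, into region $\ts{I}$ and hence into family $\tfs{1}$ (Table~\ref{tabfam}) --- I expect to find that the assignment $D_4\leftrightarrow\tfs{1}$, $D_6\leftrightarrow\tfs{2}$ forces a family to close up on itself incompatibly with these continuations, so that only $D_4\leftrightarrow\tfs{2}$, $D_6\leftrightarrow\tfs{1}$ survives; this is precisely $\tfs{1}\mapsto\tfs{1}+\tfs{6}$, $\tfs{4}\mapsto\tfs{2}+\tfs{4}$. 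As an independent check one can recover the same pattern in the two degenerate limits $\ld=0$ and $\ell=0$, where the relevant family assignments are already recorded in the corresponding classical classifications.

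The main obstacle will not be any individual calculation but the simultaneous bookkeeping: at each moment one must keep straight (i) the connected component of the $3$-level set in which a given torus lies, (ii) the family of $\mPr$ into which it continues, and (iii) the local leg of the hyperbolic bifurcation it represents, and carry all three consistently through a parameter transition; a secondary point of care is to verify, using the explicit equations of the separating curves from Theorem~\ref{theClassEll}, that the chosen transversal to $\vpi_{22}$ meets no other stratum of $\Theta_L$ and that $\aaa_7$ and $\ccc_9$ lie on one and the same arc, so that the persistence step is legitimate.
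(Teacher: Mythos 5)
Your plan reproduces the combinatorial skeleton correctly (the four chambers $D_1,D_2,D_4,D_6$, the pinned pairs $D_1\leftrightarrow(M_1,F_1)$, $D_2\leftrightarrow(M_2,F_4)$, the reduction to one binary choice), but the step you rely on to break the symmetry is not the one that works, and the one that does work is missing. The two candidate assignments are \emph{locally identical} at the circles $B_{[\ccc_9]}$: in either case each circle is a nondegenerate atom $B$ and the polynomial $Z_{\pm}$ acquires the same double root, so no inspection of signs of the elliptic integrand or of the monotonicity of an angle variable near that root can decide which outgoing leg lies in family $\tfs{4}$ and which in $\tfs{6}$ --- that is a purely global question about connected components of the set of regular tori in $\mPr$. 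Likewise, your proposed contradiction (``forces a family to close up on itself incompatibly with continuation into region $\ts{I}$'') does not materialize: both assignments are consistent with the continuations of $\tfs{1}$ into region $\ts{I}$, since $\tfs{4}$ and $\tfs{6}$ never reach region $\ts{I}$ at all. The input that actually decides the question in the paper is Theorem~\ref{theDelta32}: the loop molecule of the degenerate rank-one points $\Delta_{32}$, where the arc $\ccc_9$ ($2B_-$) abuts the arc $\ccc_8$ ($2A_+$), has \emph{two connected components}, a fact obtained by degenerating to the classical case $\ld=0$, where the fibre is $2S^1$. Two components force the families $\tfs{2}$ and $\tfs{6}$, born one at each circle of $2A_{[\ccc_8]}$, to enter \emph{different} circles of $2B_{[\ccc_9]}$; since $\tfs{2}$ is already pinned to the circle where $\tfs{4}$ splits, $\tfs{6}$ must go to the circle where $\tfs{1}$ splits, which is exactly $\tfs{1}\mapsto\tfs{1}+\tfs{6}$, $\tfs{4}\mapsto\tfs{2}+\tfs{4}$ (one checks that the opposite assignment would glue the whole molecule of $\Delta_{32}$ into a single component through the $\tfs{4}$-edge). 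Ironically, the degenerate limits $\ld=0$ and $\ell=0$ that you relegate to an ``independent check'' are much closer to the real engine of the argument than the local computation you place at its centre.

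Two secondary points. First, the geography in your opening is off: $\aaa_7$ and $\ccc_9$ are arcs of the bifurcation diagram in the $(h,k)$-plane surrounding the image of the degenerate rank-zero point on $\delta_{22}$, not critical circles joined by an arc of the separating curve $\vpi_{22}=\gmm_3$; and the wall that must be interrogated to settle the ambiguity is $\Delta_{32}\subset\gac$, not $\vpi_{22}$. Second, the decisive sentence of your plan is phrased as ``I expect to find that\dots''; until the specific global obstruction is named and verified, this is a statement of intent rather than a proof.
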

В итоге для точки $\delta_{22}$ получаем первую молекулу в табл.~\ref{table_nev2}.

\def\dwid{35mm}

\begin{center}
\renewcommand{\arraystretch}{0}
\begin{tabular}{|m{10mm}|m{\dwid}||m{10mm}|m{\dwid}||m{10mm}|m{\dwid}|}
\multicolumn{6}{r}{Таблица \myt\label{table_nev2}}\\[2mm]
\hline
\hspace*{-2.5mm}\hf{\renewcommand{\arraystretch}{0.8}{\small \begin{tabular}{c} \ru{10}Класс\\ точек\end{tabular}} }
&
\hf{ \renewcommand{\arraystretch}{0.8}{\small  \begin{tabular}{c} \ru{10} Молекула
\end{tabular}}}
&
\hspace*{-2.5mm}\hf{\renewcommand{\arraystretch}{0.8}{\small \begin{tabular}{c} \ru{10}Класс\\ точек\end{tabular}} }
&
\hf{ \renewcommand{\arraystretch}{0.8}{\small \begin{tabular}{c} \ru{10} Молекула
\end{tabular}}}
&
\hspace*{-2.5mm}\hf{\renewcommand{\arraystretch}{0.8}{\small \begin{tabular}{c} \ru{10}Класс\\ точек\end{tabular}} }
&
\hf{ \renewcommand{\arraystretch}{0.8}{\small \begin{tabular}{c} \ru{10} Молекула
\end{tabular}}}
\\
\hline
\hf{\begin{tabular}{c}$\delta_{22}$\\[5mm]$(2)$\end{tabular}} & \hf{\includegraphics[width=\dwid,keepaspectratio]{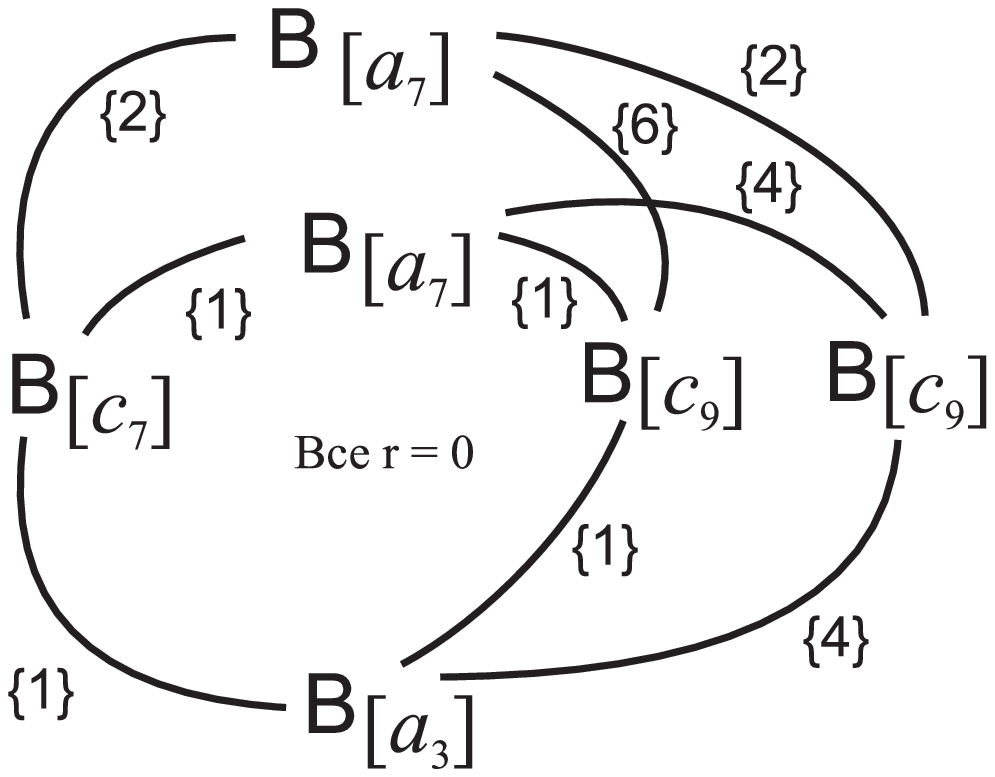}} & \hf{\begin{tabular}{c}$\delta_{26}$\\[5mm]$(2')$\end{tabular}} & \hf{\includegraphics[width=\dwid,keepaspectratio]{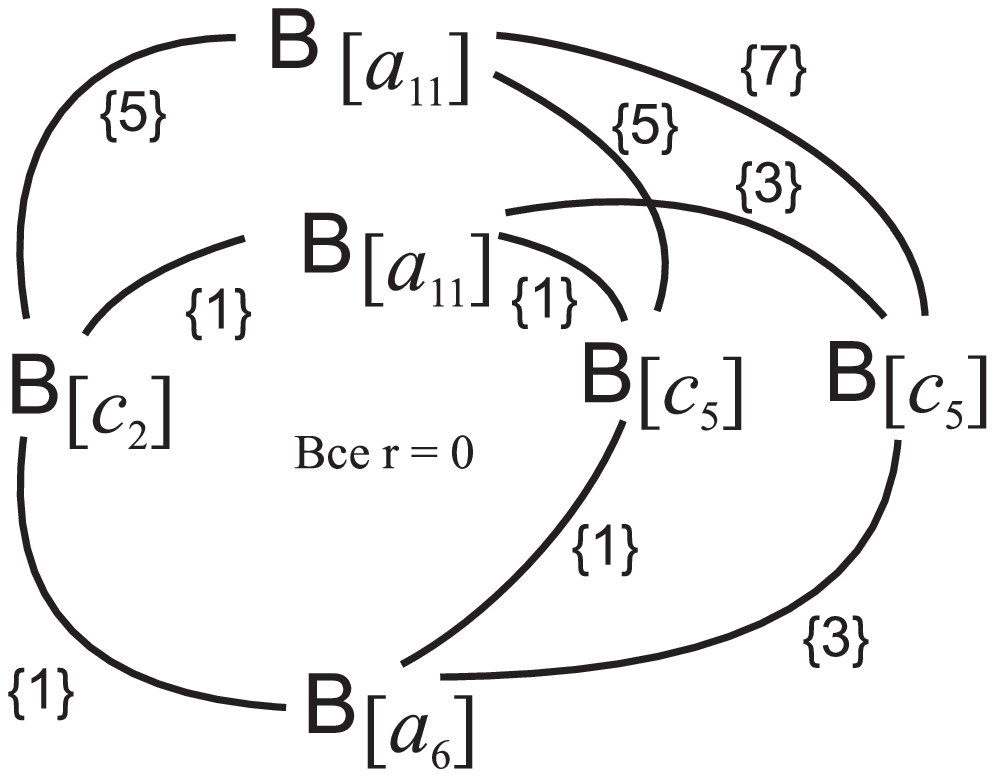}}
& \hf{\begin{tabular}{c}$\delta_{21}$\\[5mm]$(1)$\end{tabular}} & \hf{\includegraphics[width=\dwid,keepaspectratio]{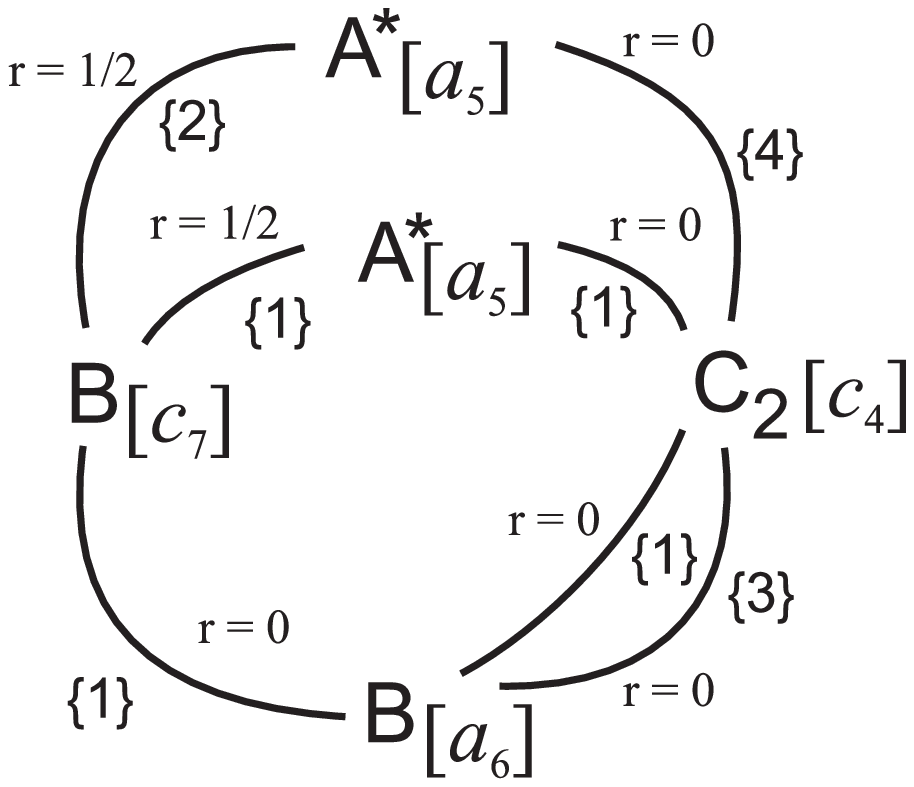}}
\\
\hline
\end{tabular}
\end{center}

Рассмотрим точку $\delta_{26}$. Здесь неопределен лишь закон расстановки семейств на четырех ребрах, соединяющих $2B_{[\aaa_{11}]}$ и $2B_{[\ccc_5]}$. Ребро $\aaa_{11}$ разделяет камеры
$\ts{V}$ и $\ts{VII}$, то есть семейства $\tfs{1}+\tfs{5}$ перестраиваются в семейства $\tfs{1}+\tfs{3}+\tfs{5}+\tfs{7}$. Заметим, что ребро $\aaa_{11}$ имеет при значениях параметров из области $4'$ выход на точку $\Delta_{14}$, в которой согласно теореме \ref{theDelta134} круговая молекула имеет две связных компоненты, так что в ней семейства в ``головах'' атомов $B_{[\aaa_{11}]}$ приходят в ``ногу'' того же атома. В частности, бифуркации семейств в атомах $B_{[\aaa_{11}]}$ таковы:
\begin{equation*}
\tfs{1} \mapsto \tfs{1}+\tfs{3}\vee\tfs{7}, \qquad \tfs{5} \mapsto \tfs{5}+\tfs{3}\vee\tfs{7}.
\end{equation*}
С другой стороны, из круговой молекулы точки $\delta'_{27}$ (см. табл.~\ref{table_nev1}) мы знаем, что на атомах $B_{[\ccc_5]}$ бифуркации таковы:
\begin{equation*}
\tfs{1} \mapsto \tfs{1}+\tfs{5}, \qquad \tfs{3} \mapsto \tfs{3}+\tfs{7}.
\end{equation*}
Так как, в соответствии c установленными атомами на ребрах бифуркационной диаграммы, количеством регулярных торов в примыкающих камерах и общей классификацией \cite{BolFom}, окрестность точки $\delta_{26}$ имеет тип $B{\times}B$, то альтернатива в атомах $B_{[\aaa_{11}]}$ однозначно разрешается.
\begin{proposition}\label{propfora11}
В двух атомах $B_{[\aaa_{11}]}$ перестройки семейств таковы:
$$
\tfs{1} \mapsto \tfs{1}+\tfs{3}, \qquad \tfs{5} \mapsto \tfs{5}+\tfs{7}.
$$
\end{proposition}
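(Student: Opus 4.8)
Я бы следовал схеме доказательств предложений~\ref{propfora7} и~\ref{propforc9}. Согласно приведённому выше анализу узел $\delta_{26}$ реализуется как критический слой типа $B{\times}B$, и на близких к нему уровнях он содержит ``толстые'' вершины $2B_{[\aaa_{11}]}$ (дуга $\aaa_{11}$ критического множества $\mm$, соединяющая области \ts{V} и \ts{VII}) и $2B_{[\ccc_5]}$ (дуга $\ccc_5$ множества $\mo$, соединяющая области \ts{III} и \ts{VII}). Как отправную точку я бы зафиксировал уже известные данные: на $B_{[\ccc_5]}$ по перестройке атома узла $\delta'_{27}$ реализуется переход $\tfs{1} \mapsto \tfs{1}+\tfs{5}$, $\tfs{3} \mapsto \tfs{3}+\tfs{7}$, а на $B_{[\aaa_{11}]}$ благодаря теореме~\ref{theDelta134} о двух компонентах слоя $J_{14}$ переход имеет вид $\tfs{1} \mapsto \tfs{1}+(\tfs{3}\vee\tfs{7})$, $\tfs{5} \mapsto \tfs{5}+(\tfs{3}\vee\tfs{7})$ с двумя различными неизвестными; остаётся отсеять один из двух вариантов.

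Далее я бы выписал семейства лиувиллевых торов во всех областях, примыкающих к узлу $\delta_{26}$ (это \ts{VII}, \ts{V}, \ts{III} и ``противоположная'' область с семейством $\tfs{1}$), и обозначил торы, попадающие слева и справа в ``толстые'' вершины $2B_{[\aaa_{11}]}$ и $2B_{[\ccc_5]}$. Затем, как в доказательстве предложения~\ref{propforc9}, каждую перестройку Морса в атоме $B{\times}B$ я бы разложил на допустимые варианты и вычеркнул те, что несовместны с уже известным переходом на $B_{[\ccc_5]}$ и с требованием согласованности ``вертикальных'' данных (наборы семейств над \ts{V}, \ts{III}, \ts{VII}) и ``горизонтальных'' (перестройки на двух дугах, сходящихся в $\delta_{26}$, согласно стандартной склейке бифуркационных диаграмм~\cite{BolFom}). Единственная пара, проходящая все ограничения, --- это $\tfs{1} \mapsto \tfs{1}+\tfs{3}$, $\tfs{5} \mapsto \tfs{5}+\tfs{7}$.

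Главным препятствием я ожидаю именно этот комбинаторно-топологический перебор: нужно аккуратно проследить, какая компонента критического слоя с какой склеивается в атоме $B{\times}B$, и строго убедиться, что совместность с переходом на $B_{[\ccc_5]}$ исключает зеркальный вариант $\tfs{1} \mapsto \tfs{1}+\tfs{7}$, $\tfs{5} \mapsto \tfs{5}+\tfs{3}$. Дополнительной осторожности потребует инволюция $\bio$, переставляющая $\iso$ и $\isom$ и отождествляющая ряд дуг и узлов: надо проверить, что она не порождает лишней возможности и согласуется с выбранной нумерацией семейств $\tfs{1}, \ldots, \tfs{7}$.
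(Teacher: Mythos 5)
Your proposal is correct and follows essentially the same route as the paper: restrict the transition at $B_{[\aaa_{11}]}$ to the form $\tfs{1}\mapsto\tfs{1}+\tfs{3}\vee\tfs{7}$, $\tfs{5}\mapsto\tfs{5}+\tfs{3}\vee\tfs{7}$ via Theorem~\ref{theDelta134} (two components of $J_{14}$), import the known transition $\tfs{1}\mapsto\tfs{1}+\tfs{5}$, $\tfs{3}\mapsto\tfs{3}+\tfs{7}$ on $B_{[\ccc_5]}$ from the arc $\delta'_{27}$, and use the $B{\times}B$ structure of the node $\delta_{26}$ together with the standard gluing rules of \cite{BolFom} to exclude the mirror variant. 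The paper's own argument is exactly this combination, stated somewhat more tersely.
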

В итоге для точки $\delta_{26}$ получаем вторую молекулу в табл.~\ref{table_nev2}.

Для последней невырожденной точки $\delta_{21}$ (диаграмма для области $1$) вопрос состоит в том, как правильно соединить семейства в бифуркации $2A^*_{[\aaa_5]}$. Ребро $\aaa_5$ разделяет камеры $\ts{II}$ и $\ts{IV}$, то есть пара семейства $\tfs{1}+\tfs{2}$ перестраивается в пару $\tfs{1}+\tfs{4}$. Закон соответствия семейств однозначно устанавливается ниже при исследовании круговой молекулы точки $\Delta_{01}$, которая всегда выступает одним из концов ребра $\aaa_5$ (диаграммы для областей $1 - 6$). Сформулируем его пока без доказательства. Оно будет дано на стр.~\pageref{propfora5page}.

\begin{proposition}\label{propfora5}
В двух атомах $A^*_{[\aaa_5]}$ перестройки семейств таковы:
$$
\tfs{1} \mapsto \tfs{1}, \qquad \tfs{2} \mapsto \tfs{4}.
$$
\end{proposition}

Снова используем установленные атомы на ребрах бифуркационной диаграммы, количество регулярных торов в примыкающих камерах и общую классификацию \cite{BolFom}, чтобы однозначно установить для точки $\delta_{21}$ третью молекулу в табл.~\ref{table_nev2}.

Этим заканчивается исследование круговых молекул невырожденных критических точек ранга 0 и структуры их полного прообраза при отображении момента. Еще раз подчеркнем, что требует доказательства предложение \ref{propfora5}.


\subsection{Круговые молекулы вырожденных точек ранга 1}
\subsubsection{Точки $\gan$}

\def\pp{0.55}
\def\pph{0.34}
\def\mpp{0.7}

Построим круговые молекулы для вырожденных замкнутых траекторий в прообразе множества $\gan$.

Рассмотрим точку $\Delta_{01}$, взяв, например, диаграмму для области 1. Атомы на ребрах диаграммы в окрестности этой точки указаны на рис.~\ref{fig_reg01both}, семейства торов в камерах -- на рис.~\ref{fig_define_fams}. Вся эта информация собрана на рис.~\ref{fig_delta01},$(a)$.

\begin{figure}[htp]
\centering
\includegraphics[width=\mpp\textwidth, keepaspectratio]{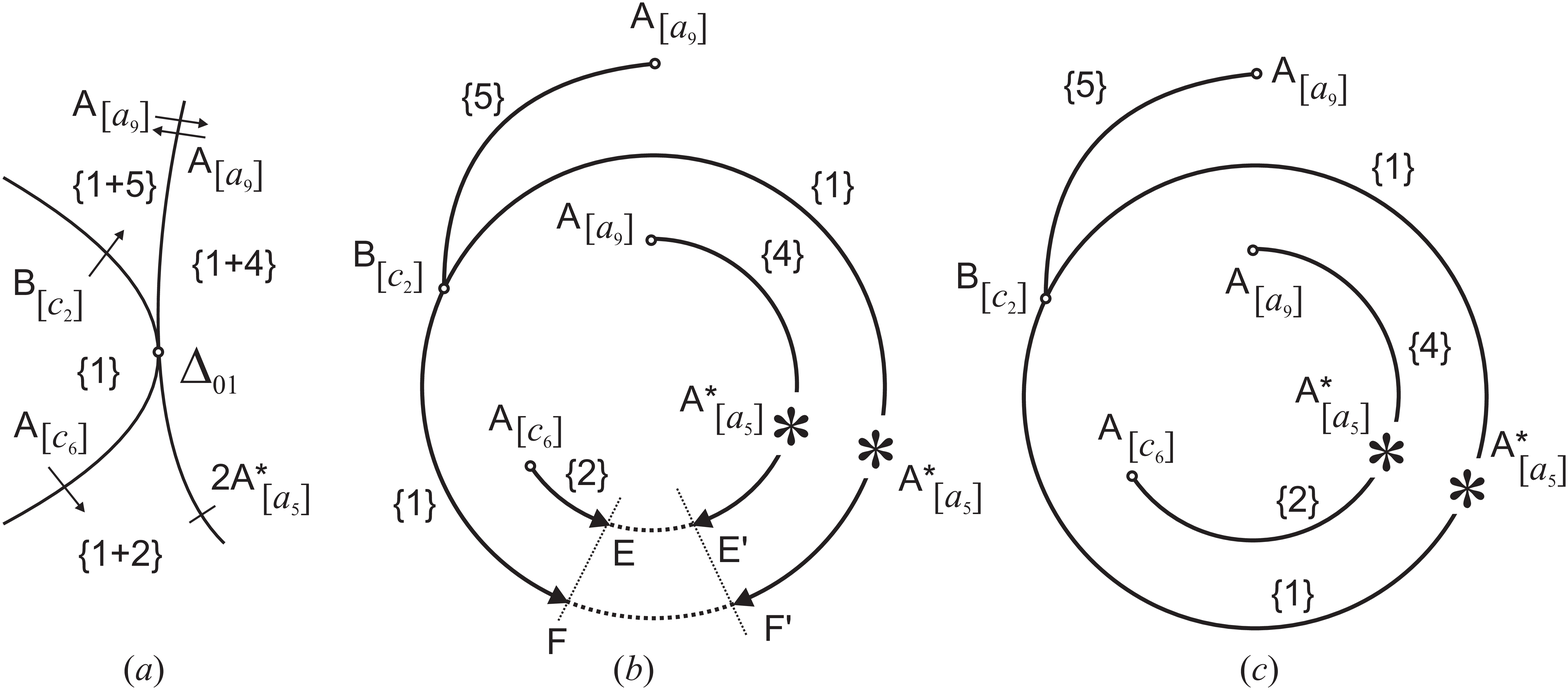}
\caption{Построение круговой молекулы для $\Delta_{01}$.}\label{fig_delta01}
\end{figure}

\begin{figure}[htp]
\centering
\includegraphics[width=\pp\textwidth, keepaspectratio]{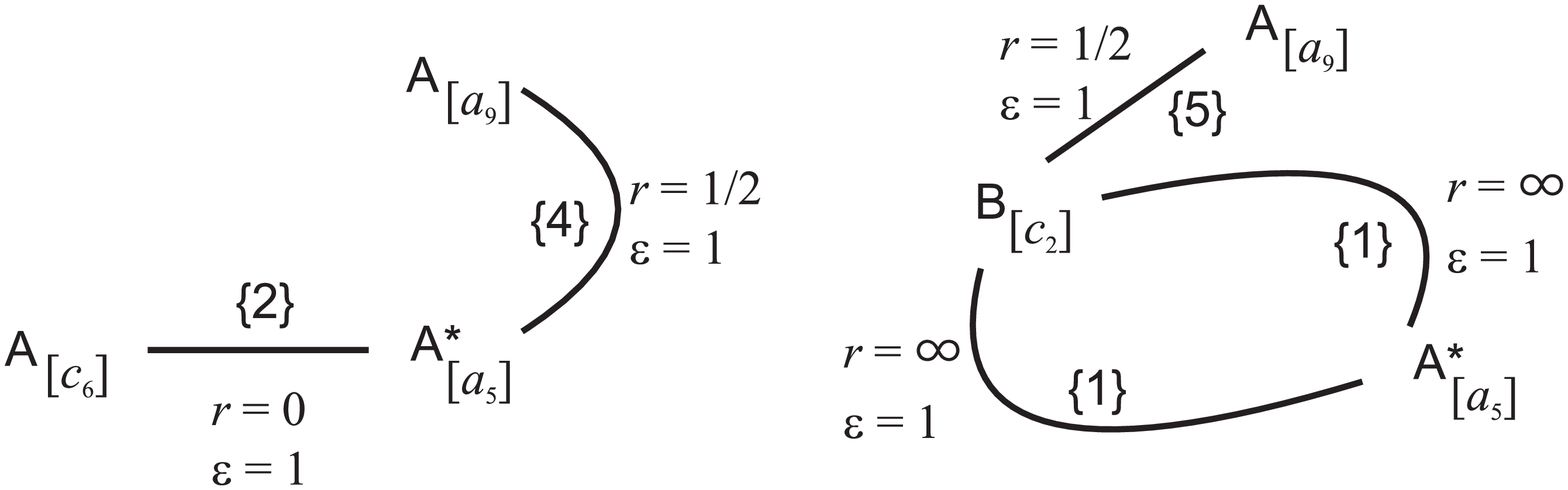}
\caption{Круговая молекула для $\Delta_{01}$.}\label{fig_delta01mols}
\end{figure}

Граф молекулы с вершинами-атомами и ребрами-семействами восстанавливается по этой информации лишь до последнего момента -- правила склейки семейств в двух атомах $A^*$. Это показано на рис.~\ref{fig_delta01},$(b)$. Однако далее видно, что есть лишь два варианта склейки, и если склеить точки $E$ с $F'$, а $F$ с $E'$, то молекула получится связной, что невозможно, так как по теореме \ref{theo_twocomps0} поверхность $J_{01}$ состоит из двух компонент. Поэтому необходимо склеить точку $E$ с $E'$, а точку $F$ с $F'$. Тогда получится молекула из двух компонент, показанная на рис.~\ref{fig_delta01},$(c)$. Этим, в частности, доказано анонсированное ранее предложение \ref{propfora5}.\label{propfora5page}

Поскольку теперь утверждение работы \cite{Mor} о грубой структуре молекулы строго доказано, можно воспользоваться и вычисленными в \cite{Mor} метками (точка $\Delta_{01}$ соответствует точке $z_3$ работы \cite{Mor}). Окончательный результат для точки $\Delta_{01}$ показан на рис.~\ref{fig_delta01mols}. Имеется различие с работой \cite{Mor} в указанных склейках семейств. Так, во второй компоненте в атоме $A^*$ сходятся торы одного и того же семейства. Это нельзя было утверждать, оставаясь в пределах многообразия $\ell=0$, которое изучалось в \cite{Mor}. Здесь же семейства трактуются в смысле расширенного фазового пространства. Поэтому два различных семейства торов в однопараметрическом множестве систем с двумя степенями свободы $\ell=0, \ld \in \bR$ оказались одним и тем же семейством в двухпараметрическом множестве таких систем.

Рассмотрим точку $\Delta_{02}$ на диаграмме для области 1. Как и в предыдущем случае, атомы на ребрах диаграммы в окрестности этой точки указаны на рис.~\ref{fig_reg01both}, семейства торов в камерах -- на рис.~\ref{fig_define_fams}. Вся эта информация собрана на рис.~\ref{fig_delta02},$(a)$. Здесь грубая молекула и семейства на ребрах восстанавливаются однозначно (см. рис.~\ref{fig_delta02},$(b)$). Метки на ребрах можно найти в работе \cite{BRF} и книге \cite{BolFom} (в этих источниках приводятся только $r$-метки, однако $\varepsilon$-метки все равны $+1$, согласно соответствующему результату работы \cite{Mo2004}). Полученная молекула показана на рис.~\ref{fig_delta02mols}. Несовпадение номеров семейств с работой \cite{BRF} объясняется той же причиной, что и в предыдущем случае -- невозможностью отождествления некоторых семейств в рамках частного случая (здесь частный случай -- это $\ld=0$). Далее аналогичные замечания о номерах семейств при сопоставлении молекул уже делать не будем.

\begin{figure}[ht]
\centering
\includegraphics[width=\pp\textwidth, keepaspectratio]{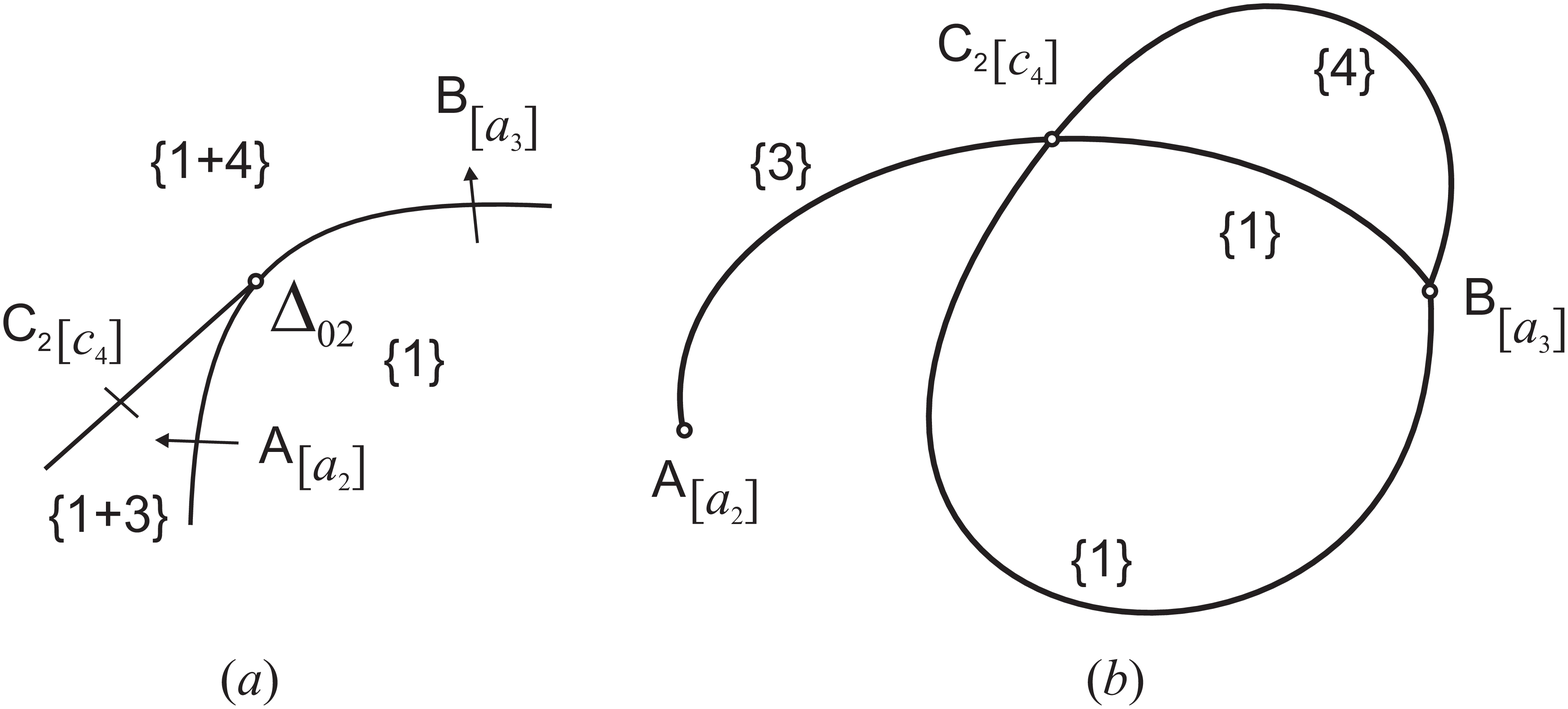}
\caption{Построение круговой молекулы для $\Delta_{02}$.}\label{fig_delta02}
\end{figure}

\begin{figure}[ht]
\centering
\includegraphics[width=\pph\textwidth, keepaspectratio]{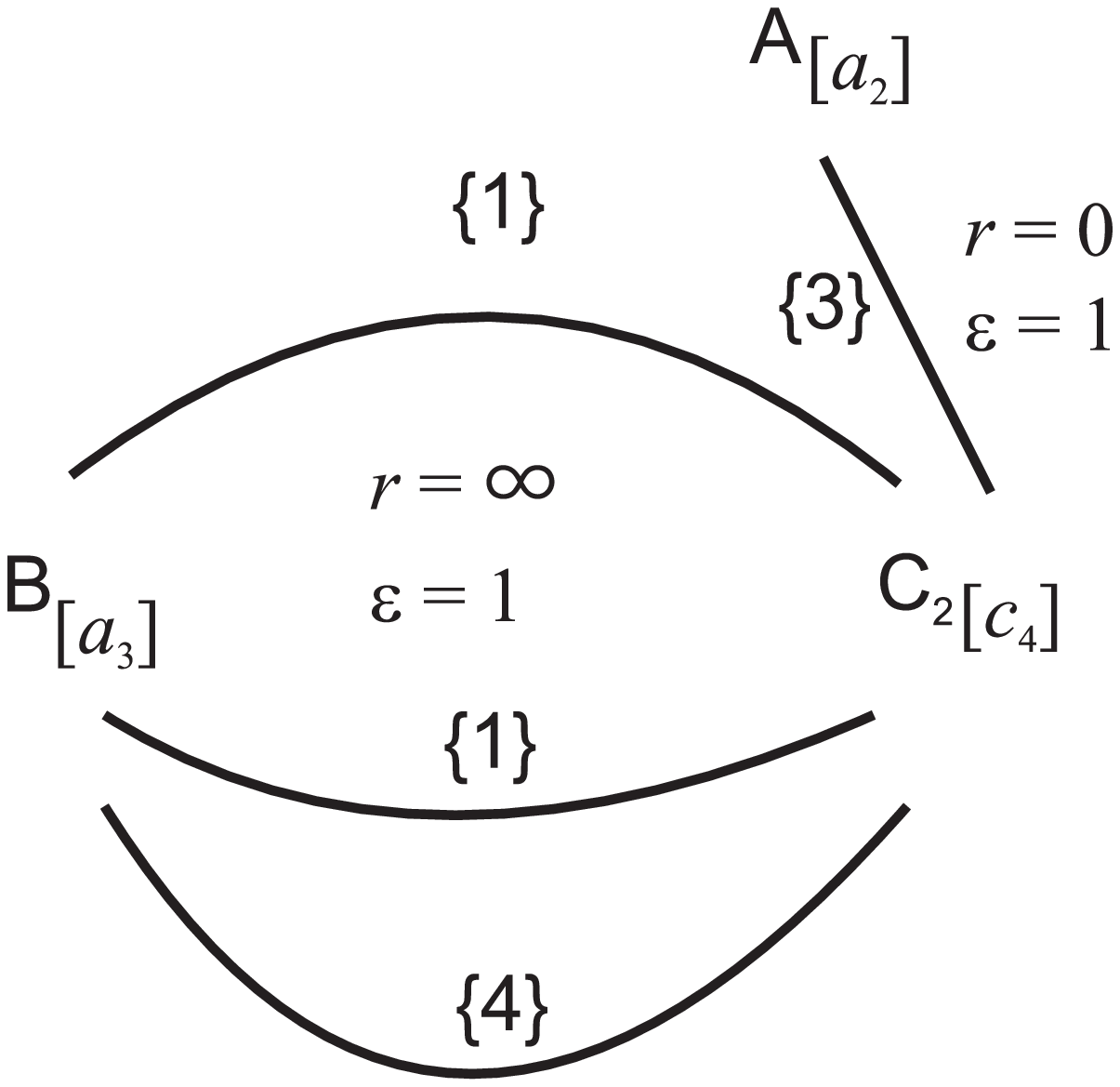}
\caption{Круговая молекула для $\Delta_{02}$.}\label{fig_delta02mols}
\end{figure}

Рассмотрим точку $\Delta_{03}$. Ее можно найти на диаграмме для области 2.
Атомы на ребрах диаграммы в окрестности этой точки указаны на рис.~\ref{fig_reg01both}, камеры и количество торов в них  -- на рис.~\ref{fig_define_tori}, и, соответственно, семейства представлены в табл.~\ref{tabfam}. Вся информация собрана на рис.~\ref{fig_delta03},$(a)$. Склейки семейств в атомах $A^*$ ребра $\aaa_5$ описываются предложением \ref{propfora5}, доказанным при построении молекулы для точки $\Delta_{01}$. Семейства $\tfs{4}$ и $\tfs{6}$, возникающие на атомах $A$ ребра $\aaa_8$, приходят в атомы $B$ ребра $\ccc_9$ в соответствии с предложением \ref{propforc9}. Выбор в альтернативе склейки пар точек $E,F$ и $E',F'$ определяется однозначно: точка $E$ не может быть соединена с $F'$, поскольку при этом возникает связная молекула, а по теореме \ref{theo_twocomps0} она состоит из двух компонент. Таким образом, на этом этапе мы имеем грубую молекулу, показанную на рис.~\ref{fig_delta03},$(c)$. Меченая молекула показана на рис.~\ref{fig_delta03mols}. Поскольку точка $\Delta_{03}$ существует и при $\ld=0$, метки на ребрах молекулы могут быть взяты из работы \cite{BRF} с учетом утверждения \cite{Mo2004} об $\varepsilon$-метках.

\begin{figure}[!ht]
\centering
\includegraphics[width=\mpp\textwidth, keepaspectratio]{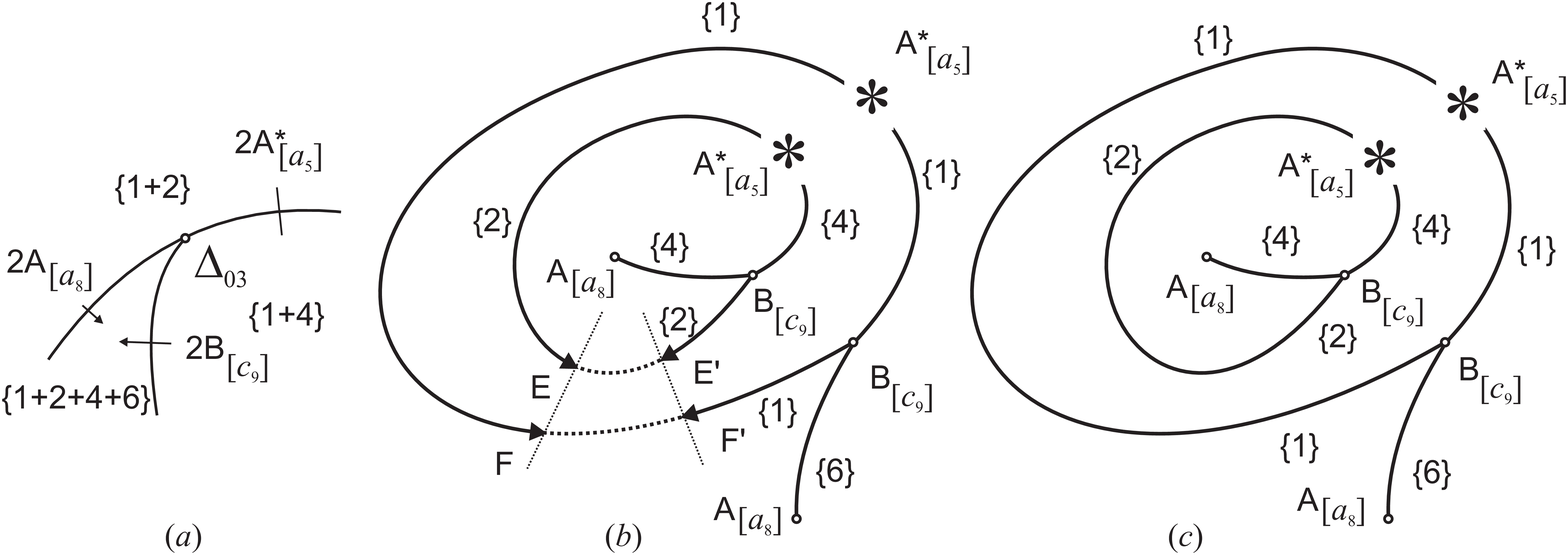}
\caption{Построение круговой молекулы для $\Delta_{03}$.}\label{fig_delta03}
\end{figure}

\begin{figure}[!ht]
\centering
\includegraphics[width=\pp\textwidth, keepaspectratio]{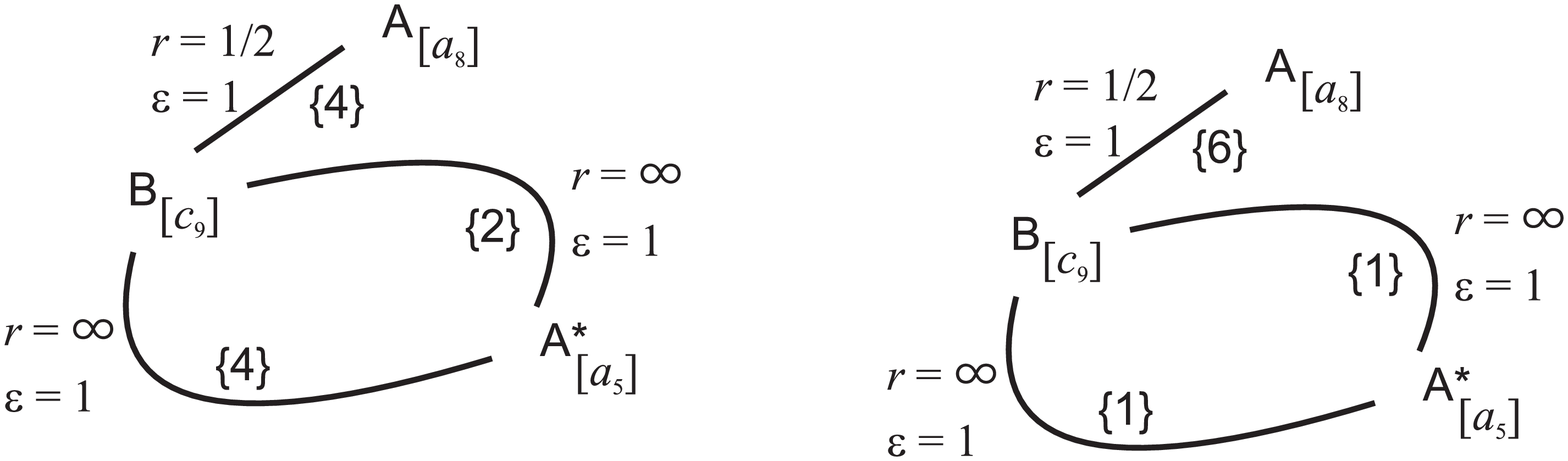}
\caption{Круговая молекула для $\Delta_{03}$.}\label{fig_delta03mols}
\end{figure}

Рассмотрим последнюю из точек на $\gan$ -- точку $\Delta_{04}$. Ее можно найти на диаграмме для области $2'$. Атомы на ребрах диаграммы в окрестности этой точки указаны на рис.~\ref{fig_reg01sboth}, камеры и количество торов в них  -- на рис.~\ref{fig_define_tori}, а семейства, отвечающие камерам, представлены в табл.~\ref{tabfam}. Вся информация собрана на рис.~\ref{fig_delta04},$(a)$. Здесь единственная возможная неоднозначность связана с перестройкой семейств на двух атомах $B$ ребра $\aaa_{11}$. Ранее мы доказали предложение \ref{propfora11} о бифуркациях семейств на ребре $\aaa_{11}$, прибегнув к прогнозу достаточно простой молекулы точки $\Delta_{14}$. Еще раз докажем это же из других соображений. Если предположить, что хотя бы одно из семейств $\tfs{5}$, $\tfs{7}$, рождающихся на ребре $\ccc_3$ приходит в ``ногу'' того атома $B_{[\aaa_{11}]}$, который перестраивает два тора в один тор семейства $\tfs{1}$, то полученная молекула будет связна, что не так по теореме \ref{theo_twocomps0}. Тогда ни одна из точек $G'$, $H'$ не может склеиться с точкой $E$ или с точкой $F$ (рис.~\ref{fig_delta04},$(b)$). Поэтому единственно возможный результат показан на рис.~\ref{fig_delta04},$(c)$.  Молекула для данной точки существует и при $\ell=0$, поэтому метки на ее ребрах могут быть взяты из работы \cite{Mor} (соответствующая точка в \cite{Mor} обозначена через $z_4$). В результате для точки $\Delta_{04}$ получаем круговую молекулу, представленную на рис.~\ref{fig_delta04mols}.

\begin{figure}[ht]
\centering
\includegraphics[width=\mpp\textwidth, keepaspectratio]{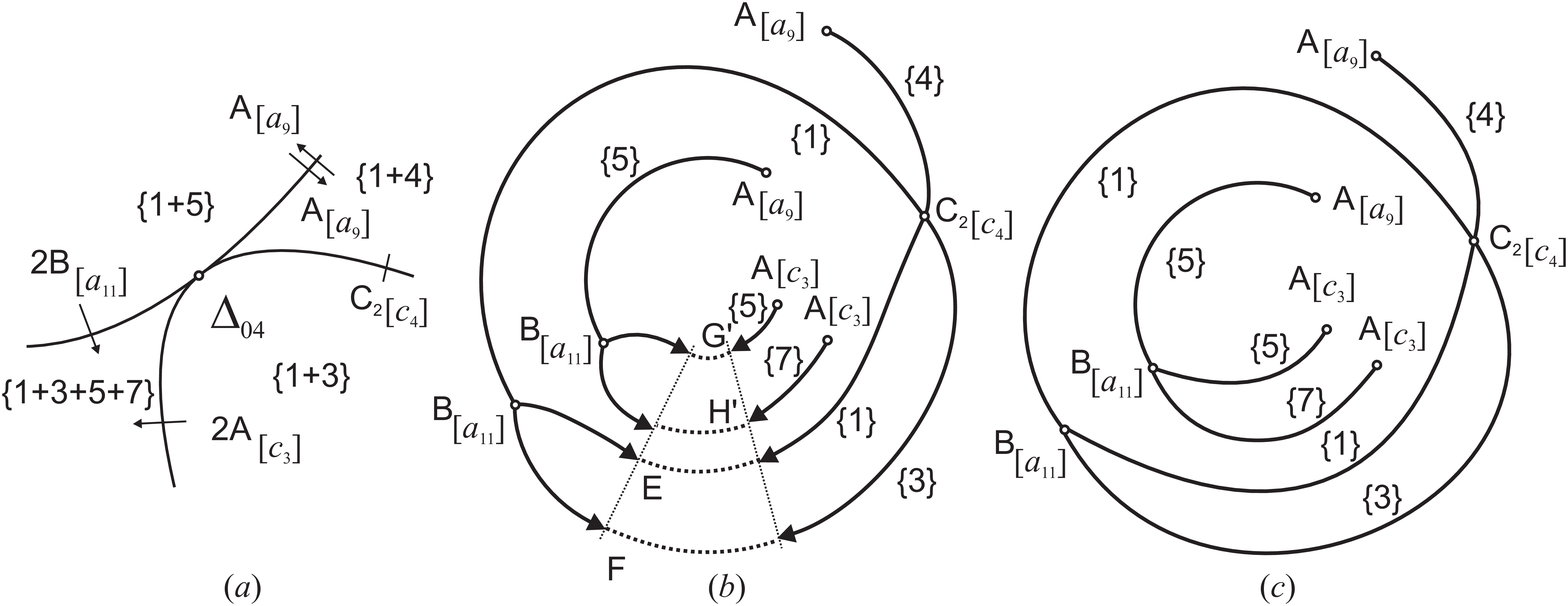}
\caption{Построение круговой молекулы для $\Delta_{04}$.}\label{fig_delta04}
\end{figure}

\begin{figure}[ht]
\centering
\includegraphics[width=\pp\textwidth, keepaspectratio]{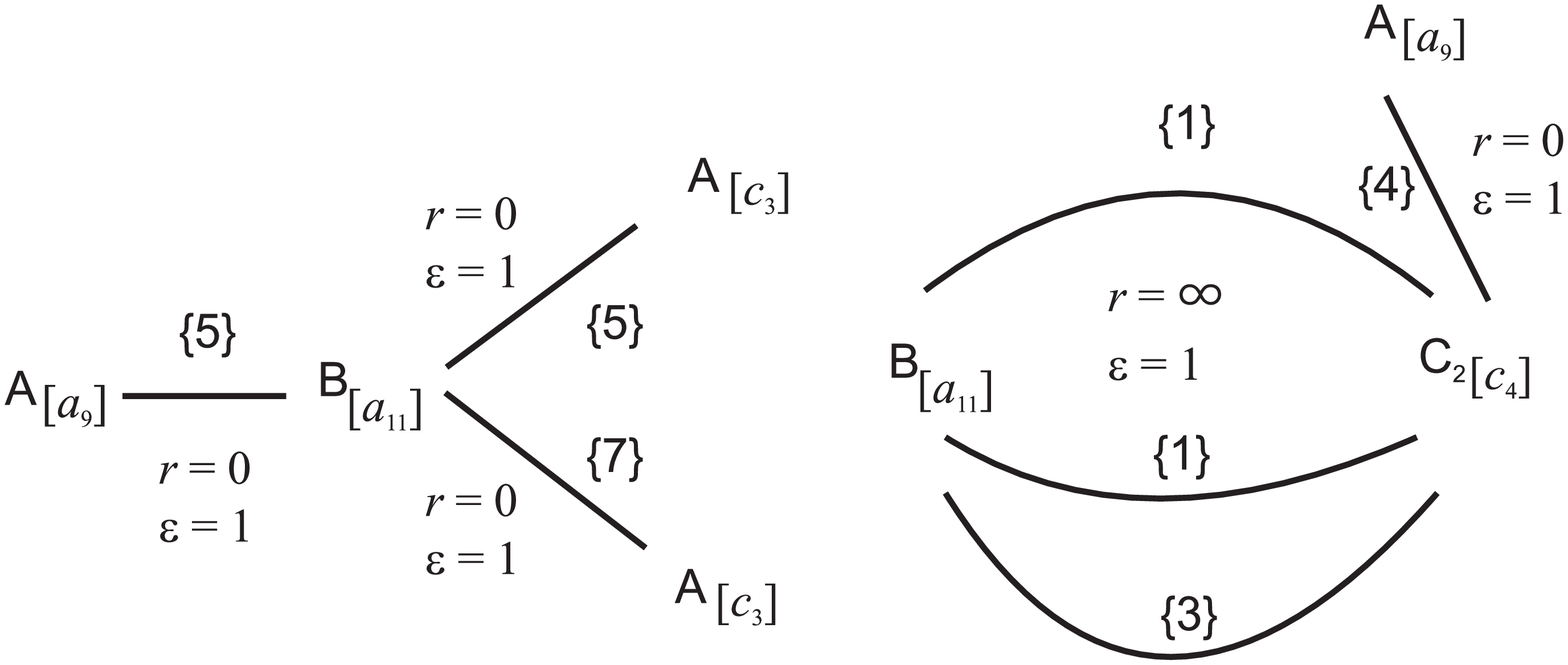}
\caption{Круговая молекула для $\Delta_{04}$.}\label{fig_delta04mols}
\end{figure}

\subsubsection{Точки $\gaa$}
Следующий класс вырожденных точек ранга 1 -- это точки, лежащие в прообразе ребра возврата поверхности $\wsa$. Метки круговой молекулы такого типа для точек возврата на диаграммах описываются утверждением работы \cite{Mor}, а именно, $r$-метки равны $\infty$ на ребре $B - B$ и конечны на ребре $A - B$, все $\varepsilon$-метки равны $+1$. Поэтому для точки $\Delta_{11}$ (см., например, диаграмму для области 1) имеем результат, показанный на рис.~\ref{fig_delta11}. Здесь $r$-метка на ребре $A - B$ равна нулю по доказанному в \cite{BolFom}. Равенство нулю $r$-метки на ребре $A_{[\aaa_4]} - B_{[\aaa_3]}$ для молекулы точки $\Delta_{12}$, показанной на рис.~\ref{fig_delta12} и присутствующей на диаграмме для области~3, необходимо обосновать.
Воспользуемся методом работы \cite{Mor}.

\begin{proposition}\label{propDelta12}
В круговой молекуле точки $\Delta_{12}$ на ребре, соединяющем атом $A$ с седловым, $r$-метка равна нулю.
\end{proposition}

\begin{proof} Используем правило сложения меток \cite{BolFom}: пусть некоторое семейство участвует в бифуркациях на трех стенках $u_1,u_2,u_3$ одной камеры и при этом для перехода $u_1 \to u_2$ метка $r=\infty$, тогда $r$-метки переходов $u_1 \to u_3$ и $u_2 \to u_3$ совпадают.

Для точки $\Delta_{12}$ рассмотрим переход по семейству $\tfs{4}$ от $B_{[\aaa_3]}$ к $A_{[\aaa_4]}$. Обращаясь к бифуркационным диаграммам видим, что эти две стенки могут соседствовать с разными другими стенками. Например, в области 3 имеем переходы $B_{[\aaa_3]} \to 2A_{[\ccc_8]} \to A_{[\aaa_4]}$. Однако в бифуркации на ребре $\ccc_8$ семейство $\tfs{4}$ не участвует. Точка $\Delta_{12}$ присутствует и в области $1'$, однако, здесь нельзя указать разложение нужного перехода в сумму двух. Обратимся к диаграмме для области 9. Здесь самой точки $\Delta_{12}$ нет, но есть переходы по семейству $\tfs{4}$ следующего вида: переход $B_{[\aaa_3]} \to A_{[\bbb_2]}$ -- принадлежит молекуле невырожденной точки $\delta_{31}$ с меткой $r=\infty$; переход $ A_{[\bbb_2] \to  B_{[\aaa_4]}}$ -- принадлежит молекуле невырожденной точки $\delta_{32}$ с меткой $r=0$. Следовательно, недостающая $r$-метка равна 0.
\end{proof}

\begin{figure}[ht]
\centering
\includegraphics[width=0.5\textwidth, keepaspectratio]{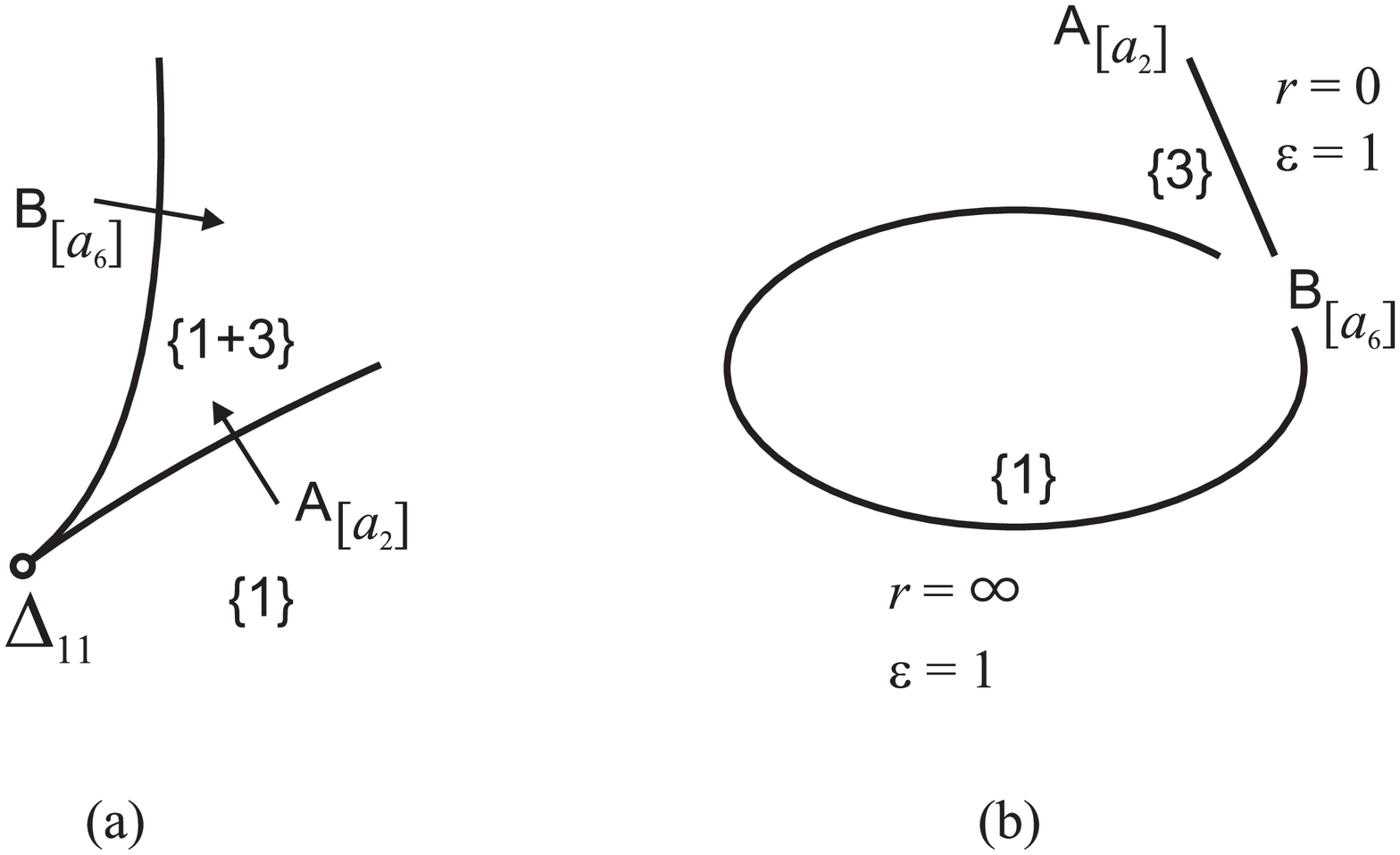}
\caption{Диаграмма и круговая молекула для $\Delta_{11}$.}\label{fig_delta11}
\end{figure}

\begin{figure}[ht]
\centering
\includegraphics[width=0.5\textwidth, keepaspectratio]{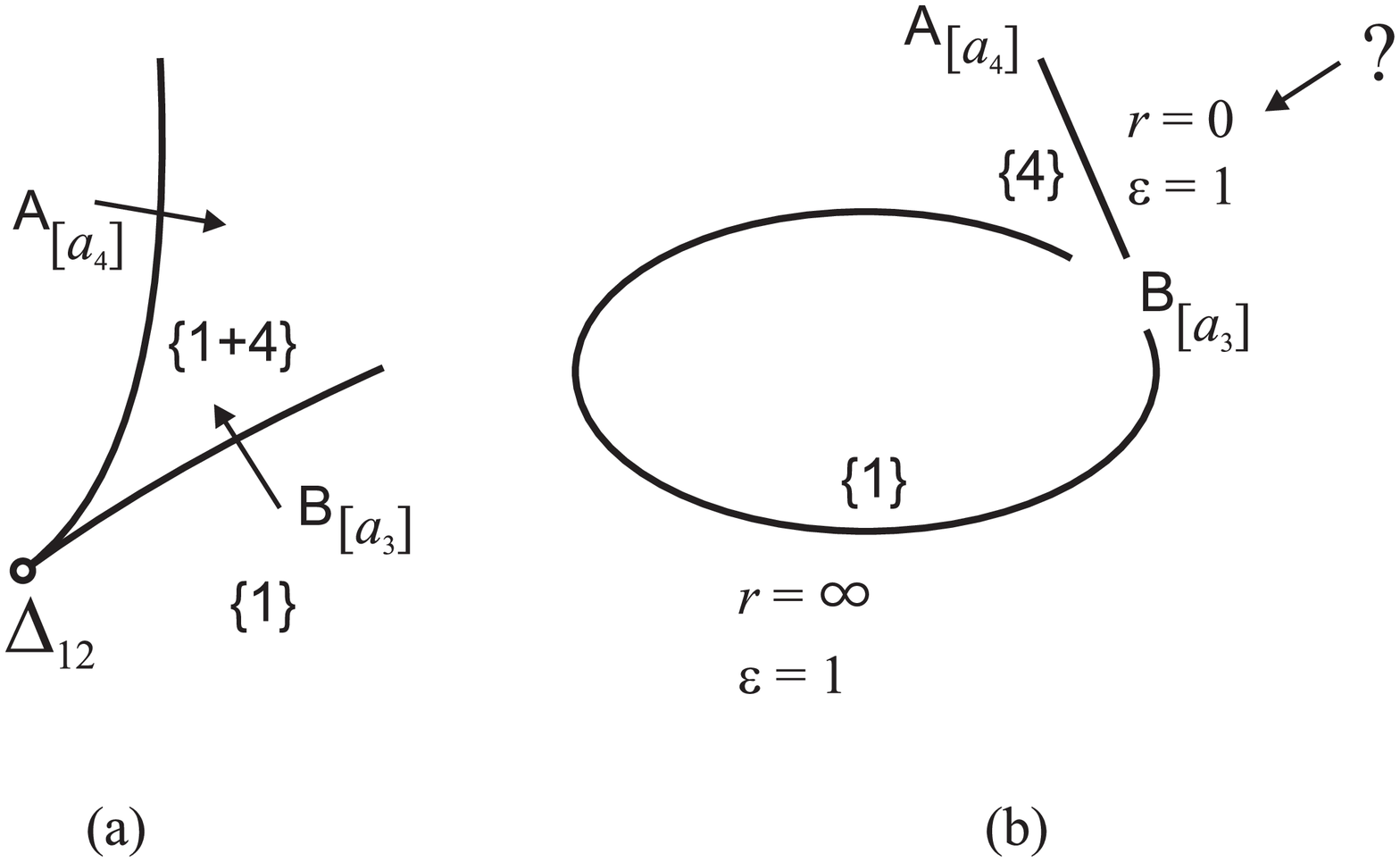}
\caption{Диаграмма и круговая молекула для $\Delta_{12}$.}\label{fig_delta12}
\end{figure}

Рассмотрим точку $\Delta_{13}$. В ней сходятся ребра диаграммы (область 2): ребро $\aaa_8$ -- с двумя атомами $A$, в которых рождаются торы семейств $\tfs{4}, \tfs{6}$, ребро $\aaa_7$ -- с двумя атомами $B$, в которых каждый из торов семейств $\tfs{1}, \tfs{2}$ перестраивается в пару торов, выбранную из семейств $\tfs{1}, \tfs{2}, \tfs{4}, \tfs{6}$. Это ребро заканчивается невырожденной точкой $\delta_{23}$, при исследовании которой было доказано предложение \ref{propfora7} о перестройках семейств в двух атомах $B_{[\aaa_7]}$.  Как следствие получаем круговую молекулу из двух компонент для $\Delta_{13}$, показанную вместе с фрагментом диаграммы на рис.~\ref{fig_delta13}.

Отметим, что факт наличия двух компонент в составе круговой молекулы вырожденной круговой орбиты, даже при наличии точки возврата на диаграмме, должен доказываться, так как полной классификации таких молекул нет. Ранее мы доказали это с использованием аналитического решения в теореме \ref{theDelta134}. Но теперь оказывается, что молекула из одной компоненты могла возникнуть лишь в том случае, если бы два тора семейств, рожденных на ребре $\aaa_8$, пришли в один и тот же атом $B$ ребра $\aaa_7$, что не так в силу предложения~\ref{propfora7}. Так что наличие двух компонент в $J_{13}$ доказано и топологически. Поэтому метки могут быть взяты из работы \cite{BolFom}.

\begin{figure}[ht]
\centering
\includegraphics[width=\pp\textwidth, keepaspectratio]{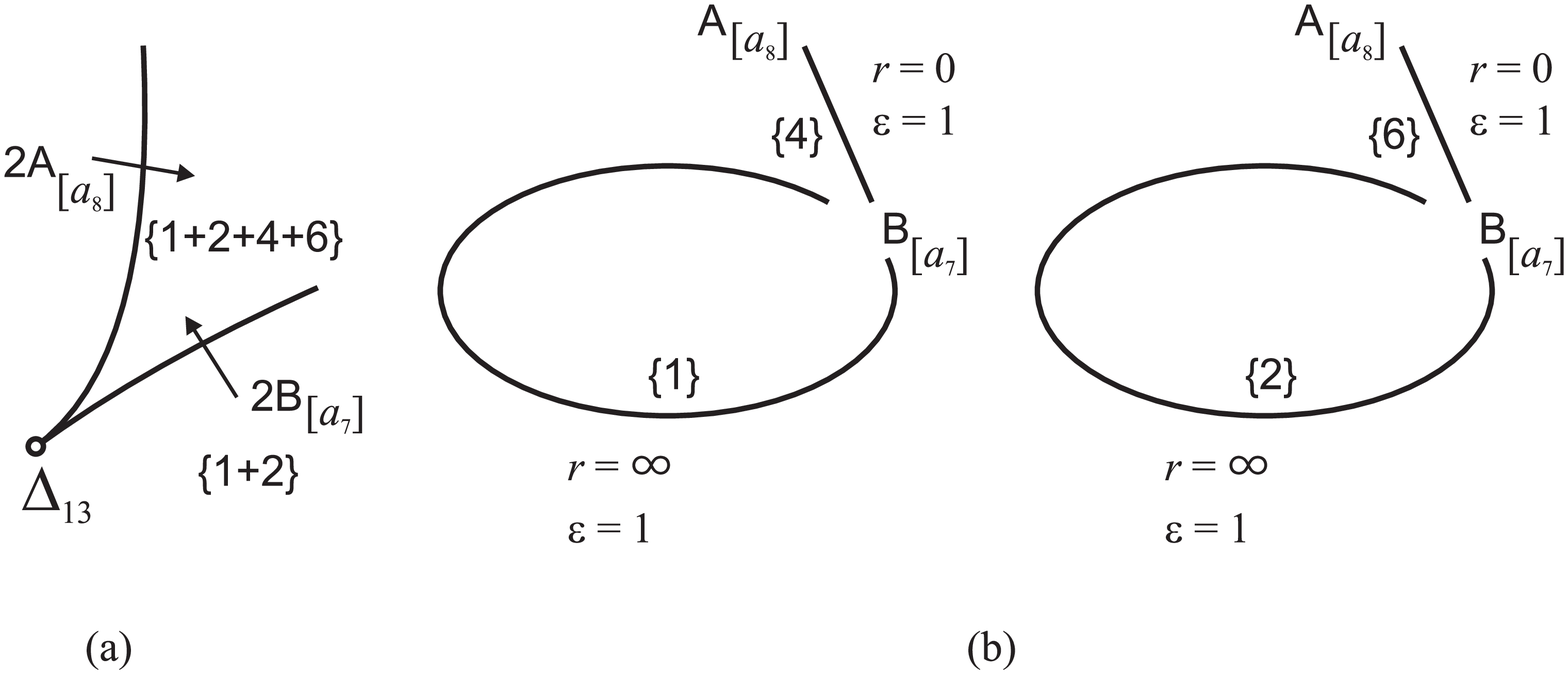}
\caption{Диаграмма и круговая молекула для $\Delta_{13}$.}\label{fig_delta13}
\end{figure}

\begin{figure}[ht]
\centering
\includegraphics[width=\pp\textwidth, keepaspectratio]{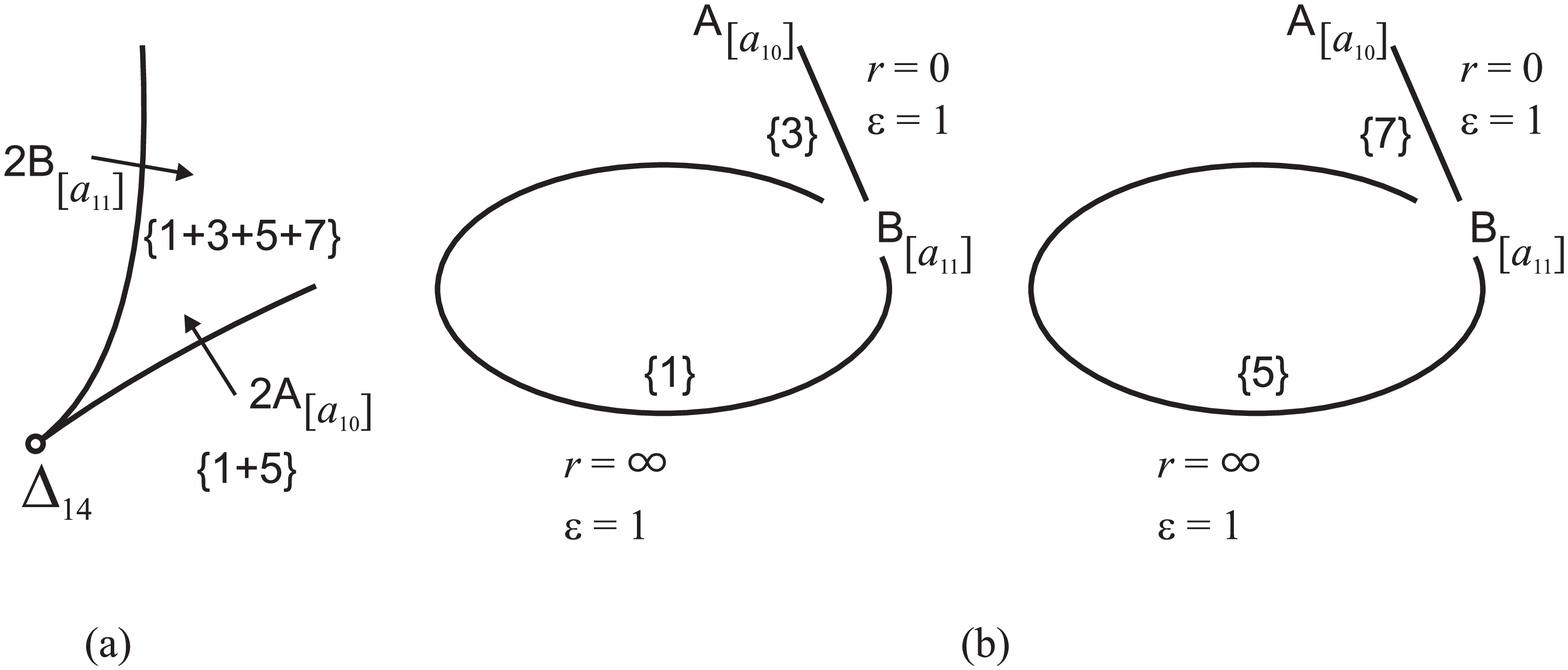}
\caption{Диаграмма и круговая молекула для $\Delta_{14}$.}\label{fig_delta14}
\end{figure}

Для точки $\Delta_{14}$ возникает тот же вопрос о количестве связных компонент в круговой молекуле. Наличие двух компонент также доказано ранее с использованием аналитического решения в теореме \ref{theDelta134}. С другой стороны, здесь два атома $B$ реализуются на ребре $\aaa_{11}$ (см. диаграмму для области $4'$). Это ребро заканчивается в уже изученной точке $\Delta_{04}$, поэтому известно, что на одном атоме $B$ тор из семейства $\tfs{1}$ перестраивается в два тора из семейств $\tfs{1}$, $\tfs{3}$, на другом -- тор из семейства $\tfs{5}$ перестраивается в два тора из семейств $\tfs{5}$, $\tfs{7}$, что показано в компонентах круговой молекулы на рис.~\ref{fig_delta04mols}. Результат для точки $\Delta_{14}$ -- фрагмент диаграммы и две компоненты круговой молекулы, обоснованные и топологически, -- представлен на рис.~\ref{fig_delta14}. Метки здесь взяты из работы \cite{Mor}.

\subsubsection{Точки $\gac$}

Последний класс вырожденных точек ранга 1 -- это точки, лежащие в прообразе ребра возврата поверхности $\wsc$.

Как и для множества $\gaa$, если в прообразе точки $\Delta_{3i}$ имеется лишь одна вырожденная окружность, то есть соответствующая критическая интегральная поверхность и круговая молекула связны, достаточно установить лишь номера семейств, участвующих в бифуркациях. Поэтому для точек $\Delta_{31},\Delta_{33}$ (см., например, диаграммы для областей $1$ и $1'$) имеем результат, показанный на рис.~\ref{fig_delta31}, \ref{fig_delta33}. Метки круговой молекулы точки $\Delta_{33}$ известны из \cite{BRF}, \cite{Mor}. Для точки $\Delta_{31}$ бесконечная $r$-метка является следствием общего утверждения \cite{Mor}, цитированного выше, равенство нулю $r$-метки на ребре $A_{[\ccc_1]} - B_{[\ccc_2]}$ требует дополнительного обоснования.

\begin{figure}[!ht]
\centering
\includegraphics[width=0.5\textwidth, keepaspectratio]{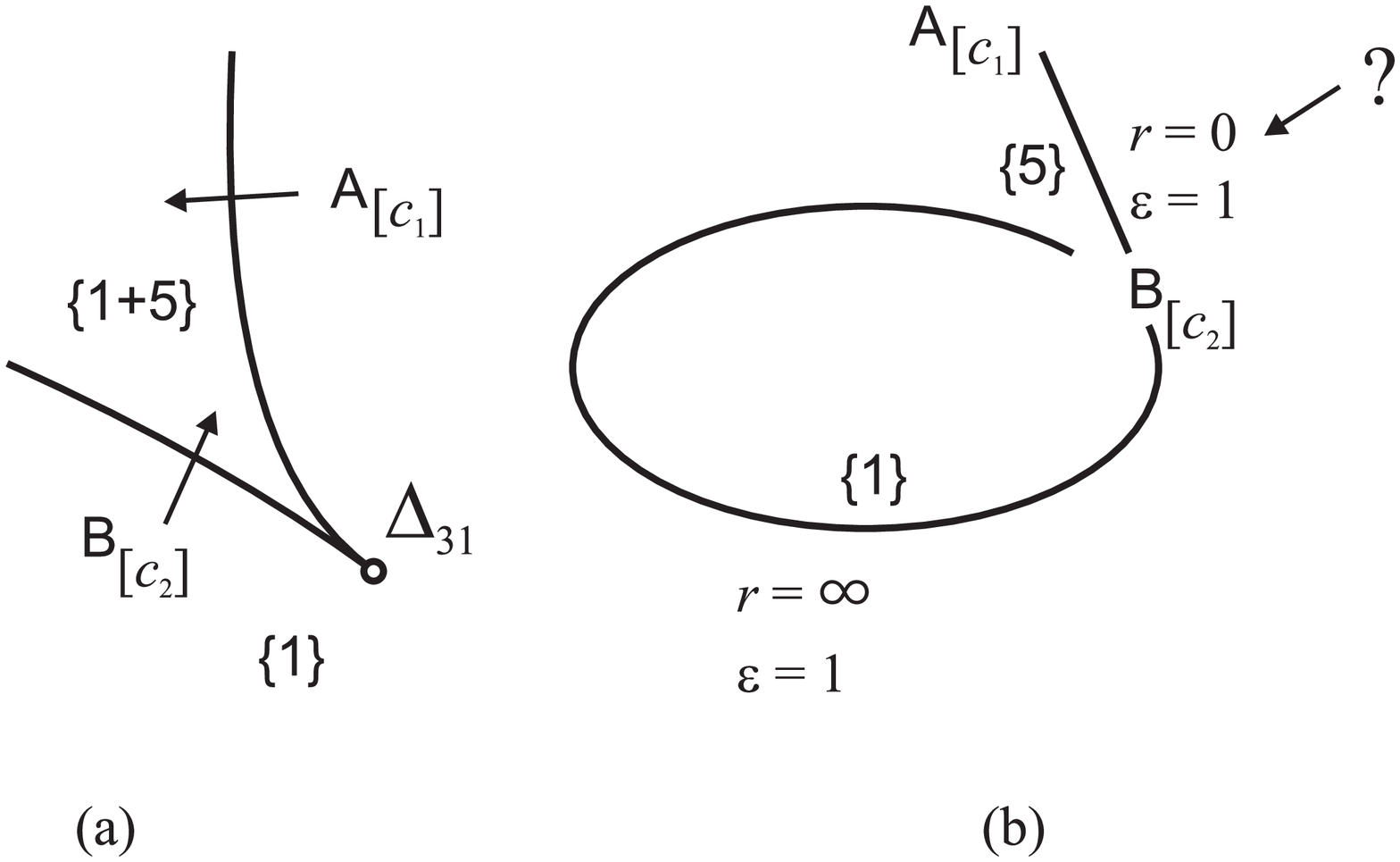}
\caption{Диаграмма и круговая молекула для $\Delta_{31}$.}\label{fig_delta31}
\end{figure}

\begin{figure}[!ht]
\centering
\includegraphics[width=0.5\textwidth, keepaspectratio]{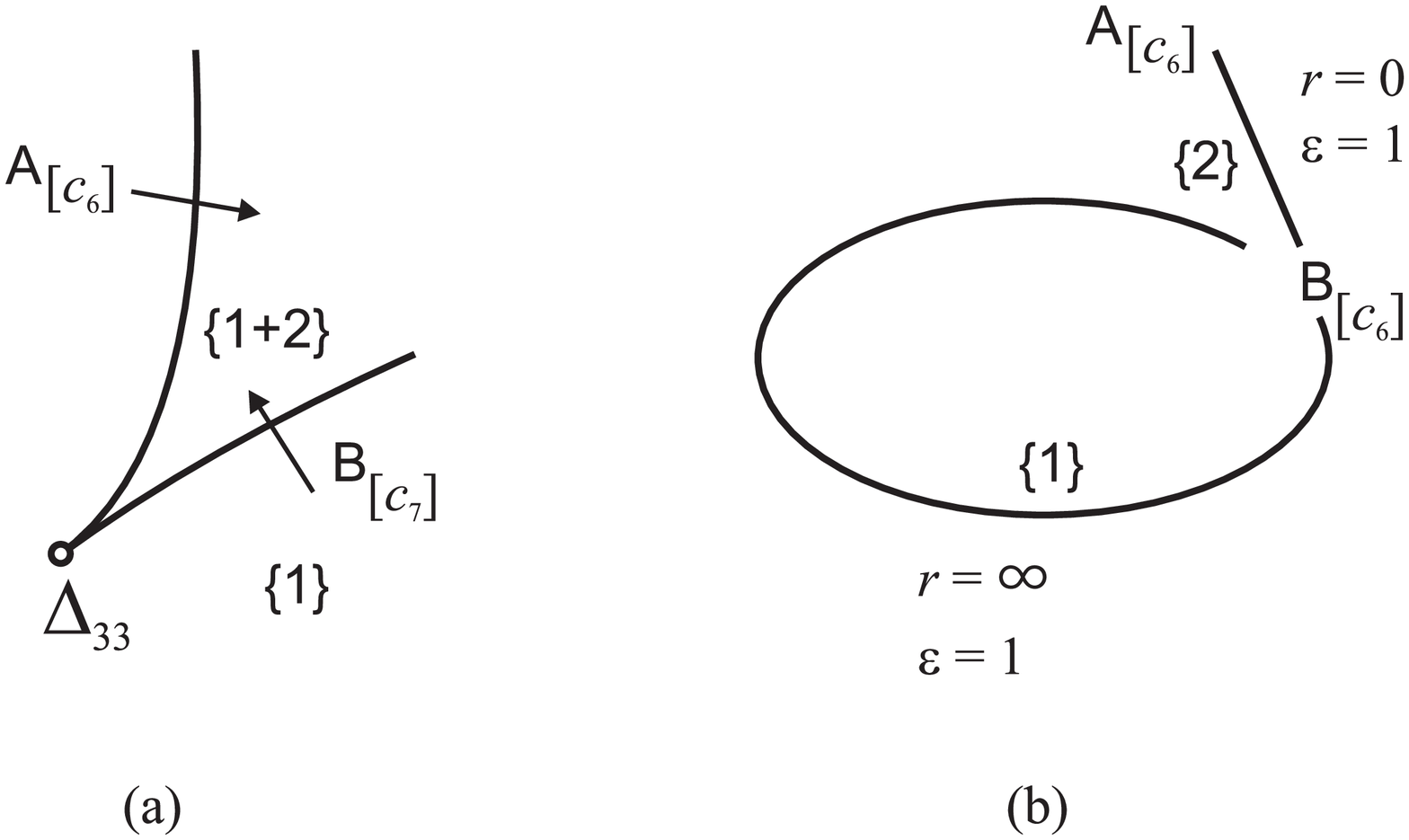}
\caption{Диаграмма и круговая молекула для $\Delta_{33}$.}\label{fig_delta33}
\end{figure}

\begin{proposition}\label{propDelta31}
В круговой молекуле точки $\Delta_{31}$ на ребре, соединяющем атом $A$ с седловым, $r$-метка равна нулю.
\end{proposition}

\begin{proof} Вновь используем правило сложения меток при одной бесконечной $r$-метке, теперь для точки $\Delta_{31}$ и перехода по семейству $\tfs{5}$ от $B_{[\ccc_2]}$ к $A_{[\ccc_1]}$. Здесь также можно рассмотреть различные диаграммы. Например, в области 5 имеем переходы $A_{[\ccc_1]} \to A_{[\aaa_9]} \to B_{[\ccc_2]}$. Однако обе метки здесь конечны: для первого из молекулы точки $\delta_{25}$ имеем $r=0$, а для второго -- из молекулы точки $\Delta_{03}$ следует, что $r=1/2$. Поэтому ``сумма'' меток определена неоднозначно. Точка $\Delta_{31}$ присутствует и в области $1'$, но, как и для точки $\Delta_{12}$, здесь нельзя указать разложение нужного перехода в сумму двух. Обратимся к диаграмме для области $9'$. Здесь самой точки $\Delta_{31}$ уже нет, но есть переходы по семейству $\tfs{5}$ следующего вида: переход $B_{[\ccc_2]} \to A_{[\aaa_{12}]}$ принадлежит молекуле невырожденной точки $\delta''_{27}$ с меткой $r=\infty$; переход $ A_{[\aaa_{12}]} \to  A_{[\ccc_1]}$ принадлежит молекуле невырожденной точки $\delta''_{28}$ с меткой $r=0$. Следовательно, недостающая $r$-метка равна 0.
\end{proof}

Рассмотрим точку $\Delta_{32}$. Фрагмент диаграммы с обозначениями ребер и соответствующими атомами возьмем из диаграмм для области $4$. Для этой точки нужно решить вопрос о количестве связных компонент молекулы (аналитическое решение для полной интегральной поверхности отсутствует). Однако для того, чтобы молекула оказалась связной, необходимо, чтобы семейства $\tfs{2}, \tfs{6}$, рожденные на ребре $\ccc_8$ пришли в один атом $B$ ребра $\ccc_9$. Но, как следует из предложения~\ref{propforc9}, это не так. Таким образом, молекула точки $\Delta_{32}$ состоит из двух компонент, а предложение~\ref{propforc9} определяет правила склейки семейств. Метки вычислены в работе \cite{BolFom}. Результат показан на рис.~\ref{fig_delta32}.

\begin{figure}[ht]
\centering
\includegraphics[width=\pp\textwidth, keepaspectratio]{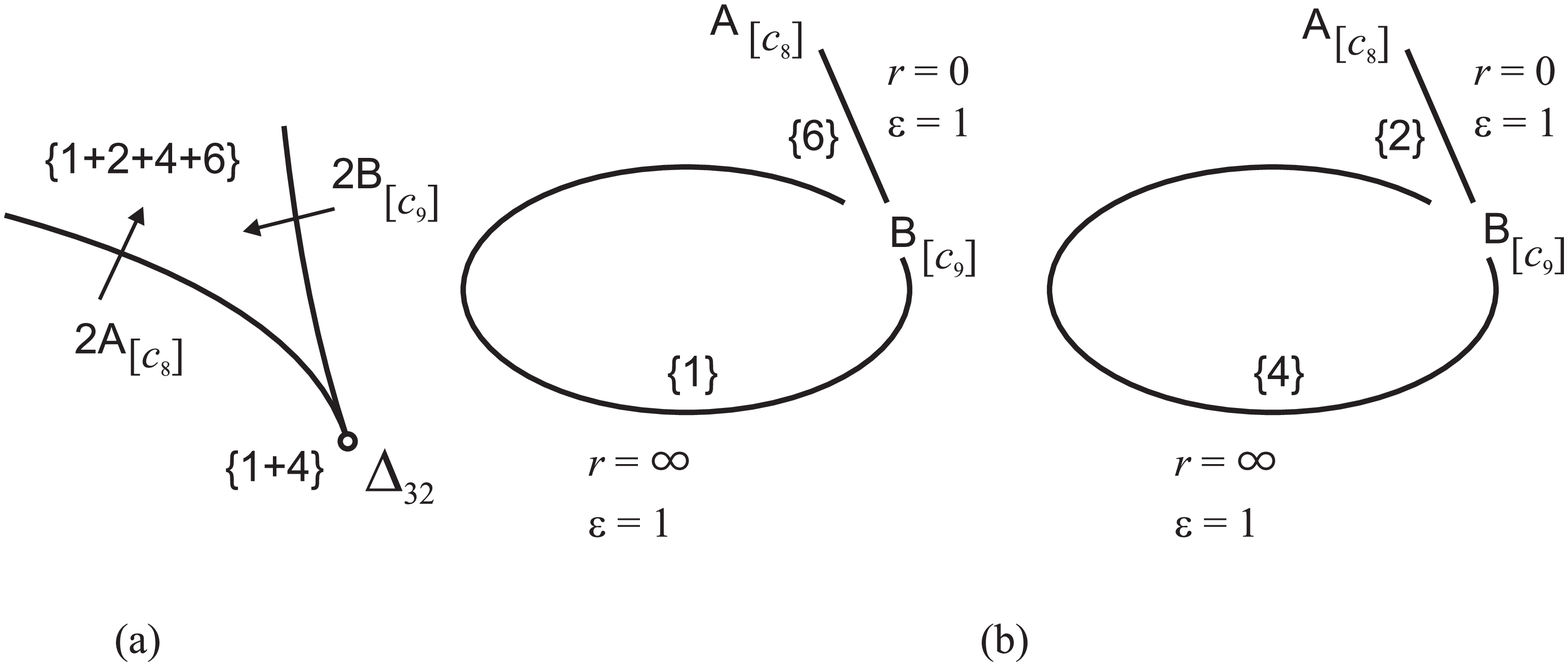}
\caption{Диаграмма и круговая молекула для $\Delta_{32}$.}\label{fig_delta32}
\end{figure}

\begin{figure}[ht]
\centering
\includegraphics[width=\pp\textwidth, keepaspectratio]{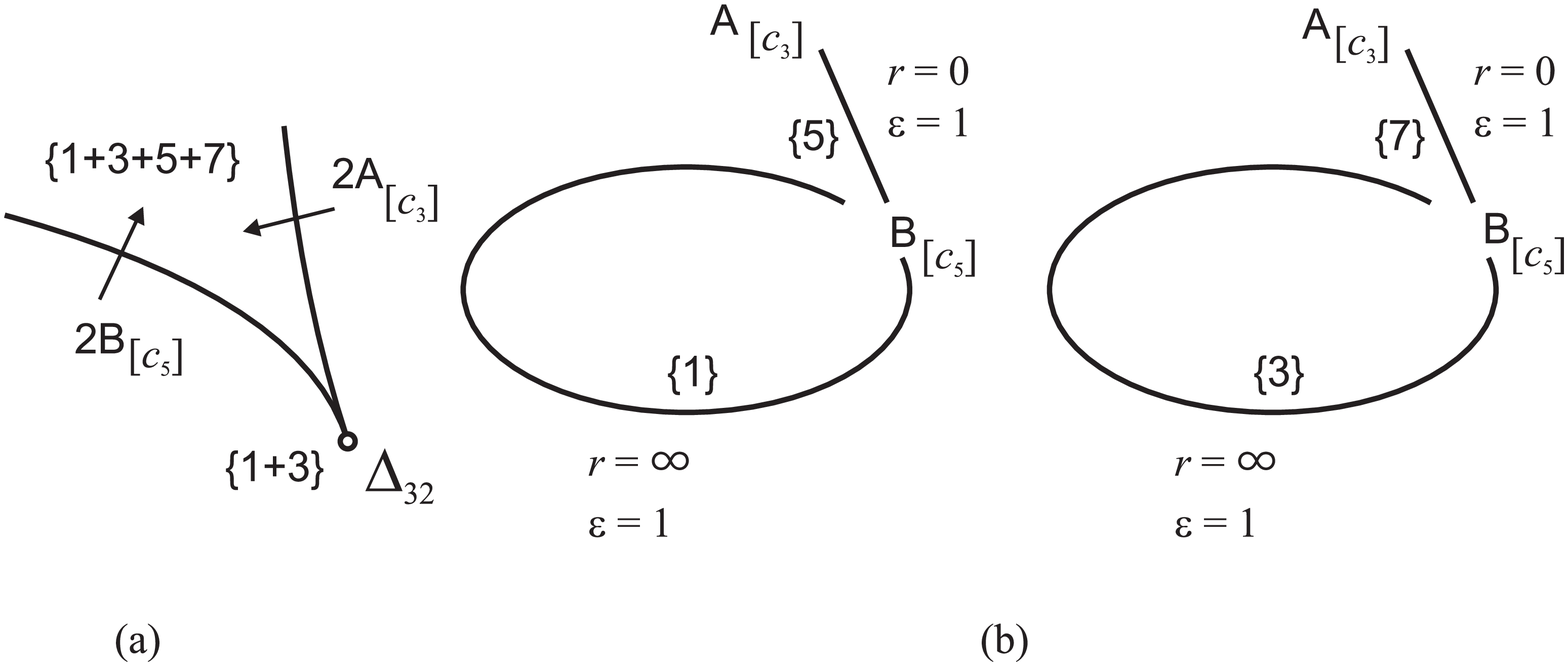}
\caption{Диаграмма и круговая молекула для $\Delta_{34}$.}\label{fig_delta34}
\end{figure}

Для точки $\Delta_{34}$ (диаграмма $\mSell$ для области $2'$) вопрос о количестве компонент и правилах склейки семейств решается на ребре $\ccc_5$, где имеется два атома $B$. Это ребро с одной стороны заканчивается точкой $\delta_{26}$ -- образом невырожденной критической точки ранга 0. В то же время, если рассмотреть диаграмму $\mSell$ для области $3'$, то видно, что ребро $\ccc_5$ также имеет выход к точке $\delta'_{27}$, образу другой невырожденной критической точки ранга 0, откуда, как показано ранее, при бифуркации семейств в двух атомах $B_{[\ccc_5]}$ тор из семейства $\tfs{1}$ перестраивается в два тора из семейств $\tfs{1}$, $\tfs{5}$, а тор из семейства $\tfs{3}$ перестраивается в два тора из семейств $\tfs{3}$, $\tfs{7}$. А поскольку на ребре $\ccc_3$ рождаются торы семейств $\tfs{5},\tfs{7}$ и эти торы приходят в разные атомы $B$ ребра $\ccc_5$, то молекула точки $\Delta_{34}$ имеет две связных компоненты. Отсюда же устанавливаются однозначно и правила склейки семейств. Имея доказательство наличия двух компонент, метки возьмем из работы \cite{Mor}. Результат показан на рис.~\ref{fig_delta34}.
\clearpage

\clearpage

\section{Изоэнергетические бифуркационные диаграммы}\label{sec8}
\subsection{Разделяющее множество}
В этом разделе излагается классификация бифуркационных диаграмм бифуркационных диаграмм $\mSash$ отображений $L{\times}K$, ограниченных на изоэнергетические уровни $Q_h^4 = H^{-1}(h)$ в полном фазовом пространстве $\mP^5$ \cite{mtt40}. Топология самих этих уровней устанавливается тривиально. Поскольку $H$ на $\mP^5$ есть функция Морса, имеющая ровно две критические точки
$$
\bo=0, \qquad \ba = (\pm 1,0,0),
$$
то
$$
Q_h^4 = \left\{\begin{array}{ll}
\varnothing, & h<-1\\
S^4, & -1 <h <1\\
S^2{\times}S^2, & h>1
\end{array} \right. .
$$

Разделяющее множество $\Theta_H$, или $H$-атлас, получим сразу же из предложения \ref{propos13}, учитывая результаты всех накопленных фактов о критических подсистемах: это множество есть образ на $(\ld,h)$-плоскости отмеченных выше особых точек \eqref{eq5_11}, \eqref{eq5_16} -- \eqref{eq5_21}, \eqref{eq5_23} -- \eqref{eq5_27}, \eqref{eq5_34} ключевых множеств критических подсистем (напомним, что точка $D_4$ введена при классификации диаграмм $\mSell$ и перестроек диаграмм на $Q_h^4$ не вызывает, зато дополнительно появляется влияющая точка $D_5$). Как и ранее, достаточно рассматривать $\ld \gs 0$, а иллюстрации удобно привести в плоскости $(h,\ld^2)$.

\begin{theorem}[М.П.\,Харламов, И.И.\,Харламова, Е.Г.\,Шведов]\label{theClassH}
Разделяющее множество $\Theta_H$ при классификации бифуркационных диаграмм $\mSash$ состоит из  следующих 13 кривых в плоскости $(\ld,h)$:
\begin{eqnarray}
& & \rz_0: h=-1, \quad \ld \gs 0; \nonumber \\
& & \rz_1: h=1,  \quad \ld \gs 0; \nonumber \\
& & \rz_2: h=\frac{3}{2}\sqrt{1+\ld^4}-\ld^2, \quad \ld \gs 0; \nonumber \\
& & \rz_3: h= \ds \frac{1}{8} \left(\sqrt{4+\ld^{4/3}}-\ld^{2/3}\right)^2\left(2\sqrt{4+\ld^{4/3}}+\ld^{2/3} \right), \quad \ld \gs 0;\nonumber \\
& & \rz_4: \ds \ld=\frac{3x^4-4}{2x^3}, \quad \ds h=\frac{3}{8}x^2+\frac{2}{x^6}, \quad x \in (\sqrt[4]{4/3},\sqrt{2}]; \nonumber \\
& & \rz_5: \ds{h=-\frac{\ld^2}{2}+\ld^{2/3}+\frac{1}{2\ld^{2/3}}}, \quad \ld \in (0,1]; \nonumber \\
& & \rz_6: \ld=\ds{\frac{1}{2}x+\frac{2}{x^3}}, \quad
    \ds{h=\frac{1}{8}x^2+\frac{2}{x^2}-\frac{2}{x^6}}, \quad x \in [\sqrt{2},\sqrt{2\sqrt{3}}];\nonumber \\
& & \rz_7: \ld=\ds{\frac{1}{2}x+\frac{2}{x^3}}, \quad
    \ds{h=\frac{1}{8}x^2+\frac{2}{x^2}-\frac{2}{x^6}}, \quad x \in [\sqrt{2\sqrt{3}},+\infty);\nonumber \\
& & \rz_8: \ld=\ds{\frac{3x^4-4}{2x^3}}, \quad \ds{h=\frac{3}{8}x^2+\frac{2}{x^6}}, \quad x \in [-\sqrt[4]{4/3},0); \nonumber \\
& & \rz_9: \ds{h=\frac{1+\ld^4}{2\ld^2}}, \quad \ld>0 ;\nonumber \\
& & \rz_{10}: \ds{h=\frac{\ld^2}{2}}, \quad \ld \gs 0 ;\nonumber \\
& & \rz_{11}: \ds{h=\sqrt{2}-\frac{\ld^2}{2}}, \quad \ld^2 \ls \frac{1}{\sqrt{2}}; \nonumber \\
& & \rz_{12}: h=\frac{\ld^{2/3}}{2}(3-\ld^{4/3}), \quad \ld^2 \in [0,\frac{8}{3\sqrt{3}}].\nonumber \end{eqnarray}
\end{theorem}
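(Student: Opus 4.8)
The plan is to derive the statement from Kharlamov's criterion (Proposition~\ref{propos13}) applied to the integral $F=H$: a pair $(\ld,h)$ belongs to $\Theta_H$ precisely when, for the given $\ld$, $h$ is a critical value of the restriction of $H$ to the critical set $\mK=\mK_1\cup\mK_2\cup\mK_3$. Hence the whole problem reduces to locating the critical points of $H|_{\mK}$ and to writing the corresponding values $h=h(\ld)$ explicitly.

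The set $\mK$, being the closure of the rank~$0$ critical points contained in the closures of the rank~$1$ subsystems $\mi_1,\mi_2,\mi_3$, is a one--dimensional complex assembled from the arcs $\delta_1,\delta_2,\delta_3$ of $\mct^0$ --- on which, by Proposition~\ref{propos2} and \eqref{eq4_8}--\eqref{eq4_11}, $H$ equals the explicit function $\varphi_\pm(r)$ of \eqref{eq5_5} --- together with the critical curves $\gan,\gaa,\gac,\ell_0$ of the subsystems $\mi_1,\mi_3$ occurring in the $(P,H)$-- and $(S,H)$--diagrams of Theorems~\ref{th9},~\ref{th10},~\ref{th13},~\ref{th13h}. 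All distinguished points of this complex --- endpoints of the $r$--intervals, points where two subsystems meet tangentially along $\mct^0$, points where an arc reaches $\{L=0\}$ or where $s=0$, points where $H$ or $\ell$ attains an extremum along one of the auxiliary curves --- have already been exhibited and parametrized as the points $A$, $B_1,\dots,B_7$, $C$, $D_1,\dots,D_5$ of formulas \eqref{eq5_11}, \eqref{eq5_16}--\eqref{eq5_27}, \eqref{eq5_34}.

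It remains to select those distinguished points at which $H|_{\mK}$ (and not merely $\ell|_{\mK}$) is critical and to compute their $h$--coordinates. The interior critical points on the arcs obey $d\varphi_\pm/dr=0$; using the factorization $dh/dr=\tfrac{1}{2(\ld-r)^2D}\,Q_1Q_2$ of \eqref{eq5_12} and squaring to eliminate $D$, the equations $Q_1=0$, $Q_2=0$ become $(r-\ld)^3(3r-\ld)=4$ and $[r(r-\ld)]^3=\ld^2$, which under the substitutions \eqref{eq4_33}, \eqref{eq5_22} give $B_4$ and $B_3$, i.e.\ the curves $\rz_4,\rz_3$. Substituting the auxiliary--variable parametrizations of the remaining points $A,B_1,B_2,B_5,B_6,B_7,C,D_1,D_2,D_3,D_5$ into the expressions for $h$ and simplifying (the rationalizing substitutions $U=\ld^{2/3}$, $X=x^4$ turn the double--radical formulas for $\rz_2,\rz_3,\rz_9$ into polynomial identities) one obtains exactly the 13 curves listed, each over the $\ld$--range on which its generating point exists --- for instance $\rz_5$ only for $\ld\in(0,1]$ since $B_5$ requires $\ld\le1$, and $\rz_6,\rz_7$ split at the fold $x=\sqrt{2\sqrt3}$ where $\ld(x)=\tfrac12 x+2/x^3$ is minimal. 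Finally $D_4$, being a critical point of the $\ell$--coordinate along $\gan$ but a regular point of $H|_{\mK}$, contributes to $\Theta_L$ and not to $\Theta_H$, so no fourteenth curve appears.

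The main obstacle is completeness: one must verify that the points enumerated exhaust all critical points of $H|_{\mK}$, with no degenerate configuration missed. This is carried out subsystem by subsystem --- on each arc of $\mct^0$ the whole zero set of $dh/dr$ and the limiting behaviour at the ends of the parameter interval are analyzed and compared with the already established structure of the $(P,H)$-- and $(S,H)$--diagrams, and on the curves $\gan,\gaa,\gac,\ell_0$ the extrema of the $H$--coordinate are located. The remaining difficulty is purely computational, namely reducing the implicitly defined curves $\rz_2,\rz_3,\rz_6,\rz_7,\rz_9$ to the closed forms in the statement via the rationalizing substitutions above.
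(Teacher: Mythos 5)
Your proposal is correct and follows essentially the same route as the paper: the text immediately preceding the theorem derives $\Theta_H$ from Proposition~\ref{propos13} applied to $F=H$ and identifies the 13 curves as the $(\ld,h)$-traces of the nodal points $A$, $B_1$--$B_7$, $C$, $D_1$--$D_3$, $D_5$ already computed in \eqref{eq5_11}, \eqref{eq5_16}--\eqref{eq5_21}, \eqref{eq5_23}--\eqref{eq5_26}, \eqref{eq5_34}, with $D_4$ discarded for exactly the reason you give. Your extra details (the factorization $dh/dr\propto Q_1Q_2$ producing $B_4$, $B_3$, and the fold of $\ld(x)$ at $x=\sqrt{2\sqrt{3}}$ separating $\rz_6$ from $\rz_7$) are consistent with the paper's formulas \eqref{eq5_12}--\eqref{eq5_14} and \eqref{eq5_21}.
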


В этой формулировке разделяющие кривые записаны в последовательности порождающих точек $A$, $B_1 -B_7$, $C$, $D_1 - D_3$, $D_5$. Нетрудно видеть, что $\rz_4,\rz_8$ объединяются в одну гладкую кривую, а $\rz_6,\rz_7$ -- в одну кривую с точкой возврата.

На рис.~\ref{fig_atlas_all_xg} показаны разделяющие кривые с их номерами в квадратиках. В целом они разбивают полуплоскость $\ld \gs 0$ на 34 области, из которых одна, а именно та, где $h<-1$, отвечает пустым изоэнергетическим многообразиям $Q_h^4$. Остальные области помечены номерами 1 -- 33. В силу достаточно сложной структуры, на рис.~\ref{fig_fragm1numreg} представлены фрагменты допустимого множества с полной нумерацией областей. Для удобства это также сделано на плоскости $(h,\ld^2)$.

\begin{figure}[!ht]
\centering
\includegraphics[width=0.7\textwidth, keepaspectratio]{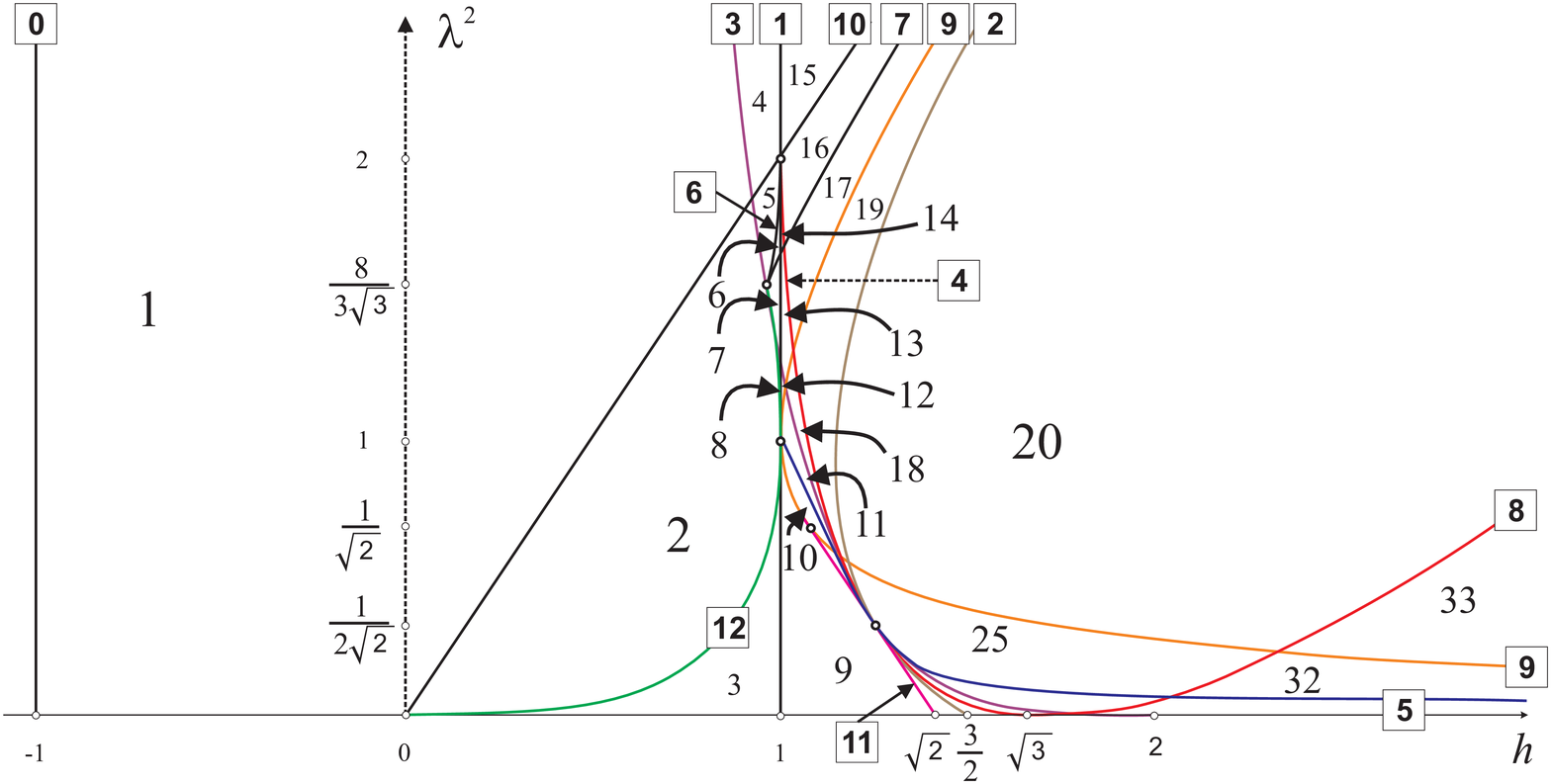}
\caption{Разделяющее множество $\Theta_H$ и порожденные области.}\label{fig_atlas_all_xg}
\end{figure}

\begin{figure}[!ht]
\centering
\includegraphics[width=0.45\textwidth, keepaspectratio]{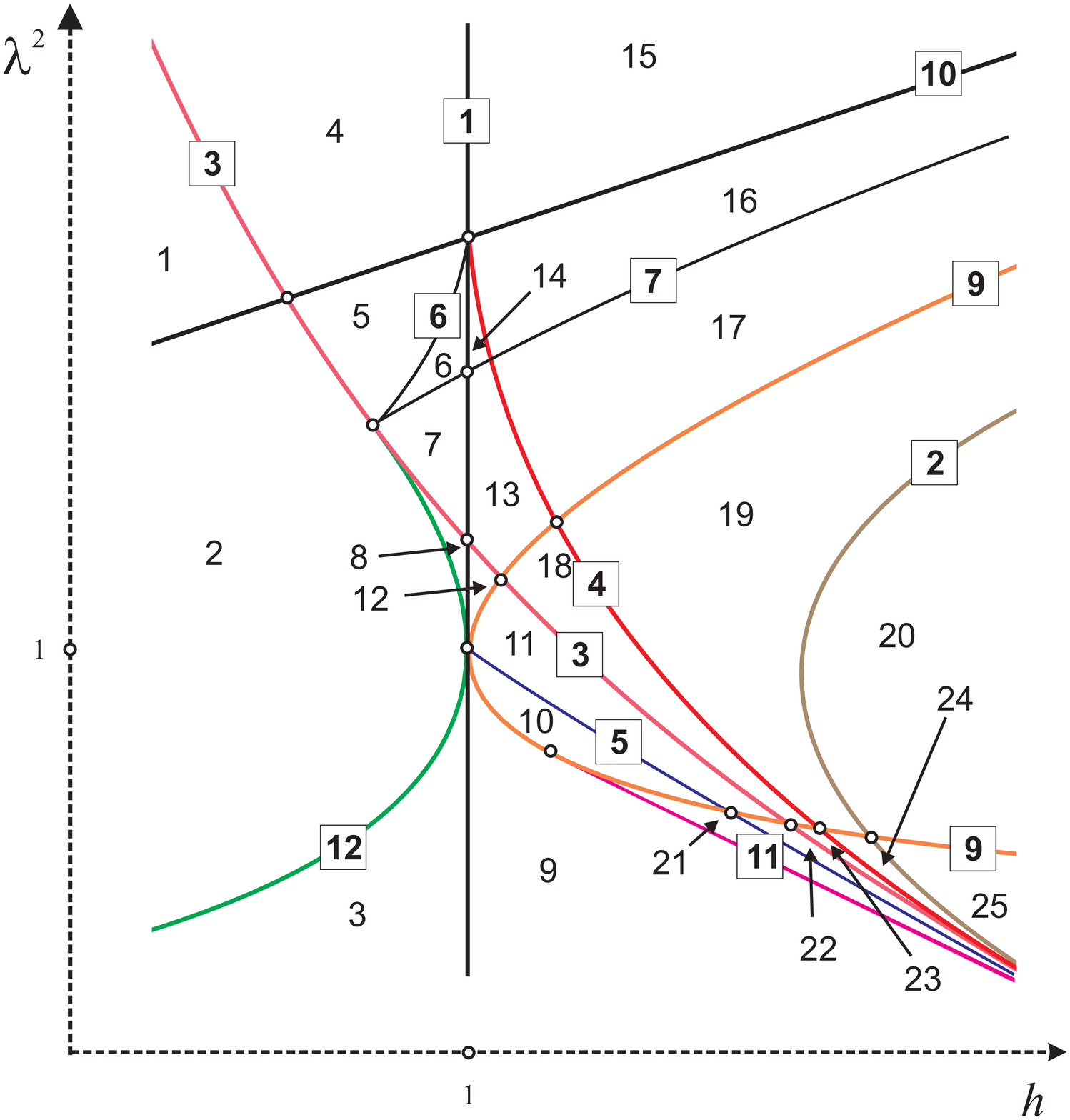}\ \includegraphics[width=0.45\textwidth, keepaspectratio]{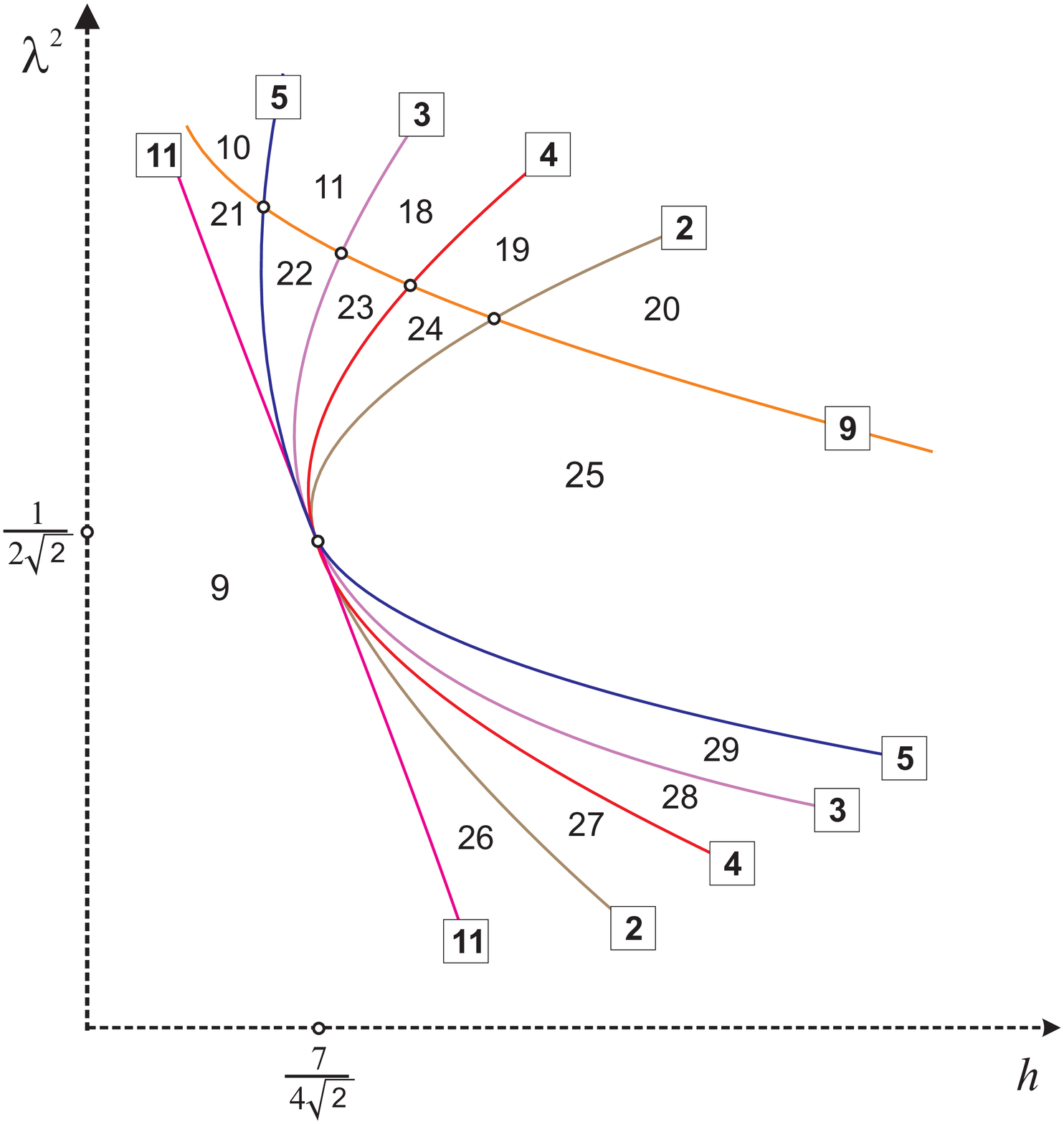}\\
\includegraphics[width=0.45\textwidth, keepaspectratio]{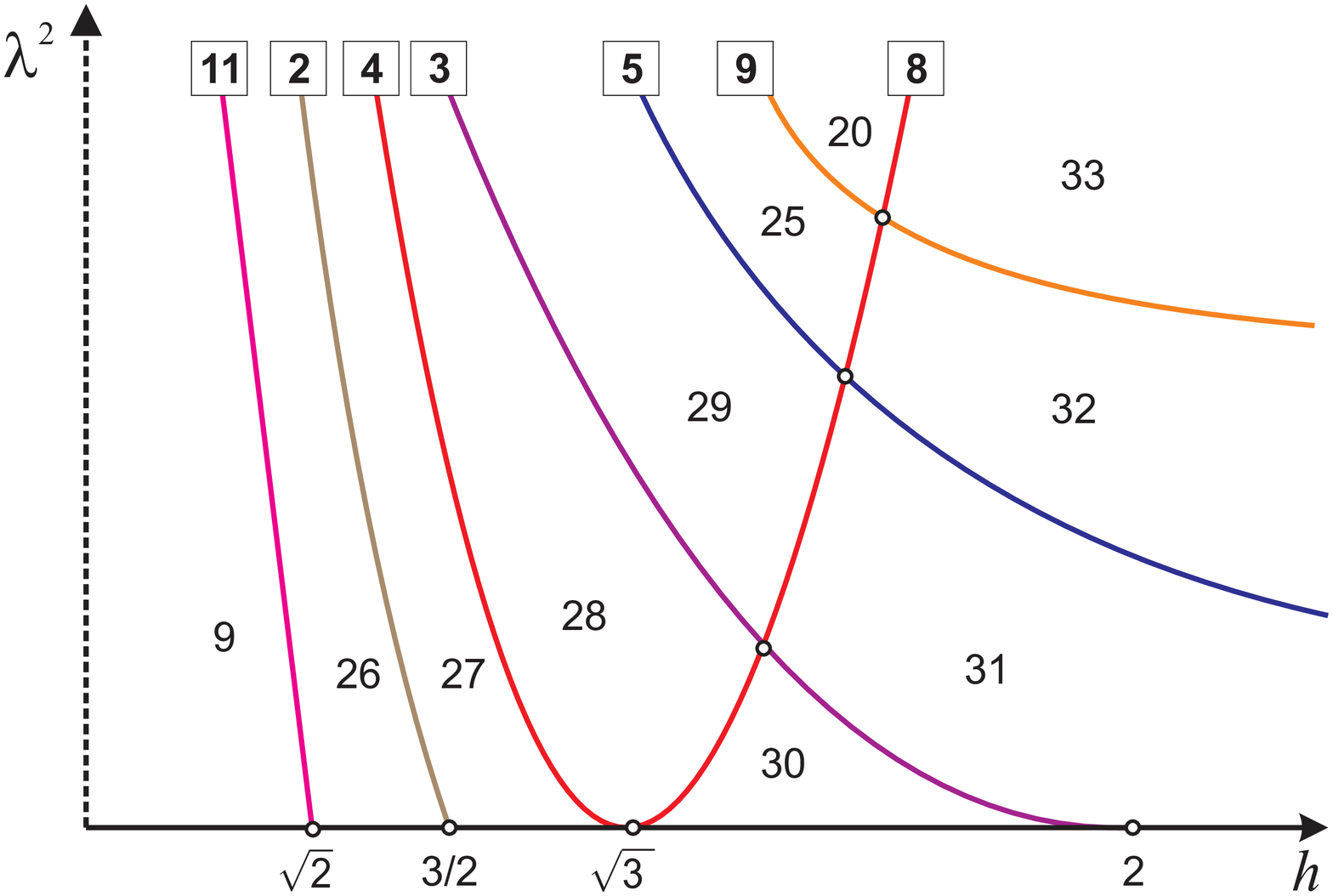}
\caption{Фрагменты множества $\Theta_H$ и порожденные области.}\label{fig_fragm1numreg}
\end{figure}

\subsection{Изоэнергетические диаграммы}
На последующих рисунках \ref{fig_reg01} -- \ref{fig_reg31} показаны оснащенные изоэнергетические диаграммы. Они симметричны относительно оси $\ell=0$, поэтому есть возможность одновременно нанести все обозначения гладких сегментов (для которых ранее установлены бифуркации и правила склейки семейств) в правой половине и явно указать камеры и бифуркации в левой половине. Это дает полное представление об изоэнергетическом двумерном инварианте ограничения системы на соответствующее $Q_h^4$, а также оказывается очень удобным для построения всех изоэнергетических одномерных инвариантов (графов Фоменко) приведенных систем на $\mPel$. Хорошо известна проблема визуализации ``сверхмалых'' областей при компьютерном построении бифуркационных диаграмм. Имея в виду использование диаграмм $\mSash$ для построения графов Фоменко (а с этой целью необходимо правильно увидеть все различные по структуре сечения допустимой области вертикальными прямыми $\ell=\cons$), мы постарались на рисунках сохранить, по возможности, такие сечения, то есть пользоваться искажениями диаграмм ``послойными'' диффеоморфизмами. Конечно, это оказалось не всегда возможно без потери наглядности. В таких случаях требование наглядности выдвигалось на первый план.
Основное отступление от послойных (сохраняющих сечения $\ell=\cons$) диффеоморфизмов области диаграммы можно наблюдать в иллюстрациях для областей $20, 25, 27 - 33$. Оно состоит в том, что ребро $\ccc_4$ в действительности является продолжением ребра $\ccc_9$ и потому должно быть ``короче'', чем $\aaa_5$. В результате $\ell$-координата точек $\Delta_{02}$ по абсолютной величине меньше, чем $\ell$-координата точек $\Delta_{03}$, что не так на рисунках. Кроме того, может нарушаться соотношение $\ell$-координат для пар ``хвостов'' $(\Delta_{31},\Delta_{11})$, $(\Delta_{31},\Delta_{14})$ и т.п. При этом, конечно, сама оснащенная изоэнергетическая диаграмма как грубый топологический инвариант остается верной.

Для точного построения и анализа всех сечений диаграммы $\Sigma$ фиксированными $h,\ell$ были разработаны пакеты в системе Wolfram Mathematica \cite{SaKhSh2012}, которые дают возможность в режиме диалога с достаточной точностью позиционировать точку $(h,\ld)$ относительно разделяющего множества $\Theta_H$, выбирать нужное значение $\ell$, одновременно наблюдать все соответствующие явления в ключевых множествах критических подсистем, и, в соответствии с приведенными выше формулами, для всех особых точек на сечении $\ell=\cons$ диаграммы $\mSash$ выводить значения показателей Морса\,--\,Ботта, определяющих бифуркацию по информации, собранной выше в таблицы. Аналогичный пакет для диаграмм $\mSell$ описан в работе \cite{RyabHarlUdgu2}. Конечно, результаты подобного компьютерного моделирования могут считаться лишь вспомогательными, дающими некоторый наглядный ``конструктор топологических инвариантов''. Предварительно все результаты по классификации графов Фоменко должны быть строго обоснованы аналитически. Это выполнено в следующем разделе.

\def\wid{0.6}

\begin{figure}[!htp]
\centering
\includegraphics[width=\wid\textwidth, keepaspectratio]{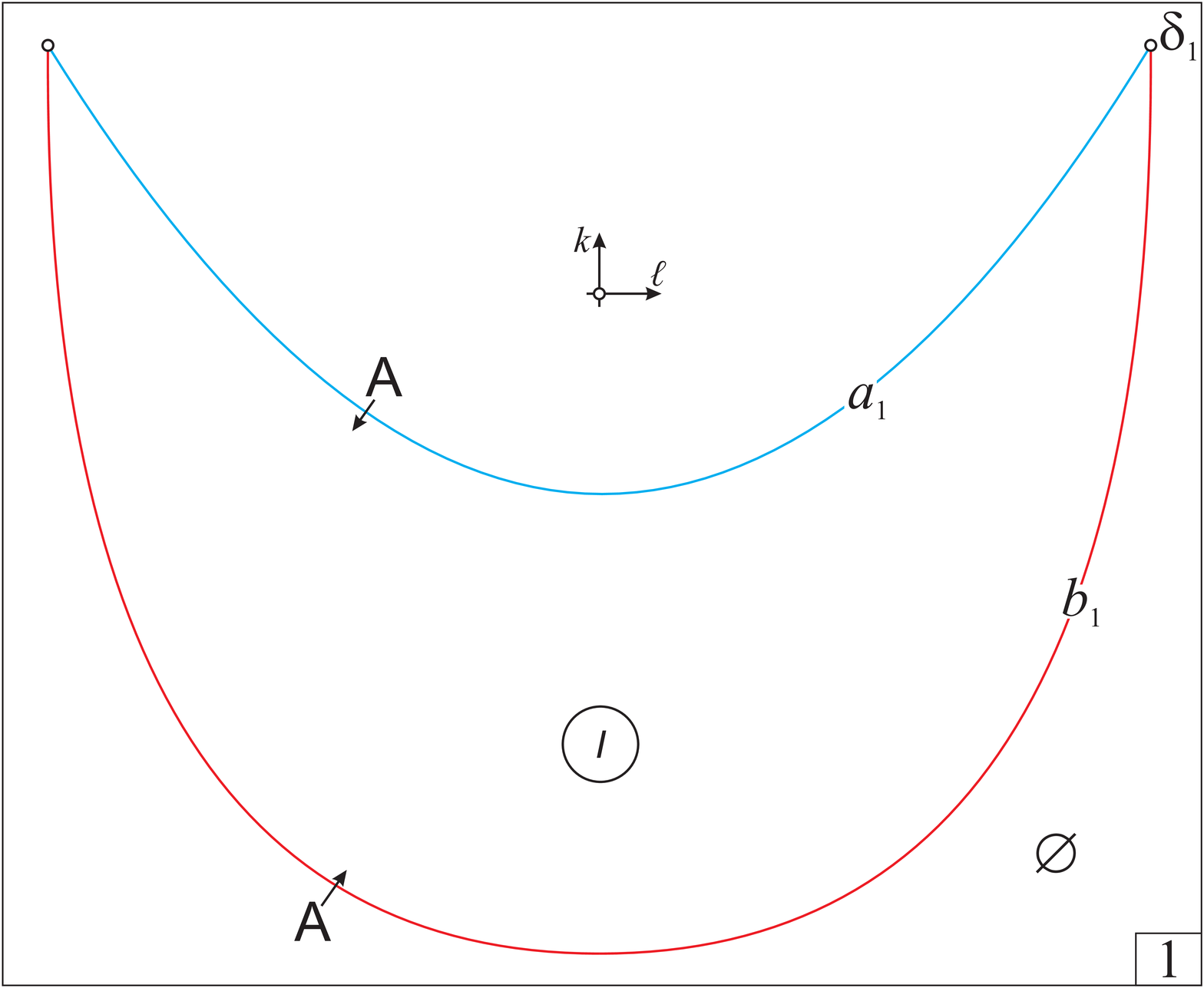}\\ \includegraphics[width=\wid\textwidth, keepaspectratio]{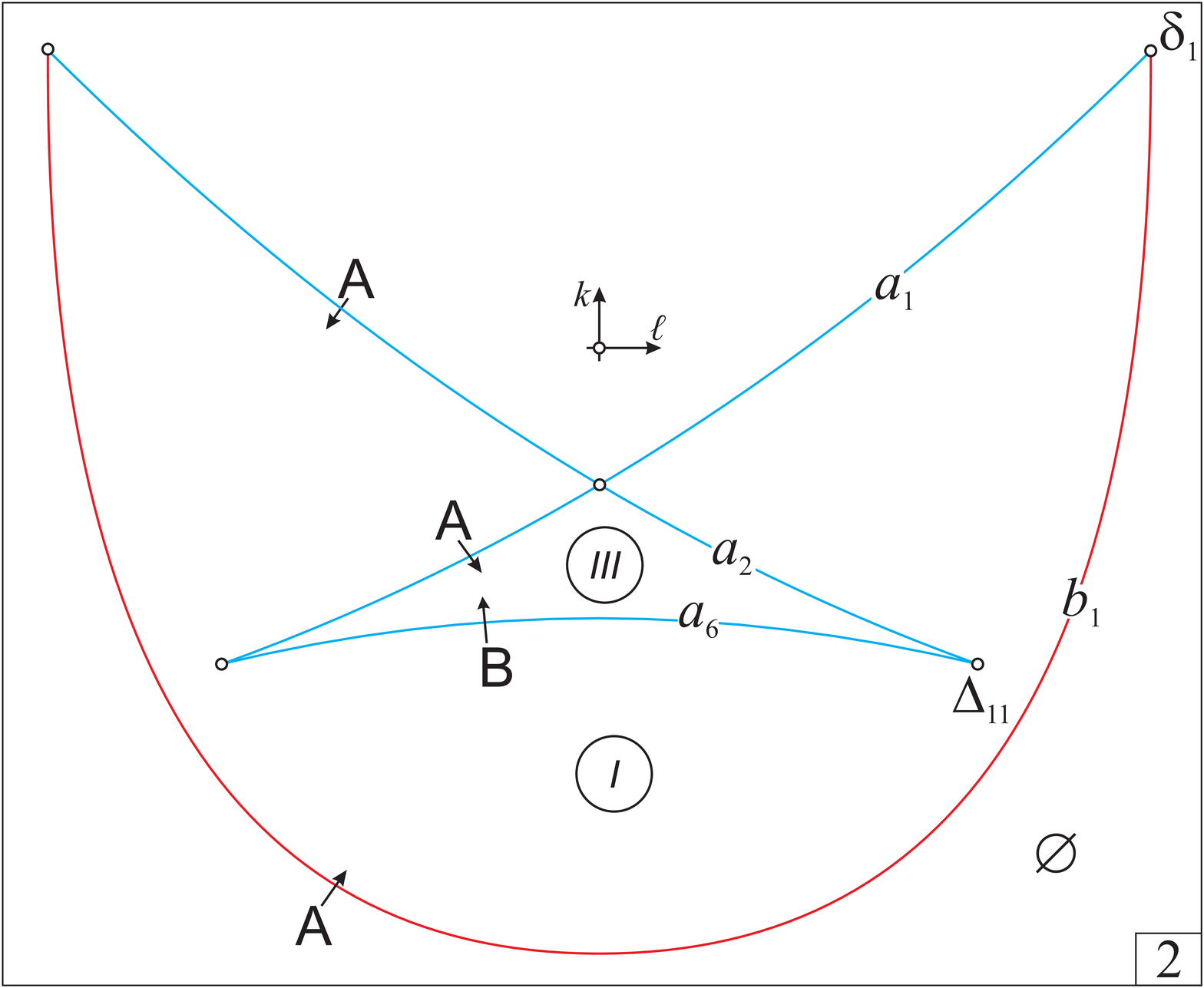}\\
\includegraphics[width=\wid\textwidth, keepaspectratio]{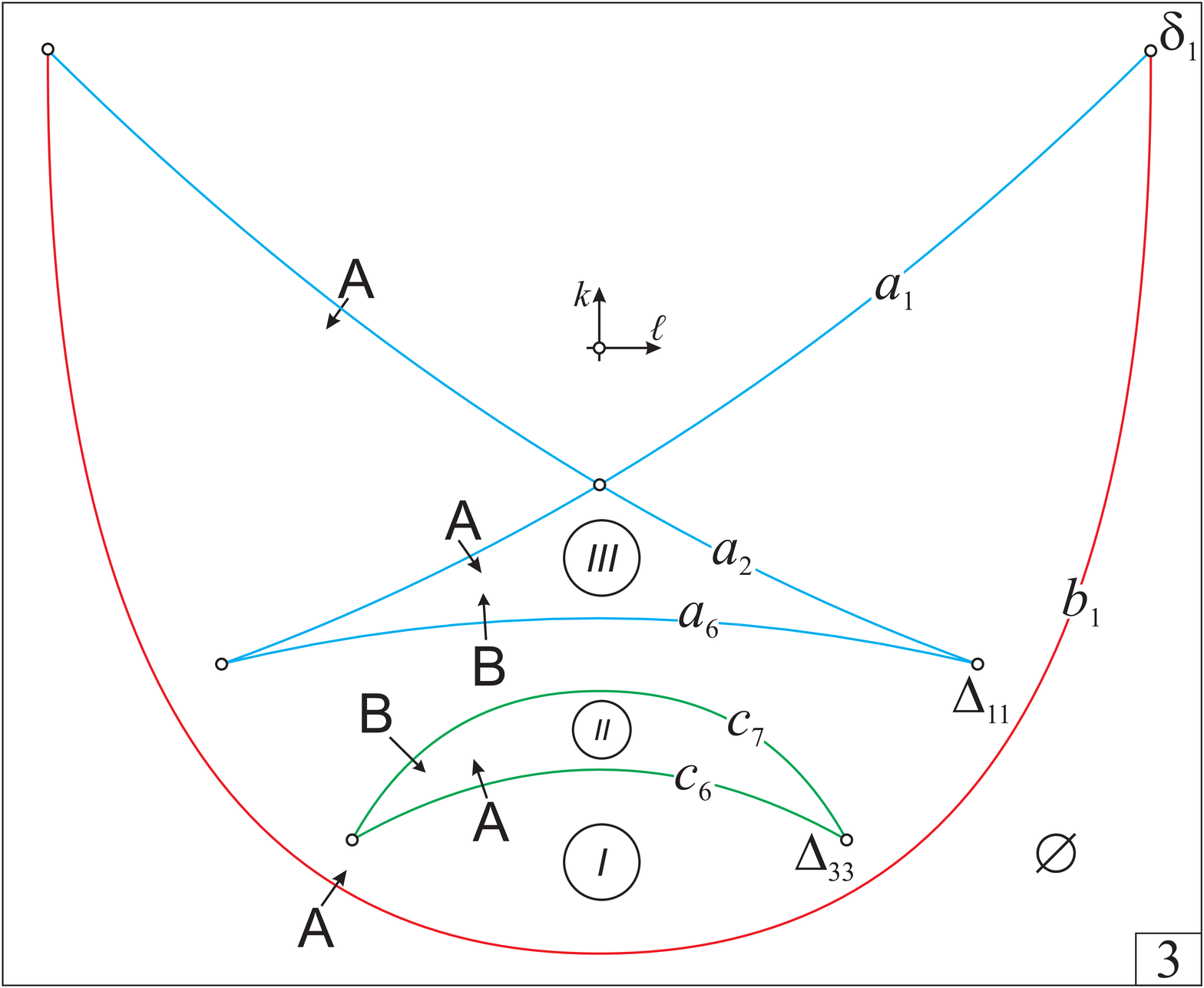}
\caption{Оснащенные диаграммы $\mSash$.}\label{fig_reg01}
\end{figure}

\begin{figure}[!htp]
\centering
\includegraphics[width=\wid\textwidth, keepaspectratio]{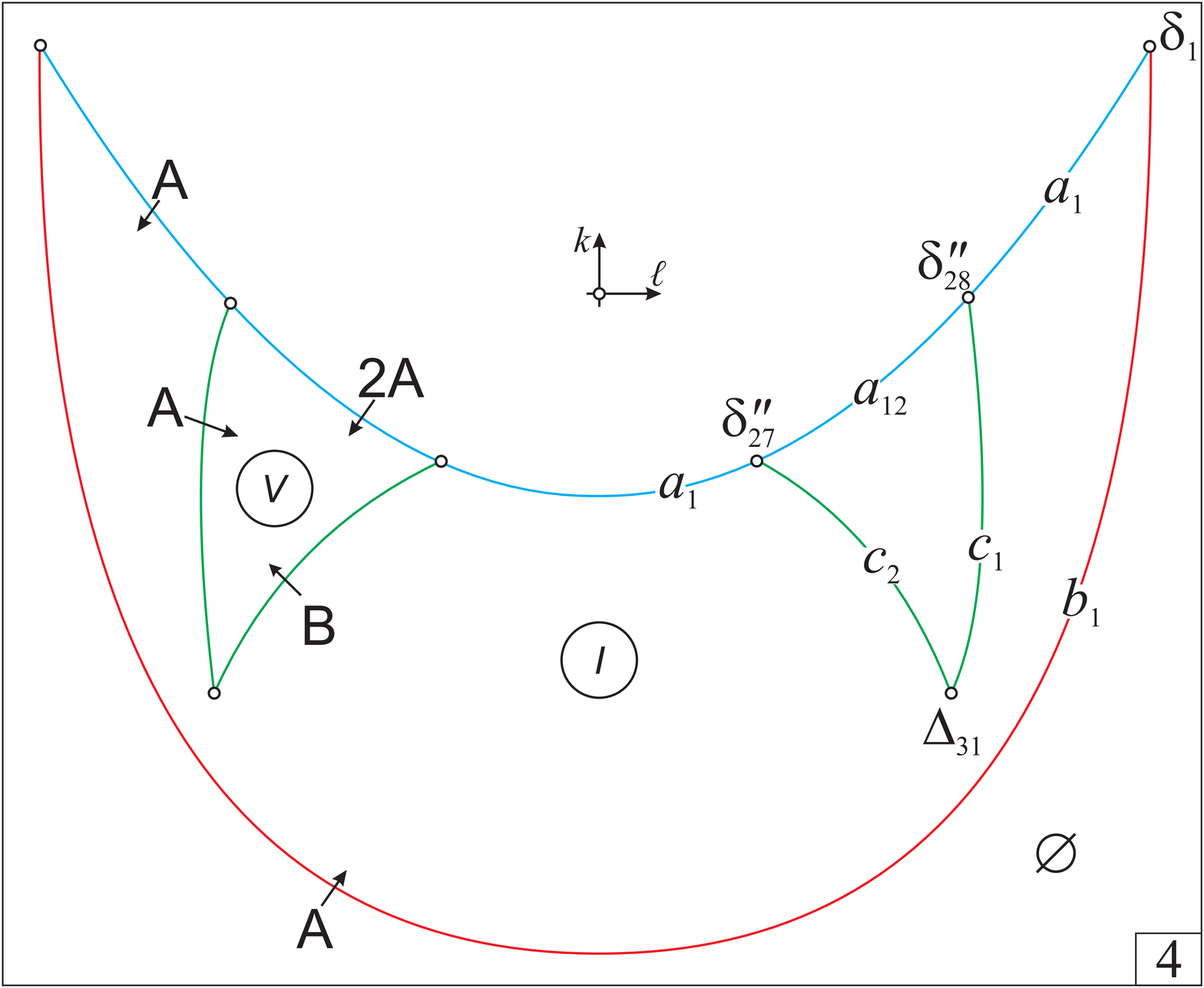}\\ \includegraphics[width=\wid\textwidth, keepaspectratio]{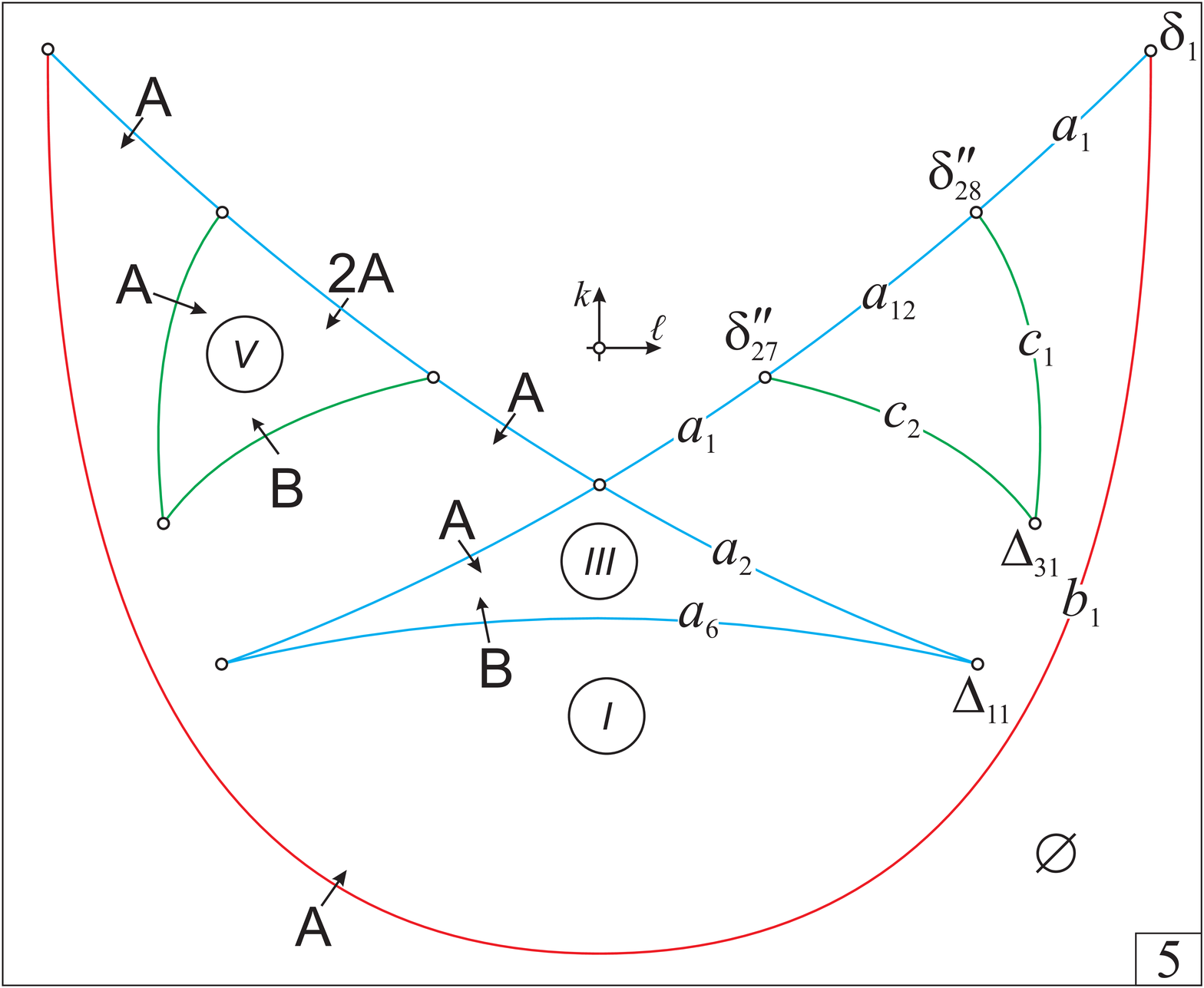}\\
\includegraphics[width=\wid\textwidth, keepaspectratio]{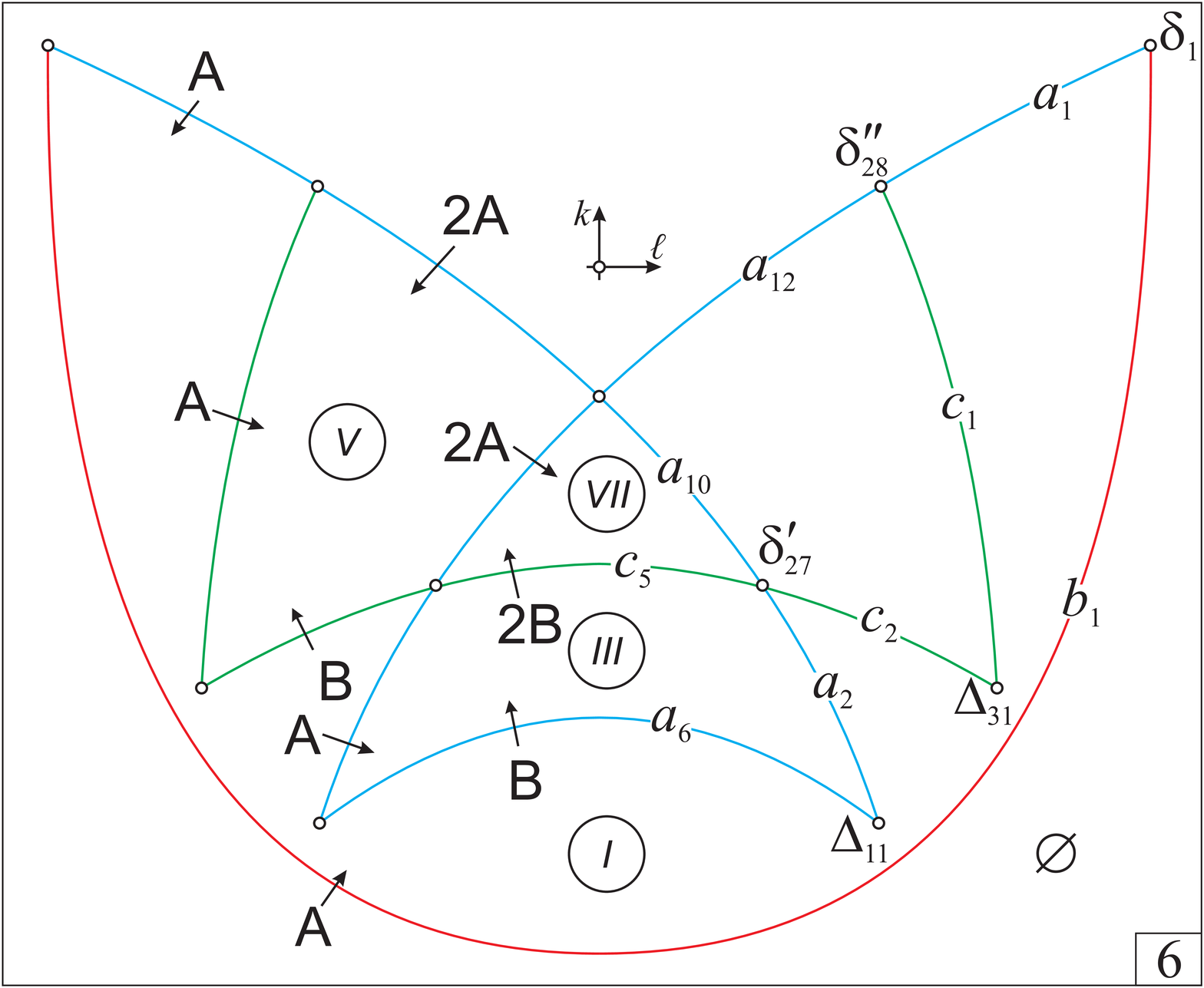}
\caption{Оснащенные диаграммы $\mSash$ (продолжение).}\label{fig_reg04}
\end{figure}

\begin{figure}[!htp]
\centering
\includegraphics[width=\wid\textwidth, keepaspectratio]{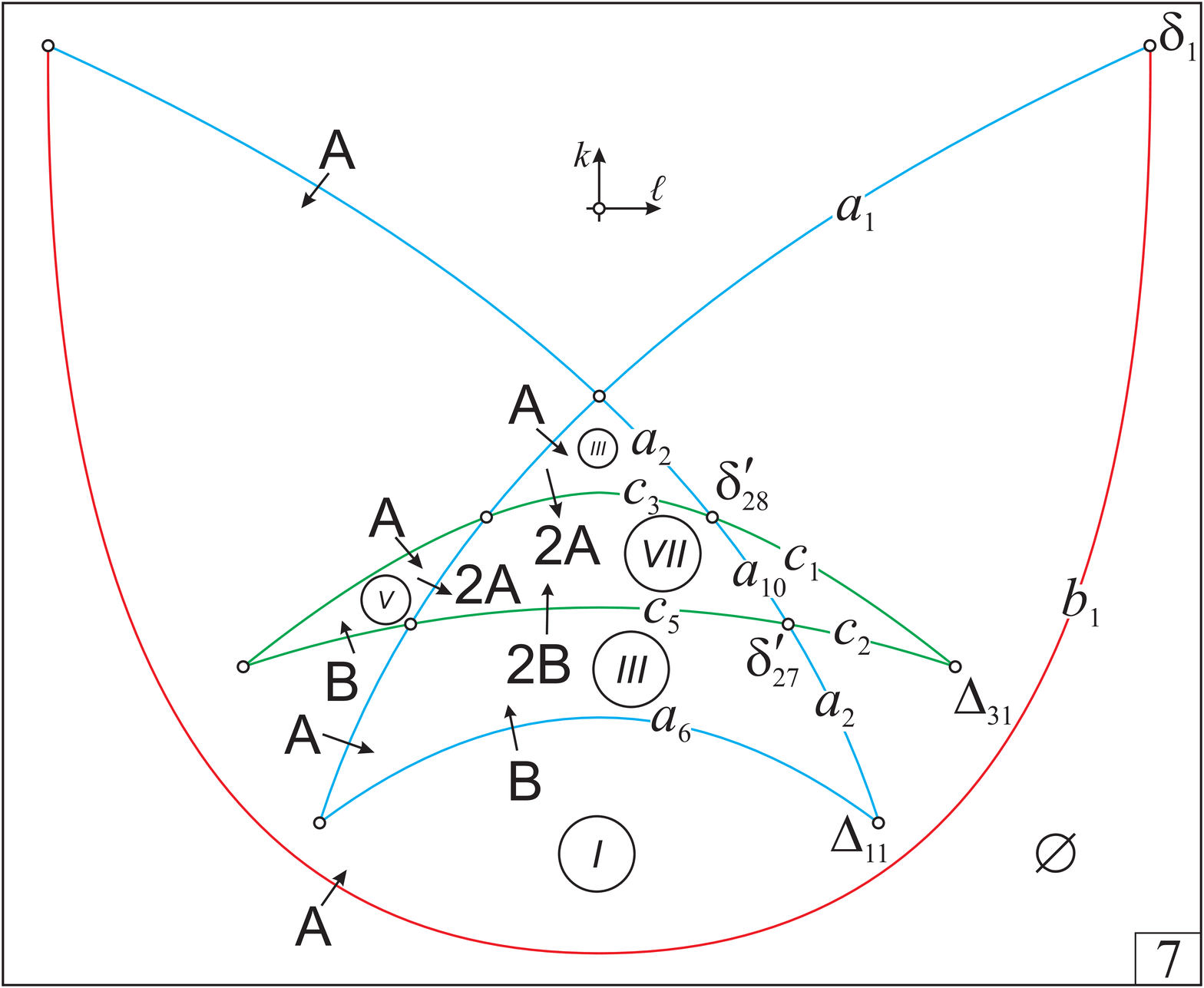}\\ \includegraphics[width=\wid\textwidth, keepaspectratio]{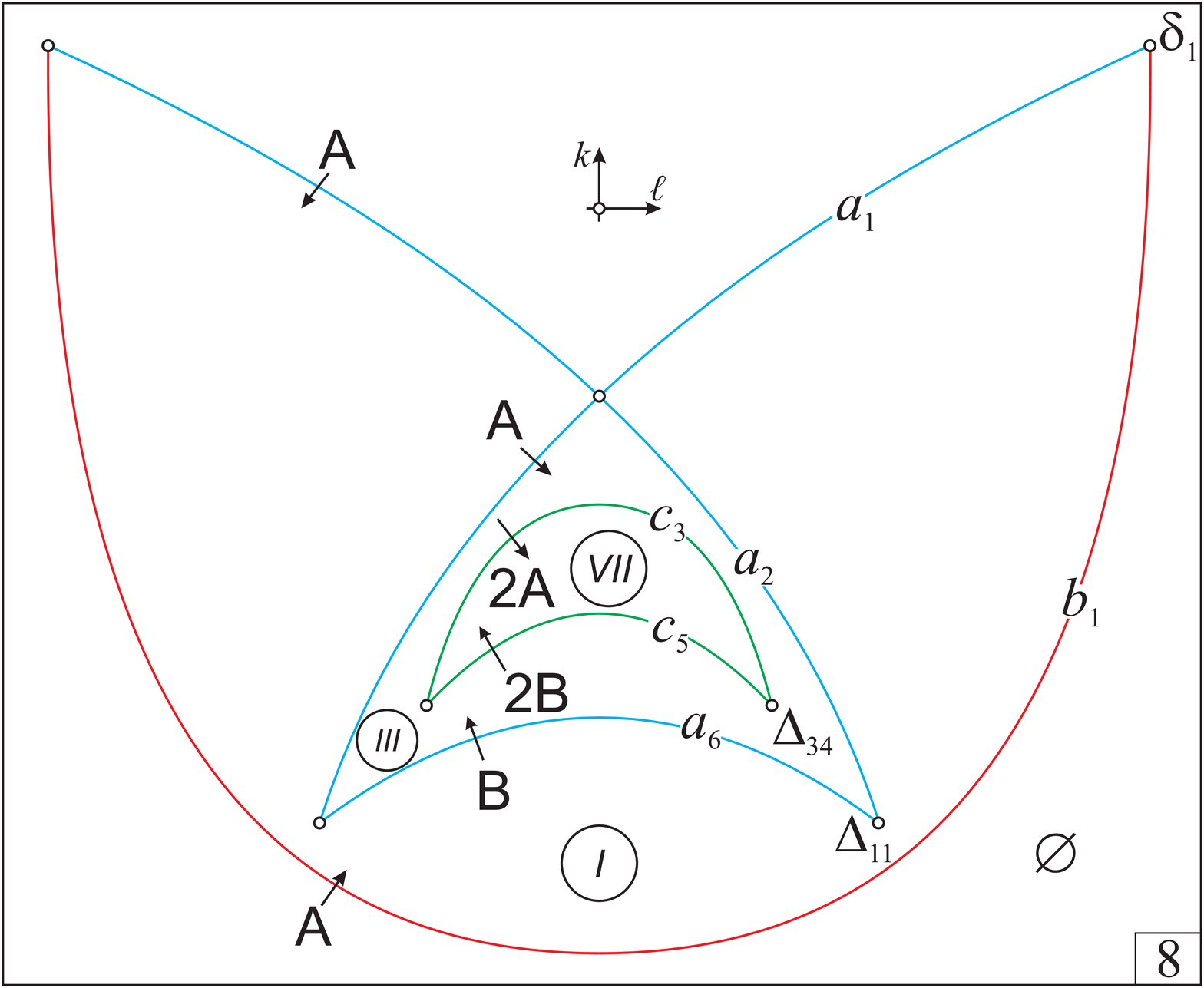}\\
\includegraphics[width=\wid\textwidth, keepaspectratio]{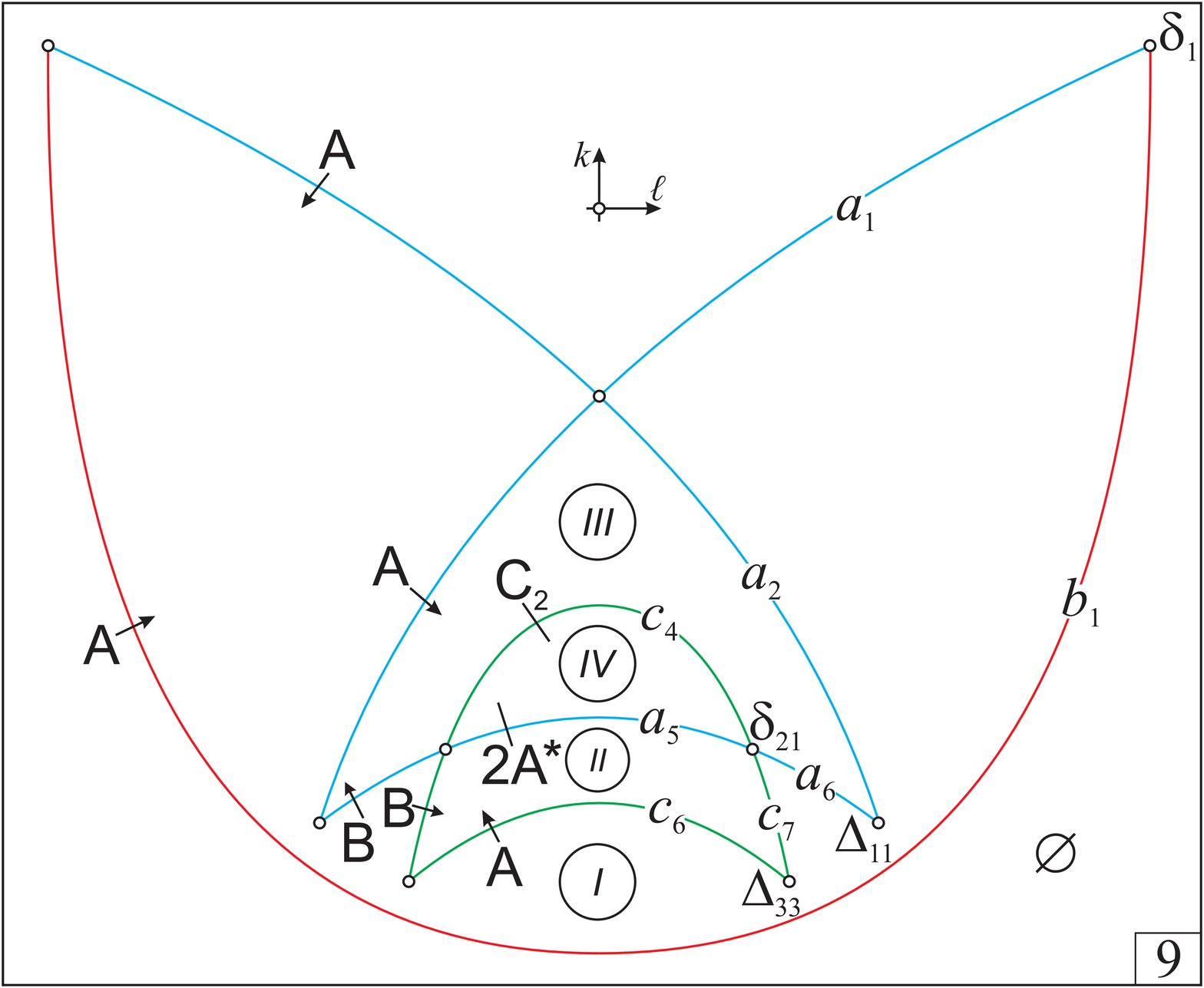}
\caption{Оснащенные диаграммы $\mSash$ (продолжение).}\label{fig_reg07}
\end{figure}

\begin{figure}[!htp]
\centering
\includegraphics[width=\wid\textwidth, keepaspectratio]{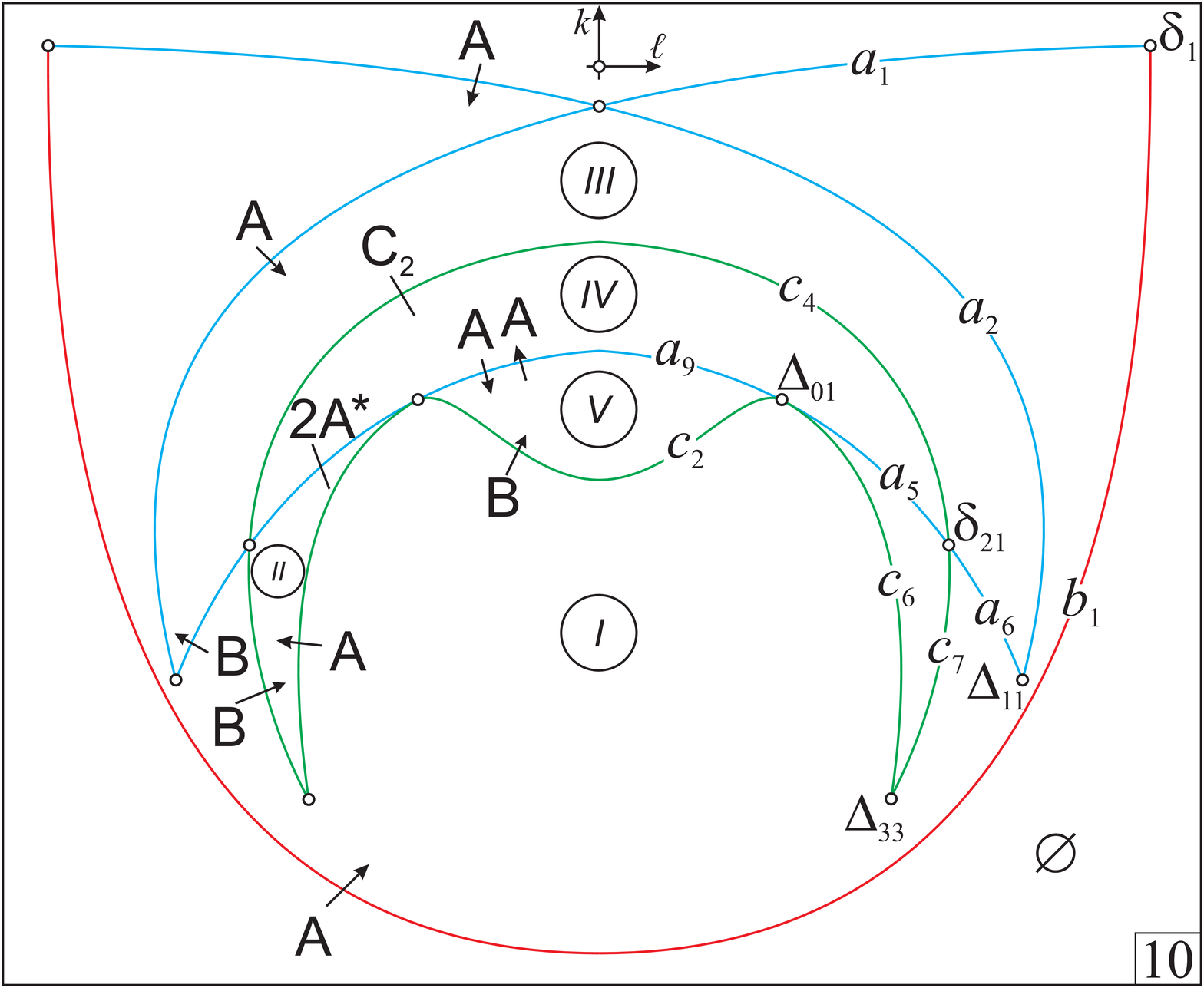}\\ \includegraphics[width=\wid\textwidth, keepaspectratio]{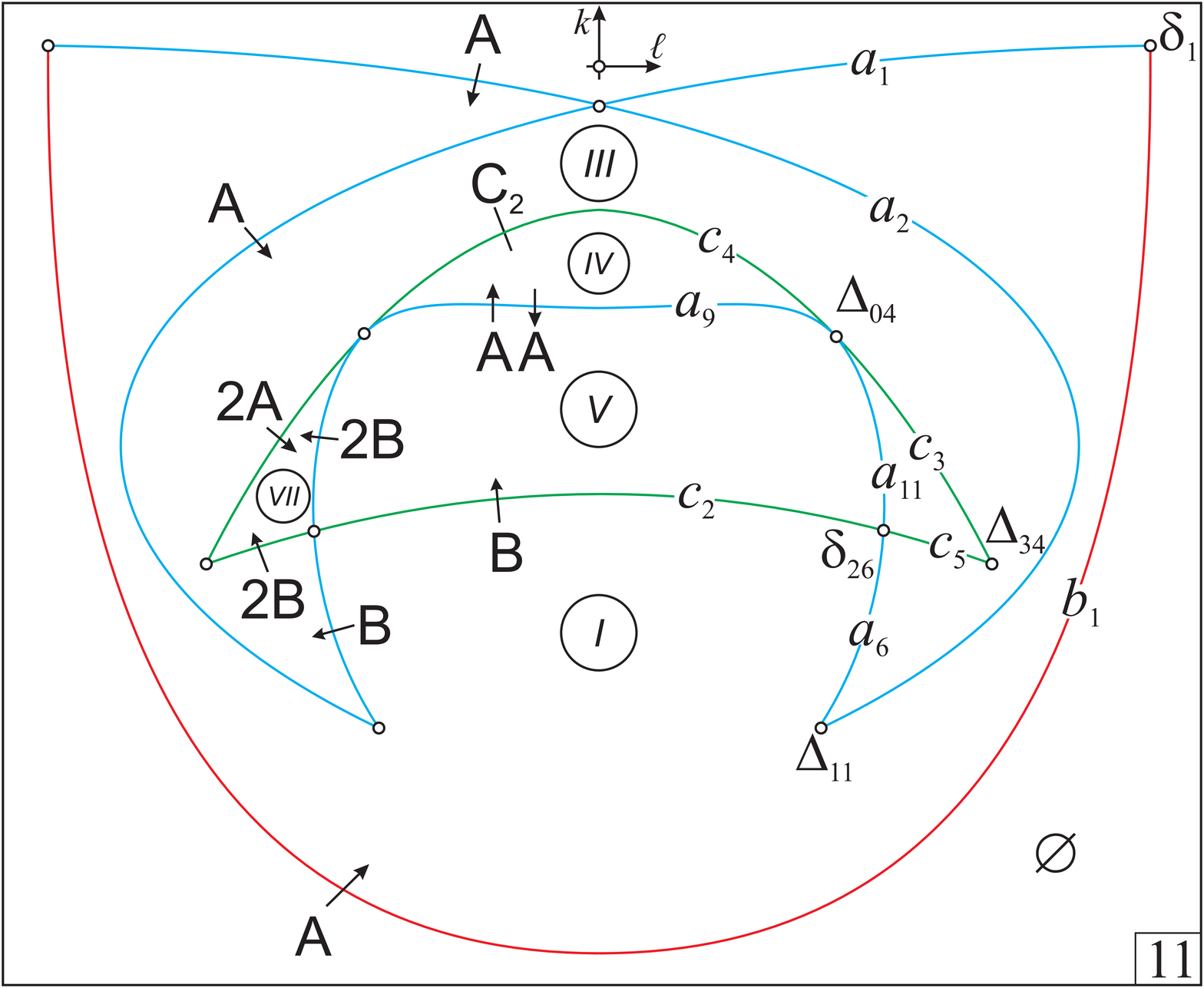}\\
\includegraphics[width=\wid\textwidth, keepaspectratio]{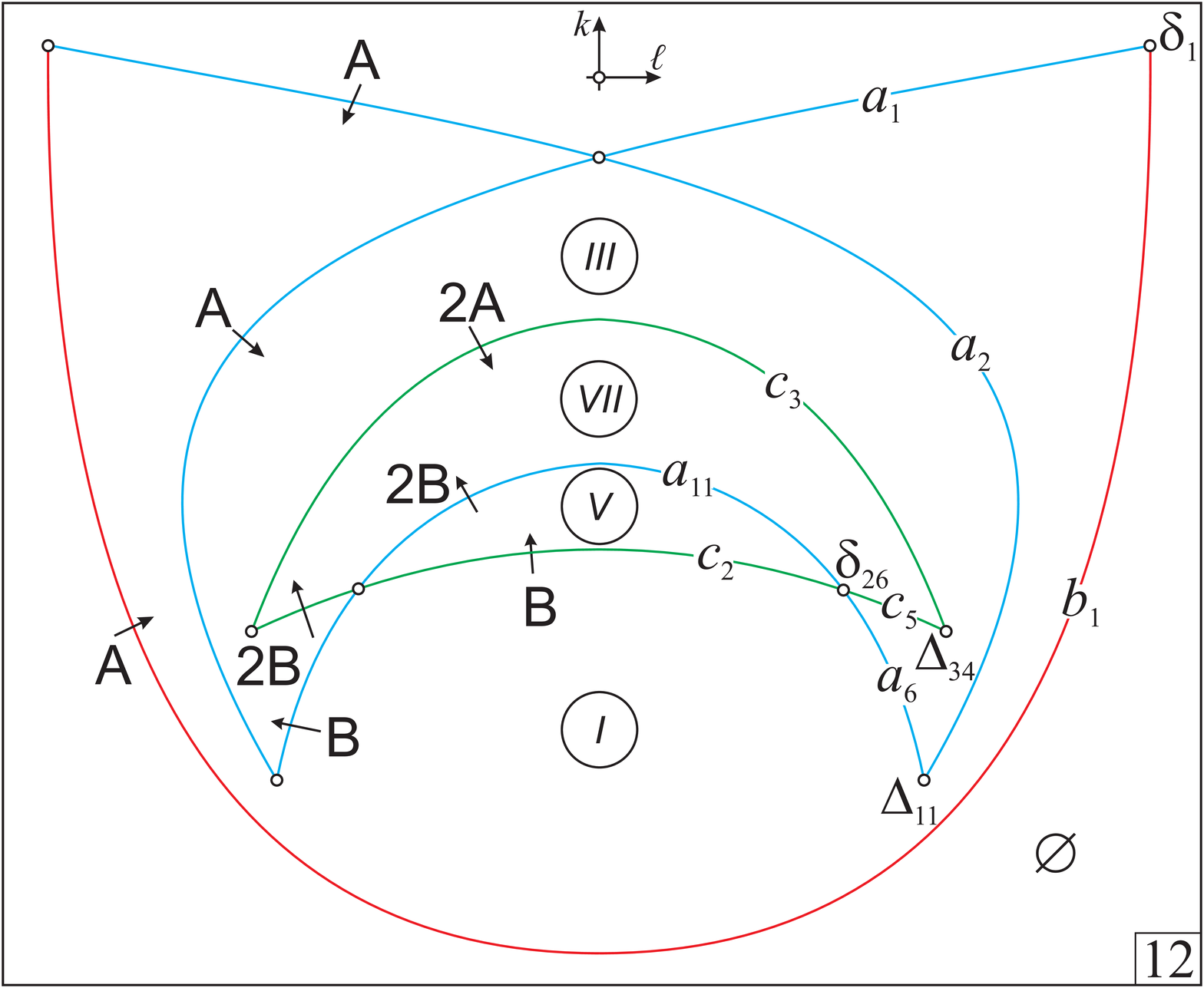}
\caption{Оснащенные диаграммы $\mSash$ (продолжение).}\label{fig_reg10}
\end{figure}

\begin{figure}[!htp]
\centering
\includegraphics[width=\wid\textwidth, keepaspectratio]{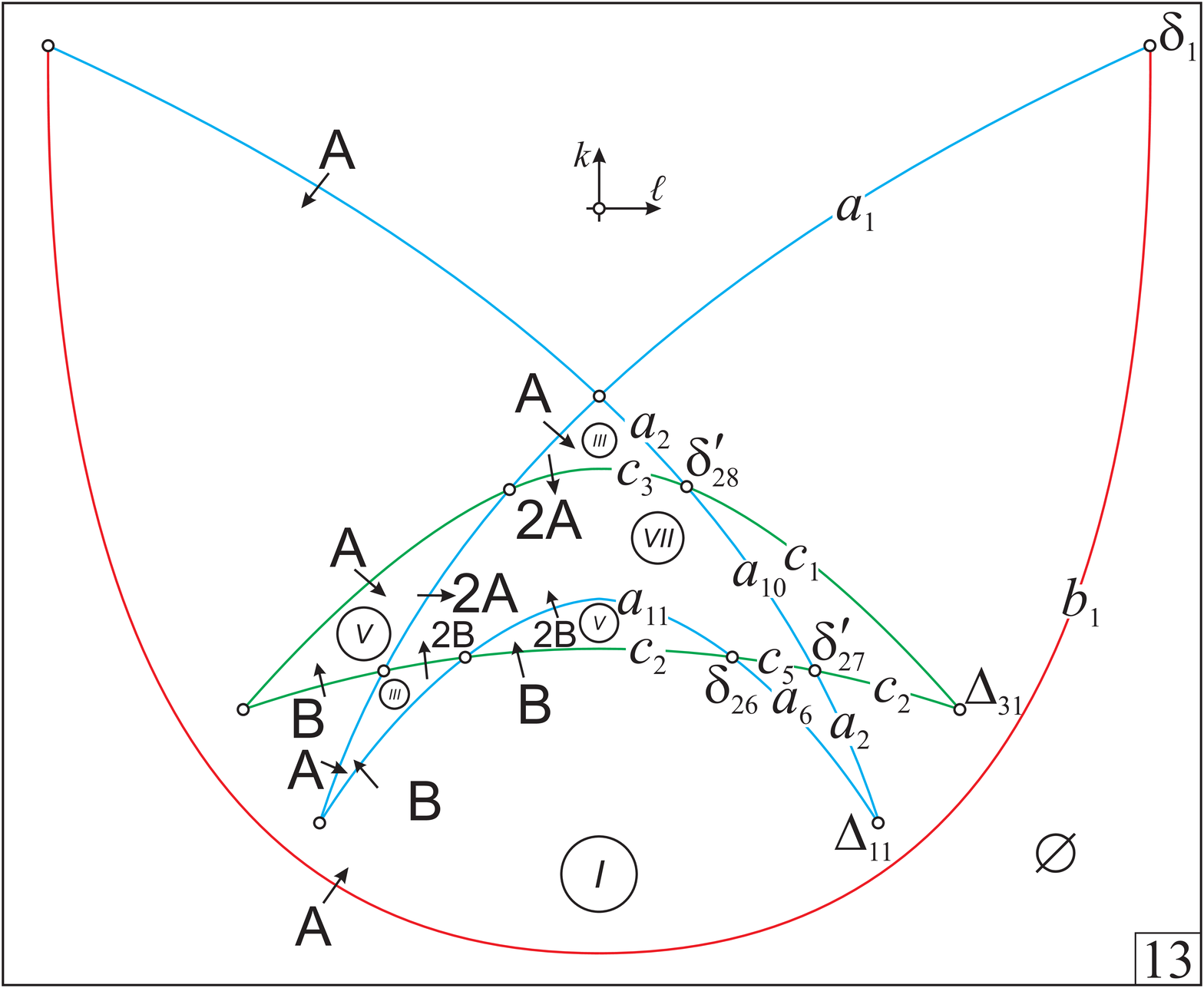}\\ \includegraphics[width=\wid\textwidth, keepaspectratio]{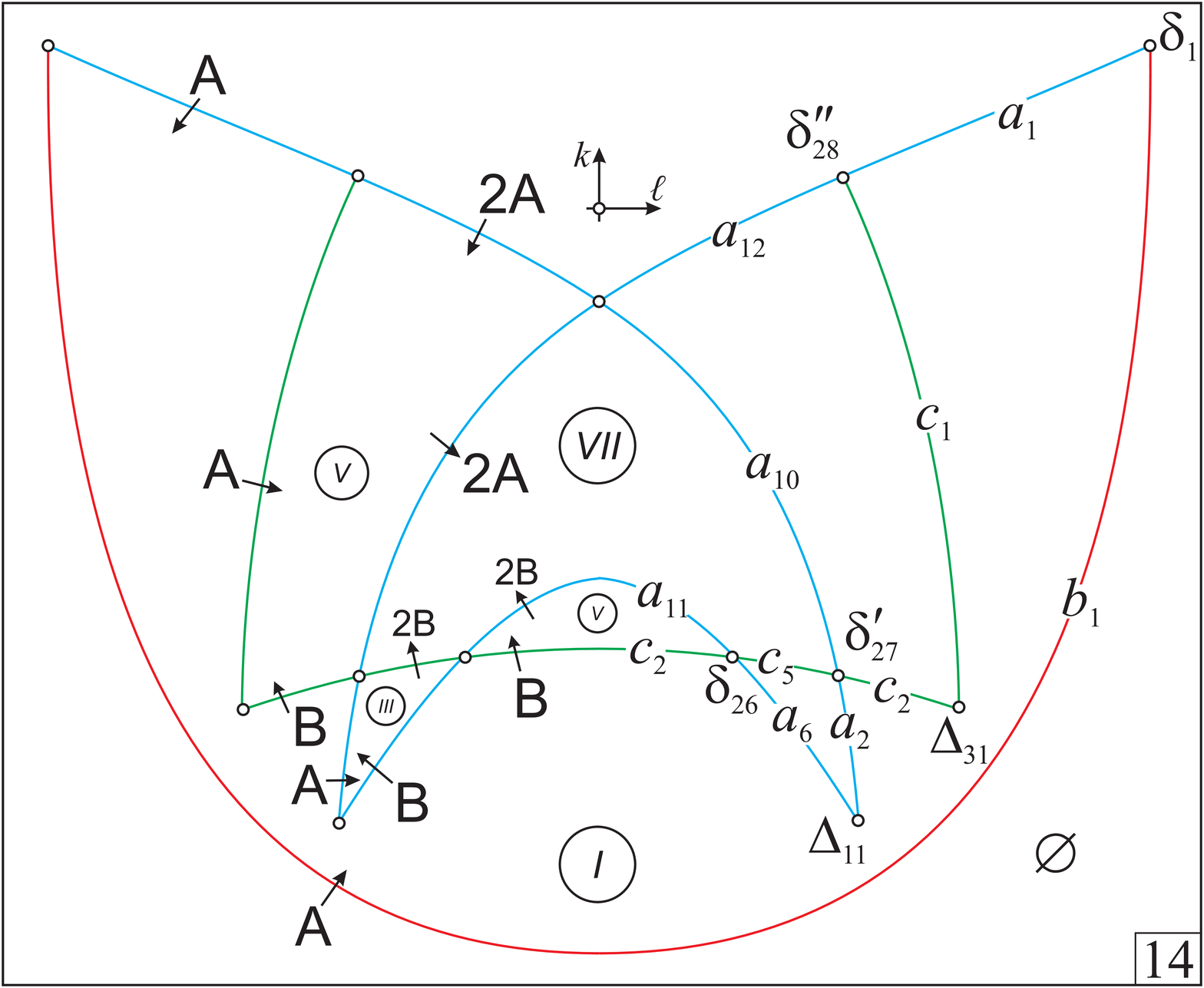}\\
\includegraphics[width=\wid\textwidth, keepaspectratio]{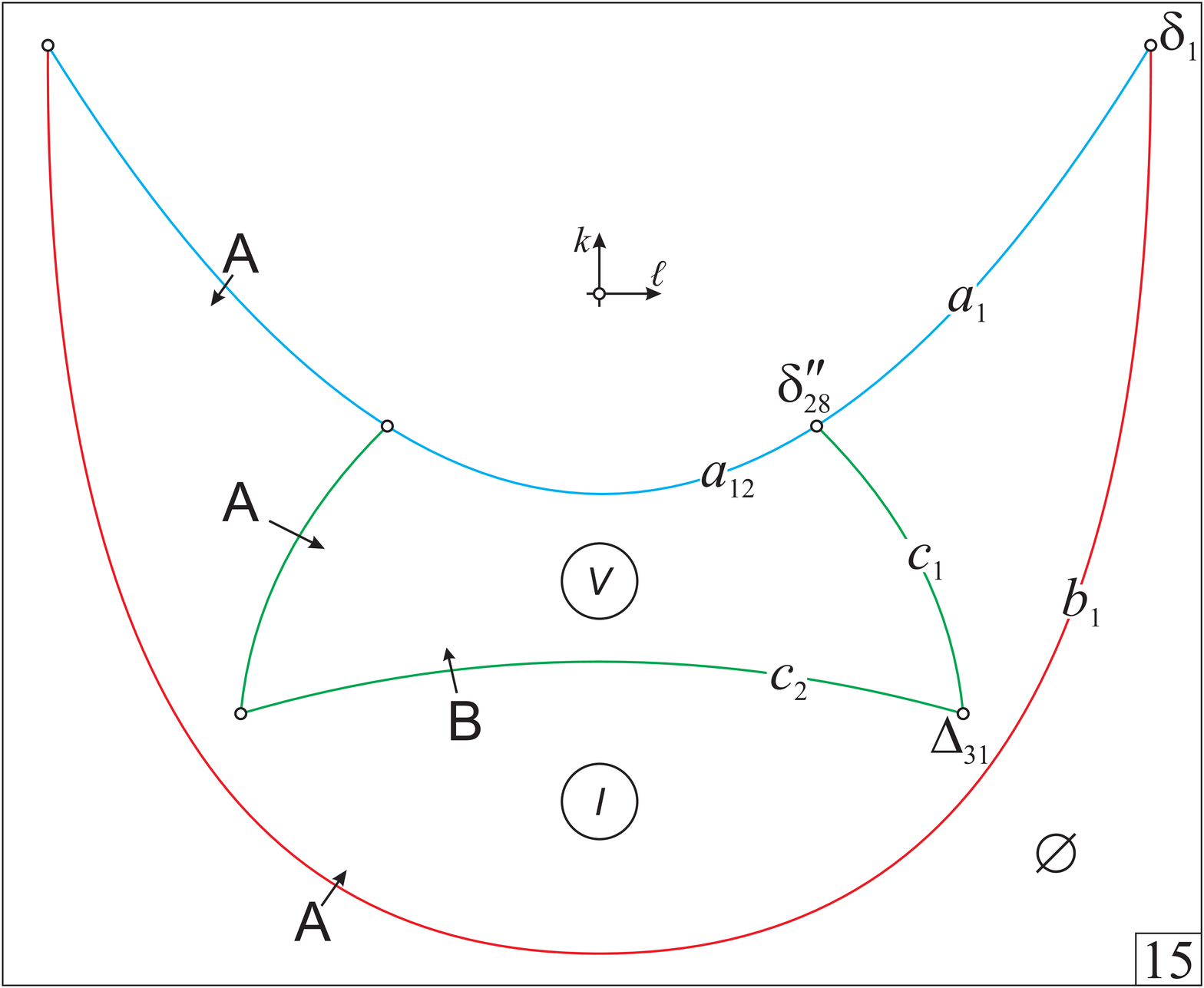}
\caption{Оснащенные диаграммы $\mSash$ (продолжение).}\label{fig_reg13}
\end{figure}

\begin{figure}[!htp]
\centering
\includegraphics[width=\wid\textwidth, keepaspectratio]{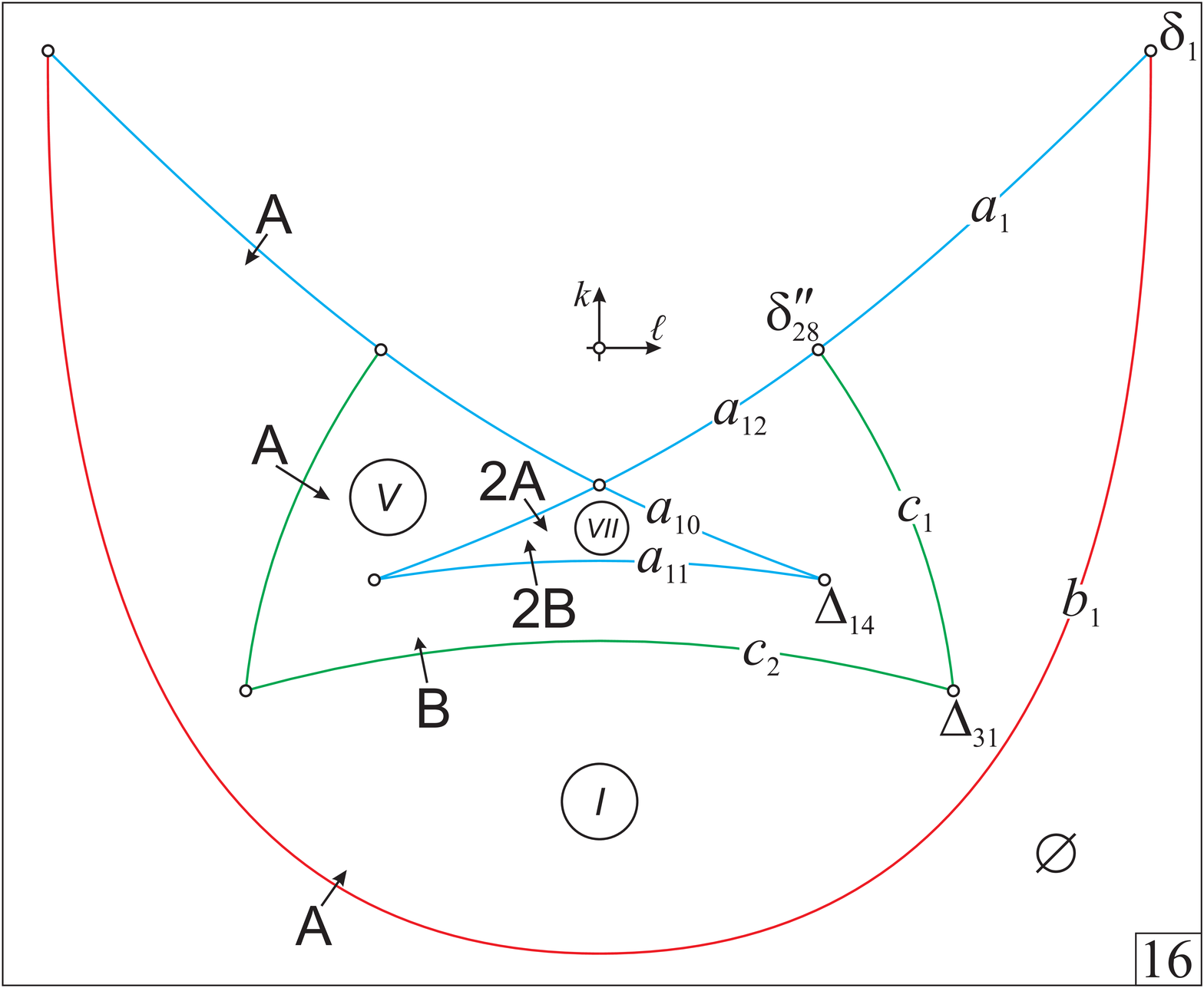}\\ \includegraphics[width=\wid\textwidth, keepaspectratio]{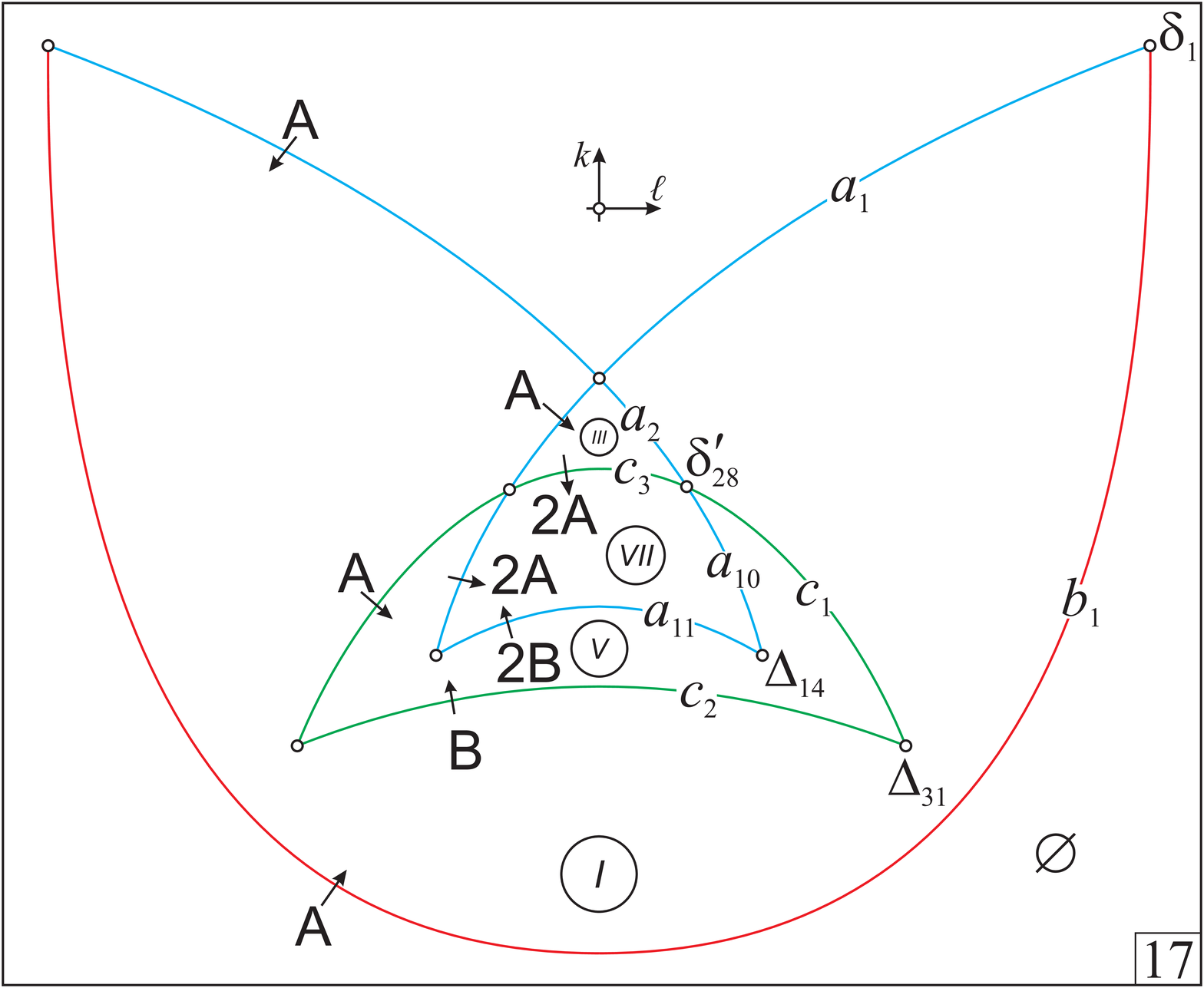}\\
\includegraphics[width=\wid\textwidth, keepaspectratio]{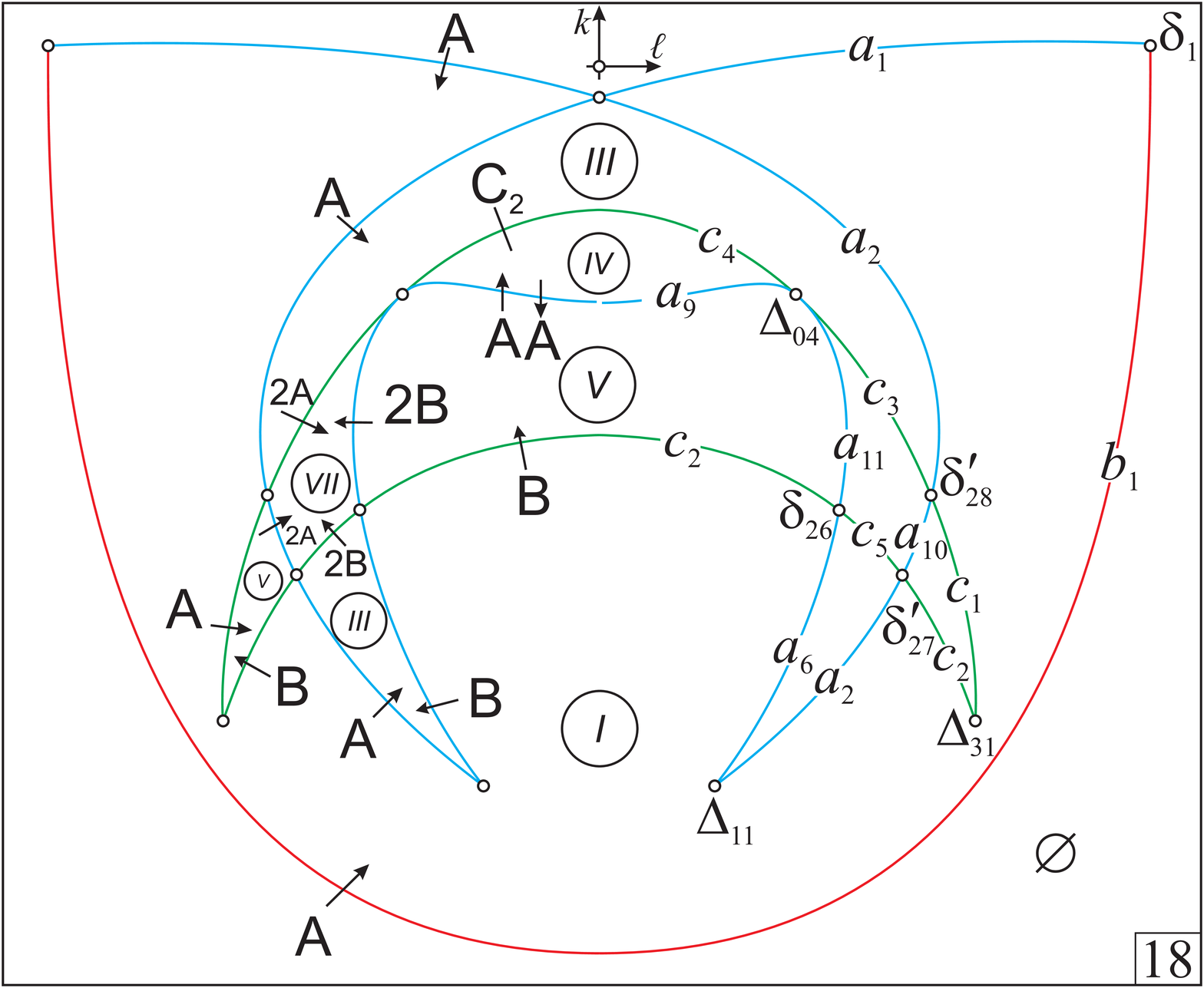}
\caption{Оснащенные диаграммы $\mSash$ (продолжение).}\label{fig_reg16}
\end{figure}

\begin{figure}[!htp]
\centering
\includegraphics[width=\wid\textwidth, keepaspectratio]{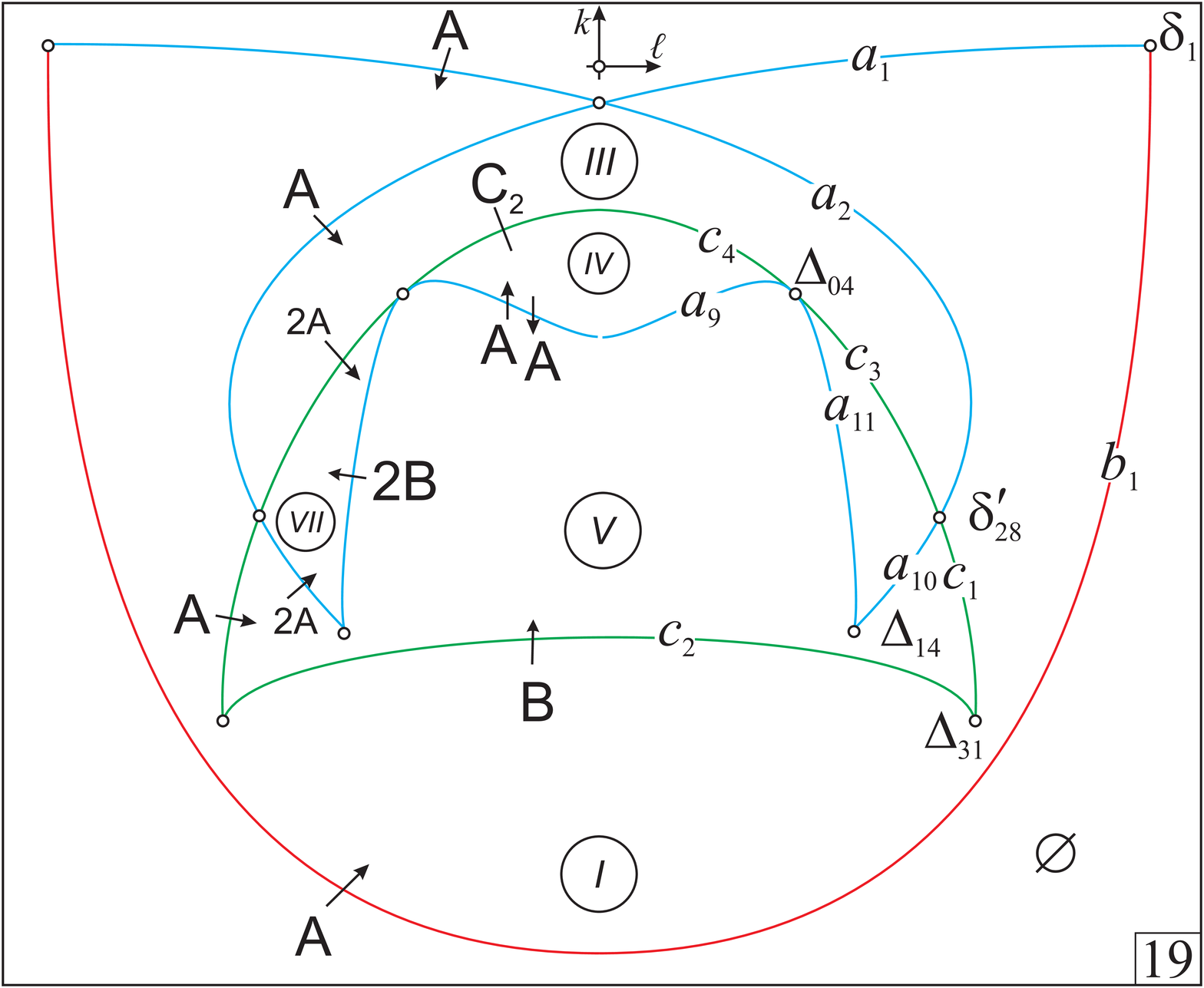}\\ \includegraphics[width=\wid\textwidth, keepaspectratio]{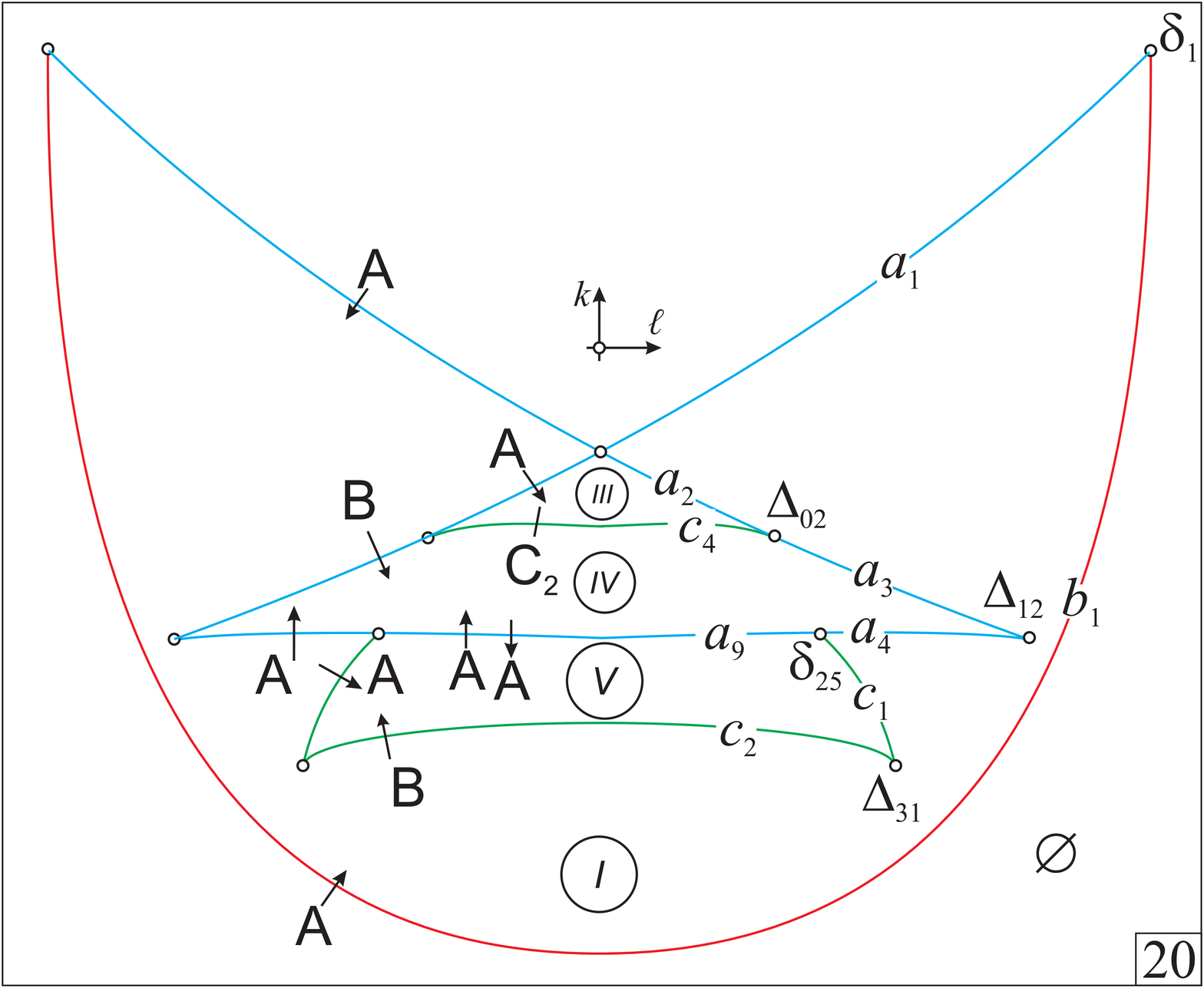}\\
\includegraphics[width=\wid\textwidth, keepaspectratio]{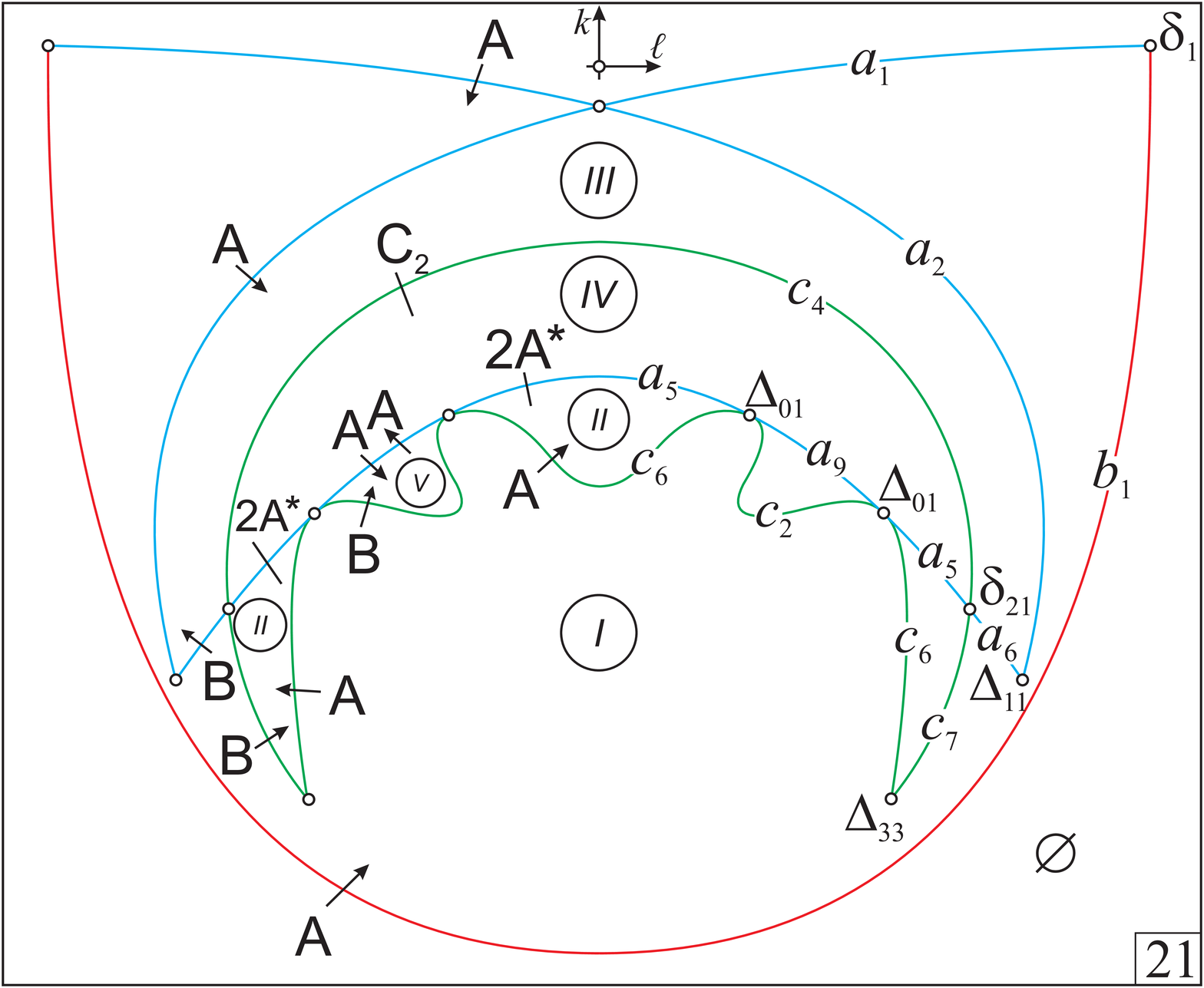}
\caption{Оснащенные диаграммы $\mSash$ (продолжение).}\label{fig_reg19}
\end{figure}

\begin{figure}[!htp]
\centering
\includegraphics[width=\wid\textwidth, keepaspectratio]{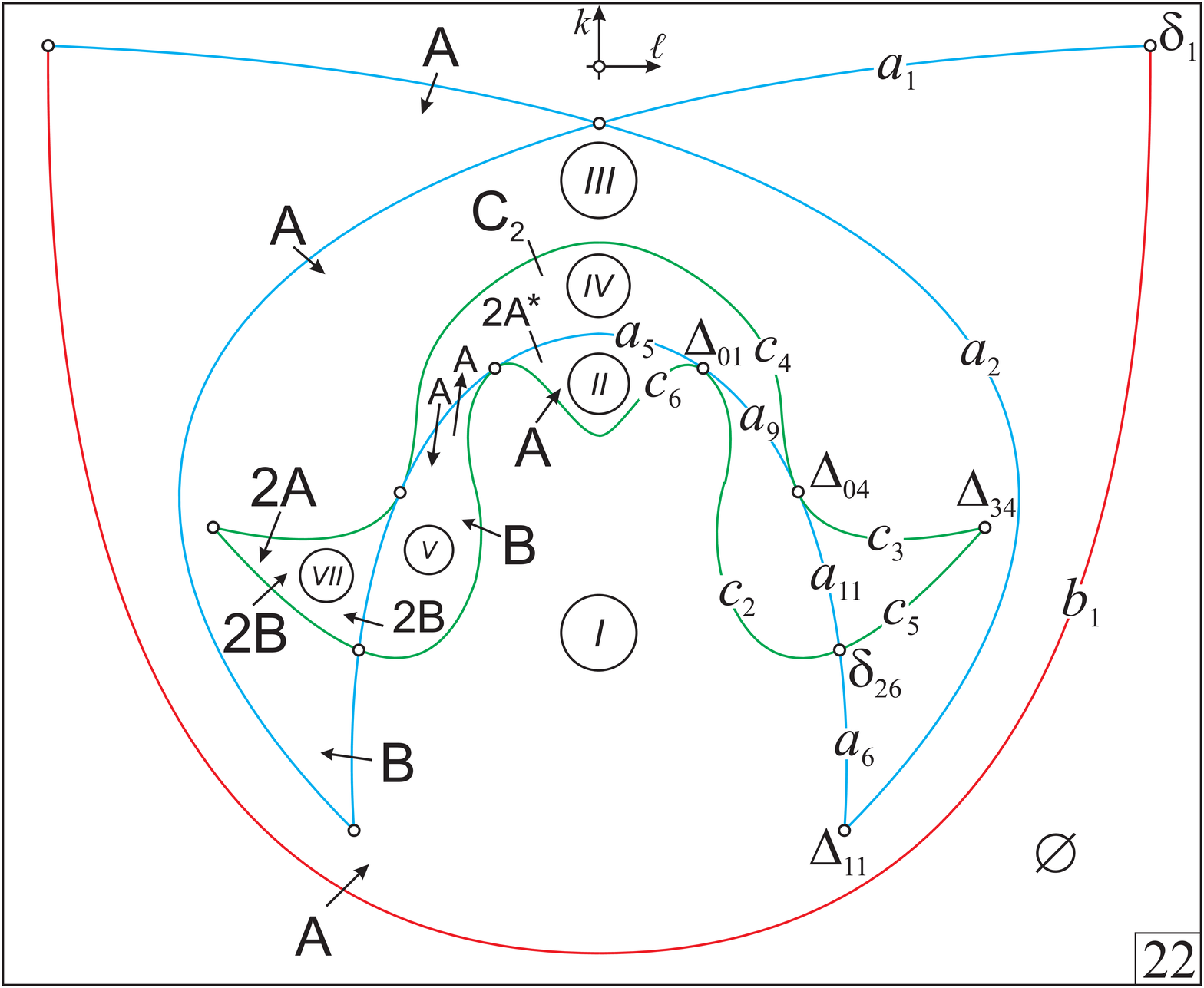}\\ \includegraphics[width=\wid\textwidth, keepaspectratio]{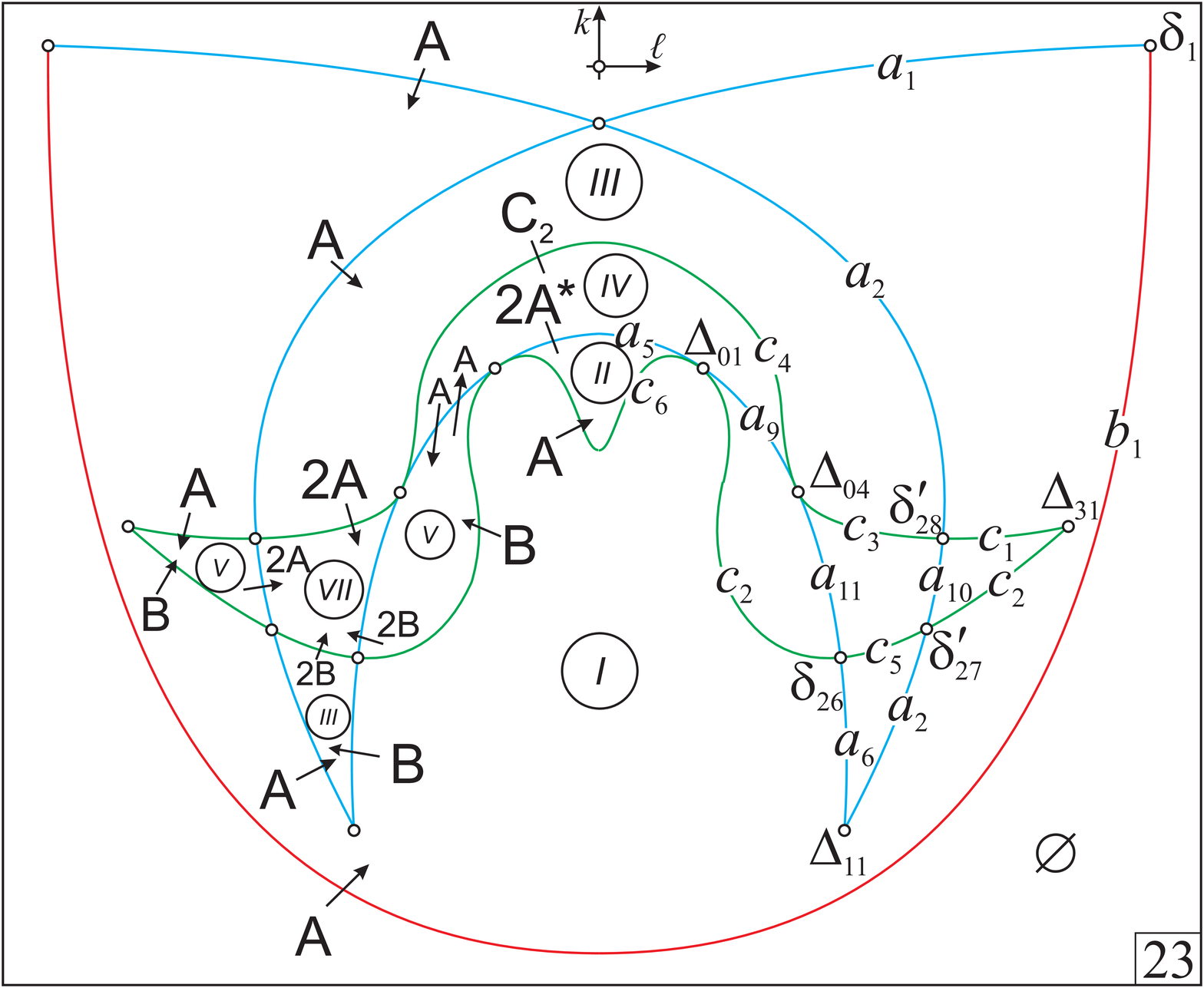}\\
\includegraphics[width=\wid\textwidth, keepaspectratio]{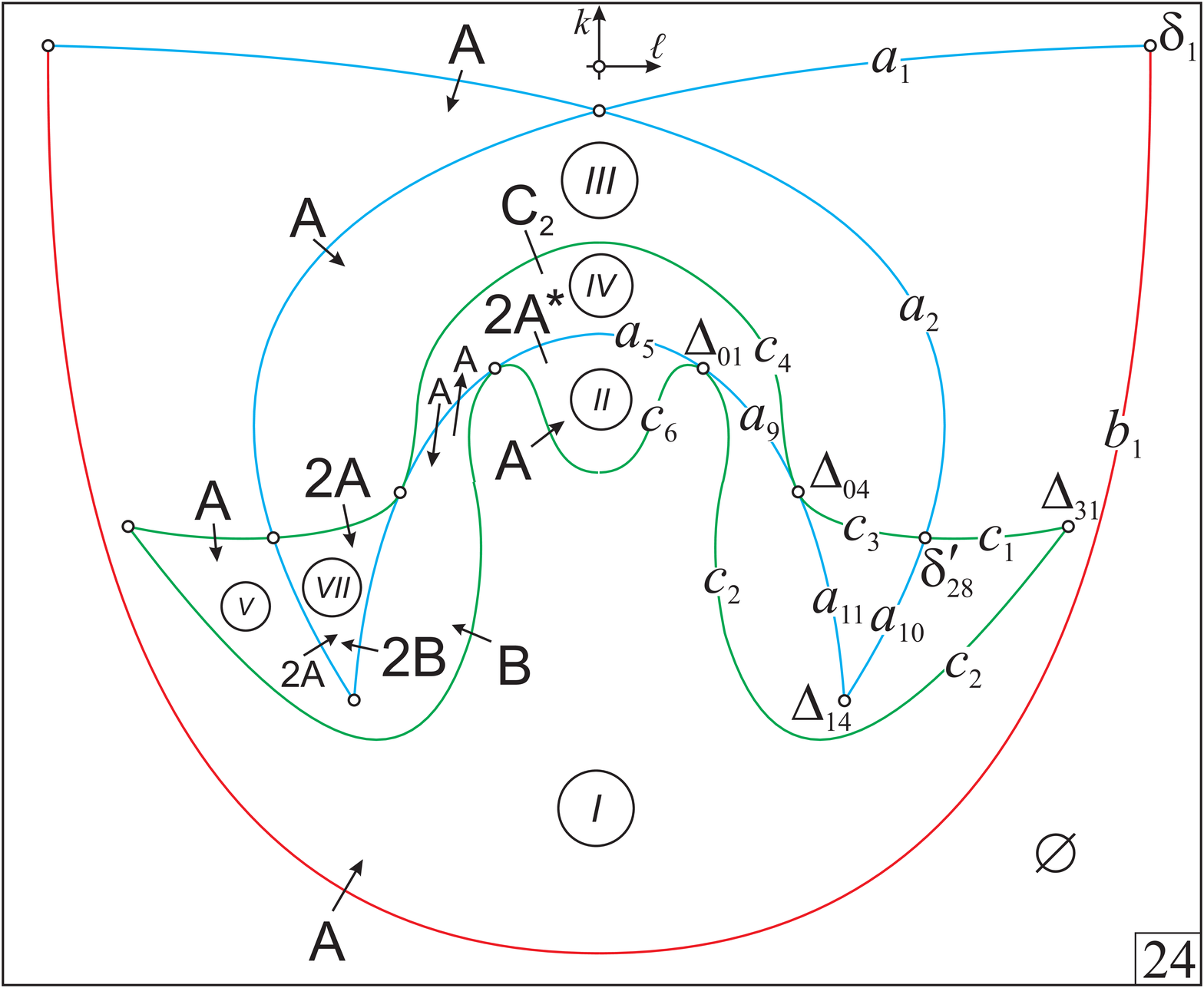}
\caption{Оснащенные диаграммы $\mSash$ (продолжение).}\label{fig_reg22}
\end{figure}

\begin{figure}[!htp]
\centering
\includegraphics[width=\wid\textwidth, keepaspectratio]{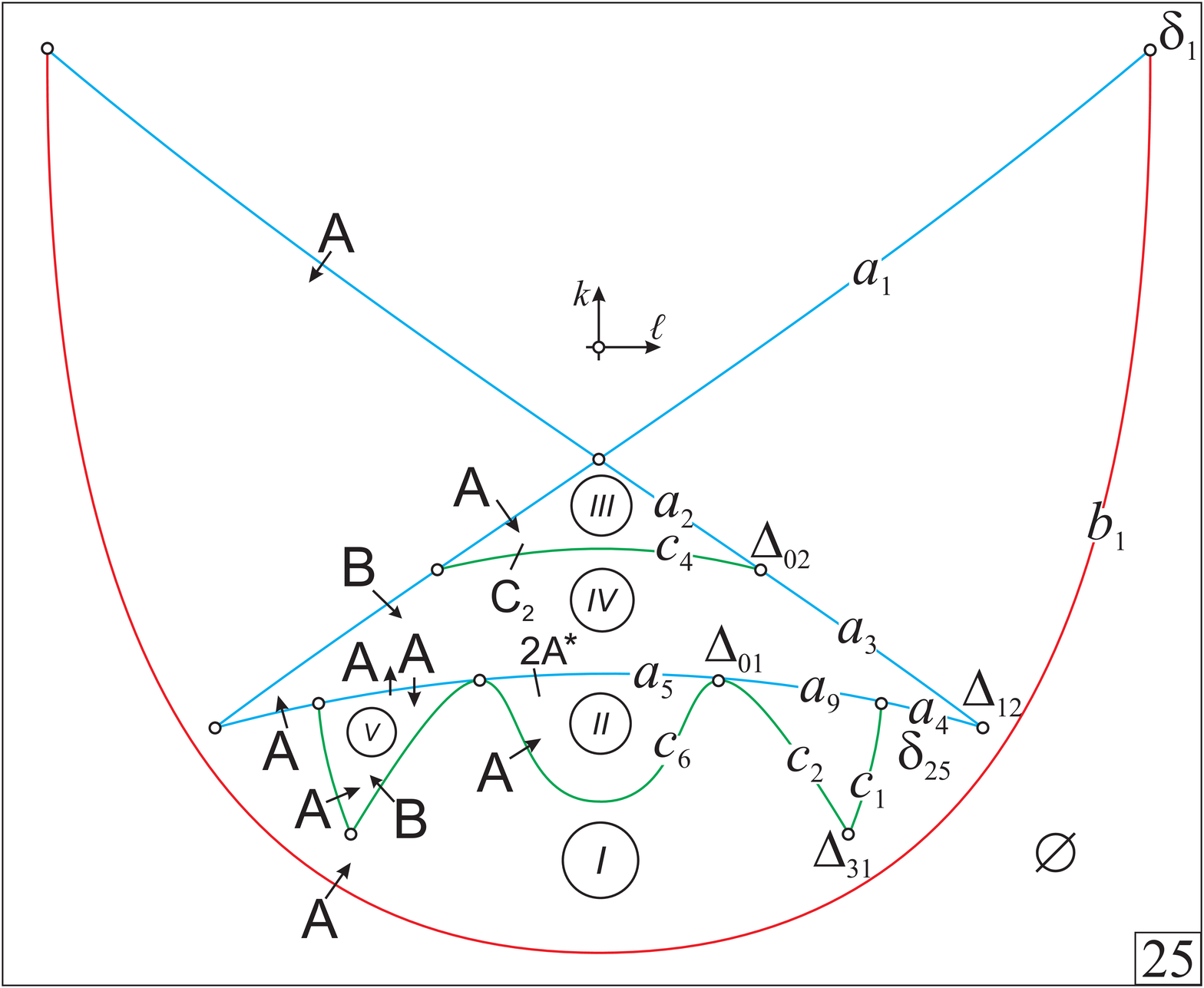}\\ \includegraphics[width=\wid\textwidth, keepaspectratio]{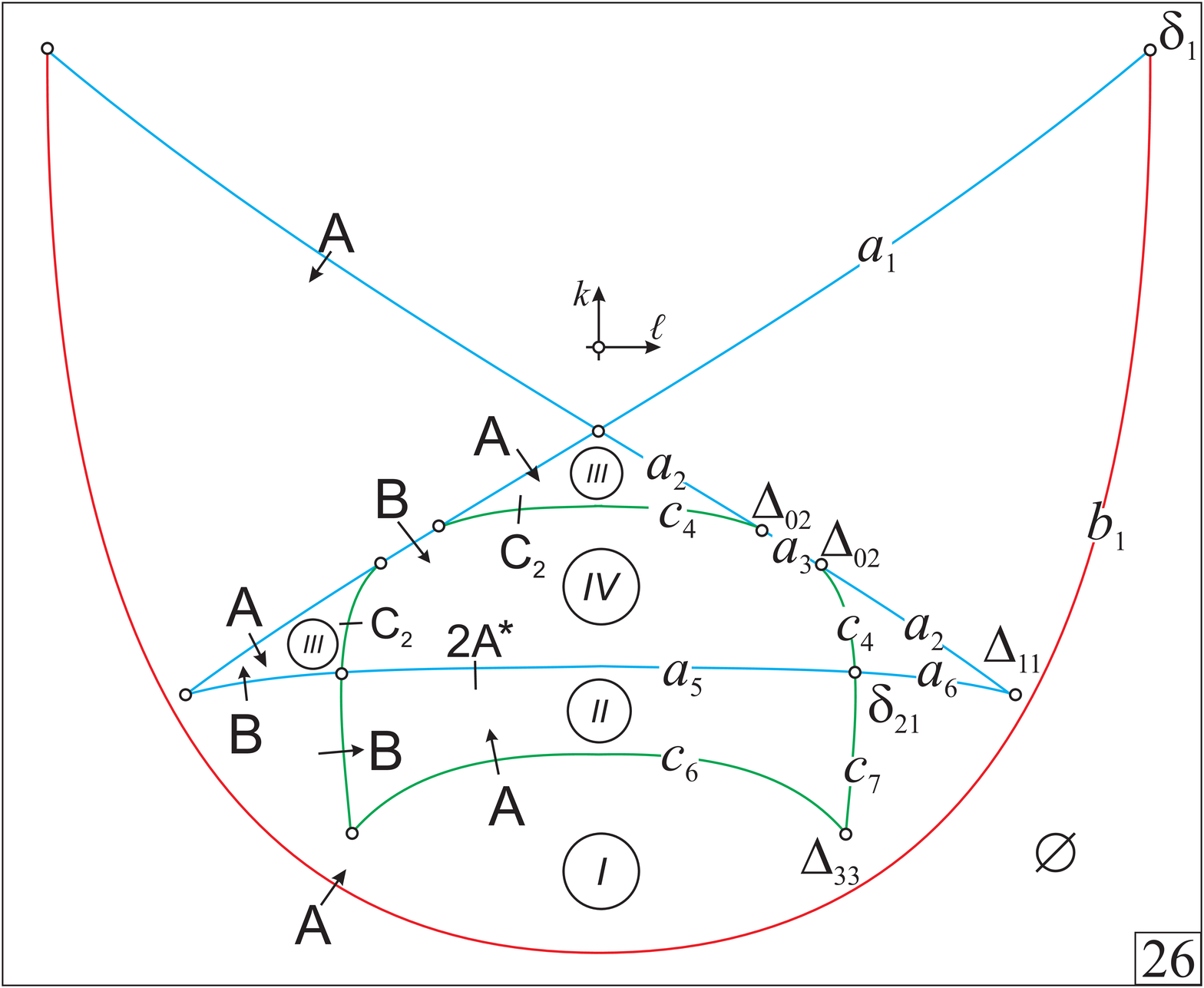}\\
\includegraphics[width=\wid\textwidth, keepaspectratio]{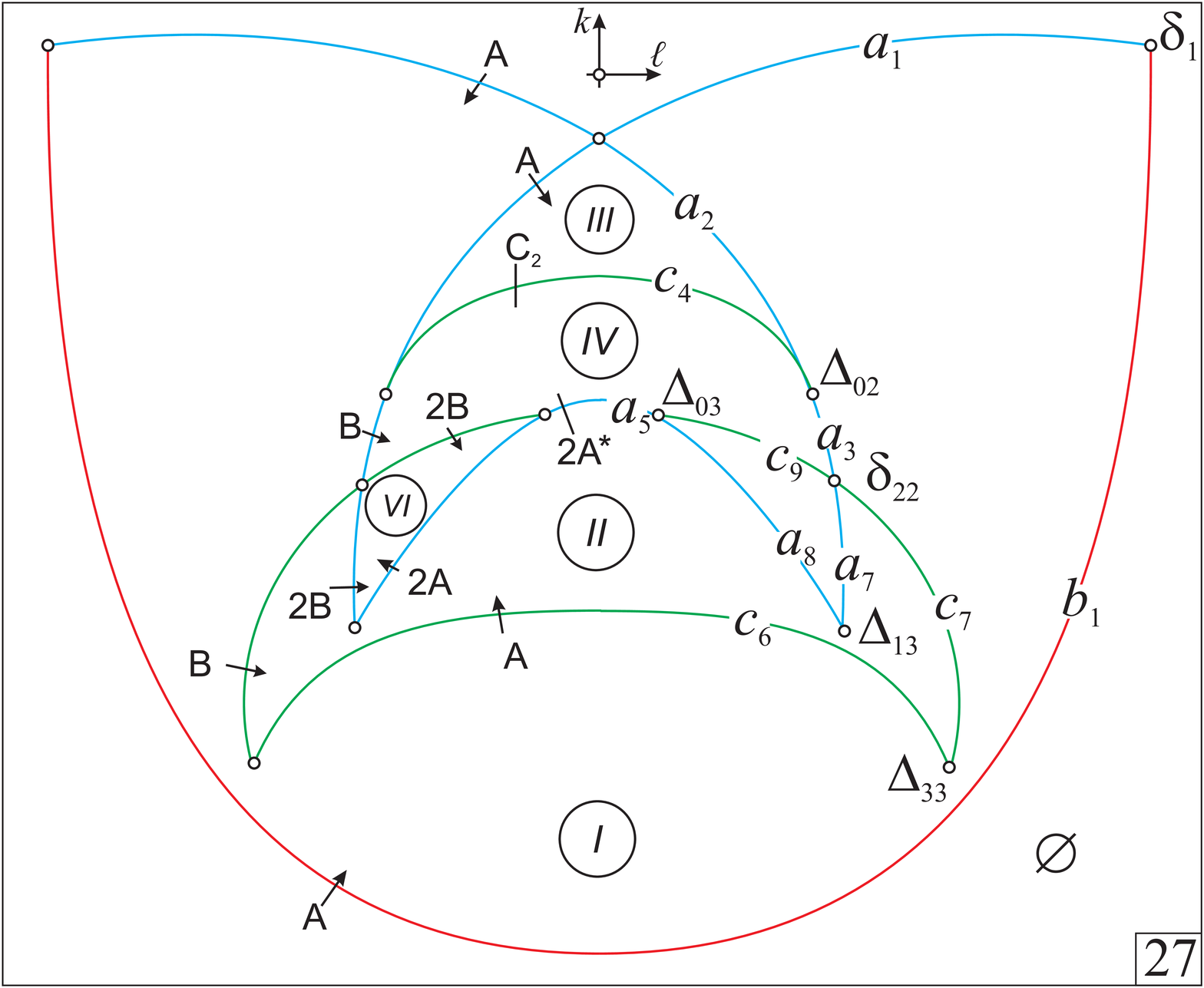}
\caption{Оснащенные диаграммы $\mSash$ (продолжение).}\label{fig_reg25}
\end{figure}

\begin{figure}[!htp]
\centering
\includegraphics[width=\wid\textwidth, keepaspectratio]{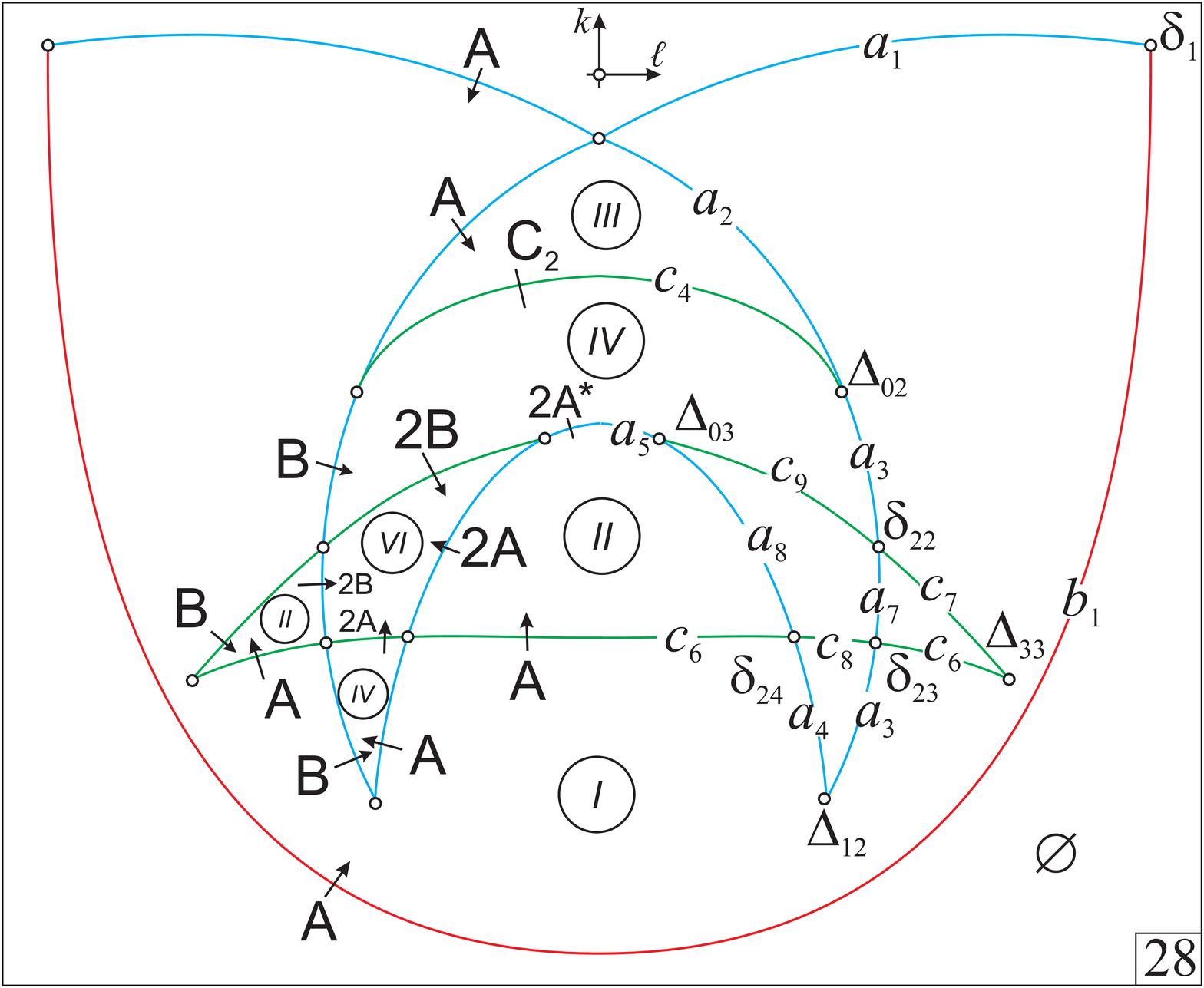}\\ \includegraphics[width=\wid\textwidth, keepaspectratio]{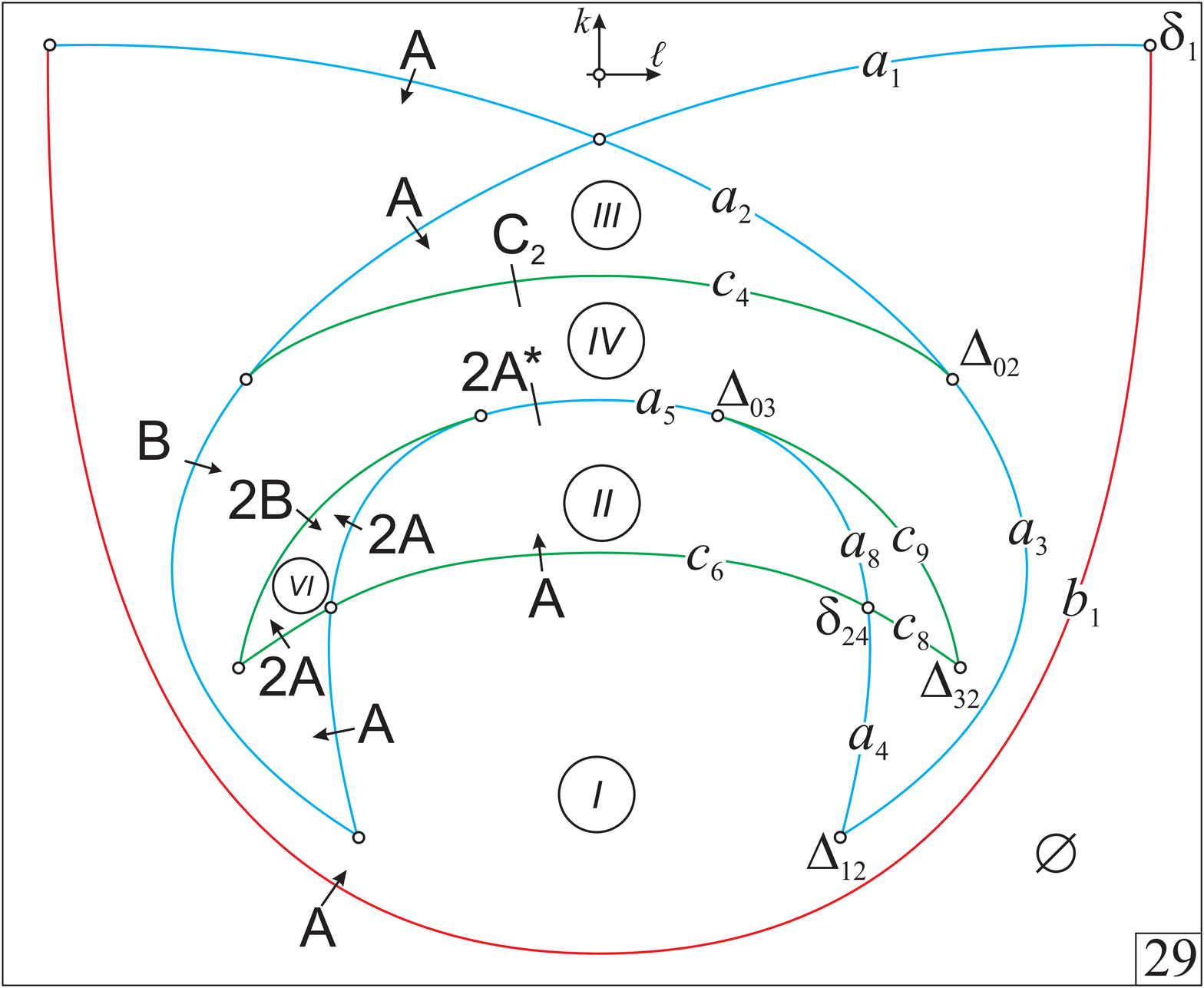}\\
\includegraphics[width=\wid\textwidth, keepaspectratio]{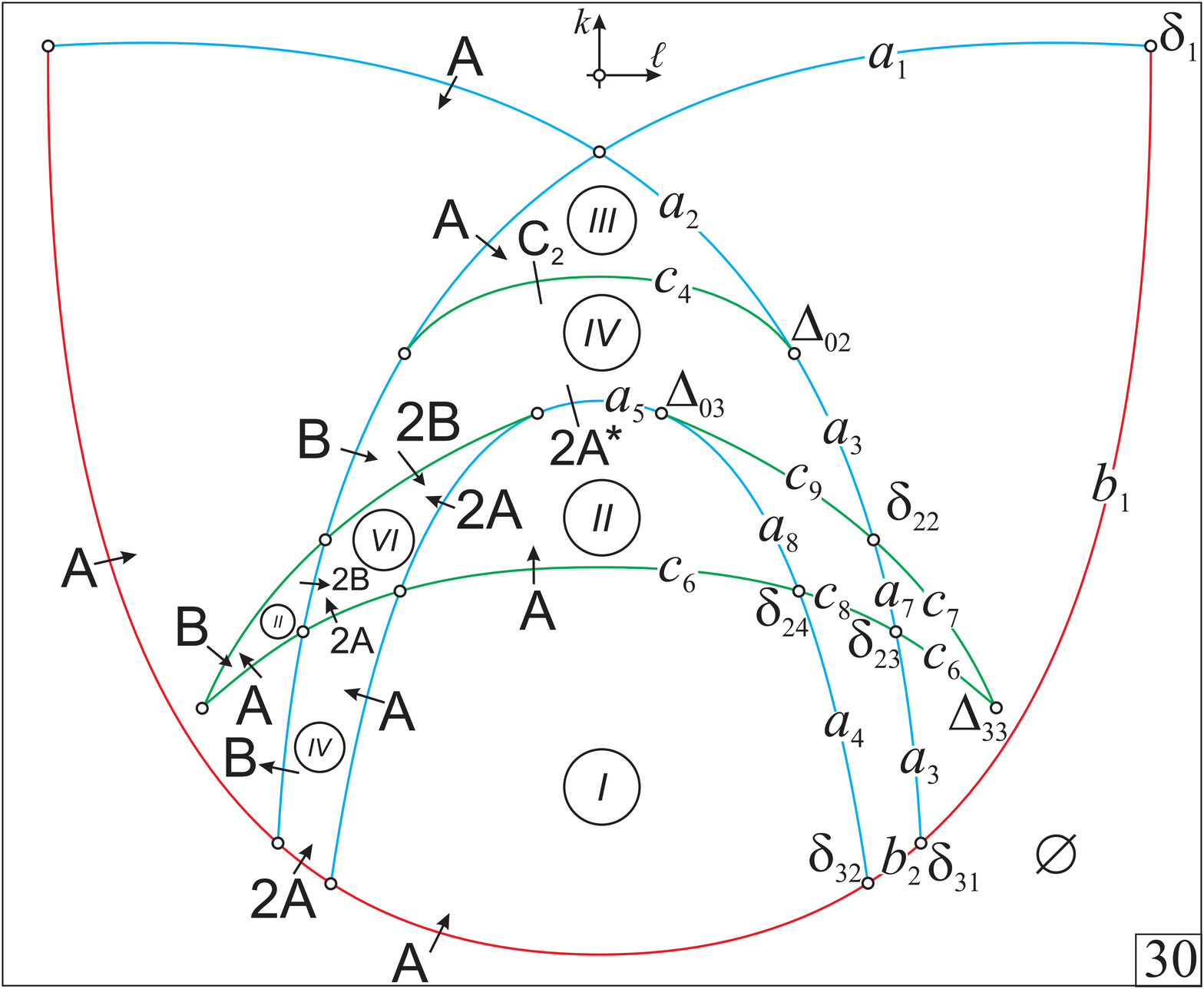}
\caption{Оснащенные диаграммы $\mSash$ (продолжение).}\label{fig_reg28}
\end{figure}

\begin{figure}[!htp]
\centering
\includegraphics[width=\wid\textwidth, keepaspectratio]{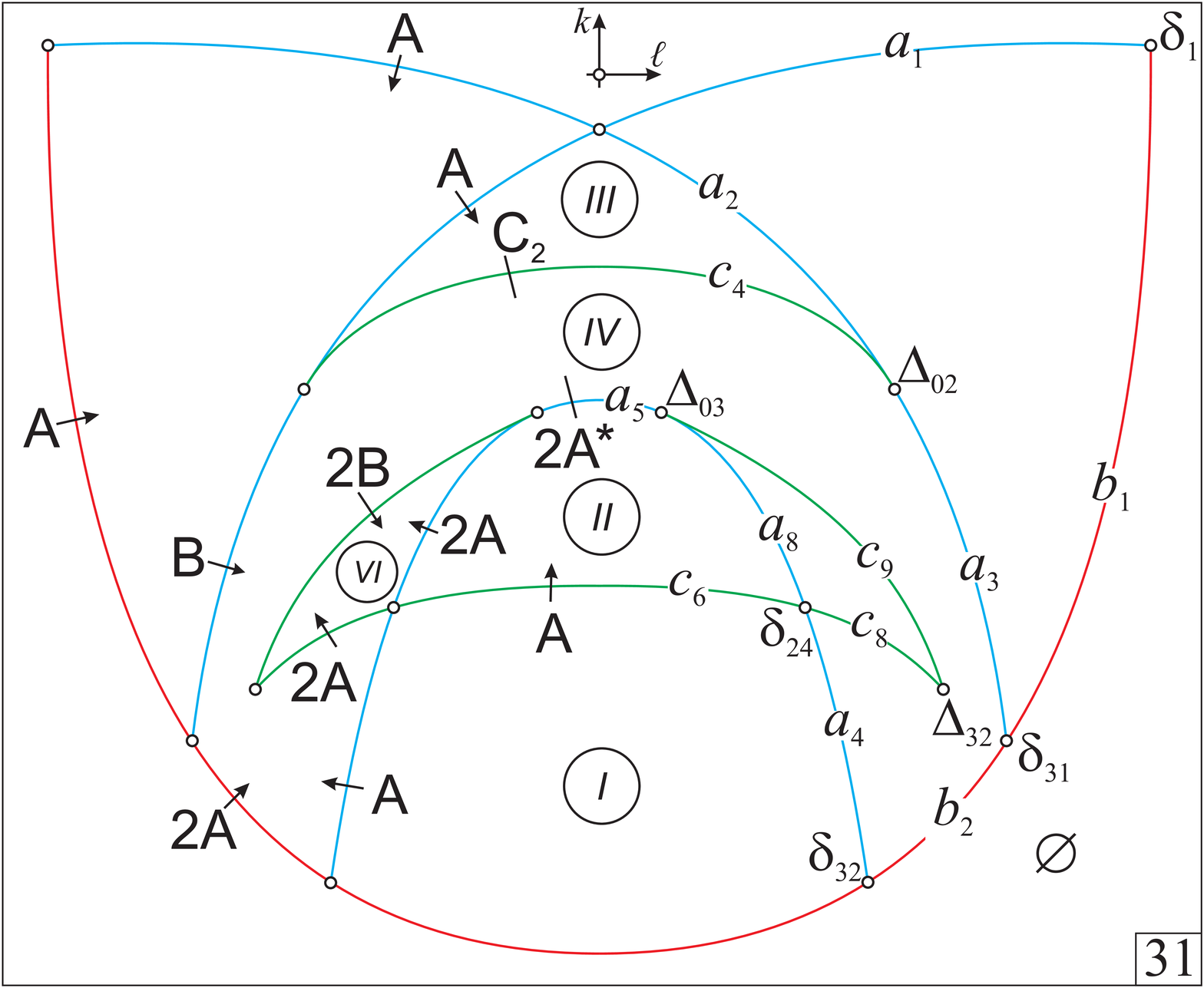}\\ \includegraphics[width=\wid\textwidth, keepaspectratio]{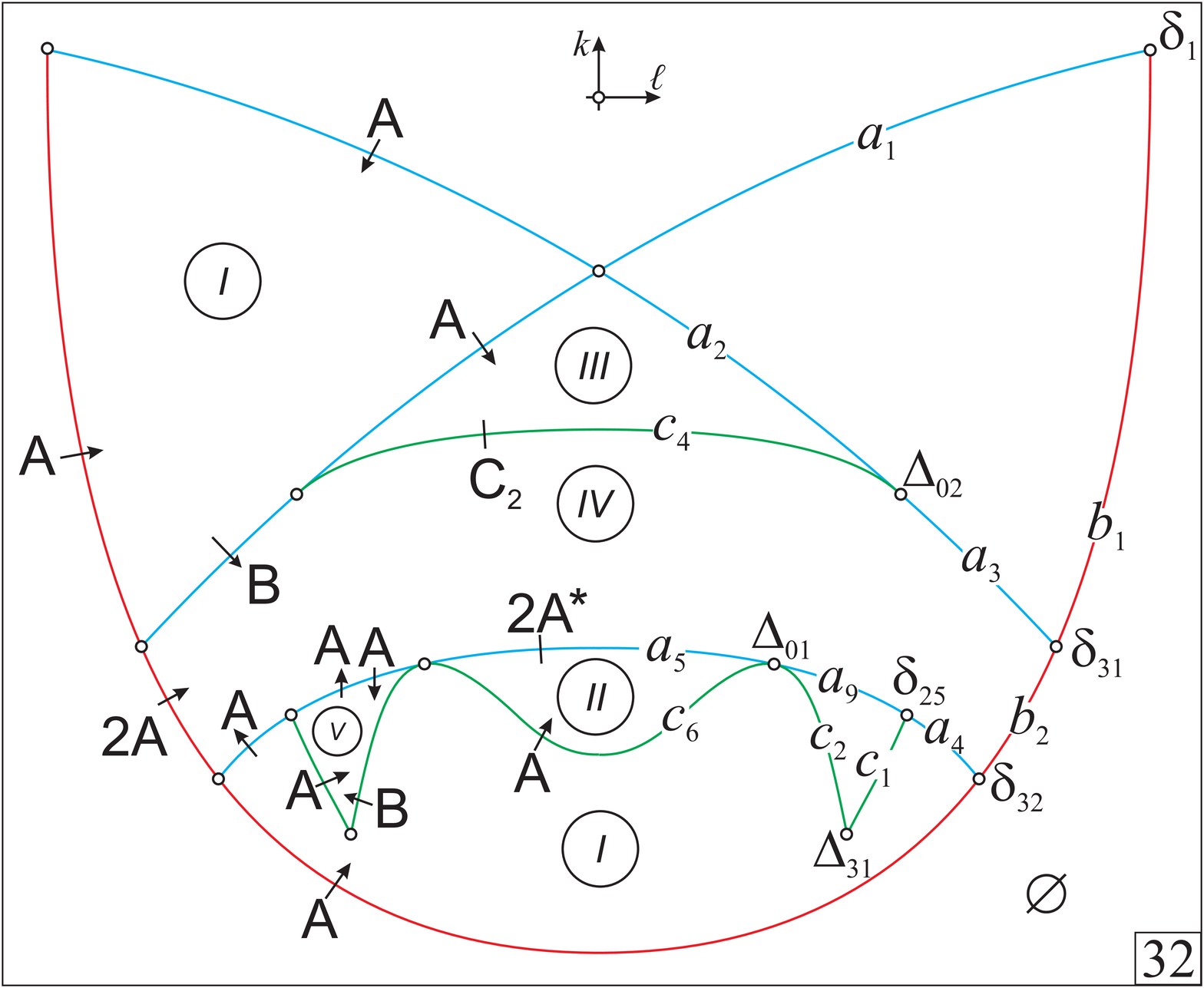}\\
\includegraphics[width=\wid\textwidth, keepaspectratio]{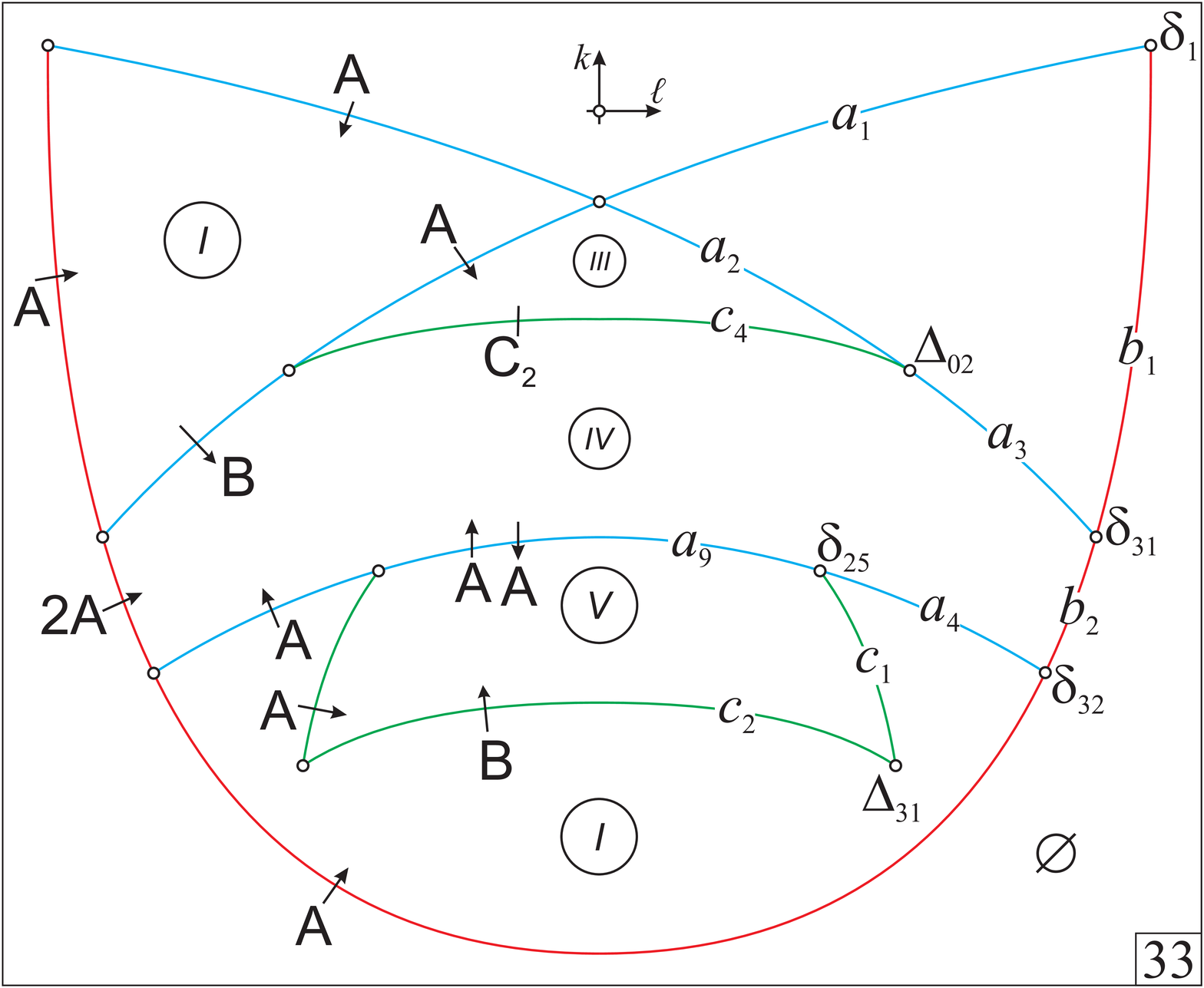}
\caption{Оснащенные диаграммы $\mSash$ (окончание).}\label{fig_reg31}
\end{figure}


\subsection{Классический случай Ковалевской}
Остановимся отдельно на предельном случае $\ld=0$ (классический случай Ковалевской).
Как видно из рис.~\ref{fig_atlas_all_xg} и \ref{fig_fragm1numreg}, разделяющими значениями $h$ при $\ld=0$ служат
\begin{equation*}
 - 1;\;0;\;1;\;\sqrt 2 ;\;\frac{3}{2};\;\sqrt 3 ;\;2.
\end{equation*}
По видимому, впервые $(L,K)$-диаграммы для случая Ковалевской рассматривались в школе П.\,Рихтера, где уже в 90-е годы для визуализации диаграмм, бифуркаций и графов Фоменко использовались мощные графические процессоры. В результате был снят фильм, описывающий все эти явления и демонстрирующий типичные движения тела для различных семейств двумерных торов приведенной системы. Математическое описание оснований для фильма, базирующееся на результатах \cite{KhPMM83,KhBook88,BRF}, представлено в работе \cite{RiDuWi1997}. В ней, в частности, приведены с детализацией точные изоэнергетические диаграммы, которые, для решения проблемы малых областей, рассчитывались на плоскости $(\ell,\sqrt{k})$. Разделяющие значения $h$ в работе явно не перечислены, но вытекают из приведенных там ранее формул. Эти же значения получены как предельные при исследовании изоэнергетических диаграмм волчка Ковалевской в двойном поле \cite{KhSh2004}.
Диаграммы $\mSash$, соответствующие неразделяющим значениям $h$ (выход на ось $\ld=0$ из областей $1,3,9,26,27,30,31$) показаны на рис.~\ref{fig_reg010} -- \ref{fig_reg270}. Здесь, за счет больших искажений по вертикали, мы сохранили послойный (сохраняющий сечения $\ell=\cons$) диффеоморфизм области диаграммы и в достаточно сложных случаях $27_0,30_0,31_0$.

Отметим еще одну интересную особенность. В работе \cite{RiDuWi1997} для области $30_0$ приведены две диаграммы. Очевидно, соответствующие системы на $Q_h^4$ лиувиллево эквивалентны, однако, они имеют разные наборы графов Фоменко, так как точка $\delta_{22}$ может проецироваться на ось $Oh$ как на ребро $\ccc_6+\bbb_1$, так и на ребро $\ccc_8+\bbb_2$.

\def\wid{0.3}

\begin{figure}[!htp]
\centering
\includegraphics[width=\wid\textwidth, keepaspectratio]{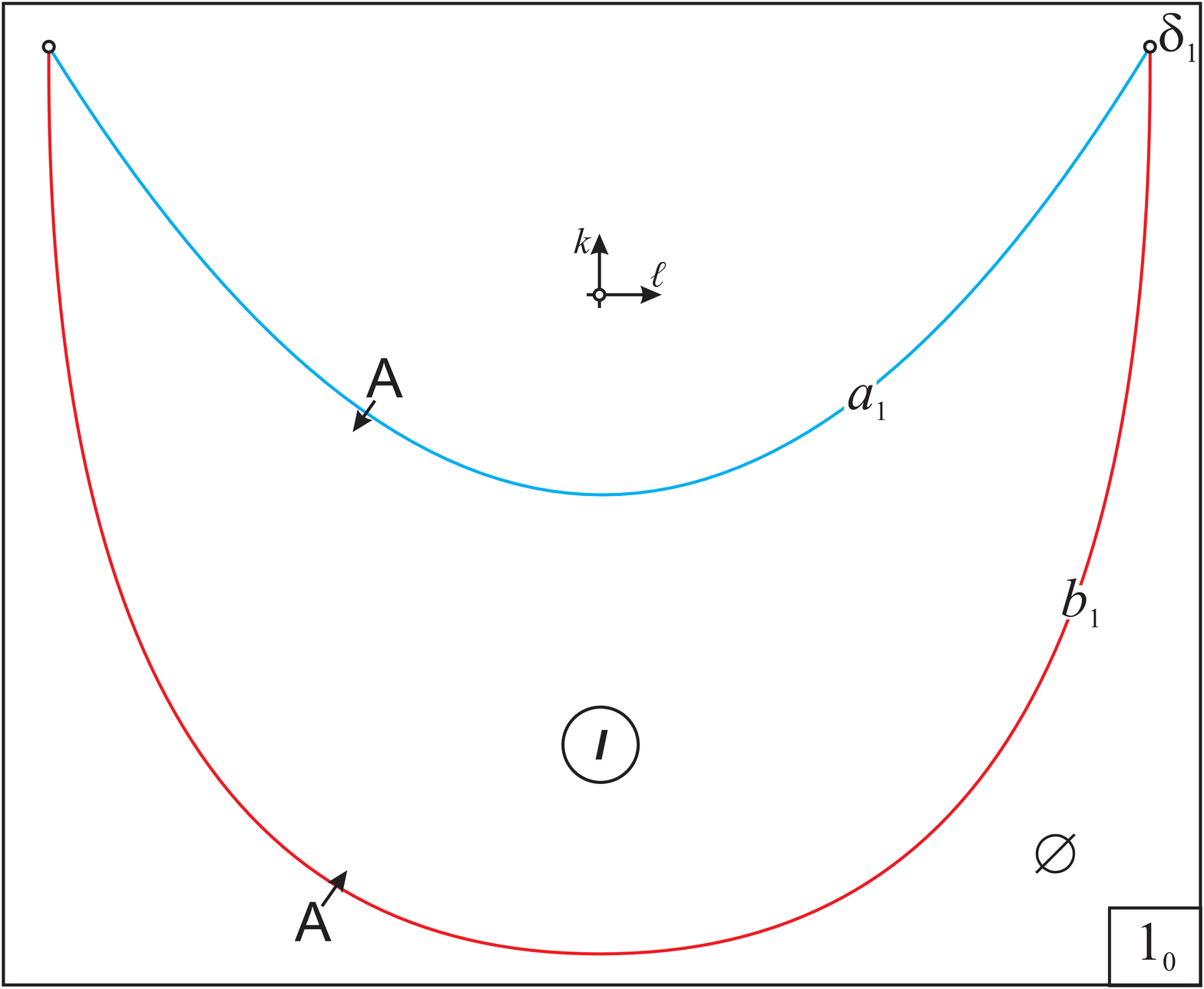}\  \includegraphics[width=\wid\textwidth, keepaspectratio]{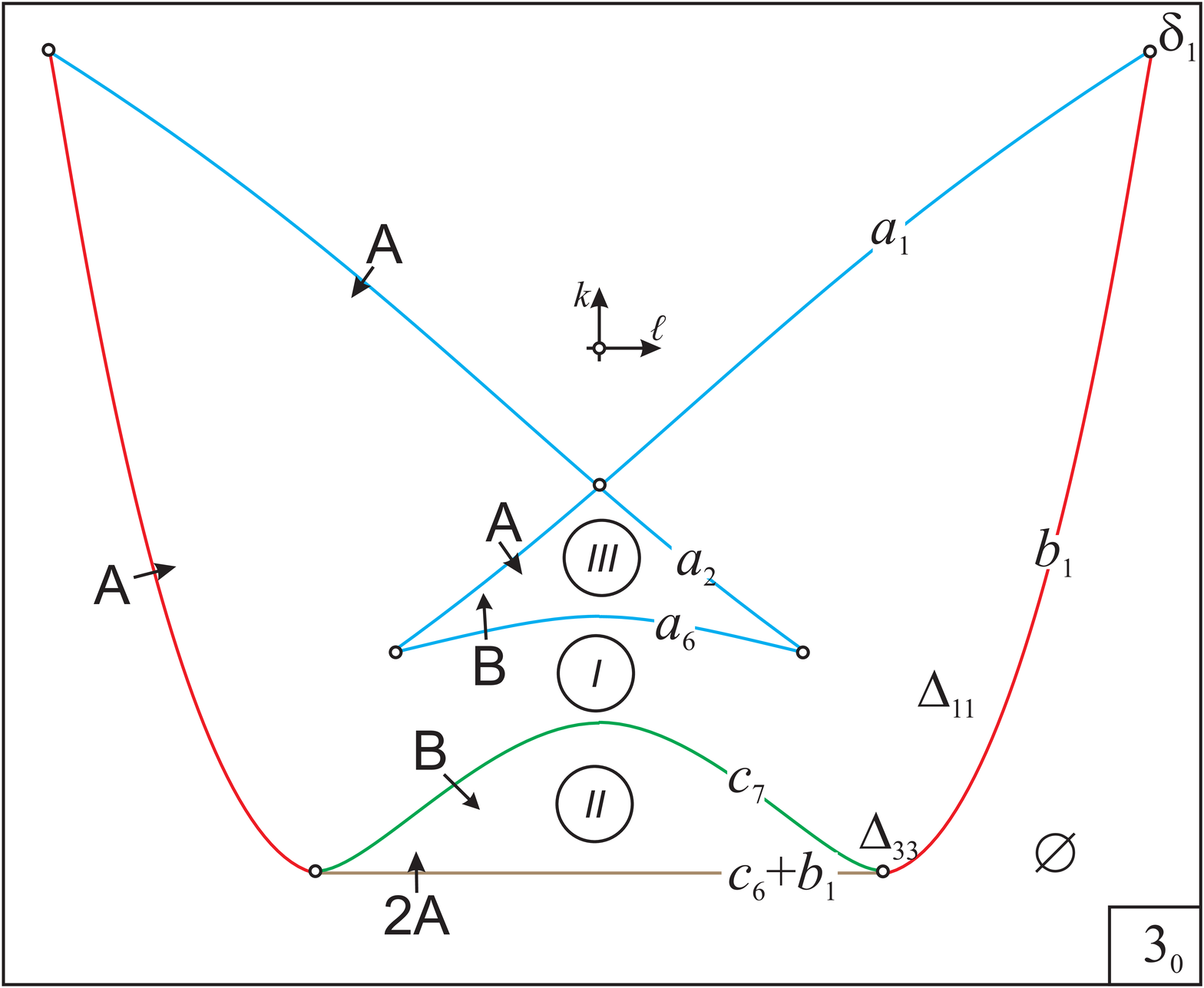}
\caption{Оснащенные диаграммы $\Sigma_{LK}(h,0)$ (случай Ковалевской).}\label{fig_reg010}
\end{figure}

\begin{figure}[!htp]
\centering
\includegraphics[width=\wid\textwidth, keepaspectratio]{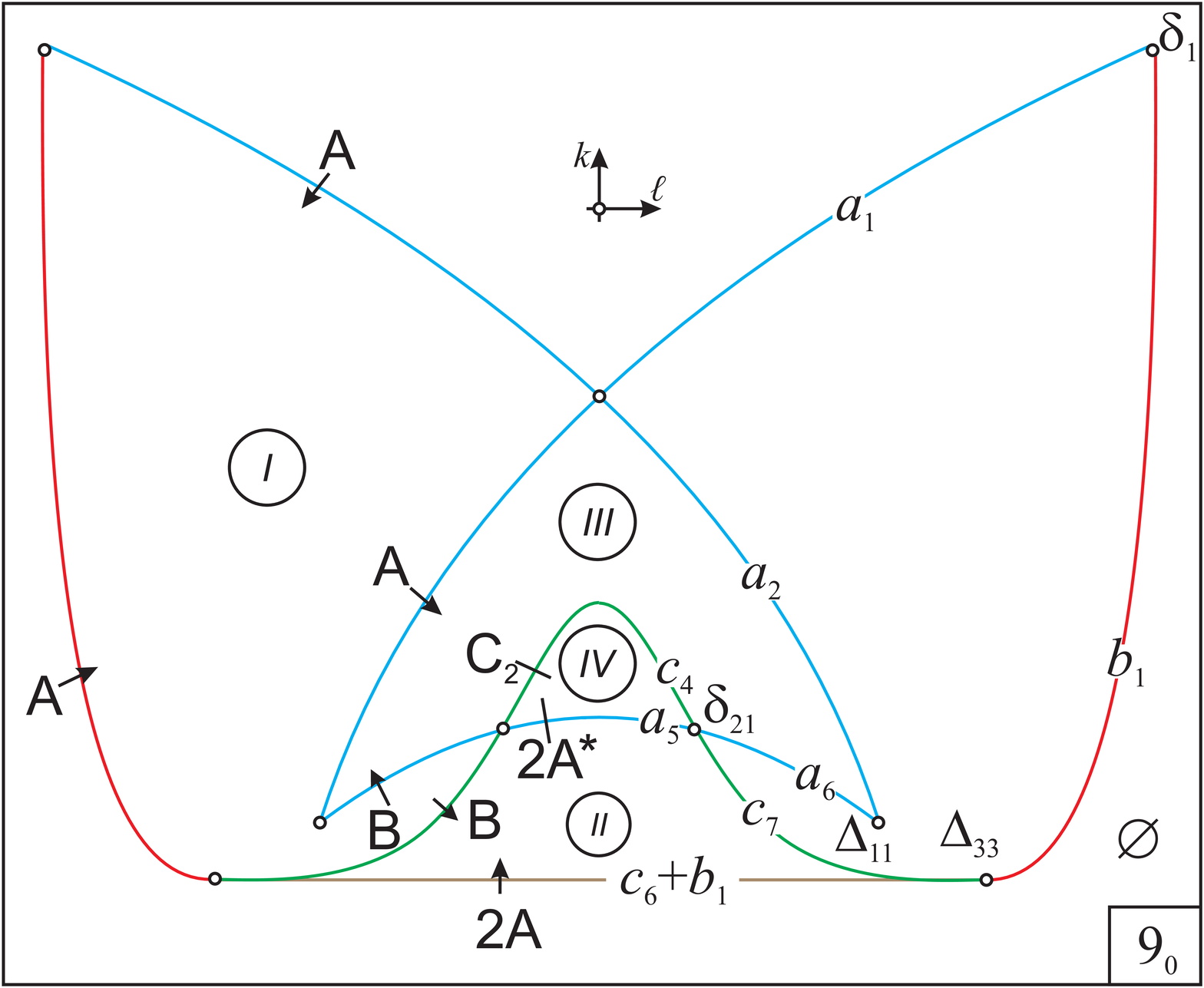}\
\includegraphics[width=\wid\textwidth, keepaspectratio]{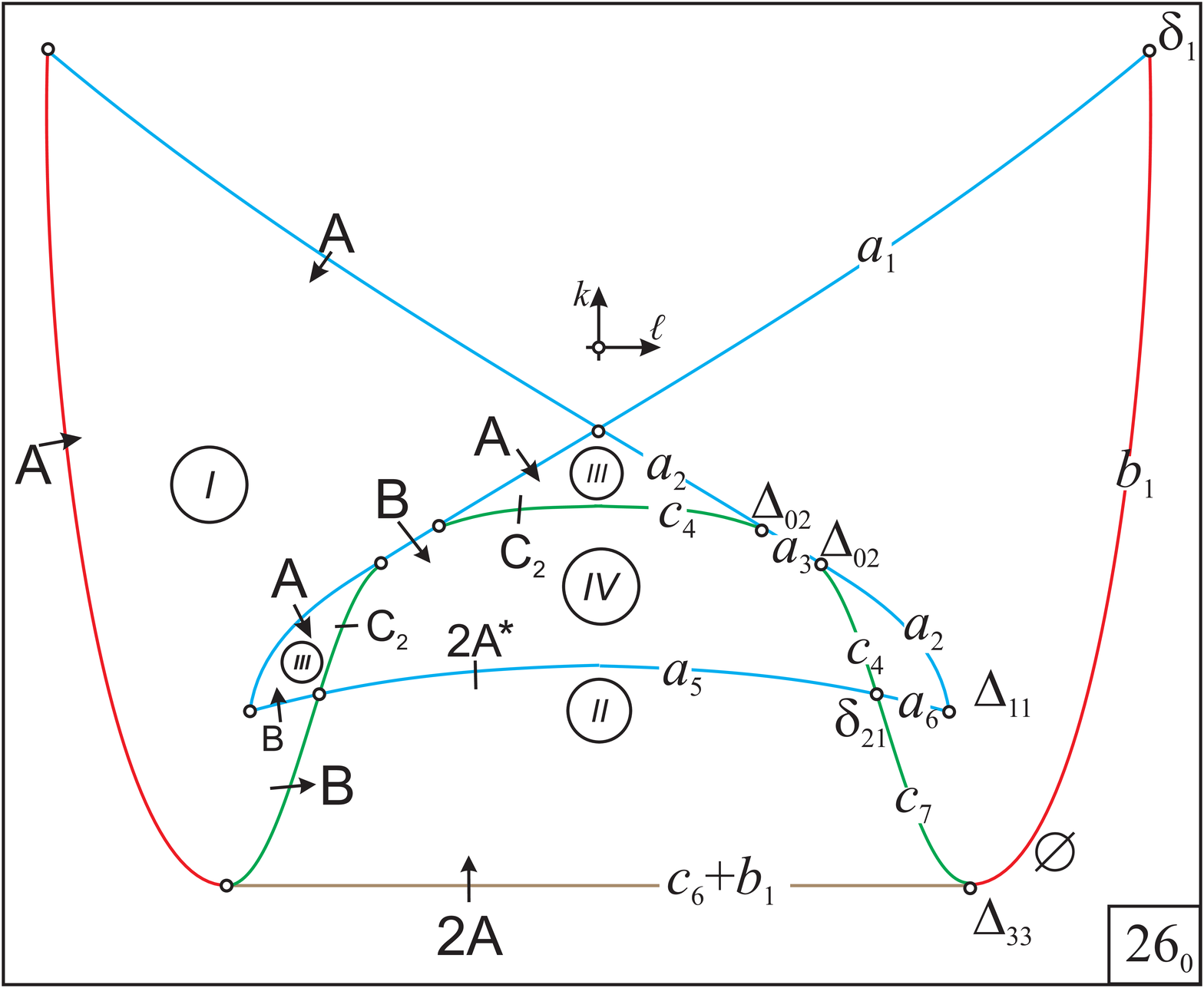}
\caption{Оснащенные диаграммы $\Sigma_{LK}(h,0)$ (продолжение).}\label{fig_reg090}
\end{figure}

\begin{figure}[!htp]
\centering
\includegraphics[width=\wid\textwidth, keepaspectratio]{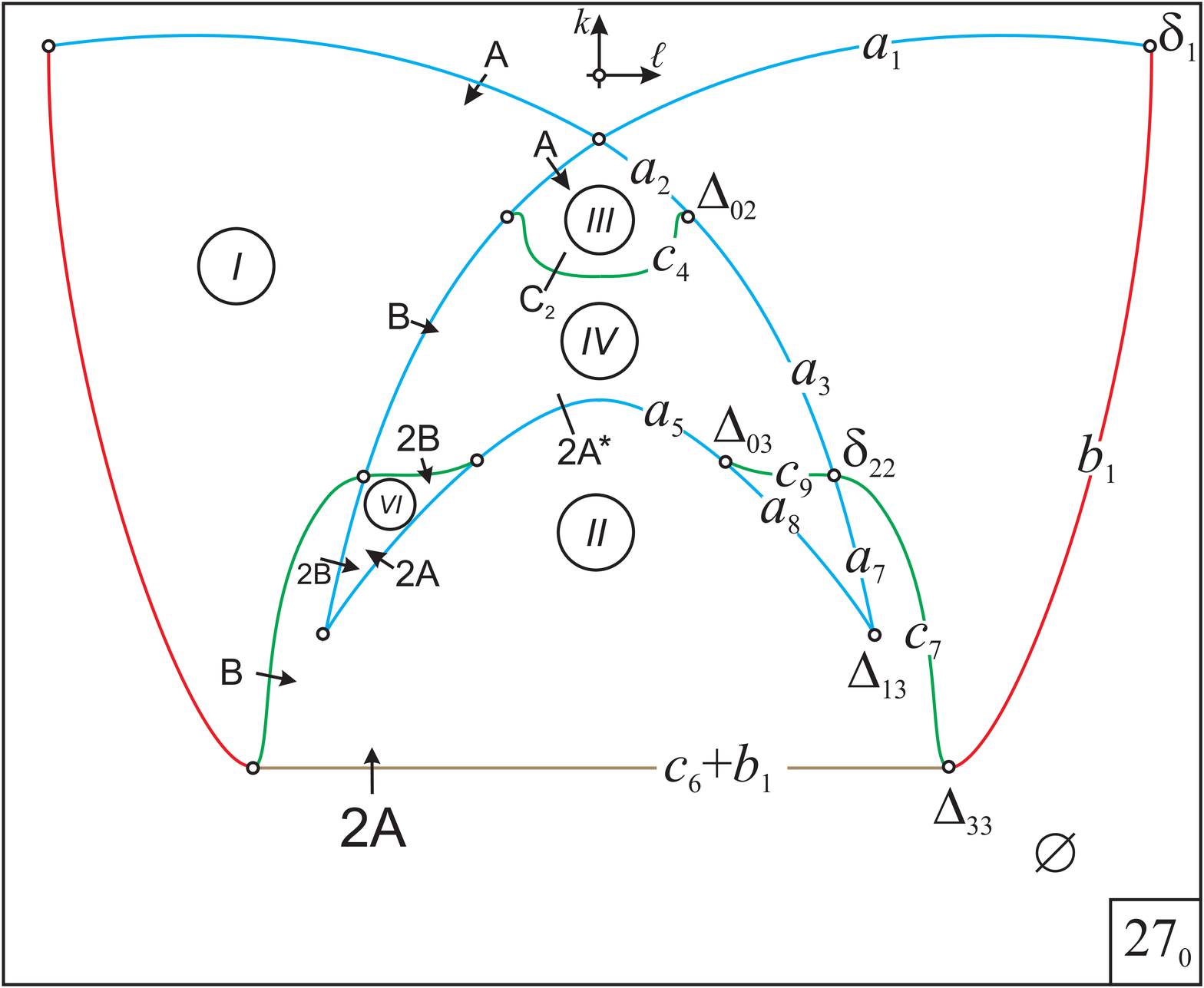}\
\includegraphics[width=\wid\textwidth, keepaspectratio]{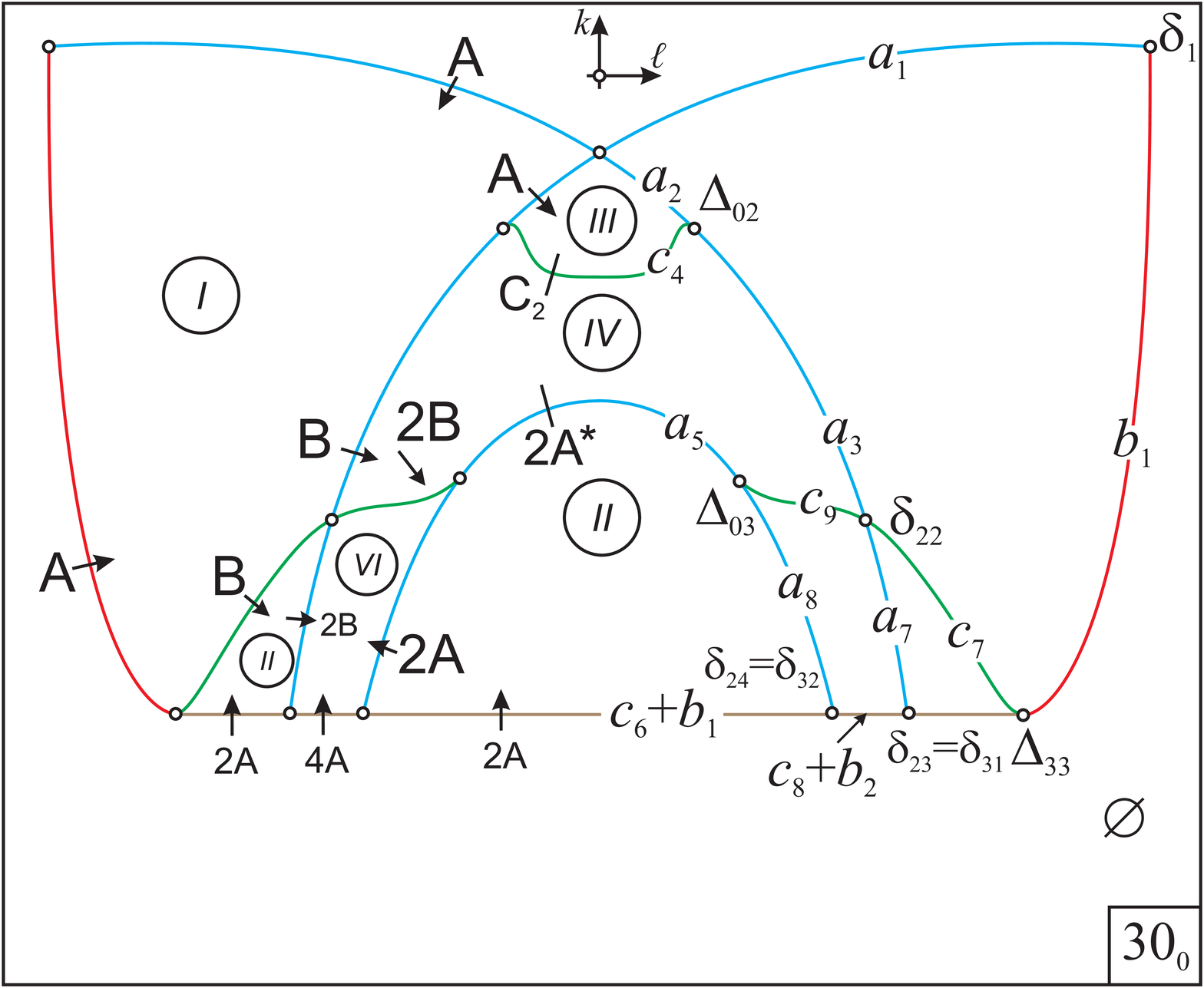}\\
\includegraphics[width=\wid\textwidth, keepaspectratio]{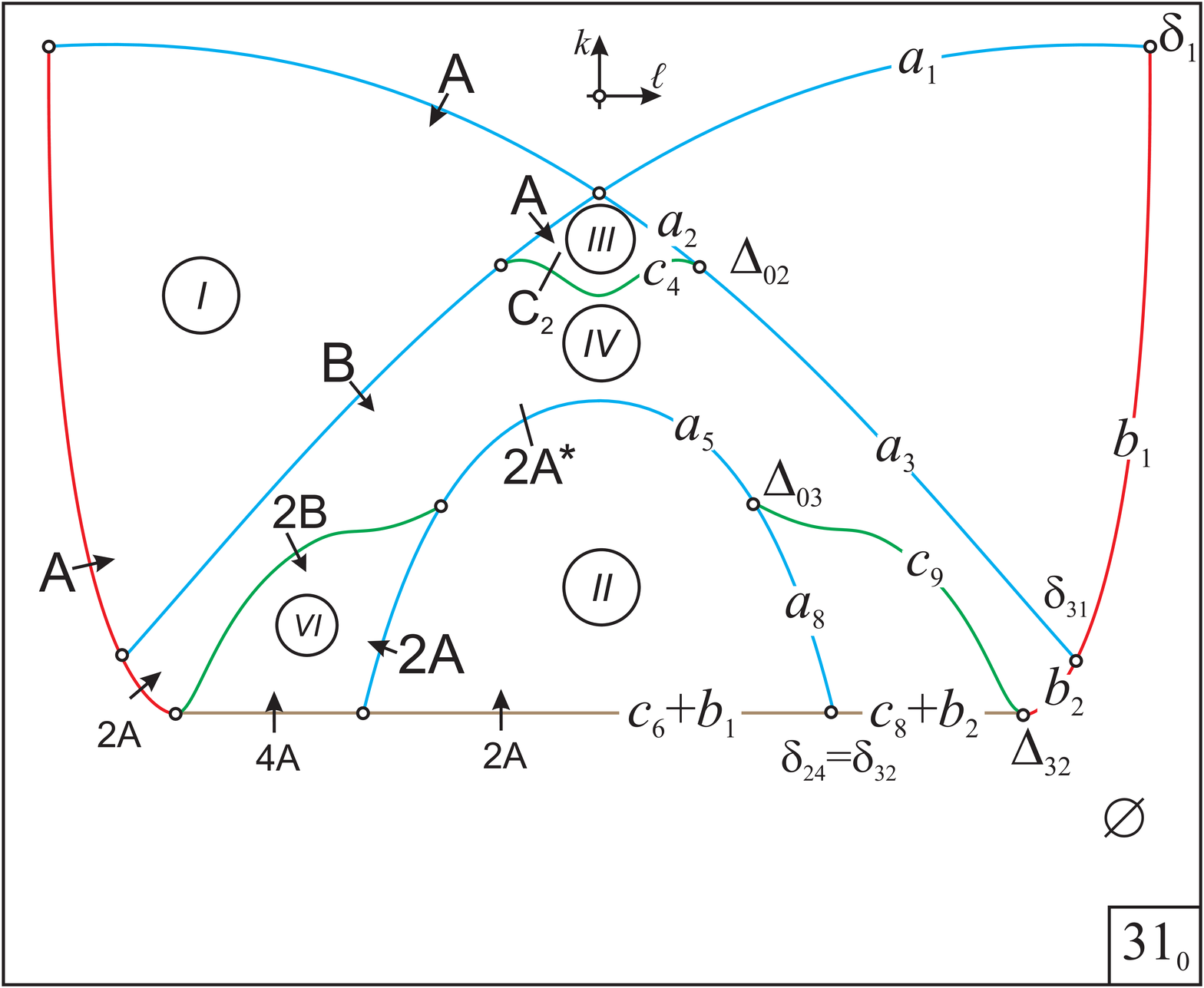}
\caption{Оснащенные диаграммы $\Sigma_{LK}(h,0)$ (продолжение).}\label{fig_reg270}
\end{figure}

\clearpage

\section{Топологические инварианты}\label{sec9}
\subsection{Диаграммы Смейла\,--\,Фоменко}

Для того чтобы классифицировать графы Фоменко на $\iso(\ld)$, нужно на плоскость с диаграммой Смейла наложить проекцию образа множества вырожденных критических точек ранга~$1$, так как именно переход через такие точки определяет перестройки в графах, не связанные с изменением топологии $\iso(\ld)$. Теоретически также возможно явление расщепления некоторых типов атомов без возникновения вырожденных точек (см. \cite{BolFom}), но, как показано выше путем перечисления всех возникающих в этой задаче атомов, здесь мы с этим явлением не встречаемся. Выпишем уравнения образа множества вырожденных критических точек ранга~$1$ с учетом условий существования движений нужного типа в соответствии с предложениями~\ref{propos16}~и~\ref{propos21}.

В состав кривых, классифицирующих графы Фоменко, входят следующие (см. теорему \ref{th13} и предложение \ref{propos21}):

1) кривая касания поверхностей $\wsa,\wsc$ полностью
\begin{equation}\label{eq9_1}
    \gan: \quad \left\{ \begin{array}{l}
                        \ds{\ell = \pm \sqrt{\frac{s}{2}(1-2\ld^2s)}} \\
                        \ds{h=\frac{1-\ld^2s+2s^2}{2s}}
                      \end{array} \right. , \quad \ds{0<s\leqslant \frac{1}{2\ld^2}};
\end{equation}

2) ребро возврата поверхности $\wsa$ между его пересечениями с кривой $\delta_3$ (то есть между точками -- образами класса $\vpi_{31}$ вырожденных точек ранга 0, переходящими в точки $B_3$ на ключевых множествах)
\begin{equation}\label{eq9_2}
    \gaa: \quad \ds{\ell = \pm \frac{2}{3\sqrt{3}}(h-\frac{\ld^2}{2})^{3/2}}, \quad \ds{\frac{\ld^2}{2}\leqslant h \leqslant h_C(\ld)}
\end{equation}
где зависимость $h_C(\ld)$ находится согласно \eqref{eq5_6};

3) ребро возврата поверхности $\wsc$ в следующих пределах
\begin{equation}\label{eq9_3}
    \gac: \quad \ds{\ell = \pm \sqrt{\frac{h-h^*}{2}}}, \quad h \in \left\{
    \begin{array}{l}
    \ [h^*,+\infty ), \quad \textrm{при} \quad \ld \leqslant \ld^* \\
    \ [h_{**},+\infty), \quad \textrm{при} \quad \ld > \ld^*
    \end{array}      \right.,
\end{equation}
где
\begin{equation}\notag
\begin{array}{ll}
  h^* & = \ds{\frac{\ld^{2/3}}{2}(3-\ld^{4/3})}, \\[3mm]
  h^{**} & = \ds{\frac{1}{4}\left[(4+\ld^{4/3})^{3/2}-\ld^{2/3}(6+\ld^{4/3})\right]}=\\[3mm]
  {} & = \ds{\frac{1}{16}(\sqrt{4+\ld^{4/3}}-\ld^{2/3})\left[(\sqrt{4+\ld^{4/3}}-\ld^{2/3})^2+12 \right]}.
\end{array}
\end{equation}

\begin{defin}\label{def13}
Назовем диаграммой Смейла\,--\,Фоменко $\smales$ объединение $\smale$ с образом множества вырожденных точек ранга $1$ под действием отображения $L{\times}H$.
\end{defin}

Такие диаграммы впервые строил А.А.\,Ошем\-ков \cite{Oshem} для классических задач динамики твердого тела.

Поскольку множество точек пересечения кривых $\Delta_i$ с основной диаграммой состоит из вырожденных точек ранга $0$, то его трансформации по $\ld$ уже учтены. Для классификации расширенных диаграмм нужно добавить значения $\ld$, при которых перестраивается множество $\gan\cup \gaa\cup \gac$. Рассмотрим эволюцию самопересечений и двойных пересечений этих кривых (тройных пересечений нет, $\gan\cap \gaa\cap \gac=\varnothing$). Кривые $\gaa$ и $\gac$ определяются однозначными зависимостями $h(\ell)$ и поэтому самопересечений не имеют. Кривая $\gan$ имеет самопересечение, отвечающее значениям
\begin{equation}\label{eq9_4}
s_{\pm}=\frac{1\pm \sqrt{1-8 \ld^4}}{4\ld^2},\quad h=\frac{1-\ld^4}{2\ld^2}, \quad \ell = \pm \frac{\ld}{\sqrt{2}},
\end{equation}
где $s_{\pm}$, при условии их существования $\ld \ls \ld_*=1/2^{3/4}$, всегда лежат в интервале $(0,1/2\ld^2)$.

Условие пересечения $\gan \cap \gac$ представим в виде $2s(h-h^*-2\ell^2)=0$, где $h,\ell$ --- значения \eqref{eq9_1}. В подстановке $Z=\ld^{2/3}s$ получим уравнение $(Z+1)(2Z-1)^2=0$,
имеющее единственный положительный корень $Z=1/2$, причем этот корень кратный, поэтому имеем точку касания с координатами
\begin{equation}\label{eq9_5}
    h=\frac{1}{2}\left[\ld^{-2/3}+2\ld^{2/3}-\ld^2\right], \qquad \ell=\pm \frac{1}{2}\sqrt{\ld^{-2/3}-\ld^{2/3}},
\end{equation}
существующую при $\ld \ls 1$. Так как $1<\ld^*$, то нужно еще убедиться, что в найденной точке $h\gs h^*$. При $\ld \ls 1$ для значения $h$ из \eqref{eq9_5} имеем
\begin{equation}\notag
    h-h^* = \frac{1-\ld^{4/3}}{2\ld^{2/3}}\gs 0.
\end{equation}
Непосредственно проверяется, что точки \eqref{eq9_5} -- это образ вырожденных точек ранга $0$, отвечающих кривой $\vpi_{21}$ (см. предложение \ref{propos3}).

Условие пересечения $\gan \cap \gaa$ представим в виде $2s^3[4(h-\frac{\ld^2}{2})^3-27\ell^2]=0$, где $h,\ell$ --- значения \eqref{eq9_1}. Отсюда имеем
\begin{equation}\label{eq9_6}
(8s^2-2\ld^2 s+1)(s^2-2\ld^2s+1)^2=0.
\end{equation}
Первый сомножитель имеет корни
$$
s=\frac{1}{8}\left[\ld^2\pm \sqrt{\ld^4-8}\right],
$$
но тогда
$$
\ld^2= - \frac{1}{128}\left[\ld^2 \pm \sqrt{\ld^4-8}\right]^3<0,
$$
то есть эти решения --- посторонние. Второй сомножитель в \eqref{eq9_6} отвечает за точку касания и дает положительный корень
\begin{equation}\notag
s = \sqrt{\ld^4+1}-\ld^2,
\end{equation}
для которого точки
\begin{equation}\notag
    h=\frac{3}{2}\sqrt{\ld^4+1} -\ld^2, \qquad \ell=\pm \sqrt{\frac{1}{2}(\sqrt{\ld^4+1} -\ld^2)^3}
\end{equation}
существуют при всех $\ld$ и являются образом вырожденных точек ранга $0$, отвечающих кривой $\vpi_{22}$ (см. предложение \ref{propos6}).

Рассмотрим пересечения $\gaa \cap \gac$. Система \eqref{eq9_2}, \eqref{eq9_3} без учета ограничений имеет три решения
\begin{eqnarray}
& & h=\frac{1}{2} \left(3 \ld^{2/3} + \ld^2\right),\quad \ell^2=\frac{\ld^2}{2}; \label{eq9_7}\\
& &
\left\{\begin{array}{l}
h=\frac{1}{4} \left(-3 \ld^{2/3} + 2 \ld^2 + 3 \sqrt{3(2 - \ld^{4/3})}\right) \\ \ell^2=\frac{1}{8}\left(-9 \ld^{2/3} + 4 \ld^2 + 3 \sqrt{3(2 - \ld^{4/3})}\right)
\end{array}\right. ; \label{eq9_8}\\
& & \left\{\begin{array}{l}
h=\frac{1}{4} \left(-3 \ld^{2/3} + 2 \ld^2 - 3 \sqrt{3(2 - \ld^{4/3})}\right) \\ \ell^2=\frac{1}{8}\left(-9 \ld^{2/3} + 4 \ld^2 - 3 \sqrt{3(2 - \ld^{4/3})}\right)
\end{array}\right. . \label{eq9_9}
\end{eqnarray}
Пара точек \eqref{eq9_7} существует всегда.
Очевидно, в этих точках
$$
h-h^*=\ld^2>0, \quad h-h^{**}=\frac{1}{4}(4+\ld^{4/3})\left[3\ld^{2/3}-\sqrt{4+\ld^{4/3}}\right].
$$
Последняя разность должна быть положительна при $\ld>\ld^*$, но она отрицательна лишь при $\ld<1/2^{3/4}=\ld_*<\ld^*$, поэтому эти точки удовлетворяют всем ограничения. Интересно отметить, что эти точки лежат на тех же уровнях $\ell$, что и точки самопересечения кривой $\gan$ в соответствии с \eqref{eq9_4} (если последние существуют).

В точках \eqref{eq9_9} при условии вещественности $\ld<2^{4/3}$ значение $\ell^2$ оказывается отрицательным, поэтому такое решение --- постороннее.

Рассмотрим решение \eqref{eq9_8}, вещественное при всех $\ld<2^{4/3}$.
Проверим выполнение условий по $h$ на кривой $\gaa$. Исключая $x$ из \eqref{eq5_6} и полагая в \eqref{eq9_8} $h=h_C$, придем к уравнению
\begin{equation}\label{eq9_10}
    1+48 \ld^{4/3}+12 \ld^{8/3}-8 \ld^{4}-6\sqrt{3(2-\ld^{4/3})}(\ld^{2/3}+4 \ld^{2})=0.
\end{equation}
Подстановка $X=\ld^{4/3}$ сводит его к уравнению
$$
(2X-1 )^3 ( 2 X-1 - 3 \cdot 2^{1/3} + 3\cdot 2^{2/3}) \left[\left( 2 X  + \frac{3}{2^{2/3}} - \frac{3}{2^{1/3}}-1\right)^2 + \frac{27}{2^{2/3}} \left(1 + \frac{1}{2^{2/3}} + 2^{2/3}\right)\right]=0.
$$
Отсюда видно, что \eqref{eq9_10} имеет ровно два положительных решения: трехкратный корень $\ld=\ld_*$ и простой корень
\begin{equation}\label{eq9_11}
    \ld_3 = \left( \frac{45}{2^{1/3}}-\frac{99}{4 \cdot 2^{2/3}}-\frac{161}{8} \right)^{1/4} \approx 0.0287, \qquad \ld_1<\ld_3<\ld_*.
\end{equation}

Условие $h<h_C$ выполнено при $\ld>\ld_3$. Корень $\ld=\ld_*$ на это неравенство не  влияет, его появление связано с особой точкой \eqref{eq4_37}, определенной теми же уравнениями \eqref{eq5_6}, но с положительным $x$. Проверим выполнение в точках \eqref{eq9_8} условий по $h$ на кривой $\gac$. Имеем
$$
h-h^*=\frac{1}{4}\left[\ld^2-9\ld^{2/3}+3\sqrt{3(2-\ld^{4/3})} \right].
$$
Эта величина неотрицательна при $\ld<(3/2)^{3/4} \approx 1.3554$, и тем более это имеет место при $\ld<\ld^* \approx 1.2408$.
Пусть $\ld>\ld^*$. Тогда должна быть неотрицательна величина
$$
h-h^{**}=\frac{1}{4}\left[3\ld^2+3\ld^{2/3}+3\sqrt{3(2-\ld^{4/3})}-(4+\ld^{4/3})^{3/2} \right].
$$
Подстановка $X=(\ld^{2/3}+\sqrt{4+\ld^{4/3}})^2$ сводит условие $h\gs h^{**}$ к неравенству
$$
(X-8)^3(X^3-12X^2-32) \ls 0,
$$
решением которого является промежуток $X\in [8,2(2+2^{2/3}+2^{4/3})]$, тогда $\ld \in [\ld_*, \ld_4]$, где обозначено
\begin{equation}\label{eq9_12}
    \ld_4 = \left[\frac{1}{2}(25-27\cdot 2^{1/3}+9\cdot 2^{2/3})\right]^{1/4} \approx 1.2740.
\end{equation}
Итак, пара точек пересечения кривых $\gaa$ и $\gac$, определяемых уравнениями \eqref{eq9_8}, существует при $\ld_3<\ld \ls \ld_4$.

В результате получаем следующее утверждение.
\begin{theorem}\label{th14}
В случае Ковалевской\,--\,Яхья имеется десять структурно устойчивых диаграмм Смей\-ла--Фо\-мен\-ко $\smales(\ld)$. Разделяющими значениями параметра $\ld$ служат
\begin{equation}\label{eq9_13}
0,\quad \ld_1,\quad \ld_3, \quad \ld_*= 1/2^{3/4},\quad 1,\quad \ld^*=2 \sqrt{2}/3^{3/4},\quad \ld_4,\quad \ld_5=2\sqrt{\sqrt{2}-1},\quad \ld_2, \quad \sqrt{2},
\end{equation}
где $\ld_1,\ld_2,\ld_3,\ld_4$ определены равенствами {\rm \eqref{eq4_56}, \eqref{eq4_57}, \eqref{eq9_11}, \eqref{eq9_12}}.

В расширенном пространстве $\bR^3(\ell,h,\ld)$ расширенная диаграмма
$$
\mwide{\smales} = \bigcup_{\ld} (\smales(\ld){\times}{\{\ld\}}
$$
порождает 29 камер. При этом камеры Смейла $\mtC,\mtD,\mtF,\mtG$ не испытывают дополнительного разбиения, а из камер Смейла $\mtA,\mtB,\mtE,\mtH$ возникают камеры Смейла\,--\,Фоменко $\mtA_1 - \mtA_{13}$, $\mtB_1 - \mtB_{3}$, $\mtE_1 - \mtE_{6}$ и $\mtH_1 - \mtH_{3}$.
\end{theorem}

\def\wid{1}
\begin{figure}[!ht]
\centering
\includegraphics[width=\wid\textwidth,keepaspectratio]{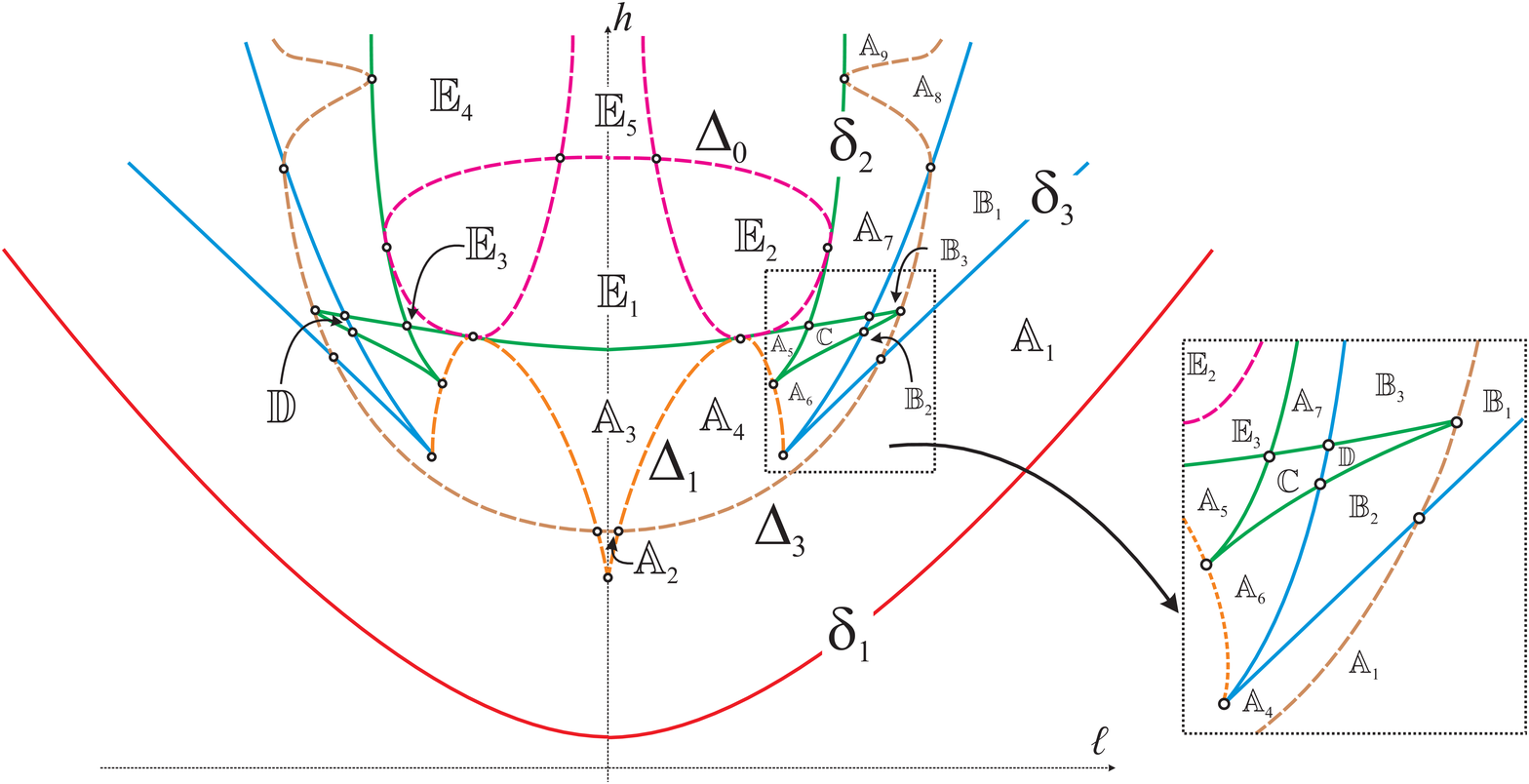}\\
\caption{Диаграмма Смейла--Фоменко при $\ld<\ld_1$.}\label{fig_smalefom01}
\end{figure}

На рис.~\ref{fig_smalefom01} показана диаграмма Смейла\,--\,Фоменко при малых $\ld$ ($0<\ld<\ld_1$). Допущены в целом небольшие гладкие искажения общей картины. Напомним, что, как это видно из диаграмм третьей критической подсистемы (см., например, рис.~\ref{fig_M3_LKeyABC_red}), множества $\Delta_3$ и $\delta_2$ имеют две общих точки в области $\ell>0$ при малых $\ld$ -- это точки $B_3,B_5$. В диаграмме Смейла\,--\,Фоменко кривая $\Delta_3$ проходит через точку возврата $\delta_2$. Это образ особой точки $B_3$ и он имеет конечный предел при $\ld \to 0$. Образ же второй точки $B_5$ имеет, согласно \eqref{eq5_20}, $h$-координату, стремящуюся к $\infty$. Поэтому
значительное искажение сделано на кривой $\Delta_3$ при больших $h$ с тем, чтобы показать ``далекую'' общую точку $\Delta_3$ с $\delta_2$ (точку касания) и границу между камерами $\mtA_8$ и $\mtA_9$.

Далее на рис.~\ref{fig_smalefom02} и \ref{fig_smalefom03} показаны изменения, связанные с переходами через $\ld=\ld_1$ (исчезают камеры $\mtB_3,\mtD$) и через $\ld=\ld_3$ (исчезает камера $\mtB_2$). Новых камер здесь не появляется.

\def\wid{1}
\begin{figure}[htp]
\centering
\includegraphics[width=\wid\textwidth,keepaspectratio]{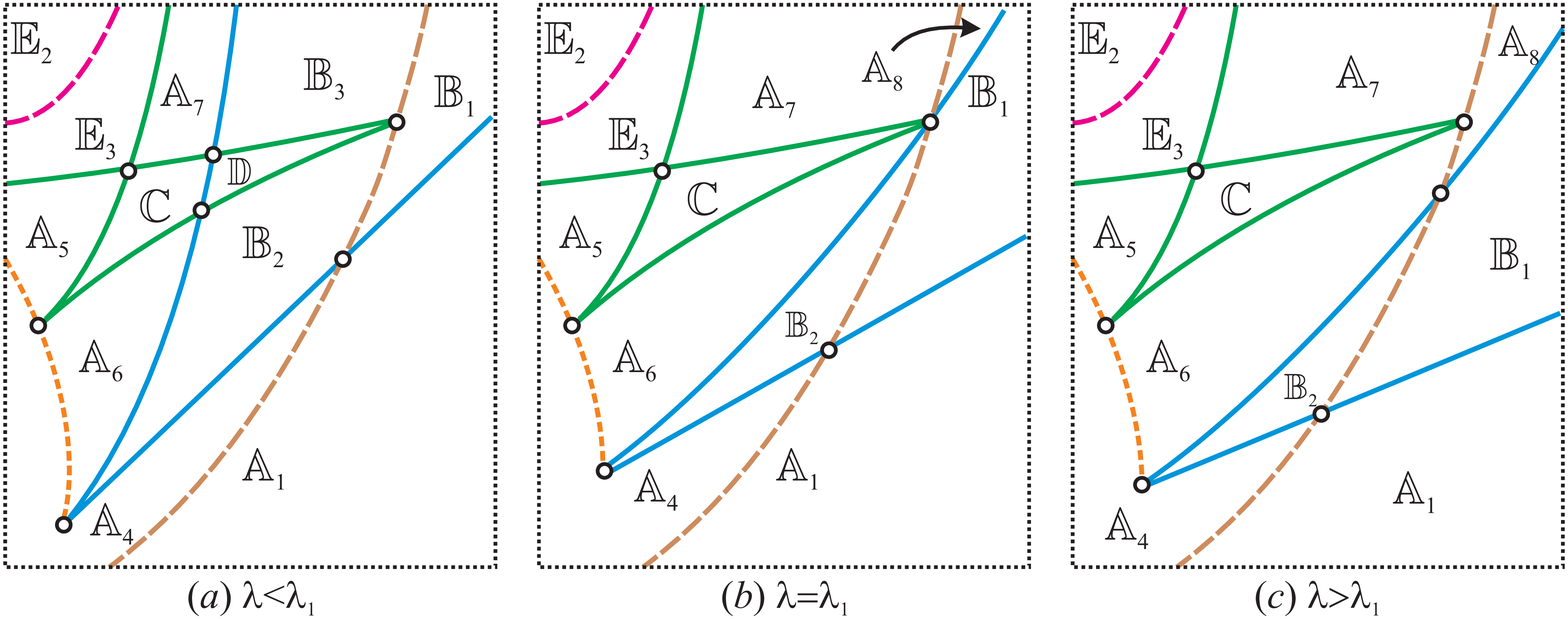}\\
\caption{Переход через $\ld_1$.}\label{fig_smalefom02}
\end{figure}

\def\wid{1}
\begin{figure}[htp]
\centering
\includegraphics[width=\wid\textwidth,keepaspectratio]{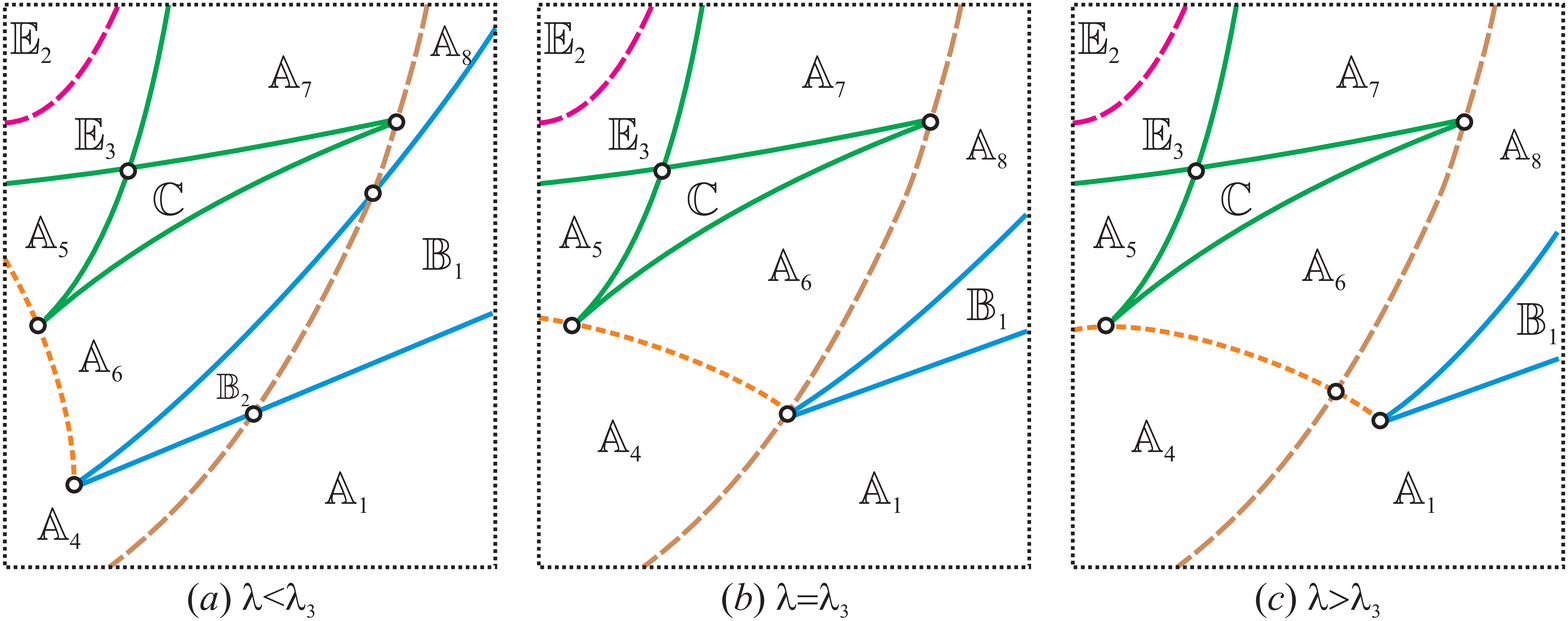}\\
\caption{Переход через $\ld_3$.}\label{fig_smalefom03}
\end{figure}

\def\wid{0.6}
\begin{figure}[htp]
\centering
\includegraphics[width=\wid\textwidth,keepaspectratio]{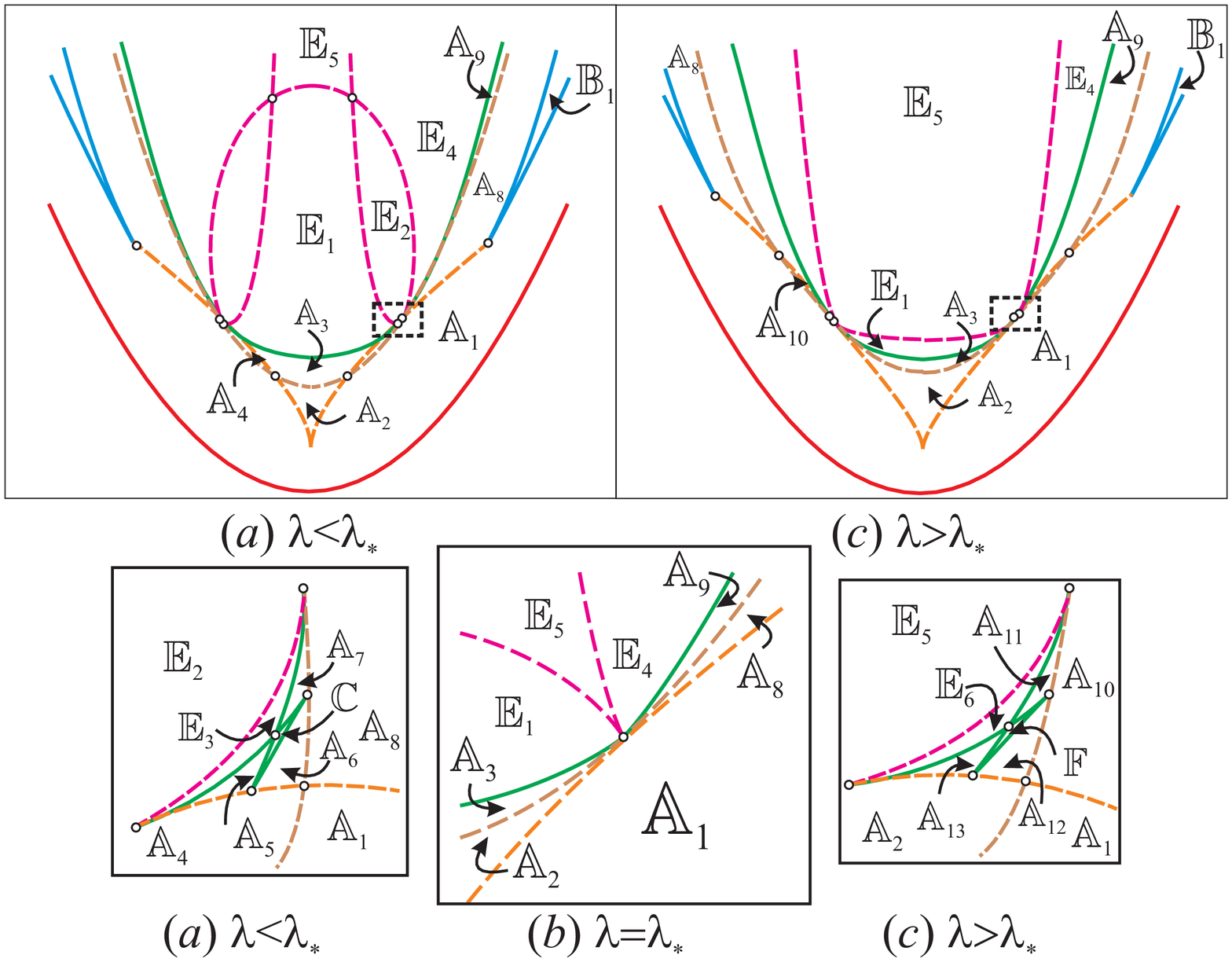}\\
\caption{Переход через $\ld_*$.}\label{fig_smalefom04}
\end{figure}


\def\wid{0.7}
\begin{figure}[htp]
\centering
\includegraphics[width=\wid\textwidth,keepaspectratio]{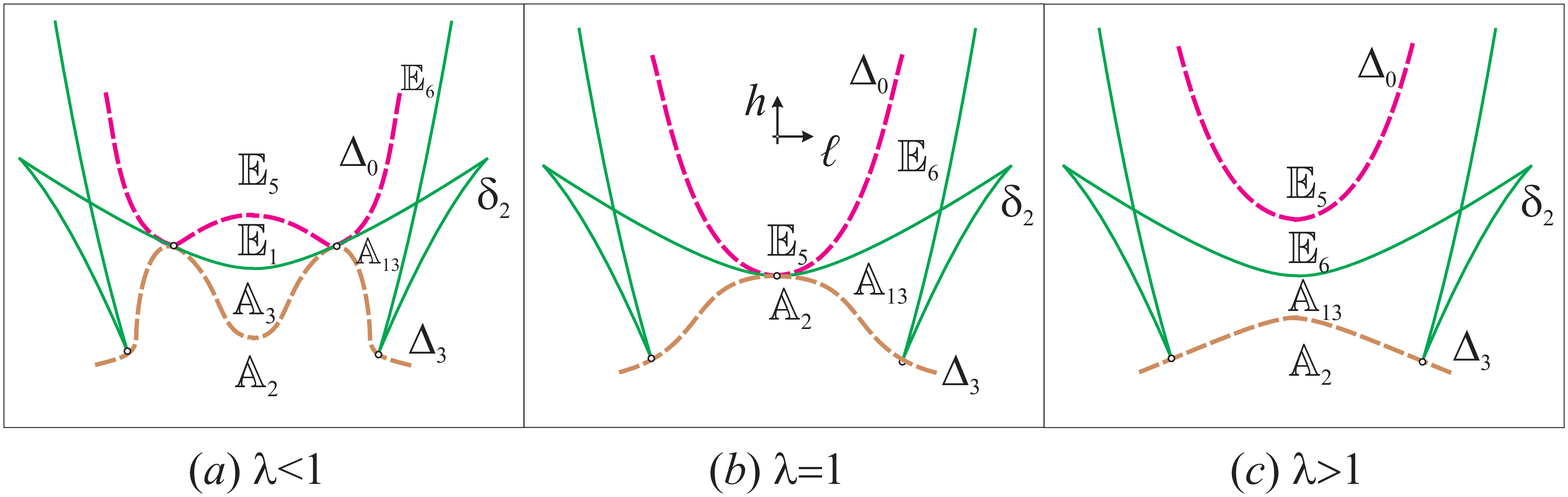}\\
\caption{Переход через $\ld=1$.}\label{fig_smalefom05}
\end{figure}

\def\wid{0.55}
\begin{figure}[htp]
\centering
\includegraphics[width=\wid\textwidth,keepaspectratio]{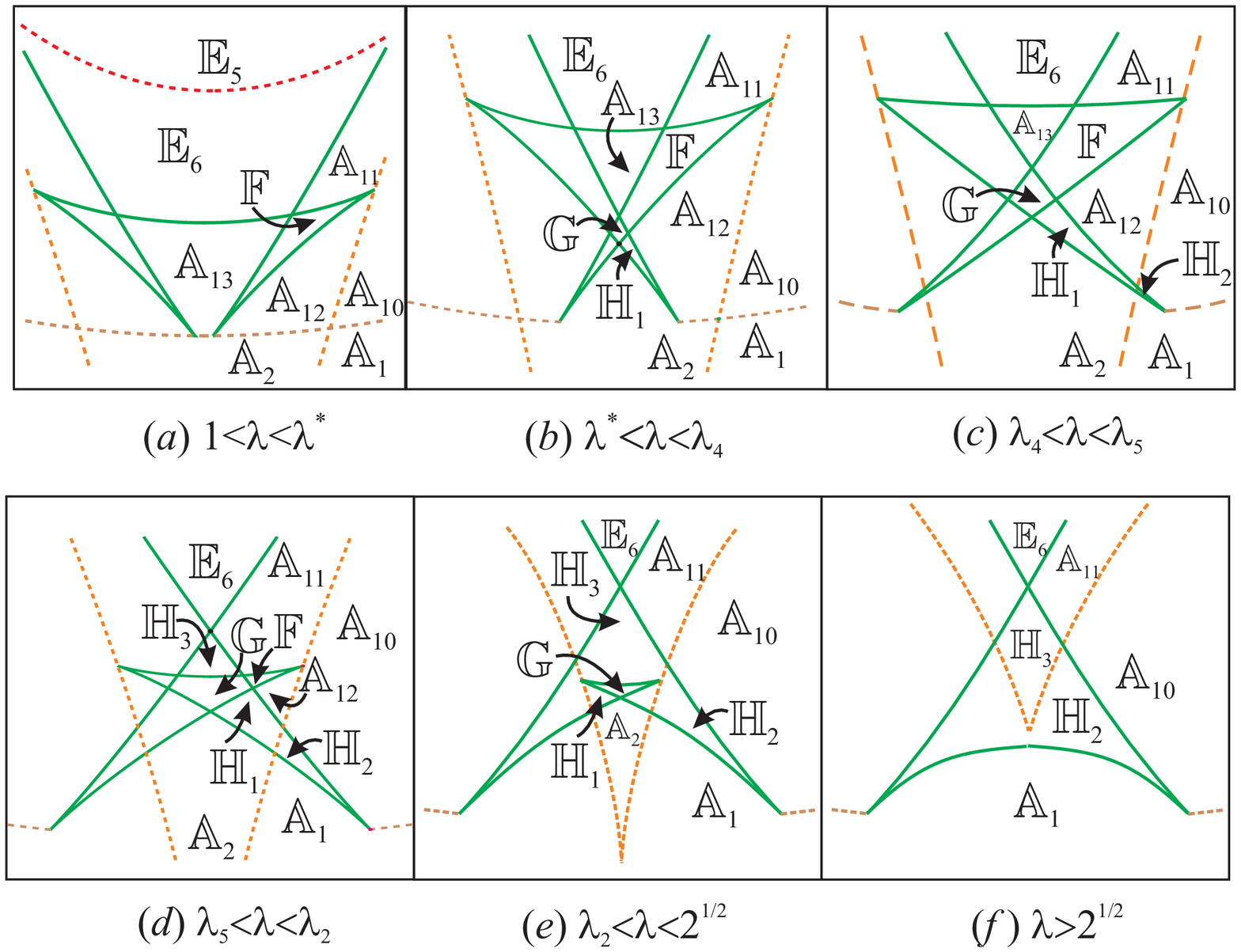}\\
\caption{Перестройки при больших $\ld$.}\label{fig_smalefom06}
\end{figure}

Значительные изменения в диаграмме Смейла\,--\,Фоменко (как и во всех ранее рассмотренных разделяющих множествах), а также в составе камер происходят при переходе через значение $\ld=\ld_*$. При этом значении стягиваются в точку (и затем исчезают) петля, окружающая камеру $\mtE_2$, камера $\mtE_3$, ласточкин хвост кривой $\delta_2$ вместе с камерами $\mtA_5,\mtA_6,\mtA_7,\mtC$, отрезки границ камеры $\mtA_4$. При $\ld>\ld_*$ снова появляется хвост на кривой $\delta_2$, но уже с другими новыми камерами в его окрестности. Это камеры $\mtA_{10}, \mtA_{11},\mtA_{12},\mtA_{13},\mtE_6,\mtF$. Переход показан на рис.~\ref{fig_smalefom04}.

Переход через $\ld=1$ достаточно прост -- общая точка касания $\delta_2,\Delta_0,\Delta_3$ попадает на ось симметрии $\ell=0$ и затем исчезает. При этом исчезают камеры $\mtA_3,\mtE_1$. Переход показан на рис.~\ref{fig_smalefom05}.

Все дальнейшие перестройки диаграмм Смейла\,--\,Фоменко связаны с точками на оси $\ell=0$. Область, содержащая все такие перестройки, и соответствующие камеры показаны на рис.~\ref{fig_smalefom06}.

При переходе через $\ld^*$ рождаются камеры $\mtG,\mtH_1$, при переходе через $\ld_4$ появляется камера $\mtH_2$. Переход через $\ld_5$ сопровождается исчезновением камеры $\mtA_{13}$ и появлением $\mtH_3$. При переходе через $\ld_2$ исчезает камера $\mtF$, а при переходе через последнее разделяющее значение $\sqrt{2}$ исчезают камеры $\mtG,\mtH_1,\mtA_2$. Ниже информация по времени существования камер собрана в таблицу (см. табл.~\ref{table6}).


\subsection{Графы Фоменко}
Для каждой камеры в пространстве $\bR^3(\ell,h,\ld)$, вырезанной диаграммой Смейла\,--\,Фоменко, определен грубый топологический инвариант --- граф Фоменко или молекула $W_{\ell,h}(\ld)$, то есть граф, полученный стягиванием в точку каждой связной компоненты интегрального многообразия с указанием для каждого критического уровня дополнительного интеграла $K$ типа и, при необходимости, ориентации соответствующего атома. Два графа Фоменко считаются совпадающими (индентичными), если существует гомеоморфизм графов, продолжающийся на атомы (подробности см. в  \cite{BolFom}).

\begin{figure}[ht]
\centering
\includegraphics[width=120mm,keepaspectratio]{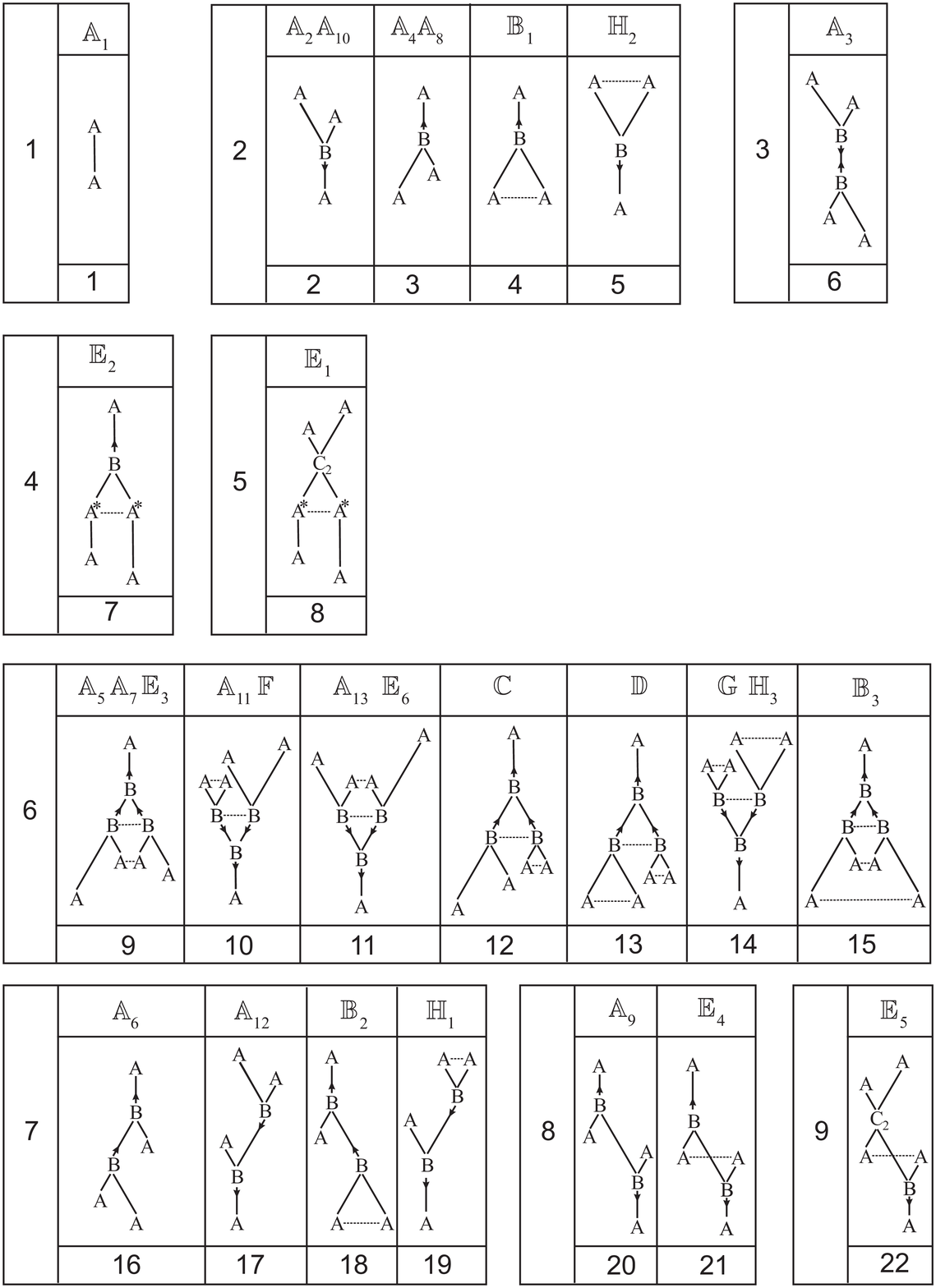}\\
\caption{Графы Фоменко}\label{fig_graph_fom_xn}
\end{figure}

Все возникающие в этой задаче графы Фоменко на гладких изоэнергетических уровнях $\iso$, не содержащих критических точек ранга 0 и вырожденных критических точек ранга 1, описаны в работе \cite{KhRy2011} и представлены на рис.~\ref{fig_graph_fom_xn}. Здесь в группы собраны графы, формально идентичные, но в рамках данной системы они не могут быть переведены один в другой. Группы 1--6 соответствуют типам графов $W_1-W_6$, найденным в работе \cite{Gash5}. Группы 7--9 являются новыми.  В группах 7,\,8 атомы $B$ не соединены ``голова в голову'', как в группе 3 и в типе $W_3$ \cite{Gash5}, а в группе 9, в отличие от графа $W_7$ \cite{Gash5}, ребро из атома $C_2$ идет в ``ногу'', а не в ``голову'' атома $B$, что порождает иное слоение Лиувилля.

Естественно, что имеются совпадающие графы Фоменко даже для различных типов изоэнергетических поверхностей. Для того, чтобы различать такие графы, необходимо применять тонкую классификацию \cite{BolFom}. Собирая информацию по всем возникающим камерам в табл.~\ref{table6}, видим, что для большинства графов метки могут быть получены непосредственно из аналогов для случаев $\ld=0$ или $\ell=0$. Отметим, что в работе \cite{KhRy2011} в аналогичной таблице допущена неточность -- время жизни камеры $\mtA_2$ указано неограниченным, хотя ее и соответствующего ей графа не существует при $\ld^2>2$.

\begin{center}
\small
\begin{longtable}{| c| c| c|c|c|}
\multicolumn{5}{r}{\fts{Таблица \myt\label{table6}}}\\[2mm]
\hline
\begin{tabular}{c}\fts{Камера}\end{tabular} &\begin{tabular}{c}\fts{Группа графа}\\[-3pt]\fts{(номер)}\end{tabular}
&\begin{tabular}{c}\fts{Время жизни}\\[-3pt]\fts{по $\ld$}\end{tabular}&\begin{tabular}{c}\fts{Выход на}\\[-3pt]\fts{$\ld=0$}/\fts{$\ell=0$}
\end{tabular} &
\begin{tabular}{c}\fts{Меченая}\\[-3pt]\fts{молекула}\end{tabular}\\
\hline
$\mtA_1$ & 1(1) & $ 0 \ls \ld <+\infty$ & Да/Да & \begin{tabular}{c}$A$\,\cite[Табл.~3]{BRF},\\ \,\cite[Табл.~8]{Mor}\end{tabular}\\
\hline\endfirsthead%
\multicolumn{5}{r}{\fts{Таблица \ref{table6} (продолжение)}}\\[2mm]
\hline
\begin{tabular}{c}\fts{Камера}\end{tabular} &\begin{tabular}{c}\fts{Группа графа}\\[-3pt]\fts{(номер)}\end{tabular}
&\begin{tabular}{c}\fts{Время жизни}\\[-3pt]\fts{по $\ld$}\end{tabular}&\begin{tabular}{c}\fts{Выход на}\\[-3pt]\fts{$\ld=0$}/\fts{$\ell=0$}
\end{tabular} &
\begin{tabular}{c}\fts{Меченая}\\[-3pt]\fts{молекула}\end{tabular}\\
\hline
$\mtA_1$ & 1(1) & $ 0 \ls \ld <+\infty$ & Да/Да & \begin{tabular}{c}$A$\,\cite[Табл.~3]{BRF},\\ \,\cite[Табл.~8]{Mor}\end{tabular}\\
\hline\endhead%

$\mtA_2$ & 2(2) & $ 0 < \ld <\sqrt{2}$ & Нет/Да & $B$\,\cite[Табл.~8]{Mor}\myrul\\

\hline
$\mtA_3$ & 3(6) & $ 0 \ls \ld < 1$ & Да/Да & \begin{tabular}{c}$C$\,\cite[Табл.~3]{BRF},\\ \,\cite[Табл.~8]{Mor}\end{tabular}\\

\hline
$\mtA_4$ & 2(3) & $ 0 \ls \ld <\ld_*$ & Да/Нет & $B$\,\cite[Табл.~3]{BRF}\myrul\\

\hline
$\mtA_5$ & 6(9) & $ 0 \ls \ld <\ld_*$ & Да/Нет & $J$\,\cite[Табл.~3]{BRF}\myrul\\

\hline
$\mtA_6$ & 7(16) & $ 0 < \ld < \ld_*$ & Нет/Нет & \myrul\\

\hline
$\mtA_7$ & 6(9) & $ 0 < \ld < \ld_*$ & Нет/Нет & \myrul\\

\hline
$\mtA_8$ & 2(3) & $ 0 < \ld <+\infty$ & Нет/Нет & \myrul\\

\hline
$\mtA_9$ & 8(20) & $ 0 < \ld <+\infty$ & Нет/Нет & \myrul\\

\hline
$\mtA_{10}$ & 2(2) & $ \ld > \ld_* $ & Нет/Нет & \myrul\\

\hline
$\mtA_{11}$ & 6(10) & $ \ld > \ld_* $ & Нет/Нет & \myrul\\

\hline
$\mtA_{12}$ & 7(17) & $ \ld_* < \ld < \ld_2 $ & Нет/Нет & \myrul\\

\hline
$\mtA_{13}$ & 6(11) & $ \ld_* < \ld < \ld_5 $ & Нет/Да & $F$\,\cite[Табл.~8]{Mor}\myrul\\

\hline

$\mtB_{1}$ & 2(4) & $ 0 \ls \ld < +\infty $ & Да/Нет & $F$\,\cite[Табл.~3]{BRF}\myrul\\

\hline
$\mtB_{2}$ & 7(18) & $ 0 < \ld < \ld_3 $ & Нет/Нет & \myrul\\

\hline

$\mtB_{3}$ & 6(15) & $ 0 \ls \ld < \ld_1 $ & Да/Нет & $G$\,\cite[Табл.~3]{BRF}\myrul\\

\hline
$\mtC$ & 6(12) & $ 0 <\ld < \ld_* $ & Нет/Нет & \myrul\\

\hline
$\mtD$ & 6(13) & $ 0 \ls\ld < \ld_1 $ & Да/Нет & $I$\,\cite[Табл.~3]{BRF}\myrul\\

\hline
$\mtE_1$ & 5(8) & $ 0 \ls\ld < 1 $ & Да/Да & \begin{tabular}{c}$D$\,\cite[Табл.~3]{BRF},\\ \,\cite[Табл.~8]{Mor}\end{tabular}\\

\hline
$\mtE_2$ & 4(7) & $ 0 \ls\ld < \ld_* $ & Да/Нет & $E$\,\cite[Табл.~3]{BRF}\myrul\\

\hline
$\mtE_3$ & 6(9) & $ 0 \ls\ld < \ld_* $ & Да/Нет & $H$\,\cite[Табл.~3]{BRF}\myrul\\

\hline
$\mtE_4$ & 8(21) & $ 0 <\ld < +\infty $ & Нет/Нет & \myrul\\

\hline
$\mtE_5$ & 9(22) & $ 0 <\ld < +\infty $ & Нет/Да & $E$\,\cite[Табл.~8]{Mor}\myrul\\

\hline
$\mtE_6$ & 6(11) & $ \ld_* <\ld < +\infty $ & Нет/Да & $G$\,\cite[Табл.~8]{Mor}\myrul\\

\hline
$\mtF$ & 6(10) & $ \ld_* <\ld < \ld_2$ & Нет/Нет & \myrul\\

\hline
$\mtG$ & 6(14) & $ \ld^* <\ld < \sqrt{2}$ & Нет/Да & $H$\,\cite[Табл.~8]{Mor}\myrul\\

\hline
$\mtH_1$ & 7(19) & $ \ld^* <\ld < \sqrt{2}$ & Нет/Нет & \myrul\\

\hline
$\mtH_2$ & 2(5) & $ \ld_4 <\ld < +\infty$ & Нет/Да & $J$\,\cite[Табл.~8]{Mor}\myrul\\

\hline
$\mtH_3$ & 6(14) & $ \ld_5 <\ld < +\infty$ & Нет/Да & $I$\,\cite[Табл.~8]{Mor}\myrul\\

\hline
\end{longtable}

\end{center}

\begin{figure}[!ht]
\centering
\includegraphics[width=60mm,keepaspectratio]{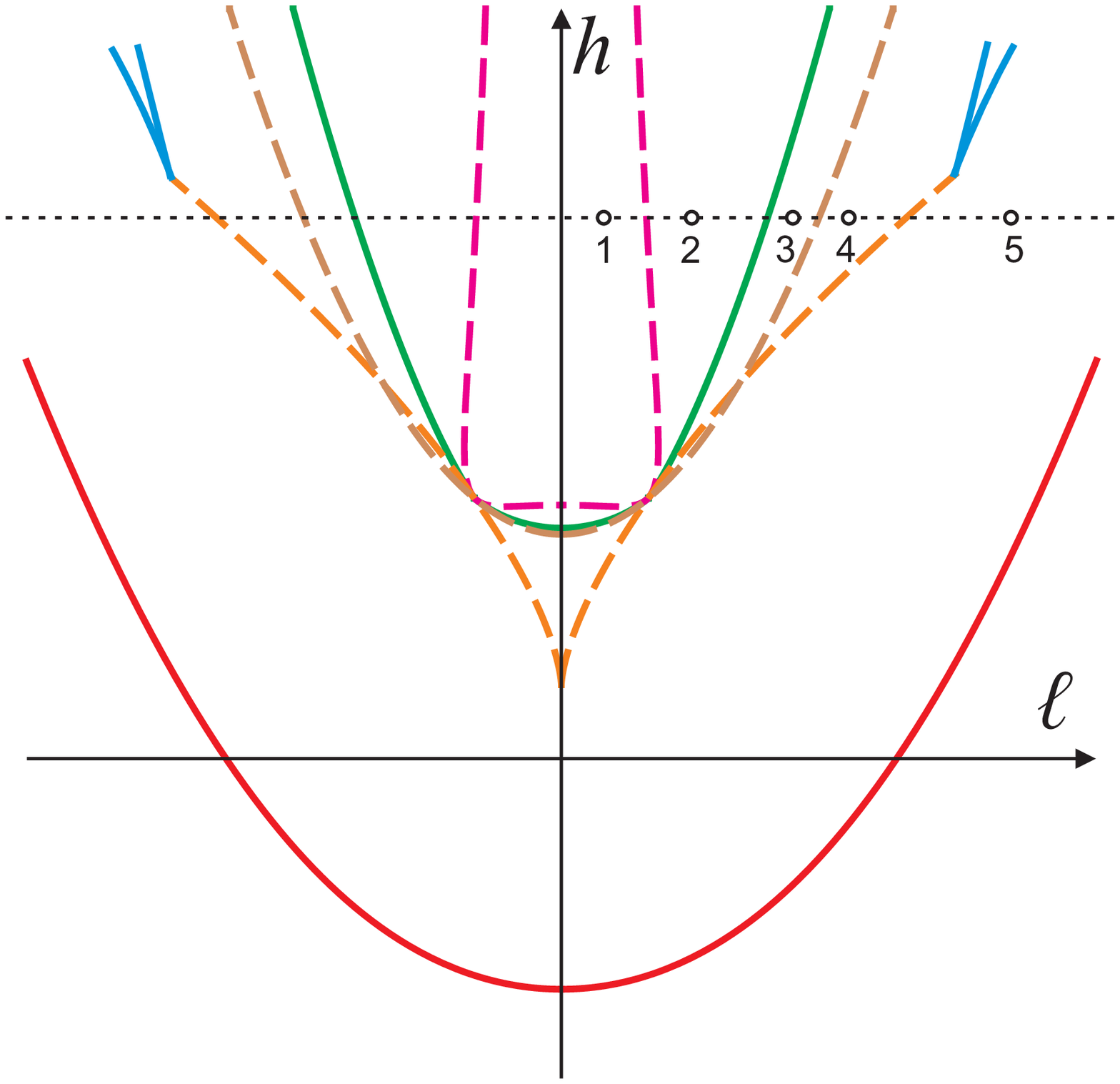}\\
\caption{Диаграмма Смейла--Фоменко для $\ld=0.8$.}\label{fig_smalefomd}
\end{figure}

\begin{figure}[!ht]
\centering
\includegraphics[width=120mm,keepaspectratio]{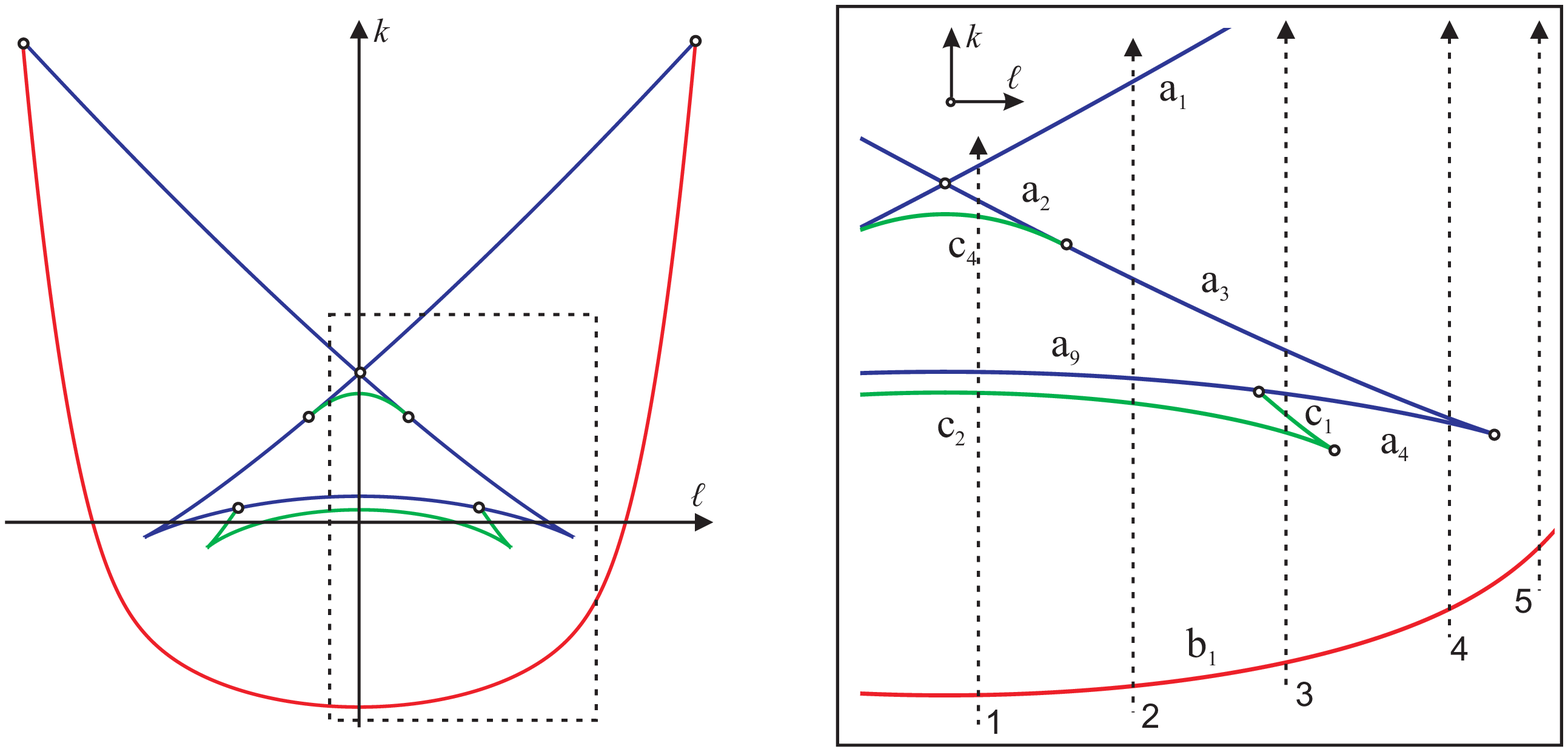}\\
\caption{Бифуркационная диаграмма в $(\ell,k)$-плоскости для $\ld=0.8,h=2.5$.}\label{fig_bifdiag}
\end{figure}

Метки на графах, не имеющих аналогов, можно расставить с помощью найденных выше круговых молекул узловых точек бифуркационных диаграмм. Один из вариантов представления результата (с матрицами склейки) приведен в работе~\cite{Slav2Rus}.

Проиллюстрируем работу ``конструктора'' графов Фоменко на примерах. Рассмотрим диаграммы Смейла\,--\,Фоменко для средних значений $\ld$ (выбрано $\ld=0.8$) и пусть $h=2.5$ (рис.~\ref{fig_smalefomd}). Этот уровень при возрастании $\ell$ от нуля пересекает пять камер $\mtE_5,\mtE_4,\mtA_9,\mtA_8,\mtA_1$ (пути $\ell=\cons$ на рис.~\ref{fig_smalefomd} занумерованы цифрами 1,...,5).
В соответствующем $h$-сечении диаграммы $\Sigma(\ld)$ графы Фоменко определяются бифуркациями вдоль прямых $\ell=\cons$ при возрастании $k$ (пять пунктирных стрелок на рис.~\ref{fig_bifdiag}). Еще 8 камер можно увидеть на одном уровне $h$ при малых $\ld$. Диаграмма Смейла\,--\,Фоменко для $\ld=0.1$ и уровень $h=1.8$ показаны на рис.~\ref{fig_smalefomd_1}. Соответствующие пути $\ell=\cons$ на диаграмме $\Sigma(\ld)$ с обозначением пересекаемых дуг показаны на рис.~\ref{fig_bifdiag_1}.
Пересекаемые дуги отвечают областям, определенным диаграммами критических подсистем. Последовательности этих пересечений и соответствующих атомов для путей с номерами 1,...,13 приведены в табл.~\ref{table7}.

\begin{center}
\small
\begin{longtable}{| c| c| c|c|c|}
\multicolumn{5}{r}{\fts{Таблица \myt\label{table7}}}\\[2mm]
\hline
\begin{tabular}{c}\fts{Номер}\\[-3pt]\fts{пути}\end{tabular} &\begin{tabular}{c}\fts{Камера}\end{tabular} &\begin{tabular}{c}\fts{Последовательность}\\[-3pt]\fts{дуг}\end{tabular}
&\begin{tabular}{c}\fts{Последовательность}\\[-3pt]\fts{атомов}\end{tabular}&\begin{tabular}{c}\fts{Группа графа}\\[-3pt]\fts{(номер)}
\end{tabular} \\
\hline\endfirsthead%
\multicolumn{5}{r}{\fts{Таблица \ref{table7} (продолжение)}}\\[2mm]
\hline
\begin{tabular}{c}\fts{Номер}\\[-3pt]\fts{пути}\end{tabular} &\begin{tabular}{c}\fts{Камера}\end{tabular} &\begin{tabular}{c}\fts{Последовательность}\\[-3pt]\fts{дуг}\end{tabular}
&\begin{tabular}{c}\fts{Последовательность}\\[-3pt]\fts{атомов}\end{tabular}&\begin{tabular}{c}\fts{Группа графа}\\[-3pt]\fts{(номер)}
\end{tabular} \\
\hline\endhead%
{1}&{$\mtE_5$}&\begin{tabular}{c}\fts{$b_1\to c_2\to a_9\to$}\\[-3pt] \fts{$\to c_4\to a_2\to a_1$}\end{tabular}&
\begin{tabular}{c}\fts{$A_+ \to B_+ \to (A_+,A_-) \to$}\\[-3pt] \fts{$\to C_2 \to A_- \to A_-$}\end{tabular}
&{9(22)}\\
\hline
{2}&{$\mtE_4$}&
\begin{tabular}{c}\fts{$b_1\to c_2\to a_9\to$}\\[-3pt] \fts{$\to a_3\to a_1$}\end{tabular}
&\begin{tabular}{c}\fts{$A_+\to B_+ \to (A_+,A_-)\to$}\\[-3pt] \fts{$\to  B_-\to A_-$}\end{tabular}&{8(21)}\\
\hline
{3}&{$\mtA_9$}&
\begin{tabular}{c}\fts{$b_1\to c_2\to c_1\to$}\\[-3pt] \fts{$\to a_4\to a_3\to a_1$}\end{tabular}
&\begin{tabular}{c}\fts{$A_+\to B_+ \to A_- \to A_+\to$}\\[-3pt] \fts{$\to  B_-\to A_-$}\end{tabular}&{8(20)}\\
\hline
{4}&{$\mtA_8$}&
\begin{tabular}{c}\fts{$b_1\to a_4\to a_3\to a_1$}\end{tabular}
&\fts{$A_+ \to A_+ \to B_- \to A_-$}&{2(3)}\myrul\\
\hline
{5}&{$\mtA_1$}&
\fts{$b_1\to a_1$}&\fts{$A_+ \to A_-$}&{1(1)}\myrul\\
\hline
{6}&{$\mtB_2$}&
\begin{tabular}{c}\fts{$b_2\to a_3\to c_6\to$}\\[-3pt] \fts{$\to c_7\to a_1$}\end{tabular}
&\begin{tabular}{c}\fts{$2A_+ \to B_- \to A_+\to$}\\[-3pt] \fts{$\to B_-\to A_-$}\end{tabular}&{7(18)}\\
\hline
{7}&{$\mtA_6$}&
\begin{tabular}{c}\fts{$b_1\to a_4\to a_3\to$}\\[-3pt] \fts{$\to c_6\to c_7\to a_1$}\end{tabular}
&\begin{tabular}{c}\fts{$A_+ \to A_+ \to B_-\to$}\\[-3pt] \fts{$\to A_+ \to B_-\to A_-$}\end{tabular}&{7(16)}\\
\hline
{8}&{$\mtC$}&
\begin{tabular}{c}\fts{$b_1\to a_4\to c_8\to$}\\[-3pt] \fts{$\to a_7\to c_7\to a_1$}\end{tabular}
&\begin{tabular}{c}\fts{$A_+ \to A_+ \to 2A_+ \to 2B_-\to$}\\[-3pt] \fts{$\to B_- \to A_-$}\end{tabular}&{6(12)}\\
\hline
{9}&{$\mtA_5$}&
\begin{tabular}{c}\fts{$b_1\to c_6\to a_8\to$}\\[-3pt] \fts{$\to a_7\to c_7\to a_1$}\end{tabular}
&\begin{tabular}{c}\fts{$A_+ \to A_+ \to 2A_+ \to 2B_-\to$}\\[-3pt] \fts{$\to B_- \to A_-$}\end{tabular}&{6(9)}\\
\hline
{10}&{$\mtE_3$}&
\begin{tabular}{c}\fts{$b_1\to c_6\to a_8\to$}\\[-3pt] \fts{$\to c_9\to a_3\to a_1$}\end{tabular}
&\begin{tabular}{c}\fts{$A_+\to A_+ \to 2A_+ \to 2B_-\to$}\\[-3pt] \fts{$\to  B_-\to A_-$}\end{tabular}&{6(9)}\\
\hline
{11}&{$\mtE_2$}&
\begin{tabular}{c}\fts{$b_1\to c_6\to a_5\to$}\\[-3pt] \fts{$\to a_3\to a_1$}\end{tabular}
&\begin{tabular}{c}\fts{$A_+\to A_+ \to 2A^*\to$}\\[-3pt] \fts{$\to  B_-\to A_-$}\end{tabular}&{4(7)}\\
\hline
{12}&{$\mtE_1$}&\begin{tabular}{c}\fts{$b_1\to c_6\to a_5\to$}\\[-3pt] \fts{$\to c_4\to a_2\to a_1$}\end{tabular}&
\begin{tabular}{c}\fts{$A_+ \to A_+ \to 2A^* \to$}\\[-3pt] \fts{$\to C_2 \to A_- \to A_-$}\end{tabular}
&{5(8)}\\
\hline
{13}&{$\mtA_4$}&
\fts{$b_1\to c_6\to c_7\to a_1$}&\fts{$A_+ \to A_+\to B_-\to A_-$}&{2(3)}\myrul\\
\hline
\end{longtable}
\end{center}


\begin{figure}[!ht]
\centering
\includegraphics[width=120mm,keepaspectratio]{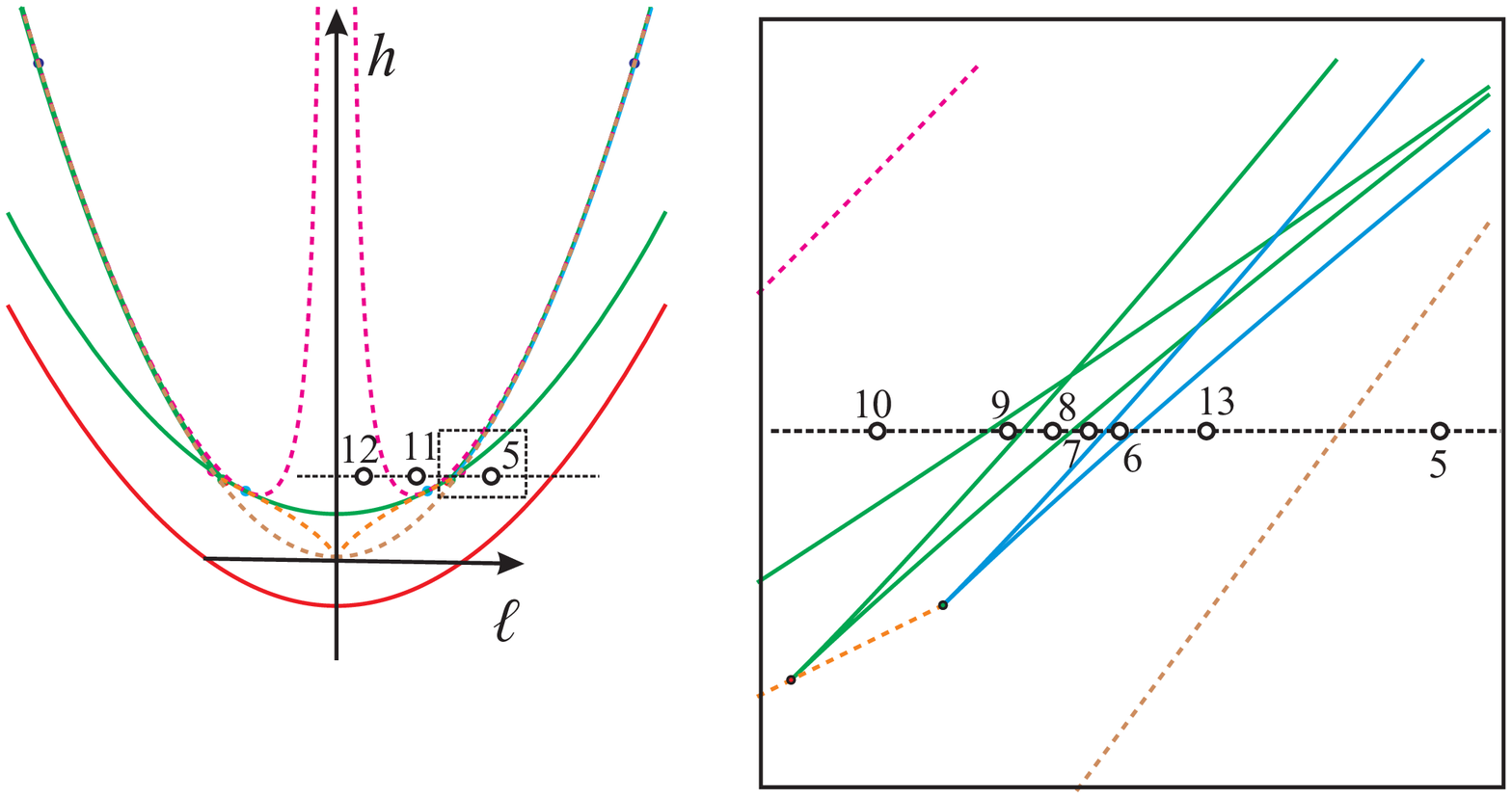}\\
\caption{Диаграмма Смейла--Фоменко для $\ld=0.1$.}\label{fig_smalefomd_1}
\end{figure}

\begin{figure}[!ht]
\centering
\includegraphics[width=120mm,keepaspectratio]{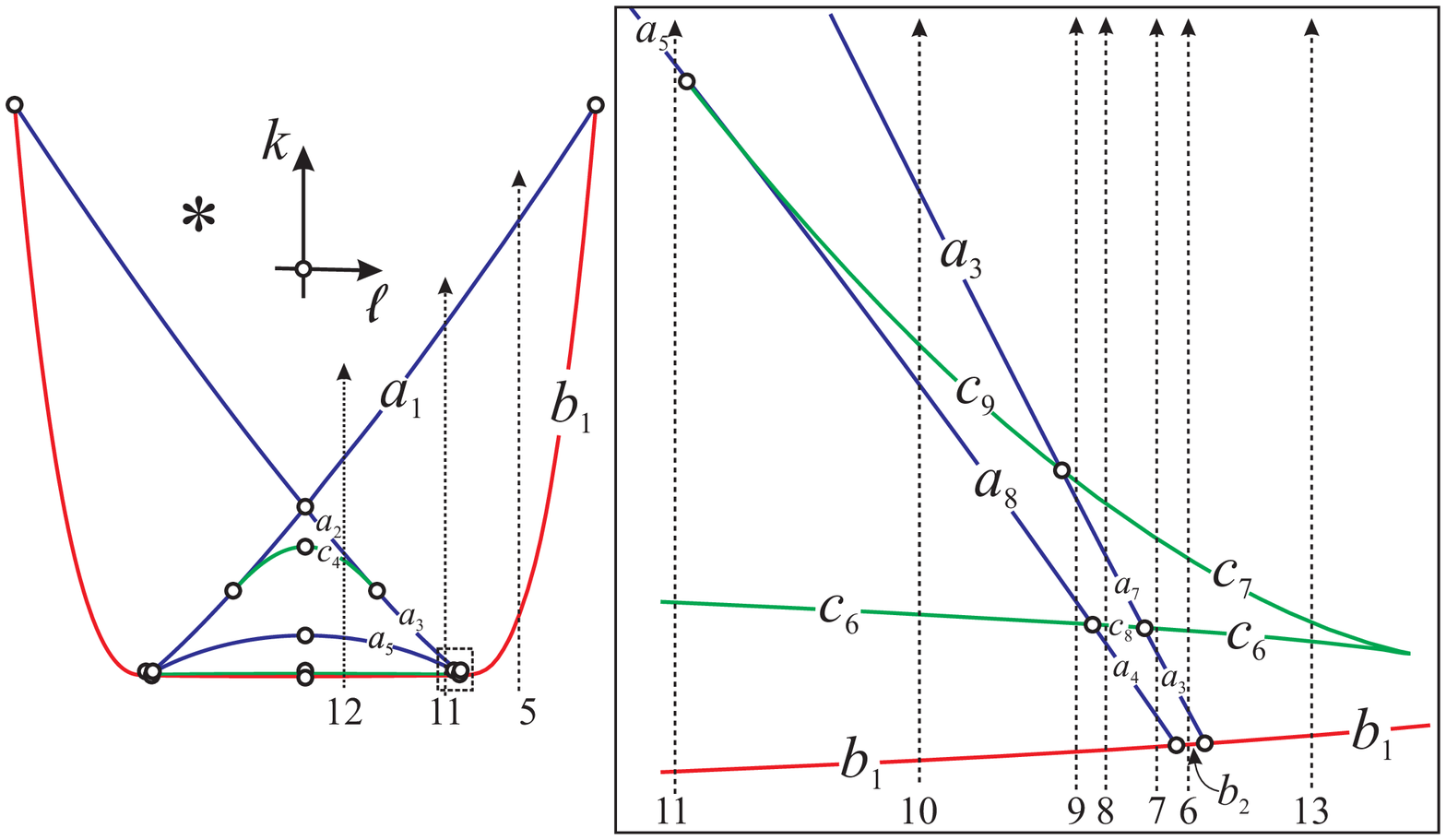}\\
\caption{Бифуркационная диаграмма в $(\ell,k)$-плоскости для $\ld=0.1,h=1.8$.}\label{fig_bifdiag_1}
\end{figure}


\begin{center}
\small

\begin{longtable}{| c| c| c|c|}
\multicolumn{4}{r}{\fts{Таблица \myt\label{table8}}}\\[2mm]
\hline
\begin{tabular}{c}\fts{Камера}\end{tabular} &\begin{tabular}{c}\fts{Последовательность}\\[-3pt]\fts{дуг}\end{tabular}
&\begin{tabular}{c}\fts{Последовательность}\\[-3pt]\fts{атомов}\end{tabular}&\begin{tabular}{c}\fts{Группа графа}\\[-3pt]\fts{(номер)}
\end{tabular} \\
\hline\endfirsthead%
\multicolumn{4}{r}{\fts{Таблица \ref{table8}}}\\[2mm]
\hline
\begin{tabular}{c}\fts{Камера}\end{tabular} &\begin{tabular}{c}\fts{Последовательность}\\[-3pt]\fts{дуг}\end{tabular}
&\begin{tabular}{c}\fts{Последовательность}\\[-3pt]\fts{атомов}\end{tabular}&\begin{tabular}{c}\fts{Группа графа}\\[-3pt]\fts{(номер)}
\end{tabular} \\
\hline\endhead%
\myrul{$\mtA_1 (f_1)$}&
\fts{$b_1\to a_1$}&\fts{$A_+ \to A_-$}&{1(1)}\\
\hline
\myrul{$\mtA_2 (f_2)$}&
\begin{tabular}{c}\fts{$b_1\to a_6\to a_2\to a_1$}\end{tabular}&
\begin{tabular}{c}\fts{$A_+ \to B_+ \to A_- \to A_-$}\end{tabular}
&{2(2)}\\
\hline
{$\mtA_3 (f_3)$}&
\begin{tabular}{c}\fts{$b_1\to c_6\to c_7\to$}\\[-3pt] \fts{$\to a_6\to a_2\to a_1$}\end{tabular}
&\begin{tabular}{c}\fts{$A_+\to A_+ \to B_-\to$}\\[-3pt] \fts{$\to  B_+\to A_-\to A_-$}\end{tabular}&{3(6)}\\
\hline
\myrul{$\mtA_4 (f_4)$}&
\fts{$b_1\to c_6\to c_7\to a_1$}&\fts{$A_+ \to A_+\to B_-\to A_-$}&{2(3)}\\
\hline
{$\mtA_5 (f_5)$}&
\begin{tabular}{c}\fts{$b_1\to c_6\to a_8\to$}\\[-3pt] \fts{$\to a_7\to c_7\to a_1$}\end{tabular}
&\begin{tabular}{c}\fts{$A_+ \to A_+ \to 2A_+ \to$}\\[-3pt] \fts{$\to 2B_-\to B_- \to A_-$}\end{tabular}&{6(9)}\\
\hline
{$\mtA_6 (f_6)$}&
\begin{tabular}{c}\fts{$b_1\to a_4\to a_3\to$}\\[-3pt] \fts{$\to c_6\to c_7\to a_1$}\end{tabular}
&\begin{tabular}{c}\fts{$A_+ \to A_+ \to B_-\to$}\\[-3pt] \fts{$\to A_+ \to B_-\to A_-$}\end{tabular}&{7(16)}\\
\hline
{$\mtA_7 (f_7)$}&
\begin{tabular}{c}\fts{$b_1\to a_4\to c_8\to$}\\[-3pt] \fts{$\to c_9\to a_3\to a_1$}\end{tabular}
&\begin{tabular}{c}\fts{$A_+ \to A_+ \to 2A_+\to$}\\[-3pt] \fts{$\to 2B_- \to B_-\to A_-$}\end{tabular}&{6(9)}\\
\hline
\myrul{$\mtA_8 (f_8)$}&
\begin{tabular}{c}\fts{$b_1\to a_4\to a_3\to a_1$}\end{tabular}
&\fts{$A_+ \to A_+ \to B_- \to A_-$}&{2(3)}\\
\hline
{$\mtA_9 (f_9)$}&
\begin{tabular}{c}\fts{$b_1\to c_2\to c_1\to$}\\[-3pt] \fts{$\to a_4\to a_3\to a_1$}\end{tabular}
&\begin{tabular}{c}\fts{$A_+\to B_+ \to A_- \to$}\\[-3pt] \fts{$ \to A_+\to  B_-\to A_-$}\end{tabular}&{8(20)}\\
\hline
{$\mtA_{10} (f_{10})$}&
\begin{tabular}{c}\fts{$b_1\to c_2\to c_1\to a_1$}\end{tabular}
&\begin{tabular}{c}\fts{$A_+\to B_+ \to A_- \to A_-$}\end{tabular}&{2(2)}\\
\hline
{$\mtA_{11} (f_{11})$}&
\begin{tabular}{c}\fts{$b_1\to c_2\to a_{11}\to$}\\[-3pt] \fts{$\to a_{10}\to c_1\to a_1$}\end{tabular}
&\begin{tabular}{c}\fts{$A_+\to B_+ \to 2B_+ \to$}\\[-3pt] \fts{$ \to 2A_-\to  A_-\to A_-$}\end{tabular}&{6(10)}\\
\hline
{$\mtA_{12} (f_{12})$}&
\begin{tabular}{c}\fts{$b_1\to a_6\to a_2\to$}\\[-3pt] \fts{$\to c_2\to c_1\to a_1$}\end{tabular}
&\begin{tabular}{c}\fts{$A_+\to B_+ \to A_- \to$}\\[-3pt] \fts{$ \to B_+\to  A_-\to A_-$}\end{tabular}&{7(17)}\\
\hline
{$\mtA_{13} (f_{13})$}&
\begin{tabular}{c}\fts{$b_1\to a_6\to c_5\to$}\\[-3pt] \fts{$\to c_3\to a_2\to a_1$}\end{tabular}
&\begin{tabular}{c}\fts{$A_+\to B_+ \to 2B_+ \to$}\\[-3pt] \fts{$ \to 2A_-\to  A_-\to A_-$}\end{tabular}&{6(11)}\\
\hline
\myrul{$\mtB_1 (f_{14})$}&
\begin{tabular}{c}\fts{$b_2\to a_3\to a_1$}\end{tabular}
&\begin{tabular}{c}\fts{$2A_+ \to B_- \to A_-$}\end{tabular}&{2(4)}\\
\hline
{$\mtB_2 (f_{15})$}&
\begin{tabular}{c}\fts{$b_2\to a_3\to c_6\to$}\\[-3pt] \fts{$\to c_7\to a_1$}\end{tabular}
&\begin{tabular}{c}\fts{$2A_+ \to B_- \to A_+\to$}\\[-3pt] \fts{$\to B_-\to A_-$}\end{tabular}&{7(18)}\\
\hline
{$\mtB_3 (f_{16})$}&
\begin{tabular}{c}\fts{$b_2\to c_8\to c_9\to$}\\[-3pt] \fts{$\to a_3\to a_1$}\end{tabular}
&\begin{tabular}{c}\fts{$2A_+ \to 2A_+ \to 2B_-\to$}\\[-3pt] \fts{$\to B_-\to A_-$}\end{tabular}&{6(15)}\\
\hline
{$\mtC (f_{17})$}&
\begin{tabular}{c}\fts{$b_1\to a_4\to c_8\to$}\\[-3pt] \fts{$\to a_7\to c_7\to a_1$}\end{tabular}
&\begin{tabular}{c}\fts{$A_+ \to A_+ \to 2A_+ \to$}\\[-3pt] \fts{$\to 2B_-\to B_- \to A_-$}\end{tabular}&{6(12)}\\
\hline
{$\mtD (f_{18})$}&
\begin{tabular}{c}\fts{$b_2\to c_8\to a_7\to$}\\[-3pt] \fts{$\to c_7\to a_1$}\end{tabular}
&\begin{tabular}{c}\fts{$2A_+ \to 2A_+ \to 2B_- \to$}\\[-3pt] \fts{$\to B_- \to A_-$}\end{tabular}&{6(13)}\\
\hline
{$\mtE_1 (f_{19})$}&
\begin{tabular}{c}\fts{$b_1\to c_6\to a_5\to$}\\[-3pt] \fts{$\to c_4\to a_2\to a_1$}\end{tabular}&
\begin{tabular}{c}\fts{$A_+ \to A_+ \to 2A^* \to$}\\[-3pt] \fts{$\to C_2 \to A_- \to A_-$}\end{tabular}&{5(8)}\\
\hline
{$\mtE_2 (f_{20})$}&
\begin{tabular}{c}\fts{$b_1\to c_6\to a_5\to$}\\[-3pt] \fts{$\to a_3\to a_1$}\end{tabular}
&\begin{tabular}{c}\fts{$A_+\to A_+ \to 2A^*\to$}\\[-3pt] \fts{$\to  B_-\to A_-$}\end{tabular}&{4(7)}\\
\hline
{$\mtE_3(f_{21})$}&
\begin{tabular}{c}\fts{$b_1\to c_6\to a_8\to$}\\[-3pt] \fts{$\to c_9\to a_3\to a_1$}\end{tabular}
&\begin{tabular}{c}\fts{$A_+\to A_+ \to 2A_+ \to$}\\[-3pt] \fts{$\to 2B_-\to  B_-\to A_-$}\end{tabular}&{6(9)}\\
\hline
{$\mtE_4 (f_{22})$}&
\begin{tabular}{c}\fts{$b_1\to c_2\to a_9\to$}\\[-3pt] \fts{$\to a_3\to a_1$}\end{tabular}
&\begin{tabular}{c}\fts{$A_+\to B_+ \to (A_+,A_-)\to$}\\[-3pt] \fts{$\to  B_-\to A_-$}\end{tabular}&{8(21)}\\
\hline
{$\mtE_5 (f_{23})$}&\begin{tabular}{c}\fts{$b_1\to c_2\to a_9\to$}\\[-3pt] \fts{$\to c_4\to a_2\to a_1$}\end{tabular}&
\begin{tabular}{c}\fts{$A_+ \to B_+ \to (A_+,A_-) \to$}\\[-3pt] \fts{$\to C_2 \to A_- \to A_-$}\end{tabular}
&{9(22)}\\
\hline
{$\mtE_6 (f_{24})$}&\begin{tabular}{c}\fts{$b_1\to c_2\to a_{11}\to$}\\[-3pt] \fts{$\to c_3\to a_2\to a_1$}\end{tabular}&
\begin{tabular}{c}\fts{$A_+ \to B_+ \to 2B_+ \to$}\\[-3pt] \fts{$\to 2A_- \to A_- \to A_-$}\end{tabular}
&{6(11)}\\
\hline
{$\mtF (f_{25})$}&\begin{tabular}{c}\fts{$b_1\to a_6\to c_5\to$}\\[-3pt] \fts{$\to a_{10}\to c_1\to a_1$}\end{tabular}&
\begin{tabular}{c}\fts{$A_+ \to B_+ \to 2B_+ \to$}\\[-3pt] \fts{$\to 2A_- \to A_- \to A_-$}\end{tabular}
&{6(10)}\\
\hline
{$\mtG (f_{26})$}&\begin{tabular}{c}\fts{$b_1\to a_6\to c_5\to$}\\[-3pt] \fts{$\to a_{10}\to a_{12}$}\end{tabular}&
\begin{tabular}{c}\fts{$A_+ \to B_+ \to 2B_+ \to$}\\[-3pt] \fts{$\to 2A_- \to A_- $}\end{tabular}
&{6(14)}\\
\hline
{$\mtH_1 (f_{27})$}&\begin{tabular}{c}\fts{$b_1\to a_6\to a_2\to$}\\[-3pt] \fts{$\to c_2\to a_{12}$}\end{tabular}&
\begin{tabular}{c}\fts{$A_+ \to B_+ \to A_- \to$}\\[-3pt] \fts{$\to B_+ \to 2A_-$}\end{tabular}
&{7(19)}\\
\hline
\myrul{$\mtH_2(f_{28})$}&
\begin{tabular}{c}\fts{$b_1\to
c_2\to a_{12}$}\end{tabular}&
\begin{tabular}{c}\fts{$A_+ \to B_+ \to 2A_- $}\end{tabular}
&{2(5)}\\
\hline
{$\mtH_3 (f_{29})$}&\begin{tabular}{c}\fts{$b_1\to c_2\to a_{11}\to$}\\[-3pt] \fts{$\to a_{10}\to a_{12}$}\end{tabular}&
\begin{tabular}{c}\fts{$A_+ \to B_+ \to 2B_+ \to$}\\[-3pt] \fts{$\to 2A_- \to A_- $}\end{tabular}
&{6(14)}\\
\hline
\end{longtable}
\end{center}

Отметим одно возникающее при этом явление, которое обычно при трактовке совпадения графов Фоменко не отмечается.
Как видно на рис.~\ref{fig_graph_fom_xn} ряд графов в некоторых группах отличается тем, что пара атомов попадает или не попадает на один и тот же критический уровень $K$. Все приведенные на этом рисунке уровни, содержащие две критические окружности обладают следующим свойством ``устойчивости'': при любом достаточно малом возмущении $(\ell,h)$ количество окружностей на таком критическом уровне не изменяется. Однако это не так, если критический уровень содержит кратные точки. Таковыми являются уровни $\ell=0, k=1+(h-\ld^2/2), h \gs \ld^2/2$, то есть все уровни, образ которых попадает на особую параболу. Такой уровень содержится в любом графе Фоменко вида $W_{0,h}$ с $h > \ld^2/2$ (напомним, что граничное значение $h = \ld^2/2$ задает изоэнергетическое многообразие с вырожденной точкой и здесь не рассматривается.
Как легко видеть, неустойчивой уровень при малом возмущении $\ell$ от нулевого значения без выхода за пределы камеры распадается на два, лежащие на нем атомы расходятся на разную высоту по $k$. Это явление имеет место во всех камерах, имеющих выход на ось $\ell=0$ при $h > \ld^2/2$. Неустойчивые (в указанном смысле) графы Фоменко представлены на рис.~\ref{fig_graph_fom_xn0}. Здесь обозначение камеры, снабженное индексом 0, означает пересечение камеры с осью $\ell=0$.

\begin{figure}[!ht]
\centering
\includegraphics[width=80mm,keepaspectratio]{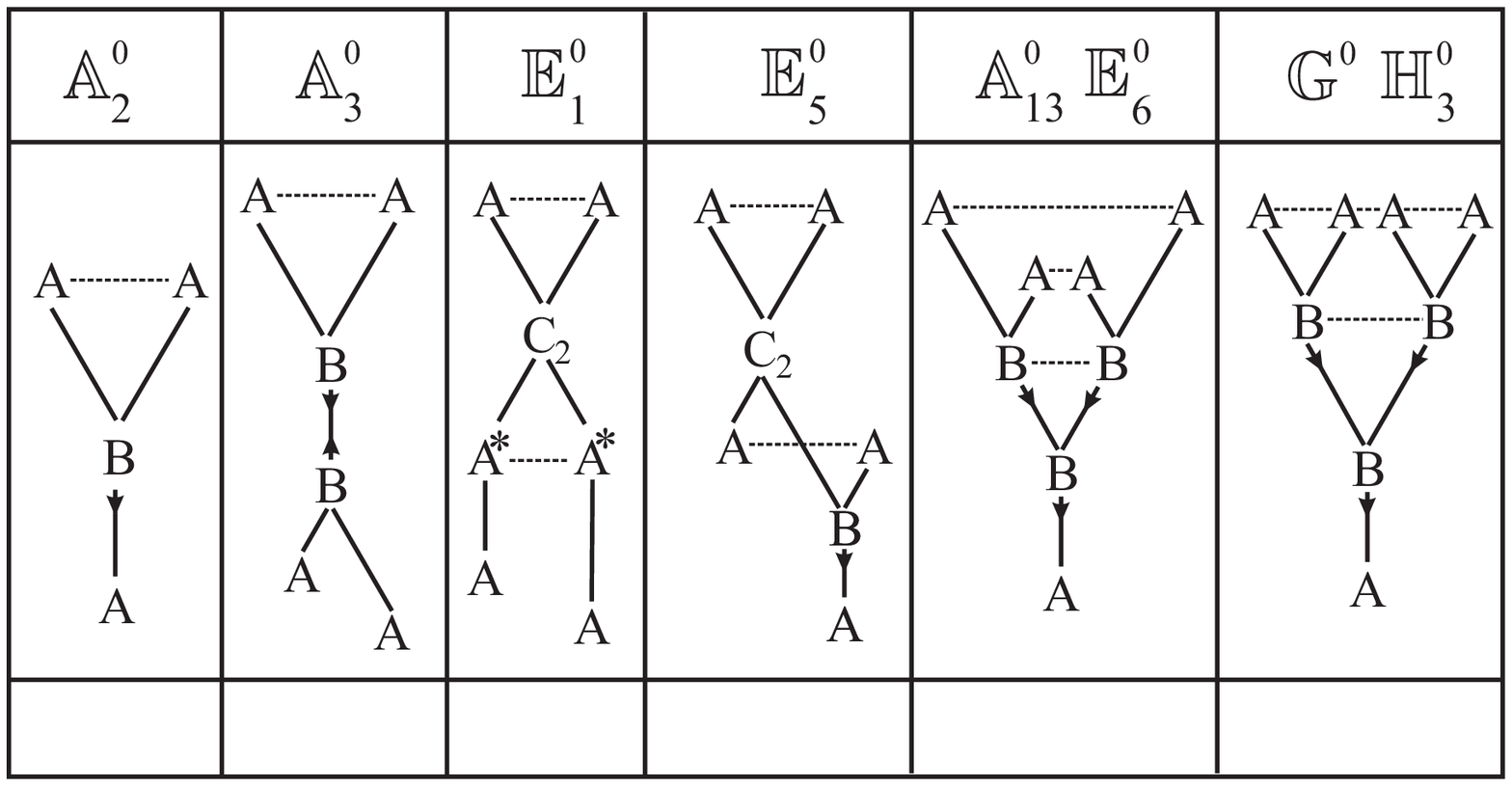}\\
\caption{Неустойчивые графы Фоменко.}\label{fig_graph_fom_xn0}
\end{figure}

\FloatBarrier

\section{Заключение}
Мы изложили имеющиеся на сегодня строго обоснованные результаты по аналитическим решениям и топологическому анализу случая Ковалевской\,--\,Яхья. Как отмечалось выше, эту работу следует дополнить результатами по предельным случаям, с которых и начинались подобные исследования, а именно, по классической задаче Ковалевской и по случаю нулевой постоянной площадей.

Мы сознательно не включали в этот текст никаких (кроме совсем необходимого минимума) фактов из общей теории интегрируемых гамильтоновых систем, поскольку тогда объем работы был бы несоразмерно увеличен. Последний раздел по топологическим инвариантам также следует расширить и включить в него строгие обоснования всех меченых топологических инвариантов. Авторы надеются в ближайшем будущем завершить работу над расширенным текстом. Она была бы полезна для дальнейших обобщений на гиростат в двойном поле \cite{ReySemRus} и гиростат с гироскопическими силами \cite{SoTs2002}. Значительный ряд результатов в этих задачах уже получен и заслуживает отдельного обсуждения.

\clearpage

\addcontentsline{toc}{section}{Список литературы}

\end{document}